\or \boolean{maybeSIAM}}
  \or \boolean{maybePoster}
  \or \boolean{maybeSIAM}}
\DeclareMathAlphabet{\mathsfsl}{OT1}{cmss}{m}{sl}
\newcommand{\eqperiod}{\enspace .}
\newcommand{\eqcomma}{\enspace ,}
\newcommand{\italicitem}[1][]{\item[\textit{#1}]}
\newcommand{\introduceterm}[1]{{\emph{#1}}}
\newcommand{\wrt}{with respect to\xspace}
\newcommand{\eg}{for instance\xspace} 
\newcommand{\ie}{i.e.,\ }
\newcommand{\st}{such that\xspace}
\newcommand{\etal}{et al.\@\xspace}
\newcommand{\ifaoif}{if and only if\xspace}
\newcommand{\wolog}{without loss of generality\xspace}
\newcommand{\Wolog}{Without loss of generality\xspace}
\newcommand{\DAG}{directed acyclic graph\xspace}
\newcommand{\polysize}{polynomial-size\xspace}
\newcommand{\superpoly}{superpolynomial\xspace}
\newcommand{\Ordocompact}[1]{{\ensuremath{\mathrm{O} \bigl( #1 \bigr)}}}
\newcommand{\Ordosmall}[1]{{\ensuremath{\mathrm{O} ( #1 )}}}
\newcommand{\Tightcompact}[1]{{\ensuremath{\Theta \bigl( #1 \bigr)}}}
\newcommand{\Tightsmall}[1]{{\ensuremath{\Theta ( #1 )}}}
\newcommand{\Lowerboundcompact}[1]{{\ensuremath{\Omega \bigl( #1 \bigr)}}}
\newcommand{\Lowerboundsmall}[1]{{\ensuremath{\Omega ( #1 )}}}
\newcommand{\lowerboundsmall}[1]{\ensuremath{\omega ( #1 )}}
\newcommand{\polyboundsmall}[1]{\ensuremath{\mathrm{poly} ( #1 )}}
\newcommand{\Bigoh}[1]{{\ensuremath{\mathrm{O} \bigl( #1 \bigr)}}}
\newcommand{\bigoh}[1]{{\ensuremath{\mathrm{O} ( #1 )}}}
\newcommand{\Bigtheta}[1]{{\ensuremath{\Theta \bigl( #1 \bigr)}}}
\newcommand{\bigtheta}[1]{{\ensuremath{\Theta ( #1 )}}}
\newcommand{\Bigomega}[1]{{\ensuremath{\Omega \bigl( #1 \bigr)}}}
\newcommand{\bigomega}[1]{{\ensuremath{\Omega ( #1 )}}}
\newcommand{\Littleomega}[1]{\ensuremath{\omega \bigl( #1 \bigr)}}
\newcommand{\problemlanguageformat}[1]{\textsc{#1}\xspace}
\newcommand{\langstd}{\ensuremath{L}}
\newcommand{\TAUTOLOGY}{\problemlanguageformat{tautology}}
\newcommand{\SATISFIABILITY}{\problemlanguageformat{satisfiability}}
\newcommand{\MINIMALUNSATISFIABILITY}%
  {\problemlanguageformat{minimal unsatisfiability}}
\newcommand{\complclassformat}[1]{\textsf{#1}\xspace}
\newcommand{\cocomplclass}[1]%
        {\mbox{\complclassformat{co}-\complclassformat{#1}}\xspace}
\newcommand{\Pclass}{\complclassformat{P}}
\newcommand{\NP}{\complclassformat{NP}}
\newcommand{\NPclass}{\NP}
\newcommand{\coNP}{\cocomplclass{NP}}
\newcommand{\CoNP}{\coNP}
\newcommand{\DP}{\complclassformat{DP}}
\newcommand{\refsec}[1]{Section~\ref{#1}}
\newcommand{\Refsec}[1]{Section~\ref{#1}}
\newcommand{\reftwosecs}[2]{Sections~\ref{#1} and~\ref{#2}}
\newcommand{\refthreesecs}[3]{Sections~\ref{#1}, \ref{#2}, and~\ref{#3}}
\newcommand{\reffig}[1]{Figure~\ref{#1}}
\newcommand{\reffigP}[1]{Figure~\vref{#1}}
\newcommand{\Reffig}[1]{Figure~\ref{#1}}
\newcommand{\reftwofigs}[2]{Figures~\ref{#1} and~\ref{#2}}
\newcommand{\refth}[1]{Theorem~\ref{#1}}
\newcommand{\reftwoths}[2]{Theorems~\ref{#1} and~\ref{#2}}
\newcommand{\refthreeths}[3]{Theorems~\ref{#1}, \ref{#2}, and~\ref{#3}}
\newcommand{\reflem}[1]{Lemma~\ref{#1}}
\newcommand{\reftwolems}[2]{Lemmas~\ref{#1} and~\ref{#2}}
\newcommand{\refpr}[1]{Proposition~\ref{#1}}
\newcommand{\reftwoprs}[2]{Propositions~\ref{#1} and~\ref{#2}}
\newcommand{\refcor}[1]{Corollary~\ref{#1}}
\newcommand{\refdef}[1]{Definition~\ref{#1}}
\newcommand{\reftwodefs}[2]{Definitions~\ref{#1} and~\ref{#2}}
\newcommand{\refobs}[1]{Observation~\ref{#1}}
\newcommand{\reftwoobs}[2]{Observations~\ref{#1} and~\ref{#2}}
\newcommand{\refex}[1]{Example~\ref{#1}}
\newcommand{\reftwoexs}[2]{Examples~\ref{#1} and~\ref{#2}}
\newcommand{\refproperty}[1]{Property~\ref{#1}}
\newcommand{\Refth}[1]{Theorem~\ref{#1}}
\newcommand{\Reflem}[1]{Lemma~\ref{#1}}
\newcommand{\Refpr}[1]{Proposition~\ref{#1}}
\newcommand{\Refcor}[1]{Corollary~\ref{#1}}
\newcommand{\refthP}[1]{Theorem~\vref{#1}}
\newcommand{\reflemP}[1]{Lemma~\vref{#1}}
\newcommand{\refprP}[1]{Proposition~\vref{#1}}
\newcommand{\refdefP}[1]{Definition~\vref{#1}}
\newcommand{\refpropertyP}[1]{Property~\vref{#1}}
\newcommand{\refrule}[1]{rule~\ref{#1}}
\newcommand{\reftworules}[2]{rules~\ref{#1} and~\ref{#2}}
\newcommand{\refpart}[1]{part~\ref{#1}}
\newcommand{\Refpart}[1]{Part~\ref{#1}}
\newcommand{\refcase}[1]{case~\ref{#1}}
\newcommand{\refeq}[1]{\eqref{#1}}}
\renewcommand{\refeq}[1]{\eqref{#1}}}
\newcommand{\assigneq}{:=}
\newcommand{\ceilingcompact}[1]{\ensuremath{\bigl \lceil #1 \bigr \rceil}}
\newcommand{\ceilingsmall}[1]{\ensuremath{\lceil #1 \rceil}}
\newcommand{\floorcompact}[1]{\ensuremath{\bigl \lfloor #1 \bigr \rfloor}}
\newcommand{\floorsmall}[1]{\ensuremath{\lfloor #1 \rfloor}}
\newcommand{\floor}[1]{\lfloor #1 \rfloor}
\newcommand{\Floor}[1]{\bigl \lfloor #1 \bigr \rfloor}
\newcommand{\Nplus}     {\mathbb{N}^{+}}
\newcommand{\minofsetcompact}[3][:]{\min \bigl\{ #2 #1 #3 \bigr\}}
\newcommand{\Maxofexpr}[2][]{\max_{#1} \bigl\{ #2 \bigr\}}
\newcommand{\Minofexpr}[2][]{\min_{#1} \bigl\{ #2 \bigr\}}
\newcommand{\maxofexpr}[2][]{\max_{#1} \{ #2 \}}
\newcommand{\minofexpr}[2][]{\min_{#1} \{ #2 \}}
\newcommand{\maxofset}[3][:]{\max \{ #2 #1 #3 \}}
\newcommand{\minofset}[3][:]{\min \{ #2 #1 #3 \}}
\DeclareMathOperator{\Expop}{E}
\newcommand{\twincommandJN}[6]%
    {#1#2#3\vphantom{#2#5}\mspace{-2.25mu}#4.#5#6}
\newcommand{\CondExp}[2]%
    {\Expop\twincommandJN{\bigl[}{#1}{\bigl|}{\bigr}{\,#2}{\bigr]}}
\newcommand{\CONDEXP}[2]%
     {\Expop\twincommandJN{\left[}{#1}{\left|}{\right}{\,#2}{\right]}}
\newcommand{\CondProb}[3][]%
    {\Pr_{#1}\twincommandJN{\bigl[}{#2}{\bigl|}{\bigr}{\,#3}{\bigr]}}
\newcommand{\CONDPROB}[3][]%
    {\Pr_{#1}\twincommandJN{\left[}{#2}{\left|}{\right}{\,#3}{\right]}}
\newcommand{\isdistras}[2]{\ensuremath{#1} \sim \ensuremath{#2}}
\newcommand{\funcdescr}[3]{\ensuremath{ #1 : #2 \mapsto #3}}
\newcommand{\domainof}[1]{\ensuremath{\mathrm{Dom} ( #1 )}}
\newcommand{\edges}[1]{{\ensuremath{E( #1 )}}}
\newcommand{\vertices}[1]{{\ensuremath{V( #1 )}}}
\newcommand{\vneighbour}[1]{\ensuremath{N(#1)}}
\newcommand{\vneighbourcompact}[1]{\ensuremath{N\bigl(#1\bigr)}}
\newcommand{\pathstd}{\ensuremath{P}}
\newcommand{\pathfromto}[3]{#1 : #2 \rightsquigarrow #3}
\newcommand{\succnode}[1]{\formatfunctiontoset{succ}(#1)}
\newcommand{\prednode}[1]{\formatfunctiontoset{pred}(#1)}
\newcommand{\setcompact}[1]{{\ensuremath{\bigl\{ #1 \bigr\}}}}
\newcommand{\setsmall}[1]{{\ensuremath{\{ #1 \}}}}
\newcommand{\setdescrcompact}[3][\mid]{{\setcompact{ #2 #1 #3 }}}
\newcommand{\setdescrsmall}[3][\mid]{{\setsmall{ #2 #1 #3 }}}
\newcommand{\setsizecompact}[1]{\ensuremath{\bigl\lvert#1\bigr\rvert}}
\newcommand{\setsizesmall}[1]{\ensuremath{\lvert#1\rvert}}
\newcommand{\setsizelarge}[1]{\ensuremath{\Biggl \lvert#1\Biggr\rvert}}
\newcommand{\set}[1]{\{ #1 \}}
\newcommand{\Set}[1]{\bigl\{ #1 \bigr\}}
\newcommand{\setdescr}[3][\mid]{\setsmall{ #2 #1 #3 }}
\newcommand{\Setdescr}[3][|]%
     {\twincommandJN{\bigl\{}{#2}{\bigl#1}{\bigr}{\,#3}{\bigr\}}}
\newcommand{\SETDESCR}[3][|]%
     {\twincommandJN{\left\{}{#2}{\left#1}{\right}{\,#3}{\right\}}}
\newcommand{\Setdescrbrackets}[3][|]%
     {\twincommandJN{\bigl[}{#2}{\bigl#1}{\bigr}{\,#3}{\bigr]}}
\newcommand{\SETDESCRBRACKETS}[3][|]%
     {\twincommandJN{\left[}{#2}{\left#1}{\right}{\,#3}{\right]}}
\newcommand{\Setsize}[1]{\ensuremath{\bigl\lvert#1\bigr\rvert}}
\newcommand{\setsize}[1]{\ensuremath{\lvert#1\rvert}}
\newcommand{\intersection}{\cap}
\newcommand{\union}{\cup}
\newcommand{\Union}{\bigcup}
\newcommand{\Unionnodisplay}{\textstyle \bigcup}
\newcommand{\unionSP}{\, \union \, }
\newcommand{\intersectionSP}{\, \intersection \, }
\newcommand{\disjointunion}{\ensuremath{\overset{.}{\cup}}}
\newcommand{\disjointunionSP}{\disjointunion}
\newcommand{\DisjointunionInText}%
    {{\smash{\overset{\mbox{\boldmath{.}}}{\bigcup}}}\vphantom{\bigcup}}
\newcommand{\intclcl}[2]{{\ensuremath{[ #1 , #2 ] }}}
\newcommand{\intnfirst}[1]{\ensuremath{[#1]}}
\newcommand{\Lor}{\ensuremath{\bigvee}}
\newcommand{\limpl}{\ensuremath{\rightarrow}}
\newcommand{\olnot}[1]{\overline{#1}}
\newcommand{\stdnot}[1]{\olnot{#1}}
\newcommand{\falsenum}{0}
\newcommand{\cnfform}{\cnfshort for\-mu\-la\xspace}
\newcommand{\cnfshort}{CNF\xspace}
\newcommand{\xcnfform}[1]{\mbox{\ensuremath{#1}-}\cnfform}
\newcommand{\kcnfform}{\xcnfform{\clwidth}}
\newcommand{\xclause}[1]{\mbox{\ensuremath{#1}-clause}\xspace}
\newcommand{\kclause}{\xclause{\clwidth}}
\newcommand{\nvar}{{\ensuremath{n}}}
\newcommand{\nclause}{{\ensuremath{m}}}
\newcommand{\clwidth}{{\ensuremath{k}}}
\newcommand{\randkcnfnclwrepl}[3][\clwidth]%
        {\ensuremath{\mathcal{F}^{#2, #3}_{#1}}}
\newcommand{\randkcnfnclwreplstd}%
        {\randkcnfnclwrepl{\clwidth}{\nvar}{\nclause}}
\newcommand{\israndkcnfnclwrepl}[4]%
  {\isdistras{#1}{\randkcnfnclwrepl[#2]{#3}{#4}}}
\newcommand{\randkcnfprobcl}[3]%
        {\ensuremath{\mathcal{F}^{#2}_{#1} \bigl(#3 \bigr)}}
\newcommand{\pcfor}[4][to]{for #2 := #3 #1 #4 do}
\newcommand{\pcformath}[4][to]%
    {\pcfor[#1]{\ensuremath{#2}}{\ensuremath{#3}}{\ensuremath{#4}}}
\newcommand{\pcassigncompact}[2]{#1 := #2}
\newcommand{\pcassignmathcompact}[2]%
        {\pcassigncompact{\ensuremath{#1}}{\ensuremath{#2}}}
\newcommand{\inductionformat}[1]{\textit{#1}}
\newcommand{\BASE}[1][]
        {\inductionformat
                {%
                        \ifthenelse{\equal{#1}{}}%
                                {Base case: }%
                                {Base case (#1):}%
                }%
        }
\newcommand{\isin}[2]{{\ensuremath{#1 \in #2}}}
\newtheorem{standardlocalcounter}{Dummy}[section]
\newtheorem{standardglobalcounter}{Dummy}
\theoremstyle{plain}    
\newtheorem{theorem}[standardlocalcounter]{Theorem}
\newtheorem{lemma}[standardlocalcounter]{Lemma}
\newtheorem{proposition}[standardlocalcounter]{Proposition}
\newtheorem{corollary}[standardlocalcounter]{Corollary}
\newtheorem{observation}[standardlocalcounter]{Observation}
\newtheorem{openproblem}[standardglobalcounter]{Open Problem}
\theoremstyle{definition}
\newtheorem{property}[standardlocalcounter]{Property}
\newtheorem{definition}[standardlocalcounter]{Definition}
\theoremstyle{remark}
\newtheorem{remark}[standardlocalcounter]{Remark}
\newtheorem{example}[standardlocalcounter]{Example}
\newtheoremstyle{meta}
  {3pt}
  {3pt}
  {\scshape \small }
  {}
  {\scshape \small }
  {:}
  { }
  {}
\theoremstyle{meta}
\newtheoremstyle{questions}
  {3pt}
  {3pt}
  {\sffamily \slshape}
  {}
  {\bfseries \sffamily \slshape}
  {:}
  { }
  {}
\theoremstyle{questions}
\def\SetTime{\hours=\time
\global\divide\hours by 60
\minutes=\hours
\multiply\minutes by 60
\advance\minutes by-\time
\global\multiply\minutes by-1 }
\def\now{\number\hours:\ifnum\minutes<10 0\fi\number\minutes}
\newcommand{\formuladots}{\cdots}
\newcommand{\impl}{\vDash}
\newcommand{\nimpl}{\nvDash}
\newcommand{\singset}[1]{#1}
\newcommand{\proplog}{propo\-sitional logic\xspace}
\newcommand{\tva}{truth value assignment\xspace}
\newcommand{\pps}{propo\-sitional proof system\xspace}
\newcommand{\proofsystemformat}[1]{\ensuremath{\mathfrak{#1}}}
\newcommand{\proofstd}{\ensuremath{\pi}}
\newcommand{\treeresnot}{\proofsystemformat{T}}
\newcommand{\treelikeres}{tree-like resolution\xspace}
\newcommand{\resref}{resolution refutation\xspace}
\newcommand{\resderiv}{resolution derivation\xspace}
\newcommand{\resproof}{resolution proof\xspace}
\newcommand{\resstd}{\ensuremath{\pi}}  
\newcommand{\derivof}[4][\derives]
        {{\ensuremath{{#2} : {#3} \, {#1}\, {#4}}}}
\newcommand{\refof}[2]{\derivof{#1}{#2}{\falsenum}}
\newcommand{\resderivspacesequence}[1]{\setsmall{#1}}
\newcommand{\deriveswithall}%
        {\vdash_{\!\!\!{\scriptscriptstyle \forall}}} 
\newcommand{\notderiveswithall}%
        {\nvdash_{\!\!\!{\scriptscriptstyle \forall}}} 
\newcommand{\resrule}[3]{{\ensuremath{\frac{#1 \quad #2}{#3}}}}
\newcommand{\clcfgtransition}[2]{\ensuremath{#1 \rightsquigarrow #2}}
\newcommand{\clcfgtransitioncrammed}[2]%
        {\ensuremath{#1 \!\rightsquigarrow\! #2}}
\newcommand{\truthval}{{\ensuremath{\nu}}}
\newcommand{\tvastd}{{\ensuremath{\alpha}}}
\newcommand{\tvaalt}{{\ensuremath{\beta}}}
\newcommand{\logeval}[2]{{\ensuremath{#2 ( #1 )}}}
\newcommand{\logevalstd}[1]{{\ensuremath{\tvastd ( #1 )}}}
\newcommand{\modtva}[2]{\ensuremath{#1^{#2}}}
\newcommand{\fstd}{{\ensuremath{F}}}
\newcommand{\fvar}{{\ensuremath{G}}}
\newcommand{\formf}{\ensuremath{F}}
\newcommand{\formg}{\ensuremath{G}}
\newcommand{\formh}{\ensuremath{H}}
\newcommand{\emptycl}{0}
\newcommand{\varx}{\ensuremath{x}}
\newcommand{\vary}{\ensuremath{y}}
\newcommand{\lita}{\ensuremath{a}}
\newcommand{\litb}{\ensuremath{b}}
\newcommand{\litc}{\ensuremath{c}}
\newcommand{\cla}{\ensuremath{A}}
\newcommand{\clb}{\ensuremath{B}}
\newcommand{\clc}{\ensuremath{C}}
\newcommand{\cld}{\ensuremath{D}}
\newcommand{\clausesetformat}[1]{\ensuremath{\mathbb{#1}}}
\newcommand{\clsc}{\clausesetformat{C}}
\newcommand{\clsd}{\clausesetformat{D}}
\newcommand{\setsofvarsorlitlarge}[2]%
        {\ensuremath{\mathit{#1}\left({#2}\right)}}
\newcommand{\setsofvarsorlit}[2]%
        {\ensuremath{\mathit{#1}({#2})}}
\newcommand{\setsofvarsorlitcompact}[2]%
        {\ensuremath{\mathit{#1}\bigl({#2}\bigr)}}
\newcommand{\setsofvarsorlitsmall}[2]
        {\ensuremath{\mathit{#1}({#2})}}
\newcommand{\setsofvarsorlitsup}[3]%
        {\ensuremath{\mathit{#1}^{#2}({#3})}}
\newcommand{\setsofvarsorlitsuplarge}[3]%
        {\ensuremath{\mathit{#1}^{#2}\left({#3}\right)}}
\newcommand{\setsofvarsorlitsupcompact}[3]%
        {\ensuremath{\mathit{#1}^{#2}\bigl({#3}\bigr)}}
\newcommand{\varssmall}[1]{\setsofvarsorlitsmall{Vars}{#1}}
\newcommand{\vars}[1]{\setsofvarsorlitsmall{Vars}{#1}}
\newcommand{\lit}[1]{\setsofvarsorlitsmall{Lit}{#1}}
\newcommand{\derivabbrev}[2]{\bigl( #1 \vdash #2 \bigr)}
\newcommand{\derivabbrevsmall}[2]{( #1 \vdash #2 )}
\newcommand{\derivabbrevcompact}[2]{\bigl( #1 \vdash #2 \bigr)}
\newcommand{\refutabbrevsmall}[1]{\derivabbrevsmall{#1}{\falsenum}}
\newcommand{\refutabbrevcompact}[1]{\derivabbrevcompact{#1}{\falsenum}}
\newcommand{\genericmeasure}[2]{\ensuremath{{\mathit{#1}}_{#2}}}
\newcommand{\genericformsmall}[2]{\ensuremath{\mathit{#1}( #2 )}}
\newcommand{\genericrefsmall}[3]%
    {\ensuremath{{\mathit{#1}}_{#2}\refutabbrevsmall{#3}}}
\newcommand{\genericrefcompact}[3]%
    {\ensuremath{{\mathit{#1}}_{#2}\refutabbrevcompact{#3}}}
\newcommand{\genericderiv}[4]%
    {\ensuremath{{\mathit{#1}}_{#2}\derivabbrev{#3}{#4}}}
\newcommand{\genericderivsmall}[4]%
    {\ensuremath{{\mathit{#1}}_{#2}\derivabbrevsmall{#3}{#4}}}
\newcommand{\genericderivcompact}[4]%
    {\ensuremath{{\mathit{#1}}_{#2}\derivabbrevcompact{#3}{#4}}}
\newcommand{\generictaut}[3]%
    {\ensuremath{{\mathit{#1}}_{#2}\derivabbrev{}{#3}}}
\newcommand{\generictautcompact}[3]%
    {\ensuremath{{\mathit{#1}}_{#2}\derivabbrevcompact{}{#3}}}
\newcommand{\generictautsmall}[3]%
    {\ensuremath{{\mathit{#1}}_{#2}\derivabbrevsmall{}{#3}}}
\newcommand{\sizeofsmall}[1]{\genericformsmall{S}{#1}}
\newcommand{\length}[1][]{\genericmeasure{L}{#1}}
\newcommand{\lengthofsmall}[1]{\genericformsmall{L}{#1}}
\newcommand{\lengthrefsmall}[2][]{\genericrefsmall{L}{#1}{#2}}
\newcommand{\lengthderivsmall}[3][]{\genericderivsmall{L}{#1}{#2}{#3}}
\newcommand{\lengthstd}{\length}
\newcommand{\lengthofarg}[1]{\genericformsmall{L}{#1}}
\newcommand{\lengthref}[2][]{\genericrefsmall{L}{#1}{#2}}
\newcommand{\widthofsmall}[2][]{\genericformsmall{W_{#1}}{#2}}
\newcommand{\widthrefsmall}[2][]{\genericrefsmall{W}{#1}{#2}}
\newcommand{\widthderivsmall}[3][]{\genericderivsmall{W}{#1}{#2}{#3}}
\newcommand{\widthofarg}[2][]{\genericformsmall{W_{#1}}{#2}}
\newcommand{\widthref}[2][]{\genericrefsmall{W}{#1}{#2}}
\newcommand{\clspaceofsmall}[1]{\genericformsmall{Sp}{#1}}
\newcommand{\clspacerefsmall}[2][]{\genericrefsmall{Sp}{#1}{#2}}
\newcommand{\clspacerefcompact}[2][]{\genericrefcompact{Sp}{#1}{#2}}
\newcommand{\clspacederivsmall}[3][]{\genericderivsmall{Sp}{#1}{#2}{#3}}
\newcommand{\clspaceof}[1]{\genericformsmall{Sp}{#1}}
\newcommand{\clspaceref}[2][]{\genericrefsmall{Sp}{#1}{#2}}
\newcommand{\Clspaceref}[2][]{\genericrefcompact{Sp}{#1}{#2}}
\newcommand{\clspacederiv}[3][]{\genericderivsmall{Sp}{#1}{#2}{#3}}
\newcommand{\Clspacederiv}[3][]{\genericderivcompact{Sp}{#1}{#2}{#3}}
\newcommand{\varspaceofsmall}[1]{\genericformsmall{VarSp}{#1}}
\newcommand{\varspacerefsmall}[2][]{\genericrefsmall{VarSp}{#1}{#2}}
\newcommand{\varspacerefcompact}[2][]{\genericrefcompact{VarSp}{#1}{#2}}
\newcommand{\varspaceref}[2][]{\genericrefsmall{VarSp}{#1}{#2}}
\newcommand{\formulaformat}[1]{\ensuremath{\mathit{#1}}}
\renewcommand{\formulaformat}[1]{\mathit{#1}}
\newcommand{\formatconfiguration}[1]{\mathbb{#1}}
\newcommand{\formatpebblingstrategy}[1]{\mathcal{#1}}
\newcommand{\transitionarrow}{\rightsquigarrow}
\newcommand{\pebcfgtransition}[2]%
    {\ensuremath{#1 \transitionarrow #2}}
\newcommand{\pebcfgtransitionsqueeze}[2]%
    {#1 \! \transitionarrow \! #2}
\newcommand{\pebbling}[1][P]{\formatpebblingstrategy{#1}}
\newcommand{\pconf}[1][P]{\formatconfiguration{#1}}
\newcommand{\formatpebblingprice}[1]{\text{\textsl{\textsf{#1}}}}
\newcommand{\pebblingprice}[1]{\formatpebblingprice{Peb}(#1)}
\newcommand{\pebblingpricecompact}[1]%
    {\formatpebblingprice{Peb}\bigl(#1\bigr)}
\newcommand{\bwpebblingprice}[1]{\formatpebblingprice{BW-Peb}(#1)}
\newcommand{\bwpebblingpricecompact}[1]%
    {\formatpebblingprice{BW-Peb}\bigl(#1\bigr)}
\newcommand{\bwpebblingpriceempty}[1]%
    {\formatpebblingprice{BW-Peb}^{\emptyset}(#1)}
\newcommand{\bwpebblingpriceemptycompact}[1]%
    {\formatpebblingprice{BW-Peb}^{\emptyset}\bigl(#1\bigr)}
\newcommand{\pebcost}[1]{\formatpebblingprice{cost} ( #1 )}
\newcommand{\pebcostsmall}[1]{\formatpebblingprice{cost} ( #1 )}
\newcommand{\stoptime}{\tau}
\newcommand{\pebcontr}[2][G]{\ensuremath{\formulaformat{Peb}^{#2}_{#1}}}
\newcommand{\pebtreecontr}[2][h]{\pebcontr[{T_{#1}}]{#2}}
\newcommand{\pebdeg}{\ensuremath{d}}
\newcommand{\pebax}[2][\pebdeg]{\ensuremath{\formulaformat{Ax}^{#1} (#2)}}
\newcommand{\pebaxcompact}[2]%
        [\pebdeg]{\ensuremath{\formulaformat{Ax}^{#1} \bigl(#2 \bigr)}}
\newcommand{\varspeb}[1]{\setsofvarsorlitsup{Vars}{\pebdeg}{#1}}
\newcommand{\pqrxvar}[6]%
    {\ensuremath{\stdnot{\varx({#1})}_{#2} \lor \stdnot{\varx({#3})}_{#4} \lor %
    \sourceclausexvar[#6]{#5}}}
\newcommand{\pqrstdxvar}{\pqrxvar{p}{i}{q}{j}{r}{l}}
\newcommand{\pqr}[6]%
    {\ensuremath{\stdnot{#1}_{#2} \lor \stdnot{#3}_{#4} \lor %
    \sourceclausenodisplay[#6]{#5}}}
\newcommand{\pqrstd}{\pqr{p}{i}{q}{j}{r}{l}}
\newcommand{\pqrall}[6]%
        {\setdescrcompact
        {\pqr{#1}{#2}{#3}{#4}{#5}{#6}}{#2,#4 \in \intnfirst{\pebdeg}}}
\newcommand{\pqrallstd}%
        {\setdescrcompact{\pqrstd}{i,j \in \intnfirst{\pebdeg}}}
\newcommand{\sourceclausexvar}[2][n]%
        {\Lor_{#1 = 1}^{\pebdeg} \varx({#2})_{#1}}
\newcommand{\subsourceclausexvar}[3][n]%
        {\Lor_{#1 = {#2}}^{\pebdeg} \varx({#3})_{#1}}
\newcommand{\targetclausexvar}[1][n]{\sourceclausexvar[#1]{z}}
\newcommand{\sourceclausexvarnodisplay}[2][n]%
        {\textstyle \Lor_{#1 = 1}^{\pebdeg} \varx({#2})_{#1}}
\newcommand{\sourceclause}[2][n]{\Lor_{#1 = 1}^{\pebdeg} #2_{#1}}
\newcommand{\targetclause}[1][n]{\sourceclause[#1]{z}}
\newcommand{\sourceclausenodisplay}[2][n]%
        {\textstyle \Lor_{#1 = 1}^{\pebdeg} #2_{#1}}
\newcommand{\sourceclausesuff}[2][n]{\Lor_{#1 \in \intnfirst{\pebdeg}} #2_{#1}}
\newcommand{\relativisation}[1]%
    {\ensuremath{\formulaformat{Rel}\bigl(#1 \bigr)}}
\newcommand{\nclausesof}[1]{\setsize{#1}}
\newcommand{\formatfunctiontoset}[1]{\mathit{#1}}
\newcommand{\formatfunctiontosubconfiguration}[1]{\mathsf{#1}}
\newcommand{\formatfunctiontomulti}[1]{\mathcal{#1}}
\newcommand{\pebconditional}{conditional\xspace}
\newcommand{\pebunconditional}{unconditional\xspace}
\newcommand{\pebcomplete}{complete\xspace}
\DeclareMathOperator{\dummystar}{*}
\newcommand{\pebblingcontrNT}[2][G]%
 {\ensuremath{\dummystar\!\!\formulaformat{Peb}^{#2}_{#1}}}
\newcommand{\pebcontrNT}[2][G]{\pebblingcontrNT[#1]{#2}}
\newcommand{\somenodetrueclause}[1]{\formulaformat{All}^{+}\!({#1})}
\newcommand{\somenodetrueclausedeg}[2]{\formulaformat{All}_{#1}^{+}({#2})}
\newcommand{\cmvleftdelim}{[}
\newcommand{\cmvrightdelim}{]}
\renewcommand{\cmvleftdelim}{\llbracket}
\renewcommand{\cmvrightdelim}{\rrbracket}
\newcommand{\clmentionvert}[2]{{#1} \cmvleftdelim {#2} \cmvrightdelim}
\newcommand{\unrelatedverticesadj}{non-comparable\xspace}
\newcommand{\relatedverticesadj}{comparable\xspace}
\newcommand{\slashedstrickenletter}[1]{{\backslash\mkern-9mu #1}}
\newcommand{\strikethroughcommand}[1]{\slashedstrickenletter{#1}}
\newcommand{\abovevertices}[2][G]%
    {{#1}_{#2}^{\hspace{-0.2 pt}\triangledown}}
\newcommand{\aboveverticesNR}[2][G]%
    {{#1}_{\strikethroughcommand{#2}}^{\hspace{-0.3 pt}\triangledown}}
\newcommand{\belowvertices}[2][G]%
    {{#1}^{#2}_{\hspace{-0.6 pt}\vartriangle}}
\newcommand{\belowverticesNR}[2][G]%
    {{#1}^{\strikethroughcommand{#2}}_{\hspace{-0.6 pt}\vartriangle}}
\newcommand{\pebtreecontrNT}[2][]{\pebcontrNT[T_{#1}]{#2}}
\newcommand{\subconftext}{sub\-con\-figu\-ration\xspace}
\newcommand{\blindependent}{independent\xspace}
\newcommand{\bldependent}{dependent\xspace}
\newcommand{\lpebblingpricecompact}[1]%
    {\formatpebblingprice{L-Peb}\bigl(#1\bigr)}
\newcommand{\scnot}[2]{#1 \langle #2 \rangle}
\newcommand{\scnotcompact}[2]{#1 \bigl\langle #2 \bigr\rangle}
\newcommand{\spmerge}[2]
        {\formatfunctiontosubconfiguration{merge}(#1, #2 )}
\newcommand{\spcanonconfcompact}[1]%
        {\formatfunctiontosubconfiguration{canon}\bigl({#1}\bigr)}
\newcommand{\spprojsubsub}[4]%
    {\formatfunctiontosubconfiguration{proj}_{\scnot{#1}{#2}}(\scnot{#3}{#4})}
\newcommand{\spprojsubsubcompact}[4]%
    {\formatfunctiontosubconfiguration{proj}_{\scnot{#1}{#2}}%
    \bigl(\scnot{#3}{#4}\bigr)}
\newcommand{\spprojsubconf}[3]%
    {\formatfunctiontosubconfiguration{proj}_{\scnot{#1}{#2}}({#3})}
\newcommand{\spprojsubconfcompact}[3]%
    {\formatfunctiontosubconfiguration{proj}_{\scnot{#1}{#2}}\bigl({#3}\bigr)}
\newcommand{\spprojconfsub}[3]%
    {\formatfunctiontosubconfiguration{proj}_{#1}(\scnot{#2}{#3})}
\newcommand{\spprojconfsubcompact}[3]%
    {\formatfunctiontosubconfiguration{proj}_{#1}\bigl(\scnot{#2}{#3}\bigr)}
\newcommand{\spprojconfconf}[2]%
    {\formatfunctiontosubconfiguration{proj}_{#1}({#2})}
\newcommand{\spprojconfconfcompact}[2]%
    {\formatfunctiontosubconfiguration{proj}_{#1}\bigl({#2}\bigr)}
\newcommand{\spclossubcompact}[2]%
        {\formatfunctiontoset{cl}\bigl(\scnotcompact{#1}{#2}\bigr)}
\newcommand{\spintersubcompact}[2]%
        {\formatfunctiontoset{int}\bigl(\scnotcompact{#1}{#2}\bigr)}
\newcommand{\spcoversubcompact}[2]%
        {\formatfunctiontoset{cover}\bigl(\scnotcompact{#1}{#2}\bigr)}
\newcommand{\spcoverconfcompact}[1]%
        {\formatfunctiontoset{cover}\bigl({#1}\bigr)}
\newcommand{\spinducedblack}[1]%
    {\formatfunctiontoset{Bl} (#1)}
\newcommand{\spinducedwhite}[1]%
    {\formatfunctiontoset{Wh} (#1)}
\newcommand{\spinducedblackcompact}[1]%
    {\formatfunctiontoset{Bl} \bigl(#1 \bigr)}
\newcommand{\spinducedwhitecompact}[1]%
    {\formatfunctiontoset{Wh} \bigl(#1 \bigr)}
\newcommand{\pathclausedeg}[2][\pebdeg]%
    {\somenodetrueclausedeg[#1]{\vertexpath{#2}}}
\newcommand{\pathclauseNRdeg}[2][\pebdeg]%
    {\somenodetrueclausedeg[#1]{\vertexpathNR{#2}}}
\newcommand{\blacktruth}[1]{\clausesetformat{B} ( {#1} )}
\newcommand{\blacktruthsmall}[1]{\clausesetformat{B} ( {#1} )}
\newcommand{\blacktruthdegexplicit}[4]%
        {\setdescrcompact
        {{\textstyle \Lor_{#2 = 1}^{#3} {#1}_{#2}}}
        {{#1} \in {#4}}}
\newcommand{\binsubtree}[1]{T^{#1}}
\newcommand{\vertexpath}[1]{{P}^{#1}}
\newcommand{\vertexpathNR}[1]{{P}_{*}^{#1}}
\newcommand{\unrelatedNP}[1]%
        {T \setminus \bigl(\binsubtree{#1} \unionSP \vertexpath{#1} \bigr)}
\newcommand{\unrelatedsmallNP}[1]%
        {T \setminus (\binsubtree{#1} \unionSP \vertexpath{#1} )}
\DeclareMathOperator{\levelop}{level}
\DeclareMathOperator{\minlevelop}{minlevel}
\DeclareMathOperator{\maxlevelop}{maxlevel}
\newcommand{\vlevel}[1]{\levelop({#1})}
\newcommand{\vminlevel}[1]{\minlevelop({#1})}
\newcommand{\vmaxlevel}[1]{\maxlevelop({#1})}
\newcommand{\vminlevelcompact}[1]{\minlevelop \bigl( {#1} \bigr)}
\newcommand{\vmaxlevelcompact}[1]{\maxlevelop \bigl( {#1} \bigr)}
\newcommand{\levelstd}{L}
\newcommand{\levelmin}{\levelstd}
\newcommand{\levelmax}{L_{U}}
\DeclareMathOperator{\topvertexop}{top}
\DeclareMathOperator{\bottomvertexop}{bot}
\newcommand{\maxvertex}[1]{\topvertexop(#1)}
\newcommand{\topvertex}[1]{\topvertexop(#1)}
\newcommand{\minvertex}[1]{\bottomvertexop(#1)}
\newcommand{\bottomvertex}[1]{\bottomvertexop(#1)}
\newcommand{\minelement}[1]{#1}
\newcommand{\pyramidgraph}[1][]{\Pi_{#1}}
\newcommand{\pyramidgraphh}{\pyramidgraph[h]}
\newcommand{\aboveverticespyramidNR}[1]{\aboveverticesNR[\pyramidgraph]{#1}}
\newcommand{\belowverticespyramid}[1]{\belowvertices[\pyramidgraph]{#1}}
\newcommand{\pebpyramidcontr}[2][]{\pebcontr[\Pi_{#1}]{#2}}
\newcommand{\sourcepath}{source path\xspace}
\newcommand{\pathsoperator}{\mathfrak{P}}
\newcommand{\setofpathsstd}{\mathfrak{P}}
\newcommand{\pathsviaop}{\pathsoperator_{\textrm{via}}}
\newcommand{\pathsinop}{\pathsoperator_{\textrm{in}}}
\newcommand{\pathsviavertex}[1]{\pathsviaop(#1)}
\newcommand{\pathsinvertex}[1]{\pathsinop(#1)}
\newcommand{\unionpathsviavertex}[1]{\Union \pathsviaop(#1)}
\newcommand{\unionpathsinvertex}[1]{\Union \pathsinop(#1)}
\newcommand{\pathsvertex}[1]{\pathsinvertex{#1}}
\newcommand{\pathsviachain}[1]{\pathsviavertex{#1}}
\newcommand{\tightnessklawe}{tightness\xspace}
\newcommand{\tightklawe}{tight\xspace}
\newcommand{\Tightklawe}{Tight\xspace}
\newcommand{\connectedklawe}{\hidingklawe-connected\xspace}
\newcommand{\hideklawe}{hide\xspace}
\newcommand{\hiddenklawe}{hidden\xspace}
\newcommand{\hidingklawe}{hiding\xspace}
\newcommand{\Hidingklawe}{Hiding\xspace}
\newcommand{\hidingsetklawe}{hiding set\xspace}
\newcommand{\Hidingsetklawe}{Hiding set\xspace}            
\newcommand{\Necessaryhidingsetklawe}{Necessary hiding subset\xspace}
\newcommand{\hidsetgraph}{\mathcal{H}}
\newcommand{\minmeasure}{minimum-measure\xspace}
\newcommand{\minsize}{minimum-size\xspace}
\newcommand{\subsetminimal}{subset-minimal\xspace}
\newcommand{\Minmeasure}{Minimum-measure\xspace}
\DeclareMathOperator{\potentialop}{pot}
\newcommand{\vpotential}[2][]{\potentialop_{#1}({#2})}
\newcommand{\meastopot}[2][]{\vmeasure[#1]{#2}}
\newcommand{\vertabovelevel}[2]{{#1} \{ {\succeq\!#2} \} }
\newcommand{\vertstrictlyabovelevel}[2]{{#1} \{ {\succ\!#2} \} }
\newcommand{\vertbelowlevel}[2]{{#1} \{ {\preceq\!#2} \} }
\newcommand{\vertstrictlybelowlevel}[2]{{#1} \{ {\prec\!#2} \} }
\newcommand{\vertonlevel}[2]{{#1} \{ {\sim\!#2} \} }
\newcommand{\abovelevelblockerminsize}[2]{L_{\succeq{#1}}({#2})}
\newcommand{\abovelevelblockerminsizecompact}%
    [2]{L_{\succeq{#1}}\bigl({#2}\bigr)}
\newcommand{\vjthmeasure}[3][]{m_{#1}^{#2} ( {#3} ) }
\newcommand{\vmeasure}[2][]{m_{#1} ( {#2} ) }
\newcommand{\Vjthmeasure}[3][]{m_{#1}^{#2} \bigl( {#3} \bigr) }
\newcommand{\Vmeasure}[2][]{m_{#1} \bigl( {#2} \bigr) }
\newcommand{\vjthmeasurecompact}[3][]{m_{#1}^{#2} \bigl( {#3} \bigr) }
\newcommand{\vmeasurecompact}[2][]{m_{#1} \bigl( {#2} \bigr) }
\newcommand{\measureleq}{\precsim_{m}}
\newcommand{\hiddenvertices}[1]{\llceil {#1} \rrceil}
\newcommand{\necessaryhidingvert}[2]%
{{#1}{\scriptstyle{\llfloor {#2} \rrfloor}}}
\newcommand{\klawepropertyprefix}{limited hiding-cardinality\xspace}
\newcommand{\Klawepropertyprefix}{Limited hiding-cardinality\xspace}
\newcommand{\KLAWEPROPERTYPREFIX}{Limited Hiding-Cardinality\xspace}
\newcommand{\klawepropacronym}{LHC property\xspace}
\newcommand{\localklawepropacronym}{Local LHC property\xspace}
\newcommand{\genklawepropacronym}{Generalized LHC property\xspace}
\newcommand{\GENKLAWEPROPACRONYM}{Generalized LHC Property\xspace}
\newcommand{\klaweprop}{\Klawepropertyprefix property\xspace}
\newcommand{\KLAWEPROP}{\KLAWEPROPERTYPREFIX Property\xspace}
\newcommand{\localklaweprop}{Local \klawepropertyprefix property\xspace}
\newcommand{\Localklaweprop}{Local \klawepropertyprefix property\xspace}
\newcommand{\genklaweprop}{Generalized \klawepropertyprefix property\xspace}
\newcommand{\Genklaweprop}{Generalized \klawepropertyprefix property\xspace}
\newcommand{\GENKLAWEPROP}{Generalized \KLAWEPROPERTYPREFIX Property\xspace}
\newcommand{\nongenklaweprop}%
{non-generalized \Klawepropertyprefix property\xspace}
\newcommand{\nongenklawepropacronym}%
{non-generalized \klawepropacronym}
\newcommand{\nongenklawepropacronymWithParam}%
{(non-generalized) \klawepropacronym}
\newcommand{\gkpconstant}{C_{K}}
\newcommand{\nepath}{\mbox{NE-path}\xspace}        
\newcommand{\nepathlong}{north-east path\xspace}        
\newcommand{\nwpath}{\mbox{NW-path}\xspace}        
\newcommand{\nwpathlong}{north-west path\xspace}        
\newcommand{\nepaththrough}[1]{P_{\textrm{NE}}({#1})}
\newcommand{\nwpaththrough}[1]{P_{\textrm{NW}}({#1})}
\newcommand{\siblingnonreachabiblitypropertynoref}%
{Sibling non-reachability property\xspace}
\newcommand{\Siblingnonreachabiblitypropertynoref}%
{Sibling non-reachability property\xspace}
\newcommand{\siblingnonreachabiblityproperty}%
{\siblingnonreachabiblitypropertynoref~%
\ref{property:sibling-non-reachability-property}\xspace}
\newcommand{\Siblingnonreachabiblityproperty}%
{\Siblingnonreachabiblitypropertynoref~%
\ref{property:sibling-non-reachability-property}\xspace}
\newcommand{\pebblingdag}{blob-pebblable DAG\xspace}
\newcommand{\Pebblingdag}{Blob-pebblable DAG\xspace}
\newcommand{\mergervertex}{v^*}
\newcommand{\introducetermanmpctext}%
    {a \introduceterm{\mpctext{}}\xspace}
\newcommand{\introducetermamultipebblingtext}%
  {a \introduceterm{\multipebblingtext{}}\xspace}
\newcommand{\blobpebblingtext}{blob-pebbling\xspace}
\newcommand{\Blobpebblingtext}{Blob-pebbling\xspace}
\newcommand{\BLOBPEBBLINGTEXT}{Blob-Pebbling\xspace}
\newcommand{\multipebblingtext}{\blobpebblingtext}
\newcommand{\Multipebblingtext}{\Blobpebblingtext}
\newcommand{\MULTIPEBBLINGTEXT}{\BLOBPEBBLINGTEXT}
\newcommand{\blobpebblegame}{blob-pebble game\xspace}
\newcommand{\multipebblegame}{blob-pebble game\xspace}
\newcommand{\Multipebblegame}{Blob-pebble game\xspace}
\newcommand{\MULTIPEBBLEGAME}{Blob-Pebble Game\xspace}
\newcommand{\multipebble}{blob\xspace}
\newcommand{\Multipebble}{Blob\xspace}
\newcommand{\blobpebble}{blob-pebble\xspace}
\newcommand{\blob}{blob\xspace}
\newcommand{\atomicmultipebbleadj}{atomic\xspace}
\newcommand{\Anatomicmultipebbleadj}{An \atomicmultipebbleadj}
\newcommand{\multipebbling}[1][P]{\pebbling[#1]}
\newcommand{\mpconf}[1][S]{\pconf[#1]}
\newcommand{\mpcost}[1]{\formatpebblingprice{cost}( #1 )}
\newcommand{\mpcostblack}[1]%
        {\formatpebblingprice{cost}_{\mpcblacks}( #1 )}
\newcommand{\mpcostwhite}[1]%
        {\formatpebblingprice{cost}_{\mpcwhites}( #1 )}
\newcommand{\blobpebblingprice}[1]{\formatpebblingprice{Blob-Peb}(#1)}
\newcommand{\blobpebblingpricecompact}[1]%
    {\formatpebblingprice{Blob-Peb}\bigl(#1\bigr)}
\newcommand{\multipebblingprice}[1]{\formatpebblingprice{Blob-Peb}(#1)}
\newcommand{\multipebblingpricecompact}[1]%
    {\formatpebblingprice{Blob-Peb}\bigl(#1\bigr)}
\newcommand{\lpptext}{legal pebble positions\xspace}
\newcommand{\lpptextsing}{legal pebble position\xspace}
\newcommand{\lpp}[1]{\formatfunctiontoset{lpp} ( #1 )}
\newcommand{\lppstd}{\lpp{B}}
\newcommand{\pathsBminusB}{\pathsvertex{B} \setminus{B}}
\newcommand{\mpcblacks}{\formatfunctiontomulti{B}}
\newcommand{\mpcwhites}{\formatfunctiontomulti{W}}
\newcommand{\mpcwhitesof}[1]{\formatfunctiontomulti{W}({#1})}
\newcommand{\mpscblacknot}[1]{[ {#1} ]}
\newcommand{\blackblobnot}[1]{\mpscblacknot{#1}}
\newcommand{\mpscnot}[2]{[ {#1} ] \langle {#2} \rangle}
\newcommand{\mpscnotcompact}[2]%
        {\big[ {#1} \big] \bigl\langle {#2} \bigr\rangle}
\newcommand{\mpscnotstd}[1][]{\mpscnot{B_{#1}}{W_{#1}}}
\newcommand{\mpscnotprime}{\mpscnot{B'}{W'}}
\newcommand{\intrompscnot}[1]{\mpscnot{#1}{\prednode{#1}}}
\newcommand{\unconditionalblackmpscnot}[1]{\mpscnot{#1}{\emptyset}}
\newcommand{\mpctext}{\blobpebblingtext con\-fig\-u\-ra\-tion\xspace}
\newcommand{\anmpctext}{a \mpctext}
\newcommand{\Mpsctext}{Sub\-con\-figu\-ration\xspace}
\newcommand{\mpsctext}{sub\-con\-figu\-ration\xspace}
\newcommand{\anmpsctext}{a sub\-con\-figu\-ration\xspace}
\newcommand{\Mpscfulltext}{Blob sub\-con\-figu\-ration\xspace}
\newcommand{\mpscfulltext}{blob sub\-con\-figu\-ration\xspace}
\newcommand{\predrequalpq}{\prednode{r} = \pqset}
\newcommand{\mpscrpq}{\mpscnot{r}{p,q}}
\newcommand{\mpinducedconf}[1]{\mpconf(#1)}
\newcommand{\mpinducedctminusone}{\mpinducedconf{\clsc_{t-1}}}
\newcommand{\mpinducedct}{\mpinducedconf{\clsc_{t}}}
\newcommand{\sharpimpl}{\vartriangleright}
\newcommand{\sharpimpladv}{precisely\xspace}
\newcommand{\sharpimpladj}{precise\xspace}
\newcommand{\sharpimplsubst}{precise implication\xspace}
\newcommand{\BuW}{B \unionSP W}
\newcommand{\Bup}{B \unionSP \singset{p}}
\newcommand{\Buq}{B \unionSP \singset{q}}
\newcommand{\Bur}{B \unionSP \singset{r}}
\newcommand{\Buv}{B \unionSP \singset{v}}
\newcommand{\Sur}{S \unionSP \singset{r}}
\newcommand{\Wur}{W \unionSP \singset{r}}
\newcommand{\Wup}{W \unionSP \singset{p}}
\newcommand{\Wpuq}{W_p \unionSP \singset{q}}
\newcommand{\Wqup}{W_q \unionSP \singset{p}}
\newcommand{\WpuWq}{W_p \unionSP W_q}
\newcommand{\vrset}{\setsmall{v,r}}
\newcommand{\vrsetNP}{v,r}
\newcommand{\pqset}{\setsmall{p,q}}
\newcommand{\chargeabletext}{chargeable\xspace}        
\newcommand{\chargeableverticestext}{chargeable vertices\xspace}
\newcommand{\chargeablevertices}[1]%
{\formatfunctiontoset{chargeable}({#1}) }
\newcommand{\chargeableverticescompact}[1]%
{\formatfunctiontoset{chargeable}\bigl({#1}\bigr) }
\newcommand{\blackschargedfor}[1][]%
    {\mpcblacks_{#1}}
\newcommand{\whiteschargedfor}[1][]%
    {\mpcwhites_{#1}^{\hspace{-0.3 pt}\vartriangle}}
\newcommand{\reprset}{R}
\newcommand{\reprsetprimed}{R'}
\newcommand{\reprsetith}{R[i]}
\newcommand{\reprvertex}{r}
\DeclareMathOperator{\unblockedop}{unblocked}
\newcommand{\unblocked}[1]{\unblockedop({#1})}
\DeclareMathOperator{\elimop}{elim}
\newcommand{\welim}[2]{\text{$\mpcwhites$-$\elimop$}({#1}, {#2})}
\newcommand{\weliminate}{\mbox{$\mpcwhites$-elim}\-i\-nate\xspace}
\newcommand{\weliminated}{\mbox{$\mpcwhites$-elim}\-i\-nated\xspace}
\newcommand{\welimination}{\mbox{$\mpcwhites$-elim}\-i\-nation\xspace}
\newcommand{\kmatchedindex}{m}
\newcommand{\weightklawe}[1]{w({#1})}
\newcommand{\weightklawecompact}[1]{w\bigl({#1}\bigr)}
\newcommand{\ujustblocking}{U_B}
\newcommand{\uhiding}{U_H}
\newcommand{\uhidingith}[1][i]{U_{H}^{#1}}
\newcommand{\utight}{U_T}
\newcommand{\utightith}{U_{T}^{i}}
\newcommand{\unewblockingset}{U^{*}}
\newcommand{\unewhidingith}[1][i]{U_{*}^{#1}}
\newcommand{\mpconfjustblocked}{\mpconf_B}
\newcommand{\mpconfhidden}{\mpconf_H}
\newcommand{\blacksjustblocked}{\mpcblacks_B}
\newcommand{\blackshidden}{\mpcblacks_H}
\newcommand{\blackshiddenith}{\mpcblacks_{H}^{i}}
\newcommand{\bbvstd}{\mpcblacks}
\newcommand{\whitesbelowjustblocked}%
    {\mpcwhites_{B}^{\hspace{-0.3 pt}\vartriangle}}
\newcommand{\whitesbelowhidden}%
    {\mpcwhites_{H}^{\hspace{-0.3 pt}\vartriangle}}
\newcommand{\whitesbelowhiddenith}{\mpcwhites_{H}^{i}}
\newcommand{\whitestight}%
    {\mpcwhites_{T}^{\hspace{-0.3 pt}\vartriangle}}
\newcommand{\whitestightith}{\mpcwhites_{T}^{i}}
\renewcommand{\theenumi}{\arabic{enumi}}
\renewcommand{\labelenumi}{\theenumi.}
\numberwithin{equation}{section}
\newtheoremstyle{metacommenttheoremstyle}
  {3pt}
  {3pt}
  {\sffamily \itshape \scriptsize
  }
  {}
  {\bfseries \scshape \footnotesize }
  {:}
  { }
  {}
\theoremstyle{metacommenttheoremstyle}
\newcommand{\editcomment}[1]%
{}                                           
\newcommand{\proofcomment}[1]%
{}                                         
\newcommand{\conferenceversioncomment}[1]%
{}                                             
\newcommand{\theauthorJN}{the first author\xspace}
\newcommand{\reftwosecapps}[2]{Sections~\ref{#1} and~\ref{#2}}
\newcommand{\sectionappendixtext}{section\xspace}
\begin{document}

%
%

\title{Towards an Optimal Separation of \\ Space and Length in Resolution%
  \thanks{This is the full-length version of the paper~%
    \cite{NH08TowardsOptimalSeparationSTOC}
    to appear at \emph{STOC~'08}.}}

\author{Jakob Nordström\thanks{%
    Research supported in part by 
    grants from the foundations
    \emph{Johan och Jakob Söderbergs stiftelse}
    and
    \emph{Sven och Dagmar Saléns stiftelse}.}
 \hspace{1.7cm} Johan Håstad \\
                                                    \\
  {\normalsize Royal Institute of Technology (KTH)} \\
  {\normalsize  SE-100 44 Stockholm, Sweden }       \\
  \texttt{\{jakobn,johanh\}@kth.se}}

\date{February 29, 2008}

\maketitle

%
%

\thispagestyle{empty}
%
%
%
%

%
%
%

%
%

\pagestyle{fancy}     
\fancyhead{}         
\fancyfoot{}         
\renewcommand{\headrulewidth}{0pt}
\renewcommand{\footrulewidth}{0pt}
%
%
\fancyhead[CE]{\slshape TOWARDS AN OPTIMAL SEPARATION}
\fancyhead[CO]{\slshape \leftmark}
\fancyfoot[C]{\thepage}
\setlength{\headheight}{13.6pt}

%
%

%

\begin{abstract}
    Most state-of-the-art satisfiability algorithms today are variants
  of the DPLL procedure augmented with clause learning. The main
  bottleneck for such algorithms, other than the obvious one of time,
  is the amount of memory used.  In the field of proof complexity, the
  resources of time and memory correspond to the length and space of
  resolution proofs. There has been a long line of research trying to
  understand these proof complexity measures, as well as relating them
  to the width of proofs, \ie the size of the largest clause in the
  proof, which has been shown to be intimately connected with both
  length and space.
  While strong results have been proven for length and width, our
  understanding of space is still quite poor.  For instance, it
  has remained open whether the fact that a formula is provable in
  short length implies that it is also provable in small space (which
  is the case for length 
  versus
  width), or whether on the contrary
  these measures are completely unrelated in the sense that short
  proofs can be arbitrarily complex with respect to space.
  
  In this paper, we present some evidence that the true answer should
  be that the latter case holds and provide 
  a possible roadmap 
  for how such an optimal separation result could be obtained. 
  We do this by proving a tight bound of $\Tightsmall{\sqrt{n}}$ on
  the space needed for \mbox{so-called} pebbling contradictions over pyramid
  graphs of size~$n$. 
\ifthenelse
{\boolean{maybeSTOC}}
{}
{%
  This yields the first polynomial lower bound on space that is not a
  consequence of a corresponding lower bound on width, as well as an
  improvement of the weak separation of space and width in
  (Nordström~2006) from logarithmic to polynomial.  
}
  
  Also, continuing the line of research initiated by (Ben-Sasson~2002)
  into trade-offs between different proof complexity measures, we
  present a simplified proof of the recent length-space trade-off
  result in (Hertel and Pitassi~2007), and show how our ideas can be
  used to prove a couple of other exponential trade-offs in
  resolution.
\end{abstract}

%
%
\newboolean{STOCsavespace}

\ifthenelse{\boolean{maybeSTOC}}
{\setboolean{STOCsavespace}{true}}
{\setboolean{STOCsavespace}{false}}

\section{Introduction}
\label{sec:introduction}

Ever since the 
\ifthenelse{\boolean{maybeSTOC}}
{ground-breaking}
{fundamental}
\NP-completeness result of 
Cook~\cite{Cook71CooksTheorem}, 
the problem of deciding 
whether a given \proplog formula in conjunctive normal form (CNF)
is satisfiable or not has been on center stage in Theoretical
Computer Science.  In more recent years, \SATISFIABILITY has gone from
a problem of mainly theoretical interest to a practical approach for
solving applied problems.  Although all known Boolean satisfiability
solvers (SAT-solvers) have exponential running time in the worst case,
enormous progress in performance has led to satisfiability algorithms
becoming a standard tool for solving a large number of real-world
problems such as hardware and software verification, experiment
design, circuit diagnosis, and scheduling.

A somewhat surprising aspect 
of this development is that the
most successful SAT-solvers to date are still variants of
the resolution-based 
\ifthenelse{\boolean{maybeSTOC}}
{so-called DPLL procedure}
{Davis-Putnam-Logemann-Loveland (DPLL) procedure}
\mbox{\cite{DLL62MachineProgram,
  DP60ComputingProcedure}}
augmented with 
\introduceterm{clause learning}.
For instance, the great majority of the best algorithms at the 2007
round of the international SAT competitions~\cite{SATcompetition} fit
this description.
DPLL procedures perform a recursive backtrack search in the space of
partial truth value assignments. The idea behind clause learning, or
\introduceterm{conflict-driven learning}, is that at each failure
(backtrack) point in the search tree, the system derives a reason for
the inconsistency in the form of a new clause and then adds this
clause to the original \cnfform (``learning'' the clause).  This can
save a lot of work later on in the 
\ifthenelse{\boolean{maybeSTOC}}
{search,} 
{proof search,} 
when some other
partial truth value assignment fails for similar reasons.
The main bottleneck for this approach, other than the obvious one of
time, is the amount of memory used 
\ifthenelse{\boolean{maybeSTOC}}
{%
  by the algorithms. Thus,%
}
{%
  by the algorithms. 
  Since there is only a finite amount of space, all clauses cannot be
  stored. The difficulty lies in obtaining a highly selective and
  efficient clause caching scheme that nevertheless keeps the
  clauses needed. Thus,%
}
understanding time and memory requirements for clause learning
algorithms, and how these requirements are related to one another, is
a question of great practical importance.
We refer to, e.g.,
\ifthenelse{\boolean{maybeSTOC}}
{\mbox{\cite{BKS03UnderstandingPowerClauseLearning,
KS07StateofSAT}}}
{\mbox{\cite{BKS03UnderstandingPowerClauseLearning,
KS07StateofSAT,
Sabharwal05Thesis}}}
for a more detailed discussion of clause learning (and SAT-solving in
general) with examples of applications.

The study of 
proof complexity originated with the
seminal paper of 
Cook and Reckhow~\cite{CR79Relative}. In its most general form, 
a proof system for a language $\langstd$ is a
predicate $P(x, \proofstd)$, 
computable in time polynomial in
$\setsize{x}$ and $\setsize{\proofstd}$,
such that
for all $x \in \langstd$ there is a string $\proofstd$ 
(a \introduceterm{proof})
for which
$P(x, \proofstd) = 1$, 
whereas for any
$x \not\in \langstd$
it holds for all strings $\proofstd$ that
$P(x, \proofstd) = 0$.
A proof system is said to be polynomially bounded if 
for every $x \in \langstd$
there is a proof $\proofstd_x$ 
of size at most polynomial in~$\setsize{x}$.
%
%
A \introduceterm{\pps{}} is a proof system for
the language of
tautologies in propositional logic. 

From a theoretical point of view, one important motivation for proof
complexity is the intimate connection with the fundamental question of
$\Pclass$ versus $\NPclass$.  Since $\NP$ is exactly the set of
languages with polynomially bounded proof systems, and since
\TAUTOLOGY 
can be seen to be the dual problem of
\SATISFIABILITY,
we have the famous theorem of~%
\cite{CR79Relative}
that \NP = \CoNP \ifaoif there exists a polynomially bounded \pps.
Thus, if it could be shown that there are no
\ifthenelse{\boolean{STOCsavespace}}
{polynomially bounded propositional proof systems,}
{polynomially bounded proof systems for propositional tautologies,}
\Pclass $\neq$ \NP 
would follow as a corollary since \Pclass is closed under complement.
One way of approaching this distant goal is to study stronger and
stronger proof systems and try to prove \superpoly lower bounds on
proof size.  However, although great progress has been made in the
last couple of decades for a variety of proof systems, it seems that
we are still very far from fully understanding the reasoning power of
even quite simple ones.

A second important motivation is that, as was mentioned above, designing
efficient algorithms for proving tautologies (or, equivalently,
testing satisfiability), is a very important problem not only in the
theory of computation but also in applied research and industry.
All automated theorem provers, regardless of whether they actually
produce a written proof, 
explicitly or implicitly define a system in
which proofs are searched for    
and rules which determine what proofs in this system look like.  
Proof complexity analyzes what it takes to
simply write down and verify the proofs that such an automated
theorem-prover might find, ignoring the computational effort needed to
actually find them.  
Thus a lower bound for a proof system tells us that any algorithm, even
an optimal (non-deterministic) one making all the right choices, must
\mbox{necessarily} 
use at least the amount of a certain resource specified by
this bound.  In the other direction, theoretical upper bounds on some
proof complexity measure give us hope of finding good proof search
algorithms \wrt this measure, provided that we can design algorithms
that search for proofs in the system
in an efficient manner.  For DPLL
procedures with clause learning, the time and memory resources used
are measured by the
\introduceterm{length}
and
\introduceterm{space}
of proofs in the resolution proof system.

The field of proof complexity also has rich connections to
cryptography, artificial intelligence and mathematical logic.
\ifthenelse{\boolean{maybeSTOC}}
{
  Two good surveys  
  providing more details are~%
  \mbox{\cite{B00ProofComplexity,
  Segerlind07Complexity}}.%
}
{
  Some good surveys  
  providing more details are~%
  \mbox{\cite{B00ProofComplexity,
      BP98Propositional,
      Segerlind07Complexity}}.%
}

%
%

\subsection{Previous Work}
\label{sec:introduction-previous-work}

Any formula in \proplog can be converted to a \cnfform
that is only linearly larger and is
unsatisfiable \ifaoif the original formula is a tautology.
Therefore, any sound and complete system 
for refuting  \cnfform{}s 
%
%
can be considered as a
general \pps.

\ifthenelse{\boolean{STOCsavespace}}
{%
  Perhaps the single most studied proof system for propositional
  logic, \introduceterm{resolution}, is such a system
  that produces proofs of the unsatisfiability of  \cnfform{}s. 
  \mbox{Resolution} appeared in%
}
{%
  Perhaps the single most studied proof system in propositional
  proof complexity, \introduceterm{resolution}, is such a system
  that produces proofs of the unsatisfiability of  \cnfform{}s. 
  The resolution proof system appeared in%
}
\cite{B37Canonical}
and began to be investigated in connection with automated theorem proving 
in the 1960s
\mbox{\cite{DLL62MachineProgram,
  DP60ComputingProcedure,
  R65Machine-oriented}}.
Because of its simplicity---there is only one derivation rule---and
because all lines in a proof are clauses, 
\ifthenelse{\boolean{STOCsavespace}}
{resolution}
{this proof system}
readily lends itself to proof search algorithms.

Being so simple and fundamental, resolution was also a natural target
to attack when developing methods for proving lower bounds in proof
complexity. 
In this context, it is most straightforward to prove bounds on the
\introduceterm{length} of refutations, \ie the number of clauses,
rather than on the 
\ifthenelse{\boolean{STOCsavespace}}
{total size.}
{total size of refutations.}
The length and size measures are easily seen to be polynomially related.
In 1968, Tseitin~\cite{T68ComplexityTranslated} presented a \superpoly
lower bound on 
\ifthenelse{\boolean{STOCsavespace}}
{length} 
{refutation length} 
for a restricted form of resolution,
called \introduceterm{regular} resolution, but it was not until almost
20~years later that Haken~\cite{H85Intractability} proved the first
\superpoly lower bound for general resolution.
This weakly exponential bound of Haken has later been followed
by many other strong results, among others truly exponential lower
bound on  resolution refutation length for different formula families
in, \eg, 
\ifthenelse{\boolean{maybeSTOC}}
{\cite{BW01ShortProofs,
    CS88ManyHard,
    U87HardExamples}.}
{\cite{BKPS02Efficiency,
    BW01ShortProofs,
    CS88ManyHard,
    U87HardExamples}.}
%
%
%
%
%
%

\ifthenelse{\boolean{maybeSTOC}}
{%
  A second complexity measure for 
  resolution is the \introduceterm{width}, \ie the maximal 
  size of a clause in the refutation.%
}
{%
  A second complexity measure for 
  resolution, first made explicit by Galil~\cite{Galil77Resolution}, is
  the \introduceterm{width}, measured as the maximal size of a clause
  in the refutation.%
}
Ben-Sasson and Wigderson~\cite{BW01ShortProofs}
showed that the minimal width
$\widthref{\fstd}$
of any 
\ifthenelse{\boolean{maybeSTOC}}
{refutation}
{resolution refutation}
of a \kcnfform~$\fstd$ is 
bounded from above by the minimal refutation length 
$\lengthref{\fstd}$ 
by
\begin{equation}
  \label{eq:intro-Ben-Sasson-Wigderson-bound}
  \widthref{\fstd} 
  = 
  \Bigoh{\sqrt{\nvar \log \lengthref{\fstd}}}
  \eqcomma
\end{equation}
where $\nvar$ is the number of variables in~$\fstd$.
Since it is also easy to see that 
\ifthenelse{\boolean{maybeSTOC}}
{refutations} 
{resolution refutations} 
of \polysize
formulas in small width must necessarily be short
(
for the reason that 
$(2 \cdot \# \text{variables})^w$ is an upper bound on
the total number of 
distinct clauses
of width~$w$),
the result in~\cite{BW01ShortProofs} can be interpreted as saying
roughly  that there exists a short refutation of 
\ifthenelse{\boolean{maybeSTOC}}
{$\fstd$}
{the \kcnfform~$\fstd$}
\ifaoif there exists a (reasonably) narrow refutation of~$\fstd$.
This 
gives rise to a natural proof search
heuristic: to find a short refutation, search for
refutations in small width. It was shown in~%
\cite{BIW00Near-optimalSeparation}
that there are formula families for which this heuristic exponentially
outperforms any DPLL procedure regardless of branching function.

The formal study of \introduceterm{space} in resolution was initiated
by Esteban and Torán~%
\cite{ET01SpaceBounds, Toran99LowerBounds}. 
Intuitively, the space 
$\clspaceof{\proofstd}$
\ifthenelse
{\boolean{STOCsavespace}}
{of a refutation} 
{of a resolution refutation} 
$\proofstd$
is the maximal number of
clauses one needs to keep in memory while verifying the refutation,
and the space
$
\clspaceref{\fstd}
$
of refuting~$\fstd$ is defined as the minimal space 
\ifthenelse
{\boolean{STOCsavespace}}
{of any refutation of~$\fstd$.}
{of any 
  refutation of~$\fstd$.}
A number of upper and lower bounds for refutation space in resolution
and other proof systems were subsequently presented in, for example,
\ifthenelse
{\boolean{maybeSTOC}}
{\mbox{\cite{ABRW02SpaceComplexity,
    BG03SpaceComplexity,
    ET03CombinatorialCharacterization}}.}
{\mbox{\cite{ABRW02SpaceComplexity,
    BG03SpaceComplexity,
    EGM04Complexity,
    ET03CombinatorialCharacterization}}.}
Just as for width, the minimum space of refuting a formula can be
upper-bounded by the size of the formula. Somewhat unexpectedly,
however, it also turned out that the lower bounds on resolution
refutation space for 
\ifthenelse
{\boolean{STOCsavespace}}
{several formula families} 
{several different formula families} 
exactly matched previously known lower bounds on refutation 
width. Atserias and Dalmau~\cite{AD02CombinatoricalCharacterization}
showed that this was not a coincidence, but that the inequality
\begin{equation}
  \label{eq:intro-Atserias-Dalmau-bound}
  \widthref{\fstd} 
  \leq
  \clspaceref{\fstd}
  +  
  \bigoh{1}
%
\end{equation}
holds for any \kcnfform~$\fstd$, where the (small) constant term depends
on~$\clwidth$. 
In~\cite{Nordstrom06NarrowProofsMayBeSpaciousSTOCtoappearinSICOMP},
\theauthorJN proved that the inequality~%
\refeq{eq:intro-Atserias-Dalmau-bound}
is asymptotically strict by
exhibiting a \kcnfform family   of size
$\bigoh{n}$
refutable in width
$\widthref{\fstd_n} = \bigoh{1}$
but requiring space
$\clspaceref{\fstd_n} = \bigtheta{\log n}$.
%

The space measure discussed above is known as
\introduceterm{clause space}.
A less well-studied space measure, introduced by
Alekhnovich \etal~%
\cite{ABRW02SpaceComplexity},
is
\introduceterm{variable space},
which counts the maximal number of variable occurrences that must be
kept in memory simultaneously. 
Ben-Sasson~\cite{Ben-Sasson02SizeSpaceTradeoffs}
used this measure to obtain a trade-off result for clause space 
versus width in resolution, proving that there are \kcnfform{}s
$\fstd_n$
that can be refuted in constant clause space and constant width, 
but for which any refutation~$\proofstd_n$ must have
$
\clspaceof{\proofstd_n} 
\cdot
\widthofarg{\proofstd_n}
= \bigomega{n / \log n}
$.
More recently, 
Hertel and Pitassi~\cite{HP07ExponentialTimeSpaceSpeedupFOCS}
showed that there are  \cnfform{}s $\fstd_n$
for which any 
\ifthenelse{\boolean{maybeSTOC}}
{refutation}
{refutation of~$\fstd_n$}
in minimal variable space 
\mbox{$\varspaceref{\fstd_n}$}
must have exponential length, but by adding just $3$ extra units of
storage one can instead get a resolution refutation in linear length.

\subsection{Questions Left Open by Previous Research}
\label{sec:introduction-questions-left-open}

%
%
Despite all the research that has gone into understanding the
resolution proof system, a number of fundamental questions still
remain unsolved.
We touch briefly on two such questions below, and then discuss a third
one, which is the main focus of this paper, in somewhat more detail. 

Equation~
\refeq{eq:intro-Ben-Sasson-Wigderson-bound}
says that short refutation length implies narrow refutation width.
Combining 
Equation~
\refeq{eq:intro-Atserias-Dalmau-bound}
with the observation above that narrow refutations are trivially
short, we get a similar statement that 
small refutation clause space implies short refutation length.
Note, however, that this does \emph{not} 
mean   
that there is a
refutation that is both short and narrow, or that any
small-space refutation must also be short.
The reason is that the resolution refutations on the left- and
right-hand sides of 
\refeq{eq:intro-Ben-Sasson-Wigderson-bound}
and~%
\refeq{eq:intro-Atserias-Dalmau-bound}
need not (and in general will not) be the same one.
%
%
%
%

In view of the minimum-width proof search heuristic mentioned above,
an important question is whether short refutation length of a formula
\ifthenelse{\boolean{maybeSTOC}}
{%
  entails that  there is a refutation that is both short%
} 
{%
  does in fact entail that there is a refutation of it 
  that is both short%
} 
and narrow.  Also, it would be 
\mbox{interesting} 
to know if small space of a
refutation implies that it is short.  It is not known whether there
are such connections or whether on the contrary there exist some kind
of trade-off phenomena here similar to the one for 
space and width    
in~\cite{Ben-Sasson02SizeSpaceTradeoffs}.

A third, even more interesting problem is to clarify the relation
between length and clause space.
For width, 
rewriting  the bound in
\refeq{eq:intro-Ben-Sasson-Wigderson-bound}
in terms of the number of clauses 
$\nclausesof{\fstd_{\nvar}}$
instead of the number of variables we get that
that if the width of refuting $\fstd_{\nvar}$ is 
$\Littleomega{
  \sqrt{
    \nclausesof{\fstd_{\nvar}} \log      \nclausesof{\fstd_{\nvar}}}}$,
then the length of refuting $\fstd_{\nvar}$ must be superpolynomial
in~$\nclausesof{\fstd_{\nvar}}$.
This is known to be almost tight, since
\cite{BG01Optimality}
shows that there is 
a \kcnfform family $\setsmall{\fstd_n}_{n=1}^{\infty}$
with
$\mbox{$\widthref{\fstd_n}$} =
\Bigomega{\sqrt[3]{\nclausesof{\fstd_n}}}$
but
$\lengthref{\fstd_n} = \bigoh{\nclausesof{\fstd_n}}$.
Hence, formula families refutable in polynomial
length can have somewhat wide minimum-width refutations, but not
arbitrarily wide ones.

What does the corresponding relation between 
space and length look
like? 
%
%
The inequality \refeq{eq:intro-Atserias-Dalmau-bound}  
tells us that any correlation between length
and clause space cannot be tighter than the correlation between length and
width, so in particular 
we get from the previous paragraph 
that \kcnfform{}s refutable in polynomial length may have at least
``somewhat spacious'' minimum-space refutations. 
%
At the other end of the spectrum, given any resolution refutation
$\proofstd$ of~$\fstd$ in length~$\lengthstd$ it can be proven using
results from 
\mbox{\cite{ET01SpaceBounds, HPV77TimeVsSpace}} 
that 
$\clspaceof{\proofstd}
=    
\bigoh{\lengthstd / \log \lengthstd}$.
This gives an upper bound on any possible separation of the two
measures.
But is
there a Ben-Sasson--Wigderson kind of upper bound on 
\ifthenelse{\boolean{maybeSTOC}}
{clause space} 
{space} 
in terms of length
similar to \refeq{eq:intro-Ben-Sasson-Wigderson-bound}?
Or are length and  space
on the contrary  unrelated in the sense that there exist
\kcnfform{}s~$\fstd_{\nvar}$ with short refutations but
maximal possible refutation space
$
\mbox{$\clspaceref{\fstd_{\nvar}}$}
=
\Bigomega{\lengthref{\fstd_{\nvar}} / 
  \log \lengthref{\fstd_{\nvar}}}$
in terms of length?

We note that for the restricted case of so-called
tree-like resolution, 
\cite{ET01SpaceBounds}~showed that there is a tight correspondence
between length and 
\ifthenelse{\boolean{maybeSTOC}}
{clause space,} 
{space,} 
exactly as for length versus width. 
The case for general resolution has been discussed in, 
\eg,   
\mbox{\cite{Ben-Sasson02SizeSpaceTradeoffs,
  ET03CombinatorialCharacterization,
  Toran04Space}}, 
but there seems to have been no consensus on what the right answer should
be. However, these papers identify a plausible formula family for
answering the question, namely so-called 
\introduceterm{pebbling contradictions}
defined in terms of pebble games over directed acyclic graphs.

\subsection{Our Contribution}
\label{sec:introduction-our-contribution}

The main result in this paper provides some evidence that the true
answer to the question about the relationship between 
\ifthenelse{\boolean{maybeSTOC}} 
{clause space} 
{space} 
and length
is more likely to be at the latter extreme, \ie that the two measures
can be separated in the strongest sense possible.
\ifthenelse{\boolean{maybeSTOC}} 
{%
  More specifically, as a step towards this goal we prove an
  asymptotically tight bound on the space of refuting pebbling
  contradictions over pyramids.%
} 
{%
  More specifically, as a step towards reaching this goal we prove an
  asymptotically tight bound on the clause space of refuting pebbling
  contradictions over pyramid graphs.%
}

\begin{theorem}
  \label{th:main-theorem}
  The clause space of refuting pebbling contradictions over 
  \ifthenelse{\boolean{maybeSTOC}}
  {pyramid graphs} 
  {pyramids} 
  of height~$h$ in resolution grows as
  $\bigtheta{h}$,
  provided that the number of variables per vertex 
  in the pebbling   contradictions 
  is at least~$2$.
\end{theorem}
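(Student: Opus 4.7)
The plan is to prove the two directions separately. For the upper bound $\clspaceref{\text{Peb}_{\Pi_h}} = \bigoh{h}$, I would simulate the standard black pebbling strategy for pyramids, which is well-known to use at most $h+1$ pebbles. Each placement of a black pebble on vertex~$v$ is translated into deriving (and keeping in memory) the clause $\textit{All}^+(v)$ asserting that at least one of $v$'s variables is true, while removing a pebble corresponds to erasing the clause. The assumption of at least two variables per vertex is what gives enough room to perform the individual resolution derivations locally at each step without introducing extra persistent clauses, so the memory footprint is $\bigoh{1}$ clauses per pebbled vertex and $\bigoh{h}$ clauses in total.

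For the lower bound $\clspacerefcompact{\text{Peb}_{\Pi_h}} = \bigomega{h}$, the plan is to set up an intermediate combinatorial model---a \blobpebblegame{}---that is weaker than resolution but strong enough that any resolution refutation induces a strategy in this game, and then to prove that the game requires $\bigomega{h}$ pebbles on pyramids. Concretely, I would carry out three steps:

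\textbf{(i)} Define blob-configurations on the pyramid in which ``black pebbles'' are generalized to \emph{blobs} (subsets of vertices), reflecting the fact that a single clause in a resolution refutation may carry information about many vertices simultaneously, while white pebbles model unproven assumptions. The legal moves (introduction, merging, elimination of blobs/whites) should be chosen so that each resolution step on a pebbling contradiction can be mimicked by a legal blob-pebbling move.

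\textbf{(ii)} Describe an explicit translation $\clsc_t \mapsto \mpscnotstd$ from resolution configurations to blob subconfigurations, controlled so that $\mpcostcompact{\mpscnotstd} \leq c \cdot \clspaceofsmall{\clsc_t}$ for an absolute constant~$c$. The key invariant is that whenever a clause~$\clc \in \clsc_t$ ``mentions'' a set of vertices $V_\clc \subseteq V(\Pi_h)$ in a well-defined sense, one may charge~$\clc$ at most one blob and constantly many whites. This will use the fact that pebbling contradictions have a very rigid structure in which implication among clauses tracks reachability in the underlying DAG.

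\textbf{(iii)} Prove a $\bigomega{h}$ lower bound on the blob-pebbling number of pyramids of height~$h$. The approach is to adapt Klawe's hiding-set technique for standard black-white pebbling: associate to every blob-configuration a potential function based on hiding sets of representative vertices on each level of the pyramid, and show that (a) the potential equals~$0$ initially and at the final configuration pebbling the sink, so it must grow to~$\bigomega{h}$ at some intermediate step, and (b) at that step the blob-pebbling cost is at least the potential.

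The main obstacle will be step~(iii). For ordinary black-white pebble games, Klawe's argument relies on each pebble sitting on a single vertex, whereas blobs may smear across the graph and interact in subtle ways with white pebbles both above and below them. I expect to need to formulate an analogue of Klawe's ``limited hiding-cardinality'' property adapted to blobs (the \genklaweprop{} in the paper's notation), verify that pyramids enjoy it by a direct combinatorial argument exploiting their layered structure and unique source-sink paths, and then turn this property into a lower bound via a carefully designed potential $\vpotential{\mpscnotstd}$ that charges representatives from each level of the pyramid according to how they are blocked or hidden by the current blob-configuration. Step~(ii) is also delicate but should reduce to a bookkeeping argument once the right notion of ``cost-cover'' of clauses by blobs is in place.
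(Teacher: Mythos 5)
Your overall strategy matches the paper's proof exactly: an upper bound by simulating a black pebbling, and a lower bound via a three-stage pipeline from resolution refutations to a blob-pebble game, from blob-pebbling cost to clause space, and from Klawe's spreading-graph framework to an $\bigomega{h}$ blob-pebbling lower bound. Steps (i), (ii), (iii) correspond to the paper's Theorems~\ref{th:translation-of-resolution-to-multi-pebbling}, \ref{th:linear-cost-pebbles}, and the combination of Theorems~\ref{th:lower-bound-mpebbling-assuming-GKP} and~\ref{th:pyramids-possess-GKP}, and your intuitions about inflation, about charging representatives, and about generalizing the limited hiding-cardinality property are precisely what the paper does.

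However, you have placed the hypothesis $\pebdeg \geq 2$ in the wrong part of the argument. The upper bound simulating a black pebbling works for \emph{all} $\pebdeg \geq 1$: deriving $\sourceclausexvar[l]{r}$ from $\sourceclausexvar[i]{p}$ and $\sourceclausexvar[j]{q}$ and the pebbling axioms for~$r$ can be done in $\Ordosmall{1}$ extra clause space independently of~$\pebdeg$, so nothing there needs two variables per vertex. The hypothesis $\pebdeg \geq 2$ is essential in your step~(ii), the cost bound $\mpcost{\mpinducedconf{\clsc}} \leq \setsize{\clsc}$. The issue is white pebbles: a configuration with $N$ supporting white pebbles contributes roughly $\pebdeg \setsize{W}$ variable occurrences to the left-hand side of a minimal implication but only $\setsize{W}$ clauses of the form $\sourceclausexvar[i]{w}$, so the net gain over the clauses from $\blacktruth{W}$ is $(\pebdeg - 1)\setsize{W}$, which vanishes when $\pebdeg = 1$. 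For $\pebdeg = 1$ a single clause such as $v_1 \lor \Lor_{w \in W} \olnot{w}_1$ induces arbitrarily many white pebbles, so the cost theorem cannot hold, and indeed it must fail because $\pebcontrNT[G]{1}$ is refutable in constant clause space. As stated, your proposal would not alert you to this obstruction, and if you tried to carry out~(ii) for $\pebdeg = 1$ you would hit a contradiction with the known constant-space refutation. Everything else in your sketch is sound.
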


\ifthenelse{\boolean{maybeSTOC}}
{%
  This yields the first result separating clause space and length that
  is not a  consequence of a corresponding lower bound on width, as
  well as an exponential improvement of the separation of 
  clause space and width 
  in~\cite{Nordstrom06NarrowProofsMayBeSpaciousSTOCtoappearinSICOMP}.%
}
{%
  This yields the first separation of space and length (in the sense of
  a polynomial lower bound on space for formulas refutable in
  polynomial length)
  that is not a
  consequence of a corresponding lower bound on width, as well as an
  exponential improvement of the separation of space and width 
  in~\cite{Nordstrom06NarrowProofsMayBeSpaciousSTOCtoappearinSICOMP}.%
}

\begin{corollary}
  \label{cor:main-corollary}   
  For all $\clwidth \geq 4$,
  there is a family
  $\setsmall{F_n}_{n=1}^{\infty}$
  of \kcnfform{}s of size 
  $\bigtheta{n}$ 
  that can be refuted
  in resolution
  in length
  $\lengthref{F_n} = \bigoh{n}$
  and width
  $\widthref{F_n} = \bigoh{1}$
  but require clause space
  $\clspaceref{F_n} = \bigtheta{\sqrt{n}}$.
\end{corollary}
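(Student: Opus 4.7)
The plan is to derive the corollary by instantiating \refth{th:main-theorem} with pebbling contradictions over pyramid graphs of an appropriate height, and then verifying that the length, width, and size bounds all come out as claimed. Fix the degree parameter $\pebdeg$ (number of variables per vertex) so that $\pebdeg \geq 2$ and the resulting pebbling contradiction is a \kcnfform in the desired $\clwidth \geq 4$; inspecting the standard definition of pebbling contradictions $\pebpyramidcontr[h]{\pebdeg}$, the axiom clauses have width $\pebdeg$ (source/target clauses) and the induction clauses have width $2 + \pebdeg$, so choosing $\pebdeg = \clwidth - 2 \geq 2$ gives exactly a \kcnfform. Let $\pyramidgraphh$ denote the pyramid of height~$h$; it has $\binom{h+1}{2} = \Tightsmall{h^2}$ vertices, so $\pebpyramidcontr[h]{\pebdeg}$ has $\Tightsmall{h^2}$ clauses over $\Tightsmall{h^2}$ variables. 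Setting $F_n = \pebpyramidcontr[h]{\pebdeg}$ with $h = \Tightsmall{\sqrt{n}}$ gives a family of size $\Tightsmall{n}$.

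Next I would verify the upper bounds on length and width. The width bound $\widthref{F_n} = \Bigoh{1}$ is immediate from the choice of constant degree $\pebdeg$, since resolving on the degree-$\pebdeg$ disjunctions at each vertex produces intermediate clauses of width at most $\bigoh{\pebdeg} = \bigoh{1}$. For the length bound, I would use the standard fact that a pyramid of height~$h$ admits a black pebbling strategy of linear cost (pebbles can be placed and removed in a left-to-right, bottom-up sweep, using at most $h+1$ pebbles and $\bigoh{h^2}$ moves). Such a pebbling can be simulated step by step by a resolution derivation of the target clause at the apex, giving a refutation of length $\bigoh{h^2} = \bigoh{n}$; this simulation is by now routine in the pebbling-to-resolution literature and only uses clauses of constant width since $\pebdeg$ is constant.

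Finally, the lower bound on clause space follows directly from \refth{th:main-theorem}: since $\pebdeg \geq 2$, one has $\clspaceref{F_n} = \Tight{h} = \Tightsmall{\sqrt{n}}$. Combining this with the matching upper bound (which is also part of the content of \refth{th:main-theorem}, or alternatively follows by executing the pebbling simulation while freeing clauses as soon as they are no longer needed, keeping in memory at most $\bigoh{h}$ clauses at any time) gives $\clspaceref{F_n} = \Tightsmall{\sqrt{n}}$.

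The only genuinely hard ingredient is the space lower bound in \refth{th:main-theorem} itself; granting that theorem, the corollary is essentially a parameter-counting exercise. The one minor technical point to be careful about is the simultaneous achievement of $\lengthref{F_n} = \bigoh{n}$ and $\widthref{F_n} = \bigoh{1}$ in the \emph{same} refutation, but this is guaranteed by the explicit pebbling-based construction, which produces clauses whose width is bounded by a function of $\pebdeg$ alone. Note also that this yields a strict separation: the refutation demonstrating the length and width upper bounds cannot be one achieving the space lower bound, since by \refeq{eq:intro-Atserias-Dalmau-bound} any refutation in clause space $s$ has width at most $s + \bigoh{1}$, so in particular the \mbox{$\bigoh{1}$-width} refutation uses space $\Bigomega{\sqrt{n}}$, which cannot be improved further by \refth{th:main-theorem}.
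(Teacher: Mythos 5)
Your proof is correct and takes essentially the same route as the paper: choose $\pebdeg = \clwidth - 2 \geq 2$, let $F_n = \pebcontr[\pyramidgraphh]{\pebdeg}$ with $h = \Tightsmall{\sqrt{n}}$, invoke the known $\Bigoh{\pebdeg^2 \cdot \setsize{\vertices{G}}}$-length, $\Bigoh{\pebdeg}$-width refutation (\refpr{pr:pebbling-contradiction-has-short-resolution-refutation} in the paper), and apply \refth{th:main-theorem} for the $\Tightsmall{h}$ space bound. One peripheral imprecision: the closing remark misreads \refeq{eq:intro-Atserias-Dalmau-bound}, which bounds the \emph{minimum} refutation width by the \emph{minimum} clause space plus $\Ordosmall{1}$, not the width of an individual refutation by that refutation's space; the intended point (that the $\Bigoh{1}$-width refutation still costs $\Bigomega{\sqrt{n}}$ space) already follows directly from the space lower bound of \refth{th:main-theorem}, which applies to every refutation.
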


%

In addition to our main result, we also make the the observation that
the proof of the recent trade-off result in~%
\cite{HP07ExponentialTimeSpaceSpeedupFOCS}
can be greatly simplified, and the parameters slightly improved.
Using similar ideas, we can also prove exponential trade-offs
for length \wrt clause space and width. Namely, we show that there are
\kcnfform{}s such that if we insist on finding the resolution
refutation in smallest clause space or smallest width, respectively,
then we have to pay with an exponential increase in length.
We state the theorem only for length versus clause space.

\ifthenelse
{\boolean{maybeSTOC}}
{\input{figPebblingContradictionPyramidHeight2.STOC.tex}}
{}

\begin{theorem}
  \label{th:easy-length-clause-space-trade-off}
  \ifthenelse
  {\boolean{maybeSTOC}}
  {%
    There is a family
    $\setsmall{\fstd_n}_{n=1}^{\infty}$
    of \kcnfform{}s
    of size $\bigtheta{n}$ \st:
  }
  {%
    There is a family of \kcnfform{}s
    $\setsmall{\fstd_n}_{n=1}^{\infty}$
    of size $\bigtheta{n}$ \st:
  }
  \begin{compactitem}
  \item
    The minimal  clause space of refuting $\fstd_n$ 
    in resolution
    is
    $\clspaceref{\fstd_n} = \Tightcompact{\sqrt[3]{n}}$.

  \item
    Any resolution refutation
    $\refof{\proofstd}{\fstd_n}$
    in minimal clause space
    must have length
    $\lengthofarg{\proofstd} =  
    \exp \bigl(\Bigomega{\sqrt[3]{n}}\bigr)$.

  \item
    There are  
    \ifthenelse
    {\boolean{maybeSTOC}}
    {refutations $\refof{\proofstd'\!}{\!\fstd_n}$ in}
    {resolution refutations $\refof{\proofstd'}{\fstd_n}$ in} 
    asymptotically minimal 
    clause space
    $\clspaceof{\proofstd'} =
    \Bigoh{\clspaceref{\fstd_n}}$
    and length
    $\lengthofarg{\proofstd'} = \bigoh{n}$, 
    \ie linear in the formula size.
  \end{compactitem}
\end{theorem}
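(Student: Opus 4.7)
The strategy is to use pebbling contradictions over a DAG family that exhibits a sharp time--space trade-off in the pebble game. Specifically, I would pick DAGs $\{G_h\}$ with $\Tightcompact{h^3}$ vertices and black pebbling number $\Tightcompact{h}$ such that (a)~any black pebbling strategy using $\Tightcompact{h}$ pebbles requires $\exp(\Bigomega{h})$ moves, while (b)~there is a pebbling strategy using a constant factor more pebbles and only $\Bigoh{h^3}$ moves. DAGs with trade-offs of this shape have been studied extensively in the classical pebbling literature and can be obtained, for instance, by appropriately stacking or interleaving pyramids while preserving the structural features needed for Theorem~\ref{th:main-theorem}. We then let $\fstd_n$ be the pebbling contradiction over $G_h$ with $n = \Tightcompact{h^3}$, so that $h = \Tightcompact{\sqrt[3]{n}}$.

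The space bound $\clspaceref{\fstd_n} = \Tightcompact{\sqrt[3]{n}}$ then follows by adapting the argument of Theorem~\ref{th:main-theorem} from pyramids to the DAGs $G_h$: the upper bound is the standard pebbling-to-refutation translation from any pebbling strategy, and the lower bound should carry over by running the same subconfiguration-based machinery used for pyramids on $G_h$, since that machinery depends on structural properties the $G_h$ share. The refutation $\proofstd'$ achieving length $\bigoh{n}$ in asymptotically minimal clause space is constructed by translating the efficient pebbling strategy from~(b) into a resolution refutation, where each pebble placement or removal corresponds to $\Bigoh{1}$ resolution steps, giving at most $\Bigoh{h^3} = \bigoh{n}$ clauses total with $\Bigoh{h} = \Bigoh{\clspaceref{\fstd_n}}$ clauses simultaneously in memory.

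The main obstacle is the exponential length lower bound for refutations in exactly minimum clause space. The plan is to extract, from any resolution refutation $\proofstd$ with $\clspaceof{\proofstd} = \clspaceref{\fstd_n}$, a black pebbling strategy on $G_h$ that uses at most $\clspaceof{\proofstd}$ pebbles and at most $\Bigoh{\lengthofarg{\proofstd}}$ moves; property~(a) of $G_h$ then immediately forces $\lengthofarg{\proofstd} = \exp(\Bigomega{\sqrt[3]{n}})$. The delicate point is that the extraction must be tight in the sense that a minimum-space refutation cannot be permitted to ``cheat'' by arranging clauses in a clever way that would correspond to a pebbling strategy using strictly more than the pebbling number of pebbles. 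Making this precise requires a careful analysis of how the clauses kept in memory at each step of the refutation simulate a legal pebble configuration on $G_h$, and how resolution inference steps induce valid pebbling moves with no wasted pebbles; this is the step I expect to require the bulk of the technical work.
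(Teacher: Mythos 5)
Your proposal is a genuinely different — and, taken at face value, much more ambitious — approach than the paper's. The paper proves this theorem with a short "two disjoint formulas" trick: it takes a family $\formg_n$ that is hard for length and space (\refth{th:clause-space-approx-n-clauses}), a family $\formh_m$ that is easy for length but needs space $\Tightsmall{m}$ (the Bonet--Galesi formulas of \refth{th:moderately-hard-formulas}), chooses $m = \funcg(n)$ so that $\clspaceref{\formh_{\funcg(n)}}$ just barely exceeds $\clspaceref{\formg_n}$, and sets $\fstd_n = \formg_n \land \formh_{\funcg(n)}$ over disjoint variables. By \refobs{obs:disjoint-variable-sets} any refutation must refute one of the conjuncts outright; a minimum-space refutation is forced to tackle $\formg_n$ (exponential length), while a tiny relaxation of space permits a linear-length refutation of $\formh_{\funcg(n)}$. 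The result, as the authors themselves stress in \refsec{sec:main-trick}, crucially exploits that the resulting formulas are \emph{not} minimally unsatisfiable, which is precisely what makes the argument so cheap.

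Your route — a single natural family $\pebcontr[G_h]{\pebdeg}$ over DAGs with an intrinsic pebbling time--space trade-off — would give a qualitatively stronger and more satisfying theorem, but two of its steps are genuine gaps, not "bulk of the technical work."

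First, the space lower bound $\clspaceref{\fstd_n} = \Bigomega{h}$ is only established in this paper for \emph{layered spreading} DAGs (\refth{th:pyramids-possess-GKP} plus \refth{th:lower-bound-mpebbling-assuming-GKP}); the paper explicitly notes it cannot extend the blob-pebbling lower bound beyond this class. The classical trade-off DAGs you would need (stacked/interleaved constructions with $\Tightsmall{h^3}$ vertices, pebbling price $\Tightsmall{h}$, and an exponential length blow-up at the pebbling-price threshold) are not layered spreading graphs, and nothing in the paper lets you ``run the same subconfiguration-based machinery'' on them. This is not a routine adaptation; it is an open problem the paper itself poses.

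Second, and more fundamentally, the exponential length lower bound requires extracting from an arbitrary resolution refutation $\proofstd$ a black pebbling of $G_h$ using at most $\clspaceof{\proofstd}$ pebbles \emph{and} at most $\Bigoh{\lengthofarg{\proofstd}}$ moves. The blob-pebbling translation of \refth{th:C-translation-of-resolution-to-pebbling} controls only the \emph{cost} of the induced pebbling, not the number of moves — a single axiom download can induce arbitrarily many new blob subconfigurations — and the blob game is not the black pebble game (it has inflation and a different cost function), so there is no step in the paper that converts a blob-pebbling into a black pebbling at all, let alone in a length- and cost-preserving way. Without such an extraction the appeal to a pebbling-game trade-off on $G_h$ says nothing about resolution refutation length. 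This is exactly the kind of tight resolution-to-pebbling correspondence that is not known, and no amount of "careful analysis" is going to produce it without a substantially new idea. You should fall back to the paper's combinatorial trick, or else be explicit that the step you are waving at is a major open problem rather than a technical lemma.
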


A theorem of exactly the same form can be proven for length versus
width as well.

\ifthenelse{\boolean{maybeSTOC}}    
{\newcommand{\STOCvfill}{\vfill}}
{}

\ifthenelse{\boolean{maybeSTOC}}
{\section{Preliminaries}
\label{sec:sketch-of-preliminaries}}
{
\section{Proof Overview and Paper Organization}
\label{sec:proof-overview-and-paper-organization}

    
Since the proof of our main theorem is fairly involved, we start by
giving an intuitive, high-level description of the proofs of our
results and outlining how this paper is organized.

\subsection{Sketch of Preliminaries}
\label{sec:sketch-of-preliminaries}
}
%
%

A 
\introduceterm{resolution refutation} 
of a \cnfform $\fstd$ 
can be viewed as 
a sequence of
derivation steps on a blackboard.
In each step we may write a clause from~$\fstd$ on the blackboard
(an \introduceterm{axiom} clause),
erase a clause from the blackboard or derive some new
\ifthenelse{\boolean{maybeSTOC}}
{clause implied by the clauses currently written on the blackboard.}
{clause implied by the clauses currently written on the blackboard.%
\footnote{%
  For our proof, it turns out that the exact definition of the
  derivation rule is not essential---our lower bound holds for any
  sound rule. What is important is that we are only allowed to derive  
  new clauses that are implied by the set of clauses currently on the
  blackboard.}} 
The refutation ends when we reach the contradictory empty clause. The 
\introduceterm{length} 
of a  resolution refutation is the number of distinct clauses in the
refutation, the
\introduceterm{width} 
is the size of the largest clause in the refutation, and the 
\introduceterm{clause space} 
is the maximum number of clauses on the blackboard simultaneously.
We write
\mbox{$\lengthrefsmall{\fstd}$},
\mbox{$\widthrefsmall{\fstd}$}
and
\mbox{$\clspacerefsmall{\fstd}$}
to denote the minimum length, width and clause space, respectively, of
any resolution refutation of~$\fstd$. 

The  
\introduceterm{pebble game}  
played on a directed acyclic graph (DAG) $G$  models the 
calculation described by~$G$, where the 
\ifthenelse{\boolean{maybeSTOC}}
{sources}
{source vertices}
contain the input and non-source vertices specify operations on the
values of the predecessors.
Placing a pebble on a vertex~$v$ corresponds to storing in memory the
partial result of the calculation described by the subgraph rooted
at~$v$. Removing a pebble from~$v$ corresponds to deleting the partial
result  of~$v$ from memory. 
\ifthenelse{\boolean{maybeSTOC}}
{%
  A 
  \introduceterm{pebbling}
  of $G$ is a sequence of moves starting with the graph empty
  and ending with all vertices empty except for a pebble on the
  (unique) sink vertex.%
} 
{%
  A 
  \introduceterm{pebbling}
  of a DAG~$G$ is a sequence of moves starting with the empty graph~$G$
  and ending with all vertices in~$G$ empty except for a pebble on the
  (unique) sink vertex.%
} 
The
\introduceterm{cost}
of a pebbling is the maximal number of pebbles used 
simultaneously at any point in time during the pebbling. The 
\introduceterm{pebbling price} 
\ifthenelse{\boolean{maybeSTOC}}
{of $G$} 
{of a DAG~$G$} 
is the minimum cost of any pebbling, 
\ie the minimum number of 
\ifthenelse{\boolean{maybeSTOC}}
{registers} 
{memory registers} 
required to perform the
complete calculation described by$~G$.

\ifthenelse{\boolean{maybeSTOC}}           
{%
  The pebble game on a DAG~$G$ can be encoded as 
  an unsatisfiable   \cnfform
  ${\pebcontr[G]{\pebdeg}}$,
  a so-called
  \introduceterm{pebbling contradiction}
  of degree~$\pebdeg$, 
  as follows
  (see
  \reffig{fig:pebbling-contradiction-for-Pi-2}
  for an 
  example):%
}
{%
  The pebble game on a DAG~$G$ can be encoded as 
  an unsatisfiable   \cnfform
  ${\pebcontr[G]{\pebdeg}}$,
  a so-called
  \introduceterm{pebbling contradiction}
  of degree~$\pebdeg$.
  See
  \reffig{fig:pebbling-contradiction-for-Pi-2}
  for a small example. 
  Very briefly, pebbling contradictions are constructed as follows:%
}
\ifthenelse{\boolean{maybeSTOC}}
{\begin{compactitem}}
{\begin{itemize}}
\ifthenelse{\boolean{maybeSTOC}}{\STOCvfill}{}    
\item Associate $\pebdeg$ variables
  $\varx(v)_1, \ldots, \varx(v)_{\pebdeg}$ with each vertex~$v$
  (in
  \reffig{fig:pebbling-contradiction-for-Pi-2}
  we have $\pebdeg = 2$).
\ifthenelse{\boolean{maybeSTOC}}{\STOCvfill}{}    
\item 
  Specify that all sources have at least one true variable,
  for example, the clause
  $\varx(r)_1 \lor \varx(r)_2$ for the 
  \ifthenelse{\boolean{maybeSTOC}}
  {vertex $r$.}
  {vertex $r$ in \reffig{fig:pebbling-contradiction-for-Pi-2}.}
\ifthenelse{\boolean{maybeSTOC}}{\STOCvfill}{}    
\item 
  \ifthenelse{\boolean{maybeSTOC}}
  {%
    Add clauses propagating the truth from predecessors to
    successors (e.g.\
    clauses 4--7 
    in   \reffig{fig:pebbling-contradiction-for-Pi-2}
    say that
    $
    \mbox{$(\varx(r)_1 \lor  \varx(r)_2 )$}
    \land
    \mbox{$(\varx(s)_1 \lor  \varx(s)_2 )$}
    \limpl
    \mbox{$(\varx(u)_1 \lor  \varx(u)_2 )$}
    $.%
  }
  {%
    Add clauses saying that truth propagates from predecessors to
    successors.
    For instance, 
    for the vertex~$u$ with predecessors $r$ and~$s$,
    clauses 4--7 
    in   \reffig{fig:pebbling-contradiction-for-Pi-2}
    are the CNF encoding of the implication
    $
    \mbox{$(\varx(r)_1 \lor  \varx(r)_2 )$}
    \land
    \mbox{$(\varx(s)_1 \lor  \varx(s)_2 )$}
    \limpl
    \mbox{$(\varx(u)_1 \lor  \varx(u)_2 )$}
    $.%
  }
\ifthenelse{\boolean{maybeSTOC}}{\STOCvfill}{}    
\item 
  \ifthenelse
  {\boolean{maybeSTOC}}
  {To get a contradiction, conclude with} 
  {To get a contradiction, conclude the formula with} 
  $
  \olnot{\varx(z)}_1 \land 
  \formuladots \land 
  \olnot{\varx(z)}_{\pebdeg}
  $ 
  where $z$ is the sink of the DAG.
\ifthenelse{\boolean{maybeSTOC}}{\STOCvfill}{}    
\ifthenelse{\boolean{maybeSTOC}}
{\end{compactitem}}
{\end{itemize}}
We will need the observation from~%
\cite{BIW00Near-optimalSeparation}
that a pebbling contradiction of degree~$\pebdeg$ over a graph with
$n$~vertices can be refuted by resolution
in length
$\Bigoh{\pebdeg^2 \cdot n}$
and width
$\bigoh{\pebdeg}$.

\ifthenelse
{\boolean{maybeSTOC}}
{}
{\begin{figure}[t] %
  \begin{align*}%
    & 
    {( \varx(r)_1 \lor \varx(r)_2 )} 
    &     {\land \ }
    &
    {( \olnot{\varx(u)}_1 \lor \olnot{\varx(v)}_1 \lor \varx(z)_1 \lor \varx(z)_2 ) }
 \\
    {\land \ }
    &
    {( \varx(s)_1 \lor \varx(s)_2 ) }  
    & {\land \ }
    &
    {( \olnot{\varx(u)}_1 \lor \olnot{\varx(v)}_2 \lor \varx(z)_1 \lor \varx(z)_2 ) }
    \\
    {\land \ }
    &
    {( \varx(t)_1 \lor \varx(t)_2 ) }  
    & {\land \ }
    &
    {( \olnot{\varx(u)}_2 \lor \olnot{\varx(v)}_1 \lor \varx(z)_1 \lor \varx(z)_2 ) }
    \\
    {\land \ }
    &
    {( \olnot{\varx(r)}_1 \lor \olnot{\varx(s)}_1 \lor \varx(u)_1 \lor \varx(u)_2 ) } 
    & {\land \ }
    &
    {( \olnot{\varx(u)}_2 \lor \olnot{\varx(v)}_2 \lor \varx(z)_1 \lor \varx(z)_2 ) }
\\
    {\land \ }
    &
    {( \olnot{\varx(r)}_1 \lor \olnot{\varx(s)}_2 \lor \varx(u)_1 \lor \varx(u)_2 ) }
    & {\land \ }
    &  
    {\olnot{\varx(z)}_1}
\\
    {\land \ }
    &
    {( \olnot{\varx(r)}_2 \lor \olnot{\varx(s)}_1 \lor \varx(u)_1 \lor \varx(u)_2 ) }
    & {\land \ }
    &  
    {\olnot{\varx(z)}_2 }
\\
    {\land \ }
    &
    {( \olnot{\varx(r)}_2 \lor \olnot{\varx(s)}_2 \lor \varx(u)_1 \lor \varx(u)_2 ) }
\\ 
    {\land \ }
    &
    {( \olnot{\varx(s)}_1 \lor \olnot{\varx(t)}_1 \lor \varx(v)_1 \lor \varx(v)_2 ) }
\\
    {\land \ }
    &
    {( \olnot{\varx(s)}_1 \lor \olnot{\varx(t)}_2 \lor \varx(v)_1 \lor \varx(v)_2 ) }
\\
    {\land \ }
    &
    {( \olnot{\varx(s)}_2 \lor \olnot{\varx(t)}_1 \lor \varx(v)_1 \lor \varx(v)_2 ) }   
\\
    {\land \ }
    &
    {( \olnot{\varx(s)}_2 \lor \olnot{\varx(t)}_2 \lor \varx(v)_1 \lor \varx(v)_2 ) }
  \end{align*}
  \vspace{-4.6cm}
  \begin{flushright}
    \includegraphics{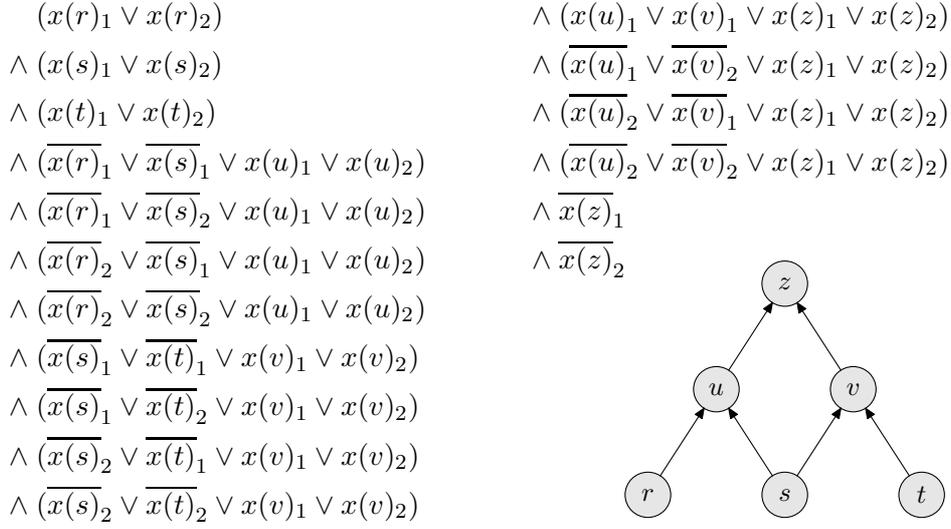} \hspace*{1.35cm}
  \end{flushright}
  \caption{The pebbling contradiction
    $\pebcontr[\Pi_2]{2}$
    for the pyramid graph $\Pi_2$ of height 2.}
  \label{fig:pebbling-contradiction-for-Pi-2}
\end{figure}

}

%
%
\newboolean{STOCshortenSecThreeOne}

\ifthenelse{\boolean{maybeSTOC}}    
{\setboolean{STOCshortenSecThreeOne}{true}}
{\setboolean{STOCshortenSecThreeOne}{false}}

\ifthenelse
{\boolean{maybeSTOC}}
{\newlength{\clauseheight}
\settoheight{\clauseheight}{$\olnot{\varx(s)}_2$}
\setlength{\clauseheight}{1.4\clauseheight}

\ifthenelse
{\boolean{maybeSTOC}}
{\begin{figure*}[t]} 
{\begin{figure}[t]} 
  \subfigure[Clauses on blackboard.]
  {
    \label{fig:intuition-pebbles-pyramid-height-2-a}
    \begin{minipage}[b]{.45\linewidth}
      \centering
      \begin{gather*}
        \left [
          \begin{array}{l}
            { \varx(u)_1 \lor \varx(u)_2 } 
            \rule{0pt}{\clauseheight}
            \\
            { \olnot{\varx(s)}_1 \lor \olnot{\varx(t)}_1 \lor \varx(v)_1 \lor \varx(v)_2 } 
            \rule{0pt}{\clauseheight}
            \\
            { \olnot{\varx(s)}_1 \lor \olnot{\varx(t)}_2 \lor \varx(v)_1 \lor \varx(v)_2 } 
            \rule{0pt}{\clauseheight}
            \\
            { \olnot{\varx(s)}_2 \lor \olnot{\varx(t)}_1 \lor \varx(v)_1 \lor \varx(v)_2 }
            \rule{0pt}{\clauseheight}
            \\
            { \olnot{\varx(s)}_2 \lor \olnot{\varx(t)}_2 \lor \varx(v)_1 \lor \varx(v)_2 }
            \rule{0pt}{\clauseheight}
          \end{array}
        \right ]
      \end{gather*}
    \end{minipage}%
  }
  \hfill
  \subfigure[Corresponding pebbles in the graph.]
  {
    \label{fig:intuition-pebbles-pyramid-height-2-b}
    \begin{minipage}[b]{.45\linewidth}
      \centering
      \includegraphics{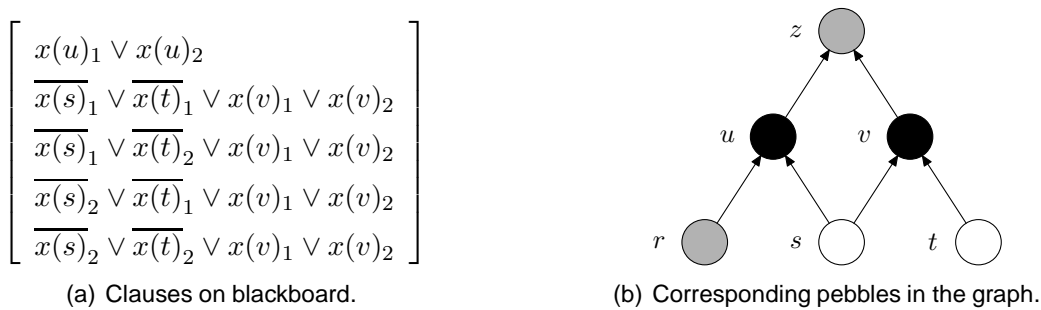}%
    \end{minipage}
  }
%
%
\vspace{-0.1cm}
%
%
  \caption{Example of
    intuitive correspondence between sets of clauses and pebbles.}
  \label{fig:intuition-pebbles-pyramid-height-2}
\ifthenelse
{\boolean{maybeSTOC}}
{\end{figure*}} 
{\end{figure}} 
}
{}

\ifthenelse
{\boolean{maybeSTOC}}
{
\section{Overview of Proofs}
\label{sec:proof-overview}
}
{
\subsection{Proof Idea for  Pebbling Contradictions Space Bound}
\label{sec:proof-overview-idea}
}

Pebble games have been used extensively as a tool to prove time and
space lower bounds and trade-offs for computation. Loosely put, a lower
bound for the pebbling price of a graph says that although the computation
that the graph describes can be performed quickly, it requires large
space.
Our hope is that when we encode pebble games in terms of \cnfform{}s,
these formulas inherit the same properties as the underlying graphs.
That is, if we pick a DAG~$G$ with high pebbling price, since the
corresponding pebbling contradiction encodes a calculation which
\ifthenelse{\boolean{maybeSTOC}}
{needs a lot of memory}
{requires large memory} 
we would like to try to argue that any
resolution refutation of this formula should require large space. Then
a separation result would follow since we already know from~%
\cite{BIW00Near-optimalSeparation}
that the formula can be refuted in short length.

\ifthenelse
{\boolean{maybeSTOC}}
{
\subsection{Proof Idea for  Space Bound}
\label{sec:proof-overview-idea}
}
{}

More specifically, what we would like to do is to establish a
connection between resolution refutations of pebbling contradictions
on the one hand, and the so-called 
\introduceterm{black-white pebble game}~\cite{CS76Storage} 
modelling the non-deterministic computations described by the
underlying graphs on the other. 
\ifthenelse{\boolean{maybeSTOC}}
{Our belief is that} 
{Our intuition is that} 
the resolution
proof system should have to conform to the combinatorics of the pebble
game in the sense that from any 
\ifthenelse{\boolean{maybeSTOC}}
{refutation} 
{resolution refutation} 
of a pebbling
contradiction $\pebcontr[G]{\pebdeg}$ we should be able to extract a
pebbling of the DAG~$G$.

Ideally, we would like to give a proof of a lower bound 
on    
the resolution refutation space of pebbling contradictions 
along the following lines:
\ifthenelse{\boolean{maybeSTOC}}
{\begin{compactenum}}
{\begin{enumerate}}
  
\item
  \label{item:tentative-proof-idea-part-one}
  First, 
  find a natural interpretation of sets of clauses   currently
  ``on the blackboard''   in a refutation of the formula 
  ${\pebcontr[G]{\pebdeg}}$
  in terms of black and white pebbles on the vertices of
  the DAG~$G$.
  
\item
  \label{item:tentative-proof-idea-part-two}
  \ifthenelse
  {\boolean{maybeSTOC}}
  {Then,
  prove that this interpretation
  captures the pebble game in the following sense:
  for any resolution refutation of
  ${\pebcontr[G]{\pebdeg}}$,
  looking at consecutive sets of clauses on the blackboard
  and considering the corresponding sets of pebbles
  we get a black-white pebbling of~$G$.}
  {Then,
  prove that this interpretation of clauses in terms of pebbles
  captures the pebble game in the following sense:
  for any resolution refutation of
  ${\pebcontr[G]{\pebdeg}}$,
  looking at consecutive sets of clauses on the blackboard
  and considering the corresponding sets of pebbles in the graph
  we get a black-white pebbling of~$G$ 
  in accordance with the rules of the pebble game.}

\item 
  \label{item:tentative-proof-idea-part-three}
  Finally, show that the interpretation captures 
  clause
  space in the sense
  that if the content of the blackboard induces $N$ pebbles on the
  graph, then there must be at least $N$ clauses on the blackboard.
%
  
\ifthenelse{\boolean{maybeSTOC}}
{\end{compactenum}}
{\end{enumerate}}
\ifthenelse
{\boolean{STOCshortenSecThreeOne}}
{%
  Combining the above with known lower bounds on the pebbling price
  of~$G$, this would imply a lower bound on the refutation space of
  pebbling contradictions.
  The separation from length and width would then follow from the fact
  that pebbling   contradictions are known to be refutable in linear
  length and   constant width if $\pebdeg$ is fixed.%
} 
{%
  Combining the above with known lower bounds on the pebbling price
  of~$G$, this would imply a lower bound on the refutation space of
  pebbling contradictions and a separation from length and width.  
  For clarity, let us spell out what the formal argument of this would
  look like.
  
  Consider an arbitrary 
  \ifthenelse{\boolean{maybeSTOC}}
  {refutation}
  {resolution refutation}
  of~${\pebcontr[G]{\pebdeg}}$.  From this refutation we extract a
  pebbling of $G$. 
  \ifthenelse{\boolean{maybeSTOC}}           
  {%
    At some time~$t$
    in the obtained pebbling,
    there must be a lot of pebbles in $G$ since this%
  } 
  {%
    At some point in time~$t$ in the obtained pebbling, there must be
    a lot of pebbles on the vertices of $G$ since this%
  } 
  graph was chosen with high pebbling price. But this means that at
  time~$t$, there are a lot of clauses on the blackboard. Since this
  holds for any
  \ifthenelse{\boolean{maybeSTOC}} 
  {refutation,} 
  {resolution refutation,} 
  the refutation space of~${\pebcontr[G]{\pebdeg}}$ must be large.
  The separation result now follows from the fact that pebbling
  contradictions 
  \ifthenelse{\boolean{maybeSTOC}}   
  {are refutable} 
  {are known to be refutable} 
  in linear length and constant width if $\pebdeg$ is fixed.%
}

Unfortunately, this idea does not quite work.  In the next subsection,
we describe the modifications that we are forced to make, and show how
we can make the bits and pieces of our construction fit together to
yield
\refth{th:main-theorem}
and
\refcor{cor:main-corollary}
for the special case of pyramid graphs.

%
%

\subsection{Detailed Overview of Formal Proof  of 
  Space Bound}
\label{sec:proof-overview-detailed}

The black-white pebble game played on a DAG $G$ can be viewed as a way
of proving the end result of the calculation described by~$G$.  Black
pebbles denote proven partial results of the computation.  White
pebbles denote assumptions about partial results which have been used
to derive other partial results (\ie black pebbles), but these
assumptions will have to be verified for the calculation to be
complete. The final goal is a black pebble on the sink~$z$ and no
other pebbles in the graph, corresponding to an unconditional proof of
the end result of the calculation with any assumptions made along the
way having been eliminated.

Translating this to pebbling contradictions, it turns out that a
fruitful way to think of a black pebble on~$v$ is that it should
correspond to truth of the disjunction $\sourceclausexvar[i]{v}$ of
all positive literals over~$v$, or to ``truth of~$v$''.  A white
pebble on a vertex~$w$ can be understood to mean that we need to
\emph{assume} the partial result on $w$ to derive the black pebbles
above $w$ in the graph. Needing to assume the truth of $w$ is the
opposite of knowing the truth of~$w$, so extending the reasoning above
we get that a white-pebbled vertex should correspond to ``falsity
of~$w$'', \ie to all negative literals $\olnot{\varx(w)}_i$, $i \in
\intnfirst{\pebdeg}$, over~$w$.
    
Using this intuitive correspondence, we can translate 
sets of clauses in a resolution refutation of
$\pebcontr[G]{\pebdeg}$
into black and white pebbles in~$G$
as in 
\reffig{fig:intuition-pebbles-pyramid-height-2}.
It is easy to see that 
if we assume
$\varx(s)_1 \lor \varx(s)_2$ 
and
$\varx(t)_1 \lor \varx(t)_2$,
this assumption together with the clauses on the blackboard 
in \reffig{fig:intuition-pebbles-pyramid-height-2-a}
imply 
$\varx(v)_1 \lor \varx(v)_2$,
so $v$ should be black-pebbled and $s$ and~$t$ white-pebbled in
\reffig{fig:intuition-pebbles-pyramid-height-2-b}.
The vertex $u$ is also black since
$\varx(u)_1 \lor \varx(u)_2$ 
certainly is implied by the blackboard.
This translation from clauses to pebbles is arguably quite
straightforward, and seems to yield well-behaved black-white pebblings
for all ``sensible'' resolution refutations of~%
$\pebcontr[G]{\pebdeg}$.

The problem is that we have no guarantee that the resolution
refutations will be ``sensible''.  Even though it might seem more or
less clear how an optimal refutation of a pebbling contradiction
should proceed, a particular refutation might contain unintuitive and
seemingly non-optimal derivation steps that do not make much sense
from a pebble game perspective.
In particular, a resolution derivation has no obvious reason always to
derive truth that is restricted to single vertices.  For instance, it
could add the axioms
$\olnot{\varx(u)}_i \lor \olnot{\varx(v)}_2 
\lor \varx(z)_1 \lor \varx(z)_2$,
$i = 1, 2$, 
to the blackboard in
\reffig{fig:intuition-pebbles-pyramid-height-2-a},
derive that the truth of
$s$ and~$t$ implies the truth of either $v$ or~$z$, \ie the clauses
$
\olnot{\varx(s)}_i \lor \olnot{\varx(t)}_j \lor 
\varx(v)_1 \lor \varx(z)_1 \lor \varx(z)_2
$
for $i,j = 1,2$,
and then erase
$\varx(u)_1 \lor \varx(u)_2$ 
from the blackboard.  Although it is hard to see from such a small
example, this turns out to be a serious problem in that there appears
to be no way that we can interpret such derivation steps in terms of
black and white pebbles without making some component in the proof
idea in \refsec{sec:proof-overview-idea} break down.

Instead, what we do is to invent a new pebble game, with white pebbles
just as before, but with black \introduceterm{\multipebble{}s} that
can cover multiple vertices instead of single-vertex black pebbles.  A
\multipebble on a vertex set $V$ can be thought of as truth of some
vertex $v \in V$.  The derivation sketched in the preceding paragraph,
resulting in the set of clauses in
\reffig{fig:intuition-blobs-pyramid-height-2-a}, will then be
translated into white pebbles on $s$ and~$t$ as before and a black
\multipebble covering both $v$ and~$z$ in
\reffig{fig:intuition-blobs-pyramid-height-2-b}.  We define rules in
this \introduceterm{\multipebblegame{}} corresponding roughly to black
and white pebble placement and removal in the usual black-white pebble
game, and add a special \introduceterm{inflation rule} allowing us to
inflate black \multipebble{}s to cover more vertices.

%
%
%

\ifthenelse
{\boolean{maybeSTOC}}
{\begin{figure*}[t]} 
{\begin{figure}[t]} 
  \subfigure[New set of clauses on blackboard.]
  {
    \label{fig:intuition-blobs-pyramid-height-2-a}
    \begin{minipage}[b]{.5\linewidth}
      \centering
      \begin{gather*}
        \left [
          \begin{array}{l}
            { \olnot{\varx(s)}_1 \lor \olnot{\varx(t)}_1 \lor 
              \varx(v)_1 \lor \varx(z)_1 \lor \varx(z)_2 } 
            \rule{0pt}{\clauseheight}
            \\
            { \olnot{\varx(s)}_1 \lor \olnot{\varx(t)}_2 \lor 
              \varx(v)_1 \lor \varx(z)_1 \lor \varx(z)_2 } 
            \rule{0pt}{\clauseheight}
            \\
            { \olnot{\varx(s)}_2 \lor \olnot{\varx(t)}_1 \lor 
              \varx(v)_1 \lor \varx(z)_1 \lor \varx(z)_2 }
            \rule{0pt}{\clauseheight}
            \\
            { \olnot{\varx(s)}_2 \lor \olnot{\varx(t)}_2 \lor 
              \varx(v)_1 \lor \varx(z)_1 \lor \varx(z)_2 }
            \rule{0pt}{\clauseheight}
          \end{array}
        \right ]
      \end{gather*}
    \end{minipage}%
  }
  \hfill
  \subfigure[Corresponding blobs and pebbles.]
  {
    \label{fig:intuition-blobs-pyramid-height-2-b}
    \begin{minipage}[b]{.4\linewidth}
      \centering
      \includegraphics{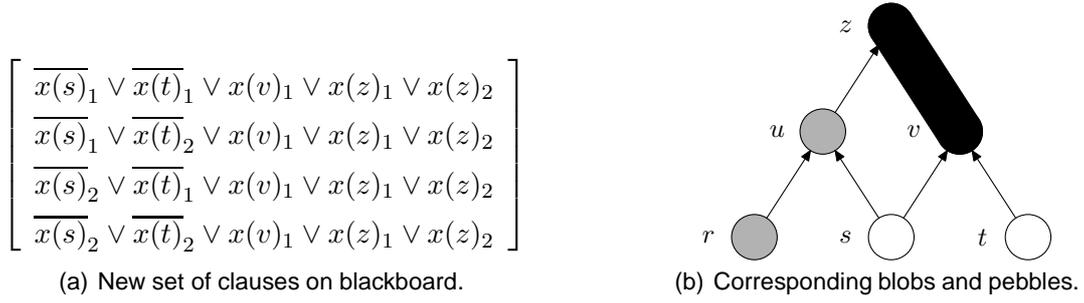}%
      \vspace{-0.1cm}
    \end{minipage}
  }
%
%
\vspace{-0.1cm}
%
%
  \caption{Intepreting sets of clauses as black blobs and white pebbles.}
  \label{fig:intuition-blobs-pyramid-height-2}
\ifthenelse
{\boolean{maybeSTOC}}
{\end{figure*}} 
{\end{figure}}

Once we have this \multipebblegame, we use it to construct a lower
bound proof as outlined in \refsec{sec:proof-overview-idea}.
First, we establish that for a fairly general class of graphs, any
resolution refutation of a pebbling contradiction can be interpreted
as a \multipebblingtext on the DAG in terms of which this pebbling
contradiction is defined.  Intuitively, the reason that this works is
that we can use the inflation rule to analyze apparently non-optimal
steps in the refutation.

\begin{theorem}
  \label{th:overview-pyramid-refutations-and-pebblings}
  Let 
  ${\pebcontr[G]{\pebdeg}}$
  denote the pebbling contradiction of  degree   
  $\pebdeg \geq 1$
  over a layered DAG~$G$.
  Then there is a translation 
  function 
  from
  sets of clauses derived 
  from~${\pebcontr[G]{\pebdeg}}$ 
  into    
  sets of black \multipebble{}s and white pebbles in~$G$ \st
  any resolution   refutation   
  $\proofstd$
  of
  ${\pebcontr[G]{\pebdeg}}$
  corresponds to a  \multipebblingtext~$\multipebbling_{\proofstd}$
  of~$G$
  under this translation.
\end{theorem}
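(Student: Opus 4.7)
I would define the translation $\mpconf$ semantically, directly from the clauses on the blackboard. Given a clause set $\clsc$, associate to each disjoint pair of vertex sets $(V, W)$ that is minimal subject to
\[
\clsc \cup \{\sourceclausexvar{w} : w \in W\} \impl \textstyle\bigvee_{v \in V} \sourceclausexvar{v}
\]
a black blob on $V$ supported by white pebbles on $W$; the collection of all these blobs and the union of their supporting white pebbles forms $\mpconf(\clsc)$. For the empty blackboard this is the empty configuration, and when the empty clause eventually appears the sink clause $\sourceclausexvar{z}$ is implied with $W = \emptyset$, so $z$ carries an unconditional black pebble, giving the terminal configuration of a completed pebbling.

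It then suffices to verify, for each single resolution step $\clsc_{t-1} \to \clsc_t$, that $\mpconf(\clsc_t)$ is reachable from $\mpconf(\clsc_{t-1})$ by legal moves in the blob-pebble game. I would treat the three cases separately. For \emph{inference}, adding a resolvent of clauses already on the board preserves the set of semantic consequences, so $\mpconf$ is unchanged. For \emph{axiom download} of a pebbling axiom $\olnot{\varx(p)}_i \lor \olnot{\varx(q)}_j \lor \sourceclausexvar{r}$ at a non-source $r$ with predecessors $p, q$, the induced change is precisely the introduction of a new black blob on $\{r\}$ supported by white pebbles on $\{p, q\}$, which matches a black-pebble placement in the pebble game; source axioms and the target unit clauses $\olnot{\varx(z)}_i$ are handled analogously, the latter being set up to force the final configuration to include a black pebble on $z$. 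For \emph{erasure}, $\mpconf$ can only shrink, which in the pebble game corresponds to a sequence of legitimate removals---\emph{provided} every dropped blob can in fact be dropped without destroying its raison d'\^{e}tre for the subsequent derivation.

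The hard part, and the reason the new pebble game with inflation is needed at all, is exactly this last provision. After a seemingly non-intuitive inference step a single clause on the blackboard may mix positive literals across several vertices---for instance $\olnot{\varx(s)}_i \lor \olnot{\varx(t)}_j \lor \varx(v)_1 \lor \varx(z)_1 \lor \varx(z)_2$, which supports a black blob over $\{v, z\}$ rather than a pebble on any single vertex. Erasing an earlier clause such as $\varx(u)_1 \lor \varx(u)_2$ can then destroy a blob which the subsequent derivation still needs, unless we first \emph{inflate} the threatened blob to a larger vertex set that remains semantically implied by the clauses that survive the erasure. The inflation rule is custom-built for this purpose, and each erasure therefore unfolds into a bounded sequence of inflation moves followed by a legitimate pebble removal. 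Stringing these case analyses together along the entire refutation yields the blob-pebbling $\multipebbling_{\proofstd}$, where the layered structure of $G$ is used only to keep the number of vertex sets one must track during inflation under control.
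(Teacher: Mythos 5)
The high-level plan is right, but several of the key claims are backwards or substantially incomplete, and at least one design decision makes the proof unrecoverable in the form you describe.

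First, your translation is purely \emph{semantic}: you look at minimal pairs $(V,W)$ with $\clsc \cup \blacktruth{W} \impl \somenodetrueclause{V}$. The paper's induced configuration in \refdef{def:induced-multi-pebble-subconfiguration} is \emph{witness-based}: it asks for a subset $\clsc_B \subseteq \clsc$ together with $S$ and $B$ that are jointly minimal for the implication (``precise implication''~$\sharpimpl$), with $W = S \cap \lpp{B}$ and $B$ a chain. This distinction is not cosmetic. With your semantic translation, erasure does \emph{not} monotonically shrink the configuration: if $\clsc_{t-1} = \{\sourceclausexvar{v},\ \sourceclausexvar{v}\lor\sourceclausexvar{w}\}$ and you erase $\sourceclausexvar{v}$, the pair $(\{v,w\},\emptyset)$ becomes minimal where it was not before, so a new blob appears. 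Your claim that ``$\mpconf$ can only shrink'' is simply false for your own definition. The paper's witness-based definition is what makes erasure trivially monotone (erase the witnessing subset and the subconfiguration vanishes), and erasure is in fact the \emph{easy} case, precisely because erasure of subconfigurations is always a legal pebbling move. You have the difficulty located in the wrong place: inflation is used to absorb new subconfigurations that appear at \emph{inference} steps (\reflem{lem:C-implication-implies-derivability-by-inflation}) and, together with introduction and merger, at \emph{axiom download} steps; it is never needed at erasure time.

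Second, the axiom download case is vastly undersold. You state that downloading $\olnot{\varx(p)}_i\lor\olnot{\varx(q)}_j\lor\sourceclausexvar{r}$ induces ``precisely the introduction of a new black blob on $\{r\}$ supported by white pebbles on $\{p,q\}$.'' In reality the new axiom can combine with clauses already present to create entirely different subconfigurations whose black blobs have nothing to do with $\{r\}$ and whose white supports are far from $\{p,q\}$. The paper's \refsec{sec:derivations-induce-case-axiom-download} spends a five-way case split on the position of $r$ relative to the new blob's vertex set $B$ (outside, on a path through $B$, inside $B$, equal to $\bottomvertex{B}$, below $\bottomvertex{B}$), each requiring a carefully orchestrated sequence of introductions, mergers and inflations, and it invokes the \siblingnonreachabiblityproperty of layered DAGs in a load-bearing way (to conclude that a sibling $q$ is not a legal pebble position for $\{v,r\}$). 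Your remark that the layered structure is used ``only to keep the number of vertex sets one must track during inflation under control'' misses this entirely. Finally, you do not constrain the black sets to be chains nor the white pebbles to lie in $\lpp{B}$, both of which are required for the configuration to be a legal one in the blob-pebble game, and you do not remove the target axioms first (\reflem{lem:clause-space-the-same-without-target-axioms}); once the blackboard becomes contradictory, a semantic translation produces every conceivable blob, so the translation has to be set up to derive $\sourceclausexvar{z}$ from $\pebcontrNT{\pebdeg}$ rather than to refute $\pebcontr{\pebdeg}$ outright.
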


In fact, the only property that we need from the layered graphs in
\refth{th:overview-pyramid-refutations-and-pebblings} is that if $w$
is a vertex with predecessors $u$ and~$v$, then there is no path
between the siblings $u$ and~$v$.
The theorem   
holds for any DAG satisfying this condition.

Next, we carefully design a cost function for black \multipebble{}s
and white pebbles so that the cost of the \multipebblingtext 
$\multipebbling_{\proofstd}$
in 
\refth{th:overview-pyramid-refutations-and-pebblings}
is related to the space of the resolution refutation~$\proofstd$.

\begin{theorem}
  \label{th:overview-pyramid-pebbling-cost-bounded-by-space}
  If 
  $\proofstd$
  is a 
  refutation of a pebbling contradiction
  ${\pebcontr[G]{\pebdeg}}$
  of  degree   
  $\pebdeg > 1$, 
  then the cost of the associated \multipebblingtext
  $\multipebbling_{\proofstd}$
  is 
  bounded by
  the space of $\proofstd$
  by
  $
  \mpcost{\multipebbling_{\proofstd}}
  \leq
  \clspaceof{\proofstd} + \bigoh{1}
  $.
\end{theorem}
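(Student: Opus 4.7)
The plan is to bound $\mpcost{\multipebbling_{\proofstd}}$ time step by time step: for each blackboard configuration $\clsc_t$ appearing during $\proofstd$, I will show that the induced blob/pebble configuration $\mpconf_t$ has cost at most $\setsize{\clsc_t} + \bigoh{1}$, so that taking the maximum over $t$ yields the theorem, since $\clspaceof{\proofstd} = \max_t \setsize{\clsc_t}$.

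First I would recall the translation underlying \refth{th:overview-pyramid-refutations-and-pebblings}: the configuration $\mpconf_t$ is read off from $\clsc_t$ by placing a black \multipebble on a set $V$ whenever $\clsc_t$ contains a clause whose positive literals range exactly over $\Union_{v \in V} \setsmall{\varx(v)_1, \ldots, \varx(v)_{\pebdeg}}$ (together with possibly some negative literals), and by placing a white pebble on $w$ whenever negative literals $\olnot{\varx(w)}_i$ appear in such witness clauses. The cost function counts each black \multipebble and each white pebble with weight one.

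Next I would construct an explicit injective charging scheme from the pebbles in $\mpconf_t$ to the clauses in $\clsc_t$. Each black \multipebble on $V$ is charged to its canonical witness clause; each white pebble on $w$ is charged to a clause in $\clsc_t$ containing some negative literal $\olnot{\varx(w)}_i$. Two structural observations drive injectivity. First, a single clause can support at most one black \multipebble, because the set $V$ induced by its positive literals is determined uniquely. Second, since $G$ is layered and the hypothesis $\pebdeg > 1$ guarantees at least two variables per vertex, the clauses witnessing a given white pebble carry enough redundancy to be distributed disjointly among the remaining clauses of $\clsc_t$. Formalising the distribution as a Hall-type matching between pebbles and clauses gives the bound.

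The main obstacle, as I see it, is precisely this last step: ensuring that charges for white pebbles do not collide with charges for black \multipebble{}s or with one another. This is where $\pebdeg > 1$ is essential; with only one variable per vertex the witness clauses for a white pebble would be too rigid for such a matching. The additive $\bigoh{1}$ slack then absorbs boundary effects, most notably the sink axiom clauses $\olnot{\varx(z)}_1, \ldots, \olnot{\varx(z)}_{\pebdeg}$ and any degenerate configurations at the first and last steps of $\proofstd$, which are not naturally paired with pebbles under the translation.
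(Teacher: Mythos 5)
Your high-level plan — reduce to a per-configuration bound $\mpcost{\mpinducedconf{\clsc_t}} \leq \setsize{\clsc_t}$ and then take the max over $t$, with the $\Ordosmall{1}$ slack absorbing the intermediate pebbling moves introduced by the translation — matches the paper's structure (\refth{th:translation-of-resolution-to-multi-pebbling} handles the $\Ordosmall{1}$, and \refth{th:linear-cost-pebbles} gives the per-configuration inequality). However, the core of your argument rests on two misreadings that make the charging scheme unworkable.

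First, you state that ``the cost function counts each black \multipebble{} and each white pebble with weight one.'' That is not the cost function the paper uses, and the paper chose otherwise for exactly the reason that your plan would fail: with that charging, a polynomial-size clause set can induce \emph{exponentially many} distinct black \multipebble{}s, so no injective map from blobs to clauses exists. \Refdef{def:multi-pebbling-price} instead charges $1$ per distinct \emph{bottom vertex} $\bottomvertex{B}$ among the black \multipebble{}s (so arbitrarily many blobs sharing a bottom vertex cost $1$ in total), and charges only for white pebbles lying \emph{below} the bottom vertex of the blob they support. These restrictions are not a convenience; they are what makes the inequality true at all, and the paper's proof (the induction in \refth{th:linear-cost-pebbles}, which picks a \emph{topmost} representative vertex $r$ and exploits that $\reprset \intersectionSP \abovevertices[G]{\bottomvertex{B_i}} \subseteq \setsmall{r}$) is built around them.

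Second, the translation from clauses to pebbles is semantic, not syntactic. A blob $\mpscnotstd$ is induced (\refdef{def:induced-multi-pebble-subconfiguration}) when a subset $\clsc_B \subseteq \clsc$ together with hypothetical truth of a set $S$ of vertices \emph{precisely implies} the disjunction $\somenodetrueclause{B}$, with minimality and maximality in all three parameters $\clsc_B$, $S$, $B$. There is no ``canonical witness clause whose positive literals range exactly over'' a blob; a single blob can be witnessed only by a whole set of clauses, and one clause can participate in the witnesses of many blobs and white pebbles. Consequently ``charge each blob to its canonical witness clause'' and the Hall-type matching from pebbles to clauses that you sketch do not get off the ground. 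The paper's actual use of Hall's theorem (inside \refth{th:N-clauses-mentioning-S-gt-size-of-S}) is between \emph{variables} of a minimally implying CNF and its clauses, and the $\pebdeg \geq 2$ hypothesis enters at the very last step (\refeq{eq:pebbling-degree-geq-two-used-here}) via the count $(\pebdeg-1)\Setsize{\reprset_1} + \Setsize{\reprset_2} + 1 \geq \Setsize{\reprsetith}$, where the $\pebdeg$ negated occurrences of each white vertex's variables are offset against the $\setsize{W}$ hypothetical clauses $\blacktruth{S}$ that were added to the left-hand side. Your intuition that $\pebdeg > 1$ provides ``redundancy'' for white pebbles is directionally right, but without the correct cost function and the correct notion of induced configuration the proposed matching argument cannot be made to work.
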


Without going into too much detail, in order to make the proof of
\refth{th:overview-pyramid-pebbling-cost-bounded-by-space} work we can
only charge for black \multipebble{}s having distinct lowest vertices
(measured in topological order), so additional \multipebble{}s
with the same bottom vertices are free. Also, we can only charge
for white pebbles below these bottom vertices. 

Finally, we need lower bounds on \multipebblingtext price.
Because of the inflation rule in combination with the peculiar cost
function, the \multipebblegame seems to behave rather differently from
the standard black-white pebble game, and therefore we cannot appeal
directly to known lower bounds on black-white pebbling price.
However, for a more restricted class of graphs than in
\refth{th:overview-pyramid-refutations-and-pebblings}, but still
including binary trees and pyramids, we manage to prove tight bounds
on the \multipebblingtext price by generalizing the lower bound
construction for black-white pebbling in~%
\cite{K80TightBoundPebblesPyramid}.

\begin{theorem}
  \label{th:overview-pyramid-bound-labelled-pebbling-cost}
  Any so-called 
  {layered spreading graph}~$G_h$
  of height~$h$
  has  \multipebblingtext price~$\bigtheta{h}$.
  In particular, this holds for pyramid graphs~%
  $\pyramidgraph[h]$.
\end{theorem}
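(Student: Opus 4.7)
\emph{Proof plan.} The upper bound is essentially free: the standard black pebbling strategy for a pyramid $\pyramidgraphh$ uses $h+O(1)$ pebbles, and an ordinary black pebbling is a special case of \amultipebblingtext (each blob sits on a single vertex and there are no white pebbles), so $\multipebblingprice{\pyramidgraphh} \leq h + \bigoh{1}$. The substantive content is therefore the matching $\bigomega{h}$ lower bound, which I would obtain by generalizing Klawe's hiding-set argument from~\cite{K80TightBoundPebblesPyramid} for black-white pebbling to our \blobpebblegame.

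The central tool would be a potential function $\vpotential{\mpscnotstd}$ defined on \mpsctext{}s (individual black blobs together with the white pebbles below them), extended additively to configurations after restricting to ``chargeable'' bottom vertices the way the cost function does. Mimicking Klawe, I would define $\vpotential{\mpscnotstd}$ in terms of a maximum hiding set for the sub-DAG lying above the bottom vertex of the blob, where the white pebbles contribute by blocking certain up-going paths. The key design principle is that a black blob on a vertex set $V$ gets to use only its \emph{bottom} vertex $\bottomvertex{V}$ as a ``source'' in the hiding-set computation; this matches the cost accounting, which only charges for distinct bottom vertices, and is what lets us tame the inflation rule.

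The proof would then proceed in three steps. First, verify that the initial empty configuration has potential $0$ and that the terminal configuration (a black blob on the sink and nothing else) has potential $\bigomega{h}$; here the layered spreading hypothesis is used to guarantee that the pyramid-like structure forces large hiding sets above each vertex at level $\bigtheta{h}$ below the sink. Second, show that each legal move of the \blobpebblegame changes the potential by at most $1$: placement/removal of a black blob changes the bottom-vertex set by at most one element, a white pebble placement/removal obviously changes the potential by at most $1$, and the delicate case is the inflation rule, where adding vertices above $\bottomvertex{V}$ to the blob does not change its bottom vertex and, by the spreading property, cannot create new hiding-set witnesses that were not already accounted for. Third, combine these two facts: some intermediate configuration must have potential $\bigomega{h}$, and the potential is designed so that $\vpotential{\cdot} \leq \mpcost{\cdot}$, giving $\mpcost{\multipebbling} = \bigomega{h}$ for any \multipebblingtext of~$\pyramidgraphh$.

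The hard part will be step two, specifically handling the inflation rule and the merger of two blobs whose bottom vertices coincide. The inflation rule looks dangerous because it lets a blob grow to cover much of the graph without paying any additional cost, which superficially seems to allow arbitrarily large hiding-set gains for free. The resolution has to be that in a layered spreading graph the relevant hiding structure above $\bottomvertex{V}$ is already determined by $\bottomvertex{V}$ itself, so inflating $V$ upward does not enlarge the hiding set that the potential measures. Making this formal will require the precise ``spreading'' condition on the graph to be chosen so as to rule out the pathological case where an inflated blob could simultaneously threaten two otherwise independent regions of the pyramid; I expect the right condition is essentially that siblings in the graph are incomparable and have disjoint sets of ancestors up to the level of the merger vertex, which holds for pyramids and binary trees.
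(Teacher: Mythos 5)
The upper bound is fine and matches the paper's reasoning exactly (black pebbling of a layered DAG of height $h$ uses $h+\Ordosmall{1}$ pebbles, and a black pebbling is a special case of \amultipebblingtext).

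For the lower bound your high-level instinct --- adapt Klawe's potential method to the \blobpebblegame --- is the paper's approach, but the argument structure you sketch has a logical inconsistency and an important missing ingredient. You propose to show that $\vpotential{\cdot} \leq \mpcost{\cdot}$ holds \emph{pointwise}, and separately that the terminal configuration $\setsmall{\unconditionalblackmpscnot{z}}$ has potential $\bigomega{h}$. These two claims contradict each other: the terminal configuration has cost~$1$ (one chargeable bottom vertex, no chargeable whites). The correct invariant, as in Klawe's original argument and in the paper, is an amortized statement $\vpotential{\mpconf_t} \leq C \cdot \maxofexpr[s\leq t]{\mpcost{\mpconf_s}}$. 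This is established by induction over the pebbling: the potential never increases except at introduction of a black pebble on a source, and when it does increase, the increase is controlled because the \minmeasure blocking set can be chosen with size $\leq \gkpconstant \cdot \mpcost{\mpconf_t}$. Proving that such a small \minmeasure blocking set always exists --- the \genklawepropacronym --- is where the genuine difficulty lies, and it is entirely absent from your proposal. It requires a substantial generalization of Klawe's limited-hiding-cardinality argument, including a careful decomposition into ``\hiddenklawe'' versus ``just blocked'' \mpsctext{}s and a matching argument to repair the damage done to the just-blocked part when you locally shrink the hiding set; this is the bulk of the paper's \refsec{sec:tight-lower-bound-for-multi-pebbling-pyramid}.

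There is a second, subtler problem: your choice to measure the potential via a hiding set for the \emph{bottom vertex} of each blob is not the right notion, and it breaks precisely in the inflation case you flagged. An inflation move $\mpscnotprime \to \mpscnotstd$ can discard supporting white pebbles, namely those $w \in W'$ not in $\lppstd$. If the potential hides $\bottomvertex{B}$ then such a $w$ may well have been helping to hide $\bottomvertex{B}$ (it can lie below $\bottomvertex{B}$ on a path that does not agree with the enlarged $B$), so the hiding requirement strictly hardens and the potential jumps, even though the move is free. The paper instead uses a weaker \emph{blocking} relation (Definitions~\ref{def:blocking-set}--\ref{def:unblocked-paths}): block all \sourcepath{}s agreeing with $B$, rather than all \sourcepath{}s to $\bottomvertex{B}$. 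Under that definition, a white pebble that vanishes during inflation was by construction never on any path via $B$, hence was never contributing to the blocking, and the old blocking set still works --- that is exactly the content of the inflation case in the paper's induction (\refth{th:lower-bound-mpebbling-assuming-GKP}). Finally, note that the potential must be defined globally over the whole \mpctext (minimum measure of a single set blocking all \mpsctext{}s), not additively subconfiguration by subconfiguration, since blocking vertices are shared across blobs.
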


Putting all of this together, we can prove our main theorem.

\theoremstyle{plain}
\newtheorem*{MainTheorem}
{\Refth{th:main-theorem} (restated)}
\newtheorem*{MainCorollary}
{\Refcor{cor:main-corollary} (restated)}

\begin{MainTheorem}
  Let
  $\pebpyramidcontr[h]{\pebdeg}$
  denote   the pebbling contradiction of  degree   
  $\pebdeg > 1$
  defined over the pyramid graph of height~$h$.
  Then the clause space of refuting  
  $\pebpyramidcontr[h]{\pebdeg}$
  by resolution is 
  $
  \clspaceref{\pebpyramidcontr[h]{\pebdeg}}
  =
  \bigtheta{h}
  $.
\end{MainTheorem}

\begin{proof}
  The upper bound
  $
  \clspaceref{\pebpyramidcontr[h]{\pebdeg}}
  =
  \bigoh{h}
  $
  is easy. A pyramid of height $h$ can be pebbled with
  $h + \bigoh{1}$
  black pebbles,
  and a resolution refutation can mimic such a pebbling in constant
  extra clause space (independent of~$\pebdeg$) to refute
  the corresponding pebbling contradiction.
    
  The interesting part is the lower bound.  Let
  $\proofstd$
  be any resolution refutation of
  ${\pebpyramidcontr[h]{\pebdeg}}$.
  Consider the associated \multipebblingtext 
  $\multipebbling_{\proofstd}$
  provided by  
  \refth{th:overview-pyramid-refutations-and-pebblings}.
  On the one hand, we know that
  \mbox{$
  \mpcost{\multipebbling_{\proofstd}}
  =
  \bigoh{\clspaceof{\proofstd}}
  $}
  by
  \refth{th:overview-pyramid-pebbling-cost-bounded-by-space},
  provided that
  $\pebdeg > 1$.
  On the other hand,
  \refth{th:overview-pyramid-bound-labelled-pebbling-cost}
  tells us that the cost of any \multipebblingtext  of
  $\pyramidgraph[h]$ is $\bigomega{h}$,
  so in particular we must have
  $\mpcost{\multipebbling_{\proofstd}} = \bigomega{h}$.
  Combining these two bounds on
  $\mpcost{\multipebbling_{\proofstd}} $,
  we see that
  $\clspaceof{\proofstd} = \bigomega{h}$.
\end{proof}

The pebbling contradiction
$\pebcontr[G]{\pebdeg}$
is a
\xcnfform{\text{(2+\pebdeg)}}{} 
and for constant $\pebdeg$ the size of the formula
is linear in the number of vertices of~$G$
(compare \reffig{fig:pebbling-contradiction-for-Pi-2}).
Thus, for pyramid graphs $\pyramidgraphh$
the corresponding pebbling contradictions
$\pebpyramidcontr[h]{\pebdeg}$
have size quadratic in the height~$h$.
Also,  when $\pebdeg$ is fixed the upper bounds mentioned at the end of
\refsec{sec:sketch-of-preliminaries}
become
$\lengthref{\pebcontr[G]{\pebdeg}} = \bigoh{n}$
and
$\widthref{\pebcontr[G]{\pebdeg}} = \bigoh{1}$.
\Refcor{cor:main-corollary}
now follows if we set
$\fstd_n = \pebpyramidcontr[h]{\pebdeg}$
for 
$\pebdeg = \clwidth - 2$
and
$h = \floor{\sqrt{n}}$
and use 
\refth{th:main-theorem}.

\begin{MainCorollary}
  For all $\clwidth \geq 4$,
  there is a family
  of \kcnfform{}s 
  $\set{F_n}_{n=1}^{\infty}$
  of size $\bigoh{n}$
  \st
  $\lengthref{F_n} = \bigoh{n}$
  and
  $\widthref{F_n} = \bigoh{1}$
  but
  $\clspaceref{F_n} = \bigtheta{\sqrt{n}}$.
\end{MainCorollary}

\subsection{Overview of Trade-off Results}

Let us also quickly sketch the ideas (or tricks, really) used to prove
our trade-off theorems for resolution.

We show the following version of the 
length-variable space 
trade-off theorem of Hertel and
Pitassi~\cite{HP07ExponentialTimeSpaceSpeedupFOCS}, 
with somewhat improved parameters and a very much simpler proof.

\begin{theorem}
  \label{th:Hertel-Pitassi}
  There is a family of \cnfform{}s $\set{\formf_n}_{n=1}^{\infty}$
  of size $\bigtheta{n}$ \st:
  \begin{compactitem}
  \item 
    The minimal variable space of refuting $\formf_n$
    in resolution
    is
    $\varspacerefsmall{\formf_n} = \bigtheta{n}$.

  \item 
    Any 
    resolution
    refutation $\refof{\proofstd}{\formf_n}$
    in minimal variable space has length
    $\exp (\Lowerboundsmall{\sqrt{n}})$.

  \item 
    \ifthenelse{\boolean{maybeSTOC}}
    {%
      Adding at most $2$ extra units of storage, 
      one can obtain  a refutation in space
      $\mbox{$\varspacerefsmall{\formf_n} + 3$} = \bigtheta{n}$
      and length 
      $\bigoh{n}$,
      \ie linear in the formula size.
    }
    {%
      Adding at most $2$ extra units of storage, it is possible to obtain
      a resolution refutation $\proofstd'$ in variable space
      $ \varspaceofsmall{\proofstd'} =
      \mbox{$\varspacerefsmall{\formf_n} + 3$} = \bigtheta{n}$
      and length 
      $\lengthofsmall{\proofstd'} = \bigoh{n}$,
      \ie linear in the formula size.
    }
  \end{compactitem}
\end{theorem}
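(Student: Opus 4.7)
The plan is to exhibit an explicit family of \cnfform{}s $\setsmall{\formf_n}$ whose minimum-variable-space refutations are tightly ``locked in'' to a single rigid search pattern that unavoidably enumerates exponentially many configurations, while allowing just $2$ or $3$ additional variable slots on the blackboard unlocks a short, telescoping derivation. Concretely, I would construct $\formf_n$ as the conjunction of a ``bottleneck'' subformula $\formulaformat{B}_n$ together with a small ``release gadget'' subformula $\formulaformat{R}_n$. The bottleneck would contain one (or a small number of) wide axiom clause(s) on $\Bigomega{n}$ distinct variables, pinning the minimum variable space at $\bigtheta{n}$; the gadget $\formulaformat{R}_n$ would contribute only $\Bigoh{1}$ extra variable occurrences, but would be usable only by a refutation that temporarily holds $2$--$3$ units of storage beyond the minimum.

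First, I would verify the two easy bullet points. The upper bound \mbox{$\varspacerefsmall{\formf_n} = \Bigoh{n}$} follows directly from $\varspaceref{\formf} \leq \sum_{C \in \formf} \setsize{C}$, and the matching lower bound $\varspacerefsmall{\formf_n} = \Bigomega{n}$ follows from the standard fact that any resolution refutation must at some step either download the distinguished wide axiom of $\formulaformat{B}_n$ or derive a clause whose variable support still covers $\Bigomega{n}$ variables, and either event already forces $\Bigomega{n}$ variable occurrences on the blackboard. For the linear-length, slightly-larger-space refutation, I would describe an explicit strategy that places the gadget $\formulaformat{R}_n$ on the blackboard (costing $\bigoh{1}$ extra variable slots), uses it to chain resolutions through the bottleneck in $\bigoh{n}$ steps, and discards intermediate clauses as soon as they are consumed, matching the claimed bound $\varspaceofsmall{\proofstd'} = \varspacerefsmall{\formf_n} + 3$ and $\lengthofsmall{\proofstd'} = \bigoh{n}$.

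The hard part, and the step I expect to be the main obstacle, is the exponential lower bound on length in \emph{exactly} minimum variable space. The plan here is a structural trapping argument: with the variable-space budget maxed out by the wide axiom of $\formulaformat{B}_n$, there is no room to bring in any clause of $\formulaformat{R}_n$, so the refutation is restricted to a sub-system whose only legal moves are manipulations of clauses supported entirely on the bottleneck variables. Within this restricted regime, I would prove a combinatorial invariant --- modelled on (but much simpler than) the configuration-counting argument of~\cite{HP07ExponentialTimeSpaceSpeedupFOCS} --- showing that the only way to make progress towards the empty clause is through a specific sequence of ``narrow'' derivations, each of which requires re-deriving information that was previously discarded to stay within the variable-space budget. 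A pigeonhole / adversary argument then yields $\exp(\Lowerboundsmall{\sqrt{n}})$ many distinct clauses along any such refutation, since the refutation must visit $\exp(\Lowerboundsmall{\sqrt{n}})$ essentially different bottleneck configurations.

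The gain in simplicity over~\cite{HP07ExponentialTimeSpaceSpeedupFOCS} should come from the clean separation between the bottleneck $\formulaformat{B}_n$ and the gadget $\formulaformat{R}_n$: the rigid combinatorics lives entirely inside $\formulaformat{B}_n$, the short refutation lives entirely inside the use of $\formulaformat{R}_n$, and the two interact only through the variable-space accounting. The improved parameters should fall out of choosing $\formulaformat{B}_n$ to have an intrinsic ``width'' parameter tuned to $\sqrt{n}$, so that the exponent in the length lower bound becomes $\Bigomega{\sqrt{n}}$ rather than the weaker bound in the original argument.
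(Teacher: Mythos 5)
Your plan takes a fundamentally different (and much harder) route than the paper, and leaves exactly the step you yourself flag as the hard core unaddressed. The paper's key idea---which your sketch misses---is to make the ``bottleneck'' and the ``release gadget'' share \emph{no variables at all}. Concretely, it sets $\formf_n = \formg_n \land \formh_m$ where $\formg_n$ is a \emph{pre-existing} hard formula family (from \refth{th:clause-space-approx-n-clauses}) with $\lengthrefsmall{\formg_n} = \exp(\Lowerboundsmall{n})$ and $\varspacerefsmall{\formg_n} = g(n) = \bigtheta{n}$, and where $\formh_m = y_1 \land \formuladots \land y_m \land (\olnot{y}_1 \lor \formuladots \lor \olnot{y}_m)$ is over \emph{fresh} variables, with $m$ chosen so that $\varspacerefsmall{\formh_m} = 2m$ is just $2$ more than $g(n)$. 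Because the two subformulas are variable-disjoint, \refobs{obs:disjoint-variable-sets} guarantees that any refutation of $\formf_n$ contains a complete refutation of one subformula alone---there is no interaction to analyze, no ``restricted sub-system'' whose legal moves need a combinatorial invariant, and no length lower bound to prove from scratch. In minimum variable space a refutation is forced onto $\formg_n$ and simply \emph{inherits} the known exponential length bound; with $2$ extra units one runs the trivial $\bigoh{m}$-length refutation of $\formh_m$ instead.

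Two concrete problems remain in your version. First, the step you call ``the main obstacle''---proving that minimum-space refutations must pass through $\exp(\Lowerboundsmall{\sqrt{n}})$ distinct configurations---is precisely the work the disjoint-variables trick is engineered to \emph{avoid}, and your ``pigeonhole / adversary argument'' and ``combinatorial invariant'' are not supplied. Worse, you explicitly want $\formulaformat{R}_n$ to ``chain resolutions through the bottleneck,'' meaning $\formulaformat{B}_n$ and $\formulaformat{R}_n$ share variables; but then clauses derived from $\formulaformat{R}_n$ can intermix with bottleneck clauses and in principle reduce variable occurrences in intermediate configurations, so even the $\Lowerboundsmall{n}$ variable-space lower bound on $\formf_n$ would require a genuine argument, let alone the length lower bound in exactly minimum space. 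Second, the ``gadget'' in the paper is \emph{not} $\bigoh{1}$-sized: $\formh_m$ has $\bigtheta{n}$ variables and $\bigtheta{n}$ total size; it only costs $\bigoh{1}$ \emph{extra variable-space units} beyond $\varspacerefsmall{\formg_n}$. Relatedly, the $\sqrt{n}$ in the exponent does not come from ``tuning an intrinsic width parameter of $\formulaformat{B}_n$'': the raw construction yields formula size anywhere between $\Lowerboundsmall{n}$ and $\Ordocompact{n^2}$ (since $g(n)$ can be as large as $n^2$), with a length lower bound of $\exp(\Lowerboundsmall{n})$, and the $\sqrt{n}$ falls out purely from re-indexing in formula size at the very end.
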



The idea behind our proof is as follows. Take formulas 
$\formg_n$
that are really hard for resolution and formulas 
$\formh_m$
which have short refutations but require linear variable space, 
and set 
$
\formf_n =
\formg_n
\land
\formh_m
$
for $m$ chosen so that
$\varspacerefcompact{\formh_{m}}$
is only just larger
\mbox{than $\varspacerefcompact{\formg_{n}}$}.
Then refutations in minimal variable space will have to take care of 
$\formg_n$,
which requires exponential length, but adding one or two literals to the
memory we can attack 
$\formh_{m}$
instead in linear length.

The trade-off result in \refth{th:easy-length-clause-space-trade-off}
for length versus clause space and its twin theorem for length versus
width are shown using similar ideas.
\ifthenelse{\boolean{maybeSTOC}}
{Again, the details can be found in~%
  \cite{NH08TowardsOptimalSeparationECCC}.}
{}

%
%

\subsection{Paper Organization}
    
\Refsec{sec:formal-preliminaries}
provides  
formal definitions of the concepts introduced in
\reftwosecs
{sec:introduction}
{sec:proof-overview-and-paper-organization},
and
\refsec{sec:review-of-some-known-results}
gives precise statements of the results mentioned there, as well as
some other result relevant to this paper. 
The easy proofs of our trade-off theorems
are then immediately presented in
\refsec{sec:simplified-way-of-proving-trade-offs}.

The bulk of the paper is spent proving our main result in
\refth{th:main-theorem}.
In
\refsec{sec:definition-multi-pebble-game},
we define our modified pebble game, the ``\multipebblegame'',
that we will use to analyze resolution refutations of
pebbling contradictions.   
In
\refsec{sec:derivations-induce-chain-multi-pebblings}
we prove that resolution refutations can be translated into
pebblings in this game, which is
\refth{th:overview-pyramid-refutations-and-pebblings}
in
\refsec{sec:proof-overview-detailed}.
In
\refsec{sec:multi-pebble-configuration-cost},
we prove 
\refth{th:overview-pyramid-pebbling-cost-bounded-by-space}
saying that the \multipebblingtext price accurately measures the
clause space of the corresponding resolution refutation.
Finally, 
after giving a detailed description of the
lower bound on black-white pebbling of~%
\cite{K80TightBoundPebblesPyramid}
in
\refsec{sec:pebble-games-pyramids}
(with a somewhat simplified proof that might be of independent interest),
in
\refsec{sec:tight-lower-bound-for-multi-pebbling-pyramid}
we generalize this result in a nontrivial way 
to our \multipebblegame.
This gives us
\refth{th:overview-pyramid-bound-labelled-pebbling-cost}.
Now
\refth{th:main-theorem}
and
\refcor{cor:main-corollary}
follow as in the proofs given at the end of
\refsec{sec:proof-overview-detailed}.

We conclude in
\refsec{sec:open-questions}
by giving suggestions for further research.

%
%

\section{Formal Preliminaries}
\label{sec:formal-preliminaries}

In this \sectionappendixtext, we define resolution, pebble games and
pebbling contradictions. 
%

\subsection{The Resolution Proof System}
\label{sec:resolution-proof-system}

A \introduceterm{literal} is either a
\proplog variable or its negation, denoted
$\varx$ and $\olnot{\varx}$,  respectively.
%
We define $\olnot{\olnot{\varx}} = \varx$.
Two literals $\lita$ and $\litb$
are \introduceterm{strictly distinct}
if $\lita \neq\litb$ and $\lita \neq \olnot{\litb}$,
\ie if they refer to distinct variables.

A
\introduceterm{clause}
$\clc = \lita_1 \lor \formuladots \lor \lita_{\clwidth}$
is a set of literals. 
Throughout this paper, all clauses $\clc$ are assumed to be nontrivial
in the sense that all literals in $\clc$ are pairwise strictly
distinct (otherwise $\clc$ is trivially true).
We say that
$\clc$
is a \introduceterm{subclause} of
$\cld$
if
$\clc \subseteq \cld$. 
A clause containing at most $\clwidth$ literals is called a
\introduceterm{\kclause{}}.

A \introduceterm{\cnfform{}} 
$
\fstd = \clc_1 \land \formuladots \land \clc_m
$
is a set of clauses.
A \introduceterm{\kcnfform{}} is a \cnfform consisting of
\xclause{\clwidth}{}s.
We define the \introduceterm{size} 
$\sizeofsmall{\fstd}$
of the formula $\fstd$  
to be the total number of literals in~$\fstd$
counted with repetitions. More often, we will be interested in the
number of clauses $\nclausesof{\fstd}$ of~$\fstd$.

In this paper, when nothing else is stated it is assumed that
$
\cla, \clb, \clc, \cld 
$ 
denote clauses,
$\clsc, \clsd 
$ 
sets of clauses, 
$\varx, \vary$ propositional variables,
$\lita, \litb, \litc$ literals,
$\tvastd, \tvaalt$ truth value assignments and
$\truthval$ a truth value $0$ or $1$.
We write   
\begin{equation}
\logeval{\vary}{\modtva{\tvastd}{\varx = \truthval}} = 
\begin{cases}
\logeval{\vary}{\tvastd} & \text{if $\vary \neq \varx$,} \\
\truthval & \text{if $\vary = \varx$,}
\end{cases}
\end{equation}
to denote the truth value assignment that agrees with
$\tvastd$ everywhere except possibly at $\varx$, 
to which it assigns the value~$\truthval$.
We let
$\vars{\clc}$
denote the set of variables and
$\lit{\clc}$
the set of literals in  a clause~$\clc$.%
\footnote{%
Although the notation $\lit{\clc}$ is slightly redundant given the
definition of a clause as a set of literals, we include it for
clarity.}
  This notation is extended to sets of clauses by taking
unions.
Also, we employ the standard notation
$\intnfirst{n} = \setsmall{1, 2, \ldots, n}$.

A \introduceterm{\resderiv{}} 
$\derivof{\proofstd}{\fstd}{\cla}$
of a clause~%
$\cla$
from a \cnfform~%
$\fstd$
is a sequence of clauses 
$\proofstd = \setsmall{\cld_1, \dotsc, \cld_{\stoptime}}$
such that 
$\cld_{\stoptime} = \cla$
and each line
$\cld_i$,
$ i \in \intnfirst{\stoptime}$,
either is one of the clauses in~$\fstd$
(\introduceterm{axioms})
or is derived from clauses
$\cld_j,\cld_k$  in~\proofstd{} 
with $j, k < i$
by the 
\introduceterm{resolution rule}
\begin{equation}
\label{eq:resolution-rule}
\resrule{\clb \lor \varx}
        {\clc \lor \stdnot{\varx}}
        {\clb \lor \clc} 
\eqperiod
\end{equation}
We refer to~\eqref{eq:resolution-rule} as
\introduceterm{resolution on the variable}~$\varx$
and to
$\clb \lor \clc$
as the \introduceterm{resolvent} of 
$\clb \lor \varx$
and
$\clc \lor \stdnot{\varx}$
on~$\varx$.
A 
\introduceterm{\resref{}}
of a \cnfform~$\fstd$ is a \resderiv of the empty clause 
$\emptycl$
(the clause with no literals) from~$\fstd$.
Perhaps somewhat confusingly, this is sometimes also referred to as a
\introduceterm{resolution proof} of~$\fstd$.

For a formula $\fstd$ and a set of formulas
$\mathcal{G} = \set{\fvar_1, \ldots, \fvar_n}$,
we say that
$\mathcal{G}$ \introduceterm{implies}~$\fstd$,
denoted 
${\mathcal{G}} \impl {\fstd}$,
if every \tva satisfying all formulas
$\fvar \in \mathcal{G}$ 
satisfies~$\fstd$ as well.
It is well known that resolution is sound and implicationally complete.
That is, 
if there is a \resderiv
$\derivof{\proofstd}{\fstd}{\cla}$, 
then
$\fstd \impl \cla$, 
and if
$\fstd \impl \cla$,
then there is a \resderiv
$\derivof{\proofstd}{\fstd}{\cla'}$
for some $\cla' \subseteq \cla$.
In particular, $\fstd$ is unsatisfiable \ifaoif there is a \resref
of~$\fstd$. 

With every resolution derivation
$\derivof{\proofstd}{\fstd}{\cla}$
\label{page:dag-representation-discussed-here}
we can associate  a DAG~$G_{\proofstd}$,
with the clauses in $\proofstd$ labelling the vertices and
with edges from the assumption clauses to the  
resolvent 
for each application of the resolution rule~\refeq{eq:resolution-rule}.
There might be several different derivations of a clause~$\clc$
in~$\proofstd$, but
if so we can label each occurrence of $\clc$ with a timestamp when it
was derived and keep track of which copy of $\clc$ is used where.
A  \resderiv $\proofstd$ is \introduceterm{tree-like} if any clause in the
derivation is used at most once as a premise in an application of the
resolution rule, \ie if $G_{\proofstd}$ is a tree. (We may make 
different ``time-stamped'' vertex copies of the axiom clauses
in order to make  $G_{\proofstd}$ into a tree).


The
\introduceterm{length}
$\lengthofsmall{\proofstd}$
of a \resderiv~$\proofstd$ is the number of clauses in it.
%
We define the length of deriving a clause $\cla$ from a formula $\fstd$ as
$
\lengthderivsmall{\fstd}{\cla}
=  
\minofexpr[\derivof{\proofstd}{\fstd}{\cla}]{ \lengthofsmall{\proofstd} }
$,
where the minimum is taken over all resolution derivations of~%
$\cla$.
In particular, the length of refuting~$\fstd$ by resolution is denoted
$\lengthrefsmall{\fstd}$.
The length of refuting~$\fstd$ by \treelikeres 
$\lengthrefsmall[\treeresnot]{\fstd}$
is defined by taking the minimum over all \treelikeres refutations~%
$\proofstd_T$ 
of~%
$\fstd$. 

The
\introduceterm{width}
$\widthofsmall{\clc}$
of a clause $\clc$ is $\setsizesmall{\clc}$,
\ie the number of literals appearing in it.
The width of a set of clauses
$\clsc$
is
$\widthofsmall{\clsc} 
=  
        \maxofexpr[\clc \in \clsc]{ \widthofsmall{\clc}}$.
The width of deriving~
$\cla$
from~
$\fstd$ by resolution
is 
$
\widthderivsmall{\fstd}{\cla}
=  \minofexpr[\derivof{\resstd}{\fstd}{\cla}]
{\widthofsmall{\resstd}}
$,
and the width of refuting $\fstd$ is denoted
$
\widthrefsmall{\fstd}
$. 
Note that the minimum width measures in general and \treelikeres
coincide, so it makes no sense 
to make a separate definition for
$\widthrefsmall[\treeresnot]{\fstd}$.

We next define the measure of
\introduceterm{space}.
Following the exposition in~%
\cite{ET01SpaceBounds},
a proof can be seen as a Turing machine computation, with a special
read-only input tape from which the axioms can be downloaded and a
working memory where all derivation steps are made.
The 
\introduceterm{clause space} 
of a \resproof is the maximum number of clauses that
need to be kept in memory simultaneously 
during a verification of the proof.
The 
\introduceterm{variable space} 
is the maximum total space needed, where also
the width of the clauses is taken into account.

For the formal definitions, it is convenient to use 
an 
alternative definition of resolution introduced in~
\cite{ABRW02SpaceComplexity}.

\begin{definition}[Resolution]
\label{def:self-contained-resolution-refutation}
A \introduceterm{clause configuration} 
$\clsc$
is  a set of clauses. A sequence of clause configurations
$\resderivspacesequence{\clsc_0, \ldots, \clsc_{\stoptime}}$ 
is a 
\introduceterm{\resderiv{}} 
from a \cnfform 
$\fstd$ if
$\clsc_0 = \emptyset$
and for all 
$t \in \intnfirst{\stoptime}$,
$\clsc_{t}$ is obtained from $\clsc_{t-1}$ by one%
\footnote{%
  In some previous papers, resolution is defined so as to allow every
  derivation step to \emph{combine} one or zero applications of each of
  the three derivation rules.  Therefore, some of the bounds stated in this
  paper for space as defined next are off by a constant as compared to
  the cited sources.} 
of the following rules:

\begin{description}
  
  \italicitem [Axiom Download]
  $\clsc_{t} = \clsc_{t-1} \union \set{\clc}$ for some
  $\clc \in \fstd$. 
  
  \italicitem [Erasure]
  $\clsc_{t} = \clsc_{t-1} \setminus \set{\clc}$ for some
  $\clc \in \clsc_{t-1}$.
  
  \italicitem [Inference]
  $\clsc_{t} = \clsc_{t-1} \union \set{\cld}$ 
  for some
  $\cld$ 
  inferred by resolution from
  $\clc_1, \clc_2 \in \clsc_{t-1}$.
\end{description}
A resolution derivation
$\derivof{\proofstd}{\fstd}{\cla}$
of a clause~$\cla$ from a formula~$\fstd$ is a derivation 
$\resderivspacesequence{\clsc_0, \ldots, \clsc_{\stoptime}}$ 
\st
$\clsc_{\stoptime} = \setsmall{\cla}$.
A \introduceterm{\resref{}}
of~$\fstd$ is a derivation of the empty clause~%
$\emptycl$ from~$\fstd$.
\end{definition}


\begin{definition}[Clause space
  \cite{ABRW02SpaceComplexity, Ben-Sasson02SizeSpaceTradeoffs}]
\label{def:clause-space}
The 
\introduceterm{clause space}
of a \resderiv 
$
\proofstd 
\!=\!
\resderivspacesequence{\clsc_0, \ldots, 
  \!
  \clsc_{\stoptime}}
$ 
is
$\maxofexpr[t \in \intnfirst{\stoptime}]{\setsizesmall{\clsc_t}}$.
The clause space of deriving 
$\cla$ from 
$\fstd$ is
$
\mbox{$\clspacederivsmall{\fstd}{\cla}$} 
=
\minofexpr[\derivof{\resstd}{\fstd}{\cla}]
{\clspaceofsmall{\resstd}}
$,
and
$\clspacerefsmall{\fstd}$ 
denotes the minimum clause space of any
\resref of~$\fstd$.
\end{definition}

\begin{definition}[Variable space \cite{ABRW02SpaceComplexity}]
The \introduceterm{variable space} 
of a configuration~$\clsc$ is
$\varspaceofsmall{\clsc} = \sum_{\clc \in \clsc} \widthofsmall{\clc}$.
The variable space
of a 
derivation
$\resderivspacesequence{\clsc_0, \ldots, \clsc_{\stoptime}}$ 
is
$\maxofexpr[t \in \intnfirst{\stoptime}] {\varspaceofsmall{\clsc_t}}$, 
and
$\varspacerefsmall{\fstd}$ 
is the minimum variable space of any
\resref of~$\fstd$.
\end{definition}

Restricting the resolution derivations to tree-like resolution, we get the
measures
$\clspacerefsmall[\treeresnot]{\fstd}$ 
and
$\varspacerefsmall[\treeresnot]{\fstd}$ 
in analogy with
$\lengthrefsmall[\treeresnot]{\fstd}$ 
defined above.

Note that if one wanted to be really precise, the size and space
measures should probably measure the number of \emph{bits} needed
rather than the number of literals. However, counting literals makes
matters substantially cleaner, and the difference is at most a
logarithmic factor anyway. Therefore, counting literals seems to be
the established way of measuring formula size and variable space. 

In this paper, 
we will be almost exclusively interested in the clause space of
general resolution refutations. 
When we write simply ``space'' for brevity, we mean clause space.

\subsection{Pebble Games and Pebbling Contradictions}
\label{sec:pebble-games}

Pebble games were devised for studying programming languages and
compiler construction, but have found a variety of
applications in computational complexity theory.
In connection with resolution, pebble games have been employed 
both to analyze resolution derivations \wrt  how much memory they consume
(using the original definition of space in \cite{ET01SpaceBounds}) 
and to construct \cnfform{}s which are hard for different
variants of resolution in various respects
(see for example
\cite{AJPU02ExponentialSeparation,
  BIW00Near-optimalSeparation,
  BEGJ00RelativeComplexity,
  BOP03Complexity}).
An excellent survey of pebbling up to ca 1980 is~\cite{P80Pebbling}.

The black pebbling price of a DAG 
$G$ captures the memory space, 
\ie the number of registers, required to perform the deterministic
computation described by~$G$. The space of a non-deterministic
computation is measured by the black-white pebbling price of~$G$.
We say that vertices of $G$ with indegree~$0$ are 
\introduceterm{sources}
and that vertices with outdegree~$0$ are
\introduceterm{sinks} 
or
\introduceterm{targets}.
In the following, unless otherwise stated we will assume that all DAGs
under discussion have a unique sink and this sink will always be
denoted~$z$. 
The next definition  is adapted from~%
\cite{CS76Storage},
though we use the established pebbling terminology introduced by~%
\cite{HPV77TimeVsSpace}.

\begin{definition}[Pebble game]
\label{def:bw-pebble-game}
Suppose that $G$ is a DAG with sources~$S$ and 
a unique target~$z$. The 
\introduceterm{black-white pebble game}
on~$G$  is the following \mbox{one-player} game. 
At any point in the game, there are black and white pebbles placed on
some vertices of~$G$, at most one pebble per vertex. A
\introduceterm{pebble configuration}
is a pair of subsets $\pconf = (B,W)$ of~$\vertices{G}$,    
comprising the black-pebbled vertices $B$ and white-pebbled vertices~$W$.
The rules of the  game are as follows:
\begin{enumerate}

\item
\label{pebrule:black-placement}
If all immediate predecessors of an empty vertex~$v$ have pebbles on them, a
black pebble may be placed on~$v$.
In particular, a black pebble can always be placed on 
any vertex in~$S$.

\item
\label{pebrule:black-removal}
A black pebble may be removed from any vertex  at any time.

\item
\label{pebrule:white-placement}
A white pebble may be placed on any empty vertex at any time.

\item
\label{pebrule:white-removal}
If all immediate predecessors of a white-pebbled vertex~$v$ have pebbles on
them, the white pebble on $v$ may be removed.
In particular, a white pebble can always be removed from a source vertex.

\end{enumerate}

A \introduceterm{black-white pebbling} 
from $(B_1,W_1)$ to $(B_2,W_2)$
in~$G$ is a sequence of pebble configurations 
$\pebbling = \setsmall{\pconf_0, \ldots, \pconf_{\stoptime}}$
\st
$\pconf_0 =  (B_1, W_1)$,
$\pconf_{\stoptime} = (B_2, W_2)$,
and for all 
$t \in \intnfirst{\stoptime}$,
$\pconf_{t}$ follows from $\pconf_{t-1}$
by one of the rules above.
If
$(B_1, W_1) =  (\emptyset,\emptyset)$,
we say that the pebbling is 
\introduceterm{\pebunconditional{}}, otherwise it is 
\introduceterm{conditional}.

The \introduceterm{cost} of a pebble configuration
$\pconf = (B,W)$
is
$\pebcost{\pconf} = \setsizesmall{B \unionSP W}$
and the cost of a pebbling 
$\pebbling = \setsmall{\pconf_0, \ldots, \pconf_{\stoptime}}$
is 
$\maxofexpr[0 \leq t \leq {\stoptime}] {\pebcostsmall{\pconf_t}}$.
The \introduceterm{black-white pebbling price}
of~$(B,W)$, denoted $\bwpebblingprice{B,W}$,
is the minimum cost of any \pebunconditional{} pebbling reaching~$(B,W)$.

A \introduceterm{\pebcomplete{} pebbling} of $G$,
also called a \introduceterm{pebbling strategy} for~$G$,
is an \pebunconditional{} pebbling reaching $(\setsmall{z}, \emptyset)$. 
The \introduceterm{black-white pebbling price}
of~$G$, denoted $\bwpebblingprice{G}$,
is the minimum cost of any \pebcomplete{} black-white pebbling of~$G$.

A \introduceterm{black pebbling} is a pebbling using black pebbles
only, \ie having $W_t = \emptyset$ for all~$t$.
The \introduceterm{(black) pebbling price}
of~$G$, denoted $\pebblingprice{G}$,
is the minimum cost of any \pebcomplete black pebbling of~$G$.

%
%
%
\end{definition}


We think of the moves in a pebbling as occurring at integral
time intervals $t = 1, 2, \ldots$ and talk about the pebbling move 
``at time $t$''
(which is the move resulting in configuration
$\pconf_t$) 
or the moves 
``during the time interval $\intclcl{t_1}{t_2}$''.

The only pebblings we are really interested in are
\pebcomplete{} pebblings of $G$.
However, when we prove lower bounds for pebbling price it will
sometimes be convenient to be able to reason in terms of partial
pebbling move sequences, \ie \pebconditional pebblings.

%

%
%
%

A 
\introduceterm{pebbling contradiction} 
defined on a DAG~$G$ encodes the pebble game on~$G$ by 
postulating   
the sources to be true and the target to be false, and specifying that
truth propagates through the graph according to the pebbling rules.
The definition below is a generalization of formulas previously studied 
\mbox{in \cite{BEGJ00RelativeComplexity, RM99Separation}.}

\begin{definition}[Pebbling contradiction~%
  \cite{BW01ShortProofs}]
\label{def:pebbling-contradiction}
Suppose that $G$ is
a DAG with sources~$S$, a unique target~$z$
and with all non-source vertices having indegree~$2$, 
and
let $\pebdeg >0$ be an integer.
Associate 
$\pebdeg$ 
distinct variables
$\varx(v)_1, \ldots, \varx(v)_{\pebdeg}$
with every vertex
$\isin{v}{\vertices{G}}$.
The $\pebdeg$th degree 
\introduceterm{pebbling contradiction} over~$G$,
denoted~$\pebcontr{\pebdeg}$,
is the conjunction of the following clauses:
\begin{itemize}

\item
  $\Lor_{i=1}^{\pebdeg} \varx(s)_{i}$ 
  for all 
  $\isin{s}{S}$
  (\introduceterm{source axioms}),
  
\item
  $\stdnot{\varx(z)}_i$
  for all 
  $i \in \intnfirst{\pebdeg}$
  (\introduceterm{target axioms}),
  
\item
  $
  \stdnot{\varx(u)}_{i} \lor \stdnot{\varx(v)}_{j} \lor
  \Lor_{l=1}^{\pebdeg} \varx(w)_{l}
  $
  for all
  $i,j \in \intnfirst{\pebdeg}$
  and all
  $\isin{w}{\vertices{G} \setminus S}$,
  where
  $u, v$ 
  are the two predecessors of~$w$
  (\introduceterm{pebbling axioms}).
\end{itemize}
\end{definition}

The formula 
$\pebcontr{\pebdeg}$
is a   
\xcnfform{\text{(2+\pebdeg)}}{}
with
$\Ordocompact{\pebdeg^2 \cdot \setsizesmall{\vertices{G}}}$
clauses over
\mbox{$\pebdeg \cdot \setsizesmall{\vertices{G}}$}
variables.
An~example pebbling contradiction
is presented in
\reffigP{fig:pebbling-contradiction-for-Pi-2}.

\section{Review of Related Work}
\label{sec:review-of-some-known-results}

This section is an overview of related work, including formal
statements of some previously known results that we will need.
At the end of
\refsec{sec:review-results-pebbling-contradictions}
we also try to provide some of the intuition behind the result proven
in this paper.

\subsection{General Results About Resolution}
\label{sec:review-general-results-resolution}

It is 
not hard to show       
that any \cnfform $\fstd$ over $\nvar$~variables
is refutable in length 
\mbox{$2^{n+1}- 1$}
and width~$n$.
Esteban and Torán~\cite{ET01SpaceBounds}
proved that the clause space of refuting $\fstd$
is upper-bounded by the formula size.
More precisely, the minimal clause space is at most the number of
clauses, or the number of variables, plus a small constant, or in
formal notation
$
\clspacerefsmall{\fstd} \leq
\Minofexpr{\setsizesmall{\fstd}, \setsizesmall{\varssmall{\fstd}}}
+ \bigoh{1}
$.

We will need the fact that there are polynomial-size families of 
 \kcnfform{}s that are very hard \wrt length, width and clause space,
essentially meeting the upper bounds just stated.

\begin{theorem}[\cite{ABRW02SpaceComplexity, 
    BKPS02Efficiency,
    BG03SpaceComplexity, 
    BW01ShortProofs,
    CS88ManyHard,
    Toran99LowerBounds,
    U87HardExamples}]
  \label{th:clause-space-approx-n-clauses}
  There are arbitrarily large unsatisfiable \xcnfform{3}{}s $\fstd_n$
  of size $\bigtheta{n}$   with 
  $\bigtheta{n}$ clauses and
  $\bigtheta{n}$ variables
  for which it holds that 
  $\lengthrefsmall{\fstd_n} = \exp(\bigtheta{n})$,
  $\widthrefsmall{\fstd_n} = \bigtheta{n}$
  and
  $\clspacerefsmall{\fstd_n} = \bigtheta{n}$.
\end{theorem}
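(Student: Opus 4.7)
My plan is to exhibit such a family $\fstd_n$ explicitly via a single construction already known in the literature, and then obtain all three lower bounds either directly or by invoking the chain of reductions between length, width, and clause space. Two natural candidates come to mind: random $3$-CNF formulas with clause density slightly above the satisfiability threshold (as in Chvátal--Szemerédi~\cite{CS88ManyHard} and Beame et al.~\cite{BKPS02Efficiency}), and Tseitin contradictions on bounded-degree expander graphs (Urquhart~\cite{U87HardExamples}). Both have $\bigtheta{n}$ variables and $\bigtheta{n}$ clauses, each of size at most $3$, so the formula-size bound $\sizeofarg{\fstd_n} = \bigtheta{n}$ is immediate.

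The core of the argument is a width lower bound $\widthrefsmall{\fstd_n} = \bigomega{n}$. For Tseitin formulas this follows from the expansion-based technique of Ben-Sasson--Wigderson~\cite{BW01ShortProofs}: one defines a measure on clauses counting the number of axioms they ``cover'', and uses the boundary expansion of the underlying graph to show that some clause in any refutation must have many literals. For random $3$-CNF, an analogous expansion property holds with high probability and gives the same conclusion via~\cite{BKPS02Efficiency, BW01ShortProofs}. Once $\widthrefsmall{\fstd_n} = \bigomega{n}$ is in hand, the length lower bound $\lengthrefsmall{\fstd_n} = \exp(\bigomega{n})$ drops out from the Ben-Sasson--Wigderson relation~\refeq{eq:intro-Ben-Sasson-Wigderson-bound}, since for $\widthstd = \bigomega{n}$ over $n$~variables the inequality forces $\log \lengthrefsmall{\fstd_n} = \bigomega{n}$. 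Similarly, the clause-space lower bound $\clspacerefsmall{\fstd_n} = \bigomega{n}$ is obtained directly from the Atserias--Dalmau bound~\refeq{eq:intro-Atserias-Dalmau-bound}, which asserts $\widthrefsmall{\fstd_n} \leq \clspacerefsmall{\fstd_n} + \bigoh{1}$.

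For the matching upper bounds, all three are straightforward. Any unsatisfiable \cnfform over $n$~variables admits the trivial truth-table-style refutation in length $2^{n+1} - 1$, width $n$, and clause space $n + \bigoh{1}$, as noted at the start of \refsec{sec:review-general-results-resolution} and in~\cite{ET01SpaceBounds}. Since our $\fstd_n$ has $\bigtheta{n}$ variables, this yields $\lengthrefsmall{\fstd_n} = \exp(\bigoh{n})$, $\widthrefsmall{\fstd_n} = \bigoh{n}$ and $\clspacerefsmall{\fstd_n} = \bigoh{n}$, matching all three lower bounds up to constants.

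The only step that requires any real work is the width lower bound, and the main obstacle is simply to cite the right combination of results cleanly: historically the length lower bounds of~\cite{CS88ManyHard, U87HardExamples} predate the width-based machinery and were proved by ad hoc bottleneck-counting, while the unified width-based proofs of~\cite{BW01ShortProofs, BKPS02Efficiency} and the direct space lower bounds of~\cite{ABRW02SpaceComplexity, BG03SpaceComplexity, Toran99LowerBounds} were developed later. Since the theorem is only asserting existence, I will just pick Tseitin formulas on a family of constant-degree expanders, quote the expansion-based width lower bound, and derive length and space from~\refeq{eq:intro-Ben-Sasson-Wigderson-bound} and~\refeq{eq:intro-Atserias-Dalmau-bound} respectively, without re-proving any of these ingredients.
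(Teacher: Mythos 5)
Your proof is correct. Since the paper states this theorem purely as a citation with no proof of its own, there is no ``paper's proof'' in the strict sense to compare against; what you have done is reconstruct a valid derivation from known ingredients, and it works.

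One point worth flagging about the route you took: the citation list attached to the theorem includes \cite{ABRW02SpaceComplexity}, \cite{BG03SpaceComplexity} and \cite{Toran99LowerBounds}, which prove the clause-space lower bound \emph{directly} (via combinatorial arguments on Tseitin/random formulas or via the game-theoretic characterization of space), and notably does \emph{not} include Atserias--Dalmau. So the intended reading is three independent lower bounds assembled from the literature, whereas you prove only the width bound and then pull both the length and the space bounds out of the two general inequalities~\refeq{eq:intro-Ben-Sasson-Wigderson-bound} and~\refeq{eq:intro-Atserias-Dalmau-bound}. Your route is tidier and perfectly sound (the Atserias--Dalmau inequality is \refth{th:small-clause-space-implies-small-width} in this paper and applies to any CNF, with the $\widthofarg{\fstd}$ offset being the constant~$3$ for your formulas). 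The only thing to keep in mind is that historically the space lower bounds preceded Atserias--Dalmau and are what the theorem's citations actually reference, so if you wanted your proof to track the citations you would cite the direct space lower bounds instead.

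A minor bookkeeping remark: when you instantiate with Tseitin formulas on a $3$-regular $n$-vertex expander, the number of variables is $3n/2$ (the edges) and the number of clauses is $4n$, so some care is needed to keep the meaning of ``$n$'' consistent between the graph size and the parameter in the theorem statement; this is a relabeling and does not affect correctness.
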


Clearly, for such formulas $\fstd_n$ it must also hold that
$
\bigomega{n}
=
\varspacerefsmall{\fstd_n}
=
\Bigoh{n^2}
$.
We note in passing that determining the exact variable space
complexity of a formula family as in 
\refth{th:clause-space-approx-n-clauses}
was mentioned as an open problem in~%
\cite{ABRW02SpaceComplexity}.
To the best of our knowledge this problem is still unsolved.

If a resolution refutation has constant width, it is easy to see that
it must be of size polynomial in the number of variables (just count
the maximum possible number of distinct clauses).
Conversely, if all refutations of a formula are very
wide, it seems reasonable that any refutation of this formula must be
very long as well. This intuition was made precise by Ben-Sasson and
Wigderson~\cite{BW01ShortProofs}.  We state their theorem in the more
explicit form of Segerlind~\cite{Segerlind07Complexity}.

\begin{theorem}[\cite{BW01ShortProofs}]
\label{th:widthboundgeneral}
The width of refuting a \cnfform $\fstd$ is bounded from above by
\begin{equation*}
\widthrefsmall{\fstd}  \leq 
\widthofsmall{\fstd} + 1 + 
3  {\sqrt{\nvar \ln \lengthrefsmall{\fstd}}}
\eqcomma
\end{equation*}
where $\nvar$ is the number of variables in~$\fstd$.
\end{theorem}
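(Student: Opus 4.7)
The plan is to follow the now-standard size-width argument, which proceeds by a double induction on the number of variables $\nvar$ and on a measure of ``wideness'' of the refutation. Set $b = 3\sqrt{\nvar \ln L}$ where $L = \lengthrefsmall{\fstd}$. Call a clause in a refutation \emph{fat} if its width exceeds $\widthofsmall{\fstd} + b$, and let $L^*(\fstd)$ denote the minimum over all refutations of the number of fat clauses. I would prove by induction on $\nvar$ and $L^*(\fstd)$ that any unsatisfiable \cnfform{}~$\fstd$ over at most $\nvar$ variables admits a resolution refutation of width at most $\widthofsmall{\fstd} + b + 1$.

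First I would handle the base case where $L^* = 0$: the given refutation already has width $\leq \widthofsmall{\fstd} + b$, so we are done. For the inductive step, let $\proofstd$ be a refutation with $L^*(\fstd) \geq 1$ fat clauses. Each fat clause contains more than $b$ literals, and there are only $2\nvar$ possible literals over the variables, so by an averaging argument some literal~$\lita$ appears in at least a $b/(2\nvar)$ fraction of the fat clauses. Now split into two subproblems via the restriction on the underlying variable of~$\lita$:

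\emph{Branch 1} ($\lita = 1$): every fat clause mentioning $\lita$ vanishes, so the restricted refutation $\restrict{\proofstd}{\lita = 1}$ of $\restrict{\fstd}{\lita=1}$ has strictly fewer fat clauses, and moreover $\restrict{\fstd}{\lita = 1}$ is over one fewer variable. By the inductive hypothesis (or rather the combined induction on $L^*$), we obtain a refutation of $\restrict{\fstd}{\lita=1}$ of width at most $\widthofsmall{\fstd} + b$. Re-introducing $\lita$ into the downloaded axioms increases the width by at most~$1$, yielding a derivation of the unit clause $\olnot{\lita}$ from $\fstd$ of width $\leq \widthofsmall{\fstd} + b + 1$. \emph{Branch 2} ($\lita = 0$): here we have no gain in fat-clause count, but we have one fewer variable, so induction on $\nvar$ gives a refutation of $\restrict{\fstd}{\lita = 0}$ of width at most $\widthofsmall{\fstd} + b'$ where $b' = 3\sqrt{(\nvar-1)\ln L}$, which we can afford. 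Resolving this refutation against the unit clause $\olnot{\lita}$ derived above gives a refutation of $\fstd$ with the claimed width.

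The main technical obstacle is verifying that the parameter $b = 3\sqrt{\nvar \ln L}$ is chosen large enough so that both inductive invocations succeed simultaneously; this forces a careful calibration between the fat-clause reduction factor $(1 - b/(2\nvar))$ and the need to drive $L^*$ to zero within the allowed width budget as $\nvar$ decreases. The calculation amounts to checking that $(1 - b/(2\nvar))^{2\nvar/b} \cdot L < 1$ with enough slack to absorb the $+1$ in the target bound, which holds for the stated choice of~$b$. Everything else (the averaging over literals, the restriction of a refutation, and the resolution on~$\lita$ to combine the two branches) is routine.
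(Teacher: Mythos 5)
The paper states this result as \refth{th:widthboundgeneral} with a citation to Ben-Sasson and Wigderson (in Segerlind's explicit form) but does not prove it, so there is no in-paper argument to compare against; I will assess your proposal on its own merits. You have the right overall architecture --- the double induction with a fatness threshold, the averaging step to find a frequent literal, and the two restriction branches --- but the bookkeeping does not close as stated. Your inductive claim fixes the width target at $\widthofsmall{\fstd}+b+1$ where $b=3\sqrt{\nvar\ln L}$ is a \emph{constant} depending only on the outer instance. In Branch~1, the hypothesis applied to $\restrict{\fstd}{\lita=1}$ (fewer variables, fewer fat clauses) therefore yields a refutation of width $\leq\widthofsmall{\fstd}+b+1$, \emph{not} $\widthofsmall{\fstd}+b$ as you assert; re-introducing $\olnot{\lita}$ then gives a derivation of $\olnot{\lita}$ in width $\leq\widthofsmall{\fstd}+b+2$, overshooting your target by one with nothing left to absorb it. The induction therefore does not close.

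The missing ingredient is a \emph{second, decreasing parameter}. Ben-Sasson--Wigderson keep the fatness threshold $d$ (roughly $\sqrt{2\nvar\ln L}$) fixed but carry a separate budget $b'$ and induct on the pair $(\nvar,b')$ with the claim: if $\fstd$ over $\nvar$ variables has a refutation with fewer than $(1-d/(2\nvar))^{-b'}$ clauses of width $>d$, then $\fstd$ has a refutation of width at most $\Maxofexpr{d,\widthofsmall{\fstd}}+b'$. In Branch~1 the number of fat clauses drops by a factor of at least $1-d/(2\nvar)$, which lets $b'$ drop to $b'-1$; the unit of width recovered is exactly what pays for adjoining $\olnot{\lita}$ to every clause. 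In Branch~2 one uses one fewer variable at the same $b'$. Your formulation collapses $d$ and $b'$ into a single $b$ and has no parameter whose decrement compensates for the lifting cost. Relatedly, the calibration you write down, $(1-b/(2\nvar))^{2\nvar/b}\cdot L<1$, cannot be the right condition: the left-hand exponent makes $(1-b/(2\nvar))^{2\nvar/b}\approx e^{-1}$, which would require $L<e$. The quantity you actually need is of the form $(1-d/(2\nvar))^{b'}\cdot L<1$, which with $d\approx\sqrt{2\nvar\ln L}$ and $b'\approx 2\nvar\ln L/d$ holds with room to spare; the constant~$3$ and the $+1$ in the final bound absorb the ceilings and the initial-width baseline.
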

%
%

Bonet and Galesi \cite{BG01Optimality} showed that 
this bound on width in terms of length 
is essentially optimal.
For the special case of \treelikeres, however,  it is possible get rid
of the dependence of the number of variables and obtain a tighter bound. 

\begin{theorem}[\cite{BW01ShortProofs}]
\label{th:widthboundtree}
The width of refuting a \cnfform $\fstd$ 
in \treelikeres is bounded from above by
$
\widthref{\fstd} \leq
\widthofarg{\fstd} + \log \lengthref[\treeresnot]{\fstd}
$.
\end{theorem}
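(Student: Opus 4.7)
The plan is to prove the bound by induction on $L = \lengthrefsmall[\treeresnot]{\fstd}$. The base case $L=1$ is immediate: a refutation of length~$1$ must consist of a single downloaded axiom equal to the empty clause, so $\widthrefsmall{\fstd} = 0$ and the bound holds trivially.

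For the inductive step, I take a minimum-length tree-like refutation $\proofstd$ of $\fstd$ and look at its final resolution step. This step resolves on some variable~$\varx$ to combine two unit clauses $\set{\varx}$ and $\set{\olnot{\varx}}$ into the empty clause. Write $\proofstd_0$ and $\proofstd_1$ for the tree-like subderivations of $\set{\varx}$ and $\set{\olnot{\varx}}$, respectively, with lengths $L_0, L_1$ satisfying $L_0+L_1 \leq L-1$. By relabelling I may assume $L_0 \leq L_1$, so $L_0 \leq L/2$ and hence $\log L_0 \leq \log L - 1$, while $L_1 \leq L - 1 < L$. Applying the restriction $\varx = 0$ turns $\proofstd_0$ into a tree-like refutation of $\fstd \! \upharpoonright_{\varx=0}$ of length at most $L_0$, and applying $\varx = 1$ turns $\proofstd_1$ into a tree-like refutation of $\fstd \! \upharpoonright_{\varx=1}$ of length at most $L_1$. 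By the induction hypothesis, the former can be refuted in width at most $\widthofsmall{\fstd} + \log L - 1$ and the latter in width at most $\widthofsmall{\fstd} + \log L$.

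The final step is to lift these narrow refutations back to a refutation of $\fstd$ itself while keeping the total width at most $\widthofsmall{\fstd} + \log L$. I do the $\varx=0$ side first: replacing every restricted axiom $\clc \setminus \set{\varx}$ used in the refutation by the original axiom $\clc \in \fstd$ (clauses containing $\olnot{\varx}$ were discarded under the restriction, so no such clause is used) and propagating $\varx$ through the derivation produces a derivation of either $\emptycl$ or the unit clause $\set{\varx}$ from $\fstd$, at the cost of adding at most one literal $\varx$ to each clause. The resulting width is bounded by $\widthofsmall{\fstd} + \log L$; if $\emptycl$ was obtained we are done. Otherwise, with $\set{\varx}$ now derived, I carry out the refutation of $\fstd\!\upharpoonright_{\varx=1}$ as follows: for each axiom $\clc \in \fstd$ with $\olnot{\varx} \in \clc$ that would be used in its restricted form $\clc \setminus \set{\olnot{\varx}}$, first resolve $\clc$ with $\set{\varx}$ to remove $\olnot{\varx}$ (width at most $\widthofsmall{\fstd}$), and then proceed exactly as in the refutation of $\fstd\!\upharpoonright_{\varx=1}$. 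This way no lifted clause ever carries a spurious $\olnot{\varx}$ literal, so the entire second phase has width at most $\widthofsmall{\fstd} + \log L$ as well.

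The main obstacle is precisely this asymmetry: the small subtree $\proofstd_0$ gains the factor $\log L - 1$, leaving room for a $+1$ from na\"ive lifting, but the large subtree $\proofstd_1$ gives a bound of $\log L$ with no slack, so straightforward lifting on the $\varx=1$ side would cost an extra $+1$ and ruin the bound. The key trick that resolves this is to first pay for deriving $\set{\varx}$ and then reuse that unit clause to resolve away every would-be $\olnot{\varx}$ literal on the fly, which keeps the lifted refutation of $\fstd\!\upharpoonright_{\varx=1}$ at exactly the width given by the induction hypothesis. Everything else is bookkeeping on how a restriction transforms a tree-like refutation into a tree-like refutation of no greater length, which is standard.
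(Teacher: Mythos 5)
The paper states this theorem by citation to~\cite{BW01ShortProofs} without reproducing a proof, so there is no in-paper argument to compare against. Your proof is correct and is precisely the argument from~\cite{BW01ShortProofs}: induction on minimum tree-like refutation length with an asymmetric split at the final step, where the crucial point---which you correctly identify and handle---is that after deriving the unit clause $\setsmall{\varx}$ at width $\widthofarg{\fstd} + \log L$ from the smaller subtree, one resolves $\setsmall{\varx}$ against every axiom of~$\fstd$ containing~$\olnot{\varx}$ to recover the axioms of $\restrict{\fstd}{\varx=1}$ at width~$\widthofarg{\fstd}$ and then replays that refutation verbatim, so the larger subtree never incurs the extra~$+1$ that a na\"{\i}ve lift would cost.
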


For reference, we collect the result in \cite{BG01Optimality}    
together with some other bounds showing that
there are formulas that are easy \wrt length
but moderately hard \wrt width and clause space and state them 
as a theorem.%
\footnote{%
  Note that 
  \cite{BG01Optimality},     
  where an explicit resolution refutation
  upper-bounding the proof complexity measures is presented, does not
  talk about clause space, but it is straightforward to verify that the
  refutation there can be carried out in 
  length~$\Bigoh{n^3}$
  and clause space~$\bigoh{n}$.}

\begin{theorem}[\cite{ABRW02SpaceComplexity,
      BG01Optimality,    
      S96}]
  \label{th:moderately-hard-formulas}
  There are arbitrarily large unsatisfiable \xcnfform{3}{}s $\fstd_n$
  of size   $\Bigtheta{n^3}$ 
  with 
  $\Bigtheta{n^3}$   clauses and
  $\Bigtheta{n^2}$ variables
  \st 
  $\widthrefsmall{\fstd_n} = \bigtheta{n}$
  and
  $\clspacerefsmall{\fstd_n} = \bigtheta{n}$,
  but for which
  there are resolution refutations 
  $\refof{\proofstd_n}{\fstd_n}$
  in length
  $\lengthofsmall{\proofstd_n} = \Bigoh{n^3}$,
  width
  $\widthofsmall{\proofstd_n} = \bigoh{n}$
  and clause space
  $\clspaceofsmall{\proofstd_n} = \bigoh{n}$.
\end{theorem}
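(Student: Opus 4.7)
The plan is to assemble this result from existing constructions rather than invent a new formula family. Specifically, I would take the $3$-CNF family $\setsmall{\fstd_n}$ exhibited by Bonet and Galesi in~\cite{BG01Optimality}, which was designed precisely to show that the Ben-Sasson--Wigderson bound of \refth{th:widthboundgeneral} is essentially tight. These formulas have $\Theta(n^2)$ variables and $\Theta(n^3)$ clauses; since every clause contains at most $3$ literals the total size is $\Theta(n^3)$ as required.

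First I would invoke the analysis of \cite{BG01Optimality} directly, which establishes the width lower bound $\widthrefsmall{\fstd_n} = \Bigomega{n}$ and simultaneously constructs an explicit resolution refutation $\proofstd_n$ of $\fstd_n$ achieving length $\lengthofsmall{\proofstd_n} = \Bigoh{n^3}$ and width $\widthofsmall{\proofstd_n} = \bigoh{n}$. The matching width upper bound $\widthrefsmall{\fstd_n} = \bigoh{n}$ is then a consequence of $\proofstd_n$, so $\widthrefsmall{\fstd_n} = \bigtheta{n}$.

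Next I would derive the clause space lower bound for free from the width lower bound by appealing to the inequality of \cite{ABRW02SpaceComplexity} (phrased as $\widthref{\fstd} \leq \clspaceref{\fstd} + \bigoh{1}$ in the version of Atserias--Dalmau referenced in the introduction). Since $\fstd_n$ is a $3$-CNF the initial width is constant, so this immediately gives $\clspacerefsmall{\fstd_n} \geq \widthrefsmall{\fstd_n} - \bigoh{1} = \Bigomega{n}$. No separate lower bound argument for space is needed.

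The remaining piece, and the only part that is not quite off-the-shelf, is verifying that the explicit refutation $\proofstd_n$ from~\cite{BG01Optimality} can also be executed in clause space $\bigoh{n}$, which is the content of the footnote. The plan here is to inspect the DAG $G_{\proofstd_n}$ discussed on page~\pageref{page:dag-representation-discussed-here}: the Bonet--Galesi refutation has a regular structure in which intermediate clauses, once used as premises for the next batch of resolutions, are no longer referenced. Scheduling axiom downloads and erasures according to this structure, one keeps at most $\bigoh{n}$ clauses on the blackboard simultaneously, matching the $\Bigomega{n}$ lower bound from the previous paragraph. This bookkeeping is the only non-trivial step; the rest of the theorem is a packaging of~\cite{BG01Optimality} with the width-to-space conversion of~\cite{ABRW02SpaceComplexity}.
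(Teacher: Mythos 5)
Your proposal is correct and follows essentially the same route the paper takes: this theorem is stated in the related-work section with only citations and a footnote, and the intended justification is exactly as you describe — take the Bonet--Galesi $3$-CNF family, read off the $\Bigomega{n}$ width lower bound and the length-$\Bigoh{n^3}$/width-$\bigoh{n}$ refutation from~\cite{BG01Optimality}, get the $\Bigomega{n}$ space lower bound for free from the width-to-space inequality, and verify by inspection (the paper's footnote) that the Bonet--Galesi refutation can be scheduled so as to keep only $\bigoh{n}$ clauses in memory. The one small nit is attribution: the width-to-space inequality $\widthref{\fstd} \leq \clspaceref{\fstd} + \bigoh{1}$ is Atserias--Dalmau's result (the paper's \refth{th:small-clause-space-implies-small-width}), not~\cite{ABRW02SpaceComplexity}; the latter is cited in the theorem for independent direct space bounds, but since you parenthetically point to the Atserias--Dalmau statement anyway, the logical content of your argument is unaffected.
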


As was mentioned above, the fact that all known lower bounds on
refutation clause space coincided with lower bounds on width lead to
the conjecture that the width measure is a lower bound for the clause
space measure.  This conjecture was proven true by Atserias and
Dalmau~\cite{AD02CombinatoricalCharacterization}.

\begin{theorem}[\cite{AD02CombinatoricalCharacterization}]
  \label{th:small-clause-space-implies-small-width}
  For any \cnfform $\fstd$, it holds that
  $\clspacerefsmall{\fstd} - 3 
  \geq 
  \widthrefsmall{\fstd} - \widthofsmall{\fstd}$.%
\end{theorem}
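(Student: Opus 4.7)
The plan is to go through the game-theoretic characterization of resolution width developed by Atserias and Dalmau, using families of partial truth assignments as ``winning strategies for the Duplicator.'' The bound is proved by contrapositive: assuming no sufficiently narrow refutation exists, one produces such a family, and then shows that no space-$s$ refutation can possibly exist either, contradicting the hypothesis that $\clspacerefsmall{\fstd}$ is small.

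Concretely, call a non-empty family $\mathcal{H}$ of partial truth assignments a \emph{$k$-winning strategy} for $\fstd$ if (i) $\mathcal{H}$ is closed under restriction, (ii) every $\alpha \in \mathcal{H}$ satisfies $\setsizesmall{\text{dom}(\alpha)} \leq k$ and falsifies no axiom of $\fstd$ whose variables lie in $\text{dom}(\alpha)$, and (iii) for every $\alpha \in \mathcal{H}$ with $\setsizesmall{\text{dom}(\alpha)} < k$ and every variable $x$, some extension $\beta \supseteq \alpha$ with $x \in \text{dom}(\beta)$ lies in $\mathcal{H}$. The first (and technically most delicate) step is to show that whenever $\fstd$ admits no resolution refutation of width $\leq w$, there is a $k$-winning strategy for $k = w - \widthofsmall{\fstd} + O(1)$. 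This is done by taking $\mathcal{H}$ to consist of those small partial assignments whose ``forcing'' does not yield a narrow derivation of $\bot$; verifying the extension property requires a careful width bookkeeping to argue that a failure to extend would produce a refutation only slightly wider than what is ruled out by assumption.

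Granted such a family, the remainder of the proof is a direct simulation. Set $s := \clspacerefsmall{\fstd}$ and assume for contradiction that $\widthrefsmall{\fstd} > s + \widthofsmall{\fstd} - 3$, so by the previous step a $k$-winning strategy $\mathcal{H}$ exists for an appropriate $k \approx s - 2$. Take any space-$s$ refutation $\pi = \langle \clsc_0, \clsc_1, \ldots, \clsc_{\stoptime} \rangle$ of $\fstd$ and prove by induction on $t$ that some $\alpha_t \in \mathcal{H}$ satisfies every clause of $\clsc_t$.
\begin{compactitem}
\item Erasure and inference preserve the invariant trivially: erasure shrinks the configuration, and the resolvent of two clauses satisfied by the same $\alpha$ is itself satisfied by $\alpha$.
\item For an axiom download of a clause $A$, first use closure under restriction to cut $\alpha_{t-1}$ down to a sub-assignment hitting one literal per clause of $\clsc_{t-1}$, which has size at most $\setsizesmall{\clsc_{t-1}} \leq s - 1$; then extend at most $\widthofsmall{A} \leq \widthofsmall{\fstd}$ times through property~(iii), using property~(ii) at each step to ensure some value is consistent with not falsifying~$A$.
\end{compactitem}
At the end $\clsc_{\stoptime}$ contains the empty clause, which is falsified by everything, contradicting the invariant; hence $\widthrefsmall{\fstd} \leq \clspacerefsmall{\fstd} + \widthofsmall{\fstd} - 3$.

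The main obstacle is the ``narrow-to-strategy'' construction of step one. The exact constants in the final inequality (the ``$-3$'' here) are entirely dictated by how many variables must be committed in $\text{dom}(\alpha)$ during an axiom download and how tightly the extension rule is stated; squeezing out the tightest constants requires being especially careful with boundary cases such as the very first download, downloads of unit or empty axioms, and downloads of clauses already implied by the current $\alpha_{t-1}$.
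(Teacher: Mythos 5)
The paper does not prove this theorem; it is stated as a citation of Atserias and Dalmau~\cite{AD02CombinatoricalCharacterization}, so there is no ``paper's own proof'' to compare against. Your two-step plan (a Duplicator-style family of bounded partial assignments characterizing width, followed by a simulation showing that such a family defeats any low-space refutation) is indeed the Atserias--Dalmau route, and the overall structure of your step two is sound.

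However, the parametrization in your step one is wrong, and this is not a cosmetic issue: the constants you write down make the two halves of the proof incompatible. You claim that ``no refutation of width $\leq w$'' yields a $k$-winning strategy for $k = w - \widthofsmall{\fstd} + \Ordosmall{1}$, and under the contradiction hypothesis $\widthrefsmall{\fstd} > s + \widthofsmall{\fstd} - 3$ this gives you $k \approx s - 2$. But your own step-two simulation requires the strategy to have room for the restricted assignment ($\leq s-1$ variables) plus the full set of variables of the downloaded axiom ($\leq \widthofsmall{\fstd}$ more), \ie one needs $k \geq s + \widthofsmall{\fstd} - 1$; with $k \approx s - 2$ this fails as soon as $\widthofsmall{\fstd} \geq 1$. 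The correct Atserias--Dalmau lemma has no $\widthofsmall{\fstd}$ deficit in the width characterization: if $\widthrefsmall{\fstd} > w$ then a $w$-winning strategy exists (one can take the partial assignments of domain at most $w$ that falsify no clause derivable in width $w$; the extension property is verified by noting that if both extensions of $\alpha$ at a variable $x$ fall out of the family, the two witnessing clauses $\clc_0 \ni x$ and $\clc_1 \ni \olnot{x}$ both have width at most $\setsizesmall{\mathrm{dom}(\alpha)} + 1 \leq w$ and their resolvent is falsified by $\alpha$, contradicting $\alpha$'s membership). The $\widthofsmall{\fstd}$ correction should therefore appear only in the simulation step, not in the characterization. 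A minor secondary issue in step two: ``using property~(ii) at each step'' is not how the extension argument works --- you extend to cover all of $\vars{\cla}$ via property~(iii) alone, and only once the whole clause is in the domain does property~(ii) guarantee that $\cla$ is not falsified, hence satisfied.
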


In other words, 
the extra clause space exceeding the minimum 3 needed for any resolution 
derivation is bounded from below by the extra width exceeding 
the width of the formula.  
This inequality was later shown 
by \theauthorJN
to be asymptotically strict in the following sense.

\begin{theorem}[\cite{Nordstrom06NarrowProofsMayBeSpaciousSTOCtoappearinSICOMP}]
  \label{th:narrow-proofs-may-be-spacious}
  For all $\clwidth \geq 4$,
  there is a family
  $\setsmall{F_n}_{n=1}^{\infty}$
  of \kcnfform{}s of size 
  $\bigtheta{n}$ 
  \st
  $\lengthrefsmall{F_n} = \bigoh{n}$
  and
  $\widthrefsmall{F_n} = \bigoh{1}$
  but
  $\clspacerefsmall{F_n} = \bigtheta{\log n}$.
\end{theorem}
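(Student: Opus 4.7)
The plan is to instantiate the general ``pebbling contradiction as hard formula family'' paradigm that the rest of this paper develops, but in a regime where the quantitative lower bound on space only needs to be logarithmic. Specifically, I would take $F_n = \pebcontr[G_n]{\pebdeg}$ for $\pebdeg = \clwidth-2$ and a DAG family $\{G_n\}$ with $\abs{V(G_n)} = \Theta(n)$ chosen so that (i) $G_n$ admits a black pebbling of linear length using only $O(1)$ simultaneous pebbles at each ``local'' step in a divide-and-conquer fashion, and (ii) $G_n$ enjoys a $\Omega(\log n)$ lower bound on the relevant pebbling measure. The canonical candidates are complete binary trees of height $h = \Theta(\log n)$ (or modest variants thereof), whose pebbling price grows linearly with the height.

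For the \textbf{upper bounds}, I would invoke the standard observation from \cite{BIW00Near-optimalSeparation} already cited in Section~\ref{sec:sketch-of-preliminaries}: a pebbling contradiction $\pebcontr[G]{\pebdeg}$ over an $n$-vertex DAG can be refuted in length $O(\pebdeg^2 \cdot n)$ and width $O(\pebdeg)$. For $\pebdeg = \clwidth - 2 = O(1)$ this gives $\lengthref{F_n} = O(n)$ and $\widthref{F_n} = O(1)$, and the formula is a \kcnfform of size $\Theta(n)$. So the two easy conclusions of the theorem are immediate once $G_n$ has linearly many vertices.

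The \textbf{hard part}, naturally, is the $\Omega(\log n)$ space lower bound. The plan is to mimic the three-step proof strategy outlined in Section~\ref{sec:proof-overview-idea}: interpret any clause configuration appearing in a resolution refutation of $\pebcontr[G_n]{\pebdeg}$ as a configuration of black and white pebbles on~$G_n$; show that consecutive clause configurations give rise to legal pebbling moves; and finally argue that if the induced pebbling uses $N$ pebbles at some time, then the blackboard contained $\Omega(N)$ clauses at that time. Combined with a matching lower bound on the pebbling price of $G_n$, this forces $\clspaceref{F_n} = \Omega(\log n)$, while the matching upper bound $\clspaceref{F_n} = O(\log n)$ follows because any refutation can simulate the black pebbling strategy from the upper-bound paragraph with only $O(1)$ extra clauses of overhead on top of the current pebble configuration.

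The \textbf{main obstacle} is that a naive clauses-to-pebbles translation breaks down on ``unintuitive'' resolution steps, exactly as flagged in Section~\ref{sec:proof-overview-detailed}. A separate obstacle is the choice of $G_n$: one needs a graph family whose pebbling price (in the pebble game variant that actually captures resolution space) is truly $\Theta(\log n)$, not $\Theta(\log\log n)$ as one would get from the Loui/Meyer-auf-der-Heide style bound on black-white pebbling of complete binary trees. I expect the resolution of this tension to be either (a)~work with black pebbling price instead of black-white pebbling price, which is possible if the clauses-to-pebbles translation is set up to only charge for ``black-like'' information on the blackboard (this is a milder version of the blob/cost-cover machinery of Sections \ref{sec:multi-pebble-configuration-cost}--\ref{sec:tight-lower-bound-for-multi-pebbling-pyramid}), or (b)~use a modified tree-like DAG (for instance with sufficiently large fan-in or with auxiliary ``spine'' gadgets attached) whose black-white pebbling price is provably $\Theta(\log n)$. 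Either route suffices for the logarithmic separation; the quantitative leap to the $\Theta(\sqrt{n})$ bound of Theorem~\ref{th:main-theorem} will then be what the remainder of this paper is devoted to.
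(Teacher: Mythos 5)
This theorem is cited from prior work (\cite{Nordstrom06NarrowProofsMayBeSpaciousSTOCtoappearinSICOMP}) rather than proven in this paper, and your broad plan --- take $F_n = \pebcontr[T_h]{\pebdeg}$ over complete binary trees $T_h$ of height $h = \Theta(\log n)$, get the length and width upper bounds from \cite{BIW00Near-optimalSeparation}, the space upper bound by simulating a black pebbling, and the space lower bound by translating clause configurations into pebble configurations --- is indeed the one that paper follows. However, your ``separate obstacle'' paragraph rests on a misconception that would have derailed you. You worry that the black-white pebbling price of complete binary trees might be only $\Theta(\log\log n)$; it is not, and \refth{th:bounds-pebbling-price-trees} in this very paper records $\pebblingprice{T_h} = h + 2$ and $\bwpebblingprice{T_h} = \floorsmall{h/2}+3$. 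Since $T_h$ has $n = 2^{h+1}-1$ vertices, both quantities are $\Theta(h) = \Theta(\log n)$ --- exactly what you need, with no fix required. The Meyer auf der Heide result bounds the possible \emph{gap} between black and black-white pebbling price by a quadratic on general DAGs; it does not impose a double-logarithmic ceiling on anything here.

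Your proposed fix~(a) would moreover not stand on its own even if a fix were needed: $\pebblingprice{G}$ \emph{upper}-bounds refutation clause space via \refpr{pr:clause-space-upper-bounded-by-pebbling-price}, but a clause-space \emph{lower} bound must contend with resolution steps that exploit white-pebble-like ``promise'' information, which is precisely why the prior paper introduces a labelled pebble game (and why this paper introduces the blob game). Concretely, as summarized at the start of \refsec{sec:pebble-games-pyramids}, the prior work translates any resolution refutation of $\pebcontr[T_h]{\pebdeg}$ for $\pebdeg \geq 2$ into a labelled pebbling of $T_h$, then gives a cost-preserving transformation of that labelled pebbling into a standard black-white pebbling, after which $\bwpebblingprice{T_h} = \Theta(h)$ delivers the $\Omega(\log n)$ lower bound. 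So: right graph family, right high-level translation strategy, but the obstacle you flag does not exist, and option~(a) is not a viable substitute for a game that includes some analogue of white pebbles.
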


An immediate corollary of 
\refth{th:small-clause-space-implies-small-width}
is that for \polysize \kcnfform{}s constant clause space implies
polynomial proof length. 
We are interested in finding out what holds in the other direction,
\ie if upper bounds on length imply upper bounds on space.

For the special case of \treelikeres, 
it is known    
that there is an upper bound on clause space in terms of length
exactly analogous to the one on width in terms of length in
\refth{th:widthboundtree}.

\begin{theorem}[\cite{ET01SpaceBounds}]
  \label{th:short-tree-length-implies-small-clause-space}
  For any \treelikeres refutation
  $\proofstd$ of a \cnfform $\fstd$ it holds that
  $\clspaceofsmall{\proofstd} \leq 
  \ceilingsmall{\log \lengthofsmall{\proofstd}} + 2$.
  In particular,
  $\clspacerefsmall{\fstd} \leq 
  \ceilingsmall{\log \lengthrefsmall[\treeresnot]{\fstd}} + 2$.
\end{theorem}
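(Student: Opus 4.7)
The plan is to view the tree-like refutation $\proofstd$ as its underlying binary tree $T_{\proofstd}$ (cf.\ the discussion on page~\pageref{page:dag-representation-discussed-here}), with axiom downloads at the $\ell$ leaves and resolution inferences at the $\ell-1$ internal nodes, the empty clause sitting at the root. The total number of nodes equals $\lengthofsmall{\proofstd}$. It suffices to show that whenever a clause $\cla$ can be derived from $\fstd$ via a tree-like derivation with underlying tree~$T$ of $n$ nodes, then $\cla$ can be derived in clause space at most $\ceilingsmall{\log n}+2$. Taking $\cla = \emptycl$ and then minimizing over all tree-like refutations yields both conclusions of the theorem.

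I would prove the bound by induction on $n$ by exhibiting a recursive scheduling of the derivation steps. The base case $n=1$ is a single axiom download, using $1 \leq \ceilingsmall{\log 1}+2 = 2$ clauses of space. For the inductive step, let $T_1$ and $T_2$ be the two subtrees rooted at the children of the root of~$T$, with $n_1 \leq n_2$ nodes respectively, so that $n_1 + n_2 + 1 = n$. The key trick is to process the \emph{larger} subtree first: recursively derive the root clause $\clc_2$ of $T_2$, then erase everything on the blackboard except $\clc_2$, then recursively derive the root clause $\clc_1$ of $T_1$ while $\clc_2$ sits idle in memory, and finally apply the resolution rule to $\clc_1$ and $\clc_2$ and erase the premises. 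The maximum clause space used by this schedule is bounded by
\begin{equation*}
    \maxofsetcompact{\clspaceof{T_2},\, 1 + \clspaceof{T_1},\, 3} \eqperiod
\end{equation*}

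The induction hypothesis gives $\clspaceof{T_2} \leq \ceilingsmall{\log n_2}+2 \leq \ceilingsmall{\log n}+2$ since $n_2 \leq n-2 \leq n$, and it gives $\clspaceof{T_1} \leq \ceilingsmall{\log n_1}+2$. The crucial inequality is that $\ceilingsmall{\log n_1} \leq \ceilingsmall{\log n} - 1$, equivalently $n_1 \leq 2^{\ceilingsmall{\log n}-1}$. This follows from $n_1 \leq (n-1)/2$ (since $n_1 \leq n_2$ and $n_1 + n_2 = n-1$) combined with $n - 1 < 2^{\ceilingsmall{\log n}}$, giving $n_1 \leq 2^{\ceilingsmall{\log n}-1}-\tfrac{1}{2}$ and hence the desired bound by integrality. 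Thus $1+\clspaceof{T_1} \leq \ceilingsmall{\log n}+2$, while $3 \leq \ceilingsmall{\log n}+2$ whenever $n \geq 2$, and we are done.

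The only real obstacle is the floor/ceiling bookkeeping in the last paragraph, specifically verifying that $n_1 \leq (n-1)/2$ is tight enough to force the logarithm to drop by one even when $n$ is a power of two: a more naive scheduling that processes the smaller subtree first yields only $\ceilingsmall{\log n}+3$, which fails the bound. The choice to process $T_2$ first is what allows the extra $+1$ for the idle clause to be absorbed into the slack $\ceilingsmall{\log n_1} \leq \ceilingsmall{\log n}-1$ coming from $n_1 \leq (n-1)/2$, rather than being paid on the dominant side $T_2$ where no such slack exists.
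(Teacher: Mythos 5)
Your proof is correct, and since the paper only cites~\cite{ET01SpaceBounds} for this statement without reproducing an argument, what you have written is exactly the standard proof from that reference: induct on the binary derivation tree, process the larger subtree first so that the extra unit of space for the idle clause $\clc_2$ is absorbed by the slack $\ceilingsmall{\log n_1} \leq \ceilingsmall{\log n} - 1$ that comes from $n_1 \leq (n-1)/2$, and observe that $3 \leq \ceilingsmall{\log n}+2$ handles the moment all three of $\clc_1,\clc_2$ and their resolvent are simultaneously in memory. The ceiling arithmetic checks out (in particular $2^{\ceilingsmall{\log n}-1}$ is an integer since $n\geq 3$ in the inductive step, so integrality applies), and your remark that the naive ``smaller subtree first'' schedule would only give $\ceilingsmall{\log n}+3$ correctly identifies where the argument would otherwise lose.
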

%
%
%
%

For general resolution, 
since clause space is lower-bounded by width according to
\refth{th:small-clause-space-implies-small-width}, 
the separation of width and length of~%
\cite{BG01Optimality} in
\refth{th:moderately-hard-formulas} 
tells us that \kcnfform{}s
refutable in polynomial length can still have ``somewhat spacious''
minimum-space refutations. But exactly how spacious can they be?
Does space behave as width \wrt length also in general resolution, or
can one get stronger lower bounds on space for formulas refutable in
polynomial length?

All polynomial lower bounds on clause space 
known prior to this paper
can be explained as immediate consequences of
\refth{th:small-clause-space-implies-small-width}
applied on lower bounds on width.
Clearly, any space lower bounds derived in this way cannot get us
beyond the
``Ben-Sasson--Wigderson barrier''   
implied by \refth{th:widthboundgeneral}
saying that  if the width of refuting~$\fstd$ is 
$\Littleomega{\sqrt{\nclausesof{\fstd} \log \nclausesof{\fstd}}}$,
then the length of refuting $\fstd$ must be superpolynomial
in~$\nclausesof{\fstd}$.
Also, since matching upper bounds on clause space have been known for
all of these formula families, they have not been candidates for
showing stronger separations of space and length.
Thus, the best known separation of clause space and length
has been the formulas in
\refth{th:moderately-hard-formulas} 
refutable in linear length
$\lengthrefsmall{\fstd_n} = \bigoh{\nclausesof{\fstd_n}}$
but requiring space
$\clspacerefsmall{\fstd_n} =
\Bigtheta{\sqrt[3]{\nclausesof{\fstd_n}}}$,
as implied by the same bound on width.

Let us also discuss upper bounds on what kind of separations are a
priori possible.
Given any resolution refutation~$\refof{\proofstd}{\fstd}$,
we can write down its DAG representation $G_{\proofstd}$ 
(described on page~%
\pageref{page:dag-representation-discussed-here})
with $\lengthofsmall{\proofstd}$ vertices corresponding to the
clauses, and with all non-source vertices having
\mbox{fan-in~$2$}.
We can then transform $\proofstd$ into as space-efficient a refutation
as possible by considering an optimal black pebbling
of~$G_{\proofstd}$ as follows:
when a pebble is placed on a vertex we derive the
corresponding clause, and when the pebble is removed again we erase
the clause from memory.
This yields a refutation
$\proofstd'$ in clause space
$\pebblingprice{G_{\proofstd}}$
(incidentally, this is the original definition in~%
\cite{ET01SpaceBounds}
of the clause space of a resolution refution~$\proofstd$).
Since it is known 
that any constant indegree DAG on $n$~vertices can be black-pebbled
in cost 
$\bigoh{n / \log n}$
(see
\refth{th:upper-bound-n-log-n-pebbling}),
this shows that
$
\mbox{$\clspacerefsmall{\fstd}$}
=
\Bigoh{\lengthrefsmall{\fstd} / \log \lengthrefsmall{\fstd}}
$
is a trivial upper bound on space in terms of length.

Now we can rephrase the question above about space and length 
in the following way:
Is there a Ben-Sasson--Wigderson kind of lower bound, say
$
\lengthrefsmall{\fstd}
=
\exp 
\bigl(
\Bigomega{\clspacerefsmall{\fstd}^2 / \setsizesmall{\fstd}}
\bigr)
$
or so,
on length in terms of space?
Or do there exist \kcnfform{}s~$\fstd$ with short refutations but
maximum possible refutation space
$
\mbox{$\clspacerefsmall{\fstd}$}
=
\Bigomega{\lengthrefsmall{\fstd} / \log \lengthrefsmall{\fstd}}$
in terms of length?
Note that the refutation length
$\lengthrefsmall{\fstd}$
must indeed be short in this case---essentially linear, since any
formula~$\fstd$ can be refuted in space~%
$\bigoh{\setsizesmall{\fstd}}$
as was noted above.
Or is the relation between refutation space and refutation length
somewhere in between these extremes?

This is the main question addressed in this paper.
We believe that clause space and length can be strongly separated in
the sense that there are formula families with maximum possible
refutation space in terms of length.
As a step towards proving this we improve the lower bound in
\refth{th:narrow-proofs-may-be-spacious}
from $\bigtheta{\log n}$ to $\bigtheta{\sqrt{n}}$,
thus providing the first polynomial lower bound on space that is not
the consequence of a corresponding bound on width.
%
We next review
some results about the tools that we use to do this.

\subsection{Results About Pebble Games}
\label{sec:review-results-pebble-games}

There is an extensive literature on pebbling, mostly from the 70s
and~80s. We just quickly mention four results relevant to this paper.

Perhaps the simplest graphs to pebble are complete binary trees $T_h$ of
height~$h$.  The black pebbling price of $T_h$ can be established by
an easy induction over the tree height.  For black-white pebbling,
general bounds for the pebbling price of trees of any arity were
presented in
\cite{LT80SpaceComplexityPebbleGamesTrees}.
For the case of binary trees, 
this result can be simplified to an exact equality 
(a proof of which can be found in Section 4 of~%
\cite{Nordstrom05NarrowProofsMayBeSpacious}).

\begin{theorem}
  \label{th:bounds-pebbling-price-trees}
  For a complete binary tree $T_h$ of height $h \geq 1$ it holds that
  $\pebblingprice{T_h} = h + 2$
  and
  $\bwpebblingprice{T_h} = \Floor{\frac{h}{2}} + 3$.
\end{theorem}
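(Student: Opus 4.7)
The plan is to prove each of the two equalities by establishing matching upper and lower bounds, treating black pebbling first since it is simpler and serves as a warm-up for the black-white case.

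For the black pebbling upper bound $\pebblingprice{T_h} \leq h+2$, I would exhibit an explicit strategy by induction on~$h$. For the base case $h=1$, one places pebbles on both leaves and then on the root, using $3 = h+2$ pebbles at peak. For the inductive step, let $r$ be the root with subtrees $T_{h-1}^L, T_{h-1}^R$. First pebble $T_{h-1}^L$ using $h+1$ pebbles, keeping only the pebble on $r$'s left child at the end. Then pebble $T_{h-1}^R$ using $h+1$ additional pebbles while keeping the left child pebbled, for peak $h+2$, and finally place a pebble on $r$ (peak $h+2$, since we can remove the two child pebbles immediately after). For the matching lower bound, I would argue by induction that any complete pebbling must, at the last time a pebble is placed on the root, have pebbles on both children simultaneously. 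Consider the later of the two times the two child pebbles are first placed; at that point the other child pebble has already been sitting there, and the pebbling of the remaining subtree of height $h-1$ is essentially a complete sub-pebbling that requires $\pebblingprice{T_{h-1}} = h+1$ pebbles within that subtree. Adding the stationary pebble on the sibling gives the $+1$, yielding $\pebblingprice{T_h} \geq (h+1) + 1 = h+2$.

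For the black-white upper bound $\bwpebblingprice{T_h} \leq \floor{h/2} + 3$, the key idea is that white pebbles let us "defer" work on one subtree: we can place a white pebble on one child of a vertex, pebble the other child in black, combine to get the parent, and only later pay the cost of discharging the white pebble. Concretely, I would describe a strategy that handles two levels of the tree at a time using one extra pebble per pair of levels, by the scheme: white-pebble one grandchild of the current root, use the recursive strategy on the sibling subtree to establish a black pebble on one child of the root, then place a white pebble on the other child and derive the root, and finally discharge the white pebbles bottom-up. Tracking the peak carefully gives $\bwpebblingprice{T_h} \leq \bwpebblingprice{T_{h-2}} + 1$, which together with the base cases $\bwpebblingprice{T_1} = 3$ and $\bwpebblingprice{T_2} = 4$ gives $\floor{h/2}+3$.

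The main obstacle is the matching lower bound $\bwpebblingprice{T_h} \geq \floor{h/2}+3$, which does not follow from any elementary induction of the same shape as the black case, precisely because white pebbles decouple the pebbling of sibling subtrees. The standard approach (as in Loui--Tompa and also the reference~\cite{Nordstrom05NarrowProofsMayBeSpacious} cited immediately after the theorem) is to set up a potential function on configurations that is sensitive to both the heights of pebbled vertices and the black/white distinction, and to track how this potential can change under each of the four pebbling moves. One then argues that reaching the final configuration $(\{z\},\emptyset)$ from the empty configuration forces the potential to cross a threshold of $\floor{h/2}+3$, which in turn forces that many pebbles to be present simultaneously. I would carry this out by a two-level induction: show that in any optimal pebbling of $T_h$ there is a moment when the root's two children are simultaneously ``committed'' (each carrying either a black pebble or a white pebble that will later be discharged), and then use the fact that discharging a white pebble on a height-$(h-1)$ subtree cannot be done in fewer than $\bwpebblingprice{T_{h-2}}+1$ pebbles because it essentially requires a sub-pebbling of a height-$(h-1)$ tree with one extra pebble reserved on a sibling. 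Since the full argument is delicate and already appears in the cited references, I would present the potential function and the key invariant, verify monotonicity under each move, and point the reader to Section~4 of~\cite{Nordstrom05NarrowProofsMayBeSpacious} for the routine bookkeeping.
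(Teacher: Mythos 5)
The paper itself does not prove this theorem: the black case is dismissed as ``an easy induction over the tree height'' and the black-white case is deferred to \cite{LT80SpaceComplexityPebbleGamesTrees} and Section~4 of \cite{Nordstrom05NarrowProofsMayBeSpacious}. So you are supplying a proof where the paper only cites one, which is fine, but one step in your black lower bound as written does not hold up.

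You say: ``Consider the later of the two times the two child pebbles are first placed; at that point the other child pebble has already been sitting there, and the pebbling of the remaining subtree of height $h-1$ is essentially a complete sub-pebbling.'' Neither half of this is guaranteed. The pebble first placed on a child may be removed and re-placed before the root move, so the sibling pebble need not be sitting there at the relevant instant; and even if it were, the restriction of the pebbling to the sibling subtree over that interval need not start from the empty configuration, so you cannot invoke $\pebblingprice{T_{h-1}} = h+1$ for it. The clean fix is to pivot on a different event: let $\tau$ be the first time the root is pebbled and let $\rho < \tau$ be the last time one of the two subtrees, say $T_L$, is entirely empty. On $[\rho,\tau]$ the restriction to $T_L$ is a genuine complete pebbling of a height-$(h-1)$ tree from the empty configuration, so at some $\rho' \in (\rho,\tau)$ there are at least $h+1$ pebbles inside $T_L$, and at that same moment $T_R$ is nonempty by choice of $\rho$, giving $h+2$. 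Alternatively, the ``last open source-to-sink path'' argument the paper uses for $\pebblingprice{\pyramidgraphh}$ in \refth{th:bounds-black-pebbling-of-pyramids} applies verbatim and is even simpler for trees.

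Your black-white sketches are at the right level: the recursion $\bwpebblingprice{T_h}\le\bwpebblingprice{T_{h-2}}+1$ with base cases $3$ and $4$ does give $\Floor{h/2}+3$, and the matching lower bound genuinely needs the Loui--Tompa / Lengauer--Tarjan potential machinery, which is exactly what the paper also defers to.
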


In this paper, we will focus on pyramid graphs,
an example of which can be found in
\reffig{fig:pebbling-contradiction-for-Pi-2}.

\begin{theorem}[\cite{C74ObservationTimeStorageTradeOff,
    K80TightBoundPebblesPyramid}]
  \label{th:bounds-pebbling-price-pyramids}
  For a pyramid graph 
  $\pyramidgraph[h]$
  of height $h \geq 1$ it holds that
  $\pebblingprice{\pyramidgraph[h]} = h + 2$
  and
  $\bwpebblingprice{\pyramidgraph[h]} = h/2 + \bigoh{1}$.
\end{theorem}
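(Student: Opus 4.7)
The plan is to handle the four bounds separately, with the two upper bounds constructive and the two lower bounds via combinatorial/potential arguments.

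For the black pebbling upper bound $\pebblingprice{\pyramidgraph[h]} \leq h+2$, I would exhibit an explicit recursive strategy. To pebble $\pyramidgraph[h]$, first recursively pebble the left subpyramid of height $h-1$ to get a black pebble on its apex, keep that pebble, and then recursively pebble the right subpyramid of height $h-1$; a single extra slot suffices to combine the two apex pebbles into a pebble on the root. A careful accounting along the recursion shows the peak cost satisfies $P(h) \le P(h-1)+1$ with $P(1)=3$, giving $h+2$. For the matching lower bound $\pebblingprice{\pyramidgraph[h]} \geq h+2$, I would use the standard bottleneck argument on source-to-sink paths: look at the first moment the sink is pebbled, and walk back to find an earlier configuration in which the number of ``uncovered'' paths forces $h+2$ simultaneous pebbles. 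This is the easier of the two lower bounds.

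For the black-white upper bound $\bwpebblingprice{\pyramidgraph[h]} \leq h/2 + \bigoh{1}$, I would cut the pyramid horizontally at the middle level $L = \lfloor h/2 \rfloor$. Place white pebbles on all $L+1$ vertices of level $L$ (cost $\approx h/2$), then treat these as cost-free ``assumptions'' and pebble the top subpyramid of height $h/2$ black-only with $\bigoh{h/2}$ additional pebbles using the previous strategy. Finally, discharge the white pebbles by pebbling the bottom half (itself essentially a pyramid of height $h/2$) to verify each of them in turn, reusing memory as the whites are eliminated. Interleaving the construction and discharge phases carefully keeps the peak at $h/2 + \bigoh{1}$.

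The hard part will be the black-white lower bound $\bwpebblingprice{\pyramidgraph[h]} \geq h/2 - \bigoh{1}$, which is Klawe's theorem. The plan is to introduce a \emph{hiding set} for a pebble configuration $\pconf$---informally, a set $H$ of vertices such that every source-to-sink path passes through either a pebbled vertex of $\pconf$ or an element of $H$, with some additional ``tightness'' condition---and to define a potential $\vpotential{}$ on configurations via the minimum size of such a hiding set satisfying a generalized limited hiding-cardinality property. One then verifies two invariants: (i) each legal pebbling move changes $\vpotential{\cdot}$ by at most one unit more than it changes the number of pebbles, and (ii) the terminal configuration $(\{z\}, \emptyset)$ in $\pyramidgraph[h]$ forces $\vpotential{\cdot} = \Omega(h)$, by exploiting the crossing structure of NE- and NW-paths in the pyramid. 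Comparing with the initial empty configuration (potential $0$) then yields the desired linear lower bound.

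The main obstacle will be verifying the second invariant for pyramids: showing that the geometry of pyramids is rigid enough that any small set of vertices fails to form a hiding set, so the potential must indeed grow linearly in $h$. This is the combinatorial heart of Klawe's argument, and it is precisely the step that will need to be generalized later in Section~\ref{sec:tight-lower-bound-for-multi-pebbling-pyramid} when dealing with blob-pebblings rather than standard black-white pebblings.
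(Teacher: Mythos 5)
Your black-pebbling sketches and your black-white \emph{lower}-bound plan are all in order: the recursion $P(h)\le P(h-1)+1$ with $P(1)=3$ does give $h+2$ (the paper instead maps a binary-tree strategy onto the layered DAG via a vertex relabelling, \reflem{lem:upper-bound-pebbling-price-layered-DAG}, but your direct recursion is equally valid); the ``walk back to the last blocked path'' argument is essentially the converging-paths bottleneck argument in \refth{th:bounds-black-pebbling-of-pyramids}; and your hiding-set/potential outline is exactly Klawe's argument, which the paper develops in \refsec{sec:pebble-games-pyramids}.

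There is, however, a real gap in your black-white \emph{upper} bound. Cutting at the middle level $L=\lfloor h/2\rfloor$ and white-pebbling all $L+1$ vertices there does not give $h/2+\Ordosmall{1}$. The rule for white removal requires both \emph{predecessors} (the two vertices on level $L-1$) to be pebbled, so none of the middle-level whites can be discharged until the bottom half has been built up from below; they all sit in place throughout the entire top-half computation. Black-pebbling the top subpyramid of height $h/2$ treating level $L$ as its source row already needs about $h/2+2$ additional pebbles, so your peak is $(h/2+1)+(h/2+2)=h+\Ordosmall{1}$, which is no better than black pebbling alone. No simple interleaving of ``construction'' and ``discharge'' saves the factor of $2$, because the discharge direction points the wrong way. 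The correct $h/2+\Ordosmall{1}$ strategy is the intricate recursive black-white strategy of Lengauer--Tarjan for complete binary trees (cited as~\cite{LT80SpaceComplexityPebbleGamesTrees}, stated as \refth{th:bounds-pebbling-price-trees}), which the paper then transports onto the pyramid by the relabelling map in \reflem{lem:upper-bound-pebbling-price-layered-DAG}; it cannot be replaced by a single horizontal cut.
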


As we wrote in
\refsec{sec:proof-overview-and-paper-organization},
we are interested in DAGs with as 
high a pebbling price as possible measured in terms of the
number of vertices. For a DAG $G$ with $n$ vertices and constant
in-degree, the best we can hope for is $\bigoh{n / \log n}$.

\begin{theorem}[\cite{HPV77TimeVsSpace}]
  \label{th:upper-bound-n-log-n-pebbling}
  For \DAG{}s $G$ with 
  $n$~vertices
  and constant maximum indegree,
  it holds that
  $\pebblingprice{G} = \Bigoh{n / \log n}$.
\end{theorem}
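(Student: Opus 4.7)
The plan is to establish the classical Hopcroft--Paul--Valiant bound by a recursive strategy based on ``progress sets.'' First I would reduce to the case of maximum indegree~$2$: any DAG with constant max indegree $d$ can be converted to a binary DAG by replacing each vertex of indegree~$d$ with a small binary in-tree of size $O(d)$, multiplying the vertex count only by a constant factor while preserving the pebbling price up to an additive constant. So it suffices to prove $P(n) = O(n/\log n)$, where $P(n)$ is the maximum black pebbling price over binary DAGs with at most $n$ vertices and a unique sink.

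The structural heart of the argument is the following lemma: for any parameter $k \geq 1$, every binary DAG $G$ on $n$ vertices admits a ``progress set'' $S \subseteq V(G)$ with $|S| \leq 2n/k$ such that, if one assigns to each vertex $v$ its depth $\ell(v)$ (the length of a longest directed path ending at $v$), then $S$ consists of all vertices whose depth is a multiple of~$k$, together with the sink. The key property is that every vertex $v \notin S$ has a ``shallow cone'' $C(v)$ consisting of the ancestors of $v$ reachable by directed paths that do not pass through another vertex of $S$, and $|C(v)| = O(k)$ by a standard counting argument (bounded indegree combined with the depth restriction).

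Given $S$, the pebbling strategy is: process the vertices of $S$ in topological order, placing and keeping a permanent black pebble on each $s \in S$ once pebbled; to pebble a given $s \in S$, recursively pebble the shallow cone $C(s)$ using at most $P(O(k))$ pebbles, treating any predecessors that already lie in $S$ as free (they already carry pebbles). Once all of $S$ is pebbled, the sink is in $S$ and we are done. The total pebble usage is bounded by $|S|$ plus the recursive cost, giving
\begin{equation*}
P(n) \;\leq\; \frac{2n}{k} + P(C k)
\end{equation*}
for a constant $C$ depending only on the indegree bound. Setting $k = \lfloor \log n \rfloor$ and unfolding the recurrence yields $P(n) \leq 2n/\log n + O(\log n / \log \log n) = O(n/\log n)$.

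The main obstacle, and the place where care is needed, is verifying that the recursive pebblings of the shallow cones $C(s)$ can actually reuse the pebbles already resting on~$S$ without conflict, and that the decomposition handles vertices shared between different cones correctly. Concretely, one must check that when pebbling $C(s)$ the ``boundary inputs''---predecessors of cone vertices lying on~$S$---are exactly the previously processed elements of $S$, so that the recursive subproblem really is a pebble game on a sub-DAG of size $O(k)$ with its sources already covered. This bookkeeping, together with the argument that depth-based levelization produces a set $S$ of the claimed size with cones of size $O(k)$, is the delicate technical step; once it is in place the recurrence solves cleanly as above.
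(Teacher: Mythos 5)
Your overall plan—find a sparse ``progress set'' $S$, pebble its elements in topological order keeping permanent pebbles on them, and recurse on the small sub-DAG feeding each new element of $S$—is a reasonable shape for such an argument, and the reduction to fan-in~$2$ is fine. But both combinatorial claims in your ``progress set'' lemma are false, and the second one is fatal to the argument as written.

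\textbf{The size bound $|S|\leq 2n/k$.} Taking $S$ to be \emph{all} vertices at depths divisible by $k$ does not give a set of size $O(n/k)$. For instance, take a path $p_0\to p_1\to\cdots\to p_k$ and attach $n-k-1$ further vertices, each having $p_{k-1}$ and $p_{k-2}$ as its two predecessors; all of these extra vertices have depth exactly $k$, so $|S|\geq n-k$. This particular gap is easy to repair by an averaging argument: some residue class $r\bmod k$ contains at most $n/k$ vertices, and one takes $S$ to be that class (plus the sink). But this matters for the next point, because you no longer get to assume $r=0$.

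\textbf{The cone bound $|C(v)|=O(k)$.} This is wrong even in the best possible case, and can fail catastrophically in general. In a complete binary tree of height~$h$ (which \emph{is} layered, so depth increases by exactly~$1$ along every edge), a vertex $v$ at depth $d$ with $d\bmod k=k-1$ has a cone consisting of its entire ancestor set down through $k-1$ levels, and binary fan-in makes this $2+4+\cdots+2^{k-1}=2^k-2$ vertices, exponential in $k$ rather than $O(k)$. The deeper problem is that for a general DAG the cone need not even be confined to a band of $k$ levels, because depth can jump by more than~$1$ across a single edge, so a path inside the cone can skip over the forbidden residue class indefinitely. Concretely, take a ``clock'' path $c_0\to c_1\to\cdots\to c_{2m-2}$ and a second path $u_0\to u_1\to\cdots\to u_m$ with the additional edges $c_{2i-2}\to u_i$; then $\ell(c_j)=j$ while $\ell(u_i)=2i-1$ for $i\geq 1$ (all odd), the graph has $n=3m$ vertices and fan-in~$2$, and for any even $k$ and residue $r=0$ (your choice of~$S$) no $u_i$ with $i\geq 1$ lies in~$S$; hence the path $u_1\to\cdots\to u_{m-1}$ avoids $S$ entirely and $|C(u_m)|\geq m-1=\Omega(n)$. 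Your recurrence $P(n)\leq 2n/k+P(Ck)$ therefore simply does not hold, and setting $k=\lfloor\log n\rfloor$ does not close the argument.

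The underlying reason is that depth (longest path length) is not a genuine layering: edges can cross many depth levels, so cutting by depth modulo~$k$ does not decompose the DAG into pieces of controlled size. Even in the layered case one only gets $2^k$, not $O(k)$, and while that weaker bound together with $k=\Theta(\log n)$ would still unfold to $O(n/\log n)$, the non-layered case defeats the approach outright. The Hopcroft--Paul--Valiant proof of this theorem does not proceed by levelizing by depth; establishing the bound for arbitrary bounded-fan-in DAGs requires a substantially different decomposition, and that is where the real work lies.
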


This bound is asymptotically tight both for black and black-white
pebbling. 

\begin{theorem}[\cite{GT78VariationsPebbleGame,PTC76SpaceBounds}]
  \label{th:maximal-pebbling-price}
  There is 
  a family  of explicitly constructible%
  \footnote{%
This was not known at the time of the original theorems in
\cite{GT78VariationsPebbleGame,PTC76SpaceBounds}.
What is needed is an explicit construction of
superconcentrators of linear density, and
it has since been shown how to do this
(with
\cite{AC03SmallerExplicitSuperconcentrators}
apparently being the currently best construction).%
} 
  DAGs~$G_n$
  with
  $\bigtheta{n}$~vertices
  and vertex indegrees $0$ or~$2$
  such that
  $\pebblingprice{G} = \bigtheta{n / \log n}$
  and
  $\bwpebblingprice{G} = \bigtheta{n / \log n}$.
\end{theorem}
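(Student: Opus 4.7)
The plan is to establish the matching lower bound $\Lowerboundsmall{n/\log n}$, since the upper bound is supplied by \refth{th:upper-bound-n-log-n-pebbling}. My construction will be built around \emph{superconcentrators of linear density}: DAGs with $m$ designated inputs, $m$ designated outputs, and $\Ordosmall{m}$ edges, having the property that for every $k \leq m$ and every pair of $k$-subsets $A$ of the inputs and $B$ of the outputs there exist $k$ vertex-disjoint paths connecting $A$ to $B$. Assuming such explicit constructions are available (which we may cite from the literature, e.g., \cite{AC03SmallerExplicitSuperconcentrators}), I would form $G_n$ by stacking $\ell$ copies $S_1,\dotsc,S_\ell$ of a linear-density superconcentrator of width $m$, identifying the outputs of $S_i$ with the inputs of $S_{i+1}$, and choosing the parameters $\ell, m = \Tightsmall{\log n}$ so that the total vertex count is $\Tightsmall{n}$ and the number of stack levels satisfies $\ell = \Tightsmall{n/\log n}$. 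Any vertices of indegree greater than $2$ are replaced by balanced binary trees of fan-in-$2$ gates; this only multiplies the vertex count by a constant and preserves the superconcentrator property up to constants.

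For the black-pebbling lower bound I would run the standard adversary-flavored argument: pick the unique sink in the last level, and for each level $i$ from top down, argue inductively that at the moment the sink of $S_i$ first receives a pebble, the full set of inputs of $S_i$ must already have been ``processed'' in the sense that each input was pebbled at some earlier time. Combining this with the superconcentrator property, I would show that at some time $t$ the pebble configuration $\pconf_t$ has to simultaneously ``witness'' $k$ vertex-disjoint paths crossing a cut in some level $S_i$, forcing at least $k = \Lowerboundsmall{\log n}$ pebbles on that level alone; then summing the contributions from $\Lowerboundsmall{n/\log n}$ distinct levels (using that each level is pebbled in a roughly disjoint time interval, and choosing the worst level in each interval) yields the desired $\Lowerboundsmall{n/\log n}$ bound. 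For the black-white version, the same scheme works but the potential function must account for white pebbles as ``promises'': the key invariant becomes that at the critical moment, every input of $S_i$ is either black-pebbled, white-pebbled, or lies on a vertex-disjoint path whose removal from the graph would disconnect a black/white token from the pebbling frontier, and the superconcentrator property again forces $\Lowerboundsmall{\log n}$ such tokens per level.

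The routine part is assembling the construction and propagating the constants through the binary-tree gadgetization; this only affects the final bound by a constant factor. The main obstacle is the adversary argument for \emph{black-white} pebbling, since white pebbles allow non-deterministic ``guesses'' that must eventually be verified, and one has to design a potential function robust enough to charge both the currently visible pebbles and the implicit commitments made by earlier white placements. The classical way around this, which I would follow, is to define the potential at time $t$ as the minimum, over all subsets $U$ of vertices that together with $B_t \cup W_t$ separate the current sink from the sources, of $\setsizesmall{U} + \setsizesmall{B_t \cup W_t}$, and to show by induction on $t$ that this potential is nondecreasing along any legal pebbling that reaches the sink; applied to the stacked superconcentrator, this potential is forced to be $\Lowerboundsmall{n/\log n}$ at some time, giving the claimed lower bound and hence completing the theorem.
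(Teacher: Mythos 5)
This theorem is stated in the paper as a citation to~\cite{GT78VariationsPebbleGame,PTC76SpaceBounds} and is not proved there, so there is no in-paper argument to compare against; I can only assess your sketch on its own terms, and it has some genuine problems.

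First, your parameter choices are internally contradictory. You ask for ``$\ell, m = \Tightsmall{\log n}$ so that the total vertex count is $\Tightsmall{n}$ and the number of stack levels satisfies $\ell = \Tightsmall{n/\log n}$.'' If $\ell = \Tightsmall{\log n}$ and $m = \Tightsmall{\log n}$ then the graph has $\Tightsmall{\log^2 n}$ vertices, not $\Tightsmall{n}$; and $\ell$ cannot simultaneously be $\Tightsmall{\log n}$ and $\Tightsmall{n/\log n}$. Even the consistent choice (width $m = \Tightsmall{n/\log n}$, depth $\ell = \Tightsmall{\log n}$) would only put you in a position to \emph{hope} for a lower bound of order $m = n/\log n$; that hope is not justified by the argument you sketch.

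Second, and more fundamentally, a flat stack of superconcentrators is not the Paul--Tarjan--Celoni construction, and I do not believe it achieves the bound. With unbounded time, pebbling lower bounds for a simple stack collapse: one can repebble lower layers as often as one likes, and the Lengauer--Tarjan style results on stacked superconcentrators give \emph{time--space trade-offs}, not unconditional $\Lowerboundsmall{n/\log n}$ space lower bounds. What~\cite{PTC76SpaceBounds} actually do is a \emph{recursive} construction: $G_i$ is built from copies of $G_{i-1}$ glued via superconcentrators, with $n_i = \Tightsmall{2^i\,i}$ and $\pebblingprice{G_i} = \Tightsmall{2^i}$, which is what yields $\Tightsmall{n/\log n}$. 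The technical heart is a lemma about the cost of (re)pebbling a large output set of a superconcentrator when many inputs must remain black, and this lemma is then applied through the recursion. Your sketch, by contrast, asserts that ``at some time $t$'' some level must hold $\Lowerboundsmall{\log n}$ pebbles and that these contributions ``sum'' over $\Lowerboundsmall{n/\log n}$ levels; but nothing forces the levels to be pebbled in disjoint time intervals, and nothing in the superconcentrator property gives you a per-level count of $\Lowerboundsmall{\log n}$ pebbles held \emph{simultaneously}. This is exactly the hard part, and it is being waved away.

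Third, the black-white extension is similarly too optimistic. The potential function you propose (``minimum over separators $U$ of $\setsizesmall{U} + \setsizesmall{B_t \union W_t}$'') is reminiscent of the Klawe-style measure this paper uses for \emph{pyramids}, but that measure is calibrated to layered, planar, spreading graphs with a tight relationship between levels and \hidingsetklawe{}s; there is no reason it should produce $\Lowerboundsmall{n/\log n}$ on a superconcentrator-based graph, and the Gilbert--Tarjan proof does not proceed this way. So the ``main obstacle'' you identify is indeed the main obstacle, and the route you propose around it is not the one that works. For a self-contained proof you would need to reproduce the recursive PTC construction and the accompanying superconcentrator-pebbling lemma, and for the black-white case the extensions of~\cite{GT78VariationsPebbleGame}; citing those sources, as the paper does, is the appropriate move.
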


It should be pointed out that although the black and black-white
pebbling prices coincide asymptotically in all of the theorems above,
this is not the case in general.
In
\cite{KS88OnThePowerOfWhitePebbles},
a family of DAGs with a quadratic difference in the number of
pebbles between the black and the black-white pebble game was
presented. We note that this is the best separation possible, 
since by~%
\cite{MadH81ComparisonOfTwoVariationsOfPebbleGame}
the difference in black and black-white pebbling price can be at most
quadratic. 

%
%

\subsection{Results About Pebbling Contradictions Plus Some Intuition}
\label{sec:review-results-pebbling-contradictions}

Although any constant indegree will be fine for the results covered in
this subsection, we restrict our attention to DAGs with vertex
indegrees $0$ or~$2$ 
since these are the graphs that will be studied in the rest of this paper.

It was observed in~%
\cite{BIW00Near-optimalSeparation}
that
$\pebcontr{\pebdeg}$
can be refuted in resolution by deriving 
$\Lor_{i=1}^{\pebdeg} \varx(v)_{i}$ 
for all
$v \in \vertices{G}$
inductively in topological order and then resolving with the target axioms
$\stdnot{\varx(z)}_i$, \mbox{$i \in \intnfirst{\pebdeg}$}.
Writing down this resolution proof, one gets the following proposition
(which is proven together with
\refpr{pr:clause-space-upper-bounded-by-pebbling-price}
below).

\begin{proposition}[\cite{BIW00Near-optimalSeparation}]
  \label{pr:pebbling-contradiction-has-short-resolution-refutation}
For any DAG $G$ with all vertices having indegree $0$ or~$2$,
there is a resolution refutation
$\derivof{\proofstd}{\pebcontr{\pebdeg}}{\emptycl}$
in length 
$\lengthofsmall{\proofstd} 
= \Bigoh{\pebdeg^2 \cdot \setsizesmall{\vertices{G}}}$
and width
\mbox{$\widthofsmall{\proofstd} = \bigoh{\pebdeg}$}.
\end{proposition}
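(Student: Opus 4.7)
The plan is to follow the strategy indicated in the paragraph immediately preceding the proposition: by induction in topological order, derive for each vertex $v$ the ``truth of $v$'' clause $C_v = \Lor_{i=1}^{\pebdeg} \varx(v)_{i}$, and then at the very end resolve $C_z$ against the $\pebdeg$ target axioms $\olnot{\varx(z)}_i$ to obtain the empty clause. The base case is free because for every source $s$ the clause $C_s$ is itself a source axiom of $\pebcontr[G]{\pebdeg}$; I would not need to prove anything there.

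For the inductive step, fix a non-source vertex $v$ with predecessors $u,w$, and assume $C_u$ and $C_w$ have already been derived. My plan is to eliminate the $\pebdeg^2$ pebbling axioms $\olnot{\varx(u)}_i \lor \olnot{\varx(w)}_j \lor C_v$ in two sweeps. First, for each fixed $j \in \intnfirst{\pebdeg}$, take the $\pebdeg$ axioms with that $j$ and resolve them one by one against $C_u$ on the variables $\varx(u)_1, \varx(u)_2, \ldots, \varx(u)_\pebdeg$; this chains through intermediate clauses of the form $\varx(u)_{i+1} \lor \formuladots \lor \varx(u)_\pebdeg \lor \olnot{\varx(w)}_j \lor C_v$ and produces the clause $\olnot{\varx(w)}_j \lor C_v$ in roughly $\pebdeg$ derivation steps. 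Doing this for all $j$ yields $\pebdeg$ clauses of the form $\olnot{\varx(w)}_j \lor C_v$ in $\Bigoh{\pebdeg^2}$ steps. A second, completely analogous sweep resolves these against $C_w$ on $\varx(w)_1,\ldots,\varx(w)_\pebdeg$ in $\bigoh{\pebdeg}$ further steps, producing $C_v$.

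Summing over all vertices, the total length is $\Bigoh{\pebdeg^2 \cdot \setsizesmall{\vertices{G}}}$, plus a final $\bigoh{\pebdeg}$ steps resolving $C_z$ with the target axioms, which absorbs into the same bound. For the width, the widest clauses that appear in the derivation are the intermediate chaining clauses of the form $\varx(u)_{i+1} \lor \formuladots \lor \varx(u)_\pebdeg \lor \olnot{\varx(w)}_j \lor C_v$, each of width at most $2\pebdeg + 1 = \bigoh{\pebdeg}$; the clauses $C_v$ themselves have width exactly $\pebdeg$, and the pebbling axioms have width $\pebdeg + 2$. So $\widthofsmall{\proofstd} = \bigoh{\pebdeg}$.

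I do not expect any real obstacle here: the construction is entirely mechanical once one observes that $C_u, C_w$ together with the pebbling axioms semantically imply $C_v$ and that this implication can be realized by the obvious ``cascade'' of resolutions sketched above. The only thing worth double-checking is the bookkeeping to ensure each $C_u$ and $C_w$ is kept available for use at its successor (which is automatic in the tree/DAG definition of a derivation, though it matters for the companion \refpr{pr:clause-space-upper-bounded-by-pebbling-price} that pairs length with clause space). For this proposition, however, only length and width are claimed, and both are read off immediately from the per-vertex cost $\Bigoh{\pebdeg^2}$ and the per-clause bound $\bigoh{\pebdeg}$.
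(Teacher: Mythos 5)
Your proof is correct and takes essentially the same route as the paper: inductively derive $C_v$ in topological order by a two-sweep cascade of resolutions against the pebbling axioms, then finish with the target axioms, giving $\Bigoh{\pebdeg^2}$ steps and width $\bigoh{\pebdeg}$ per vertex. The only cosmetic difference is that the paper fixes the index on one predecessor and sweeps the other first, whereas you sweep in the opposite order — a symmetric choice with no effect on the bounds.
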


Tree-like resolution is good at refuting 
first-degree
pebbling contradictions
$\pebcontr[G]{1}$ but is bad at refuting $\pebcontr[G]{\pebdeg}$
for $\pebdeg \geq 2$.

\begin{theorem}[\cite{Ben-Sasson02SizeSpaceTradeoffs}]
  \label{th:constant-refutation-space-degree-one-pebbling-contradiction}
  For any DAG $G$ with all vertices having indegree $0$ or~$2$,
  there is a tree-like \resref $\proofstd$ of
  $\pebcontr[G]{1}$ \st
  $\lengthofsmall{\proofstd} = \bigoh{\setsizesmall{\vertices{G}}}$
  and
  $\clspaceofsmall{\proofstd} = \bigoh{1}$.
\end{theorem}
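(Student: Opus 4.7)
The plan is to exhibit an explicit tree-like refutation that mimics a ``top-down'' expansion of the target axiom, processing literals in reverse topological order so that each vertex contributes only constantly many clauses. The refutation will maintain a single ``working clause'' $C$ on the blackboard, consisting entirely of negative literals $\olnot{\varx(v)}$ for vertices $v$ still to be handled.

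Concretely, I would proceed as follows. First, download the target axiom $\olnot{\varx(z)}$ as the initial $C$. Then repeat the following expansion step until $C = \emptycl$: pick the vertex $v$ of maximum depth (longest path to a source) among those with $\olnot{\varx(v)} \in C$. If $v$ is a non-source with predecessors $u, w$, download the pebbling axiom $\olnot{\varx(u)} \lor \olnot{\varx(w)} \lor \varx(v)$, resolve on $\varx(v)$ to obtain $C' = (C \setminus \setsmall{\olnot{\varx(v)}}) \union \setsmall{\olnot{\varx(u)}, \olnot{\varx(w)}}$, and then erase both the axiom copy and the old $C$. If $v$ is a source, download the source axiom $\varx(s)$, resolve to obtain $C' = C \setminus \setsmall{\olnot{\varx(s)}}$, and erase.

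The key invariant to establish is that each vertex $v$ is expanded at most once. Indeed, $v$ can be (re)introduced into the working clause only by expanding some successor of $v$, and any such successor has strictly greater depth than $v$; since we always expand the deepest literal in $C$, by the time $v$ itself is expanded every successor has already been processed, so $\olnot{\varx(v)}$ cannot reappear. This gives length $\bigoh{\setsizesmall{\vertices{G}}}$ since every vertex accounts for only a constant number of download, resolution, and erasure steps. Clause space is at most $3$ throughout: between steps the blackboard holds only $C$, and during a step it holds $C$, one downloaded axiom, and transiently the resolvent before the two premises are erased. Tree-likeness is immediate from the schedule, since each derived clause $C$ is used exactly once as a premise in the subsequent resolution and each axiom is downloaded afresh whenever needed.

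The proof is essentially routine once the right schedule is identified, so there is no real obstacle; the only point requiring care is the choice of expansion order. A naive order would allow an already-expanded vertex to re-enter $C$ through a higher successor, potentially forcing an exponential blow-up in length (as in the usual DAG-to-tree unfolding), so the reverse-topological discipline is what makes linearity go through. Note that the argument uses the degree-$1$ structure crucially: the working clause contains only negative literals, and expansion never produces positive literals to reason about, which is why a single-clause blackboard suffices and is what breaks down as soon as $\pebdeg \geq 2$.
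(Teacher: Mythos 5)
Your proof is correct and takes the same approach the paper sketches in \refsec{sec:review-results-pebbling-contradictions}: maintain a single all-negative working clause, expand it against pebbling and source axioms in reverse topological order, and observe that this yields a tree-like refutation of linear length and constant clause space. One small imprecision in the justification: it is not true that \emph{every} successor of $v$ has been processed by the time $v$ is expanded (some successors may never enter $C$ at all); the correct observation is that the maximum depth occurring in $C$ is non-increasing over the run, so once $v$ has been expanded no strictly deeper vertex can reappear in $C$, hence no successor of $v$ can be expanded later and $\olnot{\varx(v)}$ cannot be reintroduced.
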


\begin{theorem}[\cite{BIW00Near-optimalSeparation}]
  \label{th:connection-refutation-length-pebbling-price}
For any DAG $G$  with all vertices having indegree $0$ or~$2$,
$
\lengthref[\treeresnot]{\pebcontr{2}} = 2 ^
{\bigomega{\pebblingprice{G}}}
$.
\end{theorem}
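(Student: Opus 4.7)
The plan is to lower-bound the tree-like refutation length via the Prover--Delayer game characterization (due to Pudlák and Impagliazzo): if there is a Delayer strategy for $\pebcontr[G]{2}$ that is guaranteed to score $s$ points against any Prover, then $\lengthrefsmall[\treeresnot]{\pebcontr[G]{2}} \geq 2^{s}$. In the game, the Prover queries variables one by one, the Delayer answers $0$, $1$, or $\ast$ (where $\ast$ means the Prover gets to pick the value and the Delayer earns one point), and the game ends as soon as the running partial assignment falsifies some clause of $\pebcontr[G]{2}$. So the task reduces to exhibiting a Delayer strategy that scores $\Omega(\pebblingprice{G})$ points.

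First I would describe the Delayer strategy. She maintains a ``virtual black pebble configuration'' $B \subseteq \vertices{G}$, under the invariant that every $v \in B$ has at least one variable set to $1$, every vertex declared ``white'' has both its variables set to $0$, and no axiom is falsified. The crucial mechanism exploits the degree $\pebdeg = 2$: when the Prover queries the first variable of a so-far-untouched vertex~$v$, the Delayer answers $0$ for free; when the Prover later queries $x(v)_2$, the Delayer plays $\ast$, scoring one point and letting the Prover decide whether $v$ enters $B$ (answer $1$) or becomes white (answer $0$). Queries on vertices already committed to $B$ or to white are answered in the unique consistent way and score nothing.

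Second, I would argue that the trace of additions to $B$ constitutes a legitimate black pebbling of $G$ that ultimately reaches $z$. The only clean way for the game to end is via the target axiom $\olnot{\varx(z)}_i$, because source axioms cannot be falsified (the Delayer never lets both variables of a source be set to $0$ without having first scored a point) and the pebbling axioms $\olnot{\varx(u)}_i \lor \olnot{\varx(v)}_j \lor \varx(w)_1 \lor \varx(w)_2$ are also avoided if the Delayer refuses to let $w$ become white whenever both predecessors $u,v$ are in $B$. Thus the updates to $B$ trace out a complete black pebbling of $G$; by \refth{th:bounds-pebbling-price-pyramids}-style pebbling lower bounds, such a pebbling must at some moment reach a configuration of size at least $\pebblingprice{G}$. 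Since each new vertex entering $B$ costs the Delayer exactly one $\ast$-play, and the number of distinct placements in any pebbling is at least the maximum cost, the Delayer scores $\Omega(\pebblingprice{G})$ points, yielding the desired $2^{\Omega(\pebblingprice{G})}$ bound via the Prover--Delayer correspondence.

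The main obstacle is the careful interleaving of the adversarial pebbling argument inside the Prover--Delayer game. The Prover has complete freedom to schedule queries in any order, and can attempt to force the Delayer into corners where some pebbling axiom becomes falsified before the Delayer has scored sufficiently many points---for instance by probing $w$'s two variables while keeping both predecessors of $w$ committed to $B$. Resolving this requires the Delayer to simulate not an arbitrary pebbling but the specific adversary strategy witnessing the lower bound on $\pebblingprice{G}$, folding pebbling moves into her responses so that the virtual pebble configuration remains legal at every step and no axiom is prematurely violated. Making this translation rigorous, and in particular bounding the constant in the $\Omega(\pebblingprice{G})$-many points scored, is where the combinatorial content of the proof lies.
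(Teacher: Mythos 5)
The high-level framework you chose (the Prover--Delayer game of Pudl\'ak and Impagliazzo) is indeed the right tool, and is essentially the approach used in the cited source~\cite{BIW00Near-optimalSeparation}. However, the Delayer strategy as you describe it is broken, and the gap is not a matter of ``making things rigorous''---the strategy as stated scores only $O(1)$ points against an adversarial Prover.

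The core problem is that playing $\ast$ hands control to the \emph{Prover}, yet you also assert that ``the Delayer never lets both variables of a source be set to $0$'' and that she ``refuses to let $w$ become white whenever both predecessors $u,v$ are in $B$.'' These two things are incompatible: once the Delayer has answered $0$ on $x(v)_1$ and played $\ast$ on $x(v)_2$, the Prover---who wants to \emph{end} the game, not prolong it---will simply answer $0$ whenever doing so falsifies an axiom. Concretely, the Prover queries both variables of a single source $s$: the Delayer answers $0$ on the first, $\ast$ on the second, the Prover picks $0$, the source axiom $x(s)_1 \lor x(s)_2$ is falsified, and the game ends with exactly $1$ point scored. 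Even with the obvious patch (only play $\ast$ when \emph{both} answers are safe, and answer $1$ deterministically when forced), the Prover can query the two predecessors $v_1,v_2$ of $z$ and then $z$ itself: your strategy commits $v_1$ and $v_2$ black in one $\ast$-move each (never mind that their own predecessors are untouched), $z$ is forced white, the pebbling axiom at $z$ is falsified, and the score is $O(1)$.

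This exposes the second, deeper gap: the assertion that ``the trace of additions to $B$ constitutes a legitimate black pebbling of $G$ that ultimately reaches $z$'' has no justification. The Prover is free to query variables in any order, to push vertices into $B$ that have no pebbled predecessors (which is not a legal pebbling move), and to abort the game at any low-level vertex $w$ by making $w$ white before its predecessors, then forcing those predecessors black---at which point the ``pebbling'' has reached neither $z$ nor anything of large pebbling price. Your claim that ``the number of distinct placements in any pebbling is at least the maximum cost'' is true for a pebbling that is legal and complete, but the configuration induced by the game is neither under the strategy described. To get the correct bound one needs a Delayer strategy under which (i) a vertex only ever becomes black when it is genuinely pebbleable from the current black set, (ii) no intermediate vertex can be prematurely committed white in a way that lets the Prover create a cheap local contradiction, and (iii) the bookkeeping of $\ast$-moves is tied to the pebbling cost rather than simply to the number of black vertices. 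This is where the real combinatorial work of the proof sits---the strategy must be coupled to the structure of a pebbling adversary (or, equivalently, to a careful traversal of the decision tree associated with the tree-like refutation)---and it is precisely the part your proposal defers. As written, the Delayer strategy fails against a straightforward Prover, so the argument does not establish the $2^{\Omega(\pebblingprice{G})}$ bound.
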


As to space, it is not too difficult to see that the  
black pebbling price of $G$ provides an upper bound for the refutation
clause space of
${\pebcontr[G]{\pebdeg}}$.

%
%

\begin{proposition}
  \label{pr:clause-space-upper-bounded-by-pebbling-price}
  For any DAG $G$  with 
  vertex indegrees $0$ or~$2$,
  $\clspaceref{\pebcontr[G]{\pebdeg}}
  \leq
  \mbox{$\pebblingprice{G} + \bigoh{1}$}$.
\end{proposition}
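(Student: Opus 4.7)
The plan is to mimic an optimal black pebbling of~$G$ in resolution, maintaining the invariant that at the time when the pebble configuration is $B_t$, the blackboard contains exactly the ``truth-of-$v$'' clauses $\bigl\{\sourceclausenodisplay[\pebdeg]{\varx(v)} : v \in B_t\bigr\}$, plus a bounded amount of auxiliary workspace while a single pebble-move transition is being simulated. Fix a \pebcomplete black pebbling $\pebbling = \setsmall{\pconf_0, \ldots, \pconf_{\stoptime}}$ of~$G$ of cost $\pebblingprice{G}$. We translate the three possible pebble moves one at a time:

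\textbf{(Source placement.)} When a pebble is placed on a source~$s$, we simply download the source axiom $\sourceclausenodisplay[\pebdeg]{\varx(s)}$, which is precisely the desired ``truth'' clause. No auxiliary workspace is needed. \textbf{(Pebble removal.)} When the pebbling removes a pebble from~$v$, we erase $\sourceclausenodisplay[\pebdeg]{\varx(v)}$ from the board. \textbf{(Non-source placement.)} When a pebble is placed on a non-source vertex~$w$ with predecessors $u,v$, by the pebbling rules both $u$ and $v$ already carry pebbles, so $\sourceclausenodisplay[\pebdeg]{\varx(u)}$ and $\sourceclausenodisplay[\pebdeg]{\varx(v)}$ are on the board. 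Using the $\pebdeg^2$ pebbling axioms $\stdnot{\varx(u)}_i \lor \stdnot{\varx(v)}_j \lor \sourceclausenodisplay[\pebdeg]{\varx(w)}$, we derive $\sourceclausenodisplay[\pebdeg]{\varx(w)}$ in a number of auxiliary clauses depending only on~$\pebdeg$, as sketched below.

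To derive $\sourceclausenodisplay[\pebdeg]{\varx(w)}$ in bounded extra space, the natural approach is to peel off one $\varx(u)$-literal at a time. For each $i = 1,\ldots,\pebdeg$, starting from $\sourceclausenodisplay[\pebdeg]{\varx(v)}$ we iteratively download the pebbling axioms with $\stdnot{\varx(u)}_i$, resolving each in turn against the running intermediate clause to eliminate one $\varx(v)_j$ at a time, arriving at $C_i := \stdnot{\varx(u)}_i \lor \sourceclausenodisplay[\pebdeg]{\varx(w)}$; this uses just one auxiliary slot besides the clauses $\sourceclausenodisplay[\pebdeg]{\varx(u)}$ and $\sourceclausenodisplay[\pebdeg]{\varx(v)}$ that must be preserved on the board. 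We keep the $C_i$ on the board (using $\pebdeg$ extra slots), then resolve them one by one against $\sourceclausenodisplay[\pebdeg]{\varx(u)}$ to get $\sourceclausenodisplay[\pebdeg]{\varx(w)}$, erasing each $C_i$ after use. Finally, once the pebbling terminates with a pebble on the sink~$z$, the board holds just $\sourceclausenodisplay[\pebdeg]{\varx(z)}$; downloading the target axioms $\stdnot{\varx(z)}_l$ one at a time and resolving against the (shrinking) truth-of-$z$ clause yields the empty clause in a further $O(1)$ space.

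The main obstacle is purely bookkeeping: one must check that the transient workspace needed during step~(3) is bounded by a constant (depending only on~$\pebdeg$), and in particular that the necessary auxiliary clauses can coexist on the board with the $\setsizesmall{B_t}$ ``truth'' clauses of the current pebble configuration without exceeding $\pebblingprice{G} + O(1)$ total clauses. Since the pebbling-axiom derivation for a single vertex~$w$ only references the two predecessor truth clauses and its own pebbling axioms, and since we can discard every intermediate clause as soon as its contribution is absorbed, the whole simulation fits in clause space $\pebblingprice{G} + c(\pebdeg)$. The same derivation, read sequentially, also gives the length and width bounds of \refpr{pr:pebbling-contradiction-has-short-resolution-refutation}, so both propositions can be established in one construction.
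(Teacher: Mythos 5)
Your construction is correct and follows the same strategy as the paper: fix an optimal black pebbling, maintain the ``truth-of-$v$'' clause for each pebbled vertex, and derive the truth clause of a newly pebbled vertex from those of its two predecessors via the pebbling axioms, cleaning up afterwards. The one place you are looser than the paper is in the non-source placement: you keep all $\pebdeg$ clauses $C_i = \stdnot{\varx(u)}_i \lor \sourceclausenodisplay[\pebdeg]{\varx(w)}$ on the board simultaneously, yielding extra workspace $\Tightsmall{\pebdeg}$, whereas the paper derives each such clause one at a time, resolves it immediately into a running partial clause $\subsourceclause[i]{i+1}{\varx(u)} \lor \sourceclausenodisplay[\pebdeg]{\varx(w)}$, and discards it, obtaining extra space $4$ \emph{independent of $\pebdeg$}. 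This is immaterial for the statement $\pebblingprice{G} + \Ordosmall{1}$ with $\pebdeg$ fixed, but the paper explicitly flags the $\pebdeg$-independence in the proof of \refth{th:main-theorem}. A minor bookkeeping slip: deriving each $C_i$ takes not one auxiliary slot but two or three (the running intermediate, the downloaded axiom, and the resolvent must all coexist momentarily before erasures), though this again does not affect the $\Ordosmall{1}$ conclusion.
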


Essentially, this is just a matter of combining an optimal
black pebbling of~$G$  with the resolution refutation idea from~%
\cite{BIW00Near-optimalSeparation}
sketched above.
Since we need 
the upper bounds on width and space  in 
\reftwoprs
{pr:pebbling-contradiction-has-short-resolution-refutation}
{pr:clause-space-upper-bounded-by-pebbling-price}
in the proof of our main theorem, we write down the details for
completeness.  

\ifthenelse
{\boolean{maybeSIAM}}
{\manualproofSIAM{Proof of
  \reftwoprs
  {pr:pebbling-contradiction-has-short-resolution-refutation}
  {pr:clause-space-upper-bounded-by-pebbling-price}}}
{\begin{proof}[Proof of
  \reftwoprs
  {pr:pebbling-contradiction-has-short-resolution-refutation}
  {pr:clause-space-upper-bounded-by-pebbling-price}]}
  Consider first the bound on space.

  Given a black pebbling of~$G$,  
  we construct a resolution refutation of
  ${\pebcontr[G]{\pebdeg}}$
  such that if at some point in time there are black pebbles on a set of
  vertices~$V$, then 
  we have the clauses
  $\setdescrcompact{\sourceclausexvar[i]{v}}{v \in V}$  
  in memory.
  When some new vertex $v$ is pebbled, we derive
  $\sourceclausexvar[i]{v}$  
  from the clauses already in memory.
  We claim that with a little care, this can be done in constant extra space
  independent of~$\pebdeg$.
  When a black pebble is removed from~$v$, we erase the clause
  $\sourceclausexvar[i]{v}$.  
  We conclude the resolution proof  by resolving
  $\sourceclausexvar[i]{z}$
  for the target~$z$
  with all target axioms
  $\olnot{\varx(z)}_i$, $i \in \intnfirst{\pebdeg}$,
  in space~$3$.

  It is clear that given our claim
  about the constant extra space needed when a vertex is black-pebbled, 
  this yields a resolution refutation in
  space equal to the pebbling cost plus some constant.
  In particular, 
  given an optimal black pebbling of~$G$,
  we get a refutation in space
  $\mbox{$\pebblingprice{G} + \bigoh{1}$}$.

  To prove the claim,
  note first that it trivially holds for source vertices~$v$, since 
  $\sourceclausexvar[i]{v}$
  is an axiom of the formula.
  Suppose for
  a non-source vertex $r$ with predecessors $p$ and $q$ 
  that at some point in time a black pebble is placed on~$r$.
  Then $p$ and $q$ must be black-pebbled, so 
  by induction
  we have the clauses
  $\sourceclausexvar[i]{p}$
  and
  $\sourceclausexvar[j]{q}$
  in memory.
  We will use that
  the clause
  $\stdnot{\varx(p)}_i \lor \sourceclausexvar[l]{r}$
  for any~$i$
  can be derived 
  in additional space~$3$ by resolving
  $\sourceclausexvar[j]{q}$
  with
  $\pqrstdxvar$ for $j \in \intnfirst{\pebdeg}$,
  leaving the easy verification of this fact to the reader.
  To derive
  $\sourceclausexvar[l]{r}$,
  first
  resolve
  $\sourceclausexvar[i]{p}$
  with
  $\stdnot{\varx(p)}_1 \lor \sourceclausexvar[l]{r}$
  to get
  $\subsourceclausexvar[i]{2}{p} \lor \sourceclausexvar[l]{r}$,
  and then resolve this clause with the clauses
  $\stdnot{\varx(p)}_i \lor \sourceclausexvar[l]{r}$
  for $i = 2, \ldots, \pebdeg$
  one by one
  to get
  $\sourceclausexvar[l]{r}$
  in total extra space~$4$.

  It is easy to see that this proof has width
  $\bigoh{\pebdeg}$,
  which proves the claim about width in
  \refpr{pr:pebbling-contradiction-has-short-resolution-refutation}.
  To get the claim about length, 
  we observe that the subderivation 
  needed when a vertex is black-pebbled has length
  $\Bigoh{\pebdeg^2}$.
  If we use a pebbling that
  black-pebbles all vertices once in topological order without ever
  removing a pebble, we
  get a refutation in length
  $\lengthofsmall{\proofstd} 
  = \Bigoh{\pebdeg^2 \cdot \setsizesmall{\vertices{G}}}$.%
\ifthenelse
{\boolean{maybeSIAM}}
{\explicitqedhere}
{\end{proof}}

Thus, the refutation clause space of a pebbling contradiction is 
upper-bounded by the black pebbling price of the underlying DAG.
\Refpr{pr:clause-space-upper-bounded-by-pebbling-price}
is not quite an optimal strategy  \wrt clause space, though.
For binary trees
\cite{ET03CombinatorialCharacterization}
improved this bound   somewhat  to
$
\mbox{$\clspaceref{\pebcontr[T_h]{2}}$} 
\leq 
\frac{2}{3}h + \bigoh{1}
$
by constructing resolution proofs that try to mimic not black
pebblings but instead optimal \emph{black-white} pebblings of~$T_h$ 
as presented in~%
\cite{LT80SpaceComplexityPebbleGamesTrees}.
%
%
And
for one variable per vertex,
we know 
from~%
\refth{th:constant-refutation-space-degree-one-pebbling-contradiction}
that
$
\mbox{$\clspaceref{\pebcontr[G]{1}}$} 
= \bigoh{1}$.

Proving lower bounds on space for pebbling contradictions 
of degree $\pebdeg \geq 2$
has turned out to be much harder. For quite some time there was no
lower bound on 
$\clspaceref{\pebcontr[G]{\pebdeg}}$
for any DAG~$G$ in general resolution (in terms of pebbling price or
otherwise). 
In~\cite{ET03CombinatorialCharacterization},
a lower bound 
$\mbox{$\clspaceref[\treeresnot]{\pebtreecontr[h]{\pebdeg}}$} 
= h + \bigoh{1}$
was obtained for the special case of tree-like resolution.
Unfortunately, this does not tell us anything about  general resolution.
For tree-like resolution,  if the only way of deriving 
a clause
$\cld$ is from
clauses 
$\clc_1, \clc_2$
\st
$\clspacederiv[\treeresnot]{\fstd}{\clc_i} \geq s$,
then it holds that
$\clspacederiv[\treeresnot]{\fstd}{\cld} \geq s + 1$
since one of the clauses $\clc_i$ must be kept in memory while deriving
the other clause. This seems to be very different from how 
general resolution works \wrt space.
In
\cite{Nordstrom06NarrowProofsMayBeSpaciousSTOCtoappearinSICOMP},
\theauthorJN showed a lower bound
$
\clspaceref{\pebcontr[T_h]{\pebdeg}} 
=
\bigomega{h}
$
for binary trees and $\pebdeg \geq 2$, which matches the upper
bound up to a constant factor.
As the techniques in 
\cite{Nordstrom06NarrowProofsMayBeSpaciousSTOCtoappearinSICOMP}
do not yield anything for more general graphs,
this is all that was known prior to this paper.

%
%
We now try to present our own intuition for what the correct lower bound on
the refutation clause space of pebbling contradictions 
\emph{should be}. 
Although the reasoning is quite informal and non-rigorous, our hope is that
it will help the reader to navigate the formal proofs that will follow.

As we noted above, the resolution refutation of 
$\pebtreecontr[h]{2}$
in~\cite{ET03CombinatorialCharacterization}
used to prove the
\mbox{$\frac{2}{3}h + \bigoh{1}$} upper bound for binary tree pebbling
contradictions  is structurally quite similar to the optimal
black-white pebbling of  $T_h$ presented in 
\cite{LT80SpaceComplexityPebbleGamesTrees}, 
and it somehow feels implausible that  
any resolution refutation 
would be able to do significantly better.  
Also, the lower bound in
\cite{Nordstrom06NarrowProofsMayBeSpaciousSTOCtoappearinSICOMP}
is proven by relating resolution refutations to black-white pebblings
and deriving a lower bound on clause space in terms of pebbling price.
This raises the suspicion that the black-white pebbling price
$\bwpebblingprice{G}$
might be a lower bound for
$\clspaceref{\pebcontr[G]{\pebdeg}}$
also for more general graphs as long as
$\pebdeg \geq 2$.
    
This suspicion is somewhat strengthened by the fact that
for variable space, we do have such a lower bound 
in terms of black-white pebbling price.%
\footnote{%
  To be precise, the result in \cite{Ben-Sasson02SizeSpaceTradeoffs}
  is for $\pebdeg=1$,  but the proof generalizes easily to 
  any $\pebdeg \in \Nplus$.} 

\begin{theorem}[\cite{Ben-Sasson02SizeSpaceTradeoffs}]
  \label{th:variable-space-geq-bw-pebbling-price}
  For any $\pebdeg \in \Nplus$,
  $\varspaceref{\pebcontr[G]{\pebdeg}} \geq \bwpebblingprice{G}$.
\end{theorem}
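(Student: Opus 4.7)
The plan is to extract from any resolution refutation $\proofstd = \resderivspacesequence{\clsc_0, \ldots, \clsc_\stoptime}$ of $\pebcontr[G]{\pebdeg}$ a complete black-white pebbling $\pebbling_\proofstd = \setsmall{\pconf_0, \ldots, \pconf_\stoptime}$ of $G$ whose cost is bounded above by $\varspaceof{\proofstd}$. Since one can always append a constant number of final moves (to cover the target axioms), this shows $\bwpebblingprice{G} \leq \varspaceref{\pebcontr[G]{\pebdeg}}$, perhaps up to an additive constant which can be absorbed.

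At each time step $t$, I define
\[
B_t = \setdescr{v \in \vertices{G}}
{\clsc_t \impl \textstyle\Lor_{i=1}^{\pebdeg} \varx(v)_i},
\qquad
W_t = \setdescr{v \in \vertices{G} \setminus B_t}
{\text{some } \varx(v)_i \in \vars(\clsc_t)},
\]
so a vertex carries a black pebble if the current memory semantically establishes its ``positive truth disjunction,'' and a white pebble if some variable of $v$ appears on the blackboard but truth has not (yet) been established. The first key observation is a charging argument: each vertex in $B_t \cup W_t$ contributes at least one literal to $\clsc_t$, so $\setsize{B_t \cup W_t} \leq \varspaceof{\clsc_t}$. (For $B_t$, note that if $\clsc_t \impl \Lor_i \varx(v)_i$ then some variable $\varx(v)_i$ must occur in $\clsc_t$, since otherwise one could satisfy $\clsc_t$ while setting all these variables to $0$, contradicting the implication.)

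The main step is verifying that $(\pconf_{t-1}, \pconf_t)$ is always a valid sequence of black-white pebble moves. Axiom downloads and erasures can only add or remove literals corresponding to at most two vertices (the vertex $v$ and at most one predecessor for pebbling axioms), so any change in $B_t \cup W_t$ involves a bounded number of vertices. The nontrivial case is when a vertex $v$ enters $B_t$: I need that every immediate predecessor of $v$ is already pebbled in $\pconf_t$. Here the combinatorics of the pebbling contradiction is essential: since the only way for $\clsc_t$ to entail $\Lor_i \varx(v)_i$ that is not already available is by combining axioms which, for a non-source $v$ with predecessors $u, w$, necessarily involve the literals $\olnot{\varx(u)}_i$ and $\olnot{\varx(w)}_j$, one shows by semantic reasoning (consider flipping a predecessor's variables to~$0$ in any satisfying assignment) that if $v$ enters $B_t$ with predecessors not in $B_{t-1}$, then the relevant negative literals of $u$ and $w$ must be present in $\clsc_t$, forcing $u, w \in B_t \cup W_t$. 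Symmetric reasoning handles removal of a white pebble from $v$: this can only happen once the predecessors are pebbled so that $v$'s truth disjunction is no longer merely assumed.

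The main obstacle I anticipate is making the ``if $v$ newly enters $B_t$ then its predecessors are pebbled'' step airtight, because the definition of $B_t$ is semantic (implication) rather than syntactic, so I cannot simply inspect which axioms were downloaded. The cleanest route is probably to fix, for any $v \in B_t$, a restriction $\restr$ sending all variables of its predecessors to $0$ and observing that under $\restr$ the formula $\clsc_t\!\!\upharpoonright_\restr$ must still imply $\Lor_i \varx(v)_i\!\!\upharpoonright_\restr$; by an easy case analysis on the axioms of $\pebcontr[G]{\pebdeg}$ this is impossible unless $u$ or $w$ already contributed variables to $\clsc_t$. Once this is established the pebbling validity and the cost bound fit together, and concatenating with a $\Bigoh{1}$-cost tail to clear the sink completes the reduction and yields $\bwpebblingprice{G} \leq \varspaceref{\pebcontr[G]{\pebdeg}}$.
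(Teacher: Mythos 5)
The core of your argument---that if a non-source $v$ newly enters $B_t$ then both of its predecessors must carry pebbles---is false, and the restriction trick you sketch as a remedy does not close the gap. Take $\pebdeg = 1$ over $\pyramidgraph[2]$ with labels as in Figure~\ref{fig:pebbling-contradiction-for-Pi-2} (so $u$ has predecessors $r$ and~$s$). Download $\olnot{\varx(r)}_1 \lor \olnot{\varx(s)}_1 \lor \varx(u)_1$, download $\varx(s)_1$, resolve to get $\olnot{\varx(r)}_1 \lor \varx(u)_1$, and then \emph{erase} both premises, arriving at $\clsc_{t-1} = \setsmall{\olnot{\varx(r)}_1 \lor \varx(u)_1}$. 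No variable of~$s$ appears here, so under your definitions $s$ carries no pebble. Now download the source axiom $\varx(r)_1$: the set $\clsc_t = \setsmall{\olnot{\varx(r)}_1 \lor \varx(u)_1, \varx(r)_1}$ implies $\varx(u)_1$, so $u$ newly enters $B_t$, yet its predecessor $s$ is unpebbled at both times $t-1$ and~$t$. The required black-pebble placement on $u$ is illegal. Your restriction argument is silent here: sending all predecessor variables to~$0$ falsifies the unit clause $\varx(r)_1$, so $\restrict{\clsc_t}{\restr}$ contains the empty clause and the restricted implication holds vacuously; no conclusion about which variables occur in $\clsc_t$ follows.

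Nor can this be absorbed into an additive constant. In a taller pyramid the missing predecessor is a non-source vertex (a middle vertex whose positive literal was derived, consumed by a resolution step, and then erased), and patching the illegal placement with a temporary white pebble fails: removing that white pebble requires its own predecessors to be pebbled, which fails again by the same mechanism, cascading a chain of temporary white pebbles down to the sources. The peak cost of this excursion is unbounded and, crucially, is not covered by the pointwise charging bound $\setsizesmall{B_t \cup W_t} \leq \varspaceofsmall{\clsc_t}$ on which your cost argument rests. What you are running into is precisely the obstacle that this paper spends Sections~\ref{sec:definition-multi-pebble-game}--\ref{sec:multi-pebble-configuration-cost} circumventing for clause space: an \emph{erasure} can delete every trace of a vertex whose pebble a legal black-white pebbling would still need, so no pointwise semantic translation from clause configurations to pebble configurations yields a legal pebbling directly. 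Note also that the paper does not prove Theorem~\ref{th:variable-space-geq-bw-pebbling-price}---it only cites~\cite{Ben-Sasson02SizeSpaceTradeoffs}, with a footnote that Ben-Sasson's argument is for $\pebdeg = 1$ and generalizes---so you would need to consult his actual proof, which necessarily navigates this difficulty by a different route than the one you propose.
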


If the refutation clause space of pebbling contradictions for general
DAGs would be constant or very slowly growing, 
\refth{th:variable-space-geq-bw-pebbling-price} would imply that
as $\bwpebblingprice{G}$ grows larger,  
the clauses in memory get wider, and thus weaker. Still it would
somehow be possible to derive a contradiction from a 
very small number
of these clauses of unbounded width.  This appears counterintuitive.

On the other hand, for one variable per vertex, \ie  $\pebdeg = 1$,
refutations of
$\pebcontr[G]{1}$ in constant space have exactly these  
``counterintuitive'' properties. The resolution refutation of
$\pebcontr[G]{1}$
in~%
\refth{th:constant-refutation-space-degree-one-pebbling-contradiction}
is constructed by first downloading the pebbling axiom for the target $z$
and then moving the false literals downwards by resolving with pebbling axioms 
for vertices $v \in \vertices{G} \setminus S$ in reverse topological order.
This finally yields a clause
$\Lor_{v \in S} \olnot{\varx(v)}_1 \lor \varx(z)_1$
of width $\setsizesmall{S} + 1$,
which can be eliminated by resolving 
with the source axioms $\varx(v)_1$ one by one  for all $v \in S$ 
and then with the target axiom~%
$\olnot{\varx(z)}_1$
to yield the empty clause~$\emptycl$.
    
If we want to establish 
a non-constant lower bound on
$\clspaceref{\pebcontr[G]{\pebdeg}}$  for $\pebdeg \geq 2$,
we have to pin down why this case is different.
Intuitively, the difference is that with only one variable per vertex,
a single clause
$\olnot{\varx(v_1)}_1 \lor \ldots \lor \olnot{\varx(v_m)}_1$
can express the disjunction of the falsity of an arbitrary number of
vertices 
$v_1, \ldots, v_m$,
but for 
$\pebdeg = 2$,
the straightforward way of expressing that both variables
$\varx(v_i)_1$ and $\varx(v_i)_2$
are false for at least one out of $m$~vertices requires
$2^{m}$~clauses.

As was argued in 
\refsec{sec:proof-overview-and-paper-organization},
to prove a lower bound on the refutation clause space 
of pebbling contradictions it seems natural to try to interpret
resolution refutations of 
$\pebcontr[G]{\pebdeg}$ 
in terms of pebblings of the underlying graph~$G$.
Let us say that a vertex $v$ is 
``true'' if
$\Lor_{i=1}^{\pebdeg} \varx(v)_{i}$
has been derived 
and ``false'' if
$\olnot{\varx(v)}_i$ has been derived for all $i \in \intnfirst{\pebdeg}$.
Any resolution proof refutes a pebbling contradiction by
deriving that some vertex $v$ is both true and false
and then resolving to get~$\emptycl$.
Let $w$ be any vertex with predecessors $u,v$.
Then we can see that
if we have derived that $u$ and $v$ are true,  by downloading
$
\stdnot{\varx(u)}_{i} \lor \stdnot{\varx(v)}_{j} \lor
\Lor_{l=1}^{\pebdeg} \varx(w)_{l}
$
for all
$i,j \in \intnfirst{\pebdeg}$
we can derive
$\Lor_{l=1}^{\pebdeg} \varx(w)_{l}$.
This appears analogous  to the rule that if
$u$ and $v$ are black-pebbled we can place a black pebble on~$w$.
In the opposite direction, if we know 
$\olnot{\varx(w)}_l$ 
for all 
$l \in \intnfirst{\pebdeg}$,
using the axioms 
$ \stdnot{\varx(u)}_{i} \lor \stdnot{\varx(v)}_{j} \lor 
 \Lor_{l=1}^{\pebdeg} \varx(w)_{l} $ 
we can derive that either $u$ or $v$ is false.  This looks similar to
eliminating a white pebble on $w$ by placing white pebbles on the
predecessors $u$ and~$v$, and then removing the pebble from~$w$.
Generalizing this loose, intuitive reasoning, we argue 
that a set of
black-pebbled vertices $V$ should correspond to the derived
conjunction of truth of all $v \in V$, and that a set of white-pebbled
vertices $W$ should correspond to the derived disjunction of falsity
of some $w \in W$.

Suppose that we could show that as the resolution derivation proceeds,
the black and white pebbles corresponding to different clause
configurations as outlined above move about on 
the vertices of
$G$ in accordance with
the rules of the pebble game. If so, we would get 
that there is some clause configuration $\clsc$ corresponding to a lot of
pebbles. This could in turn hopefully yield a 
lower bound for the refutation clause space. For if $\clsc$ corresponds to
$N$ black pebbles, \ie implies $N$ disjoint clauses, it seems likely
that  $\setsizesmall{\clsc}$ should be linear in~$N$.
And if $\clsc$ corresponds to $N$ white pebbles, 
$\setsizesmall{\clsc}$~should grow with~$N$ if $\pebdeg \geq 2$, since 
$\clsc$ has to force $\pebdeg$ literals false simultaneously for one
out of $N$~vertices.

This is the guiding intuition that served as a starting point for
proving the results in this paper. 
And although quite a few complications arise along the way,
%
%
we believe that it is important when reading the paper not to let all 
technical details obscure the rather simple intuitive correspondence
sketched above.

\theoremstyle{plain}
\newtheorem*{HertelPitassiTheorem}
{\Refth{th:Hertel-Pitassi} (restated)}
\newtheorem*{LengthSpaceTheorem}
{\Refth{th:easy-length-clause-space-trade-off} (restated)}
\newtheorem*{LengthWidthTheorem}
{\Refth{th:easy-length-width-trade-off} (restated)}

\section{A Simplified Way of Proving Trade-off Results}
\label{sec:simplified-way-of-proving-trade-offs}

Before we launch into the proof of the main result of this paper,
however,
we quickly present our simplification of the length-space
trade-off result in~%
\cite{HP07ExponentialTimeSpaceSpeedupFOCS},
and show how the same ideas can be used to prove other related theorems.
We also point out two key ingredients needed for our proofs to work
and discuss possible conclusions to be drawn regarding proving
trade-off results for resolution.
We remark that this section is a somewhat polished write-up of the
results previously announced in
\cite{Nordstrom07SimplifiedWay}.

%
%

We will need the following easy observation.

\begin{observation}
  \label{obs:disjoint-variable-sets}
  Suppose that
  $
  \formf = \formg \land \formh
  $
  where
  $\formg$ and  $\formh$
  are unsatisfiable \cnfform{}s over disjoint sets of variables. Then
  any resolution refutation
  $\refof{\proofstd}{\formf}$
  must contain a refutation of either
  $\formg$ or~$\formh$.
\end{observation}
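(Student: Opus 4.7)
The plan is to combine a one-line semantic argument with a syntactic extraction by induction on the DAG of~$\pi$. The key structural observation, on which everything rests, is that because $\vars{\formg} \cap \vars{\formh} = \emptyset$, each clause of $\formf = \formg \land \formh$ has all its variables in one of the two disjoint sides, and therefore every axiom download in $\pi$ is unambiguously a download either of a $\formg$-axiom or of an $\formh$-axiom. Write $A_{\formg} \subseteq \formg$ and $A_{\formh} \subseteq \formh$ for the (multi)sets of axioms actually downloaded in $\pi$. By soundness, $A_{\formg} \cup A_{\formh}$ is unsatisfiable; and if both $A_{\formg}$ and $A_{\formh}$ were satisfiable, satisfying assignments on the disjoint variable sets would combine to satisfy their union, contradiction. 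Thus WLOG $A_{\formg}$ is itself unsatisfiable, which already shows in a weak sense that $\pi$ ``contains'' enough of~$\formg$ to refute it.

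To upgrade this to an honest sub-derivation inside $\pi$, I would prove by induction on the derivation DAG of $\pi$ the following stronger claim: for every clause $C$ derived in $\pi$, there is a sub-DAG of $\pi$ that is either a derivation from $\formg$-axioms alone of some $C' \subseteq C|_{\vars{\formg}}$, or a derivation from $\formh$-axioms alone of some $C' \subseteq C|_{\vars{\formh}}$, where $C|_{\vars{\formg}}$ and $C|_{\vars{\formh}}$ denote the literals of $C$ over the respective variable sets. The base case (axiom downloads) is immediate from the structural observation above. In the inductive step, $C$ is the resolvent of $C_1 = B_1 \lor y$ and $C_2 = B_2 \lor \stdnot{y}$, and the variable $y$ lies in exactly one of $\vars{\formg}$ or $\vars{\formh}$, say $\vars{\formg}$; one then does a four-way case analysis on which side ($\formg$ or $\formh$) the two inductive sub-derivations of $C_1$ and $C_2$ come from. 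The pleasant point is that whenever the resolved variable's side disagrees with the side of one of the inductive sub-derivations, that sub-derivation automatically lives inside~$C|_{\vars{\formh}}$ (because $y \notin \vars{\formh}$ means the offending literal was never there to begin with), and we just inherit it; the only ``real'' case is when both inductive sub-derivations are on the $\formg$ side, and then we either resolve them or inherit whichever one already omits the pivot.

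Applying this to $C = \emptycl$, for which $\emptycl|_{\vars{\formg}} = \emptycl|_{\vars{\formh}} = \emptycl$, yields a sub-DAG of $\pi$ that is a resolution refutation of $\formg$ or of $\formh$. The main obstacle is purely bookkeeping in the case analysis of the inductive step; there is no genuine combinatorial difficulty, and the semantic argument in the first paragraph is really what is doing the work. I would also remark at the end that this ``axiom-partitioning'' trick is exactly what makes the glued formulas $\formg_n \land \formh_m$ used in the trade-off proofs behave in the advertised way: any minimum-resource refutation has to commit to refuting one side, and the side it commits to determines the resource profile.
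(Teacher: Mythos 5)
Your first paragraph correctly shows that the axioms of one side downloaded by $\proofstd$ already form an unsatisfiable set, but this is weaker than the observation and does not by itself yield the actual sub-refutation that the paper's applications (lower bounds on the space and length of $\proofstd$) require. The inductive argument you then give has a concrete gap: your invariant asserts a \emph{sub-DAG of $\proofstd$}, yet in the case you call ``real'' --- both premises' sub-derivations on the same side as the pivot $y$, and the pivot surviving in both $C'_1$ and $C'_2$ --- you resolve $C'_1$ with $C'_2$, and if either $C'_i$ is a proper subclause of $C_i$, the resulting resolvent is a proper subclause of $C$ that in general labels no node of $\proofstd$. The ``sub-DAG of $\proofstd$'' part of your invariant is therefore not preserved by the inductive step.

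The subclause weakening is the culprit, and it is unnecessary. Since $\vars{\formg}\cap\vars{\formh}=\emptyset$, each axiom of $\formf$ has all its variables on a single side, and a resolution step requires its two premises to share a variable; hence no step of $\proofstd$ can ever resolve a clause over $\vars{\formg}$ with a clause over $\vars{\formh}$. A direct induction on the DAG of $\proofstd$ then shows that every clause of $\proofstd$ has all its variables on exactly one side and its entire sub-derivation in $\proofstd$ uses axioms from only that side --- this is precisely the paper's one-line proof. With this stronger invariant you can always take $C'=C$, your ``resolve'' step is literally the resolution step already present in $\proofstd$, and the sub-DAG of $\proofstd$ rooted at $\emptycl$ is itself a refutation of whichever of $\formg$ or $\formh$ its axioms come from.
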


\begin{proof}
  By induction, 
  we can never resolve a clause derived from $\formg$ with a clause
  derived from~$\formh$, since the sets of variables of the two
  clauses are disjoint.
\end{proof}

\providecommand{\formf}{\ensuremath{F}}
\providecommand{\formg}{\ensuremath{G}}
\providecommand{\formh}{\ensuremath{H}}

\providecommand{\funcf}{\ensuremath{f}}
\providecommand{\funcg}{\ensuremath{g}}
\providecommand{\funch}{\ensuremath{h}}

\subsection{A Proof of Hertel and Pitassi's Trade-off Result}
\label{sec:simplified-proof}

Using the notation in
\refsec{sec:formal-preliminaries},
and improving the parameters somewhat, the
length-variable space trade-off theorem of 
Hertel and Pitassi~\cite{HP07ExponentialTimeSpaceSpeedupFOCS}
can be stated as follows.


\begin{HertelPitassiTheorem}
  There is a family of \cnfform{}s $\setsmall{\formf_n}_{n=1}^{\infty}$
  of size $\Tightsmall{n}$ \st:
  \begin{compactitem}
  \item 
    The minimal variable space of refuting $\formf_n$
    in resolution
    is
    $\varspacerefsmall{\formf_n} = \Tightsmall{n}$.

  \item 
    Any 
    resolution
    refutation $\refof{\proofstd}{\formf_n}$
    in minimal variable space has length
    $\exp (\Lowerboundsmall{\sqrt{n}})$.

  \item 
    Adding at most $2$ extra units of storage, 
    one can 
    obtain  a 
    refutation $\proofstd'$ in 
    space
    $\varspaceofsmall{\proofstd'} =
    \mbox{$\varspacerefsmall{\formf_n} + 3$} = \Tightsmall{n}$
    and length 
    $\lengthofsmall{\proofstd'} = \Ordosmall{n}$,
    \ie linear in the formula size.
  \end{compactitem}
\end{HertelPitassiTheorem}

We note that the \cnfform{}s used by Hertel and Pitassi, as well as
those in our proof, have clauses of width~$\Tightsmall{n}$.


\begin{proof}[Proof of \refth{th:Hertel-Pitassi}]
  Let
  $\formg_n$
  be \cnfform{}s as in
  \refth{th:clause-space-approx-n-clauses}
  having size $\Tightsmall{n}$,
  refutation length
  $\lengthrefsmall{\formg_n} = \exp(\Lowerboundsmall{n})$
  and refutation clause space
  $\clspacerefsmall{\formg_n} = \Tightsmall{n}$.
  Let us define
  $\funcg(n) = \varspacerefsmall{\formg_n}$
  to be the refutation variable space of the formulas.
  Then it holds that   
  $
  \Lowerboundsmall{n}
  =
  \funcg(n)
  =
  \Ordocompact{n^2}
  $.

  Let $\formh_m$ be the formulas
  \begin{equation}
    \formh_m = 
    \vary_1 \land \formuladots \land \vary_m \land 
    (\olnot{\vary}_1 \lor \formuladots \lor \olnot{\vary}_m)
    \eqperiod
  \end{equation}
  It is not hard to see that
  there are resolution refutations 
  $\refof{\proofstd}{\formh_m}$
  in length
  $\lengthofsmall{\proofstd} = 2m + 1$
  and variable space 
  $\varspaceofsmall{\proofstd} = 2m$,
  and that 
  $\lengthrefsmall{\formh_m} = 2m + 1$
  and
  $\varspacerefsmall{\formh_m} = 2m$
  are also the lower bounds
  (all clauses must be used in any refutation, and
  the minimum space refutation must start by 
  downloading the
  wide clause and some unit clause, and then resolve).

  Now define
  \begin{equation}
    \formf_n = \formg_n \land \formh_{\floorsmall{\funcg(n)/2} + 1}
  \end{equation}
  where
  $\formg_n$ and 
  $\formh_{\floorsmall{\funcg(n)/2} + 1}$ 
  have disjoint sets of variables.
  By
  \refobs{obs:disjoint-variable-sets},
  any resolution refutation of
  $\formf_n$ 
   refutes either
  $\formg_n$ or~$\formh_{\floorsmall{\funcg(n)/2} + 1}$.
  We have
  \begin{equation}
    \varspacerefcompact{\formh_{\floorsmall{\funcg(n)/2} + 1}}
    = 2 \cdot ( \floorsmall{\funcg(n)/2} + 1)
    > \funcg(n)
    = \varspacerefsmall{\formg_n}
    \eqcomma
  \end{equation}
  so a resolution refutation in minimal variable space must refute
  $\formg_n$ in length
  $\exp(\Lowerboundsmall{n})$.
  However, allowing at most two more literals in memory, 
  the resolution refutation can disprove the formula
  $\formh_{\floorsmall{\funcg(n)/2} + 1}$ 
  instead
  in length linear in the (total) formula size.

  Thus, we have a formula family
  $\setsmall{\formf_n}_{n=1}^{\infty}$
  of size
  $
  \Lowerboundsmall{n} =
  \sizeofsmall{\formf_n} = 
  \Ordocompact{n^2}
  $
  refutable in length and variable space both linear in the formula size, 
  but where any minimum variable space refutation must have length
  $\exp(\Lowerboundsmall{n})$.
  Adjusting the indices as needed, we get a formula family with a
  trade-off of the form stated in 
  \refth{th:Hertel-Pitassi}.
\end{proof}

\subsection{Some Other Trade-off Results for Resolution}
\label{sec:some-other-trade-offs}

Using a similar trick as in the previous 
subsection,     
we can prove the
following length-clause  space trade-off.

\begin{LengthSpaceTheorem}
    There is a family of \kcnfform{}s
  $\setsmall{\fstd_n}_{n=1}^{\infty}$
  of size $\Tightsmall{n}$ \st:
  \begin{compactitem}
  \item
    The minimal  clause space of refuting $\fstd_n$ 
    in resolution
    is
    $\clspacerefsmall{\fstd_n} = \Tightcompact{\sqrt[3]{n}}$.

  \item
    Any resolution refutation
    $\refof{\proofstd}{\fstd_n}$
    in minimal clause space
    must have length
    $\lengthofsmall{\proofstd} =  
    \exp \bigl(\Lowerboundcompact{\sqrt[3]{n}}\bigr)$.

  \item
    There are  
    resolution 
    refutations
    $\refof{\proofstd'}{\fstd_n}$
    in 
    asymptotically minimal 
    clause space
    $\clspaceofsmall{\proofstd'} =
    \Ordocompact{\clspacerefsmall{\fstd_n}}$
    and length
    $\lengthofsmall{\proofstd'} = \Ordosmall{n}$, 
    \ie linear in the formula size.
  \end{compactitem}
\end{LengthSpaceTheorem}

The same game can be played with refutation width as well.

\begin{theorem}
  \label{th:easy-length-width-trade-off}
    There is a family of \kcnfform{}s
  $\setsmall{\fstd_n}_{n=1}^{\infty}$
  of size $\Tightsmall{n}$ \st:
  \begin{compactitem}
  \item
    The minimal width of refuting $\fstd_n$ is
    $\widthrefsmall{\fstd_n} = \Tightcompact{\sqrt[3]{n}}$.

  \item
    Any          
    refutation
    $\refof{\proofstd}{\fstd_n}$
    in minimal width
    must have length
    $\lengthofsmall{\proofstd} 
    =          
    \exp \bigl(\Lowerboundcompact{\sqrt[3]{n}}\bigr)$.

  \item
    There are                             
    refutations
    $\refof{\proofstd'}{\fstd_n}$
    with    
    $\widthofsmall{\proofstd'} =
    \Ordocompact{\widthrefsmall{\fstd_n}}$
    and         
    $\lengthofsmall{\proofstd'} = \Ordosmall{n}$. 
  \end{compactitem}
\end{theorem}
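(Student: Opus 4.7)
The plan is to mimic the disjoint-conjunction trick used in the proofs of \refth{th:Hertel-Pitassi} and \refth{th:easy-length-clause-space-trade-off}, now instrumented for width rather than for variable space or clause space. Concretely, I would let $\formg_m$ be the maximally hard \xcnfform{3}{}s from \refth{th:clause-space-approx-n-clauses}, so that $\sizeofsmall{\formg_m} = \Tightsmall{m}$, $\widthrefsmall{\formg_m} = \Tightsmall{m}$ and $\lengthrefsmall{\formg_m} = \exp(\Lowerboundsmall{m})$; and let $\formh_k$ be the moderately hard \xcnfform{3}{}s from \refth{th:moderately-hard-formulas}, which have $\sizeofsmall{\formh_k} = \Tightcompact{k^3}$ and $\widthrefsmall{\formh_k} = \Tightsmall{k}$ but nevertheless admit a resolution refutation of length $\Ordocompact{k^3}$ and width $\Ordosmall{k}$. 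Then I would set
\begin{equation*}
  \fstd_n = \formg_m \land \formh_k
\end{equation*}
over \emph{disjoint} variable sets, choosing $m = \Tightcompact{n^{1/3}}$ and $k = \Tightcompact{n^{1/3}}$ so that $\sizeofsmall{\fstd_n} = \Tightsmall{n}$ and, crucially, so that the hidden constants make $\widthrefsmall{\formh_k}$ \emph{strictly} larger than $\widthrefsmall{\formg_m}$. This is always possible: since both widths are linear in their respective parameters, one simply scales $k$ up by a sufficiently large constant factor relative to $m$ and then rebalances so that $k^3 = \Tightsmall{n}$ still holds.

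Next I would apply \refobs{obs:disjoint-variable-sets} to conclude that any resolution refutation of $\fstd_n$ contains a refutation of either $\formg_m$ or $\formh_k$. Therefore $\widthrefsmall{\fstd_n} = \minofsmall{\widthrefsmall{\formg_m}, \widthrefsmall{\formh_k}} = \widthrefsmall{\formg_m} = \Tightcompact{n^{1/3}}$, which is the first claim. For the second claim, any refutation $\refof{\proofstd}{\fstd_n}$ achieving this minimum width cannot possibly contain a refutation of $\formh_k$ (which by construction requires strictly larger width), so it must contain a refutation of $\formg_m$ and therefore has length at least $\lengthrefsmall{\formg_m} = \exp(\Lowerboundcompact{n^{1/3}})$. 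For the third claim, the ``short'' refutation $\proofstd'$ ignores $\formg_m$ entirely and runs the refutation of $\formh_k$ from \refth{th:moderately-hard-formulas}, which has width $\Ordosmall{k} = \Ordocompact{\widthrefsmall{\fstd_n}}$ and length $\Ordocompact{k^3} = \Ordosmall{n}$.

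The two ingredients making the proof go through are exactly the ones highlighted at the start of \refsec{sec:simplified-way-of-proving-trade-offs}: an observation that forbids mixing clauses from two subformulas on disjoint variables, and the existence of a family of ``moderately hard'' formulas whose hardness in one complexity measure (here width) can be bypassed at the cost of polynomially more of that same measure. The only delicate step is the quantitative one: arranging that $\widthrefsmall{\formh_k} > \widthrefsmall{\formg_m}$ while simultaneously keeping $\sizeofsmall{\formg_m} + \sizeofsmall{\formh_k} = \Tightsmall{n}$ and keeping the short refutation of $\formh_k$ within length $\Ordosmall{n}$. I expect this to be the main (minor) obstacle, and it is handled purely by picking the constants in $m = cn^{1/3}$ and $k = c'n^{1/3}$ appropriately given the constants hidden inside \refthreeths{th:clause-space-approx-n-clauses}{th:moderately-hard-formulas}{obs:disjoint-variable-sets}; no new resolution-theoretic idea is required beyond what was already used in \refth{th:easy-length-clause-space-trade-off}.
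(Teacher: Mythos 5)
Your proposal is correct and takes essentially the same approach as the paper: the paper proves \refth{th:easy-length-clause-space-trade-off} by setting $\formf_n = \formg_n \wedge \formh_{\funcg(n)}$ with $\funcg(n) = \minofset[\,|\,]{m}{\clspacerefsmall{\formh_m} > \clspacerefsmall{\formg_n}}$ and then reindexing, and explicitly states that \refth{th:easy-length-width-trade-off} ``is proved in exactly the same manner.'' Your version merely bakes the reindexing in from the start by choosing $m = \Tightcompact{n^{1/3}}$ and $k = \Tightcompact{n^{1/3}}$ directly, and replaces the paper's clean min-definition of $\funcg$ with a hand-tuned constant factor between $k$ and $m$; both are valid ways to arrange that $\widthrefsmall{\formh_k} > \widthrefsmall{\formg_m}$ while keeping the total size $\Tightsmall{n}$.
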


We only present the proof of
\refth{th:easy-length-clause-space-trade-off},
as
\refth{th:easy-length-width-trade-off}
is proved in exactly the same manner.

\begin{proof}[Proof of    \refth{th:easy-length-clause-space-trade-off}]
  Let
  $\formg_n$
  be a \xcnfform{3}{} family as in
  \refth{th:clause-space-approx-n-clauses}
  having size $\Tightsmall{n}$,
  refutation length
  $\lengthrefsmall{\formg_n} = \exp(\Tightsmall{n})$,
  and refutation clause space
  $\clspacerefsmall{\formg_n} = \Tightsmall{n}$.
  Let
  $\formh_m$
  be a \xcnfform{3}{} family as in
  \refth{th:moderately-hard-formulas}
  of size
  $\Tightcompact{m^3}$ 
  \st 
  $\lengthrefsmall{\formh_m} = \Ordocompact{m^3}$
  and
  $\clspacerefsmall{\formh_m} = \Tightsmall{m}$.
  Define
  \begin{equation}
    \funcg(n) = \minofsetcompact[\,|\,]{m}
          {\clspacerefsmall{\formh_m} >
          \clspacerefsmall{\formg_n}}
          \eqperiod
  \end{equation}
  Note that since
  ${\clspacerefsmall{\formh_m}} = \Lowerboundsmall{m}$
  and
  $\clspacerefsmall{\formg_n} = \Ordosmall{n}$,
  we know that
  $\funcg(n) = \Ordosmall{n}$.

  Now as before let
  $
  \formf_n = \formg_n \land \formh_{\funcg(n)}
  $,
  where 
  $\formg_n$ and 
  $\formh_{\funcg(n)}$ 
  have disjoint sets of variables.
  By
  \refobs{obs:disjoint-variable-sets},
  any resolution refutation of
  $\formf_n$ 
  is a refutation of either
  $\formg_n$ or~$\formh_{\funcg(n)}$.
  Since 
  $\funcg(n)$
  has been chosen so that
  $\mbox{$\clspacerefcompact{\formh_{\funcg(n)}}$} > \clspacerefsmall{\formg_n}$,
  a refutation in minimal clause space has to refute~$\formg_n$, which
  requires exponential length.
  However, since
  $\funcg(n) = \Ordosmall{n}$,  
  \refth{th:moderately-hard-formulas}
  tells us that there are refutations of
  ${\formh_{\funcg(n)}}$
  in length
  $\Ordocompact{n^3}$
  and clause space~%
  $\Ordosmall{n}$.
%
\end{proof}

\subsection{Making the Main Trick Explicit}
\label{sec:main-trick}

The proofs of the theorems in 
\reftwosecs 
{sec:simplified-proof}
{sec:some-other-trade-offs} 
come very easily; in fact almost \emph{too} easily.  What is it that
makes this possible?  
In this and the next subsection, we want to highlight two key
ingredients in the constructions.

\newcommand{\cplxmeasureone}{M_1}
\newcommand{\cplxmeasuretwo}{M_2}

The common paradigm for the proofs of
\refthreeths
{th:easy-length-clause-space-trade-off}
{th:Hertel-Pitassi}
{th:easy-length-width-trade-off}
%
%
is as follows.
We are given two complexity measures
$\cplxmeasureone$
and 
$\cplxmeasuretwo$
that we want to trade off against one another.
We do this by finding formulas
$\formg_n$
and
$\formh_m$
\st
\begin{itemize}
\item 
  The formulas
  $\formg_n$ are very hard
  with respect to the  first 
  resource measured by
  $\cplxmeasureone$, 
  while 
  $\cplxmeasuretwo\bigl(\formg_n\bigr)$
  is at most some (more or less trivial) upper bound,
  
\item
  The formulas
  $\formh_m$ are very easy 
  with respect to 
  $\cplxmeasureone$,
  but there is some nontrivial 
\emph{lower} bound
  on the usage $\cplxmeasuretwo\bigl(\formh_m\bigr)$
  of the second resource,

\item
  The index
  $m = m(n)$
  is chosen 
  so as to minimize
  $\cplxmeasuretwo\bigl(\formh_{m(n)}\bigr)-
  \cplxmeasuretwo\bigl(\formg_n\bigr) > 0$,
  \ie
  so that
  $\formh_{m(n)}$
  requires \emph{just a little bit} more of the second resource than
  $\formg_n$.
%

\end{itemize}
Then for
$\formf_n = \formg_n \land \formh_{m(n)}$,
if we demand that a resolution refutation
$\proofstd$
must use 
the minimal amount
of the second resource, it will have to use a large amount of the
first resource.  
However, relaxing the requirement on the second resource
by the very small expression
$\cplxmeasuretwo\bigl(\formh_{m(n)}\bigr) -
\cplxmeasuretwo\bigl(\formg_n\bigr)$,
we can get a refutation $\proofstd'$
using small amounts of both resources.

%
%

Clearly, the formula families 
$\setsmall{\fstd_n}_{n=1}^{\infty}$
that we get in this way
are ``redundant'' in the sense that
each formula $\fstd_n$ is the conjunction of two formulas 
$\formg_n$ and~$\formh_m$
which are themselves already unsatisfiable.
Formally, 
we say that a formula $\fstd$ is 
\introduceterm{minimally unsatisfiable}
if $\fstd$ is unsatisfiable, but removing any clause $\clc \in \fstd$, 
the remaining subformula $\fstd \setminus \setsmall{\clc}$
is satisfiable.
We note that if we would add the requirement in
\reftwosecs
{sec:simplified-proof}
{sec:some-other-trade-offs}
that the formulas under consideration should
be minimally unsatisfiable, 
the proof idea outlined above fails completely.
In contrast, the result in
\cite{HP07ExponentialTimeSpaceSpeedupFOCS}
seems to be independent of any such conditions. 
What conclusions can be drawn from this?


On the one hand, trade-off results for minimally unsatisfiable
formulas seem more interesting, since they tell us something about a
property that some natural formula family has, rather than about some
funny phenomena arising because we glue together two 
totally 
unrelated formulas.

On the other hand, one could argue that the main motivation for
studying space is the connection to memory requirements for proof
search algorithms, \eg algorithms using clause learning. And for such
algorithms, a minimality condition might appear somewhat
arbitrary. There are no guarantees that ``real-life'' formulas will be
minimally unsatisfiable, and most probably there is no efficient way of
testing this condition.%
\footnote{%
The problem of deciding minimal unsatisfiability is
\NP-hard{} but not known to be in \NP.
Formally,
a language $L$ is in the complexity class \DP{} if and only if there are
two languages $L_1 \in \NP$ and $L_2 \in \coNP$ such that 
$L = L_1 \intersection L_2$~\cite{P92ComputationalComplexity}.
\MINIMALUNSATISFIABILITY{}
is \DP-complete~\cite{PW88ComplexityOfFacets}, 
and it seems to be commonly believed that
\DP
$\nsubseteq$
\NP $\union$ \coNP.%
}
So in practice, trade-off results for non-minimal
formulas might be just as interesting.


\subsection{An Auxiliary Trick for Variable Space}
\label{sec:auxiliary-trick}


A second important reason why our proof of
\refth{th:Hertel-Pitassi}
gives sharp results 
is that we are allowed to use \cnfform{}s of growing width.
It is precisely because of this that we can easily construct the needed
formulas $\formh_m$ that are hard \wrt variable space but
easy \wrt length.  
If we would have to restrict ourselves to \kcnfform{}s
for $\clwidth$ constant, it would be much more difficult to
find such examples. Although the formulas in
\refth{th:moderately-hard-formulas}
could be plugged in to give 
a slightly weaker trade-off,   
we are not aware of any family of \kcnfform{}s that can provably give
the very sharp result in
\refth{th:Hertel-Pitassi}.
(Note, though, that the formula families used in the proofs of
\reftwoths
    {th:easy-length-clause-space-trade-off}
    {th:easy-length-width-trade-off}
consist of \kcnfform{}s).

This is not the only example of a space measure behaving badly for
formulas of growing width.
We already discussed the lower bound
$ 
\clspacerefsmall{\fstd}  
\geq 
\widthrefsmall{\fstd} - \widthofarg{\fstd}
+ 3
$
on clause space in terms of
length in
\refth{th:small-clause-space-implies-small-width},
and the result in~%
\refth{th:narrow-proofs-may-be-spacious}
%
%
that this inequality is asymptotically strict in the sense
that there are \kcnfform families
$\fstd_n$ 
with
$\widthrefsmall{\fstd_n} = \Ordosmall{1}$
but
$\mbox{$\clspacerefsmall{\fstd_n}$}  = \Tightsmall{\log n}$. 

However, if we are allowed to consider formulas of growing width, the
fact that the inequality in~%
\refth{th:small-clause-space-implies-small-width}
is not tight is entirely trivial.
Namely, let us say that 
a \cnfform $\fstd$ is \introduceterm{$k$-wide} if
all clauses in~$\fstd$ have size at least~$k$.
In~%
\cite{ET01SpaceBounds},
it was proven that
for $\fstd$ a $k$-wide unsatisfiable \cnfform it holds that
$\clspaceref{\fstd} \geq k + 2$.
So in order to get a formula family
$\fstd_n$
\st  
$\widthrefsmall{\fstd_n} - \widthofarg{\fstd_n} = \Ordosmall{1}$
but
$\clspacerefsmall{\fstd_n}  = \lowerboundsmall{1}$,
just pick some suitable formulas 
$\setsmall{\fstd_n}_{n=1}^{\infty}$
of growing width.

In our opinion, these phenomena are clearly artificial.
Since every \cnfform can be rewritten as an equivalent \kcnfform
without increasing the size more than linearly, 
the right approach
when studying space measures in resolution 
seems to be to require that the formulas under study should have
constant width. 

As a final comment before moving on to 
our main result, 
we note that the open trade-off questions mentioned in
\refsec{sec:open-questions}
do not suffer from the technical problems discussed above.

\section{A Game for Analyzing Pebbling Contradictions}
\label{sec:definition-multi-pebble-game}

We now start our construction for the proof of
\refth{th:main-theorem}, 
which will require the rest of this paper.
In this section we present the modified pebble game that we will use
to study the clause space of resolution refutations of pebbling
contradictions. 

\subsection{Some Graph Notation and Definitions}
\label{sec:definition-multi-pebble-game-graph-notation}

We first present some notation and terminology that will be used
in what follows. See
\reffig{fig:vertex-sets-wrt-vertex}
for an illustration of the next definition.

\begin{figure}[tp]
  \centering
  \includegraphics{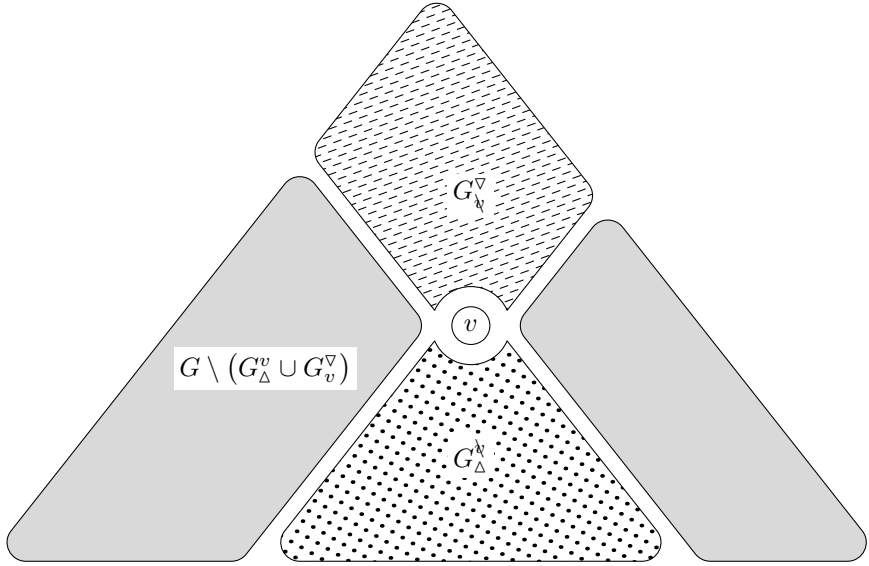}%
  \caption{%
    Notation for sets of vertices in DAG $G$ \wrt a vertex $v$.}
  \label{fig:vertex-sets-wrt-vertex}
\end{figure}

\begin{definition}
We let
$\succnode{v}$
denote the immediate successors and
$\prednode{v}$
denote the immediate predecessors 
of a vertex $v$ in a DAG~$G$. 
Taking the transivite closures of
$\succnode{\cdot}$
and
$\prednode{\cdot}$, we let
$\abovevertices{v}$
denote all vertices reachable from  $v$
(vertices ``above''~$v$)
and
$\belowvertices{v}$
denote all vertices from which $v$ is reachable (vertices
``below''~$v$).  We write
$\belowverticesNR{v}$
and
$\aboveverticesNR{v}$
to denote the corresponding sets with the vertex $v$ itself removed.
If
$\prednode{v} = \setsmall{u,w}$, 
we say that $u$ and $w$ are \introduceterm{siblings}.
If
$u \not \in \belowvertices{v}$
and
$v \not \in \belowvertices{u}$,
we say that $u$ and $v$ are \introduceterm{\unrelatedverticesadj{}}
vertices.
Otherwise they are
\introduceterm{\relatedverticesadj{}}.
\end{definition}

When reasoning about arbitrary vertices
we will often use as a canonical example a vertex $r$ with 
assumed predecessors
$\predrequalpq$.

Note that
for a leaf $v$ we have $\prednode{v} = \emptyset$,
and for the sink $z$ of $G$ we have $\succnode{z} = \emptyset$.
Also note that
$\belowvertices[G]{v}$
and
$\abovevertices[G]{v}$
are sets of vertices, not subgraphs. However, we will allow ourselves
to overload the notation and sometimes use this notation both for the
subgraph and its vertices.  Moreover, as a rule we will overload the
notation for the graph $G$ itself and its vertices, and usually write
only $G$ when we mean~$\vertices{G}$, and when this should be clear
from context. 

For our pebble game to work, we require  of the graphs under study
that they have the following property.

\begin{property}[Sibling non-reachability]
  \label{property:sibling-non-reachability-property}
  We say that a DAG $G$ has
  the \introduceterm{\siblingnonreachabiblitypropertynoref{}} 
  if for all vertices
  $u$ and $v$ that are siblings in $G$, it holds that
  $u \notin \belowvertices{v}$
  and
  $v \notin \belowvertices{u}$,
  \ie the siblings are not reachable from one another.
%
\end{property}

Phrased differently,
\refproperty{property:sibling-non-reachability-property}
asserts that siblings are \unrelatedverticesadj.

A sufficient condition for 
\refproperty{property:sibling-non-reachability-property}
to hold is that if
$v$ is reachable from $u$, then all paths
\mbox{$\pathfromto{P}{u}{v}$}
have the same length. This holds \eg for the class of 
\introduceterm{layered graphs},
and it is also easy to see directly that layered graphs possess
\refproperty{property:sibling-non-reachability-property}.

\begin{definition}[Layered DAG]
  \label{def:layered-graphs}
  A \introduceterm{layered DAG} $G$ is a DAG whose
  vertices are partitioned into
  (nonempty) sets of \introduceterm{layers} 
  $V_0, V_1, \ldots, V_h$
  on
  \introduceterm{levels}
  $0,1, \ldots, h$,
  and whose edges run between consecutive layers.
  That is, 
  if $(u,v)$ is a directed edge, then the level of $u$ is
  $\levelstd - 1$
  and the level of $v$ is
  $\levelstd $
  for some $L \in \intnfirst{h}$.
  We say that $h$ is the \introduceterm{height} of the 
  layered DAG~$G$.
\end{definition}

Throughout this paper, we will assume that all source vertices in a
layered DAG are located on the bottom level~$0$.
Let us next give a formal definitions of the pyramid graphs that are
the focus of this paper.

\begin{definition}[Pyramid graph]
  \label{def:pyramid-graphs}  
  The \introduceterm{pyramid graph} $\pyramidgraphh$
  of height $h$
  is a layered DAG with $h+1$ levels, where there is one vertex on the
  highest level (the sink~$z$), 
  two vertices on the next level et cetera  
  down to $h+1$ vertices at
  the lowest level~$0$. 
  The $i$th vertex at level $\levelstd$ has incoming edges 
  from the $i$th and
  $(i+1)$st vertices at level $\levelstd-1$.
\end{definition}

We also need some notation for contiguous and non-contiguous
topologically ordered sets of vertices in a DAG.

\begin{definition}%
  [Paths and chains]    
\label{def:chain-pebble}
\label{def:chains-and-paths}
  We say that $V$ is a
  \introduceterm{(totally) ordered}
  set of vertices in a DAG~$G$,
  or a \introduceterm{chain},
  if all vertices in $V$ are
  \relatedverticesadj
  (\ie
  if for all $u,v \in V$,
  either
  $u \in \belowvertices{v}$
  or
  $v \in \belowvertices{u}$).
  A
  \introduceterm{path}
  $P$ is a contiguous chain, \ie such that
  $\succnode{v} \intersectionSP P \neq \emptyset$
  for all $v \in P$ except the top vertex.

  We write
  $\pathfromto{P}{v}{w}$ 
  to  denote a path starting in $v$ and ending in $w$.
  A
  \introduceterm{\sourcepath{}}
  is a path that starts at some source vertex of~$G$.
  A \introduceterm{path via $w$} is a path
  \st
  $w \in P$. 
  We will also say that $P$ \introduceterm{visits} $w$. 
  For a chain $V$, we let
  \begin{itemize}
  \item 
    $\minvertex{V}$ denote the bottom vertex of~$V$,
    \ie the unique $v \in V$ \st
    $V \subseteq \abovevertices{v}$,
    
  \item 
    $\maxvertex{V}$ denote the top vertex of~$V$, 
    \ie the unique $v \in V$ \st
    $V \subseteq \belowvertices{v}$,
    
  \item 
    $\pathsinvertex{V}$ denote the set of all paths
    $\pathfromto{P}{\minvertex{V}}{\maxvertex{V}}$
    \introduceterm{via}~$V$ or
    \introduceterm{agreeing with}~$V$,
    \ie \st $V \subseteq P$,
    and

  \item
    $\pathsviavertex{V}$
    denote the set of all \sourcepath{}s 
    \introduceterm{agreeing with}~$V$.
  \end{itemize}                                                                
  We write
  $\unionpathsinvertex{V}$
  to denote the union of the vertices in all paths
  $P \in \pathsinvertex{V}$
  and
  $\unionpathsviavertex{V}$
  for the union of all vertices in paths
  $P \in \pathsviavertex{V}$.
\end{definition}


In the rest of this paper, we will almost exclusively discuss DAGs
with certain structural properties.  The next definition is so that we
will not have to repeat these properties over and over again.

\begin{definition}[\Pebblingdag{}]
  \label{def:blob-pebbling-DAG}
  A
  \introduceterm{\pebblingdag{}}
  is a DAG that has a unique sink, which we will alway denote~$z$,
  that has vertex indegree~$2$ for all non-sources, and that satisfies
  the \siblingnonreachabiblityproperty.
\end{definition}

\subsection{Description of the \MULTIPEBBLEGAME and Formal Definition}
\label{sec:definition-multi-pebble-game-the-definition}

To prove a lower bound on the refutation space of pebbling
contradictions, we want to interpret 
derivation steps  in
terms of pebble placements and removals in the corresponding graph.
In
\refsec{sec:proof-overview-and-paper-organization},
we outlined an intuitive correspondence between clauses and pebbles.
The problem is that if we try to use this correspondence, the pebble
configurations that we get do not obey the rules of the black-white
pebble game. Therefore, we are forced to to change the pebbling rules.
In this section, we present the modified pebble game used for
analyzing resolution derivations.

Our first modification of the pebble game is to alter the rule for
white pebble removal so that a white pebble can be removed from a
vertex when a black pebble is placed on that same vertex.  
This will make the correspondence between pebblings and resolution
derivations much more natural.
Clearly, this is only a minor adjustment, and it is easy to prove
formally that it does not really change anything.

Our second, and far more substantial, modification of the pebble game
is motivated by the fact that in general, a resolution refutation a
priori has no reason to follow our pebble game intuition.  Since
pebbles are induced by clauses, if at some derivation step the
refutation chooses to erase ``the wrong clause'' from the point of
view of the induced pebble configuration, this can lead to pebbles
just disappearing. Whatever our translation from clauses to pebbles
is, a resolution proof that suddenly out of spite erases practically
all clauses must surely lead to practically all pebbles disappearing,
if we want to maintain a correspondence between clause space and
pebbling cost.
This is all in order for black pebbles, but if we allow uncontrolled
removal of white pebbles we cannot hope for any nontrivial lower
bounds on pebbling price   
(just white-pebble the two predecessors of the sink, then black-pebble
the sink itself and finally remove the white pebbles).
    
Our solution to this problem is to keep track of exactly which white
pebbles have been used to get a black pebble on a vertex.  Loosely
put, removing a white pebble from a vertex~$v$ without placing a black
pebble on the same vertex should be in order, provided that all black
pebbles placed on vertices above $v$ in the DAG with the help of the
white pebble on $v$ are removed as well.
We do the necessary bookkeeping by defining
\introduceterm{\subconftext{}s} of pebble configurations,
each \subconftext consisting of black pebble together with all the
white pebbles this black pebble depends on, 
and require that if any pebble in a \subconftext is removed, then
all other pebbles in this \subconftext must be removed as well.

Another problem is that resolution derivation steps can be made that
appear intuitively bad given that we know that the end goal is to
derive the empty clause, but where formally it appears where hard to
nail down wherein this supposed badness lies. To analyze such
apparently non-optimal derivation steps, we introduce an
\introduceterm{inflation}
rule in which a black pebble can be inflated to a 
\introduceterm{\multipebble{}}
covering multiple vertices. The way to think of this is that a black
pebble on a vertex~$v$ corresponds to derived truth ov~$v$, whereas
for a \multipebble pebble on $V$ we only know that some vertex $v\in V$ is
true, but not which one.
For reasons that will perhaps become clearer in
\reftwosecapps
{sec:pebble-games-pyramids}
{sec:tight-lower-bound-for-multi-pebbling-pyramid},
in is natural to consider \multipebble{}s that are chains
(\refdef{def:chains-and-paths}).

We now present the formal definition of the concept
used to ``label'' each black 
\multipebble pebble with
the set of white pebbles (if any) this black pebble is dependent on.
The intended meaning of the notation
$\mpscnot{B}{W}$
is a black \multipebble on $B$ together with the white pebbles $W$
below $v$ with the help of which  we have been able to place the black
\multipebble on~$B$.
These ``associated'' or ``supporting'' white pebbles can be located on
any vertex $w \notin B$ that can be visited by a \sourcepath~$P$
to $\topvertex{B}$ agreeing with~$B$.
Formally, the
\introduceterm{\lpptext{}}
\wrt a chain $B$ with $b = \bottomvertex{B}$
is the set of vertices
\begin{equation}
\label{eq:legal-pebble-positions-definition}
  \lpp{B}
  =
  \belowverticesNR{b} 
  \unionSP
  \left(\unionpathsinvertex{B} \setminus {B} \right)
  =
 \unionpathsviavertex{B} \setminus {B} 
  \eqperiod
\end{equation}
We refer to the structure
$\mpscnot{B}{W}$
grouping together
a black \multipebble $B$ and its associated white pebbles $W$ as a
\introduceterm{\mpscfulltext{}},
or just
\introduceterm{\mpsctext{}}
for short.

\begin{definition}[\Mpscfulltext{}]
  \label{def:C-multi-pebble-configuration}
  \label{def:C-legal-white-pebbles}
  For   sets of vertices 
  $B,W$ in a \pebblingdag~$G$,
  $\mpscnot{B}{W}$
  is a 
  \introduceterm{\mpscfulltext{}}
  if
  $B \neq \emptyset$ is a chain
  and
  $W \subseteq \lpp{B}$.
  We refer to $B$ as a 
  (single) black 
  \introduceterm{\multipebble{}} 
  and to  $W$ as (a number of different) white pebbles
  \introduceterm{supporting}~$B$.
  We also say that $B$ is \introduceterm{\bldependent{}} on~$W$.
  If $W = \emptyset$, 
  $B$ is
  \introduceterm{\blindependent{}}.
  \Multipebble{}s $B$ with
  $\setsizesmall{B} = 1$ 
  are said to be
  \introduceterm{\atomicmultipebbleadj{}}.


  A set of
  \mpscfulltext{}s 
  $
  \mpconf
  = \setdescrcompact{\mpscnotstd[i]}{i = 1, \ldots, m}
  $
  together constitute 
  \introducetermanmpctext.
\end{definition}

Note in particular that it always holds that
$B \intersectionSP W = \emptyset$
for a \mpscfulltext~$\mpscnotstd$.

Since the definition of the game we will play with these
\multipebble{}s and pebbles is somewhat involved, let us first try to
give an intuitive description.

\begin{itemize}
\item 

There is one single rule corresponding to the two
\reftworules
{pebrule:black-placement}
{pebrule:white-placement}
for black and white pebble placement
in the black-white pebble game of
\refdef{def:bw-pebble-game}.
This
\introduceterm{introduction} 
rule says that we can place a black pebble on a vertex~$v$
together with white pebbles on its predecessors (unless $v$ is a
source,
in which case no white pebbles are needed).

\item 
  The analogy for
  \refrule{pebrule:black-removal}
  for black pebble removal in
  \refdef{def:bw-pebble-game}
  is a rule for ``shrinking'' black \multipebble{}s.
%
%
  A vertex~$v$ in a \multipebble can be eliminated by
  \introduceterm{merging}
  two \mpscfulltext{}s,
  provided that there is both a black \multipebble and a white pebble
  on~$v$, and provided that the two black \multipebble{}s involved 
  in this \introduceterm{merger}   do
  not intersect the supporting white pebbles of one another in any
  other vertex than~$v$.
  Removing black pebbles in the black-white pebble game
  corresponds to shrinking \atomicmultipebbleadj
  black \multipebble{}s.
  
\item 
  A black \multipebble can be
  \introduceterm{inflated}
  to cover more vertices, 
  as long as it does not collide with its own supporting 
  white vertices. 
  Also, new supporting white pebbles can be added at
  an inflation move.
  There is no analogy of this move in the usual black-white pebble game.

\item
  The   \refrule{pebrule:white-removal}
  for  white pebble removal 
  also corresponds to merging in the \multipebblegame, since the white
  pebble used in the merger is eliminated as well. In addition,
  however, a white pebble on~$w$ can also 
  disappear if its black \multipebble $B$ changes so that $w$ 
  no longer can be visited on a path via~$B$
%
  (\ie if $w$ is no longer a \lpptextsing \wrt~$B$).

\item
  Other than that, individual white pebbles, and individual black
  vertices covered by \multipebble{}s, can never just disappear. If we
  want to remove a white pebble or parts of a black \multipebble, we
  can do so only by
  \introduceterm{erasing}
  the whole \mpscfulltext.
\end{itemize}
The formal definition follows.
See \reffig{fig:blobpebblingmoves} 
for some examples of \multipebblingtext moves.

\ifthenelse
{\boolean{maybeSTOC}}
{\begin{figure*}[t]}
{\begin{figure}[tp]}
  \centering
  \subfigure[Empty pyramid.]
  {
    \label{fig:blobpebblingmoves-a}
    \begin{minipage}[b]{.47\linewidth}
      \centering
      \includegraphics{blobpebblingmoves.2}%
    \end{minipage}
  }
  \hfill
  \subfigure[Introduction move.]
  {
    \label{fig:blobpebblingmoves-b}
    \begin{minipage}[b]{.47\linewidth}
      \centering
      \includegraphics{blobpebblingmoves.3}%
    \end{minipage}
  }

  \subfigure[Two \mpsctext{}s before merger.]
  {
    \label{fig:blobpebblingmoves-c}
    \begin{minipage}[b]{.47\linewidth}
      \centering
      \includegraphics{blobpebblingmoves.5}%
    \end{minipage}
  }
  \hfill
  \subfigure[The merged \mpsctext.]
  {
    \label{fig:blobpebblingmoves-d}
    \begin{minipage}[b]{.47\linewidth}
      \centering
      \includegraphics{blobpebblingmoves.6}%
    \end{minipage}
  }

  \subfigure[\Mpsctext before inflation.]
  {
    \label{fig:blobpebblingmoves-e}
    \begin{minipage}[b]{.47\linewidth}
      \centering
      \includegraphics{blobpebblingmoves.7}%
    \end{minipage}
  }
  \hfill
  \subfigure[\Mpsctext after inflation.]
  {
    \label{fig:blobpebblingmoves-f}
    \begin{minipage}[b]{.47\linewidth}
      \centering
      \includegraphics{blobpebblingmoves.8}%
    \end{minipage}
  }
\ifthenelse
{\boolean{maybeSTOC}}
{}
{

  \subfigure[Another \mpsctext before inflation.]
  {
    \label{fig:blobpebblingmoves-g}
    \begin{minipage}[b]{.47\linewidth}
      \centering
      \includegraphics{blobpebblingmoves.9}%
    \end{minipage}
  }
  \hfill
 \subfigure[\mbox{After inflation with vanished white pebbles.}]
  {
    \label{fig:blobpebblingmoves-h}
    \begin{minipage}[b]{.47\linewidth}
      \centering
      \includegraphics{blobpebblingmoves.10}%
    \end{minipage}
  }
}
%
%
%
%
  \caption{Examples of moves in the \multipebblegame.}
  \label{fig:blobpebblingmoves}
\ifthenelse
{\boolean{maybeSTOC}}
{\end{figure*}}
{\end{figure}}

\begin{definition}[\Multipebblegame{}]
  \label{def:C-modified-pebble-game}
  \label{def:multi-pebble-game}  
  For a \pebblingdag $G$ and
  \mpctext{}s $\mpconf_0$ and $\mpconf_{\stoptime}$ on~$G$, a
  \introduceterm{\multipebblingtext{}}
  from $\mpconf_0$ to $\mpconf_{\stoptime}$ in $G$ is a sequence  
  $\pebbling[P] =
  \setcompact{\mpconf_0, \ldots, \mpconf_{\stoptime}}$
  of configurations  \st
  for all
  $t \in \intnfirst{\stoptime}$,
  $\mpconf_{t}$~is obtained from $\mpconf_{t-1}$
  by one of the following rules:

  \begin{description}
    
    \italicitem[Introduction]
    $\mpconf_{t} = \mpconf_{t-1} \unionSP 
    \setcompact{\intrompscnot{v}}$.

  \italicitem[Merger]
    $\mpconf_{t}  = 
    \mpconf_{t-1} \unionSP  \setcompact{\mpscnotstd}$
    if there are
    $\mpscnotstd[1], \mpscnotstd[2] \in \mpconf_{t-1}$
    \st
    \begin{enumerate}
    \item 
      $B_1 \unionSP B_2$ is (totally) ordered,
      
    \item 
      $B_1 \intersectionSP W_2 = \emptyset$,

    \item 
      $\setsizesmall{B_2 \intersectionSP W_1} = 1$;
      let $\mergervertex$  denote this unique element
      in
      $
      {B_2 \intersectionSP W_1}$,

    \item
      $B = (B_1 \unionSP B_2) \setminus \setsmall{\mergervertex}$,
      and

    \item
      $W = 
      \bigl(
      (W_1 \unionSP W_2) \setminus \setsmall{\mergervertex}
      \bigr)
      \intersectionSP
      \lppstd$,

    \end{enumerate}
    
    We write
    $\mpscnotstd = \spmerge{\mpscnotstd[1]}{\mpscnotstd[2]}$
    and refer to this as a
    \introduceterm{merger on~$\mergervertex$}.
    
  \italicitem[Inflation]
    $\mpconf_{t} = \mpconf_{t-1} \unionSP
    \setcompact{\mpscnotstd}$
    if there is a $\mpscnotprime \in \mpconf_{t-1}$
    \st
    \begin{enumerate}
    \item 
      $B \supseteq B'$,
      
    \item 
      $B \intersectionSP W' = \emptyset$,
      and
      
    \item 
      $W \supseteq 
      W' \intersectionSP 
      \lppstd
      $.

    \end{enumerate}

    We say that
    $\mpscnotstd$
    is derived from
    $\mpscnotprime$
    by inflation or that
    $\mpscnotprime$
    is \introduceterm{inflated} to  yield
    $\mpscnotstd$.

  \italicitem[Erasure]
    $\mpconf_{t} = \mpconf_{t-1} \setminus \setcompact{\mpscnotstd}$
    for
    $\mpscnotstd \in \mpconf_{t-1}$.

\end{description}
The \multipebblingtext $\multipebbling$ is
\introduceterm{\pebunconditional{}} if
$\mpconf_0 = \emptyset$
and \introduceterm{\pebconditional{}} otherwise.
A
\introduceterm{\pebcomplete{} \multipebblingtext{}} of $G$
is an \pebunconditional{} pebbling $\multipebbling$
ending in
${\mpconf_{\stoptime} = \setcompact{\unconditionalblackmpscnot{z}}}$
for $z$ the unique sink of~$G$.
\end{definition}

\subsection{\MULTIPEBBLINGTEXT Price}

We have not yet defined what the price of a \multipebblingtext is. The
reason is that it is not a priori clear what the ``correct''
definition of \multipebblingtext price should be.

It should be pointed out  that the \multipebblegame has no obvious
intrinsic value---its function is to serve as a tool
to prove lower bounds on the resolution refutation space 
of pebbling contradictions.
The intended structure of our lower bound proof for resolution space
is that we 
want look at resolution refutations of pebbling contradictions, interpret
them in terms of \multipebblingtext{}s on the underlying graphs, and
then translate lower bounds on the price of these
\multipebblingtext{}s into lower bounds on the size of the
corresponding clause configurations.
Therefore, 
we have two requirements for the \multipebblingtext price
$\multipebblingprice{G}$:
\begin{enumerate}
\item 
  It should be sufficiently high to enable us to prove
  good lower bounds on
  $\multipebblingprice{G}$,
  preferrably by relating it to the
  standard black-white pebbling price
  $\bwpebblingprice{G}$.

\item
  It should also be sufficiently low, so that lower bounds on
  $\multipebblingprice{G}$
  translate back to lower bounds on the 
  size of the clause configurations.
\end{enumerate}
So when defining pebbling price in
\refdef{def:multi-pebbling-price}
below, we also have to have in mind the coming
\refdef{def:induced-multi-pebble-subconfiguration}
saying how we will interpret clauses in terms of \multipebble{}s and
pebbles and that these two definitions together should make it
possible for us to lower-bound clause set size in terms of pebbling
cost.

For black pebbles, we could try to charge $1$ for each distinct
\multipebble. But this will not work, since then the second
requirement above fails. 
For the translation of clauses to blobs and pebbles sketched in
\refsec{sec:proof-overview-detailed}
it is possible to construct clause
configurations that correspond to an exponential number of distinct
black \multipebble{}s measured in the clause set size.
The other natural extreme seems to be to charge only for mutually
disjoint black  \multipebble{}s. But this is far too generous, and the first
requirement above fails. To get a trivial example of this, take any 
ordinary
black pebbling of $G$ and translate in into an (atomic)
\multipebblingtext, but then change it so that each black pebble
$\mpscblacknot{v}$
is immediately inflated to 
$\mpscblacknot{\setsmall{v,z}}$
after each introduction move. 
It is straightforward to verify that this would yield a pebbling of
$G$ in constant cost.
For white pebbles, the first idea might be to charge $1$ for every
white-pebbled vertex, just as in the standard pebble game. On
closer inspection, though, this seems to be not quite what we need.

The definition presented below   
turns out to give us both of the desired properties above, and allows
us to prove an optimal bound. Namely,
we define \blobpebblingtext price so as to
charge $1$ for 
\emph{each distinct bottom vertex} among the black \multipebble{}s, 
and so as to charge for the subset of supporting white pebbles
$W \intersectionSP \belowvertices[G]{b}$
in \anmpsctext 
$\mpscnotstd$
that are 
\emph{located below the bottom vertex}
$\bottomvertex{B}$   
of its black \multipebble{}~$B$. 
Multiple distinct \multipebble{}s
with the same bottom vertex come for free, however, and any supporting
white pebbles above the bottom vertex of its own \multipebble are also
free,  although we still have to keep track of them.

\begin{definition}[\Multipebblingtext price]
  \label{def:multi-pebbling-price}
  \label{def:C-multi-pebbling-price}
  For \anmpsctext 
  $\mpscnotstd$, we say that
  $
  \blackschargedfor(\mpscnotstd) = \setsmall{\bottomvertex{B}}
  $
  is the \introduceterm{chargeable black vertex} and that
  $
  \whiteschargedfor(\mpscnotstd)
  =
  W \intersectionSP \belowvertices[G]{\bottomvertex{B}}
  $
  are the \introduceterm{chargeable white vertices}.
  The 
  \introduceterm{chargeable vertices} of
  the \mpsctext
  $\mpscnotstd$
  are all vertices in the union
  $\blackschargedfor(\mpscnotstd) \unionSP \whiteschargedfor(\mpscnotstd)$.
  This definition is extended to \mpctext{}s $\mpconf$ in the natural
  way by letting
  \begin{equation*}
    \blackschargedfor(\mpconf) 
    =
    \Union_{\mpscnotstd \in \mpconf} \blackschargedfor(\mpscnotstd)
    =
    \setdescrcompact
    {\bottomvertex{B}}
    {\mpscnotstd \in \mpconf}
  \end{equation*}
  and
  \begin{equation*}
    \whiteschargedfor(\mpconf) 
    =
    \Union_{\mpscnotstd \in \mpconf} \whiteschargedfor(\mpscnotstd)
    =
    \Union_{\mpscnotstd \in \mpconf}
    \left(
      {W \intersectionSP
        \belowvertices[G]{\bottomvertex{B}}}
    \right) 
    \eqperiod
  \end{equation*}
  The cost of \anmpctext $\mpconf$ is 
  $
  \mpcost{\mpconf} =
  \setsizecompact{
    \blackschargedfor(\mpconf) 
    \unionSP
    \whiteschargedfor(\mpconf)
  }
  $,
  and the cost of a \multipebblingtext
  $\multipebbling = \setcompact{\mpconf_0, \ldots, \mpconf_{\stoptime}}$
  is 
  $
  \mpcost{\multipebbling} =
  \Maxofexpr[t \in \intnfirst{\stoptime}]{\mpcost{\mpconf_t}}
  $.
  
  The \introduceterm{\multipebblingtext price}
  of a \mpscfulltext
  $\mpscnotstd$,
  denoted $\multipebblingprice{\mpscnotstd}$,
  is the minimal cost of any \pebunconditional{} \multipebblingtext 
  $
  \multipebbling
  =
  \setsmall{\mpconf_{0}, \ldots, \mpconf_{\stoptime}}
  $
  \st
  $\mpconf_{\stoptime}
  =
  \setcompact{\mpscnotstd}
  $.
  The
  \multipebblingtext price
  of a DAG~$G$
  is
  $\multipebblingprice{G}
  = 
  \multipebblingprice{\unconditionalblackmpscnot{z}}$,
  \ie the minimal cost of any \pebcomplete{} \multipebblingtext of~$G$.  
\end{definition}

We will also write
$
\mpcwhitesof{\mpconf}
$
to denote the set of all white-pebbled vertices in~$\mpconf$, 
including non-\chargeabletext ones.

\section{Resolution Derivations Induce \MULTIPEBBLINGTEXT{}s}
\label{sec:derivations-induce-chain-multi-pebblings}
\label{sec:induced-pebbling}

For simplicity, 
in this section, as well as in the next one,
we will write
$v_1, \ldots, v_{\pebdeg}$
instead of
$\varx(v)_1, \ldots, \varx(v)_{\pebdeg}$
for the $\pebdeg$ variables associated with
$v$ in a 
$\pebdeg$th degree pebbling contradiction.
That is, in
\reftwosecs
{sec:derivations-induce-chain-multi-pebblings}
{sec:multi-pebble-configuration-cost}
small letters with subscripts will denote only variables in
propositional logic and nothing else.

It turns out that for technical reasons, it is more natural to ignore
the target axioms 
$\olnot{z}_1, \ldots, \olnot{z}_{\pebdeg}$
and focus on resolution derivations of
$\sourceclause[l]{z}$
from the rest of the formula rather than resolution refutations of all
of
$\pebcontr[G]{\pebdeg}$.
Let us write
$
\pebcontrNT{\pebdeg}
= 
\pebcontr{\pebdeg}
\setminus
\setcompact{
  \olnot{z}_1, \ldots, \olnot{z}_{\pebdeg}
}
$
to denote the pebbling formula over~$G$
with the target axioms in the  pebbling contradiction removed.
The next lemma is the formal statement saying that we may just as well
study derivations of
$\targetclause[l]$
from this pebbling formula
$\pebcontrNT{\pebdeg}$
instead of refutations of~%
$\pebcontr{\pebdeg}$.

\begin{lemma}
  \label{lem:C-clause-space-the-same-without-target-axioms}
  \label{lem:clause-space-the-same-without-target-axioms}
  For any DAG $G$ with sink~$z$,
  it holds that
  $
  \clspaceref{\pebcontr{\pebdeg}}
  =
  \clspacederiv
  {\pebcontrNT{\pebdeg}}
  {\targetclause[l]}
  $.
\end{lemma}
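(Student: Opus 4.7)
The plan is to prove the two inequalities separately. The ``easy'' direction is $\clspaceref{\pebcontr{\pebdeg}} \leq \clspacederiv{\pebcontrNT{\pebdeg}}{\targetclause[l]}$: given any derivation of $\targetclause[l] = z_1 \lor \formuladots \lor z_{\pebdeg}$ from $\pebcontrNT{\pebdeg}$ in clause space $s$, extend it to a refutation of $\pebcontr{\pebdeg}$ by repeatedly downloading a target axiom $\olnot{z}_i$, resolving it against the current positive $z$-disjunction to kill the literal~$z_i$, and erasing the spent clauses. Each such round uses only $3$ units of clause space on top of the single positive disjunction kept on the blackboard, so the whole refutation fits in $\max(s,3) = s$ space (for $s \geq 3$, which is forced by any non-trivial derivation).

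The more interesting direction is $\clspacederiv{\pebcontrNT{\pebdeg}}{\targetclause[l]} \leq \clspaceref{\pebcontr{\pebdeg}}$. The idea is to take any refutation $\proofstd$ of $\pebcontr{\pebdeg}$ in clause space $s$ and transform it step-by-step into a derivation $\proofstd'$ of $\targetclause[l]$ from $\pebcontrNT{\pebdeg}$ by ``deferring the use of target axioms to the end.'' More concretely, we set up a mapping under which each clause $\cld$ appearing in $\proofstd$ corresponds to a clause $\cld^{*} = \cld \unionSP Z_{\cld}$ in $\proofstd'$, where $Z_{\cld} \subseteq \set{z_1,\dotsc,z_{\pebdeg}}$ records the positive $z$-literals that \emph{would} have been resolved away in $\proofstd$ by applications of target axioms in the derivation history of $\cld$. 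Downloads of axioms of $\pebcontrNT{\pebdeg}$ are mimicked exactly; downloads of target axioms $\olnot{z}_i$ are simply skipped (they would correspond to the tautology $\olnot{z}_i \lor z_i$, which does not need to be stored); and resolutions in $\proofstd$ using a target axiom are omitted (the other premise is kept unchanged, carrying its positive $z_i$ along), while all other resolutions go through verbatim, as the only new literals introduced are positive $z$-literals, which cannot clash with the resolved variable. At every time step the blackboard of $\proofstd'$ has at most as many clauses as $\proofstd$, so the clause space does not increase.

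The last thing to check is that the final clause $\cld^{*}$ produced by $\proofstd'$ is in fact $\targetclause[l]$ and not some strict subclause. By construction $\cld^{*} \subseteq \targetclause[l]$, and every clause of $\proofstd'$ is implied by $\pebcontrNT{\pebdeg}$. Suppose for contradiction that some $z_k \notin \cld^{*}$; then the assignment $\tvastd$ setting $z_k = 1$, $z_j = 0$ for $j \neq k$, and $v_i = 1$ for every variable associated with a vertex $v \neq z$ satisfies all source axioms and all pebbling axioms of $\pebcontrNT{\pebdeg}$ (the pebbling axiom for $z$ is satisfied through $z_k$, those for $w \neq z$ through the positive $w$-literals), while $\tvastd$ falsifies $\cld^{*}$. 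This contradicts $\pebcontrNT{\pebdeg} \impl \cld^{*}$, so $\cld^{*} = \targetclause[l]$.

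The main obstacle is the bookkeeping behind the transformation $\proofstd \mapsto \proofstd'$: resolution derivations have a DAG shape in which a single clause may be used as premise in several resolution steps and the ``target-axiom contribution'' $Z_{\cld}$ must be tracked consistently across all those uses. Once the correspondence between blackboard configurations of $\proofstd$ and $\proofstd'$ is made precise (via timestamps on clause occurrences, as already discussed earlier in the paper), the verification that $\proofstd'$ is a sound derivation and that every blackboard of $\proofstd'$ has at most as many clauses as the corresponding blackboard of $\proofstd$ is routine, and the semantic argument above then completes the proof.
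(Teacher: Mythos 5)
Your proposal is correct and follows essentially the same route as the paper's proof: the easy direction appends $\pebdeg$ resolution steps against the target axioms in space~$3$, and the interesting direction omits all downloads of and resolutions on the target axioms (your $Z_{\cld}$ bookkeeping formalizes the paper's phrase ``leaving the literals $z_l$ in the clauses''), then uses soundness plus the observation that $\pebcontrNT{\pebdeg}$ implies no strict subclause of $\targetclause[l]$ to pin down the final clause. Your write-up is more detailed but not different in substance.
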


\begin{proof}
  For any resolution derivation
  $\derivof{\proofstd^*}{\pebcontrNT{\pebdeg}}{\targetclause[l]}$,
  we can get a resolution  refutation    of
  ${\pebcontr{\pebdeg}}$
  from $\proofstd^*$ in the same space
  by resolving
  $\targetclause[l]$
  with all $\olnot{z}_l$, $l=1, \ldots, \pebdeg$, in space~$3$.

  In the other direction, for
  $\refof{\proofstd}{\pebcontr{\pebdeg}}$
  we can extract a derivation of
  $\targetclause[l]$
  in at most the same space by simply omitting all downloads of and
  resolution steps on $\olnot{z}_l$ in~$\proofstd$,
  leaving the literals $z_l$ in the clauses.
  Instead of the final empty clause $\emptycl$ we get some clause
  $\cld \subseteq \targetclause[l]$,
  and since
  $\pebcontrNT{\pebdeg} \nimpl \cld \subsetneqq \targetclause[l]$
  and resolution is sound, 
  we have $\cld = \targetclause[l]$.\qedhere%
\end{proof}


In view of
\reflem{lem:C-clause-space-the-same-without-target-axioms},
from now on we will only consider resolution derivations from
$\pebcontrNT{\pebdeg}$ and try to convert clause configurations in
such derivations into sets of
\mpscfulltext{}s.

To avoid cluttering the notation with an excessive amount of brackets, we
will sometimes use sloppy notation for sets.
We will allow ourselves to omit curly brackets around singleton sets
when this is clear from context, writing \eg
$V \unionSP \singset{v}$ instead of
$V \unionSP \setsmall{v}$
and
$\mpscnot{B \unionSP \singset{b}}{W \unionSP \singset{w}}$
instead of
$\mpscnot{B \unionSP \setsmall{b}}{W \unionSP \setsmall{w}}$.
Also, we will sometimes omit the curly brackets around sets of
vertices in black \multipebble{}s  and write, \eg, 
$\mpscblacknot{u,v}$
instead of~%
$\mpscblacknot{\setsmall{u,v}}$.

\subsection{Definition of Induced Configurations 
  and Theorem Statement}

If $r$ is a non-source vertex with predecessors
$\prednode{r} = \setsmall{p, q}$,
we say that the
\introduceterm{axioms for $r$}
in $\pebcontrNT{\pebdeg}$ is the set
\begin{equation}
  \pebax{r}  =
  \setdescrcompact
  {\stdnot{p}_i \lor \stdnot{q}_j \lor \sourceclausenodisplay[l]{r}}
  {i,j \in \intnfirst{\pebdeg}}
\end{equation}
and if $r$ is a source, we define
$
\pebax{r}  =
\setcompact
{\sourceclause[i]{r}}
$.
For $V$ a set of vertices in~$G$, we let
$\pebax{V} = \setdescrcompact{\pebax{v}}{v \in V}$.
Note that with this notation, we have
$
\pebcontrNT[G]{\pebdeg}
=
\setdescrcompact
{\pebax{v}}
{v \in \vertices{G}}
$.
For brevity, we introduce the shorthand notation
\begin{equation}
  \label{eq:blacktruth}
  \blacktruth{V} = 
  \setdescrcompact{\sourceclausenodisplay[i]{v}}{v \in V}
\end{equation}
and
\begin{equation}
  \label{eq:somenodetrueclause}
  \somenodetrueclause{V} 
  = {\textstyle \Lor_{v \in V}} \sourceclausenodisplay[i]{v}
  \eqperiod  
\end{equation}
One can think of
$\blacktruth{V}$ 
as ``truth of all vertices in $V$'' and
$\somenodetrueclause{V}$
as ``truth of some vertex in $V$''.

We say that 
a set of clauses $\clsc$ implies a clause $\cld$
\introduceterm{minimally}
if
$\clsc \impl \cld$
but for all
$\clsc' \subsetneqq \clsc$
it holds that
$\clsc' \nimpl \cld$.
If 
$\clsc \impl \emptycl$ minimally, 
$\clsc$ is said to be 
\introduceterm{minimally unsatisfiable}.
We say that
$\clsc$ implies a clause~$\cld$
\introduceterm{maximally}
if
$\clsc \impl \cld$
but for all
$\cld' \subsetneqq \cld$
it holds that
$\clsc' \nimpl \cld'$.
To define our translation of clauses to \mpscfulltext{}s, we use
implications that are in a sense both minimal and maximal.
We remind the reader that the vertex set 
$\lpp{B}$ 
of \lpptext for white pebbles \wrt the chain~$B$
was defined in
Equation~\refeq{eq:legal-pebble-positions-definition}
on page~\pageref{eq:legal-pebble-positions-definition}.

\begin{definition}[Induced \mpscfulltext{}]
  \label{def:induced-multi-pebble-subconfiguration}
  \label{def:C-induced-multi-pebble-subconfiguration}
  Let $G$ be a \pebblingdag and $\clsc$
  a clause configuration derived from
  $\pebcontrNT[G]{\pebdeg}$.
  Then $\clsc$ induces the \mpscfulltext $\mpscnotstd$
  if there is a clause set
  $\clsc_{B} \subseteq \clsc$
  and a vertex set
  $S \subseteq G \setminus B$
  with
  $W = S \intersectionSP \lpp{B}$
  \st
\begin{subequations}
\begin{align}
  \label{eq:C-sharp-condition-one}
  \clsc_{B} \unionSP \blacktruth{S} &\impl \somenodetrueclause{B}
  \\
  \intertext{but for which it holds for all strict subsets
    $\clsc'_{B} \subsetneqq \clsc_{B}$,
    $S' \subsetneqq S$
    and
    $B' \subsetneqq B$
    that}
  \label{eq:C-sharp-condition-two}
  \clsc'_{B} \unionSP \blacktruth{S} &\nimpl \somenodetrueclause{B}
  \eqcomma
  \\
  \label{eq:C-sharp-condition-three}
  \clsc_{B} \unionSP \blacktruth{S'} &\nimpl \somenodetrueclause{B}
  \eqcomma \text{ and }
  \\
  \label{eq:C-sharp-condition-four}
  \clsc_{B} \unionSP \blacktruth{S} &\nimpl \somenodetrueclause{B'}
  \eqperiod
\end{align}
\end{subequations}
We write
$\mpinducedconf{\clsc}$
to denote the set of  all \mpscfulltext{}s
induced by $\clsc$.

To save space, 
when all conditions
\mbox{\refeq{eq:C-sharp-condition-one}--\refeq{eq:C-sharp-condition-four}}
hold, we write
\begin{equation}
  \clsc_{B} \unionSP \blacktruth{S} \sharpimpl \somenodetrueclause{B}
\end{equation}
and refer to this as
\introduceterm{\sharpimplsubst{}}
or say that
the clause set
$\clsc_{B} \unionSP \blacktruth{S}$ 
implies 
the clause
$\somenodetrueclause{B}$
\introduceterm{\sharpimpladv{}}.
Also, we say that the \sharpimplsubst
$
\clsc_{B} \unionSP \blacktruth{S} \sharpimpl \somenodetrueclause{B}
$
\introduceterm{witnesses}
the induced \mpscfulltext
$\mpscnotstd$.
\end{definition}

In the following, we will  use the definition of 
\sharpimplsubst~$\sharpimpl$
also for clauses $\somenodetrueclause{V}$
where the vertex set $V$ is not a chain.

Let us see that this definition agrees with the intuition 
presented in \refsec{sec:proof-overview-detailed}. 
\Anatomicmultipebbleadj black pebble on a single vertex~$v$
corresponds, as promised, to the fact that 
$\sourceclause[i]{v}$
is implied by the current set of clauses.  A black \multipebble on $V$
without supporting white pebbles is induced precisely when the
disjunction 
$
\somenodetrueclause{V} = 
\Lor_{v \in V} \sourceclause[i]{v}
$
of the corresponding clauses follow from the clauses in memory,
but no disjunction over a strict subset of vertices
$V' \subsetneqq V$
is implied. Finally, the supporting white pebbles just indicate that
if we indeed had the information corresponding to black pebbles on
these vertices, the clause corresponding to the  supported black
\multipebble could be derived.
Remember that our cost measure does not take into account
the size of \multipebble{}s.  This is natural
since we are interested
in clause space, and since large \multipebble{}s, in an intuitive sense,
corresponds to large (\ie wide) clauses rather than many clauses.


The main result of this section is as follows.

\begin{theorem}
  \label{th:C-translation-of-resolution-to-pebbling}
  \label{th:translation-of-resolution-to-multi-pebbling}
  Let
  $\proofstd = \setcompact{\clsc_0, \ldots, \clsc_{\stoptime}}$
  be a resolution derivation of
  $\targetclause[i]$
  from    
  $\pebcontrNT{\pebdeg}$
  for a \pebblingdag~$G$.
  Then the induced \mpctext{}s
  $
  \setcompact{
    \mpinducedconf{\clsc_0}, \ldots, \mpinducedconf{\clsc_{\stoptime}}
  }
  $
  form the ``backbone'' of 
  a \pebcomplete{}
  \multipebblingtext $\multipebbling$ of $G$
  in the sense that 
  \begin{itemize}
  \item
    $\mpinducedconf{\clsc_0} = \emptyset$,

  \item
    $\mpinducedconf{\clsc_{\stoptime}} 
    = \setsmall{\unconditionalblackmpscnot{z}}$,
    and

  \item
    for every $t \in \intnfirst{\stoptime}$,
    the transition
    $
    \clcfgtransition
    {\mpinducedconf{\clsc_{t-1}}}
    {\mpinducedconf{\clsc_{t}}}
    $
    can be accomplished 
    in accordance with the \multipebblingtext rules
    in cost
    $
    \Maxofexpr{
      \mpcost{\mpinducedconf{\clsc_{t-1}}},
      \mpcost{\mpinducedconf{\clsc_{t}}}}
    + \Ordosmall{1}
    $.   
  \end{itemize}
  In particular, to any   resolution derivation
  $\derivof
  {\proofstd}
  {\pebcontrNT{\pebdeg}}
  {\targetclause[i]}
  $
  we can associate a \pebcomplete{} \multipebblingtext
  $\multipebbling_{\proofstd}$  
  of~$G$ \st
  $
  \mpcost{\multipebbling_{\proofstd}}
  \leq
  \Maxofexpr[\clsc \in \proofstd]{\mpcost{\mpinducedconf{\clsc}}}
  + \Ordosmall{1}
  $.
\end{theorem}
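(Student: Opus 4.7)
I would verify the three bullet points of the theorem by separately analyzing each of the three kinds of resolution derivation steps in \refdef{def:self-contained-resolution-refutation}. The boundary conditions will be checked directly from \refdef{def:C-induced-multi-pebble-subconfiguration}: the empty clause set implies no nonempty disjunction $\somenodetrueclause{B}$, so $\mpinducedconf{\emptyset} = \emptyset$; and the singleton $\setcompact{\targetclause[l]}$ admits exactly the sharp implication $\setcompact{\sourceclause[l]{z}} \sharpimpl \somenodetrueclause{\setcompact{z}}$ with $S = \emptyset$ and $B = \setcompact{z}$, giving $\mpinducedconf{\setcompact{\targetclause[l]}} = \setcompact{\unconditionalblackmpscnot{z}}$. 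The remaining task is to exhibit, for each derivation step, a short sequence of \mpebblingtext moves realizing the change in induced configurations within the stated cost bound.

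For an inference step I would argue as follows: the blackboard acquires a new clause $D$ that is logically implied by two clauses already on it, so if $\mpscnot{B}{W}$ is induced by $\clsc_t$ but not by $\clsc_{t-1}$, its witnessing set $\clsc_B$ must contain $D$. Replacing $D$ by its antecedents produces a non-sharp implication from a subset of $\clsc_{t-1}$, and shrinking this implication to a sharp one (in the spirit of conditions~\refeq{eq:C-sharp-condition-two}--\refeq{eq:C-sharp-condition-four}) exhibits some $\mpscnot{B'}{W'} \in \mpinducedconf{\clsc_{t-1}}$ from which $\mpscnot{B}{W}$ can be reached by inflation. Thus the transition is effected by a sequence of inflation moves, each adding exactly one new subconfiguration. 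The erasure step is even simpler: $\mpinducedconf{\clsc_t} \subseteq \mpinducedconf{\clsc_{t-1}}$, and each vanishing subconfiguration is removed by an erasure move.

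The real work lies in the axiom-download step. A source axiom induces only $\unconditionalblackmpscnot{s}$, which I would produce by a single introduction move. For a pebbling axiom $D = \olnot{p}_i \vee \olnot{q}_j \vee \sourceclause[l]{r}$ with $\prednode{r} = \setcompact{p,q}$, the plan is to first add $\intrompscnot{r} = \mpscnot{r}{\setcompact{p,q}}$ by an introduction move, and then argue that every new induced $\mpscnot{B}{W}$ with $r \in B$ can be obtained from $\intrompscnot{r}$ together with existing subconfigurations $\mpscnot{B'}{W'} \in \mpinducedconf{\clsc_{t-1}}$ carrying $r$ among their supporting white pebbles, by a merger possibly followed by an inflation. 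The \siblingnonreachabiblityproperty of $G$ will be used here to guarantee that the unions appearing in the merger rule remain chains, since $p$ and $q$ are incomparable and no unwanted comparabilities are created between $r$ and the vertices of $B'$.

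The main obstacle, I expect, is precisely this last case: identifying the complete list of new induced subconfigurations created by a single axiom download, and sequencing the merger, inflation, and erasure moves in such a way that the total extra storage never exceeds $\Ordosmall{1}$ beyond $\Maxofexpr{\mpcost{\mpinducedconf{\clsc_{t-1}}}, \mpcost{\mpinducedconf{\clsc_t}}}$. A single download can enable many mergers simultaneously, and one must perform them one at a time while promptly discarding the auxiliary subconfigurations that are no longer needed, so that only a constant number of ``transient'' \mpsctext{}s coexist with $\mpinducedconf{\clsc_{t-1}} \cup \mpinducedconf{\clsc_{t}}$. The inflation rule will do essential work throughout, since it is what allows derivation steps producing blobs larger than any naive merger would yield to still be tracked by the \multipebblegame.
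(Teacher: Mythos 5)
Your treatment of the boundary conditions and of the erasure and inference steps is correct and follows essentially the same route as the paper: erasure can only remove subconfigurations, and for inference the relation $\clsc_{t-1} \impl \clc$ turns the new sharp implication into one from $\clsc_{t-1}$ alone, so a single lemma (the paper's \reflem{lem:C-implication-implies-derivability-by-inflation}) delivers the needed inflation move.

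The axiom-download sketch, however, has a real gap. You restrict attention to new subconfigurations $\mpscnot{B}{W}$ with $r \in B$, but this is not the right set to explain: downloading $\clc \in \pebax{r}$ can induce new subconfigurations in which $r$ does \emph{not} lie in the black blob $B$. In a minimal sharp implication the positive literals $r_1, \ldots, r_{\pebdeg}$ coming from $\clc$ need not appear in $\somenodetrueclause{B}$; they can instead be consumed by negative occurrences $\olnot{r}_l$ already present in the clauses of $\clsc_B \subseteq \clsc_{t-1}$. The paper therefore performs a five-way case analysis on the position of $r$ relative to $B$ (unrelated to $B$; on a source-path through $B$ but outside $B$; inside $B$ above the bottom vertex; equal to the bottom vertex; strictly below the bottom vertex). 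In three of these five cases $r \notin B$, and in two of those (the ones where $r$ lies on or below paths through $B$) the derivation still needs the introduction of $\intrompscnot{r}$ followed by mergers and inflations. Moreover, the merger partners are not in general ``existing subconfigurations carrying $r$ among their supporting white pebbles'': depending on whether $p$ and $q$ lie in $B$, in $W$, or in neither, the proof first constructs auxiliary subconfigurations $\mpscnot{\Bup}{W_p}$ or $\mpscnot{\Buq}{W_q}$ by inflation (using part~\ref{part:part-four-lemma-axiom-download-pebble-induction} of the paper's \reflem{lem:C-lemma-axiom-download-pebble-induction}) and merges on $p$ or $q$ before merging on $r$. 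Your broad plan of ``introduce $\intrompscnot{r}$, then merge'' is directionally right, but the restriction to $r \in B$ and the single merger-with-$r$-white step would leave most of the new subconfigurations unexplained; without the finer case analysis the argument cannot be carried through.
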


We prove the theorem by forward induction over the
derivation~$\proofstd$. 
By the pebbling rules in 
\refdef{def:C-modified-pebble-game},
any \mpsctext $\mpscnotstd$
may be erased    freely at any time.
Consequently, we need not worry about \mpsctext{}s
disappearing during the transition from
$\clsc_{t-1}$ to~$\clsc_{t}$.
What we do need to check, though,  is that no
\mpsctext $\mpscnotstd$
appears inexplicably in
$\mpinducedconf{\clsc_{t}}$
as a result of a 
derivation step
$\clcfgtransition{\clsc_{t-1}}{\clsc_{t}}$,
but that we can always derive any
$
\mpscnotstd
\in
\mpinducedconf{\clsc_{t}}
\setminus
\mpinducedconf{\clsc_{t-1}}
$
from
$\mpinducedconf{\clsc_{t-1}}$
by the \multipebblingtext rules.
Also, when several pebbling moves are needed to get from
$\mpinducedconf{\clsc_{t}}$
to~%
$\mpinducedconf{\clsc_{t-1}}$,
we need to check that these intermediate moves do not affect the
pebbling cost by more than an additive constant.

The proof boils down to a case analysis of the different possibilities
for the derivation step
$\clcfgtransition{\clsc_{t-1}}{\clsc_{t}}$.
Since the analysis is quite lengthy, we divide it into subsections.
But first of all we need some technical lemmas.

\subsection{Some Technical Lemmas}

%

The next three lemmas are not hard, but will prove quite useful. 
We present the proofs for completeness.

\begin{lemma}
  \label{lem:C-pure-literals-stay}
  \label{lem:pure-literals-stay}
  Let
  $\clsc$ be a set of clauses
  and $\cld$ a clause
  such that
  $\clsc \impl \cld$
  minimally and 
  $\lita \in \lit{\clsc}$
  but
  $\olnot{\lita} \not\in \lit{\clsc}$.
  Then $\lita \in \lit{\cld}$.
\end{lemma}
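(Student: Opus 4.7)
The plan is to argue by contradiction: assume $\lita \notin \lit{\cld}$ and construct a truth value assignment that satisfies $\clsc$ but falsifies $\cld$, which contradicts $\clsc \impl \cld$.

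First I would exploit the minimality hypothesis. Since $\lita \in \lit{\clsc}$, some clause $\clc \in \clsc$ contains $\lita$. By minimality of the implication $\clsc \impl \cld$, the strict subset $\clsc \setminus \setcompact{\clc}$ does not imply $\cld$, so there is a truth value assignment $\tvastd$ with $\tvastd(\clsc \setminus \setcompact{\clc}) = 1$ and $\tvastd(\cld) = 0$. Observe that $\tvastd$ must falsify $\clc$ as well (otherwise $\tvastd$ would satisfy all of $\clsc$ while falsifying $\cld$, contradicting $\clsc \impl \cld$), and in particular $\tvastd(\lita) = 0$.

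The key step is then to flip the truth value of the variable underlying $\lita$ so as to obtain a new assignment $\tvastd'$ that agrees with $\tvastd$ everywhere except that $\tvastd'(\lita) = 1$. I would verify two things about $\tvastd'$. First, $\tvastd'$ still satisfies all of $\clsc$: clauses of $\clsc$ that do not mention the variable of $\lita$ are unaffected, clauses containing $\lita$ (including $\clc$ itself) are now satisfied because $\lita$ is true, and by hypothesis no clause of $\clsc$ contains $\olnot{\lita}$, so no clause is newly falsified. Second, $\tvastd'$ still falsifies $\cld$: by assumption $\lita \notin \lit{\cld}$, and $\olnot{\lita} \notin \lit{\cld}$ either, since if $\olnot{\lita}$ occurred in $\cld$ then $\tvastd(\olnot{\lita}) = 1$ (because $\tvastd(\lita) = 0$) would make $\tvastd$ satisfy $\cld$. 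Hence flipping $\lita$ leaves the value of $\cld$ unchanged at $0$.

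Combining these observations, $\tvastd'$ is an assignment satisfying $\clsc$ but falsifying $\cld$, contradicting $\clsc \impl \cld$. So $\lita \in \lit{\cld}$, as claimed. The only potentially subtle point is handling the case where $\olnot{\lita}$ might occur in $\cld$, which is ruled out by the argument above using that $\tvastd(\cld) = 0$; otherwise the proof is a routine semantic argument.
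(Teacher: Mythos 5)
Your proof is correct and follows essentially the same route as the paper's. The paper removes from $\clsc$ the whole subset of clauses containing $\lita$ and invokes minimality on that subset, whereas you remove a single clause containing $\lita$ and invoke minimality on that; both yield an assignment $\tvastd$ satisfying the remainder, falsifying $\cld$, and setting $\lita$ to false, after which the flip-$\lita$-to-true argument is identical in the two proofs.
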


\begin{proof}
  Suppose not. Let 
  $\clsc_1
  = \setdescrsmall{\clc \in \clsc}{ \lita \in \lit{\clc}}$
  and
  $\clsc_2 = \clsc \setminus \clsc_1$.
  Since
  $\clsc_2 \nimpl \cld$ there is a truth value assignment
  $\tvastd$ \st
  $\logevalstd{\clsc_2} = 1$
  and
  $\logevalstd{\cld} = 0$.
  Note that 
  $\logevalstd{\lita} = 0$,
  since otherwise
  $\logevalstd{\clsc_1} = 1$
  which would contradict
  $\clsc_1 \unionSP \clsc_2 = \clsc \impl \cld$.
  It follows that 
  $\olnot{\lita} \notin \lit{\cld}$.
  Flip $\lita$ to true and denote the resulting truth value assignment
  by~%
  ${\modtva{\tvastd}{\lita = 1}}$.
  By construction
  $\logeval{\clsc_1}{\modtva{\tvastd}{\lita = 1}} = 1$
  and $\clsc_2$ and $\cld$ are not affected since
  $
  \setsmall{\lita, \olnot{\lita}}
  \intersectionSP
  \bigl(
  \lit{\clsc_2} \unionSP \lit{\cld}
  \bigr)
  = \emptyset
  $,
  so
  $\logeval{\clsc}{\modtva{\tvastd}{\lita = 1}} = 1$
  and
  $\logeval{\cld}{\modtva{\tvastd}{\lita = 1}} = 0$.
  Contradiction.\qedhere%
\end{proof}


\begin{lemma}
  \label{lem:C-move-literal-right}
  Suppose that
  $\clc, \cld$ are clauses and
  $\clsc$ is a set of clauses.
  Then
  $\clsc \unionSP \setcompact{\clc} \impl \cld$
  \ifaoif
  $\clsc \impl \stdnot{\lita} \lor \cld$
  for all
  ${\lita \in \lit{\clc}}$.
\end{lemma}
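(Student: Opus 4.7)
The plan is to prove both directions by standard semantic reasoning about satisfying assignments, essentially just unpacking the definition of implication clause by clause. The lemma is a simple propositional fact that formalizes the intuition ``$\clc$ on the left is the same as a case split over each literal $\lita \in \lit{\clc}$ moved, negated, to the right''.

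For the forward direction, I would assume $\clsc \unionSP \setcompact{\clc} \impl \cld$, fix an arbitrary literal $\lita \in \lit{\clc}$, and show $\clsc \impl \stdnot{\lita} \lor \cld$. Take any \tva $\tvastd$ with $\logevalstd{\clsc} = 1$. Either $\logevalstd{\lita} = 0$, in which case $\stdnot{\lita} \lor \cld$ is immediately satisfied, or $\logevalstd{\lita} = 1$, in which case $\logevalstd{\clc} = 1$ (since $\lita \in \lit{\clc}$ is one of its disjuncts), so by the assumption $\logevalstd{\cld} = 1$, and again $\stdnot{\lita} \lor \cld$ is satisfied.

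For the reverse direction, I would assume $\clsc \impl \stdnot{\lita} \lor \cld$ for every $\lita \in \lit{\clc}$, take any $\tvastd$ with $\logevalstd{\clsc \unionSP \setcompact{\clc}} = 1$, and derive $\logevalstd{\cld} = 1$. Since $\logevalstd{\clc} = 1$, there is at least one literal $\lita \in \lit{\clc}$ with $\logevalstd{\lita} = 1$, hence $\logevalstd{\stdnot{\lita}} = 0$. Applying the hypothesis for this particular $\lita$ to $\tvastd$ (which satisfies $\clsc$) yields $\logevalstd{\stdnot{\lita} \lor \cld} = 1$, and since the first disjunct is false we conclude $\logevalstd{\cld} = 1$.

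There is no real obstacle here; the only thing to be a little careful about is the edge case $\clc = \emptycl$, where the quantification ``for all $\lita \in \lit{\clc}$'' is vacuously true and must be consistent with $\clsc \unionSP \setcompact{\emptycl} \impl \cld$ holding trivially (since $\emptycl$ is unsatisfiable, the left-hand side is vacuously satisfied by no assignment). Both directions go through with this interpretation, so the whole argument fits comfortably in a short paragraph.
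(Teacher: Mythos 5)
Your proof is correct and takes essentially the same semantic route as the paper's: both directions are handled by unpacking the definition of implication on an arbitrary satisfying assignment, with the paper phrasing the forward direction contrapositively while you use a case split on $\logevalstd{\lita}$ — a cosmetic difference. The extra remark about $\clc = \emptycl$ is fine but not needed; the paper's definition of clauses already covers it.
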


\begin{proof}
  Assume that
  $\clsc \unionSP \setcompact{\clc} \impl \cld$
  and consider  any assignment $\tvastd$ \st
  $\logevalstd{\clsc} = 1$
  and
  \mbox{$\logevalstd{\cld} = 0$}
  (if there is no such $\tvastd$, then
  $\clsc \impl \cld \subseteq \olnot{\lita} \lor \cld$).
  Such an $\tvastd$ must set  $\clc$ to false, \ie   all
  $\olnot{\lita}$
  to true.
  Conversely, if 
  $\clsc \impl \stdnot{\lita} \lor \cld$
  for all
  ${\lita \in \lit{\clc}}$
  and  
  $\tvastd$ is such that 
  $\logevalstd{\clsc}
  =
  \logevalstd{\clc}
  =
  1
  $,
  it must hold that
  $\logevalstd{\cld} = 1$,
  since otherwise
  $\logevalstd{\olnot{\lita} \lor \cld} = 0$
  for some literal
  $\lita \in \lit{\clc}$
  satisfied by~$\tvastd$.
\end{proof}

\begin{lemma}
  \label{lem:C-no-complements-in-minimal-implication}
  Suppose that 
  $\clsc \impl \cld$ minimally.
  Then no literal from $\cld$ can occur negated in $\clsc$, 
  \ie
  it holds that
  $
  \setdescrsmall{\olnot{\lita}}{\lita \in \lit{\cld}}
  \intersectionSP
  \lit{\clsc}
  = \emptyset
  $.
\end{lemma}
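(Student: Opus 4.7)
The plan is to give a direct proof by contradiction, exploiting the minimality hypothesis to manufacture a violating assignment. Assume toward a contradiction that there is some literal $\lita \in \lit{\cld}$ such that $\olnot{\lita} \in \lit{\clsc}$, and pick a witness clause $\clc \in \clsc$ with $\olnot{\lita} \in \lit{\clc}$. The goal is to exhibit a truth value assignment that satisfies all of $\clsc$ but falsifies~$\cld$, which will contradict $\clsc \impl \cld$.

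First I would invoke minimality at~$\clc$: by definition of minimal implication, $\clsc \setminus \setcompact{\clc} \nimpl \cld$, so there exists an assignment $\tvastd$ with $\logevalstd{\clsc'} = 1$ for every $\clc' \in \clsc \setminus \setcompact{\clc}$ and $\logevalstd{\cld} = 0$. The key observation is then that since $\tvastd$ falsifies~$\cld$, it must falsify every literal in $\cld$ individually; in particular $\logevalstd{\lita} = 0$, and hence $\logevalstd{\olnot{\lita}} = 1$. But $\olnot{\lita} \in \lit{\clc}$, so $\tvastd$ automatically satisfies $\clc$ as well.

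Combining, $\tvastd$ satisfies every clause in $\clsc$ while falsifying~$\cld$, contradicting $\clsc \impl \cld$. Therefore no literal of~$\cld$ can occur negated in~$\clsc$, which is exactly the statement of the lemma.

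There is no real obstacle here: the argument is entirely mechanical once one writes down what ``minimally implies'' unfolds to. The only thing to be slightly careful about is the degenerate case where $\clsc$ is already empty or where $\cld$ is the empty clause, but both are handled trivially (the set in the statement is vacuously empty in the first case, and in the second there are no literals in $\cld$ at all). No appeal to the earlier lemmas \ref{lem:C-pure-literals-stay} or \ref{lem:C-move-literal-right} is needed for this proof.
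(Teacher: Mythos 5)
Your proof is correct and is essentially the paper's own argument: you remove a single offending clause $\clc$, invoke minimality to get an assignment $\tvastd$ that satisfies $\clsc \setminus \set{\clc}$ while falsifying $\cld$, and then observe that falsifying $\cld$ forces $\tvastd$ to set $\olnot{\lita}$ true, hence to satisfy $\clc$ as well — contradiction. The paper does the same thing but removes all such offending clauses at once (the set $\clsc_1$ of clauses containing some negated literal of $\cld$); your single-clause variant is, if anything, marginally leaner, and both hinge on the identical observation about what falsifying $\cld$ entails.
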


\begin{proof}
  Suppose not. Let 
  $\clsc_1
  = \setdescrsmall
  {\clc \in \clsc}
  {\exists \lita \text{ such that }
    \olnot{\lita} \in \lit{\clc} 
    \text{ and }
    \lita \in \lit{\cld}} 
  $
  and
  $\clsc_2 = \clsc \setminus \clsc_1$.
  Since
  $\clsc_2 \nimpl \cld$ there is an $\tvastd$ \st
  $\logevalstd{\clsc_2} = 1$
  and
  $\logevalstd{\cld} = 0$.
  But then
  $\logevalstd{\clsc_1} = 1$,
  since every
  $\clc \in \clsc_1$
  contains a negated literal 
  $\olnot{\lita}$
  from~$\cld$, and these literals are all
  set to true by~$\tvastd$. Contradiction.
\end{proof}

We also need the following key technical lemma
connecting implication with inflation moves.

\begin{lemma}
  \label{lem:C-implication-implies-derivability-by-inflation}
  Let
  $\clsc$ be a clause set derived from
  $\pebcontrNT[G]{\pebdeg}$. Suppose that
  $B$ is  a chain and that
  $S \subseteq G \setminus B$
  is a vertex set \st
  $\clsc \unionSP \blacktruth{S} \impl \somenodetrueclause{B}$
  and let 
  $W = S \intersectionSP \lpp{B}$.
  Then the \mpscfulltext   $\mpscnotstd$   is derivable by inflation
  from some   $\mpscnotprime \in \mpinducedconf{\clsc}$.
%
%
\end{lemma}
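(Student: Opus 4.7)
The plan is to construct the source \mpsctext $\mpscnotprime$ by extracting a minimal witnessing subset of the given implication, and then verify that the three inflation conditions of \refdef{def:multi-pebble-game} are satisfied so that $\mpscnotstd$ is obtained from $\mpscnotprime$ in a single inflation move.

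First I would start from the triple $(\clsc, S, B)$ and iteratively shrink any one of the three components while the implication $\clsc_{B'} \unionSP \blacktruth{S'} \impl \somenodetrueclause{B'}$ still holds. Since each shrinking strictly decreases a finite quantity, the process terminates at some triple $(\clsc_{B'}, S', B')$ with $\clsc_{B'} \subseteq \clsc$, $S' \subseteq S$ and $B' \subseteq B$, for which no further componentwise reduction is possible. That is precisely conditions \refeq{eq:C-sharp-condition-one}--\refeq{eq:C-sharp-condition-four}, so we obtain the \sharpimplsubst
\begin{equation*}
\clsc_{B'} \unionSP \blacktruth{S'} \sharpimpl \somenodetrueclause{B'}\eqperiod
\end{equation*}
I would argue $B' \neq \emptyset$ by observing that $\clsc$ is derived from $\pebcontrNT[G]{\pebdeg}$, which is satisfied by the assignment sending every variable to true, and the same assignment satisfies every clause in $\blacktruth{S'}$; thus $\clsc_{B'} \unionSP \blacktruth{S'}$ is satisfiable and cannot imply $\somenodetrueclause{\emptyset} = \false$. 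Since $B'$ is a subset of the chain $B$ it is itself a chain, so setting $W' = S' \intersectionSP \lpp{B'}$ gives a valid \mpsctext $\mpscnotprime \in \mpinducedconf{\clsc}$.

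Next I would check the three inflation conditions for deriving $\mpscnotstd$ from $\mpscnotprime$. Condition~(i), $B \supseteq B'$, is immediate from the construction. For condition~(ii), $B \intersectionSP W' = \emptyset$, note that $W' \subseteq S' \subseteq S$ and that by hypothesis $S \subseteq G \setminus B$, so $W' \intersectionSP B = \emptyset$. For condition~(iii), $W \supseteq W' \intersectionSP \lpp{B}$, compute
\begin{equation*}
W' \intersectionSP \lpp{B}
= S' \intersectionSP \lpp{B'} \intersectionSP \lpp{B}
\subseteq S \intersectionSP \lpp{B}
= W\eqperiod
\end{equation*}
All three conditions being met, $\mpscnotstd$ is derivable by a single inflation move from $\mpscnotprime \in \mpinducedconf{\clsc}$, as desired.

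The main obstacle, though a minor one, is the joint-minimization step: one must be careful that local reductions in each of the three coordinates yield a triple that simultaneously satisfies all four conditions of \refdef{def:C-induced-multi-pebble-subconfiguration}. This works because the three minimality conditions \refeq{eq:C-sharp-condition-two}--\refeq{eq:C-sharp-condition-four} each fix the other two coordinates when varying one of them, matching exactly the componentwise termination criterion of the shrinking procedure. Everything else is a short set-theoretic check.
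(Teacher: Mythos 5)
Your proof is correct and takes essentially the same approach as the paper: pick componentwise-minimal $(\clsc_{B'}, S', B')$ (which the paper states in one line as "pick minimal \st the implication holds"), observe this gives a precise implication and hence $\mpscnotprime \in \mpinducedconf{\clsc}$ after checking $B' \neq \emptyset$, then verify the three inflation conditions by short set-theoretic calculations. The only cosmetic difference is that you spell out the terminating shrinking procedure and use the all-true assignment for the satisfiability argument, whereas the paper simply notes the left-hand side is non-contradictory.
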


\begin{proof}
  Pick
  $\clsc' \subseteq \clsc$,
  $S' \subseteq S$
  and
  $B' \subseteq B$
  minimal \st
  $\clsc' \unionSP \blacktruth{S'} \impl \somenodetrueclause{B'}$.
  Then
  $\clsc' \unionSP \blacktruth{S'} \sharpimpl
  \somenodetrueclause{B'}$
  by definition.
  Note, furthermore, that 
  $B' \neq \emptyset$ 
  since the clause set on the left-hand side must be
  non-contradictory.   Also,
  $\clsc' \neq \emptyset$
  since
  $
  B' \intersectionSP S'
  \subseteq
  B \intersectionSP S
  = \emptyset
  $,
  so by
  \reflem{lem:C-pure-literals-stay}
  it cannot be that
  $
  \blacktruth{S'} \impl \somenodetrueclause{B'}
  $.
  This means that
  $\clsc$
  induces
  $\mpscnotprime$
  for
  $W' = S' \intersectionSP \lpp{B'}$.
  We claim that 
  $\mpscnotprime$
  can be inflated to
  $\mpscnotstd$, from which the lemma follows.

  To verify this claim, note that   first two conditions
  $B' \subseteq B$
  and
  $B \intersectionSP W' \subseteq B \intersectionSP S = \emptyset$
  for inflation moves in \refdef{def:multi-pebble-game} clearly hold
  by construction. As to the third condition, we get
  \begin{equation*}
    W' \intersectionSP \lpp{B}
    =
    \bigl( S' \intersectionSP \lpp{B'} \bigr) \intersectionSP \lpp{B}
    \subseteq
    S \intersectionSP \lpp{B}
    =
    W
  \end{equation*}
  which proves the claim.
\end{proof}

We now start the case analysis 
in the proof of  \refth{th:C-translation-of-resolution-to-pebbling}
for the different possible derivation
steps in a resolution derivation.

\subsection{Erasure}
\label{sec:derivations-induce-case-erasure}

Suppose that
$\clsc_{t} = \clsc_{t-1} \setminus \setsmall{\clc}$
for
$\clc \in \clsc_{t-1}$.
%
%
It is easy to see that the only possible outcome of erasing
clauses is that  \mpscfulltext{}s disappear.
We note for future reference that this implies that the
\blobpebblingtext cost decreases monotonically when going from
$\mpinducedconf{\clsc_{t-1}}$
to~%
$\mpinducedconf{\clsc_{t}}$.

\subsection{Inference}
\label{sec:derivations-induce-case-inference}


Suppose that
$\clsc_{t} = \clsc_{t-1} \unionSP \setsmall{\clc}$
for some clause $\clc$ derived from $\clsc_{t-1}$.
No \mpscfulltext{}s can disappear at an inference move since
$\clsc_{t-1} \subseteq \clsc_{t}$.
Suppose that
$\mpscnotstd$
is a new \mpsctext at time~$t$ arising from
$\clsc_{B} \subseteq \clsc_{t-1}$
and
$S \subseteq G \setminus B$
\st
$W = S \intersectionSP \lppstd$
and
$
\clsc_{B} \unionSP \setsmall{\clc} \unionSP \blacktruth{S} 
\sharpimpl
\somenodetrueclause{B}
$.
Since $\clc$ is derived from $\clsc_{t-1}$, we have
$\clsc_{t-1} \impl \clc$. 
Thus it holds that
$\clsc_{t-1}  \unionSP \blacktruth{S} \impl \somenodetrueclause{B}$
and
\reflem{lem:C-implication-implies-derivability-by-inflation}
tells us that
$\mpscnotstd$
is derivable by inflation from~$\mpinducedconf{\clsc_{t-1}}$.

Since no \mpsctext disappears, the pebbling cost
increases monotonically when going from
$\mpinducedconf{\clsc_{t-1}}$
to
$\mpinducedconf{\clsc_{t}}$
for an inference step, which is again noted
for future reference.

\subsection{Axiom Download}
\label{sec:derivations-induce-case-axiom-download}


This is the interesting case.
Assume that a new \mpscfulltext
$\mpscnotstd$
is induced at time $t$ as the result of a download of an axiom
$\clc \in \pebax{r}$.
Then $\clc$ must be one of the clauses inducing the \mpsctext, and we
get that there are
$\clsc_B \subseteq \clsc_{t-1}$
and
$S \subseteq G \setminus B$
with
$W = S \intersectionSP \lpp{B}$
\st
\begin{equation}
  \label{eq:C-time-t-plus-one-sharp-implication}
  \clsc_{B} \unionSP \setsmall{\clc} \unionSP \blacktruth{S} 
  \sharpimpl
  \somenodetrueclause{B}
  \eqperiod
\end{equation}
Our intuition is that download of an axiom clause
$\clc \in \pebax{r}$
in the resolution derivation should correspond to an introduction of 
$\intrompscnot{r}$
in the induced \multipebblingtext. We want to prove that any other
\mpscfulltext $\mpscnotstd$ in $\mpinducedconf{\clsc_{t}}$
is derivable by the pebbling rules from
$\mpinducedconf{\clsc_{t-1}} \unionSP \singset{\intrompscnot{r}}$.
Also, we need to prove that the pebbling moves needed to go from
$\mpinducedconf{\clsc_{t-1}}$
to
$\mpinducedconf{\clsc_{t}}$
do not increase the \blobpebblingtext cost by more than an additive
constant compared to 
$
\Maxofexpr{
\mpcost{\mpinducedconf{\clsc_{t-1}}},
\mpcost{\mpinducedconf{\clsc_{t}}}
}
=
\mpcost{\mpinducedconf{\clsc_{t}}}
$.

We do the proof  
by a case analysis over $r$ depending on where in the
graph this vertex is located in relation to $B$. 
To simplify the proofs for the different cases, we first show a general
technical lemma about  pebble induction at axiom download.

\begin{lemma}
  \label{lem:C-lemma-axiom-download-pebble-induction}
  Suppose that
  $\clsc_{t} = \clsc_{t-1} \unionSP \clc$
  for an axiom
  $\clc \in \pebax{r}$
  and that
  $\mpscnotstd$
  is a new \mpscfulltext
  induced at time $t$
  as witnessed by~%
  \refeq{eq:C-time-t-plus-one-sharp-implication}.
  %
  Then it holds that:
  \begin{enumerate}
  \item
  \label{part:part-one-lemma-axiom-download-pebble-induction}
    $r \notin S$.
    
  \item
  \label{part:part-two-lemma-axiom-download-pebble-induction}
    $\prednode{r} \intersectionSP B = \emptyset$.

    \item
      \label{part:part-three-lemma-axiom-download-pebble-induction}
      If $r \notin B$,
      then
      $\clsc_{t-1}$ induces
      $\mpscnot{B}{W \unionSP (\setsmall{r} \intersectionSP \lpp{B})}$
      if $r$ is a source,  
      and       otherwise this \mpsctext
      can be derived 
      from $\mpinducedconf{\clsc_{t-1}}$
      by inflation.

    \item
      \label{part:part-four-lemma-axiom-download-pebble-induction}
      If $r$ is a non-source vertex and
      $v \in \prednode{r}$ 
      is such that
      $v \in \lpp{B} \setminus S$,
      then
      we can derive
      $\mpscnot{B \unionSP v}{S \intersectionSP \lpp{B \unionSP v}}$
      from $\mpinducedconf{\clsc_{t-1}}$
      by inflation.
  \end{enumerate}
\end{lemma}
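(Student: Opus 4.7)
The four parts are largely independent: each extracts a structural consequence from the minimality conditions \refeq{eq:C-sharp-condition-two}--\refeq{eq:C-sharp-condition-four} witnessing \refeq{eq:C-time-t-plus-one-sharp-implication}, combined with the auxiliary Lemmas \ref{lem:C-pure-literals-stay}--\ref{lem:C-implication-implies-derivability-by-inflation}. Parts~\ref{part:part-one-lemma-axiom-download-pebble-induction} and~\ref{part:part-two-lemma-axiom-download-pebble-induction} will be short arguments by contradiction. For Part~\ref{part:part-one-lemma-axiom-download-pebble-induction}, if $r \in S$ then $\sourceclause[i]{r} \in \blacktruth{S}$ already implies~$\clc$ (an equality of clauses when $r$ is a source, and by literal-containment $\sourceclause[l]{r} \subseteq \clc$ otherwise), making $\clc$ redundant in \refeq{eq:C-time-t-plus-one-sharp-implication} and contradicting the minimality condition \refeq{eq:C-sharp-condition-two}. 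For Part~\ref{part:part-two-lemma-axiom-download-pebble-induction}, which is vacuous when $r$ is a source, if a predecessor $p$ of $r$ lay in $B$ then the literal $p_i$ would appear in $\somenodetrueclause{B}$ while $\olnot{p}_i$ appears in $\clc$, contradicting \reflem{lem:C-no-complements-in-minimal-implication}.

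For Part~\ref{part:part-three-lemma-axiom-download-pebble-induction}, the key observation is that $\sourceclause[l]{r}$ implies $\clc$ in both cases (with equality when $r$ is a source). Substituting $\sourceclause[l]{r} \in \blacktruth{\{r\}}$ for $\clc$ in~\refeq{eq:C-time-t-plus-one-sharp-implication} preserves the implication and produces $\clsc_B \unionSP \blacktruth{S \unionSP \{r\}} \impl \somenodetrueclause{B}$. When $r$ is a source the substitution is literally an equality of clause sets, so the three minimality conditions transfer verbatim and $\clsc_{t-1}$ genuinely induces the claimed \mpsctext. When $r$ is a non-source \reflem{lem:C-implication-implies-derivability-by-inflation} applied to $B$ and $S \unionSP \{r\}$ directly produces an inflation derivation of a \mpsctext with the same black pebble $B$ and legal white support $(S \unionSP \{r\}) \intersectionSP \lpp{B} = W \unionSP (\{r\} \intersectionSP \lpp{B})$.

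For Part~\ref{part:part-four-lemma-axiom-download-pebble-induction}, with $\prednode{r} = \{v, q\}$ and $\clc = \olnot{v}_i \lor \olnot{q}_j \lor \sourceclause[l]{r}$, the plan is to use \reflem{lem:C-move-literal-right} to transfer the literal $\olnot{v}_i$ from~$\clc$ to the right-hand side, producing $\clsc_B \unionSP \blacktruth{S} \impl v_i \lor \somenodetrueclause{B}$ for the particular index $i$ appearing in~$\clc$; weakening the right-hand side then gives $\clsc_B \unionSP \blacktruth{S} \impl \somenodetrueclause{B \unionSP \{v\}}$, and \reflem{lem:C-implication-implies-derivability-by-inflation} yields the required inflation move. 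The geometric bookkeeping to verify is that $B \unionSP \{v\}$ is a chain and that the white support contracts correctly to $S \intersectionSP \lpp{B \unionSP \{v\}}$; the first follows from $v \in \lpp{B} = \unionpathsviavertex{B} \setminus B$, which forces $v$ to lie on a source path through every vertex of~$B$ and hence to be comparable with all of them, and the second is immediate from $v \notin S$ and the definitions. This bookkeeping in Part~\ref{part:part-four-lemma-axiom-download-pebble-induction} is the main point where care is needed, but the formulation of $\lpp{\cdot}$ in~\refeq{eq:legal-pebble-positions-definition} appears to be designed precisely to make it routine.
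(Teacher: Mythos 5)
Your proof takes essentially the same approach as the paper's: Part~\ref{part:part-one-lemma-axiom-download-pebble-induction} via redundancy of~$\clc$ against \refeq{eq:C-sharp-condition-two}, Part~\ref{part:part-two-lemma-axiom-download-pebble-induction} via \reflem{lem:C-no-complements-in-minimal-implication}, Part~\ref{part:part-three-lemma-axiom-download-pebble-induction} by absorbing $\clc$ into $\blacktruth{S \unionSP \{r\}}$ and invoking \reflem{lem:C-implication-implies-derivability-by-inflation} in the non-source case, and Part~\ref{part:part-four-lemma-axiom-download-pebble-induction} by shifting $\olnot{v}_i$ to the right with \reflem{lem:C-move-literal-right} and then inflating. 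Your bookkeeping checks for Part~\ref{part:part-four-lemma-axiom-download-pebble-induction} (that $B \unionSP \{v\}$ is a chain because $v \in \lpp{B}$, and that $S \subseteq G \setminus (B \unionSP v)$ since $v \notin S$) are exactly what the paper uses, just stated more explicitly.
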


\begin{proof}
  Suppose that
  $
  \mpscnotstd \in 
  \mpinducedconf{\clsc_{t}} \setminus \mpinducedconf{\clsc_{t-1}} 
  $.
  For \refpart{part:part-one-lemma-axiom-download-pebble-induction},
  noting that 
  $\blacktruth{r} \impl \clc$
  for
  $\clc \in \pebax{r}$
  we see that
  $r \notin S$,
  as   otherwise the implication
  \refeq{eq:C-time-t-plus-one-sharp-implication}
  cannot be \sharpimpladj   since $\clc$ can be omitted.

  If $r$ is a source
  \refpart{part:part-two-lemma-axiom-download-pebble-induction}
  is trivial, so suppose 
  $\prednode{r} = \setsmall{p,q}$
  and
  $\clc = \pqrstd$.
  Then it follows from
  \reflem{lem:C-no-complements-in-minimal-implication}
  that
  $\setsmall{p,q} \intersectionSP B = \emptyset$.
    
  For 
  \refpart{part:part-three-lemma-axiom-download-pebble-induction},
  if $r$ is a source, we have
  $\clc = \sourceclause[i]{r}$
  and
  \refeq{eq:C-time-t-plus-one-sharp-implication}
  becomes
  \begin{equation}
    \label{eq:C-time-t-source-sharp-implication}
    \clsc_{B} \unionSP \blacktruth{S \unionSP r} 
    \sharpimpl
    \somenodetrueclause{B}
  \end{equation}
  for 
  $S \unionSP r \subseteq G \setminus B$,
  which shows that
  $\clsc_{t-1}$
  induces
  \begin{equation}
    \begin{split}
      \mpscnot{B}{(S \unionSP r) \intersectionSP \lpp{B}}
    &=
    \mpscnot{B}{(S \intersectionSP \lpp{B}) \unionSP (r \intersectionSP \lpp{B})}
    \\
    &=
    \mpscnot{B}{(W \unionSP (r \intersectionSP \lpp{B})}
    \eqperiod
    \end{split}
  \end{equation}
  If $r$ is a non-source we do not get a \sharpimplsubst but still have 
  \begin{equation}
    \label{eq:C-time-t-non-source-implication}
    \clsc_{B} \unionSP \blacktruth{S \unionSP r} 
    \impl
    \somenodetrueclause{B}
  \end{equation}
  and
  \reflem{lem:C-implication-implies-derivability-by-inflation}
  yields that
  $
  \mpscnot{B}{(S \unionSP r) \intersectionSP \lpp{B}}
  =
  \mpscnot{B}{W \unionSP (r \intersectionSP \lpp{B})}
  $
  is derivable by inflation from~$\mpinducedconf{\clsc_{t-1}}$.
    
If $v \in \prednode{r}$ in
\refpart{part:part-four-lemma-axiom-download-pebble-induction},
the downloaded axiom can be written on the form
$\clc = \clc' \lor \olnot{v}_i$.
Applying
\reflem{lem:C-move-literal-right}
on
\refeq{eq:C-time-t-plus-one-sharp-implication}
we get
\begin{equation}
  \label{eq:C-time-t-B-union-v-implication}
  \clsc_{B} \unionSP \blacktruth{S} 
  \impl
  \somenodetrueclause{B} \lor v_i
  \subseteq
  \somenodetrueclause{B \unionSP v}
  \eqperiod
\end{equation}
By assumption,  we have that
$B \unionSP v$ is a chain and
that
$S \subseteq G \setminus (B \unionSP v)$, 
so
\reflem{lem:C-implication-implies-derivability-by-inflation}
says that
$\mpscnot{B \unionSP v}{S \intersectionSP \lpp{B \unionSP v}}$
is derivable from $\mpinducedconf{\clsc_{t-1}}$ 
by inflation.
\end{proof}

What we get from
\reflem{lem:C-lemma-axiom-download-pebble-induction}
is not in itself sufficient to derive the new \mpscfulltext
$\mpscnotstd$ 
in the \multipebblegame, but the lemma provides 
\mpsctext{}s that will be used as building blocks in the derivations of
$\mpscnotstd$
below.

Now we are ready for the case analysis over the vertex $r$
for the downloaded axiom clause 
$\clc \in \pebax{r}$.
Recall that the assumption is that 
there exists a \mpscfulltext  
$
\mpscnotstd \in
\mpinducedconf{\clsc_{t}} \setminus \mpinducedconf{\clsc_{t-1}}
$
induced through~%
\refeq{eq:C-time-t-plus-one-sharp-implication}
for
$\clsc_B \subseteq \clsc_{t-1}$
and
$S \subseteq G \setminus B$
with
$W = S \intersectionSP \lpp{B}$.
Remember also that we want to explain all new \mpsctext{}s in
$\mpinducedconf{\clsc_{t}} \setminus \mpinducedconf{\clsc_{t-1}}$
in terms of pebbling moves from
$\mpinducedconf{\clsc_{t}} \unionSP \setsmall{\intrompscnot{r}}$.
As illustrated in
\reffig{fig:casesforvertex},
the cases for $r$ are:
\begin{enumerate}
\item 
  $r \in G \setminus 
  \bigl( 
  \belowvertices{\minelement{b}}
  \unionSP
  \unionpathsinvertex{B}
  \bigr)
  $
  for
  $\minelement{b} = \minvertex{B}$,
  
\item 
  $r \in \unionpathsinvertex{B} \setminus B$,
  
\item 
  $r \in  B \setminus \setsmall{\minelement{b}}$
  for
  $\minelement{b} = \minvertex{B}$,
  
\item
  $r = {\minvertex{B}}$, 
  and

\item 
$r \in \belowverticesNR{\minelement{b}}$
  for
  $\minelement{b} = \minvertex{B}$.
\end{enumerate}
%

\begin{figure}[tp]
  \centering
  \includegraphics{casesforvertex.1}%
  \caption{%
    Cases for vertex $r$ 
    \wrt new black \multipebble $B$
    at download of axiom
    $\clc \in\pebax{r}$.}
  \label{fig:casesforvertex}
\end{figure}

%
%
\renewcommand{\theenumi}{\Roman{enumi}}

\subsubsection{Case 1: 
$r \in G \setminus 
  \bigl( 
  \belowvertices{\minelement{b}} \unionSP \unionpathsinvertex{B}
  \bigr)
  $
  for
  $\minelement{b} = \minvertex{B}$
}
\label{sec:derviations-induce-case-one}

If
$r \in G \setminus 
\bigl( 
\belowvertices{\minelement{b}} \unionSP \unionpathsinvertex{B}
\bigr)
$,
this means that the vertex $r$ is outside the set of vertices covered
by \sourcepath{}s via $B$ to $\topvertex{B}$. In other words,
$r \notin \lpp{B} \unionSP B$
and
\refpart{part:part-three-lemma-axiom-download-pebble-induction}
of
\reflem{lem:C-lemma-axiom-download-pebble-induction}
yields that
$
\mpscnotcompact{B}{W \unionSP (r \intersectionSP \lpp{B})}
=
\mpscnotstd
$
is derivable from
$\mpinducedctminusone$
by inflation.
Note that we need no intermediate \mpsctext{}s in this case.

\subsubsection{Case 2: 
$r \in \unionpathsinvertex{B} \setminus B$}
\label{sec:derviations-induce-case-two}

This is the first more challenging case, and we do it in some detail
to show how the reasoning goes. The proofs for the rest of the cases
are analogous and will be presented in slightly more condensed form.

The condition
$r \in \unionpathsinvertex{B} \setminus B$
says that the vertex $r$ is located on some path from
$\bottomvertex{B}$ via $B$ to $\topvertex{B}$
strictly above the bottom vertex
$b = \bottomvertex{B}$.
In particular, this means that $r$ cannot be a source vertex.
Let
$\predrequalpq$
and denote the downloaded axiom clause
$\clc = \pqrstd$.

\Refpart{part:part-three-lemma-axiom-download-pebble-induction}
of
\reflem{lem:C-lemma-axiom-download-pebble-induction}
says that we can derive the \mpscfulltext
\begin{equation}
  \label{eq:C-axdow-a}
  \mpscnot{B}{W \unionSP (r \intersectionSP \lpp{B})}
  =
  \mpscnot{B}{\Wur}
\end{equation}
by inflation from
$\mpinducedctminusone$,
where the equality holds since
$r \in \unionpathsinvertex{B} \setminus B \subseteq \lpp{B}$
by
\refdef{def:C-multi-pebble-configuration}.
Also, since $r$ is on some path above $b$, at least one of the
predecessors of $r$ must be located on some path from $b$ as well.
That is, translating what was just said into our notation we have that 
the fact that
$r \in \unionpathsinvertex{B}
\intersectionSP \aboveverticesNR{b}$
implies that either
$p \in \unionpathsinvertex{B}$
or
$q \in \unionpathsinvertex{B}$
or both.
By symmetry, we get two cases:
$p \in \unionpathsinvertex{B}, \ q \notin \unionpathsinvertex{B}$
and
$\setsmall{p,q} \subseteq \unionpathsinvertex{B}$.
Let us look at them in order.

\begin{enumerate}
\item
  $p \in \unionpathsinvertex{B}, \ q \notin \unionpathsinvertex{B}$:
  We make a subcase analysis depending on whether
  $p \in \BuW$
  or not.
  Recall from  
  \refpart{part:part-two-lemma-axiom-download-pebble-induction}
  of
  \reflem{lem:C-lemma-axiom-download-pebble-induction}
  that
  $p \notin B$. The two remaining cases are
  $p \in W$
  and
  $p \notin \BuW$.

  \begin{enumerate}
  \item
    $p \in W$:
    Let $v$ be the uppermost vertex in $B$ below $p$, 
    or in formal notation
    \begin{equation}
      \label{eq:C-top-G-p-intersection-B}
      v = \maxvertex{\belowvertices{p} \intersectionSP B}
      \eqperiod
    \end{equation}
    Such a vertex $v$ must exist since
    $p \in \unionpathsinvertex{B} \setminus B$.
    Since $p$ is above $v$ and is a predecessor of~$r$, it lies on
    some path from $v$ to~$r$, \ie
    $p \in \unionpathsinvertex{\vrset} \setminus \vrset$.
    For the sibling $q$ we have
    $q \notin \unionpathsinvertex{\vrset}$. 
    This is so     since
    $q \notin \unionpathsinvertex{B}$
    and for any path
    $P \in \pathsinvertex{\setsmall{v,r}}$ 
    it holds that
    $P \subseteq \unionpathsinvertex{B}$
    since there is nothing inbetween $v$ and $r$ in $B$,
    \ie 
    $\bigl( \unionpathsinvertex{\vrset} \setminus \vrset \bigr)
    \intersectionSP B = \emptyset$.
    Also, 
    $q \notin \belowverticesNR{p} \supseteq \belowverticesNR{v}$
    because of
    the \siblingnonreachabiblityproperty.
    Hence, it must hold that 
    $q \notin \lpp{\vrset}$. 

    We can use this information to make \blobpebblingtext moves
    resulting in $\mpscnotstd$ as follows.
    First introduce $\mpscrpq$ and inflate this \mpsctext to 
    \begin{equation}
      \label{eq:C-axdow-b}
      \mpscnot{\vrsetNP}{\pqset \intersectionSP \lpp{\vrset}}
      =
      \mpscnot{\vrsetNP}{p}
      \eqperiod
    \end{equation}
    Then derive the \mpsctext
    $\mpscnot{B}{\Wur}$
    in \refeq{eq:C-axdow-a} 
    by inflation from
    $\mpinducedctminusone$.
    Finally, merge the two \mpsctext{}s
    \refeq{eq:C-axdow-a}
    and
    \refeq{eq:C-axdow-b}.
    The result of this  merger move is
    $\mpscnot{\Buv}{\Wup} = \mpscnotstd$.

  \item
    $p \notin \BuW$:
    Note that
    $p \in \pathsBminusB$
    by assumption. 
    Also, it must hold that
    $p \notin S$
    since otherwise we would get the contradiction
    $
    p \in S \intersectionSP (\pathsBminusB)
    \subseteq S \intersectionSP \lpp{B} = W
    $.
    Thus,
    $p \in \lpp{B} \setminus S$
    and 
    \refpart{part:part-four-lemma-axiom-download-pebble-induction}
    of
    \reflem{lem:C-lemma-axiom-download-pebble-induction}
    yields that we can derive the \mpscfulltext
    \begin{equation}
      \label{eq:C-axdow-c}
      \mpscnot{\Bup}{W_p} \text{\ \ for \ }
      W_p \subseteq W
    \end{equation}
    by inflation from $\mpinducedctminusone$,    where
    $
    W_p = S \intersectionSP \lpp{\Bup}
    \subseteq
    S \intersectionSP \lpp{B} = W
    $
    since
    $\lpp{\Bup} \subseteq \lpp{B}$
    if $p \in \unionpathsinvertex{B}$.
    (This last claim is easily verified directly from
    \refdef{def:C-multi-pebble-configuration}.)

    With
    $v = \maxvertex{\belowvertices{p} \intersectionSP B}$
    as in
    \refeq{eq:C-top-G-p-intersection-B},
    introduce
    $\mpscrpq$
    and inflate to
    $\mpscnot{\vrsetNP}{p}$
    as in
    \refeq{eq:C-axdow-b}.
    Merging the \mpsctext{}s 
    \refeq{eq:C-axdow-b}
    and
    \refeq{eq:C-axdow-c}
    yields
    \begin{equation}
      \label{eq:C-axdow-d}
      \mpscnot{B \unionSP \vrset}{W_p}
      =
      \mpscnot{\Bur}{W_p}
    \end{equation}
    and a second merger of the resulting \mpsctext
    \refeq{eq:C-axdow-d}
    with the \mpsctext in
    \refeq{eq:C-axdow-a}
    produces
    $
    \mpscnot{B}{W \unionSP W_p} = \mpscnotstd
    $.
  \end{enumerate}

  This finishes the case
  $p \in \unionpathsinvertex{B}$,
  $q \notin \unionpathsinvertex{B}$.

\item
  \label{case:two-two} 
  $\setsmall{p,q} \subseteq \unionpathsinvertex{B}$:
  By
  \refpart{part:part-two-lemma-axiom-download-pebble-induction}
  of
  \reflem{lem:C-lemma-axiom-download-pebble-induction}
  $\setsmall{p,q} \intersectionSP B = \emptyset$,
  so
  $\setsmall{p,q} \subseteq \pathsBminusB$.
  By symmetry, we have the following subcases for
  $p$ and $q$ \wrt membership in $B$ and $W$.
  \begin{enumerate}
    \item
      $\pqset \subseteq W$,
    \item
      $p \in W, \  q \notin W$,
    \item
      $\pqset \intersectionSP (\BuW) = \emptyset$.      
  \end{enumerate}
  We analyze these subcases one by one.
  
  \begin{enumerate}
    \item
      $\pqset \subseteq W$:
      This is easy.
      Just introduce
      $\mpscrpq$
      and merge 
      this \mpsctext
      with the \mpsctext
      \refeq{eq:C-axdow-a}
      to get
      $
      \mpscnot{B}{W \unionSP \pqset} = 
      \mpscnotstd
      $.

    \item
      \label{case:case-two-two-b}
      $p \in W, \  q \notin W$:
      In this case it must hold that
      $q \notin S$
      since otherwise we would have
      $
      q \in 
      S \intersectionSP (\pathsBminusB)
      \subseteq
      S \intersectionSP \lpp{B} = W
      $
      contradicting the assumption. Thus
      $q \in (\pathsBminusB) \setminus S
      \subseteq \lpp{B} \setminus S$
      and
      \refpart{part:part-four-lemma-axiom-download-pebble-induction}
      of
      \reflem{lem:C-lemma-axiom-download-pebble-induction}
      allows us to derive
      \begin{equation}
        \label{eq:C-axdow-e}
        \mpscnot{\Buq}{W_q} 
        \text{\ \ for \ }
        W_q \subseteq W
      \end{equation}
      by inflation from $\mpinducedctminusone$.
      Here we have
      $
      W_q = S \intersectionSP \lpp{\Buq}
      \subseteq
      S \intersectionSP \lpp{B} = W
      $
      since
      $\lpp{\Buq} \subseteq \lpp{B}$
      when $q \in \unionpathsinvertex{B}$.
      
      Introduce
      $\mpscrpq$ and merge with the \mpsctext
      \refeq{eq:C-axdow-e}
      to get
      \begin{equation}
        \label{eq:C-axdow-f}
        \mpscnot{\Bur}{\Wqup} 
      \end{equation}
      and then merge
      \refeq{eq:C-axdow-f}
      with
      $\mpscnot{B}{\Wur}$
      from
      \refeq{eq:C-axdow-a}
      to get
      $
      \mpscnot{B}{W \unionSP \Wqup}
      = \mpscnotstd
      $.

    \item
      $\pqset \intersectionSP \BuW = \emptyset$:
      Just as for the vertex $q$ in case 
      \refcase{case:case-two-two-b},
      here it holds for both $p$ and $q$ that
      $\pqset \subseteq \lpp{B} \setminus S$.
      \Refpart{part:part-four-lemma-axiom-download-pebble-induction}
      of
      \reflem{lem:C-lemma-axiom-download-pebble-induction}
      yields \mpsctext{}s
      $\mpscnot{\Bup}{W_p}$ for 
      $W_p\subseteq W$
      as in
      \refeq{eq:C-axdow-c}
      and
      $\mpscnot{\Buq}{W_q}$ for
      $W_q \subseteq W$
      as in
      \refeq{eq:C-axdow-e}
      derived by inflation from $\mpinducedctminusone$.
      
      Introduce
      $\mpscrpq$
      and merge with
      \refeq{eq:C-axdow-c}
      on $p$ to get
      \begin{equation}
        \label{eq:C-axdow-g}
        \mpscnot{\Bur}{\Wpuq}
      \end{equation}
      and then merge
      \refeq{eq:C-axdow-g}
      with
      \refeq{eq:C-axdow-e}
      on $q$ resulting in
      \begin{equation}
        \label{eq:C-axdow-h}
        \mpscnot{\Bur}{\WpuWq}
        \eqperiod
      \end{equation}
      Finally, merge
      \refeq{eq:C-axdow-h}
      with
      \refeq{eq:C-axdow-a}
      on $r$ to get
      $
      \mpscnot{B}{W \unionSP \WpuWq} = \mpscnotstd
      $.

  \end{enumerate}
\end{enumerate}

This concludes the case 
$r \in \unionpathsinvertex{B} \setminus B$.
We can see that in all subcases, the new \mpscfulltext
$\mpscnotstd$
is derivable from
$\mpinducedctminusone \unionSP \intrompscnot{r}$
by inflation moves followed by mergers on some subset of
$\setsmall{p,q,r}$.

Let us analyze the cost of deriving 
$\mpscnotstd$.
We want to bound the cost of the intermediate \mpsctext{}s 
that are used in the
transition from
$\mpinducedctminusone$
to
$\mpinducedct$
but are not present in~%
$\mpinducedct$.
We first note that for the \mpsctext{}s
$\mpscnot{B}{\Wur}$,
$\mpscnot{\Bup}{W_p}$,
$\mpscnot{\Buq}{W_q}$
and
$\mpscnot{\Bur}{W'}$ for various~$W' \subseteq W$,
the \chargeableverticestext are all subsets of the
\chargeableverticestext of the final \mpsctext~%
$\mpscnotstd$.
This is so since
$b = \bottomvertex{B}$
is the bottom vertex in all these black \multipebble{}s,
and all \chargeabletext white vertices are contained in
$
W \intersectionSP \belowvertices[G]{b}
$.
The \mpsctext{}s 
$\mpscrpq$
and
$\mpscnot{\vrsetNP}{p}$
for
$v = \maxvertex{\belowvertices{p} \intersectionSP B}$
can incur an extra cost, however, but this cost is clearly bounded by
$\setsizesmall{\setsmall{p,q,r,v}} = 4$.

\subsubsection{Case 3: 
  $r \in  B \setminus \setsmall{\minelement{b}}$
  for
  $\minelement{b} = \minvertex{B}$%
}
\label{sec:derviations-induce-case-three}

First we note that in this case, we can no longer use
\refpart{part:part-three-lemma-axiom-download-pebble-induction}
of
\reflem{lem:C-lemma-axiom-download-pebble-induction}
to derive the \mpscfulltext
$\mpscnot{B}{\Wur}$
of~\refeq{eq:C-axdow-a}.
The vertex $r$ cannot be added to the support $S$ since it is
contained in~$B$.
Also, 
we note that $r$ cannot be a source since
it is above the bottom vertex~$b$.
As usual, let us write
$\predrequalpq$.

Observe that just as in case~2 
(\refsec{sec:derviations-induce-case-two})
we must have either
$p \in \unionpathsinvertex{B}$
or
$q \in \unionpathsinvertex{B}$
or both. By symmetry we get the same two cases
for membership of $p$ and $q$ in
$\unionpathsinvertex{B}$, namely
$p \in \unionpathsinvertex{B}, \ q \notin \unionpathsinvertex{B}$
and
$\setsmall{p,q} \subseteq \unionpathsinvertex{B}$.

\begin{enumerate}
\item
  $p \in \unionpathsinvertex{B}, \ q \notin \unionpathsinvertex{B}$:
  As before,
  $p \notin B$
  by
  \refpart{part:part-two-lemma-axiom-download-pebble-induction}
  of
  \reflem{lem:C-lemma-axiom-download-pebble-induction}.
  We make a subcase analysis depending on whether
  $p \in W$
  or
  $p \notin \BuW$.

  As in
  \refeq{eq:C-top-G-p-intersection-B}
  we let
  $v = \maxvertex{\belowvertices{p} \intersectionSP B}$
  and note that
  $p \in \unionpathsinvertex{\vrset} \setminus \vrset$.
  For $q$ we have
  $q \notin \unionpathsinvertex{\vrset}$
  since
  $q \notin \unionpathsinvertex{B}$
  but
  $\vrset \subseteq \unionpathsinvertex{B}$
  and there is nothing inbetween $v$ and $r$ in $B$.
  Also,
  $q \notin \belowverticesNR{p} \supseteq \belowverticesNR{v}$
  because of the \siblingnonreachabiblityproperty.
  Hence, it holds that 
  $q \notin \lpp{\vrset}$.
  
  \begin{enumerate}
  \item
    $p \in W$:
    Introduce
    $\mpscrpq$,
    inflate
    $\mpscrpq$
    to
    $
    \mpscnot{\vrsetNP}{\pqset \intersectionSP \lpp{\vrset}}
    =
    \mpscnot{\vrsetNP}{p}
    $
    as in~\refeq{eq:C-axdow-b}
    and continue the inflation to
    $
    \mpscnot{B \unionSP \vrset}{\Wup}
    =
    \mpscnotstd
    $.
    
  \item
    $p \notin \BuW$:
    Just as in case~2, 
    $p \notin W$
    implies
    $p \notin S$,
    so
    $p \in \lpp{B} \setminus S$
    and we can use
    \refpart{part:part-four-lemma-axiom-download-pebble-induction}
    of
    \reflem{lem:C-lemma-axiom-download-pebble-induction}
    to derive
    $\mpscnot{\Bup}{W_p}$
    for
    $W_p \subseteq W$
    as in 
    \refeq{eq:C-axdow-c}.
    Introduce
    $\mpscrpq$,
    inflate to
    $
    \mpscnot{\vrsetNP}{p}
    $
    as in 
    \refeq{eq:C-axdow-b}
    and merge
    \refeq{eq:C-axdow-b}
    and
    \refeq{eq:C-axdow-c}
    on $p$ resulting in
    $
    \mpscnot{B \unionSP \vrset}{W_p}
    = \mpscnot{B}{W_p}
    $,
    which can be inflated to
    $\mpscnotstd$.

  \end{enumerate}

\item
  $\pqset \subseteq \unionpathsinvertex{B}$:
  We have the same possibilities to consider for containment of
  $p$ and $q$ in $\BuW$ as in
  case 2(\ref{case:two-two}) 
  on page~\pageref{case:two-two}.

  \begin{enumerate}
    \item
      $\pqset \subseteq W$:
      This is immediate.
      Introduce 
      the \mpsctext
      $\mpscrpq$ and inflate to
      $
      \mpscnot{\Bur}{W \unionSP \pqset} = \mpscnotstd
      $.

    \item
      \label{case:case-three-two-b}
      $p \in W, \ q \notin \BuW$:
      Apply
      \refpart{part:part-four-lemma-axiom-download-pebble-induction}
      of
      \reflem{lem:C-lemma-axiom-download-pebble-induction}
      to derive
      $\mpscnot{\Buq}{W_q}$
      for
      $W_q \subseteq W$
      by inflation from $\mpinducedctminusone$.
      Then introduce      
      $\mpscrpq$
      and merge on $q$ to get
      the \mpsctext
      $
      \mpscnot{\Bur}{\Wqup} =       \mpscnot{B}{\Wqup}
      $,
      which can be inflated further to
      $
      \mpscnot{B}{\Wqup \unionSP W}
      = \mpscnotstd
      $.

    \item
      $\pqset \intersectionSP(\BuW) = \emptyset$:
      In the same way as in 
      \refcase{case:case-three-two-b},
      derive
      the \mpsctext{}s
      $\mpscnot{\Bup}{W_p}$
      and
      $\mpscnot{\Buq}{W_q}$
      with
      $\WpuWq \subseteq W$
      from $\mpinducedctminusone$ by inflation.
      Introduce
      $\mpscrpq$
      and merge twice, first on $p$ and then on $q$, to get
      $\mpscnot{B}{\WpuWq}$,
      which can be inflated to 
      $\mpscnotstd$.
  \end{enumerate}
\end{enumerate}

This concludes the case 
$r \in  B \setminus \setsmall{\minelement{b}}$.
We see that in all subcases the new \mpscfulltext
$\mpscnotstd$
is derivable from
$\mpinducedctminusone \unionSP \intrompscnot{r}$
by inflation moves followed by mergers on some subset of
$\setsmall{p,q}$, 
possibly followed by one more inflation move.

As in the previous case, the bottom vertex in all of the black
\multipebble{}s
$\mpscblacknot{\Bup}$,
$\mpscblacknot{\Buq}$
and
$\mpscblacknot{\Bur}$
is
$b = \bottomvertex{B}$,
and the corresponding \chargeabletext white pebbles are subsets of
those of~$W$.
The extra cost caused by
the \mpsctext{}s 
$\mpscrpq$
and
$\mpscnot{\vrsetNP}{p}$
is at most~$4$.

%
%
\renewcommand{\theenumi}{\alph{enumi}}
\renewcommand{\labelenumi}{(\theenumi)}

\subsubsection{Case 4: 
$r = {\minvertex{B}}$}
\label{sec:derviations-induce-case-four}

If $r$ is a source,
any 
$\mpscnotstd$
with
$r \in B$
can be derived by introducing
$\intrompscnot{r} = \mpscnot{r}{\emptyset}$
and inflating.
Suppose therefore that
$r = \minvertex{B}$
is not a source and let
$\predrequalpq$.
Then it holds that
$\pqset \subseteq \belowverticesNR{r} \subseteq \lpp{B}$,
\ie the vertex sets
$\Bup$
and
$\Buq$
are both chains.

By symmetry, we have three cases for $p$ and $q$ \wrt membership in $W$.
(It is still true that
$\pqset \intersectionSP B = \emptyset$
by
\refpart{part:part-two-lemma-axiom-download-pebble-induction}
of
\reflem{lem:C-lemma-axiom-download-pebble-induction}.)

\begin{enumerate}
  \item
    $\pqset \subseteq W$:
    Immediate.
    Introduce
    $\mpscrpq$
    and inflate to
    $
    \mpscnot{\Bur}{W \unionSP \pqset} = \mpscnotstd
    $.

  \item
    \label{case:case-four-b}
    $p \in W, \ q \notin W$:
    Enlist the help 
    of our old friend
    \reflem{lem:C-lemma-axiom-download-pebble-induction},
    \refpart{part:part-four-lemma-axiom-download-pebble-induction},
    to derive
    $\mpscnot{\Buq}{W_q}$
    for
    $W_q \subseteq W$
    by inflation from $\mpinducedctminusone$
    (where
    $W_q \subseteq W$
    holds since
    $ \lpp{\Buv} \subseteq \lpp{B} $
    if
    $v \in \belowverticesNR{\minelement{b}}$).
    Introduce
    $\mpscrpq$
    and merge with
    $\mpscnot{\Buq}{W_q}$
    to get
    $
    \mpscnot{\Bur}{\Wqup}
    =    \mpscnot{B}{\Wqup}
    $.
    Then inflate
    $\mpscnot{B}{\Wqup}$ 
    to
    $\mpscnot{B}{\Wqup \unionSP W} = \mpscnotstd$.

  \item
    $\pqset \intersectionSP W = \emptyset$:
    Following an established tradition, mimic
    \refcase{case:case-four-b}
    and derive
    $\mpscnot{\Bup}{W_p}$
    and
    $\mpscnot{\Buq}{W_q}$
    with
    $\WpuWq \subseteq W$
    by inflation from
    $\mpinducedctminusone$.
    Introduce
    $\mpscrpq$,
    do two mergers to get
    $\mpscnot{B}{\WpuWq}$
    and inflate to
    $\mpscnotstd$.
    
\end{enumerate}

This takes care of the case
$r = \minelement{b}$.
Again, in all subcases
our new \mpsctext
$\mpscnotstd$
is derivable from
$\mpinducedctminusone \unionSP \intrompscnot{r}$
by inflation moves followed by mergers on some subset of
$\setsmall{p,q}$, 
possibly followed by one more inflation move.

This time the \multipebble{}s
$\mpscblacknot{\Bup}$
and
$\mpscblacknot{\Buq}$
can cause an extra intermediate cost of~$1$ each for
the bottom vertices $p$ and~$q$, and
$\mpscrpq$
potentially adds an extra cost~$1$ for~$r$, giving that the
intermediate extra cost is bounded by~$3$.

\subsubsection{Case 5: 
  $r \in \belowverticesNR{\minelement{b}}$
  for
  $\minelement{b} = \minvertex{B}$}
\label{sec:derviations-induce-case-five}

This final case is very similar to the previous case
$r = \bottomvertex{B}$.
Note first that
$r \in \belowverticesNR{\minelement{b}}
\subseteq \lpp{B}$.
If $r$ is a source, then
$\clc = \sourceclause[i]{r}$
and we have
\begin{equation}
\clsc_B \unionSP \setsmall{C} \unionSP \blacktruth{S} =
\clsc_B \unionSP \blacktruth{\Sur}
\sharpimpl \somenodetrueclause{B}
\end{equation}
at time $t-1$,
which shows that
$\mpscnot{B}{\Wur} \in \mpinducedctminusone$.
Hence, we can introduce
$\intrompscnot{r} = \mpscnot{r}{\emptyset}$
and merge on $r$ to get
$\mpscnotstd$.

As usual, the more interesting case is when
$r$ is a non-source with $\predrequalpq$.
The case analysis is just as in case~4
(\refsec{sec:derviations-induce-case-four}).
However, note that now we can again use
\refpart{part:part-three-lemma-axiom-download-pebble-induction}
of
\reflem{lem:C-lemma-axiom-download-pebble-induction}
to derive
$\mpscnot{B}{\Wur}$
from
$\mpinducedctminusone$ by inflation since it holds that $r \notin B$.

\begin{enumerate}
  \item
    $\pqset \subseteq W$:
    Introducing
    $\mpscrpq$
    and merging with
    $\mpscnot{B}{\Wur}$
    yields
    $\mpscnotstd$.

  \item
    \label{case:case-five-b}
    $p \in W, \ q \notin W$:
    Appeal to
    \refpart{part:part-four-lemma-axiom-download-pebble-induction}
    of
    \reflem{lem:C-lemma-axiom-download-pebble-induction}
    to get
    $\mpscnot{\Buq}{W_q}$
    for
    $W_q \subseteq W$
    by inflation from $\mpinducedctminusone$.
    Introduce
    $\mpscrpq$
    and merge to get
    $\mpscnot{\Bur}{\Wqup}$,
    and merge again with
    $\mpscnot{B}{\Wur}$
    to get
    $\mpscnotstd$.    

  \item
    $\pqset \intersectionSP W = \emptyset$:
    As in 
    \refcase{case:case-four-b}
    above for $q$, 
    derive
    $\mpscnot{\Bup}{W_p}$
    and
    $\mpscnot{\Buq}{W_q}$
    with
    $\WpuWq \subseteq W$
    by inflation from
    $\mpinducedctminusone$.
    Introduce
    $\mpscrpq$
    and do two mergers to get
    $\mpscnot{\Bur}{\WpuWq}$.
    Finally merge
    $\mpscnot{\Bur}{\WpuWq}$
    with
    $\mpscnot{B}{\Wur}$
    to get
    $\mpscnotstd$.    
    
\end{enumerate}

This takes care of the case
$r = \belowverticesNR{\minelement{b}}$.
We note that in all subcases of this case,
$\mpscnotstd$
is derivable from
$\mpinducedctminusone \unionSP \intrompscnot{r}$
by inflation moves followed by mergers on some subset of
$\setsmall{p,q, r}$.
Again,
the extra intermediate pebbling cost is bounded by
$
\setsizesmall{\setsmall{p,q,r}} = 3
$.

%
%
\renewcommand{\theenumi}{\arabic{enumi}}
\renewcommand{\labelenumi}{\theenumi.}

\subsection{Wrapping up the Proof}

If
$
\proofstd = \setcompact{\clsc_0, \ldots, \clsc_{\stoptime}}
$
is a derivation of
$\targetclause[i]$
from~%
$\pebcontrNT[G]{\pebdeg}$,
it is easily verified from
\refdef{def:induced-multi-pebble-subconfiguration}
that
$\mpinducedconf{\clsc_0} = \mpinducedconf{\emptyset} = \emptyset$
and
$\mpinducedconf{\clsc_{\stoptime}} 
= \mpinducedconf{\setsmall{\targetclause[i]}} 
= \setsmall{\unconditionalblackmpscnot{z}}$.

In
\refthreesecs
{sec:derivations-induce-case-erasure}
{sec:derivations-induce-case-inference}
{sec:derivations-induce-case-axiom-download},
we have shown how to do the intermediate \multipebblingtext moves to
get from
$\mpinducedctminusone$
to~%
$\mpinducedct$
in the case of erasure, inference and axiom download, respectively.
For erasure and inference, the \blobpebblingtext cost changes
monotonically 
during the transition
$
\clcfgtransition
    {\mpinducedctminusone}
    {\mpinducedct}
$.
In the case of axiom download, there can be an extra cost of~$4$
incurred for deriving each
$
\mpscnotstd
\in \mpinducedct \setminus \mpinducedctminusone
$.
We have no a priori upper bound on
$
\setsizecompact{\mpinducedct \setminus \mpinducedctminusone}
$,
but if we just derive the new \mpsctext{}s one by one and erase all
intermediate \mpsctext{}s inbetween these derivations, we will keep
the total extra cost below~$4$.

This shows that the \pebcomplete{} \multipebblingtext
$\multipebbling_{\proofstd}$  
of $G$
associated to a resolution derivation
$\derivof
{\proofstd}
{\pebcontrNT{\pebdeg}}
{\targetclause[i]}
$
by the construction in this section 
has \blobpebblingtext cost bounded from above by
$
\mpcost{\multipebbling_{\proofstd}}
\leq
\Maxofexpr[\clsc \in \proofstd]{\mpcost{\mpinducedconf{\clsc}}}
+ 4
$.
\Refth{th:C-translation-of-resolution-to-pebbling}
is thereby proven.

%
%

\section{Induced Blob Configurations Measure Clause Set Size}
\label{sec:multi-pebble-configuration-cost}

In this section we prove that if a set of clauses
$\clsc$
induces \anmpctext
$\mpinducedconf{\clsc}$
according to 
\refdef{def:induced-multi-pebble-subconfiguration},
then the cost of
$\mpinducedconf{\clsc}$
as specified in
\refdef{def:multi-pebbling-price}
is at most 
${\setsize{\clsc}}$.
That is, the cost of an induced \mpctext 
provides a lower bound on the size of the set of clauses
inducing  it.    
This is
\refth{th:linear-cost-pebbles}
below.

Note that we cannot expect a proof of this fact to work regardless of 
the pebbling degree~$\pebdeg$. The induced \multipebblingtext in
\refsec{sec:induced-pebbling}
makes no assumptions about~$\pebdeg$, but for first-degree pebbling
contradictions we know that
$
\mbox{$\clspacederiv{\pebcontrNT[G]{1}}{z_1}$}
=
\mbox{$\clspaceref{\pebcontr[G]{1}}$}
= 
\Ordosmall{1}
$.
Provided  
$\pebdeg \geq 2$, 
though, we show that one has to pay at least
$\setsize{\clsc} \geq {N}$
clauses to get an induced \mpctext of cost~$N$.

We introduce some notation to simply the proofs in what follows.
Let us define
$\varspeb{u} = \setsmall{u_1, \ldots, u_{\pebdeg}}$.
We say that      a vertex $u$  is 
\introduceterm{represented}
in       a clause 
$\clc$
derived from~$\pebcontrNT{\pebdeg}$,
or that 
$\clc$
\introduceterm{mentions}~$u$,
if
$\varspeb{u} \intersectionSP \vars{\clc} \neq \emptyset$.
We write
\begin{equation}
\vertices{\clc} 
=
\Setdescr
{ u \in \vertices{G} }
{ \varspeb{u} \intersection \vars{\clc} \neq \emptyset}  
\end{equation}
to denote all vertices represented in~$\clc$.
We will also refer to
$\vertices{\clc}$
as the set of vertices \introduceterm{mentioned} by~$\clc$.
This notation is extended to sets of clauses by taking unions.  
Furthermore, we write
\begin{equation}
\label{eq:def-clauses-mentioning-vertices}
  \clmentionvert{\clsc}{U}
  =
  \setdescr
  {\clc \in \clsc}
  {\vertices{\clc} \intersectionSP U \neq \emptyset}    
\end{equation}
to denote the subset of all clauses in $\clsc$ mentioning vertices
in a vertex set~$U$.

We now show some technical results about \cnfform{}s that will 
come in handy in the proof of
\refth{th:linear-cost-pebbles}.
Intuitively, we will use
\reflem{lem:literals-represented-in-minimal-implication} 
below together with
\reflemP{lem:pure-literals-stay}
to argue that if a clause set $\clsc$ induces a lot of \mpsctext{}s,
then there must be a lot of variable occurrences in $\clsc$
for variables corresponding to these vertices. Note, however, that
this alone will not be enough, since this will be true also for
pebbling degree \mbox{$\pebdeg=1$}.

\begin{lemma}
  \label{lem:literals-represented-in-minimal-implication}
  Suppose for a set of clauses
  $\clsc$
  and clauses $\cld_1$ and $\cld_2$ 
  with
  $\vars{\cld_1} \intersectionSP \vars{\cld_2} = \emptyset$
  that
  $\clsc \impl \cld_1 \lor \cld_2$
  but
  $\clsc \nimpl \cld_2$.
  Then there is a literal
  $\lita \in \lit{\clsc} \intersectionSP \lit{\cld_1}$.
\end{lemma}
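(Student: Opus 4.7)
The plan is to argue by contradiction along the same lines as the proof of \reflem{lem:pure-literals-stay}. Suppose, toward contradiction, that $\lit{\clsc} \cap \lit{\cld_1} = \emptyset$, i.e., none of the literals appearing in $\cld_1$ appears (with the same sign) in any clause of $\clsc$. Since $\clsc \nimpl \cld_2$, pick a truth value assignment $\tvastd$ with $\logevalstd{\clsc} = 1$ and $\logevalstd{\cld_2} = 0$. By the hypothesis $\clsc \impl \cld_1 \lor \cld_2$, we must have $\logevalstd{\cld_1} = 1$, so there is a nonempty set $L \subseteq \lit{\cld_1}$ of literals from $\cld_1$ set to true by $\tvastd$.

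The key step is to modify $\tvastd$ into a new assignment $\tvastd'$ that falsifies $\cld_1 \lor \cld_2$ while still satisfying $\clsc$, giving the desired contradiction. I will obtain $\tvastd'$ by flipping, one by one, the value of each variable underlying a literal in $L$: for each $\lita \in L$, set the underlying variable so that $\lita$ becomes false. Since $\cld_1$ is a (nontrivial) clause, its literals refer to pairwise distinct variables, so these flips are well defined and independent. After performing all the flips, every literal in $\cld_1$ is false under $\tvastd'$, so $\logeval{\cld_1}{\tvastd'} = 0$. Because $\vars{\cld_1} \cap \vars{\cld_2} = \emptyset$, none of these flips touches any variable of $\cld_2$, so still $\logeval{\cld_2}{\tvastd'} = 0$ and hence $\logeval{\cld_1 \lor \cld_2}{\tvastd'} = 0$.

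The main point to check — and what I expect to be the only subtle step — is that $\logeval{\clsc}{\tvastd'} = 1$ is preserved under these flips. Consider one flip that changes $\lita$ from true to false, and hence simultaneously changes $\olnot{\lita}$ from false to true. By the contradiction hypothesis, no clause of $\clsc$ contains $\lita$, so the only clauses of $\clsc$ that even mention this variable contain $\olnot{\lita}$, and for those clauses the flip turns a false literal into a true one, which can only help satisfaction. Hence no clause of $\clsc$ goes from satisfied to unsatisfied. Applying this observation to each flip in turn shows $\logeval{\clsc}{\tvastd'} = 1$.

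Combining these facts, $\tvastd'$ satisfies $\clsc$ but falsifies $\cld_1 \lor \cld_2$, contradicting $\clsc \impl \cld_1 \lor \cld_2$. Therefore the assumption $\lit{\clsc} \cap \lit{\cld_1} = \emptyset$ fails, and some literal $\lita \in \lit{\clsc} \cap \lit{\cld_1}$ exists, as required.
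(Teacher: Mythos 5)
Your proof is correct and takes essentially the same approach as the paper: both pick an assignment $\tvastd$ satisfying $\clsc$ and falsifying $\cld_2$, note $\cld_1$ must then be satisfied, flip the satisfied literals of $\cld_1$ to false (using that $\vars{\cld_1} \intersectionSP \vars{\cld_2} = \emptyset$ and that $\cld_1$'s literals are over distinct variables), and conclude some clause of $\clsc$ must contain a literal of $\cld_1$. The paper phrases this as a direct argument while you phrase it as a contradiction, but the underlying construction and reasoning are identical.
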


\begin{proof}
  Pick a \tva $\tvastd$ \st
  $\logeval{\clsc}{\tvastd} = 1$
  but
  $\logeval{\cld_2}{\tvastd} = 0$.
  Since
  $\clsc \impl \cld$,
  we must have
  $\logeval{\cld_1}{\tvastd} = 1$.
  Let $\tvastd'$ be the same assignment except that
  all satisfied literals in $\cld_1$ are flipped to false
  (which is possible since they are all strictly distinct by assumption).
  Then 
  $\logeval{\cld_1 \lor \cld_2}{\tvastd'} = 0$
  forces
  $\logeval{\clsc}{\tvastd'} = 0$,  
  so the flip must have falsified   some previously satisfied clause
  in~$\clsc$. 
\end{proof}


The fact that a minimally unsatisfiable \cnfform must have more 
clauses than variables seems to have been proven 
independently a number of times (see, \eg,
\cite{AL86Minimal,
  BET01MinimallyUnsatisfiable,
  CS88ManyHard,  
  Kullmann00Matroid}).
We will need the following formulation of this result, relating
subsets of variables in a minimally implicating \cnfform and the
clauses containing variables from these subsets.

\begin{theorem}
  \label{th:N-clauses-mentioning-S-gt-size-of-S}
  Suppose that $\fstd$ is \cnfform that implies a clause $\cld$
  minimally.
  For any subset of variables $V$ of $\fstd$, let
  $
  \fstd_V = 
  \setdescrsmall
  { \clc \in \fstd }
  { \vars{\clc} \intersectionSP V \neq \emptyset }
  $
  denote the set of clauses containing variables from~$V$.
  Then if
  $V \subseteq \vars{\fstd} \setminus \vars{\cld}$,
  it holds that
  $\setsize{\fstd_V} > \setsize{V}$.
  In particular, if $\fstd$ is a minimally unsatisfiable \cnfform, 
  we have
  $\setsize{\fstd_V} > \setsize{V}$
  for all
  $V \subseteq \vars{\fstd}$.
\end{theorem}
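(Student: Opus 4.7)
The plan is to prove the theorem by induction on $\setsize{V}$, deriving a contradiction from the assumption $\setsize{\fstd_V} \leq \setsize{V}$ by exhibiting an assignment that satisfies $\fstd$ yet falsifies $\cld$. The driving tool is minimality of $\fstd \impl \cld$: for every $\clc \in \fstd$ there is an assignment $\tvastd_{\clc}$ satisfying $\fstd \setminus \setsmall{\clc}$ while falsifying both $\cld$ and $\clc$. I will combine this with Hall's theorem applied to the bipartite graph between $V$ and $\fstd_V$, with edges joining $x$ and $\clc$ whenever $x \in \vars{\clc}$.

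The base case $\setsize{V} = 1$ is straightforward: writing $V = \setsmall{x}$, the set $\fstd_V$ is nonempty since $x \in \vars{\fstd}$, and if $\fstd_V$ consisted of a single clause $\clc$ then flipping the value of $x$ in $\tvastd_{\clc}$ would satisfy $\clc$ without affecting any other clause of $\fstd$ or the clause $\cld$ (since $x \notin \vars{\cld}$), contradicting $\fstd \impl \cld$. For the inductive step with $\setsize{V} = n \geq 2$, assuming for contradiction $\setsize{\fstd_V} \leq n$, applying the inductive hypothesis to each $V \setminus \setsmall{x}$ yields $\setsize{\fstd_{V \setminus \setsmall{x}}} \geq n$. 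Since $\fstd_V$ decomposes as the disjoint union of $\fstd_{V \setminus \setsmall{x}}$ with the clauses of $\fstd_V$ whose unique $V$-variable is $x$, this forces both $\setsize{\fstd_V} = n$ and the structural consequence that every clause in $\fstd_V$ mentions at least two variables from $V$.

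The main obstacle, where the pieces have to fit together, is producing a bijection $\psi \colon V \to \fstd_V$ with $x \in \vars{\psi(x)}$ and then building the contradicting assignment from it. Hall's condition for a matching $V \to \fstd_V$ demands $\setsize{\fstd_W} \geq \setsize{W}$ for every $W \subseteq V$; this holds strictly for proper nonempty $W \subsetneq V$ by the inductive hypothesis and with equality for $W = V$, so the matching exists and, being forced by $\setsize{V} = \setsize{\fstd_V}$, is a bijection. Then $\tvastd$ will be defined by setting each $x \in V$ to the value making its literal in $\psi(x)$ true, and by copying on $\vars{\fstd} \setminus V$ an auxiliary assignment $\tvaalt$ that satisfies $\fstd \setminus \fstd_V$ and falsifies $\cld$; such a $\tvaalt$ exists since $\fstd_V$ is nonempty, so $\fstd \setminus \fstd_V$ is a proper subformula of $\fstd$ and cannot imply $\cld$. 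Because clauses in $\fstd \setminus \fstd_V$ mention no variable of $V$, each $\clc \in \fstd_V$ is satisfied through the literal of its matched variable $\psi^{-1}(\clc)$, and $\cld$ is untouched since $V \cap \vars{\cld} = \emptyset$, the resulting $\tvastd$ satisfies $\fstd$ while falsifying $\cld$, yielding the required contradiction.
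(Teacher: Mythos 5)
Your proof is correct and follows essentially the same strategy as the paper's: induction on $\setsize{V}$, using the inductive hypothesis to verify Hall's condition for the bipartite variable-clause occurrence graph, extracting a perfect matching between $V$ and $\fstd_V$ under the assumption $\setsize{\fstd_V} = \setsize{V}$, and then combining the matching-induced assignment on $V$ with an assignment on the remaining variables that satisfies $\fstd \setminus \fstd_V$ (which exists by minimality) while falsifying $\cld$, yielding a contradiction. The only noticeable difference is cosmetic: for the base case the paper invokes a previously proven lemma (every variable of a minimally implying clause set appears both positively and negatively), whereas you argue it directly by flipping $x$ in a witnessing assignment; both give $\setsize{\fstd_V} \geq 2$.
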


\begin{proof}
  The proof is 
  by induction over $V \subseteq \vars{\fstd} \setminus \vars{\cld}$.

  The base case is easy.
  If $\setsize{V} = 1$, then 
  $\setsize{F_V} \geq 2$, since any 
  $\varx \in V$ must occur 
  both unnegated and negated in 
  $\fstd$ 
  by
  \reflem{lem:pure-literals-stay}.

  The inductive step just generalizes the proof of 
  \reflem{lem:pure-literals-stay}. 
  Suppose that 
  $\setsize{\fstd_{V'}} > \setsize{V'}$ 
  for all strict subsets 
  $V' \subsetneqq V \subseteq \vars{\fstd} \setminus \vars{\cld}$ 
  and consider~$V$.
  Since
  $\fstd_{V'} \subseteq \fstd_V$
  if
  $V' \subseteq V$,
  choosing any $V'$ of size $\setsize{V} - 1$
  we see that
  $
  \setsize{\fstd_V} 
  \geq 
  \setsize{\fstd_{V'}} 
  \geq 
  \setsize{V'} + 1
  = 
  \setsize{V} 
  $.

  If $\setsize{\fstd_V} > \setsize{V}$ there is nothing to prove, 
  so assume that $\setsize{\fstd_V} = \setsize{V}$. 
  Consider the bipartite graph with the variables $V$ and the clauses
  in $\fstd_V$ as   vertices, and edges between variables and clauses
  for all variable   occurrences. Since for all $V' \subseteq V$ the
  set of neighbours  
  $\vneighbour{V'} = \fstd_{V'} \subseteq \fstd_V$ satisfies 
  $\setsize{\vneighbour{V'}} \geq \setsize{V'}$, 
  by Hall's marriage theorem there is a perfect matching between $V$
  and   $\fstd_V$. Use this
  matching to satisfy $\fstd_V$ assigning values to variables in $V$ only.

  The clauses in
  $\fstd' = \fstd \setminus \fstd_V$ 
  are not affected by this partial truth value assignment,
  since they do not contain any occurrences of variables in $V$. 
  Furthermore, by the minimality of $\fstd$ it must hold that $\fstd'$ 
  can be satisfied and $\cld$ falsified simultaneously by assigning
  values to variables in  
  $\vars{\fstd'} \setminus V$.

  The two partial truth value assignments above can be
  combined to an assignment that satisfies all of $\fstd$ but
  falsifies $\cld$, which is a contradiction. 
  Thus $\setsize{\fstd_V} > \setsize{V}$. 
  The theorem follows by induction.
\end{proof}

%
%

Continuing our intuitive argument,
given that
\reftwolems
{lem:pure-literals-stay}
{lem:literals-represented-in-minimal-implication}
tell us that many induced \mpsctext{}s implies the presence of many
variables in~$\clsc$,   we will use
\refth{th:N-clauses-mentioning-S-gt-size-of-S} 
to demonstrate that a lot of different variable occurrences will have to
translate into a lot of different clauses provided that the pebbling
degree $\pebdeg$ is at least~$2$.
Before we prove this formally, let us try to provide some intuition
for why it should be true by studying two special cases.
Recall the notation
$\blacktruth{V} = \Setdescr{\sourceclausesuff[i]{v}}{v \in V}$
and
$\somenodetrueclause{V} = \Lor_{v \in V} \sourceclausesuff[i]{v}$
from 
\refsec{sec:induced-pebbling}.

\begin{example}
  \label{ex:pairwise-disjoint-induced-blobs}
  Suppose that $\clsc$ is a clause set derived from
  $\pebcontrNT[G]{\pebdeg}$
  that induces $N$ \blindependent black \blob{}s
  $B_1, \ldots, B_N$ that are pairwise disjoint, \ie
  $B_i \intersectionSP B_j = \emptyset$ if $i \neq j$.
  Then the implications
  \begin{equation}
    \label{eq:disjoint-induced-blobs}
      \clsc \impl \somenodetrueclause{B_i}
  \end{equation}
  hold for
  $i = 1, \ldots, N$.
  Remember that since
  $\pebcontrNT[G]{\pebdeg}$
  is non-contradictory, so is~$\clsc$.
    
  It is clear that a non-contradictory clause set $\clsc$ satisfying
  \refeq{eq:disjoint-induced-blobs}
  for $i = 1, \ldots, N$  is quite simply the set
  \begin{equation}
    \label{eq:inducing-clause-set-for-disjoint-blobs}
    \clsc =
    \Setdescr{\somenodetrueclause{B_i}}{i = 1, \ldots N}
  \end{equation}
  consisting precisely of the clauses implied.
  Also, it seems plausible that this is the best one can do. 
  Informally, if there
  would be strictly fewer clauses than~$N$, some clause would have to
  mix variables from different \multipebble{}s
  $B_i$ and~$B_j$. But then
  \reflem{lem:pure-literals-stay}
  says that there will be extra clauses needed to ``neutralize'' the
  literals from
  $B_j$ in the implication $\clsc \impl \somenodetrueclause{B_i}$
  and vice versa, so that the total number of clauses would have to be
  strictly greater than~$N$.

  As it turns out, the proof that $\setsize{\clsc} \geq N$ 
  when $\clsc$ induces $N$ pairwise disjoint and \blindependent black
  \blob{}s is very easy. Suppose on the contrary
  that~\refeq{eq:disjoint-induced-blobs} holds for
  $i = 1, \ldots, N$
  but that
  $\setsize{\clsc} < N$.
  Let $\tvastd$ be a satisfying assignment for~$\clsc$.
%
%
  Choose $\tvastd' \subseteq \tvastd$ to be any minimal partial truth value
  assignment fixing
  $\clsc$ to true.
  Then for the size of the domain of $\tvastd'$ we have
  $\setsize{\domainof{\tvastd'}} < N$,
  since at most one distinct literal is needed for every clause
  $\clc \in \clsc$ to fix it to true.
  This means that there is some $B_i$ such that $\tvastd'$ does not
  set any variables in $\varspeb{B_i}$. Consequently $\tvastd'$ can
  be extended to an assignment $\tvastd''$ setting $\clsc$ to true but
  $\somenodetrueclause{B_i}$ to false, which is a contradiction.
  With some more work, and using
  \refth{th:N-clauses-mentioning-S-gt-size-of-S}, one can show that
  $\setsize{\clsc} > N$ 
  if variables from distinct \multipebble{}s are mixed.

  Note that the above argument works for any pebbling degree including
  $\pebdeg = 1$. Intuitively, this means that one can charge for black
  blobs even in the case of first degree pebbling formulas. 
\end{example}

\begin{example}
  \label{ex:many-associated-induced-white-pebbles}
  Suppose that the clause set  $\clsc$ induces an \mpscfulltext
  $\mpscnotstd$  with $W \neq \emptyset$, and let us assume for
  simplicity that $\clsc$ is minimal and $W = S$ so that the implication
  \begin{equation}
    \label{eq:clause-set-inducing-white-pebbles}
    \clsc \unionSP \blacktruth{W} \impl \somenodetrueclause{B}
  \end{equation}
  holds and is minimal.
  We claim that
  $\setsize{\clsc} \geq \setsize{W} + 1$
  provided that $\pebdeg > 1$.

  Since by definition
  $B \intersectionSP W = \emptyset$
  we have
  $
  \vars{\somenodetrueclause{B}}
  \intersectionSP
  \vars{\blacktruth{W}}
  = \emptyset
  $,
  and
  \refth{th:N-clauses-mentioning-S-gt-size-of-S}
  yields that
  $
  \setsize{\clsc \unionSP \blacktruth{W}}
  \geq
  \setsize{\clmentionvert{\clsc}{W} \unionSP \blacktruth{W}}
  >
  \setsize{\vars{\blacktruth{W}}}
  $,
  using the notation from~%
  \refeq{eq:def-clauses-mentioning-vertices}.
  This is not quite what we want---we have a lower bound on
  $\setsize{\clsc \unionSP \blacktruth{W}}$,
  but what we need is a bound on~$\setsize{\clsc}$.
  But if we observe that
  $\setsize{\vars{\blacktruth{W}}}
  = \pebdeg \setsize{W}$
  while
  $\setsize{\blacktruth{W}} = \setsize{W}$,
  we get that
  \begin{equation}
    \label{eq:inequality-set-size-induced-white-pebbles}
    \setsize{\clsc}
    \geq
    \setsize{\vars{\blacktruth{W}}} - \setsize{\blacktruth{W}} + 1
    =
    (\pebdeg - 1) \setsize{W} + 1
    \geq
    \setsize{W} + 1
  \end{equation}
  as claimed.

  We remark that this time we had to use that $\pebdeg > 1$ in order
  to get a lower bound on the clause set size. And indeed, it is not
  hard to see that a single clause on the form
  $\clc = v_1 \lor \Lor_{w \in W} \olnot{w}_1$
  can induce an arbitrary number of white pebbles if
  $\pebdeg = 1$. Intuitively, white pebbles can be had for free in first
  degree pebbling formulas.
\end{example}

In general, matters are more complicated than in
\reftwoexs
{ex:pairwise-disjoint-induced-blobs}
{ex:many-associated-induced-white-pebbles}.
If
$\mpscnotstd[1]$
and
$\mpscnotstd[2]$
are two induced \mpscfulltext{}s, the black \multipebble{}s
$B_1$ and $B_2$ need not be disjoint,
the supporting white pebbles
$W_1$ and $W_2$ might also intersect, and  the black
\multipebble $B_1$ can intersect the supporting white pebbles~$W_2$
of the other \multipebble.
Nevertheless, if we choose with some care which vertices to charge
for, the intuition provided by our examples can still be used to prove
the following theorem.
%

\begin{theorem}
  \label{th:linear-cost-pebbles}
  Suppose that $G$ is a \pebblingdag and let
  $\clsc$ be a set of clauses derived from the pebbling formula
  $\pebcontrNT{\pebdeg}$
  for $\pebdeg \geq 2$.
  Then
  $\setsize{\clsc}  \geq  {\mpcost{\mpinducedconf{\clsc}}}$.
\end{theorem}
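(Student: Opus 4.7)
The plan is to prove the bound by selecting, for each chargeable vertex, a witness ``responsible'' for it, and then invoking \refth{th:N-clauses-mentioning-S-gt-size-of-S} in a way that gives us at least one distinct clause per chargeable vertex. Let $R = \blackschargedfor(\mpinducedconf{\clsc})$ be the set of chargeable black (bottom) vertices and $W^{*} = \whiteschargedfor(\mpinducedconf{\clsc})$ the chargeable white vertices, so the goal is $\setsize{\clsc} \geq \setsize{R} + \setsize{W^{*} \setminus R}$. For each $r \in R$ fix a representative subconfiguration $\mpscnot{B_r}{W_r} \in \mpinducedconf{\clsc}$ with $\bottomvertex{B_r} = r$, witnessed by some minimal $\clsc_r \subseteq \clsc$ and $S_r$ with $\clsc_r \unionSP \blacktruth{S_r} \sharpimpl \somenodetrueclause{B_r}$, and organize these representatives so that $W^{*} \setminus R \subseteq \bigcup_r (S_r \intersectionSP \belowverticesNR{r})$.

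The first main step is the ``local'' bound, generalizing \refex{ex:many-associated-induced-white-pebbles}. For a single representative $\mpscnot{B_r}{W_r}$, since $\vars{\somenodetrueclause{B_r}}$ is disjoint from $\vars{\blacktruth{S_r}}$, applying \refth{th:N-clauses-mentioning-S-gt-size-of-S} to $\fstd = \clsc_r \unionSP \blacktruth{S_r}$ with $V = \varspeb{S_r}$ gives $\setsizesmall{\clmentionvert{\clsc_r}{S_r}} + \setsize{S_r} > \pebdeg \cdot \setsize{S_r}$, hence $\setsizesmall{\clmentionvert{\clsc_r}{S_r}} \geq (\pebdeg-1)\setsize{S_r} + 1 \geq \setsize{S_r} + 1$ when $\pebdeg \geq 2$. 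The ``$+1$'' clause is one that mentions $r$ itself (found via \reflem{lem:pure-literals-stay} applied to the positive literals $r_i$ that must appear, since $\clsc_r$ alone cannot imply a clause over $\varspeb{r}$). In this way we attach to $r$ one clause mentioning $\vars{r}$, and to each $w \in S_r$ a distinct clause mentioning~$w$.

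The second and technically most delicate step is to aggregate these local counts into a global one. The obvious obstacle is that clauses can be shared across different representatives: a clause involving variables of several vertices might be simultaneously charged by several witnesses. To get around this I would order $R$ topologically from the top down, $r_1 \succeq r_2 \succeq \cdots \succeq r_N$, and charge $r_i$ only for clauses that mention a vertex in $\{r_i\} \cup (W^{*} \intersectionSP \belowvertices[G]{r_i})$ but do \emph{not} mention any vertex strictly above $r_i$ already charged. The \siblingnonreachabiblityproperty together with the fact that each chargeable white $w$ lies strictly below some bottom $r$ will let me argue that the relevant variable sets can be decomposed so that \refth{th:N-clauses-mentioning-S-gt-size-of-S} can be applied to each ``shell'' separately, yielding a fresh clause for each new chargeable vertex. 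Formally, this becomes an induction on $\setsize{R}$ where at each step I peel off a topologically maximal $r$ and its chargeable whites, use the $(\pebdeg-1)\setsize{S_r}+1$ bound on new clauses, and appeal to the minimality of the witnessing implication to show these clauses were not already counted.

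The main obstacle is thus the bookkeeping in the inductive peeling: one must ensure that when a white vertex $w$ supports several different representatives, it is only charged once overall, while simultaneously guaranteeing that each $r_i$ contributes at least one genuinely new clause beyond what earlier representatives required. The hypothesis $\pebdeg \geq 2$ enters precisely here---it provides the multiplicative slack $(\pebdeg-1)\setsize{S_r}$ in the local bound, which is what makes the peeling robust against overlaps; for $\pebdeg=1$ the slack vanishes and, as noted after \refex{ex:many-associated-induced-white-pebbles}, a single clause can support arbitrarily many whites, so the theorem genuinely fails.
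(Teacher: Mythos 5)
Your proposal captures the two correct ingredients of the paper's proof: the local bound via \refth{th:N-clauses-mentioning-S-gt-size-of-S} applied to the minimal witnessing implication $\clsc_i \unionSP \blacktruth{S_i} \sharpimpl \somenodetrueclause{B_i}$, giving roughly $(\pebdeg-1)\setsize{S_i}+1$ clauses, and the observation that $\pebdeg \geq 2$ is exactly what makes the $+1$ available beyond the white-support count. That matches the paper. The difference is in the aggregation, and as you yourself flag, that is where the real work is — and the scheme you sketch does not obviously close.

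The paper's induction runs over arbitrary subsets $\reprset \subseteq \blackschargedfor \unionSP \whiteschargedfor$ of \emph{all} chargeable vertices, black and white on an equal footing. At each step it picks a topmost vertex $\reprvertex$ of the \emph{current} $\reprset$ (which may well be a chargeable white, not a bottom black), associates to it a subconfiguration $\mpscnotstd[i]$ with witnessing clause set $\clsc_i$, and peels off the entire set $\reprsetith = \reprset \intersectionSP \vertices{\clsc_i}$ in one go. Disjointness of the counts then comes for free: by definition $\clsc_i$ mentions no vertex of $\reprset \setminus \reprsetith$, so the clauses counted via $\clsc_i$ and the clauses counted (by the induction hypothesis) for $\reprset \setminus \reprsetith$ cannot overlap. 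The key geometric fact used is
$\reprset \intersectionSP \abovevertices[G]{\bottomvertex{B_i}} \subseteq \setsmall{\reprvertex}$
(the topmost-vertex property), which guarantees the variable disjointness needed to invoke \refth{th:N-clauses-mentioning-S-gt-size-of-S}.

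Your scheme instead fixes a linear order on the black bottom vertices only and bundles whites with them, filtering out clauses that mention an already-charged vertex ``strictly above $r_i$.'' That filter is not sufficient when $r_i$ and $r_j$ are \emph{incomparable}: a clause mentioning only a shared white $w$ lying below both (or mentioning $r_i$ and $r_j$ directly) can pass your filter twice, since neither charged vertex need be above the other. The $(\pebdeg-1)$ multiplicative slack gives you a little room, but it does not by itself bound such overlaps; e.g.\ many bottom vertices sharing a single supporting white can each satisfy your local bound with heavily overlapping clause sets. The paper's peeling-by-clause-set decomposition sidesteps this problem cleanly, which is precisely the extra idea your sketch is missing. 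Finally, the \siblingnonreachabiblitypropertynoref{} plays no role in this particular argument — it is used in \refth{th:translation-of-resolution-to-multi-pebbling}, not here — so reaching for it is a signal that the ``shell'' decomposition you have in mind is not yet the right one.
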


\begin{proof}
  Suppose that  the induced set of \mpscfulltext{}s is
  $
  \mpinducedconf{\clsc}
  =
  \Setdescr{\mpscnotstd[i]}{i \in \intnfirst{m}}
  $.
  By
  \refdef{def:multi-pebbling-price}, 
  we have
  $
  \mpcost{\mpinducedconf{\clsc}}
  =
  \Setsize{\blackschargedfor
    \unionSP \whiteschargedfor}
  $
  where
  \begin{equation}
    \blackschargedfor
    =
    \Setdescr
    {\bottomvertex{B_i}}
    {\mpscnotstd[i] \in \mpinducedconf{\clsc}}
  \end{equation}
  and
  \begin{equation}
    \whiteschargedfor
    =
    \Union_{\mpscnotstd[i] \in \mpinducedconf{\clsc}}
    \left(
      {W_i \intersectionSP
        \belowvertices[G]{\bottomvertex{B_i}}}
    \right) 
    \eqperiod
  \end{equation}
  We need to prove that
  $
  \setsize{\clsc}
  \geq
  \Setsize{
    \blackschargedfor
    \unionSP
    \whiteschargedfor
  }
  $.

  We first show that all vertices in
  $\blackschargedfor \unionSP \whiteschargedfor$
  are represented in some clause in~$\clsc$.
  By
  \refdef{def:induced-multi-pebble-subconfiguration},
  for each
  $\mpscnotstd[i] \in \mpinducedconf{\clsc}$
  there is a clause set
  $\clsc_i \subseteq \clsc$
  and a vertex set
  $S_i \subseteq G \setminus B_i$
  with
  $W_i = S_i \intersectionSP \lpp{B_i} \subseteq S_i$
  such that
  \begin{equation}
    \label{eq:C-i-union-blacktruth-W-i-implies-All-B-i}
    \clsc_i \unionSP \blacktruth{S_i} 
    \impl 
    \somenodetrueclause{B_i}
  \end{equation}
  and such that this implication does not hold for any strict subset
  of $\clsc_i$, $S_i$  or~$B_i$.
  Fix (arbitrarily) such $\clsc_i$ and $S_i$
  for every
  $\mpscnotstd[i] \in \mpinducedconf{\clsc}$
  for the rest of this proof.

  For the induced black \multipebble{}s $B_i$ we claim that
  $B_i \subseteq \vertices{\clsc_i}$,
  which certainly implies
  $\bottomvertex{B_i} \in \vertices{\clsc}$.
  To establish this claim, note that for any
  $v \in B_i$
  we can apply
  \reflem{lem:literals-represented-in-minimal-implication}
  with
  $\cld_1 = \sourceclause[j]{v}$ 
  and 
  $\cld_2 = \somenodetrueclause{B_i \setminus \setsmall{v}}$ 
  on the implication
  \refeq{eq:C-i-union-blacktruth-W-i-implies-All-B-i},
  which yields that the vertex $v$ must be represented in 
  $\clsc_i \unionSP \blacktruth{W_i}$ 
  by some positive literal~$v_j$.
  Since
  $B_i \intersectionSP S_i = \emptyset$,
  we have
  $
  \vars{\blacktruthsmall{S_i}}
  \intersectionSP
  \vars{\somenodetrueclause{B_i}}
  = \emptyset
  $
  and thus
  $v_j \in \lit{\clsc_i}$.

  Also, we claim that
  $S_i \subseteq \vertices{\clsc_i}$.
  To see this, note that 
  since \mbox{$B_i \intersectionSP S_i = \emptyset$}
  and the implication~%
  \refeq{eq:C-i-union-blacktruth-W-i-implies-All-B-i}
  is minimal,
  it follows from
  \reflem{lem:pure-literals-stay}
  that for  every 
  $w \in S_i$,
  all literals~%
  $\olnot{w}_j$, $j \in \intnfirst{\pebdeg}$,
  must be present in~$\clsc_i$.   
  Thus, in particular, it holds that
  $W_i \intersectionSP
  \belowvertices[G]{\bottomvertex{B_i}}
  \subseteq \vertices{\clsc_i}$.

  We now prove by induction over subsets  
  $
  \reprset \subseteq 
  \blackschargedfor \unionSP \whiteschargedfor
  $
  that
  $
  \setsize{\clmentionvert{\clsc}{\reprset}}
  \geq
  \setsize{\reprset}
  $.
  The theorem clearly follows from this
  since
  $
  \setsize{\clsc}
  \geq
  \setsize{\clmentionvert{\clsc}{\reprset}}
  $.
  (The reader can think of $\reprset$ as the set of vertices
  \emph{representing}   the \mpctext{}s  
  $\mpscnotstd[i] \in \mpinducedconf{\clsc}$
  in the clause set~$\clsc$.)
%

  The base case $\setsize{\reprset} = 1$ is immediate,
  since we just demonstrated that all vertices 
  $\reprvertex \in  \reprset$ are represented in~$\clsc$.

  For the induction step, suppose that 
  $
  \setsize{\clmentionvert{\clsc}{\reprset'}}
  \geq
  \setsize{\reprset'}
  $
  for all
  $\reprset' \subsetneqq \reprset$.
  Pick a ``topmost'' vertex \mbox{$\reprvertex \in \reprset$},
  \ie \st 
  $\aboveverticesNR[G]{\reprvertex} \intersectionSP \reprset = \emptyset$.
  We associate a \mpscfulltext
  $\mpscnotstd[i] \in \mpinducedconf{\clsc}$
  with $\reprvertex$ as follows.
  If
  $\reprvertex = \bottomvertex{B_i}$
  for some
  $\mpscnotstd[i]$,
  fix
  $\mpscnotstd[i]$
  arbitrarily to such \anmpsctext.
  Otherwise, there must exist some
  $\mpscnotstd[i]$
  \st
  $\reprvertex \in W_i \intersectionSP \belowvertices[G]{\bottomvertex{B_i}}$,
  so fix any such \mpsctext.
  We note that it holds that
  \begin{equation}
    \label{eq:no-r-in-B-i-except-r}
    \reprset 
    \intersectionSP 
    \abovevertices[G]{\bottomvertex{B_i}}
    \subseteq 
    \setsmall{\reprvertex}
  \end{equation}
  for
  $\mpscnotstd[i]$
  chosen in this way.

  Consider the clause set
  $\clsc_i \subseteq \clsc$
  and vertex set
  $S_i \supseteq W_i$
  from
  \refeq{eq:C-i-union-blacktruth-W-i-implies-All-B-i}
  associated with $\mpscnotstd[i]$ above.
  Clearly, by construction
  $\reprvertex 
  \in \vertices{\clsc_i}$
  is one of the vertices of~$R$ mentioned by~$\clsc_i$.
  We claim that the total number of vertices in $\reprset$ mentioned
  by $\clsc_i$ is upper-bounded by the number of clauses in $\clsc_i$
  mentioning these vertices, \ie that
  \begin{equation}
    \label{eq:claimed-bound-N-clauses-C-i-mentioning-U}
    \Setsize{
      \clmentionvert
      {\clsc_i}
      {\reprset}} 
    \geq 
    \Setsize{\reprset \intersectionSP \vertices{\clsc_i}}
    \eqperiod
  \end{equation}                                
  Let us first see that this claim is sufficient to prove the theorem.
  To this end, let
  \begin{equation}
    \label{eq:def-reprsetith}
    \reprsetith 
    = 
    \reprset \intersectionSP \vertices{\clsc_i}
  \end{equation}
  denote the set of all vertices in $\reprset$ mentioned by~$\clsc_i$
  and assume that
  $
  \setsize{\clmentionvert{\clsc_i}{\reprset}} 
  =  \setsize{\clmentionvert{\clsc_i}{\reprsetith}} 
  \geq \setsize{\reprsetith}
  $.
  Observe that
  $\clmentionvert{\clsc_i}{\reprsetith}
  \subseteq 
  \clmentionvert{\clsc}{\reprset}$,
  since
  $\clsc_i \subseteq \clsc$
  and
  $\reprsetith \subseteq \reprset$.
  Or in words:
  the set of clauses in $\clsc_i$ mentioning vertices in
  $\reprsetith$ is certainly a subset of all clauses in $\clsc$
  mentioning any vertex in~$\reprset$.
  Also, by construction 
  $\clsc_i$
  does not mention any vertices in
  $\reprset \setminus \reprsetith$
  since
  $\reprsetith = \reprset \intersectionSP \vertices{\clsc_i}$.
  That is, 
  \begin{equation}
    \clmentionvert{\clsc}{\reprset \setminus \reprsetith}
    \subseteq
    \clmentionvert{\clsc}{\reprset} 
    \setminus 
    \clsc_i
  \end{equation}
  in our notation.                                
  Combining the (yet unproven) claim
  \refeq{eq:claimed-bound-N-clauses-C-i-mentioning-U}
  for 
  $\clmentionvert{\clsc_i}{\reprset} = \clmentionvert{\clsc_i}{\reprsetith}$
  asserting that
  $\Setsize{\clmentionvert{\clsc_i}{\reprsetith}} \geq \setsize{\reprsetith}$
  with the induction hypothesis  for 
  $\reprset \setminus \reprsetith 
  \subseteq \reprset \setminus \setsmall{\reprvertex}
  \subsetneqq \reprset$
  we get
  \begin{align}
    \nonumber
    \Setsize{
      \clmentionvert{\clsc}{\reprset}
    }
    &=
    \Setsize{
      \clmentionvert{\clsc_i}{\reprset}
      \disjointunionSP
      \clmentionvert{(\clsc \setminus \clsc_i)}{\reprset}
    }
    \\ 
    \nonumber
    &\geq
    \Setsize{
      \clmentionvert{\clsc_i}{\reprset \intersection \vertices{\clsc_i}}
      \disjointunionSP
      \clmentionvert{\clsc}{\reprset \setminus \vertices{\clsc_i}}
    }
    \\ 
    &=
    \Setsize{\clmentionvert{\clsc_i}{\reprsetith}}
    +
    \Setsize{\clmentionvert{\clsc}{\reprset \setminus \reprsetith}}
    \\
    \nonumber
    &\geq
    \setsize{\reprsetith}
    +
    \setsize{\reprset \setminus \reprsetith}
    \\
    \nonumber
    &= 
    \setsize{\reprset}
  \end{align}
  and the theorem follows  by induction.

  It remains to verify the claim
  \refeq{eq:claimed-bound-N-clauses-C-i-mentioning-U}
  that
  $
  \setsize{\clmentionvert{\clsc_i}{\reprsetith}} 
  \geq \setsize{\reprsetith}
  $
  for
  $\reprsetith 
  = 
  \reprset \intersectionSP \vertices{\clsc_i}
  \neq
  \emptyset
  $.
  To do so, recall first that
  $r \in \reprsetith$.
  Thus,
  $\reprsetith \neq \emptyset$
  and if
  $\reprsetith = \setsmall{r}$
  we trivially have  
  $
  \setsize{\clmentionvert{\clsc_i}{\reprsetith}}
  \geq
  1
  =
  \setsize{\reprsetith}
  $.
  Suppose therefore that
  $\reprsetith \supsetneqq \setsmall{r}$.

  We want to apply
  \refth{th:N-clauses-mentioning-S-gt-size-of-S} 
  on the formula 
  $
  \fstd
  =
  {\clsc_i} \unionSP \blacktruthsmall{S_i}
  $
  on the left-hand side of the minimal implication
  \refeq{eq:C-i-union-blacktruth-W-i-implies-All-B-i}.
  Let $\reprsetprimed = \reprsetith \setminus \setsmall{r}$,
  write
  $\reprsetprimed = \reprset_1 \disjointunion \reprset_2$
  for 
  $\reprset_1 = \reprsetprimed \intersectionSP S_i$
  and
  $\reprset_2  = \reprsetprimed \setminus \reprset_1$,
  and consider        the subformula
  \begin{equation}
    \begin{split}
      \fstd_{\reprsetprimed}
      &=
      \Setdescr
      { \clc \in \bigl( {\clsc_i} \unionSP \blacktruthsmall{S_i} \bigr) }
      { \vertices{\clc} \intersectionSP \reprsetprimed \neq \emptyset}
      \\
      &=
      {\clmentionvert{\clsc_i}{\reprsetprimed}}
      \unionSP
      \blacktruthsmall{\reprset_1}
    \end{split}      
  \end{equation}
  of
  $\fstd = {\clsc_i} \unionSP \blacktruthsmall{S_i}$.
  A key observation for the concluding part of the argument is that by
  \refeq{eq:no-r-in-B-i-except-r}
  we have
  $
  \varspeb{\reprsetprimed} 
  \intersectionSP 
  \vars{\somenodetrueclause{B_i}} 
  =
  \emptyset
  $.

  For each $w \in \reprset_1$, the clauses in
  $\blacktruthsmall{\reprset_1}$
  contain $\pebdeg$ literals
  $w_1, \ldots, w_{\pebdeg}$
  and these literals must all occur negated in $\clsc_i$ by
  \reflem{lem:pure-literals-stay}.
  For each 
  $u \in \reprset_2$, 
  the clauses in
  ${\clmentionvert{\clsc_i}{\reprsetprimed}}$
  contain at least one variable~$u_i$.
  Appealing to 
  \refth{th:N-clauses-mentioning-S-gt-size-of-S} 
  with
  the subset of variables
  $
  \varspeb{\reprsetprimed} \intersectionSP \vars{\clsc_i}
  \subseteq
  \vars{\fstd} \setminus
  \vars{\somenodetrueclause{B_i}} 
  $, 
  we get    
  \begin{align}
    \nonumber
    \Setsize{\fstd_{\reprsetprimed}}
    &=
    \Setsize{
      {\clmentionvert{\clsc_i}{\reprsetprimed}}
      \unionSP
      \blacktruthsmall{\reprset_1}
    }
    \\
    \label{eq:application-N-clauses-mentioning-S-gt-size-of-S}
    &\geq 
    \Setsize{
      \varspeb{\reprsetprimed} \intersectionSP \vars{\clsc_i} 
    }
    + 1
    \\
    \nonumber
    &\geq
    \pebdeg\Setsize{\reprset_1} + \Setsize{\reprset_2}
    + 1
    \eqcomma
  \end{align}
  and rewriting this as 
  \begin{equation}
    \begin{split}
      \Setsize{\clmentionvert{\clsc_i}{\reprsetith}}
      &\geq
      \Setsize{\clmentionvert{\clsc_i}{\reprsetprimed}}
      \\
      &=
      \Setsize{\fstd_{\reprsetprimed}} 
      - \Setsize{\blacktruthsmall{\reprset_1}} 
      \\
      \label{eq:pebbling-degree-geq-two-used-here}
      &\geq
      (\pebdeg - 1)\Setsize{\reprset_1} + \Setsize{\reprset_2} + 1
      \\
      &\geq
      \Setsize{\reprsetith}  
    \end{split}
  \end{equation}
  establishes the claim.
\end{proof}

We have two concluding remarks.
Firstly, we note that the place where the condition
$\pebdeg \geq 2$
is needed is the very final step~%
\refeq{eq:pebbling-degree-geq-two-used-here}.
This is where an attempted lower bound proof for 
first degree pebbling formulas
$\pebcontrNT[G]{1}$
would fail for the reason that the presence of many white pebbles in 
$\mpinducedconf{\clsc}$
says absolutely nothing about the size of the clause set
$\clsc$
inducing these pebbles.
Secondly, another crucial step in the proof is
that we can choose our representative vertices
$\reprvertex \in \reprset$ 
so that~%
\refeq{eq:no-r-in-B-i-except-r}
holds. 
It is thanks to this fact that the inequalities in~%
\refeq{eq:application-N-clauses-mentioning-S-gt-size-of-S}
go through.
%
%
The way we make sure that~%
\refeq{eq:no-r-in-B-i-except-r}
holds is to charge only for (distinct) bottom vertices in the black
\multipebble{}s, and only for supporting white pebbles below these
bottom vertices.

%
%

\newcommand{\sectionKlawe}{section\xspace}
\newcommand{\subsectionKlawe}{subsection\xspace}

\section{Black-White Pebbling and Layered Graphs}
\label{sec:pebble-games-pyramids}

Having come this far in the paper, we know that resolution derivations
induce \blobpebblingtext{}s. We also know that \blobpebblingtext cost
gives a lower bound on clause set size and hence on the space of the
derivation.
The final component needed to make the proof of
\refth{th:main-theorem}
complete is to show lower bounds on the \blobpebblingtext price~%
$\blobpebblingprice{G_i}$
for some nice family of \pebblingdag{}s $G_i$.

Perhaps the first idea that comes to mind is to try to establish lower
bounds on \blobpebblingtext price  by reducing this  problem 
to the problem of proving lower bounds for the standard black-white
pebble game of \refdef{def:bw-pebble-game}. This is what is done in~%
\cite{Nordstrom06NarrowProofsMayBeSpaciousSTOCtoappearinSICOMP}
for the restricted case of trees.
There, for the pebblings
$\pebbling_{\proofstd}$
that one gets from resolution derivations 
$\derivof{\proofstd}{\pebtreecontrNT{\pebdeg}}{\targetclause[i]}$
in a rather different so-called ``labelled'' pebble game,
an explicit procedure is presented to transform
$\pebbling_{\proofstd}$
into a \pebcomplete black-white pebblings of $T$ in asymptotically
the same cost. The lower bound on pebbling price in the labelled
pebbel game then follows immediately by using the known lower bound
for black-white pebbling of trees in~%
\refth{th:bounds-pebbling-price-trees}.

Unfortunately, the \blobpebblegame seems more difficult than the game in~%
\cite{Nordstrom06NarrowProofsMayBeSpaciousSTOCtoappearinSICOMP}
to analyze in terms of the standard black-white pebble game. 
The problem is the inflation rule (in combination with the cost function).
It is not hard to show that without inflation, the
\blobpebblegame is essentially just a disguised form of black-white
pebbling. Thus, if we could convert any \blobpebblingtext into an
equivalent pebbling not using inflation moves without
increasing the cost  by more than, say, some constant factor, we would
be done. 
But in contrast to the case for the labelled pebble game in~%
\cite{Nordstrom06NarrowProofsMayBeSpaciousSTOCtoappearinSICOMP}
played on binary trees,
we are currently not able to transform  \blobpebblingtext{}s into
black-white pebblings in a cost-preserving way. 

%
%

Instead, what we do is to prove lower bounds directly for the
\blobpebblegame. This is not immediately clear how to do, since the
lower bound proofs for black-white pebbling price in, \eg,
\cite{CS76Storage,
  GT78VariationsPebbleGame,
  K80TightBoundPebblesPyramid,
  LT80SpaceComplexityPebbleGamesTrees}
all break down for the more general \blobpebblegame. We are currently
able to obtain lower bounds only for the limited class of
\introduceterm{layered spreading graphs} 
(to be defined below), a class that includes binary trees and pyramid
graphs.
In our proof, we borrow heavily from the corresponding bound for
black-white pebbling in~%
\cite{K80TightBoundPebblesPyramid},
but we need to go quite deep into the construction in order to
make the changes necessary for the proof go through in the
\blobpebblingtext case. 
In this \sectionKlawe, we therefore give a detailed exposition 
of the lower bound in~%
\cite{K80TightBoundPebblesPyramid}, 
in the process simplifying the proof somewhat.
In the next \sectionKlawe we build on this result to generalize the bound
from the black-white pebble game  to
the \blobpebblegame in \refdef{def:multi-pebble-game}.

\subsection{Some Preliminaries and a Tight Bound for Black Pebbling}

Unless otherwise stated, in the following
$G$ denotes a layered DAG;
$u, v, w, x, y$ denote vertices of~$G$;
$U, V, W, X, Y$ denote sets of vertices;
$P$  denotes a path;
and
$\setofpathsstd$  denotes a set of paths.
We will also  use the following notation.

\begin{definition}[Layered DAG notation]
  \label{def:layered-graphs-notation}
  For a vertex $u$ in a 
  layered DAG $G$ 
  we let
  $\vlevel{u}$ denote the level of $u$.
  For a vertex set~$U$ we let
  $\vminlevel{U} = \minofset{\vlevel{u}}{u \in U}$
  and
  $\vmaxlevel{U} = \maxofset{\vlevel{u}}{u \in U}$
  denote the lowest and highest level, respectively, of any vertex
  in~$U$.  
  Vertices in $U$ on particular levels are denoted as follows:
  \begin{itemize}
  \item
    $\vertabovelevel{U}{j}
    =
    \setdescrsmall{u \in U}{\vlevel{u} \geq j}$
    denotes the subset of all vertices in $U$ on level~$j$ or higher.
  \item
    $\vertstrictlyabovelevel{U}{j}
    =
    \setdescrsmall{u \in U}{\vlevel{u} > j}$
    denotes the vertices in $U$ strictly above level~$j$.
  \item
    $\vertonlevel{U}{j}
    =
    \vertabovelevel{U}{j} \setminus
    \vertstrictlyabovelevel{U}{j}$
    denotes the vertices exactly on level~$j$.
      \end{itemize}
      The vertex sets
      $\vertbelowlevel{U}{j}$
      and
      $\vertstrictlybelowlevel{U}{j}$
      are defined wholly analogously.
\end{definition}
                             
For the layered DAGs $G$ under consideration       
we will assume that all sources are on level~$0$,
that all non-sources have indegree~$2$, and that there is a a unique
sink~$z$. 
Since all layered DAGs also possess the 
\siblingnonreachabiblityproperty, 
this means that we are considering
\pebblingdag{}s
(\refdef{def:blob-pebbling-DAG}), 
and so the \blobpebblegame can be played on them.

Although most of what will be said in what follows holds for
arbitrary layered DAGs, we will focus on pyramids since these are
the graphs that we are most interested in.
\Reffig{fig:pyramid-height-6-a}
presents a pyramid graph with labelled vertices that we will use as a
running example. Pyramid graphs can also be visualized as
triangular fragments of a directed two-dimensional rectilinear 
lattice. Perhaps this can sometimes make it easier for the reader
to see that ``obvious'' statements about properties of pyramids
in some of the proofs below are indeed obvious.
In
\reffig{fig:pyramid-height-6-b},
the pyramid in
\reffig{fig:pyramid-height-6-a}
is redrawn as such a lattice fragment.

\begin{figure}[t]
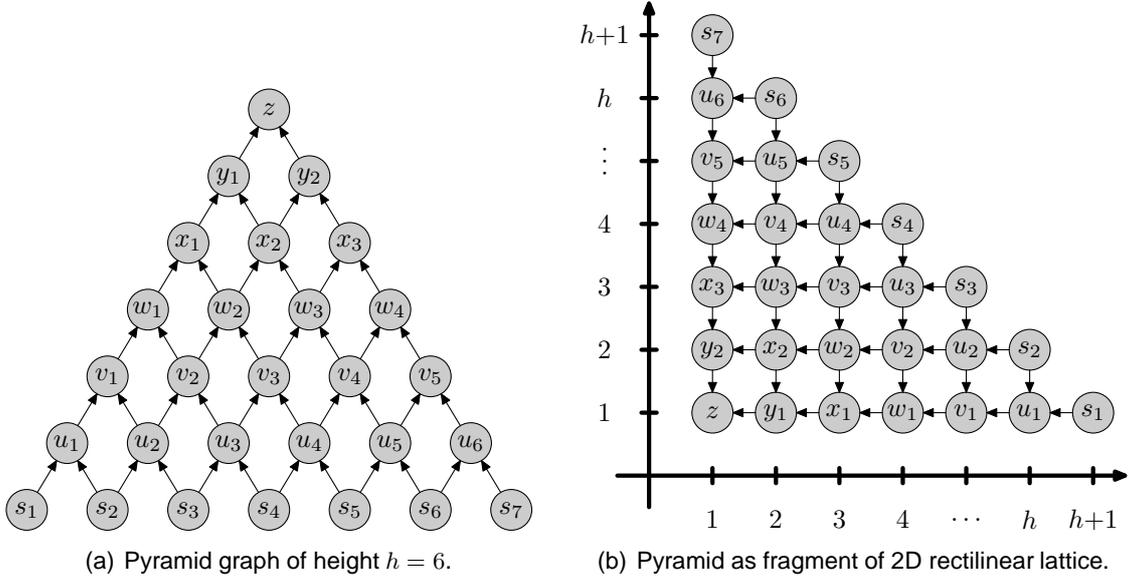
 
  \centering
  \subfigure[Pyramid graph of height $h=6$.]
  {
    \label{fig:pyramid-height-6-a}
    \begin{minipage}[b]{.46\linewidth}
      \centering
      \includegraphics{pyramidHeight6.2}%
    \end{minipage}
  }
  \hfill
  \subfigure[Pyramid as                                 
  fragment of 2D rectilinear lattice.]
  {
    \label{fig:pyramid-height-6-b}
    \begin{minipage}[b]{.50\linewidth}
      \centering
      \includegraphics{pyramidHeight6.4}%
    \end{minipage}
  }
  \caption{The pyramid $\pyramidgraph[6]$ of height $6$ 
    with labelled vertices.}
  \label{fig:pyramid-height-6}
\end{figure}

In the standard black and black-white pebble games, 
we have the following upper bounds on pebbling price
of layered DAGs.

\begin{lemma}
  \label{lem:upper-bound-pebbling-price-layered-DAG}
  For any layered DAG~$G_h$ of height~$h$ 
  with a unique sink~$z$ and
  all non-sources having vertex indegree~$2$,
  it holds that
  $\pebblingprice{G_h} \leq h + \Ordosmall{1}$
  and
  $\bwpebblingprice{G_h} \leq h/2 + \Ordosmall{1}$.
\end{lemma}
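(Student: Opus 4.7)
The plan is to prove both upper bounds by exhibiting explicit pebbling strategies and analyzing their peak cost by induction on the level of the vertex being pebbled.

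For the black pebbling bound, I would define $P(L)$ to be the minimum peak cost needed to pebble any vertex at level~$L$ in~$G_h$, that is, the minimum cost of an unconditional black pebbling whose final configuration has exactly one black pebble on $v$ and no other pebbles. I claim $P(L) \leq L + 2$ for all $L \geq 1$, with $P(0) = 1$. The base case is immediate since any source can be pebbled at cost~$1$. For the inductive step, let $v$ be at level $L \geq 1$ with predecessors $u_1, u_2$ at level $L - 1$ (both exist since $v$ is a non-source with indegree~$2$). The strategy is: first pebble $u_1$ at peak cost $P(L-1)$, ending with one black pebble on $u_1$; then, keeping that pebble in place, run the optimal strategy for $u_2$ at additional peak $P(L-1)$, giving overall peak $P(L-1) + 1$; then place a black pebble on $v$, using three pebbles simultaneously on $u_1, u_2, v$; finally remove the pebbles from $u_1, u_2$. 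The overall peak is $\max(P(L-1)+1, 3)$, which together with $P(0)=1$ unrolls to $P(h) \leq h + 2 = h + \Ordosmall{1}$. The only property of $G_h$ actually used here is that every non-source vertex has exactly two predecessors; the layered structure is not needed for this half.

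For the black-white bound $\bwpebblingprice{G_h} \leq h/2 + \Ordosmall{1}$, the key idea is that a white pebble on a vertex~$w$ behaves as an assumption to be discharged later, and by scheduling the discharge cleverly one can effectively halve the memory. I would proceed recursively two levels at a time. To place a black pebble on $v$ at level $L$ with predecessors $u_1, u_2$ at level $L-1$, the strategy is: place a white pebble on~$u_2$; recursively black-pebble $u_1$ with peak $W(L-1)$, so peak $W(L-1)+1$ while the white pebble on $u_2$ is present; place a black pebble on $v$ (peak $W(L-1)+2$); remove the black pebble on~$u_1$; and finally discharge the white pebble on~$u_2$ by recursively black-pebbling $u_2$ with peak cost $W(L-1)$, while only $v$ remains in memory from the first phase (peak $W(L-1)+1$). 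To get a factor-of-two gain rather than a constant per level, this recursion is nested one further step: the ``cheap'' direction of the recursion for $u_2$ can be amortised against pebbles already present on a chosen ``spine'' going down from~$v$, so that each additional two levels of height cost only $\Ordosmall{1}$ extra. The bookkeeping yields $W(L) \leq W(L-2) + \Ordosmall{1}$, which unrolls to $W(h) \leq h/2 + \Ordosmall{1}$.

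The main obstacle is exactly this factor-of-two gain: a naive ``whiten one predecessor, black-pebble the other, then discharge'' scheme only gives the recurrence $W(L) \leq W(L-1) + \Ordosmall{1}$ and hence the trivial bound $h + \Ordosmall{1}$. To make the gain honest, one needs a strong inductive invariant capturing the state at the moment the white pebble on~$u_2$ is about to be discharged, and a schedule in which the peak cost of the recursive call for $u_2$ overlaps in time with (rather than adds to) the pebbles left over from the recursive call for~$u_1$. For the specific case of pyramid graphs, this is the classical construction of Cook; the indegree-two and layered assumptions on~$G_h$ are precisely what is needed for the same scheme to go through in general, since they guarantee that the ``spine'' of ancestors of~$z$ on which the strategy is organised has height exactly~$h$ and branches out in a controlled way at each level.
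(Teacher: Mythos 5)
Your black-pebbling argument is a direct inductive construction and is correct as such; it is a different route from the paper, which instead reduces to the known bound $\pebblingprice{T_h} = h + 2$ for complete binary trees (Theorem~\ref{th:bounds-pebbling-price-trees}) via a labelling argument. The paper labels each vertex of $T_h$ with a vertex of $G_h$ (possibly with multiplicity) so that tree edges descend to DAG edges, then observes that simulating a tree pebbling on $G_h$ through this labelling uses at most as many pebbles, since distinct tree vertices may collapse onto the same DAG vertex. This gives both bounds at once, for free, from the tree case. Your direct recursion for black pebbling is just as valid, and arguably more self-contained, but it buys you nothing for the harder half.

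For the black-white bound $\bwpebblingprice{G_h} \leq h/2 + \Ordosmall{1}$ your proposal has a genuine gap. You correctly observe that the naive ``whiten one predecessor, black-pebble the other, then discharge'' scheme gives only $W(L) \leq W(L-1) + \Ordosmall{1}$ and hence the trivial $h + \Ordosmall{1}$, and you acknowledge that the factor-of-two improvement requires overlapping the two recursive phases. But the claimed repair --- ``nest the recursion one further step'' and ``amortise against pebbles on a chosen spine'' so that ``the bookkeeping yields $W(L) \leq W(L-2) + \Ordosmall{1}$'' --- is an assertion, not a construction: you never specify the invariant, the order of moves, or why the spine pebbles are actually free to be reused at the moment you need them. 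Even for complete binary trees this is the nontrivial theorem of~\cite{LT80SpaceComplexityPebbleGamesTrees} (stated here as $\bwpebblingprice{T_h} = \Floor{h/2} + 3$), and re-deriving it from scratch is considerably more work than the paper's approach of citing it and transferring it to $G_h$ by the labelling simulation. There is also a small imprecision in your sketch: you ``discharge the white pebble on $u_2$ by recursively black-pebbling $u_2$,'' but in the standard game one cannot place a black pebble on a white-pebbled vertex; the white pebble on $u_2$ is discharged by pebbling $\prednode{u_2}$, not $u_2$ itself. To repair the proposal, the cleanest fix is to do exactly what the paper does: prove (or cite) the tree bound, then unfold $G_h$ into $T_h$ by the recursive relabelling and argue that collapsing repeated labels can only decrease the pebble count.
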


\begin{proof}
  The bounds above are true for
  complete binary trees of height~$h$ according to
  \refth{th:bounds-pebbling-price-trees}.  
  It is not hard to see that the corresponding pebbling strategies can
  be used to pebble any layered graph of the same height with at most
  the same amount of pebbles.

  Formally, suppose that the sink $z$ of the DAG $G_h$ has
  predecessors $x$ and~$y$.
  Label the root of $T_h$ by $z_1$ and its predecessors by
  $x_1$ and~$y_1$.
  Recursively, for a vertex in $T_h$ labelled by $w_i$, 
  look at the corresponding vertex $w$ in $G_h$ and suppose that
  $\prednode{w} = \set{u,v}$.
  Then label
  the vertices
  $\prednode{w_i}$ in $T_h$
  by 
  $u_j$ and~$v_k$
  for the   smallest positive indices $j,k$ such that there are
  not already other vertices in  $T_h$ labelled 
  $u_j$ and~$v_k$.
  In
  \reffig{fig:mapDAGtotree}
  there is an illustration of how the vertices in a pyramid
  $\pyramidgraph[3]$ of height~$3$
  are mapped to vertices in the complete binary tree
  $T_3$ in this manner.

  The result is a
  labelling of $T_h$
  where every vertex $v$ in $G_h$ corresponds to one or more distinct
  vertices $v_1, \ldots, v_{k_v}$ in~$T_h$,
  and such that if
  $\prednode{w_i} = \set{u_j, v_k}$ in~$T_h$,
  then
  $\prednode{w} = \set{u, v}$ in~$G_h$.
  Given a pebbling strategy $\pebbling$ for~$T_h$, we can pebble $G_h$
  with at most the same amount of pebbles by mimicking any move on any
  $v_i$ in $T_h$ by performing the same move on $v$ in $G_h$.
  The details are easily verified.
\end{proof}

\begin{figure}[tp]
  \centering
  \subfigure[Pyramid graph $\Pi_3$ of height $3$.]
  {
    \label{fig:mapDAGtotree-a}
    \begin{minipage}[b]{.4\linewidth}
      \centering
      \includegraphics{mapDAGtotree.1}%
    \end{minipage}
  }
  \hfill
  \subfigure[Binary tree $T_3$ with vertex labels 
    from $\Pi_3$.]
  {
    \label{fig:mapDAGtotree-b}
    \begin{minipage}[b]{.55\linewidth}
      \centering
      \includegraphics{mapDAGtotree.2}%
    \end{minipage}
  }
%
%
%
%
  \caption{%
    Binary tree with vertices labelled
    by pyramid graph vertices as in proof of 
    \reflem{lem:upper-bound-pebbling-price-layered-DAG}.%
%
}
  \label{fig:mapDAGtotree}
\end{figure}

In this \sectionKlawe, we will identify some layered graphs $G_h$
for which the bound in
\reflem{lem:upper-bound-pebbling-price-layered-DAG}
is also the asymptotically correct lower bound.
%
%
As a warm-up, 
and also to introduce some important ideas,
let us consider the black pebbling price of 
the pyramid $\pyramidgraphh$ of height~$h$.

\begin{theorem}[\cite{C74ObservationTimeStorageTradeOff}]
  \label{th:bounds-black-pebbling-of-pyramids}
  $\pebblingprice{\pyramidgraphh} = h + 2$ 
  for 
  $h \geq 1$.
\end{theorem}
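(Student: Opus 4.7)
The upper bound $\pebblingprice{\pyramidgraphh} \leq h+2$ is proved by induction on~$h$, sharpening the $O(1)$ slack in \reflem{lem:upper-bound-pebbling-price-layered-DAG}. The base case $h=1$ uses the obvious three-pebble strategy: pebble both sources, then the sink. For the inductive step, let $u,v$ denote the predecessors of the sink~$z$; each is the apex of a sub-pyramid isomorphic to $\pyramidgraph[h-1]$. Execute an optimal pebbling of the left sub-pyramid at cost $\leq h+1$ and leave a pebble on~$u$; then, with this one extra pebble sitting on~$u$, execute an optimal pebbling of the right sub-pyramid (the extra pebble bumps the peak to at most $h+2$) and leave a pebble on~$v$; finally place a pebble on~$z$, still within cost~$h+2$.

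For the lower bound I would proceed by induction on~$h$, strengthening the statement to: \emph{every pebbling that ends by placing a pebble on the apex of $\pyramidgraphh$, starting from a configuration where that sub-pyramid is empty, reaches a peak of $\geq h+2$ pebbles restricted to the sub-pyramid}. Given any complete black pebbling $\pebbling$, look at the last time~$\stoptime$ that $z$ is pebbled. At step $\stoptime{-}1$ both predecessors $u,v$ of~$z$ carry pebbles. Let~$t^{*}$ be the latest time before~$\stoptime$ at which one of $\{u,v\}$, say~$u$ by symmetry, is placed; then $v$ is continuously pebbled on $[t^{*}-1,\stoptime]$. Now let $t^{**}<t^{*}$ be the latest time at which $\belowvertices{u}\cup\{u\}$ is entirely empty of pebbles (such a time exists, since the pebbling started empty). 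Restricted to the vertices of this sub-pyramid, the moves on $[t^{**},t^{*}]$ form a valid pebbling of $\pyramidgraph[h-1]$ that ends by placing a pebble on its apex~$u$, starting empty. By the inductive hypothesis, this restricted pebbling reaches a peak of $\geq h+1$ pebbles at some step in $\belowvertices{u}\cup\{u\}$. Since $u$ and $v$ are siblings, the \siblingnonreachabiblitypropertynoref{} gives $v\notin\belowvertices{u}\cup\{u\}$, so the pebble on~$v$, which is present throughout the interval by construction, contributes one additional pebble, producing a peak of $\geq h+2$ in $\pebbling$.

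The main obstacle is the existence of the ``clean restart'' time $t^{**}$, and more generally the book-keeping needed to apply the induction to the restricted pebbling: at an arbitrary time~$t^{*}$ when $u$ is pebbled, the sub-pyramid may contain leftover pebbles from earlier activity. The rescue is to rewind to the last moment when $\belowvertices{u}\cup\{u\}$ is empty, which is exactly what the strengthened (conditional) inductive statement lets us feed into. One further small point to verify is that the pebbling restricted to a sub-pyramid is itself a legal pebbling of that sub-pyramid in isolation --- this holds because the only edges entering $\belowvertices{u}\cup\{u\}$ from outside are absent (the sub-pyramid is downward closed in~$\pyramidgraphh$), so every pebble-placement move inside it remains legal after restriction. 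The base case $h=1$ of the strengthened induction is immediate: placing a pebble on the apex from an empty sub-pyramid requires both sources pebbled simultaneously, then the apex, for a peak of~$3=h+2$.
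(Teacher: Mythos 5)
Your upper bound is sound, and the observation that $\belowvertices{u}\cup\{u\}$ is downward closed in $\pyramidgraphh$ (so that restriction yields a legal pebbling of a copy of $\pyramidgraph[h-1]$) is correct. The lower bound, however, has a real gap at the sentence asserting that the pebble on $v$ is ``present throughout the interval by construction.'' What you actually established is that $v$ carries a pebble on $[t^*-1,\stoptime]$, while the restricted pebbling lives on $[t^{**},t^{*}]$, and $t^{**}$ can be far earlier than $t^*-1$; in particular $v$ may be placed only just before $u$ is, long after the restricted pebbling has hit its $\geq h+1$ peak. At that peak your argument then yields only $h+1$ pebbles, not $h+2$. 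This is not hypothetical. In $\pyramidgraph[3]$ with sources $s_1,\ldots,s_4$, middle layers $a_1,a_2,a_3$ and $b_1,b_2$, and sink~$z$, the pebbling that places $s_1,s_2,s_3,a_1$, removes $s_1$, places $a_2$, removes $s_2$, places $s_4,a_3$, removes $s_3,s_4$, places $b_2$, removes $a_3$, places $b_1$, removes $a_1,a_2$, and finally places $z$ is legal and optimal with peak $5$. Here $u=b_1$ is placed last, $t^{**}=0$, the restricted pebbling of $\belowvertices{b_1}\cup\{b_1\}$ peaks at only $4$ pebbles (at the placements of $a_1$ and of $a_2$), and $v=b_2$ is placed many steps afterwards; the true peak of $5$ occurs when $a_3$ is placed, which is invisible to the restricted count. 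So the induction as written does not close.

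The paper's proof takes a fundamentally different, non-inductive route that avoids carrying a pebble across an interval. It looks at the first time $t$ at which every source-to-sink path is blocked, fixes a path $P$ among the last to become blocked (necessarily by a pebble newly placed on a source~$u$), and observes that for each of the $h$ vertices $w\in P\setminus\{u\}$ there is a ``converging'' source-path $P_w$ that joins $P$ at $w$ from the opposite side and must already have been blocked at time $t-1$. These converging paths pairwise intersect only inside $P\setminus\{u\}$, which is still pebble-free at $t-1$, so they need $h$ distinct pebbles; adding the source pebble placed at time $t$ and one more placement (which can be assumed to follow without loss of generality) gives $h+2$. This argument pins down the pebble count at a single well-chosen instant and does not rely on any assertion about where an auxiliary pebble sits throughout an interval, which is exactly where your inductive argument breaks.
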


To prove this lower bound, it turns out that it is sufficient to study
blocked paths in the pyramid.

\begin{definition}
\label{def:path-terminology}
A vertex set
$U$ \introduceterm{blocks}
a path $P$
if 
$U \intersectionSP P \neq \emptyset$.
$U$ blocks a set of paths
$\setofpathsstd$
if $U$ blocks all $P \in \setofpathsstd$.
\end{definition}

\begin{proof}[Proof of
 \refth{th:bounds-black-pebbling-of-pyramids}]
  It is easy to devise (inductively) a black pebbling strategy that uses 
  \mbox{$h+2$} pebbles
  (using, \eg,
  \reflem{lem:upper-bound-pebbling-price-layered-DAG}).
  We show that this is also a lower bound.

  Consider the first time $t$ when all possible paths from sources to
  the sink are blocked by black pebbles. Suppose that $\pathstd$ is (one
  of) the last path(s) blocked. Obviously, $\pathstd$ is blocked by 
  placing a pebble on  some source vertex $u$. The path $\pathstd$ 
  contains $h+1$ vertices, and for each
  vertex $v \in \pathstd \setminus \set{u}$ 
  there is a unique path $\pathstd_v$ that 
  coincides with 
  $\pathstd$ from
  $v$ onwards to the sink but arrives at $v$ in a straight line from a
  source ``in the opposite direction'' of that of $\pathstd$,
  \ie via the immediate predecessor of $v$ not contained in $\pathstd$. At
  time $t-1$ all such paths 
  $\setdescrsmall{\pathstd_v}{v \in P \setminus \set{u}}$ 
  must already be blocked, and since 
  $\pathstd$
  is still open no pebble can block two paths
  $\pathstd_{v} \neq \pathstd_{v'}$
  for
  $v, v' \in \pathstd \setminus \set{u}$,
  $v \neq v'$.
  Thus at time $t$ there
  are at least $h+1$ pebbles on $\pyramidgraphh$. Furthermore, \wolog each
  pebble placement on a source vertex is followed by another
  pebble placement (otherwise
  perform all removals immediately following after time $t$ before
  making the pebble placement at time~$t$). Thus at time $t+1$ there
  are $h+2$~pebbles on~$\pyramidgraphh$.
\end{proof}

We will use the idea in the proof above about a set of paths
converging at different levels to another fixed path repeatedly, so we
write it down as a separate observation.

\begin{observation}
  \label{obs:converging-paths}
  Suppose that $u$ and $w$ are vertices in $\pyramidgraphh$ on levels
  $\levelstd_u < \levelstd_w$ and that
  $\pathfromto{\pathstd}{u}{w}$ is a path from $u$ to $w$.
  Let
  $K = {\levelstd_w-\levelstd_v}$
  and write
  $\pathstd = 
  \set{v_0=u, v_1, \ldots, v_{K}=w}$.
  Then there is a set of $K$ paths
  $
  \setofpathsstd = 
  \set{\pathstd_1, \ldots, \pathstd_K}
  $
  \st
  $\pathstd_i$ 
  coincides with $\pathstd$ from $v_i$ onwards to $w$
  arrives to $v_i$  in a straight line from a source vertex
  via the immediate predecessor of $v_i$ 
  which is not contained in~$\pathstd$, \ie is distinct from~$v_{i-1}$.
  In particular, for any $i,j$ with
  $1 \leq i < j \leq k$ 
  it holds that
  $
  \pathstd_i \intersectionSP \pathstd_j 
  \subseteq
  \pathstd_j \intersectionSP \pathstd 
  \subseteq 
  \pathstd \setminus \set{u}
  $.
\end{observation}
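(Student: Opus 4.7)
The plan is to work in natural lattice coordinates for $\pyramidgraphh$: label each vertex by $(\ell, p)$ where $\ell$ is its level and $p \in \{1, \ldots, h+1-\ell\}$ is its position, chosen so that the two predecessors of $(\ell, p)$ are $(\ell-1, p)$ and $(\ell-1, p+1)$. The underlying monotonicity I will use repeatedly is that along any edge going upward the position is either preserved or decreases by one; equivalently, walking any path downward, positions are weakly nondecreasing. In particular, from any vertex $(\ell, p)$ there are exactly two straight-line descents to a source of length $\ell$: a vertical one that stays at position $p$ and terminates at $(0,p)$, and a diagonal one that increases position by one per level and terminates at $(0, p+\ell)$.

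For each $i \in \{1,\ldots,K\}$, I would let $v_i'$ be the unique predecessor of $v_i$ other than $v_{i-1}$; this vertex lies outside $\pathstd$ because $\pathstd$ visits exactly one vertex at level $\vlevel{v_i}-1$, namely $v_{i-1}$. I then define $\pathstd_i$ as the concatenation of the straight-line descent from $v_i'$ to a source chosen to continue in the same direction as the edge $(v_i', v_i)$ (vertical if $v_i'$ is the same-position predecessor of $v_i$, diagonal otherwise), followed by the edge $(v_i', v_i)$, followed by the suffix $v_i, v_{i+1}, \ldots, v_K = w$ of $\pathstd$. It is then immediate that $\pathstd_i$ is a source-to-$w$ path reaching $v_i$ via its predecessor distinct from $v_{i-1}$, as required.

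The main obstacle, and the only nontrivial part, is the intersection claim $\pathstd_i \cap \pathstd_j \subseteq \pathstd_j \cap \pathstd$ for $i < j$. I would split by level range: at levels $\geq \vlevel{v_j}$ both paths coincide with the common suffix of $\pathstd$, so any intersection there lies in $\pathstd$; at levels in $[\vlevel{v_i}, \vlevel{v_j})$, $\pathstd_i$ lies on $\pathstd$ while $\pathstd_j$ lies on its straight-line tail below $v_j'$; at levels below $\vlevel{v_i}$ both paths are on their respective straight-line tails. This reduces the claim to two disjointness sublemmas: that the tail of $\pathstd_j$ below $v_j'$ is disjoint from the initial segment of $\pathstd$ below $v_j$, and that the two straight-line tails are disjoint at shared levels. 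Both follow from the position monotonicity together with the fact that $v_{j-1}$ and $v_j'$ are the two predecessors of $v_j$ and hence their positions differ by exactly one. For the first sublemma, the constant (or linearly growing) position of the tail lies on the strictly opposite side of the $\pathstd$-position at each level, which is sandwiched by monotonicity. For the second, one enumerates the four combinations of vertical-or-diagonal directions for the two tails, writes each tail's position as an affine function of the level, and checks that they never coincide below level $\vlevel{v_i}$; in the same-direction cases this reduces to showing the two tails lie on different lattice lines, and in the opposite-direction case to a single arithmetic inequality using that positions along the $\pathstd$-segment from $v_i$ to $v_j$ can decrease by at most one per level. The bulk of the work is this bookkeeping over a handful of cases; the rest is routine.
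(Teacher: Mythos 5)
Your approach is essentially the one the paper has in mind: the paper never writes out a standalone proof of this observation, instead treating it as immediate from the planarity argument sketched inside the proof of \refth{th:bounds-black-pebbling-of-pyramids}, and your lattice-coordinate bookkeeping is exactly the formalization of that planarity intuition. The construction of the $\pathstd_i$ via the predecessor $v_i'$ off $\pathstd$ is the same one the paper uses, and your three-way split of levels (common suffix above $\vlevel{v_j}$; tail-of-$\pathstd_j$ against $\pathstd$ on $[\vlevel{v_i}, \vlevel{v_j})$; the two tails against each other below $\vlevel{v_i}$) is the right decomposition of the intersection claim, and it correctly reduces everything to the two disjointness sublemmas.

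One place that deserves to be unpacked, because it is where a reader might stall: in your second sublemma, you say the same-direction cases ``reduce to showing the two tails lie on different lattice lines'' but leave the reason implicit. It is true but not automatic. If both tails are vertical and $p_i = p_j$ (where $p_i, p_j$ are the positions of $v_i, v_j$), then because positions along $\pathstd$ weakly decrease going upward the whole segment from $v_i$ to $v_j$ would sit at constant position $p_j$, so the step into $v_j$ would preserve position---but a vertical tail at $v_j$ means by definition that $v_j'$ is the same-position predecessor and hence $v_{j-1}$ is the other one, at position $p_j+1$, a contradiction. Symmetrically, if both tails are diagonal and $p_i + \ell_i = p_j + \ell_j$, then every step of $\pathstd$ from $v_i$ to $v_j$ drops position by one, forcing the step into $v_j$ to drop position, contradicting that a diagonal tail at $v_j$ means $\pathstd$ arrives there preserving position. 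So the ``different lattice lines'' fact you invoke is itself a small argument about the arrival direction of $\pathstd$ at $v_j$, dual to the choice of tail direction. With that spelled out, the opposite-direction cases are indeed a single inequality (the $1$-Lipschitz bound on positions along a path, which you already state), and the whole argument closes. This is a minor underspecification rather than a wrong step; the approach is sound and the bookkeeping all checks out.
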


We will refer to the paths
$\pathstd_1, \ldots, \pathstd_K$
as a set of
\introduceterm{converging \sourcepath{}s},
or just converging paths, for 
$\pathfromto{\pathstd}{u}{w}$.
See
\reffig{fig:convergingpaths}
for an example.

\begin{figure}[tp]
  \centering%
  \includegraphics{convergingpaths.1}%
%
%
%
%
  \caption{
    Set of converging \sourcepath{}s (dashed) for the path
    $\pathfromto{P}{u_4}{y_1}$
    (solid).}
  \label{fig:convergingpaths}
\end{figure}

\subsection{A Tight Bound on the Black-White Pebbling Price of Pyramids}
\label{sec:klawe-bw-pebbling-bound}

The rest of this \sectionKlawe contains an exposition of
Klawe~\cite{K80TightBoundPebblesPyramid},
with some simplifications of the proofs.
Much of the notation and terminology has been changed from
\cite{K80TightBoundPebblesPyramid}
to fit better with this paper in general and (in the next \sectionKlawe)
the \blobpebblegame in particular.
Also, it should be noted that we restrict all definitions to 
layered graphs, in contrast to Klawe who deals with a somewhat more
general class of graphs. We concentrate on layered graphs mainly to
avoid unnecessary complications  in the exposition, and since it can
be proven that no graphs in \cite{K80TightBoundPebblesPyramid} can
give a better size/pebbling price trade-off than one gets for layered
graphs anyway. 

Recall from
\refdef{def:chains-and-paths}
%
%
that a 
\introduceterm{path via~$w$} 
is a path $P$  \st $w \in P$. We will also say that $P$ 
\introduceterm{visits}~$w$. 
The notation $\pathsviavertex{w}$ is used to denote all \sourcepath{}s
visiting $w$. Note that a path $P\in\pathsviavertex{w}$ visiting $w$
may continue after~$w$, or may end in~$w$. 

\begin{definition}[\Hidingsetklawe{}]
  \label{def:cover}
  A vertex set $U$ 
  \introduceterm{\hideklawe{}s} 
  a vertex $w$ if $U$   blocks all \sourcepath{}s visiting~$w$, \ie if
  $U$ blocks $\pathsviavertex{w}$.
  $U$ \hideklawe{}s $W$ if $U$ \hideklawe{}s all $w \in W$.  
  If so, we say that
  $U$ is   a
  \introduceterm{\hidingsetklawe{}} for~$W$.
  We write
  $\hiddenvertices{U}$
  to denote the set of all vertices \hiddenklawe{} by $U$.
\end{definition}

%
Our perspective is that we are standing
at the sources of $G$ and looking towards the sink. Then $U$
\introduceterm{\hideklawe{}s} $w$ if we ``cannot see'' $w$ from the sources
since $U$ completely hides $w$.
When
$U$ \introduceterm{blocks} a path $P$ is is possible that we can
``see'' the beginning of the path, but we cannot walk all of the path
since it is blocked somewhere on the way.
The reason why this terminological distinction is convenient 
will become clearer in the next \sectionKlawe.

Note that if $U$ should \hideklawe{} $w$, then in particular it must block
all paths ending in $w$. Therefore, when looking at minimal
\hidingsetklawe{}s we can assume \wolog that no vertex in $U$ is on a
level higher than $w$. 

It is an easy exercise to show that the \hidingklawe{} relation is
transitive, \ie that if $U$ \hideklawe{}s $V$
and $V$ \hideklawe{}s $W$,
then $U$ \hideklawe{}s $W$.

\begin{proposition}
  \label{pr:cover-relation-transitive}
  If
  $V \subseteq \hiddenvertices{U}$
  and
  $W \subseteq \hiddenvertices{V}$
  then
  $W \subseteq \hiddenvertices{U}$.
\end{proposition}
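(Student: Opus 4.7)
The plan is to unwind the definition: $W \subseteq \hiddenvertices{U}$ means that for every $w \in W$, every source-path $P$ visiting $w$ is blocked by $U$, i.e.\ satisfies $U \intersectionSP P \neq \emptyset$. So I fix an arbitrary $w \in W$ and an arbitrary source-path $P \in \pathsviavertex{w}$, and aim to exhibit a vertex of $U$ in $P$.

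First I would use the hypothesis $W \subseteq \hiddenvertices{V}$: since $w \in W$ is hidden by $V$, the path $P$ must be blocked by $V$, so I can pick some $v \in V \intersectionSP P$. The key (and essentially only) observation is then that the prefix $P'$ of $P$ running from its source vertex up to $v$ is itself a source-path, and it visits $v$; hence $P' \in \pathsviavertex{v}$. This is immediate from \refdef{def:chains-and-paths}, since truncating a path in $G$ at an interior vertex yields a path, and the starting vertex is unchanged.

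Now I apply the hypothesis $V \subseteq \hiddenvertices{U}$ to the vertex $v$: since $v$ is hidden by $U$, every source-path visiting $v$ is blocked by $U$, and in particular $U \intersectionSP P' \neq \emptyset$. Because $P' \subseteq P$, this gives $U \intersectionSP P \neq \emptyset$, so $P$ is blocked by $U$. Since $P$ was an arbitrary element of $\pathsviavertex{w}$, we conclude that $U$ hides $w$, and since $w \in W$ was arbitrary, $W \subseteq \hiddenvertices{U}$, as desired.

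There is no real obstacle here; the entire argument is a two-step chase through the definition of \hidingklawe{}, with the only minor point being the observation that an initial segment of a source-path is again a source-path. I do not need any of the layered structure of $G$ or the \siblingnonreachabiblityproperty, which is why this proposition is stated in full generality.
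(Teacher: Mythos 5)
Your argument is correct, and the paper in fact gives no proof of this proposition at all—it is stated with the remark ``it is an easy exercise to show that the \hidingklawe{} relation is transitive.'' Your unwinding of the definition is exactly the intended verification. One small simplification: the truncation step is unnecessary. Since $v \in V \intersectionSP P$, the original path $P$ already lies in $\pathsviavertex{v}$ (a \sourcepath visiting $v$ is by \refdef{def:chains-and-paths} simply a \sourcepath with $v \in P$; the paper explicitly notes such a path ``may continue after $w$, or may end in $w$''). So you can apply $V \subseteq \hiddenvertices{U}$ to $P$ directly and conclude $U \intersectionSP P \neq \emptyset$ without introducing $P'$. This doesn't affect correctness—your version just adds a harmless intermediate step.
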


One key concept in Klawe's paper is that of
\introduceterm{potential}.
The potential of
$\pconf = (B,W) $
is intended to measure how
``good'' the configuration $\pconf$ is, or at least how
hard it is to reach in a pebbling.
Note that this is not captured by the cost of the current pebble
configuration. For instance, the final configuration
$\pconf_{\stoptime}
=
(\set{z}, \emptyset)$
is the best configuration conceivable, but only costs $1$.
At the other extreme, the configuration $\pconf$ in a pyramid with, say, 
all vertices on level
$\levelstd$ white-pebbled and all vertices on level $\levelstd+1$ 
black-pebbled is potentially very
expensive (for low levels $\levelstd$), but does not seem very useful.
Since this configuration on the one hand is quite expensive, but on
the other hand is extremely easy to derive
(just white-pebble all vertices on level~$\levelstd$,
and then black-pebble all vertices on level $\levelstd+1$), 
here the cost seems
like a gross overestimation of the ``goodness''
of $\pconf$.

Klawe's potential measure remedies this. 
The potential of a pebble configuration $(B,W)$
is defined as the minimum
measure of any set $U$ that together with $W$ \hideklawe{}s~$B$.
Recall that
$\vertabovelevel{U}{j}
$
denotes the subset of all vertices in $U$ on level $j$ or higher
in a layered graph~$G$.

\begin{definition}[Measure]
  \label{def:measure}
  The
  \introduceterm{$j$th partial measure} of 
  the vertex set $U$ in $G$ is
  \begin{equation*}
    \vjthmeasure[G]{j}{U}
    =    
    \begin{cases}
      j + 2 \setsize{\vertabovelevel{U}{j}} 
      & \text{if $\vertabovelevel{U}{j} \neq \emptyset$,}
      \\
      0 & \text{otherwise,}
    \end{cases}
  \end{equation*}
  and the \introduceterm{measure} of $U$ is
  $\vmeasure[G]{U} = \Maxofexpr[j]{\vjthmeasure[G]{j}{U}}$.
\end{definition}

\begin{definition}[Potential]
  \label{def:potential}
  We say that $U$ is a \hidingsetklawe for
  a black-white pebble configuration
  $\pconf = (B,W)$
  in a layered graph~$G$ if
  $U \unionSP W$ \hideklawe{}s~$B$.
  We define the
  \introduceterm{potential} 
  of the pebble configuration to be
  \begin{equation*}
  \vpotential[G]{\pconf} = \vpotential[G]{B,W}
  =
  \minofset{\meastopot[G]{U}}
           {\text{$U$ is a \hidingsetklawe for $(B,W)$}}
  \eqperiod
  \end{equation*}                                
  If   $U$ is a  \hidingsetklawe for $(B,W)$ 
  with   
  minimal measure $\vmeasure[G]{U}$
  among all vertex sets $U'$ \st
  $U' \unionSP W$ \hideklawe{}s $B$,
  we say that $U$ is a \introduceterm{\minmeasure{}} \hidingsetklawe{}
  for~%
  $\pconf$.
\end{definition}

Since the graph under consideration will almost always be clear from
context, we will tend to omit the subindex~$G$
in measures and potentials.

We remark that although this might not be immediately obvious, there is
quite a lot of nice intuition why
\refdef{def:potential}
is a relevant estimation of how ``good'' a pebble configuration is.
We refer the reader to Section 2 of~%
\cite{K80TightBoundPebblesPyramid}
for a discussion about this.
Let us just note that with this definition, the pebble configuration 
$\pconf_{\stoptime}
=
(\set{z}, \emptyset)$
has high potential, as we shall soon see, 
while the configuration with
all vertices on level $\levelstd$ white-pebbled and all vertices on
level $\levelstd+1$  black-pebbled has potential zero.

\begin{remark}
  Klawe does not use the level of a vertex $u$  in
  \reftwodefs{def:measure}{def:potential},
  but instead the black pebbling price
  $\pebblingprice{\set{u},\emptyset}$
  of the configuration with a black pebble on $u$ and no other pebbles in
  the DAG.  For pyramids, these two concepts are equivalent, and we
  feel that the exposition can be made considerably simpler by using
  levels. 
\end{remark}

Klawe proves two facts about the potentials of the pebble 
configurations in any
%
%
black-white pebbling
$\pebbling =  \set{\pconf_{0}, \ldots, \pconf_{\stoptime}}$
of a pyramid graph $\pyramidgraphh$:
\begin{enumerate}
\item 
  \label{item:potential-property-inductive}
  The potential correctly estimates the goodness of the current
  configuration $\pconf_t$ 
  by taking into account the whole pebbling that has led to
  $\pconf_t$.
  Namely,
  $\vpotential{\pconf_t}
  \leq 
  2 \cdot    
  \maxofexpr[s \leq t]{\pebcost{\pconf_s}}$.

\item 
  \label{item:potential-property-final}
  The final configuration
  $\pconf_{\stoptime} = (\set{z}, \emptyset)$
  has high potential, namely
  $\vpotential{\set{z}, \emptyset}
  = h          
  + \Ordosmall{1}
  $.
\end{enumerate}
Combining these two parts, one clearly gets a 
lower bound on pebbling price.

For pyramids,
\refpart{item:potential-property-final}
is not too hard to show directly.
In fact, it is a useful exercise if one wants to get some feeling for
how the potential works.
\Refpart{item:potential-property-inductive}
is much trickier. It is proven
by induction over the pebbling. 
As it turns out, the whole induction proof hinges on the following key
property.  

\begin{property}[\klaweprop{}]
  \label{property:klawe-property}
  We say that the
  black-white pebble configuration
  $\pconf = (B,W)$ in~$G$ has the
  \introduceterm{\klaweprop{}},
  or just the 
  \introduceterm{\klawepropacronym{}}
  for short,
  if  there is a vertex set $U$
  \st
  \begin{enumerate}
  \item 
    \label{item:KP-cover}
    $U$ is a \hidingsetklawe for $\pconf$,   
  \item 
    \label{item:KP-measure}
    $\vpotential[G]{\pconf} =
    {\meastopot{U}}$,
    
  \item 
    \label{item:KP-size}
    $U = B$ or 
    $
    \setsize{U} < \setsize{B} + \setsize{W}
    = \pebcost{\pconf}
    $.
  \end{enumerate}
  We say that the graph $G$ has the 
  \klaweprop{}
  if all black-white pebble configurations
  $\pconf = (B,W)$ on~$G$
  have the \klaweprop{}.
\end{property}

Note that
requirements~\ref{item:KP-cover} and~\ref{item:KP-measure} 
just say that $U$ is a vertex set that witnesses the potential
of~$\pconf$. 
The important point here is 
requirement~\ref{item:KP-size}, which says (basically) that
if we are given a \hidingsetklawe{} $U$ with minimum measure 
but with size exceeding 
the  cost of the black-white pebble
configuration~$\pconf$, 
then we can pick
\emph{another} \hidingsetklawe{}
$U'$ which keeps the minimum measure but decreases the cardinality to
at most~%
$\pebcost{\pconf}$.

Given
\refproperty{property:klawe-property},
the induction proof for
\refpart{item:potential-property-inductive}
follows quite easily. The main part of the paper~%
\cite{K80TightBoundPebblesPyramid}
is then spent on proving that a class of DAGs including pyramids
have 
\refproperty{property:klawe-property}.
Let us see what the lower bound proof looks like, assuming  that
\refproperty{property:klawe-property}
holds.

\begin{lemma}[Theorem 2.2 in \cite{K80TightBoundPebblesPyramid}]
  \label{lem:potential-property-inductive}
  Let $G$ be a layered graph 
  possessing the \klawepropacronym{}
  and suppose that
  $
  \pebbling = 
  \set{\pconf_0 = \emptyset, \pconf_1, \ldots, \pconf_{\stoptime}}
  $
  is
  any \pebunconditional black-white pebbling on~$G$.
  Then it holds for all $t = 1, \ldots, \stoptime$  that
  $\vpotential[G]{\pconf_t}
  \leq 
  2 \cdot    
  \maxofexpr[s \leq t]{\pebcost{\pconf_s}}$.
\end{lemma}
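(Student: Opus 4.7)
The plan is to prove the inequality by forward induction on $t$, handling the step with a case analysis over the four types of pebbling move that can produce $\pconf_t$ from $\pconf_{t-1}$. The base case $t=0$ is immediate, since $\pconf_0 = \emptyset$ gives both potential and cost $0$. For the inductive step, write $M_t = \maxofexpr[s \leq t]{\pebcost{\pconf_s}}$ and assume $\vpotential{\pconf_{t-1}} \leq 2 M_{t-1}$; note $M_t \geq M_{t-1}$ and $M_t \geq \pebcost{\pconf_t}$.

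Three of the four cases are straightforward, and in each I would argue that any hiding set $U_{t-1}$ for $\pconf_{t-1}$ is still a hiding set for $\pconf_t$, so that $\vpotential{\pconf_t} \leq \vpotential{\pconf_{t-1}} \leq 2 M_{t-1} \leq 2 M_t$. Concretely, for a \emph{black placement} of $v$ the pebbling rule forces $\prednode{v} \subseteq B_{t-1} \cup W_{t-1}$, so $\prednode{v}$ hides $v$; combined with the observation that $U_{t-1} \cup W_{t-1}$ hides $B_{t-1} \cup W_{t-1}$ and with \refpr{pr:cover-relation-transitive} this shows $U_{t-1} \cup W_{t-1}$ hides the new black vertex $v$. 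For \emph{white placement} the set $W$ only grows and for \emph{black removal} the set $B$ only shrinks, so any hiding set is preserved.

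The non-trivial case is \emph{white removal} of a vertex $v$ with $\prednode{v} \subseteq B_{t-1} \cup W_{t-1}$. A naive construction, $U_t = U_{t-1} \cup \setsmall{v}$, is a valid hiding set for $\pconf_t$ (since $U_t \cup (W_{t-1}\setminus\setsmall{v}) \supseteq U_{t-1} \cup W_{t-1}$ hides $B_{t-1}$), but its measure can be $\vmeasure{U_{t-1}}+2$, which only yields the weaker bound $\vpotential{\pconf_t} \leq 2M_{t-1}+2$. To close the gap of $2$, I would invoke the Klawe Property: apply \refproperty{property:klawe-property} to $\pconf_{t-1}$ to obtain a minimum-measure hiding set $U_{t-1}$ that in addition satisfies either $U_{t-1}=B_{t-1}$ or the strict size bound $\setsizesmall{U_{t-1}} < \pebcost{\pconf_{t-1}}$. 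In the strict-size subcase, $\setsizesmall{U_{t-1} \cup \setsmall{v}} \leq \pebcost{\pconf_{t-1}} = \pebcost{\pconf_t} + 1$; reapplying \refproperty{property:klawe-property} now to $\pconf_t$ lets us pick a different minimum-measure hiding set $U_t$ whose measure we can bound by $2 M_t$ using the size-to-measure relation $\vmeasure{U_t} \leq \vmaxlevel{U_t} + 2\setsizesmall{U_t}$ together with the fact that the maximum level appearing in $U_t$ is itself witnessed by vertices paid for in some earlier configuration (so absorbed by $2M_{t-1}$). In the alternative subcase $U_{t-1}=B_{t-1}$, the set $B_{t-1}=B_t$ is itself a hiding set for $\pconf_t$ (no white vertices are used by it at all), so $\vpotential{\pconf_t} \leq \vmeasure{B_t}$, and one bounds $\vmeasure{B_t}$ directly against $2\pebcost{\pconf_t} \leq 2M_t$ via the partial-measure definition.

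The main obstacle is exactly this white-removal bookkeeping: the potential can genuinely increase by up to $2$, and the cost can simultaneously drop by $1$, so the inductive bound has no slack. Closing it rigorously requires using the full strength of \refproperty{property:klawe-property}—the combination of the minimum-measure condition with the strict size bound $\setsizesmall{U} < \setsizesmall{B}+\setsizesmall{W}$—and, crucially, the freedom to re-select the witnessing hiding set both at time $t-1$ (to expose predecessors and exploit the size bound) and at time $t$ (to trade a larger set of low-level vertices for a smaller set of higher-level ones). Once this white-removal case is settled, stringing the four cases together completes the induction and yields the claimed inequality $\vpotential{\pconf_t} \leq 2 M_t$.
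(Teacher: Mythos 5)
You have the case analysis inverted. In the actual proof, white pebble removal is one of the \emph{easy} cases, and the hard case is black pebble placement on a \emph{source} vertex---precisely the move your argument dismisses as straightforward. Your black-placement step claims the pebbling rule forces $\prednode{v} \subseteq B_{t-1} \union W_{t-1}$ and hence that $\prednode{v}$ hides $v$; but for a source $v$ we have $\prednode{v} = \emptyset$, and the empty set hides nothing (\refrule{pebrule:black-placement} allows a black pebble to be placed on any source unconditionally). This is the one move where the hiding set must actually grow, to $U_{t-1} \union \set{v}$, and it is \emph{here} that \refproperty{property:klawe-property} is used: the size bound $\setsizesmall{U_{t-1}} \leq \pebcost{\pconf_{t-1}}$ together with the fact that $v$ sits on level~$0$ gives $\vjthmeasure{0}{U_{t-1} \union \set{v}} = 2(\setsizesmall{U_{t-1}}+1) \leq 2\pebcost{\pconf_t}$, while $\vjthmeasure{j}{U_{t-1} \union \set{v}} = \vjthmeasure{j}{U_{t-1}}$ for all $j>0$; this yields $\vmeasure{U_{t-1} \union \set{v}} \leq \max\{\vmeasure{U_{t-1}}, 2\pebcost{\pconf_t}\}$ and closes the induction.

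Conversely, at a white removal of $w$ the hiding set need not be augmented at all. The removal rule guarantees $\prednode{w} \subseteq B_{t-1} \union W_{t-1}$ (and one may assume wlog $w$ is not a source, replacing white pebble moves on sources by black ones), so each $p\in\prednode{w}$ is either in $W_t = W_{t-1}\setminus\set{w}$, hence hidden by itself, or lies in $B_{t-1}$; in the latter case $p$ is strictly below $w$, so $w$ plays no role in hiding $p$ and $U_{t-1}\union W_t$ still hides~$p$. Therefore $U_{t-1}\union W_t$ hides $\prednode{w}$, hence $w$, hence all of $U_{t-1}\union W_{t-1}$, hence $B_t = B_{t-1}$ by \refpr{pr:cover-relation-transitive}, and the potential does not increase at this step. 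No appeal to the LHC property is needed here, and your proposed patch---re-selecting a minimum-measure hiding set $U_t$ at time~$t$ and asserting its top level is ``absorbed by $2M_{t-1}$''---does not hold up: the LHC property controls only the \emph{size} of the hiding set, not the level of its highest vertex, so the measure of the re-selected $U_t$ is not bounded by what you claim.
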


\begin{proof}
  To simplify the proof, let us assume \wolog that
  no white pebble is ever removed from a source.
  If $\pebbling$ contains such moves, we just substitute for each such
  white pebble placement on $v$ a black pebble placement on $v$
  instead, and when the white pebble is removed we remove the
  corresponding black pebble. It is easy to check that this results in a
  legal pebbling $\pebbling'$ that has exactly the   same cost. 

  The proof is by induction. The base case
  $\pconf_0 = \emptyset$ is trivial.
  For the induction hypothesis,
  suppose that
  $\vpotential{\pconf_{t}}
  \leq 
  2 \cdot    
  \maxofexpr[s \leq t]{\pebcost{\pconf_s}}$
  and let
  $U_{t}$ be a vertex set as in
  \refproperty{property:klawe-property},
  \ie such that 
  $U_t \unionSP W_t$ \hideklawe{}s $B_t$,
  $\vpotential{\pconf_t} = \vmeasure{U_t}$
  and
  $\setsize{U_t}
  \leq
  \pebcost{\pconf_t}
  =
  \setsize{B} +  \setsize{W}
  $.

  Consider $\pconf_{t+1}$.
  We need to show that
  $\vpotential{\pconf_{t+1}}
  \leq 
  2 \cdot    
  \maxofexpr[s \leq t+1]{\pebcost{\pconf_s}}
  $.
  By the induction hypothesis, it is sufficient to show that
  \begin{equation}
    \label{eq:sufficient-inductive-inequality-potential}
    \vpotential{\pconf_{t+1}}
    \leq 
    \maxofexpr{\vpotential{\pconf_{t}}, 2 \cdot \pebcost{\pconf_{t+1}}}
    \eqperiod
  \end{equation}
  We also note that if
  $U_t \unionSP W_{t+1}$ \hideklawe{}s $B_{t+1}$ we are done, since if so
  $
  \vpotential{\pconf_{t+1}}
  \leq
  \vmeasure{U_t}
  =
  \vpotential{\pconf_{t}}
  $.
  We make a case analysis depending on the type of move made to get from
  ${\pconf_{t}}$ to~${\pconf_{t+1}}$.
  \begin{enumerate}
    \item
      Removal of black pebble:
      In this case,
      $
      U_t \unionSP W_{t+1} =
      U_t \unionSP W_{t}
      $
      obviously \hideklawe{}s
      $B_{t+1} \subset B_t$ 
      as well, so
      $
      \vpotential{\pconf_{t+1}}
      \leq
      \vpotential{\pconf_{t}}
      $.

    \item
      Placement of white pebble:
      Again, 
      $
      U_t \unionSP W_{t+1} \supset
      U_t \unionSP W_{t}
      $
      \hideklawe{}s
      $B_{t+1} = B_t$, so
      $
      \vpotential{\pconf_{t+1}}
      \leq
      \vpotential{\pconf_{t}}
      $.

    \item
      Removal of white pebble:
      Suppose that a white pebble is removed from the vertex~$w$, so
      $W_{t+1} = W_t \setminus \set{w}$.      
      As noted above, \wolog $w$ is not a source vertex.
      We claim that
      $U_t \unionSP W_{t+1}$ still \hideklawe{}s $B_{t+1} = B_t$, 
      from which
      $
      \vpotential{\pconf_{t+1}}
      \leq
      \vpotential{\pconf_{t}}
      $
      follows as above.

      To see that the claim is true, note that
      $\prednode{w} \subseteq B_t \unionSP W_t$
      by the pebbling rules, for otherwise we would not be able to
      remove the white pebble on~$w$. If
      $\prednode{w} \subseteq W_t$
      we are done, since then
      $U_t \unionSP W_{t+1}$
      \hideklawe{}s
      $U_t \unionSP W_{t}$
      and we can use the transitivity in
      \refpr{pr:cover-relation-transitive}.
      If instead there is some
      $v \in \prednode{w} \intersectionSP B_t$,
      then
      $U_t \unionSP W_{t} =
      U_t \unionSP W_{t+1} \unionSP \set{w}$
      \hideklawe{}s $v$ by assumption. Since
      $w$ is a successor of $v$, and therefore on a higher level
      than~$v$, we must have
      $U_t \unionSP W_{t} \setminus \set{w}$
      \hidingklawe{}~$v$.
      Thus in any case
      $U_t \unionSP W_{t+1}$
      \hideklawe{}s
      $\prednode{w}$,
      so by transitivity
      $U_t \unionSP W_{t+1}$ \hideklawe{}s~$B_{t+1}$.

    \item
      Placement of black pebble:
      Suppose that a black pebble is placed on~$v$.
      If $v$ is not a source, by the pebbling rules we again have that
      $\prednode{v} \subseteq B_t \unionSP W_t$.
      In particular,
      $B_t \unionSP W_t$
      \hideklawe{}s~$v$ and by transitivity we have that
      $U_t \unionSP W_{t+1} = U_t \unionSP W_{t}$
      \hideklawe{}s $B_t \unionSP \set{v} = B_{t+1}$.

      The case when $v$ is a source turns out to be the only
      interesting one. 
      Now 
      $U_t \unionSP W_t$ 
      does not necessarily \hideklawe{}
      $B_t \unionSP \set{v} = B_{t+1}$
      any longer.
      An obvious fix is to try with
      $U_t \unionSP \set{v} \unionSP W_t$ 
      instead. This set clearly \hideklawe{}s
      $B_{t+1}$,
      but it can be the case that
      $\vmeasure{U_t \unionSP \set{v}}
      >
      \vmeasure{U_t}$.
      This is problematic, since we could have
      $
      \vpotential{\pconf_{t+1}}
      =
      \vmeasure{U_t \unionSP \set{v}}
      >
      \vmeasure{U_t}
      =
      \vpotential{\pconf_{t}}
      $. 
      And we do not know that the inequality
      $\vpotential{\pconf_{t}} \leq 2 \cdot \pebcost{\pconf_t}$ 
      holds, only that
      $
      \vpotential{\pconf_{t}}
      \leq 2 \cdot
      \maxofexpr[s \leq t]
      {\pebcost{\pconf_s}}
      $.
      This means that it can happen that
      $\vpotential{\pconf_{t+1}} > 2 \cdot \pebcost{\pconf_{t+1}}$,
      in which case the induction step fails.
      However, we claim that using the
      \klaweprop{}~\ref{property:klawe-property}
      we can prove for
      $U_{t+1} = U_t \unionSP \set{v}$
      that
      \begin{equation}
      \vmeasure{U_{t+1}}
      =
      \vmeasure{U_t \unionSP \set{v}}
      \leq 
      \maxofexpr{
        \vmeasure{U_t},
        2 \cdot \pebcost{\pconf_{t+1}}}
      \eqcomma
      \end{equation}
      which shows that
      \refeq{eq:sufficient-inductive-inequality-potential}
      holds and the induction steps goes through.

      Namely, suppose that $U_t$ is chosen as in 
      \refproperty{property:klawe-property}
      and consider
      $U_{t+1} = U_t \unionSP \set{v}$.
      Then
      $U_{t+1}$
      is a \hidingsetklawe for
      $\pconf_{t+1} 
      = (B_{t} \unionSP \set{v}, W_{t})$
      and hence
      $\vpotential{\pconf_{t+1}} \leq \vmeasure{U_{t+1}}$.
      For $j > 0$, it holds that
      $\vertabovelevel{U_{t+1}}{j} = \vertabovelevel{U_{t}}{j}$
      and thus
      $\vjthmeasure{j}{U_{t+1}} = \vjthmeasure{j}{U_{t}}$.
      On the bottom level, 
      using that the inequality
      $\setsize{U_t} \leq \pebcost{\pconf_t}$
      holds by the \klawepropacronym,
      we have
      \begin{equation}
        \vjthmeasure{0}{U_{t+1}} 
        = 
        2 \cdot \setsize{U_{t+1}} 
        = 
        2 \cdot (\setsize{U_{t}} + 1)
        \leq
        2 \cdot (\pebcost{\pconf_t} + 1)
        =
        2 \cdot \pebcost{\pconf_{t+1}}
      \end{equation}
      and we get that
      \begin{multline}
        \vmeasure{U_{t+1}}
        = 
        {\textstyle \Maxofexpr[j]{\vjthmeasure{j}{U_{t+1}}}}
        = 
        {\textstyle \Maxofexpr{
            \Maxofexpr[j>0]{\vjthmeasure{j}{U_{t}}},
            \vjthmeasure{0}{U_{t+1}}}}
        \\ 
        \leq
        \maxofexpr{\vmeasure{U_t}, 2 \cdot \pebcost{\pconf_{t+1}}}
        =
        \maxofexpr{\vpotential{\pconf_t}, 2 \cdot \pebcost{\pconf_{t+1}}}
      \end{multline}
      which is exactly what we need.
  \end{enumerate}
  We see that the inequality
  \refeq{eq:sufficient-inductive-inequality-potential}
  holds in all cases in our case analysis, which proves the lemma.
\end{proof}

The lower bound on black-white pebbling price now follows by showing
that the final pebble configuration
$(\set{z}, \emptyset)$
has high potential.

\begin{lemma}
  \label{lem:potential-property-final}
  For $z$ the sink of a pyramid $\pyramidgraphh$
  of height~$h$, 
  the pebble configuration
  $(\set{z}, \emptyset)$
  has potential
  $\vpotential[\pyramidgraphh]{\set{z}, \emptyset}
  = h + 2
  $.
\end{lemma}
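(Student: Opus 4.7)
The plan is to establish the upper and lower bounds on the potential separately.

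For the upper bound, I would take $U = \set{z}$ as a witness. Every source path visiting $z$ trivially contains $z$, so $\set{z}$ hides $z$. Its measure is $\vjthmeasure{h}{\set{z}} = h + 2 \cdot 1 = h + 2$, while at every lower level $j < h$ we have $\vjthmeasure{j}{\set{z}} = j + 2 \leq h + 2$; hence $\vmeasure{\set{z}} = h + 2$ and $\vpotential[\pyramidgraphh]{\set{z}, \emptyset} \leq h + 2$.

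For the lower bound, I would proceed by induction on $h$, showing that every hiding set $U$ for $z$ in $\pyramidgraphh$ satisfies $\vmeasure{U} \geq h + 2$. The base case $h = 0$ is immediate since the only hiding set is $\set{z}$, with measure $2$. For the inductive step, if $z \in U$ then $\vjthmeasure{h}{U} \geq h + 2$ and we are done. Otherwise, every source-to-$z$ path must be blocked without using $z$, and since each such path passes through exactly one of the two immediate predecessors of $z$ with its prefix lying entirely in the corresponding sub-pyramid of height $h - 1$, the restrictions of $U$ to the left and right sub-pyramids must each hide the corresponding sub-sink. The induction hypothesis then yields that both restricted measures are at least $h + 1$, and since levels agree between each sub-pyramid and $\pyramidgraphh$, these bounds carry over to measures computed in the full pyramid.

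The main obstacle will be upgrading the two sub-bounds of $h + 1$ into $\vmeasure{U} \geq h + 2$, since a naive union only recovers $h + 1$: the two restrictions may overlap on interior vertices $(L, i)$ with $1 \leq i \leq h - 1 - L$, and the measure is not additive under overlapping unions. To extract the missing unit, I would exploit that the sources $(0, 0)$ and $(0, h)$ each admit a unique source-to-$z$ path, namely the outer left and outer right diagonals, forcing $U$ to contain at least one vertex on the leftmost column $\set{(L, 0) : L < h}$ and at least one vertex on the rightmost column $\set{(L, h-L) : L < h}$. Since these two columns lie exclusively in the left and right sub-pyramid respectively, the two forced vertices are distinct from each other and disjoint from the overlap of the two restrictions. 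Combining this pair of extra vertices with an appropriately chosen level $j^*$ witnessing one of the inductive bounds then boosts $\vjthmeasure{j^*}{U}$ from $h + 1$ up to $h + 2$, completing the induction.
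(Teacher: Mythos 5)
Your upper bound argument ($U = \set{z}$ giving measure $h+2$) is correct and identical to the paper's. For the lower bound, however, the paper does not use induction on~$h$ at all: it takes an arbitrary minimal hiding set $U' \neq \set{z}$, picks the vertex $u \in U'$ on the lowest level $L < h$, uses minimality to extract a path $\pathfromto{P}{u}{z}$ with $P \intersectionSP U' = \set{u}$, and then applies the converging-paths observation to conclude that the $h - L$ converging \sourcepath{}s for $P$ are blocked by $h - L$ distinct vertices of $U' \setminus \set{u}$. This gives $\setsize{U'} \geq h + 1 - L$ and hence $\vjthmeasure{L}{U'} = L + 2\setsize{U'} \geq 2h + 2 - L > h + 2$, directly. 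Your inductive decomposition into left and right sub-pyramids is a genuinely different route, and you correctly identify the obstacle: the two inherited bounds of $h+1$ do not trivially combine.

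The problem is that your proposed fix for that obstacle does not close the gap. The induction hypothesis only guarantees that \emph{some} level $j^*$ satisfies $\vjthmeasure{j^*}{U_L} \geq h+1$, and you have no control over where that level sits. The forced vertex $r \in U_R \setminus U_L$ on the rightmost column only contributes to $\vertabovelevel{U}{j^*}$ if $\vlevel{r} \geq j^*$; if $r$ lies strictly below every level witnessing the inductive bound for $U_L$ (and symmetrically, if $\ell$ lies below every witnessing level for $U_R$), then $\vertabovelevel{U}{j^*} = \vertabovelevel{(U_L)}{j^*}$ and the partial measure at $j^*$ does not increase at all. You would need an additional argument that some witnessing level for one of $U_L, U_R$ lies at or below the level of the opposite forced vertex, or else a different way of cashing in the extra vertex at a low level, and neither is supplied. (Note also that the second forced vertex $\ell$ is already an element of $U_L$, so it is already counted in $\vjthmeasure{j^*}{U_L}$ and cannot provide a second boost.) The converging-paths argument the paper uses sidesteps all of this by working at the minimum level of $U'$ directly, where $\vertabovelevel{U'}{L} = U'$ and the count of blocked converging paths immediately lower-bounds $\setsize{U'}$.
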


\begin{proof}
  This follows easily from the \klaweprop
  (which says that $U$ can be chosen so that either
  $U \subseteq \set{z}$ or $\setsize{U} \leq 0$), 
  but let us show that this
  assumption is not necessary here.
  The set
  $U = \set{z}$
  \hideklawe{}s itself and has measure
  $
  \vmeasure{U} = \vjthmeasure{h}{U} =
  h + 2 \cdot 1 = h + 2$.
  Suppose that $z$ is \hiddenklawe{} by some
  $U' \neq \set{z}$.
  \Wolog $U'$ is minimal, \ie no strict subset of $U'$ \hideklawe{}s~$z$.
  Let $u$ be a vertex in $U'$ on minimal level
  $\vminlevel{U} = \levelstd < h$.
  The fact that $U'$ is minimal implies that there is a path
  $\pathfromto{P}{u}{z}$
  \st
  $(P \setminus \set{u}) \intersectionSP U' = \emptyset$
  (otherwise $U' \setminus \set{u}$ would \hideklawe~$z$).
  By
  \refobs{obs:converging-paths},
  there must exist
  $h - \levelstd$
  converging paths from sources to$~z$ that are
  all blocked by distinct pebbles in 
  $U' \setminus \set{u}$.
  It follows that
  \begin{equation}
  \vmeasure{U'}
  \geq
  \Vjthmeasure{\levelstd}{U'}
  =
  \levelstd + 2 \Setsize{\vertabovelevel{U'}{\levelstd}} 
  =
  \levelstd + 2 \Setsize{U'} 
  \geq
  \levelstd + 2 \cdot (h + 1 - \levelstd)
  >
  h + 2
  \end{equation}
  (where we used that
  ${\vertabovelevel{U'}{\levelstd}} = U'$ 
  since $\levelstd = \vminlevel{U}$).
  Thus
  $U = \set{z}$
  is the unique   \minmeasure \hidingsetklawe{} for
  $(\set{z}, \emptyset)$, 
  and the potential is
  $\vpotential{\set{z}, \emptyset}
  = h + 2
  $.
\end{proof}

Since \cite{K80TightBoundPebblesPyramid} proves that pyramids possess
the \klaweprop, and since there are pebblings that
yield matching upper bounds, we have the following  theorem.

\begin{theorem}[\cite{K80TightBoundPebblesPyramid}]
  \label{th:lower-bound-bwpebbling-assuming-KP}
  $\bwpebblingprice{\pyramidgraphh}
  =
  \frac{h}{2} + \Ordosmall{1}
  $.
\end{theorem}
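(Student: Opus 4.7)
The plan is to derive the theorem almost immediately from the machinery already assembled, namely \reflem{lem:upper-bound-pebbling-price-layered-DAG}, \reflem{lem:potential-property-inductive} and \reflem{lem:potential-property-final}, together with one remaining ingredient: the fact that the pyramid graph $\pyramidgraphh$ possesses the \klaweprop{} of \refproperty{property:klawe-property}.

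The upper bound $\bwpebblingprice{\pyramidgraphh} \leq h/2 + \Ordosmall{1}$ is immediate: pyramids are layered DAGs of height~$h$ with unique sink and indegrees $0$ or~$2$, so \reflem{lem:upper-bound-pebbling-price-layered-DAG} applies directly. For the lower bound, let $\pebbling = \set{\pconf_0, \ldots, \pconf_\stoptime}$ be any \pebcomplete{} black-white pebbling of $\pyramidgraphh$, so that $\pconf_0 = \emptyset$ and $\pconf_\stoptime = (\set{z}, \emptyset)$. Assuming $\pyramidgraphh$ has the \klawepropacronym, \reflem{lem:potential-property-inductive} at time $t = \stoptime$ gives $\vpotential{\pconf_\stoptime} \leq 2 \cdot \maxofexpr[s \leq \stoptime]{\pebcost{\pconf_s}} = 2 \cdot \pebcost{\pebbling}$, while \reflem{lem:potential-property-final} computes $\vpotential{\set{z},\emptyset} = h + 2$. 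Combining these two, $\pebcost{\pebbling} \geq (h+2)/2 = h/2 + 1$, and taking the minimum over all complete pebblings yields the matching lower bound $\bwpebblingprice{\pyramidgraphh} \geq h/2 + 1$.

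The hard part — and the reason the theorem cannot be declared proved right now — is verifying that pyramids do in fact satisfy the \klaweprop. Concretely, for every pebble configuration $\pconf = (B,W)$ on $\pyramidgraphh$ one must exhibit a \minmeasure{} \hidingsetklawe{} $U$ for $\pconf$ whose cardinality either satisfies $U = B$ or $\setsize{U} < \pebcost{\pconf}$. This is the technical heart of \cite{K80TightBoundPebblesPyramid} and requires an argument tailored to the geometry of the pyramid: given an arbitrary \minmeasure{} \hidingsetklawe{} $U_0$, one must be able to exchange vertices so as to shrink $U$ down to size at most $\setsize{B} + \setsize{W}$ without either increasing its measure or losing the \hidingklawe{} property for~$\pconf$. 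I would therefore split off the \klaweprop for pyramids as a separate proposition and devote a subsequent subsection to its proof via exchange arguments on converging \sourcepath{}s (as in \refobs{obs:converging-paths}), which is the tool used to force many pebbles whenever one tries to \hideklawe{} a vertex at low level using a set that does not already contain it.

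Given that separate proposition, the proof of \refth{th:lower-bound-bwpebbling-assuming-KP} itself is a two-line combination of the three lemmas as sketched above, and so I would present it in that compact form, pointing the reader to the separate development of the \klaweprop for pyramids for the only nontrivial input.
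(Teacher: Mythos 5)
Your argument is exactly the paper's: the upper bound cites \reflem{lem:upper-bound-pebbling-price-layered-DAG}, and the lower bound combines \reflem{lem:potential-property-final} (potential of $(\set{z},\emptyset)$ is $h+2$) with \reflem{lem:potential-property-inductive} (potential is at most $2\cdot\pebcost{\pebbling}$), conditional on the \klawepropacronym{} for pyramids, which both you and the paper defer to a separate technical development. No discrepancies.
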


\begin{proof}
  The upper bound was shown in
  \reflem{lem:upper-bound-pebbling-price-layered-DAG}.
  For the lower bound, 
  \reflem{lem:potential-property-final}
  says that the final pebble configuration
  $(\set{z}, \emptyset)$
  in any \pebcomplete pebbling $\pebbling$ of $\pyramidgraphh$  
  has potential
  $\vpotential{\set{z}, \emptyset} = h + 2$.
  According to
  \reflem{lem:potential-property-inductive},
  $
  \vpotential{\set{z}, \emptyset}
  \leq 
  \mbox{$2 \cdot \pebcost{\pebbling}$}
  $.
  Thus
  $\bwpebblingprice{\pyramidgraphh} \geq h/2 + 1$.
\end{proof}

In the final two \subsectionKlawe{}s of this \sectionKlawe, 
we provide a fairly detailed overview of the proof that pyramids do
indeed possess the \klaweprop{}. As was discussed above,
the reason for giving all the details is that we will need to use and
modify the construction in non-trivial ways in the next \sectionKlawe,
where we will use ideas inspired by Klawe's paper to prove lower
bounds on the pebbling price of pyramids in the \blobpebblegame.

\subsection{Proving the \KLAWEPROP{}}
\label{sec:klawe-bw-pebbling-proving-the-Klawe-property}

We present the proof of that pyramids have the \klaweprop in a
top-down fashion as follows.

\begin{enumerate}
\item 
  First, we study what \hidingsetklawe{}s look like in order to better
  understand their structure. Along the way,
  we make a few definitions and prove some lemmas culminating in
  \refdef{def:covering-set-graph} and
  \reflem{lem:klawe-lemma-three-three}.
 
\item
  We conclude that it seems like a good idea to try to split our
  \hidingsetklawe into disjoint components, prove the
  \klawepropacronym locally, and then add everything together to get a
  proof that works globally. We make an attempt to do this in
  \refth{th:pyramids-have-klawe-property},
  but note that the argument does not quite work.  However, if we
  assume a slightly stronger property locally for our disjoint
  components 
  (\refproperty{property:local-klawe-property}), 
  the proof goes through.

\item
  We then prove this stronger local property
  by assuming that pyramid graphs have a certain
  \introduceterm{spreading} property
  (\refdef{def:spreading-graph}
  and
  \refth{th:pyramids-are-spreading-graphs}),
  and by showing in
  \reftwolems
  {lem:klawe-lemma-three-five}
  {lem:pick-local-good-blocker}
  that the stronger local property holds for such spreading graphs.

\item
  Finally, in
  \refsec{sec:klawe-bw-pebbling-pyramids-are-spreading},
  we give a simplified proof of the theorem in
  \cite{K80TightBoundPebblesPyramid}  
  that pyramids are indeed spreading.
\end{enumerate}
From this, the desired conclusion follows.

For a start, we need two definitions.  The intuition for the first one
is that the vertex set $U$ is \introduceterm{\tightklawe{}} if is does
not contain any ``unnecessary'' vertex $u$ \hiddenklawe by the other
vertices in~$U$.

\begin{definition}[\Tightklawe{} vertex set]
  \label{def:tight}
  The vertex set $U$ is \introduceterm{\tightklawe{}}
  if for all 
  $u \in U$ it holds that
  $u \notin \hiddenvertices{U \setminus \set{u}}$.  
\end{definition}

If $x$ is a vertex hidden by~$U$, 
we can identify a subset of $U$ that is necessary for
\hidingklawe~$x$. 

\begin{definition}[\Necessaryhidingsetklawe{}]
  \label{def:necessary-subcover}
  If
  $x \in \hiddenvertices{U}$,
  we define
  $\necessaryhidingvert{U}{x}$
  to be the subset of $U$ \st for each
  $u \in \necessaryhidingvert{U}{x}$
  there is a \sourcepath{} $P$  ending in $x$ for which
  $P \intersectionSP U = \set{u}$.
\end{definition}

We observe that if $U$ is \tightklawe and $u \in U$, then
$\necessaryhidingvert{U}{u} = \set{u}$. This is not the case for
non-\tightklawe sets.
If we let
$U = \set{u} \unionSP \prednode{u}$
for some non-source $u$,
\refdef{def:necessary-subcover}
yields that
$\necessaryhidingvert{U}{u} = \emptyset$.
The vertices in
$\necessaryhidingvert{U}{x}$
must be contained in every subset of $U$
that \hideklawe{}s~$x$, since for each 
$v \in \necessaryhidingvert{U}{x}$
there is a \sourcepath to $x$ that intersects $U$ only in~$v$.
But if
$U$ is \tightklawe{}, the set
$\necessaryhidingvert{U}{x}$
is also \emph{sufficient} to \hideklawe{}~$x$,
\ie
$x \in \hiddenvertices{\necessaryhidingvert{U}{x}}$.

\begin{lemma}[Lemma 3.1 in
\cite{K80TightBoundPebblesPyramid}]
  \label{lem:klawe-lemma-three-one}
  If
  $U$ is \tightklawe{}
  and
  $x \in \hiddenvertices{U}$,
  then
  $\necessaryhidingvert{U}{x}$
  \hideklawe{}s $x$
  and  this set is also 
  contained in every subset of $U$ that \hideklawe{}s $x$.
\end{lemma}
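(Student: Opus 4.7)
\medskip

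\noindent\textbf{Proof plan.} The second assertion (minimality) is essentially by definition, so I would dispose of it first. Given any $U' \subseteq U$ that hides $x$ and any $u \in \necessaryhidingvert{U}{x}$, pick the witnessing \sourcepath $P$ ending in $x$ with $P \intersectionSP U = \setsmall{u}$. Since $U'$ hides $x$, it must block $P$, so $U' \intersectionSP P \neq \emptyset$; but $U' \intersectionSP P \subseteq U \intersectionSP P = \setsmall{u}$, forcing $u \in U'$.

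For the main assertion that $\necessaryhidingvert{U}{x}$ hides $x$, the plan is to take an arbitrary \sourcepath $Q$ ending in $x$ and exhibit a vertex of $Q$ that lies in $\necessaryhidingvert{U}{x}$. Since $U$ hides $x$, the path $Q$ is blocked, so $Q \intersectionSP U \neq \emptyset$; let $u$ be the vertex of $Q \intersectionSP U$ of highest level (i.e.\ closest to~$x$ along~$Q$). The claim will be that $u \in \necessaryhidingvert{U}{x}$, which immediately gives $\necessaryhidingvert{U}{x} \intersectionSP Q \neq \emptyset$ as required.

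To justify the claim, here is where tightness enters. By \refdef{def:tight}, $u \notin \hiddenvertices{U \setminus \setsmall{u}}$, so there exists a \sourcepath $R$ visiting $u$ with $R \intersectionSP (U \setminus \setsmall{u}) = \emptyset$; since $u \in R \intersectionSP U$, we get $R \intersectionSP U = \setsmall{u}$. Now let $R'$ be the prefix of $R$ from its source up to and including~$u$, and let $Q'$ be the suffix of $Q$ from~$u$ to~$x$. The concatenation $P = R' \cdot Q'$ is again a \sourcepath ending at~$x$ and visiting~$u$. By construction $R' \intersectionSP U \subseteq R \intersectionSP U = \setsmall{u}$, and the choice of $u$ as the topmost vertex of $Q \intersectionSP U$ guarantees that no vertex of $Q'$ strictly between $u$ and $x$ lies in $U$; the case $x \in U$ is handled separately but in the same spirit (then $u = x$ itself by the choice of $u$, and the argument degenerates but still produces a witnessing path). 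Hence $P \intersectionSP U = \setsmall{u}$, which is exactly the definition of $u \in \necessaryhidingvert{U}{x}$.

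The argument is essentially a short path-surgery: the only genuinely subtle point is verifying that concatenating the tightness-witness prefix $R'$ with the tail $Q'$ really produces a \sourcepath (not merely a walk) whose intersection with $U$ is the singleton $\setsmall{u}$, which is where we use the choice of $u$ as the vertex of $Q \intersectionSP U$ closest to $x$. I expect no other obstacles, since neither layeredness nor any global structural property of $G$ is needed here---only the combinatorics of paths together with the definitions of \tightklawe, \hidingklawe, and $\necessaryhidingvert{U}{x}$.
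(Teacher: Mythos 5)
Your proof is correct and follows essentially the same path-surgery argument as the paper: given a source path to~$x$, take the topmost vertex $u$ of its intersection with~$U$, use tightness to obtain a path reaching $u$ that avoids $U \setminus \setsmall{u}$, and splice. The only cosmetic difference is that you argue directly (exhibiting a vertex of $\necessaryhidingvert{U}{x}$ on every path) while the paper phrases it as a proof by contradiction; the minimality part is handled identically.
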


\begin{proof}
  The necessity was argued above, so
  the interesting part is that
  $x \in \hiddenvertices{\necessaryhidingvert{U}{x}}$.
  Suppose not.
  Let $P_1$ be a \sourcepath{} to $x$ \st
  $P_1 \intersectionSP
  {\necessaryhidingvert{U}{x}}
  = \emptyset$.
  Since $U$ \hideklawe{}s $x$,
  $U$ blocks~$P_1$. Let $v$ be the highest-level element in 
  $P_1 \intersectionSP U$ 
  (\ie, the vertex on this path closest to $x$).
  Since $U$ is \tightklawe{}, 
  $U \setminus \set{v}$ does not \hideklawe{}~$v$. Let
  $P_2$ be a \sourcepath{} to $v$ \st 
  $P_2 \intersectionSP (U \setminus \set{v}) = \emptyset$.
  Then going first along $P_2$ and switching to $P_1 $in $v$ we 
  get a path to $x$ that intersects $U$ only in $v$.
  But if so, we have $v \in {\necessaryhidingvert{U}{x}}$ contrary to
  assumption. 
  Thus, $x \in \hiddenvertices{\necessaryhidingvert{U}{x}}$ must hold.
\end{proof}

Given a vertex set $U$, the \tightklawe subset of $U$ \hidingklawe the
same elements is uniquely determined.

\begin{lemma}
\label{lem:covering-set-is-tight}
For any vertex set $U$ in a layered graph $G$ 
there is a uniquely determined minimal subset
$U^* \subseteq U$
\st
$
\hiddenvertices{U^*}
=
\hiddenvertices{U}
$,
$U^*$ is \tightklawe{},
and for any
$U' \subseteq U$
with
$
\hiddenvertices{U'}
=
\hiddenvertices{U}
$
it holds that
$U^* \subseteq U'$.
\end{lemma}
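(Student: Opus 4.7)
The plan is to define $U^*$ explicitly as the set of vertices in $U$ that are \introduceterm{essential}, namely
\begin{equation*}
U^* = \Setdescr{u \in U}{\text{there is a source-path $P$ visiting $u$ with $P \intersectionSP U = \set{u}$}}
\eqcomma
\end{equation*}
and then verify the four required properties in turn: tightness, equality of hidden sets, minimality, and uniqueness.

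First I would check that $U^*$ is tight. For each $u \in U^*$, the witnessing source-path $P$ satisfies $P \intersectionSP U = \set{u}$, and since $U^* \setminus \set{u} \subseteq U \setminus \set{u}$, the path $P$ witnesses $u \notin \hiddenvertices{U^* \setminus \set{u}}$. Next I would show that $\hiddenvertices{U^*} = \hiddenvertices{U}$; the inclusion $\subseteq$ is immediate from $U^* \subseteq U$, so the content lies in the reverse inclusion. For this, take $w \in \hiddenvertices{U}$ and any source-path $P$ visiting $w$. Since $U$ blocks $P$, there is a \emph{lowest-level} vertex $v$ of $P$ that lies in $U$. Truncating $P$ at $v$ yields a source-path $P'$ visiting $v$ with $P' \intersectionSP U = \set{v}$, so $v \in U^*$ by definition, and therefore $v \in P \intersectionSP U^*$, showing $U^*$ blocks $P$.

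For the key containment property, suppose $U' \subseteq U$ satisfies $\hiddenvertices{U'} = \hiddenvertices{U}$. For any $u \in U^*$, we have $u \in U \subseteq \hiddenvertices{U} = \hiddenvertices{U'}$, so $U'$ blocks every source-path visiting $u$, in particular the witnessing $P$ from the definition of $U^*$. But then $\emptyset \neq P \intersectionSP U' \subseteq P \intersectionSP U = \set{u}$, forcing $u \in U'$, hence $U^* \subseteq U'$. Uniqueness is then a short consequence: if $V \subseteq U$ were another tight subset with $\hiddenvertices{V} = \hiddenvertices{U}$, applied containment gives $U^* \subseteq V$; if the inclusion were strict, any $v \in V \setminus U^*$ would satisfy $v \notin \hiddenvertices{U^*}$ by tightness of $V$ (using $U^* \subseteq V \setminus \set{v}$), contradicting $v \in U \subseteq \hiddenvertices{U} = \hiddenvertices{U^*}$.

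The main obstacle I expect is the argument that $\hiddenvertices{U^*} = \hiddenvertices{U}$, since this is where one must move from the ``essential vertex'' characterization to a statement about blocking arbitrary source-paths; the trick is to exploit the layered structure to pick the lowest-level element of $P \intersectionSP U$ and observe that the prefix of $P$ ending there is itself a valid source-path, so path-truncation (which is legal because source-paths are only required to start at a source, not to end anywhere specific) manufactures a witness to essentiality on demand. Everything else is either a straightforward consequence of the definition or follows by the short containment argument above.
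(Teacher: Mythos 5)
Your proof is correct and takes a genuinely different route from the paper's. The paper constructs $U^*$ inductively, bottom-up layer by layer: at each level it adds those vertices of $U$ on that level not already hidden by the partial set built from the lower levels, maintaining an invariant that the partial set hides exactly what $U$ restricted to those levels hides, and tightness and the containment property are then read off from this invariant. You instead give a non-constructive, closed-form definition of $U^*$ as the set of ``essential'' vertices and verify the three properties directly. Your definition is in fact precisely what the paper states as a \emph{remark after} its own proof --- that $U^*$ consists exactly of the $u \in U$ with $u \notin \hiddenvertices{U \setminus \set{u}}$ --- but you elevate it to the definition and work from there. The decisive step, showing $\hiddenvertices{U} \subseteq \hiddenvertices{U^*}$, then reduces to the clean truncation device: cut any blocked source-path at the first vertex of $U$ it meets, and the resulting prefix witnesses essentiality of that vertex. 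Both routes produce the same set; the paper's reads like an algorithm for computing it, yours like a direct structural characterization. A small bonus of your argument is that you actually establish something slightly stronger than the statement requires: $U^*$ is the unique \emph{tight} subset of $U$ with the same hidden vertices, not merely the unique one satisfying the containment property.
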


\newcommand{\utighti}[1][i]{U^*_{#1}}
\newcommand{\uremainingi}[1][i]{U^r_{#1}}
\newcommand{\ufirstlevel}{L}
\newcommand{\umaximallevel}{M}

\begin{proof}
We construct the set $U^*$ bottom-up, layer by layer.
We will let
$\utighti$
be the set of vertices on level~$i$ or lower
in the \tightklawe \hidingsetklawe under construction, and
$\uremainingi$
be the set of vertices in $U$
strictly above level$~i$ remaining to be \hiddenklawe.

Let 
$\ufirstlevel = \vminlevel{U}$.
For 
$i < \ufirstlevel$,
we define
$\utighti = \emptyset$.
Clearly, all vertices on level $\ufirstlevel$ in $U$ must be present also 
in~$U^*$,
since no vertices in
$\vertstrictlyabovelevel{U}{\ufirstlevel}$
can \hideklawe these vertices and vertices on the same level cannot
help \hidingklawe each other. Set
$
\utighti[\ufirstlevel] = 
\vertonlevel{U}{\ufirstlevel}
=
U \setminus \vertstrictlyabovelevel{U}{\ufirstlevel}
$.
Now we can remove from $U$ all vertices \hiddenklawe by
$\utighti[\ufirstlevel]$,
so set
$\uremainingi[\ufirstlevel] = 
U \setminus \hiddenvertices{\utighti[\ufirstlevel]}$.
Note that there are no vertices on or below level~$\ufirstlevel$
left in $\uremainingi[\ufirstlevel]$, \ie
$
\uremainingi[\ufirstlevel] =
\vertstrictlyabovelevel{\uremainingi[\ufirstlevel]}{\ufirstlevel}
$, 
and that
$\utighti[\ufirstlevel]$
\hideklawe{}s the same vertices as does
$\vertbelowlevel{U}{\ufirstlevel}$
(since the two sets are equal).

Inductively, suppose we have constructed the vertex sets
$\utighti[i-1]$
and
$\uremainingi[i-1]$.
Just as above, set
$\utighti[i] = \utighti[i-1] \unionSP \vertonlevel{\uremainingi[i-1]}{i}$
and
$\uremainingi[i] = \uremainingi[i-1] \setminus \hiddenvertices{\utighti[i]}$.
If there are no vertices remaining on level~$i$ to be \hiddenklawe,
\ie if 
$\vertonlevel{\uremainingi[i-1]}{i} = \emptyset$,
nothing happens and we get
$\utighti[i] = \utighti[i-1]$
and
$\uremainingi[i] = \uremainingi[i-1]$.
Otherwise the vertices on level~$i$ in $\uremainingi[i-1]$
are added to $\utighti[i]$
and all of these vertices, as well as any vertices above in  
$\uremainingi[i-1]$
now being \hiddenklawe, are removed  from $\uremainingi[i-1]$ resulting
in a smaller set~$\uremainingi[i]$.

To conclude, we set
$U^* = \utighti[\umaximallevel]$
for
$\umaximallevel = \vmaxlevel{U}$.
By construction, the invariant
\begin{equation}
  \label{eq:invariant-for-lemma-covering-set-is-tight}
  \hiddenvertices{\utighti[i]}
  =
  \hiddenvertices{\vertbelowlevel{U}{i}}
\end{equation}
holds for all levels~$i$.
Thus,
$
\hiddenvertices{U^*}
=
\hiddenvertices{U}
$.
Also, 
$U^*$ must be \tightklawe{}
since if
$v \in U^*$ 
and 
$\vlevel{v} = i$,
by construction
$
{\vertstrictlybelowlevel{U^*}{i}}
$
does not \hideklawe{}~$v$, and (as was argued above) neither does
$\vertabovelevel{U^*}{i} \setminus \set{v}$.
Finally, suppose that
$U' \subseteq U$ is a \hidingsetklawe for $U$
with
$U^* \nsubseteq U'$.
Consider
$v \in U^* \setminus U'$
and suppose
\mbox{$\vlevel{v} = i$}.
On the one hand, we have
$v \notin \hiddenvertices{\utighti[i-1]}$
by construction. On the other hand, by assumption it holds that
$v \in \hiddenvertices{\vertstrictlybelowlevel{U'}{i}}$
and thus
$v \in 
\hiddenvertices{\vertstrictlybelowlevel{U}{i}}
$.
But then by the invariant~%
\refeq{eq:invariant-for-lemma-covering-set-is-tight}
we know that
$v \in \hiddenvertices{\utighti[i-1]}$,
which yields a contradiction. Hence,
$U^* \subseteq U'$
and the lemma follows.
\end{proof}

We remark that $U^*$ can in fact be seen to
contain exactly those elements $u \in U$ such that
$u$ is not \hiddenklawe by $U \setminus \set{u}$.

It follows 
from \reflem{lem:covering-set-is-tight}
that if
$U$
is a \minmeasure \hidingsetklawe{} for
$\pconf = (B,W)$,
we can assume \wolog that
$U \unionSP W$ is 
\tightklawe{}.
More formally, if
$U \unionSP W$
is not
\tightklawe{},
we can consider minimal subsets
$U' \subseteq U$ and $W' \subseteq W$
\st
$U' \unionSP W'$ \hideklawe{}s $B$ and
is \tightklawe{}, and prove the \klawepropacronym for
$B$ and $W'$ \wrt this $U'$ instead.
Then clearly the \klawepropacronym holds also for~$B$ and~$W$. 

Suppose that we have a set $U$ that together with $W$ \hideklawe{}s
$B$. Suppose furthermore that $B$ contains vertices very far apart in
the graph. Then it might very well be the case that
$U \unionSP W$ 
can be split into a number of disjoint subsets
$U_i \unionSP W_i$ 
responsible for \hidingklawe different parts $B_i$ of~$B$,
but which are wholly independent of one another.
Let us give an example of this.

\begin{figure}[tp]
  \centering
  \subfigure[\Hidingsetklawe
  $U$ 
  with large size and measure.
  ]
  {
    \label{fig:hidingsets-a}
    \begin{minipage}[b]{.48\linewidth}
      \centering
      \includegraphics{hidingsets.1}%
    \end{minipage}
  }
  \hfill
  \subfigure[Smaller \hidingsetklawe 
  $U^*$ 
  with smaller measure.]
  {
    \label{fig:hidingsets-b}
    \begin{minipage}[b]{.48\linewidth}
      \centering
      \includegraphics{hidingsets.2}%
    \end{minipage}
  }
%
%
%
%
  \caption{Illustration of \hidingsetklawe{}s
    in
    \refex{ex:why-we-need-connected-components}
    (with vertices in \hidingsetklawe{}s cross-marked).}
  \label{fig:hidingsets}
\end{figure}

\begin{example}
  \label{ex:why-we-need-connected-components}
  Suppose  we have the pebble configuration
  $
  (B,W) 
  =
  (\set{x_1, y_1, v_5}, \set{w_3, s_6, s_7})
  $
  and the \hidingsetklawe
  $U = \set{v_1, u_2, u_3, v_3, s_5}$
  in
  \reffig{fig:hidingsets-a}.
  Then   $U \unionSP W$  \hideklawe{}s $B$, but  $U$
  seems unnecessarily large. 
  To get a better \hidingsetklawe $U^*$, we can leave $s_5$
  responsible for   \hidingklawe{}~$v_5$   but replace
  $\set{v_1, u_2, u_3, v_3}$
  by
  $\set{x_1, y_1}$.
  The resulting set
  $U^* = \set{x_1, y_1, s_5}$
  in
  \reffig{fig:hidingsets-b}
  has both smaller size and smaller measure
  (we leave the straightforward verification of this fact to the reader). 

  Intuitively, it seems that the configuration can be
  split in two components, namely
  $
  (B_1,W_1) =
  (\set{x_1, y_1},  \set{w_3})
  $ 
  with \hidingsetklawe
 $U_1 = \set{v_1, u_2, u_3, v_3}$
  and
  $
  (B_2,W_2) =
  (\set{v_5},\set{s_6, s_7})
  $
  with \hidingsetklawe
  $U_2 = \set{s_5}$,
  and that these two components are independent of one another.
  To improve the \hidingsetklawe~$U$, we need to do something
  locally about the bad
  \hidingsetklawe~$U_1$ in the first component, 
  namely replace it with
  $U^*_1 = \set{x_1, y_1}$,
  but we should keep the locally optimal \hidingsetklawe~$U_2$ in
  the second component.
\end{example}

We want to formalize this understanding of how vertices in
$B$, $W$ and $U$ depend on one another in a
\hidingsetklawe $U \unionSP W$ for~$B$.
The following definition constructs a graph 
that describes the structure of the 
\hidingsetklawe{}s 
that we are studying in terms of these dependencies.

\begin{definition}[\Hidingsetklawe{} graph]
  \label{def:covering-set-graph}
  For a \tightklawe{} (and non-empty) set of vertices $X$ in $G$, the 
  \introduceterm{\hidingsetklawe{} graph}
  $\hidsetgraph = \hidsetgraph(G,X)$
  is
  an undirected graph 
  defined as follows:
  \begin{itemize}
  \item 
    The set of vertices of $\hidsetgraph$ is
    $\vertices{\hidsetgraph}
    = \hiddenvertices{X}$.
    
  \item 
    The set of edges $\edges{\hidsetgraph}$ 
    of $\hidsetgraph$
    consists of all pairs of vertices $(x,y)$
    for  $x,y \in  \hiddenvertices{X}$
    \st 
%
%
    $
    \belowvertices{x} 
    \intersectionSP
    \hiddenvertices{\necessaryhidingvert{X}{x}}
    \intersectionSP
    \belowvertices{y}
    \intersectionSP
    \hiddenvertices{\necessaryhidingvert{X}{y}}
    \!\neq\! 
    \emptyset
    $.
  \end{itemize}
  We say that   
  the vertex set $X$ is 
  \introduceterm{\connectedklawe{}}
  if
  $\hidsetgraph(G,X)$
  is a connected graph.
\end{definition}

When the graph $G$ and vertex set $X$ are clear from context, 
we will sometimes write only $\hidsetgraph(X)$ or even
just~$\hidsetgraph$.   To illustrate
\refdef{def:covering-set-graph}, 
we give an example.

\begin{example}
  \label{ex:hiding-set-graph}
  Consider again the pebble configuration
  $
  (B,W) =
  (\set{x_1, y_1, v_5},\set{w_3, s_6, s_7})
  $
  from
  \refex{ex:why-we-need-connected-components}
  with \hidingsetklawe
  $U = \set{v_1, u_2, u_3, v_3, s_5}$,
  where we have shaded the set of \hiddenklawe vertices in
  \reffig{fig:hidingsetgraph-a}.
  The \hidingsetklawe graph
  $\hidsetgraph(X)$
  for
  $
  X = U \unionSP W  =
  \set{v_1, u_2, u_3, v_3, w_3, s_5, s_6, s_7}
  $
  has been drawn in
  \reffig{fig:hidingsetgraph-b}.
  In accordance with the intuition sketched in
  \refex{ex:why-we-need-connected-components},
  $\hidsetgraph(X)$
  consists of two connected components.

  Note that there are edges from the top vertex $y_1$ in the first
  component to every other vertex in this component and from the top
  vertex $v_5$ to every other vertex in the second 
  component. We will prove presently that this is always the case
  (\reflem{lem:vertex-and-necessary-cover-in-same-component}).
  Perhaps a more interesting edge in $\hidsetgraph(X)$ is, \eg, 
  $(w_1, x_2)$.
  This edge exists since
  $\necessaryhidingvert{X}{w_1} 
  = 
  \set{v_1, u_2, u_3}$
  and
  $\necessaryhidingvert{X}{x_2} = \set{u_2, u_3, v_3, w_3}$
  intersect and since as a consequence of this (which is easily
  verified) we have
  $
  \belowverticespyramid{w_1}
  \intersectionSP
  \hiddenvertices{\necessaryhidingvert{X}{w_1}}
  \intersectionSP
  \belowverticespyramid{x_2}
  \intersectionSP
  \hiddenvertices{\necessaryhidingvert{X}{x_2}}
  \neq \emptyset
  $.
%
%
  For the same reason, there is an edge $(u_5, u_6)$ since
  $\necessaryhidingvert{X}{u_5} = \set{s_5, s_6}$
  and
  $\necessaryhidingvert{X}{u_6} = \set{s_6, s_7}$
  intersect.
\end{example}

\begin{figure}[tp]
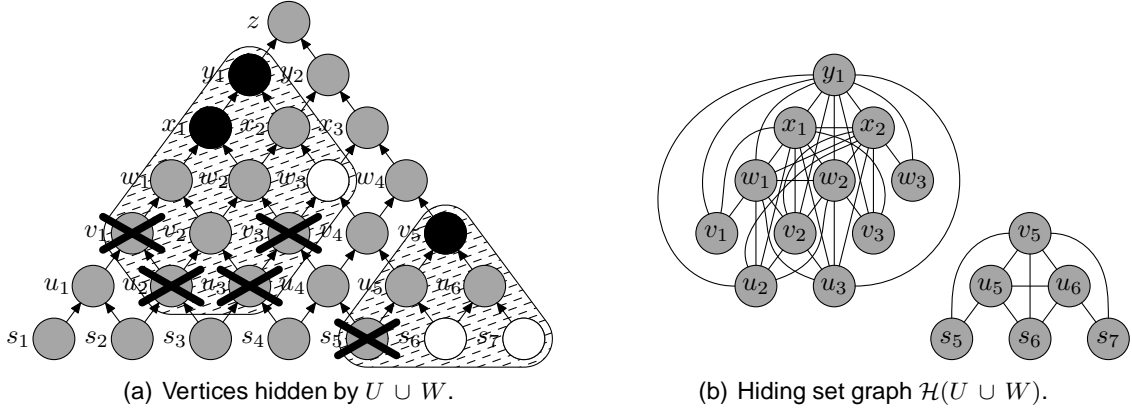

  \centering
  \subfigure[Vertices \hiddenklawe by $U \unionSP W$.]
  {
    \label{fig:hidingsetgraph-a}
    \begin{minipage}[b]{.50\linewidth}
      \centering
      \includegraphics{hidingsets.7}%
    \end{minipage}
  }
  \hfill
  \subfigure[\Hidingsetklawe graph $\hidsetgraph(U \unionSP W)$.]
  {
    \label{fig:hidingsetgraph-b}
    \begin{minipage}[b]{.46\linewidth}
      \centering
      \includegraphics{hidingsets.4}%
    \end{minipage}
  }
  \caption{Pebble configuration with \hidingsetklawe and
    corresponding  \hidingsetklawe graph.}
  \label{fig:hidingsetgraph}
\end{figure}

\begin{lemma}
  \label{lem:vertex-and-necessary-cover-in-same-component}
  Suppose for a \tightklawe vertex set
  $X$
  that
  $x \in \hiddenvertices{X}$
  and
  $y \in \necessaryhidingvert{X}{x}$.
  Then $x$ and $y$ are in the same connected component of
  $\hidsetgraph(X)$.
\end{lemma}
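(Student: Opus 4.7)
The plan is to show something slightly stronger than the lemma claims: whenever the hypotheses hold, either $x=y$ or the pair $(x,y)$ is actually a direct edge of $\hidsetgraph(X)$, so $x$ and $y$ lie in the same connected component trivially. The witness for the edge will simply be the vertex $y$ itself, which I will verify belongs to all four sets appearing in the edge definition of \refdef{def:covering-set-graph}.

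First I would check that $y$ is indeed a vertex of $\hidsetgraph(X)$, i.e.\ that $y \in \hiddenvertices{X}$. Since $y \in \necessaryhidingvert{X}{x} \subseteq X$, every \sourcepath terminating at $y$ must meet $X$ in at least the endpoint $y$, so $y \in \hiddenvertices{\setsmall{y}} \subseteq \hiddenvertices{X}$. If $y=x$ we are done, so assume $y \neq x$.

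Next I would verify that $y$ lies in the intersection
\begin{equation*}
\belowvertices{x}
\intersectionSP \hiddenvertices{\necessaryhidingvert{X}{x}}
\intersectionSP \belowvertices{y}
\intersectionSP \hiddenvertices{\necessaryhidingvert{X}{y}}
\eqperiod
\end{equation*}
By \refdef{def:necessary-subcover}, $y \in \necessaryhidingvert{X}{x}$ means there is a \sourcepath ending in $x$ whose unique intersection with $X$ is $\setsmall{y}$; in particular $y$ lies on a directed path to $x$, so $y \in \belowvertices{x}$. Trivially $y \in \belowvertices{y}$. Since $y \in \necessaryhidingvert{X}{x}$ and any \sourcepath to $y$ must end at $y \in \necessaryhidingvert{X}{x}$, we obtain $y \in \hiddenvertices{\necessaryhidingvert{X}{x}}$.

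For the last set $\hiddenvertices{\necessaryhidingvert{X}{y}}$ I need the hypothesis that $X$ is \tightklawe. Tightness gives $y \notin \hiddenvertices{X \setminus \setsmall{y}}$, so some \sourcepath $P$ to $y$ satisfies $P \intersectionSP X = \setsmall{y}$; this shows $y \in \necessaryhidingvert{X}{y}$, and moreover forces $\necessaryhidingvert{X}{y} = \setsmall{y}$ (any other element of $\necessaryhidingvert{X}{y}$ would give a \sourcepath to $y$ meeting $X$ only outside $\setsmall{y}$, contradicting $y\in X$). Hence $y \in \hiddenvertices{\necessaryhidingvert{X}{y}}$ as well, and the edge $(x,y)$ exists.

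There is essentially no hard obstacle here; the only subtle point is the use of \tightklawe{}ness, which is needed exactly to guarantee $\necessaryhidingvert{X}{y} = \setsmall{y}$ so that the fourth set in the intersection contains~$y$. Without tightness one could have $y \in X$ yet $\necessaryhidingvert{X}{y} = \emptyset$ (as in the remark following \refdef{def:necessary-subcover}), in which case $\hiddenvertices{\necessaryhidingvert{X}{y}} = \hiddenvertices{\emptyset}$ would be empty and the argument would fail.
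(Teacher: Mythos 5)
Your proof is correct and follows essentially the same route as the paper: both show that $(x,y)$ is in fact a direct edge of $\hidsetgraph(X)$, by verifying that $y$ lies in all four sets appearing in the edge condition of \refdef{def:covering-set-graph}, with the key use of tightness being to guarantee $\necessaryhidingvert{X}{y} = \setsmall{y}$. The paper's version compresses your step-by-step membership checks into a single chain of set-identities, but the underlying idea is identical.
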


\begin{proof}
  Note first that
  $x,y \in \hiddenvertices{X}$
  by assumption, so $x$ and $y$ are both vertices
  in~$\hidsetgraph(X)$. 
  Since $x$ is above $y$ we have
  $\belowvertices{x} \supseteq \belowvertices{y}$
  and we get
  $
  \belowvertices{x} 
  \intersectionSP
  \hiddenvertices{\necessaryhidingvert{X}{x}}
  \intersectionSP
  \belowvertices{y} 
  \intersectionSP
  \hiddenvertices{\necessaryhidingvert{X}{y}}
  =
  \hiddenvertices{\necessaryhidingvert{X}{x}}
  \intersectionSP
  \belowvertices{y} 
  \intersectionSP
  \set{y}
  =
  \set{y}
  \neq \emptyset
  $.
  Thus,
  $(x,y)$
  is an edge in $\hidsetgraph(X)$, so $x$ and $y$ are certainly in the
  same connected component.
\end{proof}

\begin{corollary}
  \label{cor:vertex-and-necessary-cover-in-same-component}
  If $X$ is \tightklawe and
  $x \in \hiddenvertices{X}$
  then
  $x$ and all of $\necessaryhidingvert{X}{x}$
  are in the same connected component of $\hidsetgraph(X)$.
\end{corollary}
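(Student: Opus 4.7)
The plan is to derive this corollary as a direct pointwise application of \reflem{lem:vertex-and-necessary-cover-in-same-component}. Since the preceding lemma already handles a single witness $y \in \necessaryhidingvert{X}{x}$ and establishes that $x$ and $y$ lie in the same connected component of $\hidsetgraph(X)$, there is no new graph-theoretic content to derive here; the corollary is really just a bookkeeping statement about extending the conclusion from one $y$ at a time to the whole set $\necessaryhidingvert{X}{x}$.

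Concretely, I would first observe that we may assume $\necessaryhidingvert{X}{x} \neq \emptyset$, since otherwise the statement is vacuous (there is nothing to be in the component of $x$). Then, for an arbitrary $y \in \necessaryhidingvert{X}{x}$, I would invoke \reflem{lem:vertex-and-necessary-cover-in-same-component} with the hypotheses that $X$ is \tightklawe and $x \in \hiddenvertices{X}$; both hypotheses are inherited directly from the corollary. The conclusion of that lemma is precisely that the edge $(x,y)$ exists in $\hidsetgraph(X)$ (or in any case that $x$ and $y$ share a connected component). Since this holds for every $y \in \necessaryhidingvert{X}{x}$, and connectedness is transitive through the common vertex $x$, all such $y$ together with $x$ lie in a single connected component of $\hidsetgraph(X)$.

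There is no real obstacle here: the only small point worth mentioning is verifying that every $y \in \necessaryhidingvert{X}{x}$ is actually a vertex of $\hidsetgraph(X)$, i.e.\ that $y \in \hiddenvertices{X}$. This is immediate from \reflem{lem:klawe-lemma-three-one}, which tells us that $\necessaryhidingvert{X}{x} \subseteq X \subseteq \hiddenvertices{X}$ (every vertex of a \tightklawe set trivially hides itself), so the invocation of \reflem{lem:vertex-and-necessary-cover-in-same-component} is legitimate. Thus the proof is essentially a one-line deduction from the preceding lemma, and I would write it in precisely that concise form.
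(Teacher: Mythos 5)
Your proposal is correct, and it is exactly the (implicit) reasoning behind the corollary: the paper states it without proof precisely because it is the pointwise-then-transitivity consequence of \reflem{lem:vertex-and-necessary-cover-in-same-component} that you describe. Your side remark that $\necessaryhidingvert{X}{x} \subseteq X \subseteq \hiddenvertices{X}$ (so each $y$ really is a vertex of $\hidsetgraph(X)$) is accurate and is the only bookkeeping point worth noting.
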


The next lemma says that if $\hidsetgraph(X)$ is a \hidingsetklawe
graph with vertex set 
$V = \hiddenvertices{X}$,
then the connected components
$V_1, \ldots, V_k$
of $\hidsetgraph(X)$ are themselves \hidingsetklawe graphs defined
over the  \connectedklawe subsets 
$X \intersectionSP V_1, \ldots, X \intersectionSP V_k$.

\begin{lemma}[Lemma 3.3 in
\cite{K80TightBoundPebblesPyramid}]
  \label{lem:klawe-lemma-three-three}
  Let $X$ be a \tightklawe set and let $V_i$ be one of the connected
  components in $\hidsetgraph(X)$.
  Then the subgraph of $\hidsetgraph(X)$ induced by $V_i$ is identical
  to the \hidingsetklawe graph
  $\hidsetgraph(X \intersectionSP V_i)$
  defined on the vertex subset
  $X \intersectionSP V_i$.
  In particular, it holds that
  $V_i = \hiddenvertices{X \intersectionSP V_i}$.
\end{lemma}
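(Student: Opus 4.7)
The plan is to prove the vertex equality $V_i = \hiddenvertices{X \intersectionSP V_i}$ and the edge equality separately, in each case reducing everything to the minimality characterization of $\necessaryhidingvert{X}{x}$ provided by \reflem{lem:klawe-lemma-three-one} together with \refcor{cor:vertex-and-necessary-cover-in-same-component}. As a preliminary, I would first verify that $X \intersectionSP V_i$ is itself \tightklawe: if some $v \in X \intersectionSP V_i$ were \hiddenklawe by $(X \intersectionSP V_i) \setminus \set{v}$, then since hiding is monotone in the hiding set and $(X \intersectionSP V_i) \setminus \set{v} \subseteq X \setminus \set{v}$, this would contradict the tightness of $X$. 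This justifies invoking \reflem{lem:klawe-lemma-three-one} for $X \intersectionSP V_i$ as well.

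For the vertex equality, the inclusion $V_i \subseteq \hiddenvertices{X \intersectionSP V_i}$ follows by picking any $x \in V_i$ and observing via \refcor{cor:vertex-and-necessary-cover-in-same-component} that $\necessaryhidingvert{X}{x} \subseteq V_i$; since this set is already a subset of $X$ by definition, it lies in $X \intersectionSP V_i$ and \hideklawe{}s $x$ by \reflem{lem:klawe-lemma-three-one}. Conversely, any $x \in \hiddenvertices{X \intersectionSP V_i}$ is in particular an element of $\hiddenvertices{X}$, hence a vertex of $\hidsetgraph(X)$. The minimality clause of \reflem{lem:klawe-lemma-three-one} forces $\necessaryhidingvert{X}{x} \subseteq X \intersectionSP V_i \subseteq V_i$, and then \refcor{cor:vertex-and-necessary-cover-in-same-component} places $x$ in the same connected component of $\hidsetgraph(X)$ as the nonempty set $\necessaryhidingvert{X}{x}$, pinning $x$ down to $V_i$.

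For the edge equality, the central claim is that $\necessaryhidingvert{X}{x} = \necessaryhidingvert{X \intersectionSP V_i}{x}$ for every $x \in V_i$. Both of these sets are subsets of $X \intersectionSP V_i$ that \hideklawe $x$ (using the vertex equality just proved to place $\necessaryhidingvert{X}{x}$ inside $X \intersectionSP V_i$), so applying the minimality clause of \reflem{lem:klawe-lemma-three-one} once for the tight set $X$ and once for the tight set $X \intersectionSP V_i$ yields the two reverse inclusions, giving equality. Once these necessary hiding subsets coincide, the defining edge condition from \refdef{def:covering-set-graph} is literally the same expression for the two graphs, so the edge sets match and the graphs are identical. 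The step I expect to be most in need of care is precisely this identification of $\necessaryhidingvert{X}{x}$ and $\necessaryhidingvert{X \intersectionSP V_i}{x}$: a direct path-chasing proof that vertices of $X \setminus V_i$ cannot interfere with source paths to $x \in V_i$ turns out to be surprisingly awkward, but the two-way application of \reflem{lem:klawe-lemma-three-one} sidesteps the combinatorics altogether.
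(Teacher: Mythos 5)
Your proof is correct and relies on the same two tools the paper uses (Lemma~\ref{lem:klawe-lemma-three-one} and Corollary~\ref{cor:vertex-and-necessary-cover-in-same-component}), so the approach is essentially the paper's. The one organizational difference is that you prove $V_i = \hiddenvertices{X \intersectionSP V_i}$ cleanly first---getting $\necessaryhidingvert{X}{x} \subseteq X \intersectionSP V_i$ in the reverse direction from a single application of the minimality clause of Lemma~\ref{lem:klawe-lemma-three-one}---and only afterwards establish $\necessaryhidingvert{X}{x} = \necessaryhidingvert{X \intersectionSP V_i}{x}$ for the edges, whereas the paper folds the latter equality into the reverse inclusion via a slightly more roundabout triple application of the lemma; your ordering is a bit tidier, and you also spell out the tightness of $X \intersectionSP V_i$, which the paper asserts without proof.
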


\begin{proof}
  We need to show that
  $V_i = \hiddenvertices{X \intersectionSP V_i}$
  and that the edges of
  $\hidsetgraph(X)$  in $V_i$ are exactly the edges in
  $\hidsetgraph(X \intersectionSP V_i)$.
  Let us first show that
  $y \in V_i$
  \ifaoif
  $y \in \hiddenvertices{X \intersectionSP V_i}$.

  ($\Rightarrow$)
  Suppose $y \in V_i$.
  Then
  $\necessaryhidingvert{X}{y} \subseteq V_i$
  by
  \refcor{cor:vertex-and-necessary-cover-in-same-component}.
  Also,
  $\necessaryhidingvert{X}{y} \subseteq X$
  by definition,
  so
  $\necessaryhidingvert{X}{y} \subseteq X \intersectionSP V_i$.
  Since
  $y \in \hiddenvertices{\necessaryhidingvert{X}{y}}$
  by
  \reflem{lem:klawe-lemma-three-one},
  clearly
  $y \in \hiddenvertices{X \intersectionSP V_i}$.
  
  ($\Leftarrow$)
  Suppose
  $y \in \hiddenvertices{X \intersectionSP V_i}$.
  Since $X$ is \tightklawe, its subset
  $X \intersectionSP V_i$
  must be \tightklawe as well.
  Applying 
  \reflem{lem:klawe-lemma-three-one}
  twice, we deduce that
  $\necessaryhidingvert{(X \intersectionSP V_i)}{y}$
  \hideklawe{}s $y$ and that
  $
  \necessaryhidingvert{X}{y} 
  \subseteq
  \necessaryhidingvert{(X \intersectionSP V_i)}{y}
  $
  since $\necessaryhidingvert{X}{y}$ is contained in any subset of $X$
  that \hideklawe{}s $y$.
  But then a third appeal to   \reflem{lem:klawe-lemma-three-one}
  yields that 
  $
  \necessaryhidingvert{(X \intersectionSP V_i)}{y}
  \subseteq
  \necessaryhidingvert{X}{y} 
  $
  since
  $
  \necessaryhidingvert{X}{y} 
  \subseteq
  \necessaryhidingvert{(X \intersectionSP V_i)}{y}
  \subseteq
  X \intersectionSP V_i
  $
  and consequently
  \begin{equation}
    \label{eq:necessary-hiding-subset-equalities}
    \necessaryhidingvert{X}{y} 
    =
    \necessaryhidingvert{(X \intersectionSP V_i)}{y}
    \eqperiod 
  \end{equation}
  By
  \refcor{cor:vertex-and-necessary-cover-in-same-component},
  $y$ and all of 
  $
  \necessaryhidingvert{(X \intersectionSP V_i)}{y}
  =
  \necessaryhidingvert{X}{y}
  $ 
  are in the same connected component.
  Since
  $\necessaryhidingvert{X}{y} \subseteq V_i$
  it follows that
  $y \in V_i$.
  
  This shows that
  $V_i = \hiddenvertices{X \intersectionSP V_i}$.
  Plugging
  \refeq{eq:necessary-hiding-subset-equalities}
  into
  \refdef{def:covering-set-graph},
  we see that $(x,y)$ is an edge in
  $\hidsetgraph(X)$  for $x, y \in V_i$ \ifaoif
  $(x,y)$ is an edge in  $\hidsetgraph(X \intersectionSP V_i)$.
\end{proof}

Now we are in a position to describe the structure of the proof that
pyramid graphs have the \klawepropacronym{}.

\begin{theorem}[Analogue of Theorem 3.7 in
  \cite{K80TightBoundPebblesPyramid}]
  \label{th:pyramids-have-klawe-property}
  Let
  $\pconf = (B,W)$ 
  be any black-white pebble configuration on a pyramid~%
  $\pyramidgraph$.
  Then there is a vertex set $U$ \st
  $U \unionSP W$ \hideklawe{}s $B$,
  $\vpotential[\pyramidgraph]{\pconf} =
  {\meastopot{U}}$
  and
  either
  $U = B$ or 
  $\setsize{U} < 
  \setsize{B} + \setsize{W}
  $.
\end{theorem}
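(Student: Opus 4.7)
The plan is to reduce the theorem to a local statement about connected components of the hiding-set graph, following the roadmap sketched before the theorem. Start with any minimum-measure hiding set for $\pconf = (B,W)$; by \reflem{lem:covering-set-is-tight} applied to its union with $W$ we may pass to its tight subset without increasing any partial measure, so assume the working set $U$ satisfies $U \unionSP W$ tight and $\vmeasure{U} = \vpotential{\pconf}$. Form $\hidsetgraph(U \unionSP W)$ and let $V_1, \ldots, V_k$ be its connected components.

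By \reflem{lem:klawe-lemma-three-three}, each $V_i$ is hidden by the tight restriction $(U \unionSP W) \intersectionSP V_i$, and the vertices of $B$ split as $B_i := B \intersectionSP V_i$ supported by $U_i := U \intersectionSP V_i$ and $W_i := W \intersectionSP V_i$. The next step is a \emph{local} Klawe property: for every component there should exist a replacement $U_i^\star \subseteq V_i$ such that $U_i^\star \unionSP W_i$ still hides $B_i$, the inequality $|\vertabovelevel{U_i^\star}{j}| \leq |\vertabovelevel{U_i}{j}|$ holds at every level $j$, and either $U_i^\star = B_i$ or $|U_i^\star| < |B_i| + |W_i|$. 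Assuming this, set $U^\star = \bigcup_i U_i^\star$. Vertex-disjointness of the components gives $|\vertabovelevel{U^\star}{j}| = \sum_i |\vertabovelevel{U_i^\star}{j}| \leq \sum_i |\vertabovelevel{U_i}{j}| = |\vertabovelevel{U}{j}|$ for every $j$, so $\vmeasure{U^\star} \leq \vmeasure{U} = \vpotential{\pconf}$ and minimality of the potential forces equality. For the size condition, if the equality case $U_i^\star = B_i$ holds in every component then $U^\star = B$; otherwise at least one strict local inequality combines with $|U_j^\star| \leq |B_j| + |W_j|$ in the remaining components to give $|U^\star| < |B| + |W|$ globally.

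The main obstacle is the local Klawe property itself, which is where the specific geometry of the pyramid has to be exploited. The point is to rule out $U_i$ being wastefully larger than $|B_i| + |W_i|$ unless it can be exchanged outright for $B_i$. My plan is to formalize this via a spreading property of pyramids---roughly, that hiding a set of pairwise unrelated vertices strictly requires more vertices lying below them than the size of the set itself---and then to argue by induction on the height of the top vertex of $V_i$. At each inductive step one either swaps the topmost slice of $U_i$ for the corresponding slice of $B_i$ without increasing any partial measure, or uses the spreading slack to drop at least one vertex from $U_i$ while preserving the hiding property on the component. Verifying that pyramids are spreading will be the technical heart of the remaining argument, and amounts to a careful geometric count in the two-dimensional lattice structure underlying $\pyramidgraph$.
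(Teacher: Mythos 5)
Your high-level plan matches the paper's roadmap exactly: pass to a tight hiding set $X = U \unionSP W$ via \reflem{lem:covering-set-is-tight}, decompose $\hidsetgraph(X)$ into connected components via \reflem{lem:klawe-lemma-three-three}, improve each $U_i$ locally, and reassemble. The flaw is in the local condition you posit. You ask for a replacement $U_i^\star$ with $\setsize{\vertabovelevel{U_i^\star}{j}} \leq \setsize{\vertabovelevel{U_i}{j}}$ \emph{at every level}~$j$, and this level-by-level domination is unachievable. It already fails for the paper's own running example, \refex{ex:why-we-need-connected-components}: take the component with $B_1 = \set{x_1,y_1}$ (levels $4$ and $5$), $W_1 = \set{w_3}$ (level $3$), and $U_1 = \set{v_1,u_2,u_3,v_3}$ (levels $1$ and $2$). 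Since $\vertabovelevel{U_1}{3} = \emptyset$, your condition forces $U_1^\star$ to live entirely on levels $\leq 2$. But $w_3$ blocks no \sourcepath to $x_1$ (its predecessors are $w_1,w_2$), so $U_1^\star$ must hide $x_1$ unaided, and any hiding set of $x_1$ confined to levels $\leq 2$ contains at least three vertices (the three disjoint NE-paths through $v_1,v_2,v_3$), which is already $\setsize{B_1}+\setsize{W_1}$. So no $U_1^\star$ satisfies both your measure condition and your size condition, and the step ``Vertex-disjointness of the components gives\dots so $\vmeasure{U^\star}\leq\vmeasure{U}$'' rests on a hypothesis you cannot supply. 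The spreading property you invoke does not rescue this: spreading gives a level-\emph{shifted} bound, not a level-by-level one.

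Your instinct that the naive local comparison $\vmeasure{U_i^\star}\leq\vmeasure{U_i}$ must be strengthened before it can be summed is correct---that is exactly what \refex{ex:union-of-measures-bad} shows---but you strengthened it too far, into something false. The condition that is both achievable and composable is the relation $U_i^\star \measureleq U_i$ of \refdef{def:union-respecting-leq}: for each $j$ there is some $i\leq j$ (not necessarily $i=j$) with $\vjthmeasure{j}{U_i^\star} \leq \vjthmeasure{i}{U_i}$. The point is that the excess of $U_i^\star$ at a high level is paid for by the large cardinality of $U_i$ at its own bottom level $\vminlevel{U_i\unionSP W_i}$, and that trade is precisely what the spreading inequality~\refeq{eq:spreading-property} furnishes when $j > \vminlevel{U_i\unionSP W_i}$---see \reflem{lem:pick-local-good-blocker}. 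And $\measureleq$ still composes across the disjoint components, which is \reflem{lem:union-respecting-leq}. Replacing your level-by-level demand with $U_i^\star \measureleq U_i$ makes the reassembly step go through, either by substituting one component and deriving a contradiction with the extremal choice of~$U$ (as the paper does), or by substituting all of them and chaining \reflem{lem:union-respecting-leq}, which is closer to what you wrote.
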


The idea is to construct the graph
$\hidsetgraph = \hidsetgraph(\pyramidgraph,U \unionSP W)$,
study the different connected components in~%
$\hidsetgraph$,
find good \hidingsetklawe{}s locally that satisfy the \klawepropacronym 
(which we prove is true for each local
\connectedklawe subset of $U \unionSP W$),
and then add all of these
partial \hidingsetklawe{}s together to get a globally good \hidingsetklawe{}.

Unfortunately, this does not quite work.  Let us nevertheless attempt
to do the proof, note where and why it fails, and then see how Klawe
fixes the broken details.

\begin{proof}[Tentative proof of \refth{th:pyramids-have-klawe-property}]
  Let $U$ be   a set of vertices in $\pyramidgraph$  \st
  $U \unionSP W$ \hideklawe{}s $B$ and 
  $\vpotential{\pconf} = \vmeasure{U}$.
  Suppose that $U$ has minimal size among all such
  sets, and furthermore that among all such
  \minmeasure and
  \minsize sets $U$ has the largest intersection with~$B$.

  Assume \wolog 
  (\reflem{lem:covering-set-is-tight})
  that $U \unionSP W$ is tight, so that we can construct~%
  $\hidsetgraph$.
  Let the connected components of $\hidsetgraph$ be 
  $V_1, \ldots, V_k$.
  For all $i = 1, \ldots, k$, let
  $B_i = B \intersectionSP V_i$,
  $W_i = W \intersectionSP V_i$, and
  $U_i = U \intersectionSP V_i$.
  \Reflem{lem:klawe-lemma-three-three}
  says that
  $U_i \unionSP W_i$ \hideklawe{}s~$B_i$.
  In addition, all $V_i$ are pairwise disjoint, 
  so
  $\setsize{B} = \sum_{i=1}^{k} \setsize{B_i}$,
  $\setsize{W} = \sum_{i=1}^{k} \setsize{W_i}$
  and
  $\setsize{U} = \sum_{i=1}^{k} \setsize{U_i}$.

  Thus, if the \klawepropacronym{}~\ref{property:klawe-property}
  does not hold for $U$ globally, there is some 
  \connectedklawe subset
  $U_i \unionSP W_i$ that \hideklawe{}s $B_i$
  but for which
  $\setsize{U_i} \geq \setsize{B_i} + \setsize{W_i}$
  and
  $U_i \neq B_i$.
  Note that this implies that
  $B_i \nsubseteq U_i$
  since otherwise $U_i$ would not be minimal.

  Suppose that we would know that the \klawepropacronym
  is true for each connected component.
  Then we could find a vertex set
  $U_i^*$
  with
  $U_i^* \subseteq B_i$
  or
  $\Setsize{U_i^*} < \setsize{B_i} + \setsize{W_i}$
  \st
  $U_i^* \unionSP W_i$ \hideklawe{}s $B_i$
  and
  $\Vmeasure{U_i^*} \leq \vmeasure{U_i}$.
  Setting
  $  U^* = (U \setminus U_i) \unionSP U_i^* $,
  we would get a \hidingsetklawe{} with either
  $\setsize{U^*} < \setsize{U}$
  or
  $\setsize{U^* \intersectionSP B} > \setsize{U \intersectionSP B}$.
  The second inequality would hold since if
  $\setsize{U^*} = \setsize{U}$,
  then
  $\Setsize{U^*_i} = \setsize{U_i} \geq
  \setsize{B_i \unionSP W_i}$
  and this would imply
  $U_i^* = B_i$ 
  and thus
  $\Setsize{U_i^* \intersectionSP B_i} > 
  \setsize{U_i \intersectionSP B_i}$. 
  This would contradict how  $U$ was chosen above,
  and we would be home.

  Almost.
  We would also need that $U_i^*$ could be substituted for $U_i$ in
  $U$ without increasing the measure, \ie that
  $  
  \Vmeasure{U_i^*} \leq
  \Vmeasure{U_i}
  $  
  should imply
  $  
  \Vmeasure{(U \setminus U_i) \unionSP U_i^*}
  \leq 
  \Vmeasure{(U \setminus U_i) \unionSP U_i}
  $. 
  And  this turns out not to be true.
\end{proof}

The reason that the proof above does not quite work is that
the measure in 
\refdef{def:measure}
is ill-behaved \wrt unions. Klawe provides the following example of
what can happen.

\begin{example}
  \label{ex:union-of-measures-bad}
  With vertex labels as in
Figures~\ref{fig:pyramid-height-6}
and \mbox{\ref{fig:convergingpaths}--\ref{fig:hidingsetgraph}},
  let
  $X_1 = \set{s_1, s_2}$,
  $X_2 = \set{w_1}$
  and
  $X_3 = \set{s_3}$.
  Then
  $ \vmeasure{X_1} = 4$
  and
  $\vmeasure{X_2} = 5$
  but taking unions with $X_3$ we get that
  $\vmeasure{X_1 \unionSP X_3} = 6$
  and
  $\vmeasure{X_2 \unionSP X_3} = 5$.
  Thus
  $\vmeasure{X_1} < \vmeasure{X_2}$
  but
  $\vmeasure{X_1 \unionSP X_3} > \vmeasure{X_2 \unionSP X_3}$.
\end{example}

So it is not enough to show the \klawepropacronym locally for each connected
component in the graph. We also need that sets
$U_i$ from different components can be combined into a global 
\hidingsetklawe{} while maintaining measure inequalities.
This leads to the following strengthened condition for 
connected components of~$\hidsetgraph$.

\begin{property}[\Localklaweprop{}]
  \label{property:local-klawe-property}
  We say that the pebble configuration
  $\pconf = (B,W)$ 
  has the
  \introduceterm{\localklaweprop{}},
  or just the 
  \introduceterm{\localklawepropacronym{}} 
  for short,
  if  for any vertex set $U$ \st 
  $U \unionSP W$ \hideklawe{}s $B$
  and
  is \connectedklawe,
  we can find a vertex set $U^*$
  \st
  \begin{enumerate}
  \item
    $U^*$ is a \hidingsetklawe for $(B,W)$,
    \item
    for any vertex set
    $Y$
    with
    $Y \intersectionSP U = \emptyset$
    it holds that
    $
    \Vmeasure{Y \unionSP U^*}
    \leq
    \vmeasure{Y \unionSP U}
    $,
  \item
    $U^* \subseteq B$ or
    $\Setsize{U^*} < \setsize{B} + \setsize{W}$.
  \end{enumerate}
  We say that the graph $G$ has the 
  \localklawepropacronym{}
  if all black-white pebble configurations
  $\pconf = (B,W)$ on $G$ do.
\end{property}

Note that if the \localklawepropacronym holds, this in particular
implies that  
$\Vmeasure{U^*} \leq \vmeasure{U}$
(just choose $Y = \emptyset$).
Also, we immediately get that the
\klawepropacronym holds globally.

\begin{lemma}
  \label{lem:lkp-implies-global-klawe-property}
  If $G$ has the 
  \localklaweprop~\ref{property:local-klawe-property}, 
  then $G$ has the 
  \klaweprop~\ref{property:klawe-property}.
\end{lemma}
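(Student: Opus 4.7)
The plan is to pick a \minmeasure \hidingsetklawe $U$ for $\pconf = (B,W)$, decompose $U \unionSP W$ into the connected components of its associated \hidingsetklawe graph, apply the \localklaweprop locally to each component, and then glue the local hiding sets back together. Concretely, I would start by fixing any $U$ with $\vmeasure{U} = \vpotential[G]{\pconf}$ and (using \reflem{lem:covering-set-is-tight}) assume \wolog that $U \unionSP W$ is \tightklawe, so that the \hidingsetklawe graph $\hidsetgraph = \hidsetgraph(G,U \unionSP W)$ is well defined. Let $V_1, \ldots, V_k$ be its connected components, and set $U_i = U \intersectionSP V_i$, $W_i = W \intersectionSP V_i$, $B_i = B \intersectionSP V_i$. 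By \reflem{lem:klawe-lemma-three-three}, each $U_i \unionSP W_i$ is a \connectedklawe \hidingsetklawe for $(B_i, W_i)$, so by hypothesis the \localklaweprop furnishes a replacement $U^*_i$ in each component. I would then define $U^* = \Union_{i=1}^{k} U^*_i$ and verify that $U^*$ witnesses the global \klaweprop for $(B,W)$.

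The first thing to check is that $U^*$ is a \hidingsetklawe for $(B,W)$, which is immediate since $U^*_i \unionSP W_i$ \hideklawe{}s $B_i$ for each $i$ and the components partition $B$ and $W$. Next I would establish $\vmeasure{U^*} \leq \vmeasure{U}$ by iterated substitution: put $U^{(0)} = U$ and $U^{(j)} = (U^{(j-1)} \setminus U_j) \unionSP U^*_j$, and apply condition~2 of the \localklaweprop at each step with $Y = U^{(j-1)} \setminus U_j$, so that $\vmeasure{U^{(j)}} \leq \vmeasure{U^{(j-1)}}$. Since $\vmeasure{U}$ is already minimum, this forces $\vmeasure{U^*} = \vpotential[G]{\pconf}$, giving requirement~2 of the global \klaweprop, while requirement~1 is just the hiding property above.

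For requirement~3 of \refproperty{property:klawe-property}, I would do a clean case split using that condition~3 of the \localklaweprop gives, for every $i$, either $U^*_i \subseteq B_i$ (and hence $\setsize{U^*_i} \leq \setsize{B_i} + \setsize{W_i}$) or $\setsize{U^*_i} < \setsize{B_i} + \setsize{W_i}$. If strict inequality holds for even one component, then summing over all components yields $\setsize{U^*} < \setsize{B} + \setsize{W}$. Otherwise, every component satisfies $U^*_i \subseteq B_i$ with $\setsize{U^*_i} = \setsize{B_i} + \setsize{W_i}$, which (since $U^*_i \subseteq B_i$ gives $\setsize{U^*_i} \leq \setsize{B_i}$) forces $W_i = \emptyset$ and $U^*_i = B_i$ for every $i$, so $W = \emptyset$ and $U^* = B$. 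Either way the global \klaweprop holds.

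The main obstacle I anticipate is the iterated-substitution step in the second paragraph, because applying condition~2 of the \localklaweprop at stage $j$ requires the accumulated set $Y = U^*_1 \unionSP \dotsb \unionSP U^*_{j-1} \unionSP U_{j+1} \unionSP \dotsb \unionSP U_{k}$ to be disjoint from $U_j$. The pieces $U_{j+1}, \ldots, U_{k}$ are disjoint from $U_j$ by construction, but I must also know that each $U^*_i$ lives inside its own component $V_i$ (or at least avoids $U_j$ for $j \neq i$). This is the kind of locality I expect the \localklaweprop to enforce implicitly through the choice of $U^*_i$ in the forthcoming constructions for spreading graphs, but it is the one place where the proof has real content and where care will be needed to ensure that the local replacements genuinely commute.
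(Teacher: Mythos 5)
Your approach is genuinely different from the paper's, and the gap you flag at the end is a real one that the paper's argument is specifically designed to sidestep.

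You proceed by direct construction: take a \minmeasure \hidingsetklawe{} $U$, replace \emph{every} component $U_i$ by its local improvement $U_i^*$ simultaneously, and set $U^* = \Union_i U_i^*$. The paper instead argues by contradiction with a single-component replacement: it picks $U$ extremal in a three-fold lexicographic sense (first minimum measure, then minimum size, then maximum $\setsize{U \intersectionSP B}$), observes that if the global \klawepropacronym{} failed there would be at least one bad component $i$, replaces only that $U_i$ by $U_i^*$, and derives a contradiction with the extremality of $U$. Because only one component is touched, $U^* = (U \setminus U_i) \unionSP U_i^*$ trivially remains a hiding set and condition~2 of \refproperty{property:local-klawe-property} with $Y = U \setminus U_i$ gives $\vmeasure{U^*} \leq \vmeasure{U}$ in one shot---no iteration and no commutation issue.

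Your iterated substitution does correctly yield $\vmeasure{U^{(j)}} \leq \vmeasure{U^{(j-1)}}$ at each step, because $Y = U^{(j-1)} \setminus U_j$ is automatically disjoint from $U_j$ and $U_j \subseteq U^{(j-1)}$ holds by induction (the earlier $U_i$'s you removed are disjoint from $U_j$). The problem is elsewhere: $U^{(k)}$ need not equal $\Union_i U_i^*$. Whenever some $U^*_i$ ($i < j$) meets $U_j$, those shared vertices are deleted when you form $U^{(j)} = (U^{(j-1)} \setminus U_j) \unionSP U^*_j$, and nothing in \refproperty{property:local-klawe-property} forces $U^*_i \subseteq V_i$ or even $U^*_i \intersectionSP U_j = \emptyset$ for $j \neq i$. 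So the set whose measure you control ($U^{(k)}$) may fail to contain all of $U^*_i$ and therefore may fail to be a \hidingsetklawe{} for $(B,W)$, while the set you prove hides ($\Union_i U_i^*$) is not the one for which you have a measure bound. The two objects have to be the same for the argument to close, and you have no way of knowing that. This is exactly the "locality of $U^*_i$" concern you raised, but the paper does not resolve it by proving the locality---it avoids needing it by doing a single replacement and extracting the contradiction from the stronger extremality assumptions on $U$. If you wish to keep your direct construction, you would need to add the hypothesis $U^*_i \subseteq V_i$ (or $U^*_i \intersectionSP U_j = \emptyset$) to \refproperty{property:local-klawe-property} and verify it in the later spreading-graph construction; otherwise, switch to the paper's single-replacement contradiction.
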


\begin{proof}
  Consider the tentative proof of
  \refth{th:pyramids-have-klawe-property}
  and look at the point where it breaks down.
  If we instead use the \localklawepropacronym to find
  $U_i^*$, 
  this time we get that
  $  
  \Vmeasure{U_i^*} \leq
  \Vmeasure{U_i}
  $  
  does indeed imply
  $  
  \Vmeasure{(U \setminus U_i) \unionSP U_i^*}
  \leq 
  \Vmeasure{(U \setminus U_i) \unionSP U_i}
  $,
  and the theorem follows.
\end{proof}

An obvious way to get the inequality
$\vmeasure{Y \unionSP U^*} \leq \vmeasure{Y \unionSP U}$
in
\refproperty{property:local-klawe-property}
would be to require that
$\vjthmeasure{j}{U^*} \leq \vjthmeasure{j}{U}$
for all~$j$, but we need to be slightly more general.
The next definition identifies a sufficient condition for 
sets to behave well under unions with respect to the measure in
\refdef{def:measure}.

\begin{definition}
  \label{def:union-respecting-leq}
  We write
  $U \measureleq V$
  if for all
  $j \geq 0$
  there is an $i\leq j$
  \st
  $
  \vjthmeasure{j}{U}
  \leq
  \vjthmeasure{i}{V}
  $.
\end{definition}

Note 
that it is sufficient to verify the condition in
\refdef{def:union-respecting-leq}
for $j = 1, \ldots, \vmaxlevel{U}$.
For $j > \vmaxlevel{U}$
we get
$\vjthmeasure{j}{U} = 0$
and the inequality trivially holds.

It is immediate that
$U \measureleq V$
implies
$\vmeasure{U} \leq \vmeasure{V}$,
but the relation 
$\measureleq$ 
gives us more information than that. 
Usual inequality
$\vmeasure{U} \leq \vmeasure{V}$
holds \ifaoif for every $j$ we can find an  $i$ \st
$\vjthmeasure{j}{U} \leq \vjthmeasure{i}{V}$,
but in the definition of $\measureleq$
we are restricted to finding such an index $i$ that is less than or
equal to~$j$.
So not only is
$\vmeasure{U} \leq \vmeasure{V}$
globally,
but we can also explain locally at each level, by ``looking
downwards'', why
$U$ has smaller measure than $V$.

In
\refex{ex:union-of-measures-bad},
$X_1 \not \measureleq X_2$ since
the relative cheapness of $X_1$ compared to $X_2$
is explained not by a lot of vertices
in $X_2$ on low levels, 
but by one single high-level, and therefore expensive,
vertex in $X_2$ which is far above $X_1$. This is why these sets
behave badly under union. If we have two sets $X_1$ and $X_2$ with
$X_1 \measureleq X_2$,
however, reversals of measure inequalities when taking unions as in 
\refex{ex:union-of-measures-bad}
can no longer occur.

\begin{lemma}[Lemma 3.4 in
\cite{K80TightBoundPebblesPyramid}]
  \label{lem:union-respecting-leq}
  If
  $U \measureleq V$
  and
  $Y \intersectionSP V =   \emptyset$,
  then
  $\vmeasure{Y \unionSP U} \leq \vmeasure{Y \unionSP V}$.
\end{lemma}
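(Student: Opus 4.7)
The plan is to unfold the definitions and argue levelwise. Concretely, I will fix an arbitrary level $j$ that witnesses $\vmeasure{Y \unionSP U}$, i.e.\ for which $\vjthmeasure{j}{Y \unionSP U}$ attains the maximum, and produce a single level $i \leq j$ such that $\vjthmeasure{j}{Y \unionSP U} \leq \vjthmeasure{i}{Y \unionSP V}$. Taking the maximum over $j$ on the left and bounding by $\vmeasure{Y \unionSP V}$ on the right then yields the lemma.

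The first step is to split into the two trivial cases. If $\vertabovelevel{U}{j} = \emptyset$, then $\vertabovelevel{(Y \unionSP U)}{j} = \vertabovelevel{Y}{j}$, and since $\vertabovelevel{(Y \unionSP V)}{j} \supseteq \vertabovelevel{Y}{j}$ we immediately get $\vjthmeasure{j}{Y \unionSP U} \leq \vjthmeasure{j}{Y \unionSP V}$ (taking $i = j$; the case $\vertabovelevel{Y}{j} = \emptyset$ is absorbed by the convention giving value~$0$). So the interesting case is $\vertabovelevel{U}{j} \neq \emptyset$, in which case $\vjthmeasure{j}{U} > 0$ and the hypothesis $U \measureleq V$ supplies an index $i \leq j$ with $\vjthmeasure{j}{U} \leq \vjthmeasure{i}{V}$; in particular $\vertabovelevel{V}{i} \neq \emptyset$.

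The key computation is then the following chain. Using $|\vertabovelevel{(Y \unionSP U)}{j}| \leq |\vertabovelevel{Y}{j}| + |\vertabovelevel{U}{j}|$ we bound
\begin{equation*}
\vjthmeasure{j}{Y \unionSP U} \leq \vjthmeasure{j}{U} + 2\bigl|\vertabovelevel{Y}{j}\bigr| \leq \vjthmeasure{i}{V} + 2\bigl|\vertabovelevel{Y}{j}\bigr|
\eqcomma
\end{equation*}
and then exploit the disjointness hypothesis $Y \intersectionSP V = \emptyset$ to rewrite $|\vertabovelevel{(Y \unionSP V)}{i}| = |\vertabovelevel{Y}{i}| + |\vertabovelevel{V}{i}|$, yielding $\vjthmeasure{i}{Y \unionSP V} = i + 2|\vertabovelevel{V}{i}| + 2|\vertabovelevel{Y}{i}|$. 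Comparing the two right-hand sides reduces the desired inequality to $|\vertabovelevel{Y}{j}| \leq |\vertabovelevel{Y}{i}|$, which is immediate from $i \leq j$.

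The only place where subtlety can bite is in the accounting of overlaps: if $Y$ and $U$ (or $Y$ and $V$) were allowed to overlap arbitrarily the additive split above could become strict in the wrong direction, so the main thing I will double-check when writing out the details is that the disjointness hypothesis $Y \intersectionSP V = \emptyset$ is used precisely where needed (on the $V$-side, to get equality rather than inequality), while on the $U$-side the one-sided inequality $|\vertabovelevel{(Y \unionSP U)}{j}| \leq |\vertabovelevel{Y}{j}| + |\vertabovelevel{U}{j}|$ is all that is required. No further structural property of the underlying graph enters, so the argument is purely combinatorial over levels.
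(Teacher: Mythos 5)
Your proof is correct and follows essentially the same approach as the paper: pick the index $i \leq j$ supplied by $U \measureleq V$, split the level-$j$ count over $Y \unionSP U$ via subadditivity, use disjointness of $Y$ and $V$ to get exact additivity on the other side, and finish with $\vertabovelevel{Y}{j} \subseteq \vertabovelevel{Y}{i}$. You are a bit more careful than the paper's terse write-up in spelling out the boundary case $\vertabovelevel{U}{j} = \emptyset$ (where $\vjthmeasure{j}{\cdot}$ defaults to $0$ and one must choose $i=j$ directly), but the core computation and the role of each hypothesis are identical.
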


\begin{proof}
  To show that
  $\vmeasure{Y \unionSP U} \leq  \vmeasure{Y \unionSP V}$,
  for each level
  $j = 1, \ldots, \vmaxlevel{Y \unionSP U}$  
  we want to find a level $i$ \st
  $
  \vjthmeasure{j}{Y \unionSP U}
  \leq
  \vjthmeasure{i}{Y \unionSP V}
  $.
  We pick the $i \leq j$  provided by the definition of 
  $U \measureleq V$
  such that
  $
  \vjthmeasure{j}{U}
  \leq
  \vjthmeasure{i}{V}
  $.
  Since
  $V \intersectionSP W = \emptyset$
  and
  $i \leq j$ implies
  $\vertabovelevel{Y}{j} \subseteq \vertabovelevel{Y}{i}$,
  we get
  \begin{multline}
    \vjthmeasure{j}{Y \unionSP U}  
    =
    j 
    + 2 \cdot  
    \setsize{\vertabovelevel{(U \unionSP Y)}{j}}
    \leq
    j 
    + 2 \cdot  \setsize{\vertabovelevel{U}{j}}
    + 2 \cdot  \setsize{\vertabovelevel{Y}{j}}
    \leq
    \\
    i 
    + 2 \cdot   \setsize{\vertabovelevel{V}{i}}
    + 2 \cdot   \setsize{\vertabovelevel{Y}{i}}
    =
    \vjthmeasure{i}{Y \unionSP V}  
  \end{multline}
  and the lemma follows.
\end{proof}

So when locally improving a blocking set $U$ that does not satisfy
the \klawepropacronym to some set $U^*$ that does, if we can take
care that $U^* \measureleq U$ in the sense of
\refdef{def:union-respecting-leq}
we get the \localklawepropacronym.
All that remains is to show that this can indeed be done.

When ``improving'' $U$ to $U^*$, we will strive to pick
\hidingsetklawe{}s of minimal size. The next definition makes this
precise.  

\begin{definition}
  \label{def:size-above-level-blocker}
  For any set of vertices $X$, let
  \begin{equation*}
    \abovelevelblockerminsize{j}{X}
    =
    \minofset
    {\setsize{Y}}
    {\text{$\vertabovelevel{X}{j} \subseteq \hiddenvertices{Y}$
        and
        $\vertabovelevel{Y}{j} = Y$}}
  \end{equation*}
  denote the size of a smallest set $Y$ such that all vertices in $Y$
  are on level $j$ or higher and $Y$ \hideklawe{}s all vertices in $X$
  on level   $j$ or higher.
\end{definition}

Note that we only require of $Y$ to \hideklawe
$\vertabovelevel{X}{j}$ and not all of~$X$.
Given the condition that
$Y = \vertabovelevel{Y}{j}$, this set cannot \hideklawe{} any vertices
in $\vertstrictlybelowlevel{X}{j}$.
We make a few easy observations.

\begin{observation}
  \label{obs:about-min-size-covers}
  Suppose that $X$ is a set of vertices in a layered graph $G$. Then:
  \begin{enumerate}
    \item
      \label{item:about-min-size-covers-one}
      $\abovelevelblockerminsize{0}{X}$
      is the minimal size of any \hidingsetklawe{} for~$X$.
    \item
      \label{item:about-min-size-covers-two}
      If
      $X \subseteq Y$,
      then
      $
      \abovelevelblockerminsize{j}{X}
      \leq
      \abovelevelblockerminsize{j}{Y}
      $
      for all~$j$.
    \item
      \label{item:about-min-size-covers-three}
      It always holds that
      $
      \abovelevelblockerminsize{j}{X}
      \leq
      \setsize{\vertabovelevel{X}{j}}
      \leq
      \setsize{X}
      $.
  \end{enumerate}
\end{observation}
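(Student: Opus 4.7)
The plan is to verify each of the three parts by unpacking \refdef{def:size-above-level-blocker} directly; I do not expect any real obstacle, since the statement is essentially a restatement of notation. The only point that could conceivably need attention is the self-hiding property of $\vertabovelevel{X}{j}$ used in part~(\ref{item:about-min-size-covers-three}), but this too is immediate.

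For part~(\ref{item:about-min-size-covers-one}), I would note that every vertex in $G$ has level at least~$0$, so $\vertabovelevel{X}{0} = X$ and the side condition $\vertabovelevel{Y}{0} = Y$ holds automatically. The defining conditions for $\abovelevelblockerminsize{0}{X}$ then reduce to $X \subseteq \hiddenvertices{Y}$, which is precisely the statement that $Y$ is a \hidingsetklawe{} for~$X$; hence $\abovelevelblockerminsize{0}{X}$ equals the minimum size of such a set.

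For part~(\ref{item:about-min-size-covers-two}), I would let $Y^{*}$ be a set achieving the minimum in $\abovelevelblockerminsize{j}{Y}$, so that $\vertabovelevel{Y}{j} \subseteq \hiddenvertices{Y^{*}}$ and $\vertabovelevel{Y^{*}}{j} = Y^{*}$. Since $X \subseteq Y$ gives $\vertabovelevel{X}{j} \subseteq \vertabovelevel{Y}{j} \subseteq \hiddenvertices{Y^{*}}$, the same set $Y^{*}$ is still a valid candidate for $\abovelevelblockerminsize{j}{X}$, yielding $\abovelevelblockerminsize{j}{X} \leq \setsize{Y^{*}} = \abovelevelblockerminsize{j}{Y}$.

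For part~(\ref{item:about-min-size-covers-three}), I would take $Y = \vertabovelevel{X}{j}$ as an explicit candidate. The equality $\vertabovelevel{Y}{j} = Y$ holds by construction, and every vertex $u \in Y$ is hidden by $Y$ itself, since every \sourcepath{} visiting $u$ must pass through $u$, so the singleton $\setsmall{u} \subseteq Y$ already blocks all such paths (cf.\ \refdef{def:cover}). It follows that $\abovelevelblockerminsize{j}{X} \leq \setsize{\vertabovelevel{X}{j}}$, and the remaining inequality $\setsize{\vertabovelevel{X}{j}} \leq \setsize{X}$ is trivial.
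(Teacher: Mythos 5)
Your proposal is correct and follows the same route as the paper's own proof: part (1) via $\vertabovelevel{V}{0} = V$, part (2) by observing that a witness for $Y$ remains a valid candidate for $X$, and part (3) by taking $\vertabovelevel{X}{j}$ as its own hiding set. Your extra sentence spelling out why a set hides each of its own elements is a harmless elaboration of what the paper leaves implicit.
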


\begin{proof}
  \Refpart{item:about-min-size-covers-one}
  follows from the fact that
  $\vertabovelevel{V}{0} = V$
  for any set $V$.
  If
  $X \subseteq Y$,
  then
  $\vertabovelevel{X}{j} \subseteq \vertabovelevel{Y}{j}$
  and any \hidingsetklawe{} for
  $\vertabovelevel{X}{j}$ works also for $\vertabovelevel{Y}{j}$,
  which yields
  \refpart{item:about-min-size-covers-two}.  
  \Refpart{item:about-min-size-covers-three}
  holds since
  $\vertabovelevel{X}{j} \subseteq X$
  is always a possible \hidingsetklawe{} for itself.
\end{proof}

For any vertex set $V$ in any layered graph $G$,
we can always find a set \hidingklawe{} $V$ that
has ``minimal cardinality at each level''
in the sense of
\refdef{def:size-above-level-blocker}.

\begin{lemma}[Lemma 3.5 in
\cite{K80TightBoundPebblesPyramid}]
  \label{lem:klawe-lemma-three-five}
  For any vertex set $V$
  we can find a \hidingsetklawe{} $V^*$
  \st
  $
  \Setsize{\vertabovelevel{V^*}{j}}
  \leq
  \abovelevelblockerminsize{j}{V}
  $
  for all $j$,
  and either
  $V^* = V$
  or
  $\setsize{V^*} <   \setsize{V}$.
\end{lemma}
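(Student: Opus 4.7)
The plan is to let $V^*$ be any hiding set for $V$ of minimum cardinality, i.e., any set with $V \subseteq \hiddenvertices{V^*}$ and $\setsize{V^*} = \abovelevelblockerminsize{0}{V}$. Such a $V^*$ exists with $\setsize{V^*} \leq \setsize{V}$ since $V$ trivially hides itself: every source-path $P$ visiting some $v \in V$ contains $v$, which lies in $V$, so $V$ blocks $P$. This immediately splits the argument into two cases: if $\setsize{V} = \abovelevelblockerminsize{0}{V}$, take $V^* = V$; otherwise, any minimum-size witness satisfies $\setsize{V^*} < \setsize{V}$. Either way, the ``$V^* = V$ or $\setsize{V^*} < \setsize{V}$'' dichotomy is handled.

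The main step is to show that every such minimum-size hiding set $V^*$ automatically satisfies the per-level inequality $\setsize{\vertabovelevel{V^*}{j}} \leq \abovelevelblockerminsize{j}{V}$ for all $j \geq 0$. Assume toward a contradiction that $\setsize{\vertabovelevel{V^*}{j}} > \abovelevelblockerminsize{j}{V}$ for some~$j$, and let $Y$ achieve the minimum on the right-hand side, so $\vertabovelevel{Y}{j} = Y$, the set $Y$ hides $\vertabovelevel{V}{j}$, and $\setsize{Y} = \abovelevelblockerminsize{j}{V}$. I then form the candidate $V^{**} = Y \unionSP \vertstrictlybelowlevel{V^*}{j}$ and aim to show that it still hides $V$ while being strictly smaller than $V^*$, which will contradict the minimality of $V^*$.

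The crux is a level-monotonicity property of hiding: in a layered graph, a set consisting entirely of vertices at level $\geq j$ cannot hide any vertex $w$ with $\vlevel{w} < j$. The reason is that one can walk backward from $w$ along edges (available because every non-source has indegree~$2$ and every source sits at level~$0$) to obtain a source-path that ends at~$w$, and this path lies entirely on levels $0, 1, \ldots, \vlevel{w}$; no set supported on levels $\geq j$ can intersect, let alone block, this particular path. Applying this to $V^*$ itself: since $V^*$ hides $\vertstrictlybelowlevel{V}{j}$ but $\vertabovelevel{V^*}{j}$ cannot contribute to hiding anything on lower levels, the piece $\vertstrictlybelowlevel{V^*}{j}$ on its own already hides $\vertstrictlybelowlevel{V}{j}$. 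Together with $Y$ hiding $\vertabovelevel{V}{j}$, this shows $V^{**}$ hides all of~$V$, while the estimate $\setsize{V^{**}} \leq \setsize{Y} + \setsize{\vertstrictlybelowlevel{V^*}{j}} < \setsize{\vertabovelevel{V^*}{j}} + \setsize{\vertstrictlybelowlevel{V^*}{j}} = \setsize{V^*}$ delivers the desired contradiction.

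The main obstacle is precisely this level-monotonicity of hiding; once it is pinned down, everything else is a clean swap-and-count argument driven by the minimality of~$V^*$. I note as a sanity check that in the case $V^* = V$, both sides of the per-level inequality collapse to $\setsize{\vertabovelevel{V}{j}}$ by part~\ref{item:about-min-size-covers-three} of \refobs{obs:about-min-size-covers}, so equality holds automatically at every level.
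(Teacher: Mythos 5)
Your proof is correct and takes essentially the same approach as the paper: choose a minimum-cardinality hiding set $V^*$, then establish the per-level inequality by a swap-and-count argument driven by the minimality of $V^*$ together with the observation that vertices at level $\geq j$ cannot participate in hiding vertices strictly below level~$j$ (the paper reaches the same min-size $V^*$ via a slightly more roundabout route, first locating a minimal bad level $k$ and then showing $k = 0$). The only quibble is the closing ``sanity check'': \refpart{item:about-min-size-covers-three} of \refobs{obs:about-min-size-covers} gives only $\abovelevelblockerminsize{j}{V} \leq \setsize{\vertabovelevel{V}{j}}$ and not the reverse inequality needed for the ``collapse'' you assert there, but this is harmless since your main contradiction argument already covers the case $V^* = V$.
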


\begin{proof}
  If
  $
  \setsize{\vertabovelevel{V}{j}}
  \leq
  \abovelevelblockerminsize{j}{V}
  $
  for all $j$,
  we can choose
  $V^* = V$.
  Suppose this is not the case, and let $k$ be minimal \st
  $
  \setsize{\vertabovelevel{V}{k}}
  >
  \abovelevelblockerminsize{k}{V}
  $.
  Let
  $V'$
  be a \minsize \hidingsetklawe{} for
  $\vertabovelevel{V}{k}$
  with
  $V' = \vertabovelevel{V'}{k}$
  and
  $\Setsize{V'} = 
  \setsize{\abovelevelblockerminsize{k}{V}}$
  and set
  $
  V^* = 
  \vertstrictlybelowlevel{V}{k} \disjointunionSP V'$.
  Since
  $\vertstrictlybelowlevel{V}{k}$ 
  \hideklawe{}s itself (any set does),
  we have that 
  $V^*$ \hideklawe{}s 
  $V
  =
  \vertstrictlybelowlevel{V}{k}
  \disjointunionSP
  \vertabovelevel{V}{k}
  $ 
  and that
  \begin{equation}
    \label{eq:size-of-smaller-cover}
    \Setsize{V^*}
    =
    \setsize{\vertstrictlybelowlevel{V}{k}} 
    +
    \setsize{V'}
    <
    \setsize{\vertstrictlybelowlevel{V}{k}} 
    +
    \setsize{\vertabovelevel{V}{k}}
    =
    \setsize{V}
    \eqperiod
  \end{equation}
  Combining
  \refeq{eq:size-of-smaller-cover}
  with
  \refpart{item:about-min-size-covers-one}
  of
  \refobs{obs:about-min-size-covers},
  we see that the minimal index found above must be
  $k = 0$.
  Going through the same argument as above again, we see that
  $
  \Setsize{\vertabovelevel{V^*}{j}}
  \leq
  \abovelevelblockerminsize{j}{V}
  $
  for all $j$,
  since otherwise
  \refeq{eq:size-of-smaller-cover}
  would yield a contradiction to the fact that
  $
  V'  = \vertabovelevel{V'}{0}
  $
  was chosen as a \minsize \hidingsetklawe{} for~$V$.
\end{proof}

We noted above that 
$\abovelevelblockerminsize{0}{X}$
is the cardinality of a \minsize \hidingsetklawe{} of $X$.
For $j > 0$,
the quantity
$\abovelevelblockerminsize{j}{X}$
is large if one needs many vertices
on level $\geq j$
to \hideklawe{} 
$\vertabovelevel{X}{j}$,
\ie if 
$\vertabovelevel{X}{j}$
is ``spread out'' in some sense.
Let us consider a pyramid graph and suppose that
$X$ is a  \tightklawe and \connectedklawe{} set
in which the level-difference
$\vmaxlevel{X} - \vminlevel{X}$
is large.
Then it seems that
$\setsize{X}$ 
should also have to be large, since the pyramid ``fans out'' so
quickly. 
This intuition might be helpful when looking at the next, crucial
definition of Klawe.

\begin{definition}[Spreading graph]
  \label{def:spreading-graph}
  We say that the layered DAG $G$ is a 
  \introduceterm{spreading graph}
  if for every (non-empty) \connectedklawe{} set $X$ in $G$
  and every level
  $j = 1, \ldots, \vmaxlevel{\hiddenvertices{X}}$,
  the \introduceterm{spreading inequality}
  \begin{equation}
    \label{eq:spreading-property}
    \setsize{X}
    \geq
    \abovelevelblockerminsize{j}{\hiddenvertices{X}}
    + j
    - \vminlevel{X}
  \end{equation}
  holds.
\end{definition}

Let us try to give some more intuition for 
\refdef{def:spreading-graph}
by considering two extreme cases in a pyramid graph:
\begin{itemize}
  \item
    For
    $j \leq \vminlevel{X}$,
    we have that     the term
    $j - \vminlevel{X}$
    is non-positive,
    $\vertabovelevel{X}{j} = X$,
    and
    $\hiddenvertices{\vertabovelevel{X}{j}} = \hiddenvertices{X}$.
    In this case,
    \refeq{eq:spreading-property}
    is just the trivial fact that no set 
    that \hideklawe{}s 
    $\hiddenvertices{X}$
    need be larger than $X$ itself.

  \item
    Consider
    $j = \vmaxlevel{\hiddenvertices{X}}$,
    and suppose that
    $\hiddenvertices{\vertabovelevel{X}{j}}$
    is a single vertex $v$ with
    $\necessaryhidingvert{X}{x} = X$.
    Then
    \refeq{eq:spreading-property}
    requires that
    $
    \setsize{X} \geq 1 + \vlevel{x} - \vminlevel{X}
    $,
    and this can be proven to hold by 
    the ``converging paths'' argument of 
    \refth{th:bounds-black-pebbling-of-pyramids}
    and
    \refobs{obs:converging-paths}.
\end{itemize}
Very loosely, 
\refdef{def:spreading-graph}
says that if $X$ contains vertices at low levels that help to
\hideklawe{} other vertices at high levels, then $X$ must be a large
set.
Just as we tried to argue above, the spreading inequality~%
\refeq{eq:spreading-property}
does indeed hold for pyramids. 

\begin{theorem}[\cite{K80TightBoundPebblesPyramid}]
\label{th:pyramids-are-spreading-graphs}
  Pyramids are spreading graphs.
\end{theorem}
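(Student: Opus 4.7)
The plan is to prove the spreading inequality~\refeq{eq:spreading-property} by combining the converging-paths technique of Theorem~\ref{th:bounds-black-pebbling-of-pyramids} (formalized in Observation~\ref{obs:converging-paths}) with the connectedness structure captured by~$\hidsetgraph(X)$ and the tightness reduction in Lemma~\ref{lem:covering-set-is-tight}. Fix a non-empty \connectedklawe{} set $X$ and a level $j$ with $1 \leq j \leq \vmaxlevel{\hiddenvertices{X}}$. By Lemma~\ref{lem:covering-set-is-tight} I may assume $X$ is \tightklawe{} (passing to the canonical \tightklawe{} subset only decreases $\setsize{X}$ while preserving $\hiddenvertices{X}$ and hence \connectednessklawe{}). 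Let $\ell_0 = \vminlevel{X}$ and pick a vertex $v_{\ell_0} \in X$ at level~$\ell_0$.

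My intention is to exhibit in $X$ three disjoint collections whose sizes add up to at least the right-hand side of~\refeq{eq:spreading-property}:
\begin{compactitem}
\item a ``low anchor'' $\{v_{\ell_0}\}$ at level $\ell_0$;
\item a ``ladder'' of $j - 1 - \ell_0$ intermediate vertices, one at each level $\ell_0 < \ell < j$; and
\item a ``high blocker'' $\vertabovelevel{X}{j}$ of size at least $\abovelevelblockerminsize{j}{\hiddenvertices{X}}$.
\end{compactitem}
The high blocker is immediate: since $X$ \hideklawe{}s $\hiddenvertices{X}$ and no vertex at level $< j$ can \hideklawe{} a vertex at level~$\geq j$, the set $\vertabovelevel{X}{j}$ is a valid candidate in \refdef{def:size-above-level-blocker} and hence has cardinality at least $\abovelevelblockerminsize{j}{\hiddenvertices{X}}$. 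These vertices are clearly disjoint from the first two groups, which will live strictly below level $j$. The base case $j = \ell_0$ of the argument is trivial (the ``ladder'' is empty and the inequality reduces to $\setsize{X} \geq \setsize{\vertabovelevel{X}{j}}$), and the case $\ell_0 = 0$ is the most informative.

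To construct the ladder I use \connectednessklawe{}. Since $j \leq \vmaxlevel{\hiddenvertices{X}}$ there is a hidden vertex $x^*$ at level $\geq j$, and by Corollary~\ref{cor:vertex-and-necessary-cover-in-same-component} every vertex of $\necessaryhidingvert{X}{x^*}$ lies in the same component of $\hidsetgraph(X)$ as $x^*$. Using connectedness, I can iteratively chain together edges of $\hidsetgraph(X)$ --- each of which, by Definition~\ref{def:covering-set-graph}, is witnessed by a common hidden vertex lying below two \necessaryhidingsetklawe{}s --- to produce a \sourcepath{} $P$ ending at $x^*$ that passes through $v_{\ell_0}$ and whose only intersection with $X$ on the segment from $v_{\ell_0}$ up to the first vertex at level $j$ is $v_{\ell_0}$ itself (this is exactly what \tightklawe{}ness plus Lemma~\ref{lem:klawe-lemma-three-one} guarantees for the local ``witness paths'' between vertices in $\necessaryhidingvert{X}{\cdot}$). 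Applying Observation~\ref{obs:converging-paths} to the sub-path of $P$ between levels $\ell_0$ and $j$ produces $j - 1 - \ell_0$ converging \sourcepath{}s terminating at distinct intermediate levels, each of which must be blocked by $X$ at a vertex strictly above $\ell_0$ and strictly below $j$, and these blocking vertices are pairwise distinct.

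The main obstacle is the middle step: arranging that the \sourcepath{} $P$ can be stitched together from local witnesses across edges of $\hidsetgraph(X)$ so that it both visits $v_{\ell_0}$ and reaches a hidden vertex at level $\geq j$, while guaranteeing that the converging paths constructed from the $v_{\ell_0}$-to-level-$j$ portion are blocked by $X$ at genuinely new vertices (not by $Y^*$ at level $\geq j$ and not double-counting $v_{\ell_0}$). This is where the planar lattice structure of pyramids --- in particular the existence of disjoint \nepath{}s and \nwpath{}s through any pair of \relatedverticesadj{} vertices --- is essential, and is the part of Klawe's proof most sensitive to the ambient graph class (hence the restriction to ``spreading'' DAGs rather than all DAGs in Definition~\ref{def:spreading-graph}). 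Once this geometric stitching is carried out carefully for $\pyramidgraph$, the three cardinalities add up to $1 + (j - 1 - \ell_0) + \abovelevelblockerminsize{j}{\hiddenvertices{X}} = \abovelevelblockerminsize{j}{\hiddenvertices{X}} + j - \ell_0$, which is exactly~\refeq{eq:spreading-property}.
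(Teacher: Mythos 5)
Your outline has the right ingredients (tightness, converging paths, connectedness of $\hidsetgraph(X)$), but the step you yourself flag as the ``main obstacle'' is not a detail to be filled in; it \emph{is} the theorem. The paper isolates it as the \icecreamconelemma{} (\reflem{lem:ice-cream-cone-lemma}): for a \tightklawe{} \connectedklawe{} set $X$ in $\pyramidgraph$ there is a single vertex~$x$ with $X = \necessaryhidingvert{X}{x}$ and $\hiddenvertices{X} \subseteq \belowverticespyramid{x}$. This is what hands you, for the bottom vertex $v$ of~$X$, a \sourcepath{} $P$ to $x$ with $P \intersectionSP X = \set{v}$ along its \emph{entire} length. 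Proving it requires a genuine planarity argument (\reftwolems{lem:hidden-vertices-contained-in-subpyramid}{lem:tightness-implies-all-hide-top-vertex}): if the top of the cone were not hidden, one exhibits a \sourcesinkpath{} that disconnects $\hidsetgraph(X)$, and if $X \neq \necessaryhidingvert{X}{x}$ one exhibits a vertex of $X$ hidden by the rest of $X$. ``Iteratively chaining together edges of $\hidsetgraph(X)$'' does not produce this path: a shared hidden vertex witnessing an edge gives you two paths agreeing below it, not a single \sourcepath{} threaded through $v_{\ell_0}$ and free of $X$.

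There is also a double-counting problem in your three-way split even granting the path. You only arrange $P \intersectionSP X = \set{v_{\ell_0}}$ between levels $\ell_0$ and $j$. But each converging path $P_i$ coincides with $P$ from its merge vertex up to $x^*$, so if $P$ re-enters $X$ at some vertex at level $\geq j$ then several $P_i$ could all be blocked there, \ie{} inside $\vertabovelevel{X}{j}$, and your ``ladder'' collapses. The paper's proof sidesteps this entirely by never partitioning~$X$: it bounds $\abovelevelblockerminsize{j}{\hiddenvertices{X}} \leq M - j + 1$ from \emph{above} (the row of the subpyramid $\belowverticespyramid{x}$ at level~$j$ is a valid blocker, where $M = \vlevel{x}$), bounds $\setsize{X} \geq M - \ell_0 + 1$ from below by running converging paths on all of $P$, and subtracts. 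If you want to keep your decomposition, you must establish $P \intersectionSP X = \set{v_{\ell_0}}$ on all of~$P$, and that brings you right back to needing $X = \necessaryhidingvert{X}{x}$.
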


Unfortunately, the proof of 
\refth{th:pyramids-are-spreading-graphs}
in~%
\cite{K80TightBoundPebblesPyramid}
is rather involved.
The analysis is divided into two parts, by first showing that a class
of so-called \introduceterm{nice graphs} are spreading, and then
demonstrating that pyramid graphs are nice.
In 
\refsec{sec:klawe-bw-pebbling-pyramids-are-spreading}, 
we give a simplified, direct proof of the fact that pyramids are
spreading that might be of independent interest.

Accepting
\refth{th:pyramids-are-spreading-graphs}
on faith for now, we are ready for the decisive lemma:
If our layered DAG is a spreading graph and if
$U \unionSP W$
is a \connectedklawe set \hidingklawe{} $B$
such that $U$ is too large for the conditions in the
\localklaweprop~\ref{property:local-klawe-property} 
to hold, then replacing $U$ by the \minsize \hidingsetklawe{} in
\reflem{lem:klawe-lemma-three-five}
we get a \hidingsetklawe{} in accordance with
the \localklawepropacronym.

\begin{lemma}[Lemma 3.6 in
\cite{K80TightBoundPebblesPyramid}]
  \label{lem:pick-local-good-blocker}
  Suppose that
  $B,W,U$
  are vertex sets in a layered spreading graph~$G$
  \st
  $U \unionSP W$
  \hideklawe{}s~$B$
  and is
  \tightklawe and \connectedklawe.
  Then there is a vertex set $U^*$
  \st
  $U^* \unionSP W$ \hideklawe{}s $B$,
  $U^* \measureleq U$,
  and either
  $U^* = B$
  or
  $\setsize{U^*}
  < \setsize{B} +  \setsize{W}
  $.
\end{lemma}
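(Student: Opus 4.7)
The plan is to let $U^*$ be the hiding set obtained by applying \reflem{lem:klawe-lemma-three-five} to $V = U$. This choice handles the first two requirements of the lemma almost for free, and the third requirement is precisely where the spreading property of $G$ is brought to bear.

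Since the resulting $U^*$ hides $U$ by \reflem{lem:klawe-lemma-three-five}, and $U \cup W$ hides $B$ by assumption, transitivity of hiding (\refpr{pr:cover-relation-transitive}) yields that $U^* \cup W$ hides $B$. To verify $U^* \measureleq U$, recall that \reflem{lem:klawe-lemma-three-five} provides $|\vertabovelevel{U^*}{j}| \leq \abovelevelblockerminsize{j}{U}$ for every $j$. Since $\vertabovelevel{U}{j}$ is itself a valid hiding set for $\vertabovelevel{U}{j}$ located entirely at level $\geq j$, we have $\abovelevelblockerminsize{j}{U} \leq |\vertabovelevel{U}{j}|$, and therefore $\vjthmeasure{j}{U^*} = j + 2|\vertabovelevel{U^*}{j}| \leq j + 2|\vertabovelevel{U}{j}| = \vjthmeasure{j}{U}$. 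Taking the witness $i = j$ in \refdef{def:union-respecting-leq} establishes $U^* \measureleq U$.

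The third requirement is where the main difficulty lies: we must show that either $U^* = B$ or $|U^*| < |B| + |W|$. The size guarantee of \reflem{lem:klawe-lemma-three-five} by itself only gives $U^* = U$ or $|U^*| < |U|$, which is insufficient. Here we invoke the assumption that $G$ is spreading (\refdef{def:spreading-graph}), applied to the tight, connected set $X = U \cup W$: namely, the inequality $|U| + |W| \geq \abovelevelblockerminsize{j}{\hiddenvertices{U \cup W}} + j - \vminlevel{U \cup W}$ holds for every $j$ in the relevant range. Since $B \subseteq \hiddenvertices{U \cup W}$, monotonicity of $\abovelevelblockerminsize{j}{\cdot}$ allows us to translate this into an upper bound on $\abovelevelblockerminsize{0}{U}$, which in turn bounds $|U^*|$ from above. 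A case analysis distinguishing whether the top-level portion of $U^*$ coincides with $B$ then delivers either $U^* = B$ or the strict inequality $|U^*| < |B| + |W|$.

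The main obstacle is reconciling the spreading inequality, which is informative only for $j \geq 1$, with the target statement about the total cardinality $|U^*| = |\vertabovelevel{U^*}{0}|$, a quantity at level zero. Connectedness of $U \cup W$ is essential here, because it forces the per-level size reductions supplied by \reflem{lem:klawe-lemma-three-five} to propagate downward through the levels; without connectedness, the spreading bound on a high level cannot be converted into a bound at level zero. The careful bookkeeping needed to make this propagation work, while simultaneously tracking when the minimum hiding set is forced to equal $B$ itself, is the technical heart of the proof.
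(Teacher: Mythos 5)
There is a genuine gap in your proof, and it stems from applying \reflem{lem:klawe-lemma-three-five} to the wrong set. You take $U^*$ to be a minimum-size hiding set for $V = U$, which indeed makes the two requirements $U^* \cup W$ hides $B$ (via transitivity) and $U^* \measureleq U$ (with witness $i = j$) come out cheaply, exactly as you argue. But then $\setsize{U^*} = \abovelevelblockerminsize{0}{U}$, the minimum size of \emph{any} hiding set for $U$ itself, and there is no reason for this to be bounded by $\setsize{B} + \setsize{W}$. Concretely, take $B = \set{z}$, $W = \emptyset$, and let $U$ be a large collection of low-level vertices hiding the sink: then $|B| + |W| = 1$, but any set hiding all of $U$ will itself be huge. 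Your suggested repair also does not go through: from $B \subseteq \hiddenvertices{U \cup W}$ and \refpart{item:about-min-size-covers-two} of \refobs{obs:about-min-size-covers} you get $\abovelevelblockerminsize{j}{B} \leq \abovelevelblockerminsize{j}{\hiddenvertices{U \cup W}}$, which is a \emph{lower} bound on the right-hand quantity and says nothing useful about an upper bound on $\abovelevelblockerminsize{0}{U}$.

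The roles are in fact reversed from what you expect. The paper first disposes of the case $\setsize{U} < \setsize{B} + \setsize{W}$ by taking $U^* = U$. In the remaining case $\setsize{U} \geq \setsize{B} + \setsize{W}$, it applies \reflem{lem:klawe-lemma-three-five} to $V = B$, not to $U$, so $U^*$ is a minimum-size hiding set for $B$ (ignoring $W$ entirely). The size condition is then automatic: $U^* = B$ or $\setsize{U^*} < \setsize{B} \leq \setsize{B} + \setsize{W}$. What is \emph{not} automatic is $U^* \measureleq U$, since you have thrown away $W$ and chosen $U^*$ without regard to measure. That is where the spreading inequality on the tight, \connectedklawe{} set $U \cup W$ is invoked: for $j > \vminlevel{U \cup W}$ it yields $j + \abovelevelblockerminsize{j}{B} \leq \vminlevel{U \cup W} + \setsize{U \cup W}$, and combined with the hypothesis $\setsize{U} \geq \setsize{B} + \setsize{W}$ this shows $\vjthmeasure{j}{U^*} \leq \vjthmeasure{\levelmin}{U}$ with $\levelmin = \vminlevel{U \cup W}$, giving the witness $i = \levelmin$ in \refdef{def:union-respecting-leq}. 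So the spreading property and connectedness are spent on the measure comparison, not on the cardinality bound.
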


Postponing the proof of
\reflem{lem:pick-local-good-blocker}
for a moment, let us note that if we combine this lemma with
\reflem{lem:union-respecting-leq}
and
\refth{th:pyramids-are-spreading-graphs},
the \localklaweprop for pyramids follows.

\begin{corollary}
  \label{cor:pyramids-have-lkp}
  Pyramid graphs have the
  \localklaweprop
  \ref{property:local-klawe-property}.
\end{corollary}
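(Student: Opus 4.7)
The plan is to show that this corollary is essentially an immediate assembly of three results already at our disposal: \refth{th:pyramids-are-spreading-graphs}, which says pyramids are spreading; \reflem{lem:pick-local-good-blocker}, which extracts a good local blocker $U^*$ from an arbitrary \connectedklawe $U \unionSP W$ in a spreading graph; and \reflem{lem:union-respecting-leq}, which converts the relation $U^* \measureleq U$ into a genuine measure inequality under union with any disjoint set $Y$. No new combinatorial work should be needed — the whole point of having set up these auxiliary notions is precisely that the final step becomes a bookkeeping exercise.

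In more detail, fix a black-white pebble configuration $\pconf = (B,W)$ on a pyramid $\pyramidgraph$ and any vertex set $U$ such that $U \unionSP W$ \hideklawe{}s $B$ and is \connectedklawe. First I would reduce to the \tightklawe case: by \reflem{lem:covering-set-is-tight} there is a unique \tightklawe subset $(U \unionSP W)^* \subseteq U \unionSP W$ with $\hiddenvertices{(U \unionSP W)^*} = \hiddenvertices{U \unionSP W}$, and writing it as $U' \unionSP W'$ with $U' \subseteq U$, $W' \subseteq W$, we have $U' \unionSP W'$ still \hiding $B$. One then checks that \connectedklawenessklawe{} passes to this subset via \reflem{lem:klawe-lemma-three-three}. (I would record this small reduction carefully, since passing from $U$ to $U'$ must not lose the requirement involving $W$, and ultimately the $U^*$ produced must be a blocker for the original $W$, not just $W'$; but since we are free to augment $U^*$ with $W \setminus W'$ and this augmentation consists of vertices already \hiddenklawe by $U^*$, it does not change anything.)

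Next, invoke \reflem{lem:pick-local-good-blocker}, whose hypotheses are now met: $G = \pyramidgraph$ is a layered spreading graph by \refth{th:pyramids-are-spreading-graphs}, and $U' \unionSP W'$ is a \tightklawe, \connectedklawe blocker of $B$. The lemma yields a vertex set $U^*$ with $U^* \unionSP W'$ \hiding $B$ (and hence $U^* \unionSP W$ \hiding $B$, giving condition~1 of \refproperty{property:local-klawe-property}), with $U^* \measureleq U'$, and satisfying either $U^* \subseteq B$ or $\setsize{U^*} < \setsize{B} + \setsize{W'} \leq \setsize{B} + \setsize{W}$, which is condition~3.

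Finally, to verify condition~2, take any $Y$ with $Y \intersectionSP U = \emptyset$; then in particular $Y \intersectionSP U' = \emptyset$, so \reflem{lem:union-respecting-leq} applied to $U^* \measureleq U'$ gives $\vmeasure{Y \unionSP U^*} \leq \vmeasure{Y \unionSP U'} \leq \vmeasure{Y \unionSP U}$, where the last inequality holds because $U' \subseteq U$ and the measure is monotone under adding vertices. The potentially delicate step is the very first one — keeping the reduction from $U \unionSP W$ to its \tightklawe subset consistent with the statement of \refproperty{property:local-klawe-property}, which quantifies over arbitrary (not necessarily \tightklawe) $U$ and demands $U^*$ to be compared against this original $U$. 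Everything else is a direct concatenation of the stated lemmas.
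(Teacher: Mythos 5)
Your argument is correct and follows exactly the three-lemma chain the paper uses: pyramids are spreading (\refth{th:pyramids-are-spreading-graphs}), \reflem{lem:pick-local-good-blocker} produces $U^*$ with $U^* \measureleq U$ and the required size bound, and \reflem{lem:union-respecting-leq} upgrades $U^* \measureleq U$ to the measure inequality under union with any $Y$ disjoint from $U$. The one place you deviate is the opening reduction to the \tightklawe{} case, and this step is superfluous: the hypothesis of \refproperty{property:local-klawe-property} is that $U \unionSP W$ is \connectedklawe, and by \refdef{def:covering-set-graph} the \hidingsetklawe{} graph $\hidsetgraph(G,X)$ — and hence \connectednessklawe{} — is only defined for \tightklawe{} sets $X$. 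So \tightnessklawe{} is already built into the premise, and \reflem{lem:pick-local-good-blocker} applies directly without preprocessing. Your proposed reduction is also a bit ill-posed for the same reason: if $U \unionSP W$ were not \tightklawe{} then ``$U \unionSP W$ is \connectedklawe'' would not parse, so appealing to \reflem{lem:klawe-lemma-three-three} to transfer \connectednessklawe{} to a \tightklawe{} subset is not a move that is needed or well-formed here. None of this affects soundness — the subsequent monotonicity step $\vmeasure{Y \unionSP U'} \leq \vmeasure{Y \unionSP U}$ for $U' \subseteq U$ is fine, and the augmentation worry you raise evaporates because $U^* \unionSP W' \subseteq U^* \unionSP W$ — but the paper's proof is a strict shortening of yours, obtained by noticing the premise already hands you \tightnessklawe.
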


\begin{proof}[Proof of \refcor{cor:pyramids-have-lkp}]
  This is more or less immediate, but we write down the details for
  completeness. Since pyramids are spreading by
  \refth{th:pyramids-are-spreading-graphs},
  \reflem{lem:pick-local-good-blocker} says that $U^*$ is a
  \hidingsetklawe for $(B,W)$ and that
  $U^* \measureleq U$.
  \Reflem{lem:union-respecting-leq}
  then yields that
  $
  \vmeasure{Y \unionSP U^*}
  \leq
  \vmeasure{Y \unionSP U}
  $
  for all $Y$ with $ Y\intersectionSP U = \emptyset$.
  Finally,
  \reflem{lem:pick-local-good-blocker} 
  also tells us that
  $U^* \subseteq B$
  or
  $\setsize{U^*} < \setsize{B} + \setsize{W}$,
  and thus all conditions in
  \refproperty{property:local-klawe-property}
  are satisfied.
\end{proof}

Continuing by plugging 
\refcor{cor:pyramids-have-lkp} 
into
\reflem{lem:lkp-implies-global-klawe-property},
we get the global \klawepropacronym in
\refthP{th:pyramids-have-klawe-property}.
So all that is needed to conclude Klawe's proof of the lower bound for
the black-white pebbling price of pyramids is to prove 
\refth{th:pyramids-are-spreading-graphs}
and
\reflem{lem:pick-local-good-blocker}.
We attend to 
\reflem{lem:pick-local-good-blocker}
right away, deferring a proof of
\refth{th:pyramids-are-spreading-graphs}
to the next \subsectionKlawe.

\begin{proof}[Proof of
    \reflem{lem:pick-local-good-blocker}]
  If
  $\setsize{U} <
  \setsize{B} + \setsize{W}$
  we can pick 
  $U^* = U$
  and be done, so suppose that
  $\setsize{U} \geq
  \setsize{B} + \setsize{W}$.
  Intuitively, this should mean that $U$ is unnecessarily large,
  so it ought to be possible to do better.
  In fact, $U$ is so large that we can just ignore $W$
  and pick a better $U^*$ that \hideklawe{}s $B$ all on its own.

  Namely, let $U^*$ be a \minsize \hidingsetklawe{} for~$B$ as in
  \reflem{lem:klawe-lemma-three-five}.
  Then either
  $U^* = B$
  or
  $\Setsize{U^*}
  <
  \setsize{B}
  \leq
  \setsize{B} + \setsize{W}
  $.
  To prove the lemma, we also  need to show that
  $U^* \measureleq U$,
  which will guarantee that $U^*$ behaves well under union with
  other sets \wrt measure.

  Before we do the the formal calculations, let us try to provide some
  intuition for why it should be the case that   
  $U^* \measureleq U$ holds,
  \ie that for every $j$ we can find an $i \leq j$ \st
  $
  \Vjthmeasure{j}{U^*} \leq
  \vjthmeasure{i}{U}
  $.
  Perhaps it will be helpful at this point for the reader to look at
  \refex{ex:why-we-need-connected-components} again,
  where the replacement of 
  $U_1 = \set{v_1, u_2, u_3, v_3}$ 
  in
  \reffig{fig:hidingsets-a}
  by 
  $U^*_1 = \set{x_1, y_1}$ 
  in
  \reffig{fig:hidingsets-b}
  shows
  \reftwolems
  {lem:klawe-lemma-three-five}
  {lem:pick-local-good-blocker}
  in action.

  Suppose first that
  $j \leq \vminlevel{U \unionSP W} \leq \vminlevel{U}$.
  Then the measure inequality 
  $\vjthmeasure{j}{U^*} \leq   \vjthmeasure{j}{U}$
  is obvious, since
  $\vertabovelevel{U}{j} = U$
  is so large that it can easily pay for all of~$U^*$,
  let alone 
  $\vertabovelevel{U^*}{j} \subseteq U^*$.

  For
  $j > \vminlevel{U \unionSP W}$,
  however, we can worry that although our \hidingsetklawe $U^*$ does
  indeed   have small size, the vertices in $U^*$ might be located on
  high   levels in the graph and be very expensive since they were
  chosen   without regard to measure.
  Just throwing away all white pebbles and picking a new set $U^*$ that
  \hideklawe{}s $B$ on its own is quite a drastic move, and it is not
  hard to construct examples where this is very bad in terms of
  potential   (say, exchanging $s_5$ for $v_5$ in the \hidingsetklawe
  of  \refex{ex:why-we-need-connected-components}).
  The reason that this nevertheless works is that 
  $\setsize{U}$ is so large, that, in addition,
  $U \unionSP W$ is \connectedklawe,
  and that, finally, the graph under consideration is spreading.
  Thanks to this,
  if there are a lot of expensive vertices in 
  $\vertabovelevel{U^*}{j}$
  on or above some high level $j$
  resulting in a large partial measure
  $\Vjthmeasure{j}{U^*}$,
  the number of vertices 
  on or above level
  $\levelstd =  \vminlevel{U \union W}$
  in
  $U = \vertabovelevel{U}{\levelstd}$
  is large enough to yield at least as large a partial measure~%
  $\Vjthmeasure{\levelstd}{U}$.

  Let us do the formal proof, divided into the two cases above.
  \begin{enumerate}
  \item 
    $j \leq \vminlevel{U \unionSP W}$:
    Using the lower bound on the size of $U$ and that level $j$ is no
    higher than the minimal level of~$U$, we get
    \begin{align*}
      \Vjthmeasure{j}{U^*} 
      &= j + 2 \cdot \Setsize{\vertabovelevel{U^*}{j}} 
      && 
      \bigl[\text{ by definition of $\vjthmeasure{j}{\cdot}$ }\bigr] 
      \\
      &\leq j + 2 \cdot \Setsize{U^*} 
      && 
      \bigl[\text{ since $\vertabovelevel{V}{j} \subseteq V$
                 for any $V$ }\bigr] 
      \\
      &\leq j + 2 \cdot \setsize{B}
      && 
      \bigl[\text{ by  construction of $U^*$ in
          \reflem{lem:klawe-lemma-three-five} }\bigr] 
      \\
      &\leq j + 2 \cdot \setsize{U}
      && 
      \bigl[\text{ by  assumption 
          $\setsize{U} \geq 
          \setsize{B} + \setsize{W} \geq
          \setsize{B}$ }\bigr] 
      \\
      &= j + 2 \cdot \Setsize{\vertabovelevel{U}{j}}
      && 
      \bigl[\text{ $\vertabovelevel{U}{j} = U$ 
          since $j \leq \vminlevel{U} $ }\bigr] 
      \\
      &= \vjthmeasure{j}{U}
      && \bigl[\text{ by definition of $\vjthmeasure{j}{\cdot}$ }\bigr]
    \end{align*}
    and we can choose $i=j$ in
    \refdef{def:union-respecting-leq}.

  \item
    $j > \vminlevel{U \unionSP W}$:
    Let
    $\levelmin = \vminlevel{U \unionSP W}$.
    The black pebbles in
    $B$
    are \hiddenklawe by 
    $U \unionSP W$,
    or in formal notation
    $B \subseteq \hiddenvertices{U \unionSP W}$, 
    so
    \begin{equation}
      \label{eq:j-above-minlevel-eq-one}
      \abovelevelblockerminsize{j}{B}
      \leq
      \abovelevelblockerminsizecompact{j}{\hiddenvertices{U \unionSP W}}
    \end{equation}
%
    holds by
    \refpart{item:about-min-size-covers-two}
    of
    \refobs{obs:about-min-size-covers}.
    Moreover,
    $U \unionSP W$
    is a \connectedklawe
    set of vertices in a spreading graph~$G$, so the spreading
    inequality in 
    \refdef{def:spreading-graph}
    says that
    $
    \setsize{U \unionSP W}
    \geq
    \abovelevelblockerminsizecompact{j}{\hiddenvertices{U \unionSP W}}
    + j
    - \levelmin
    $,
    or 
    \begin{equation}
      \label{eq:j-above-minlevel-eq-two}
      j + 
      \abovelevelblockerminsizecompact{j}{\hiddenvertices{U \unionSP W}}
      \leq
      \levelmin
      +
      \setsize{U \unionSP W}
    \end{equation}
    after reordering.
    Combining
    \refeq{eq:j-above-minlevel-eq-one}
    and
    \refeq{eq:j-above-minlevel-eq-two}
    we have that
    \begin{equation}
      \label{eq:j-above-minlevel-eq-three}
      j + 
      \abovelevelblockerminsize{j}{B}
      \leq
      \levelmin
      +
      \setsize{U \unionSP W}
    \end{equation}
    and it follows that
    \begin{align*}
      \vjthmeasure{j}{U^*} 
      &= 
      j + 2 \cdot \Setsize{\vertabovelevel{U^*}{j}} 
      && 
      \bigl[\text{ by definition of $\vjthmeasure{j}{\cdot}$ }\bigr] 
      \\
      &\leq 
      j 
      + \Setsize{\vertabovelevel{U^*}{j}} 
      + \Setsize{U^*} 
      && 
      \bigl[\text{ since $\vertabovelevel{V}{j} \subseteq V$ for any $V$ }\bigr] 
      \\
      &\leq 
      j 
      + \abovelevelblockerminsize{j}{B}
      + \setsize{B} 
      && 
      \bigl[\text{ by  construction of $U^*$ in
          \reflem{lem:klawe-lemma-three-five} }\bigr] 
      \\
      &\leq
      \levelmin
      + \setsize{U \unionSP W}
      + \setsize{B}
      && 
      \bigl[\text{ by the inequality
          \refeq{eq:j-above-minlevel-eq-three} }\bigr]
      \\
      &\leq \levelmin + 2 \cdot \setsize{U}
      && 
      \bigl[\text{ by  assumption 
          $\setsize{U} \geq 
          \setsize{B} + \setsize{W}$ }\bigr] 
      \\
      &= \levelmin + 2 \cdot \setsize{\vertabovelevel{U}{\levelmin}}
      && 
      \bigl[\text{ $\vertabovelevel{U}{\levelmin} = U$ 
          since $\levelmin \leq \vminlevel{U} $ }\bigr] 
      \\
      &= \vjthmeasure{\levelmin}{U}
      && \bigl[\text{ by definition of $\vjthmeasure{\levelmin}{\cdot}$ }\bigr]
    \end{align*}
    Thus, the partial measure of $U$ at the minimum level
    $\levelmin$
    is always larger than the partial measure of $U^*$
    at levels $j$ above this minimum level,
    and we can choose $i=\levelmin$ in
    \refdef{def:union-respecting-leq}.
  \end{enumerate}
  Consequently,
  $
  U^* \measureleq U
  $,
  and the lemma follows.
\end{proof}

Concluding this \subsectionKlawe, we want to make a comment about
\reftwolems
{lem:klawe-lemma-three-five}
{lem:pick-local-good-blocker}
and try to rephrase what they say about \hidingsetklawe{}s.
Given a tight set
$U \unionSP W$ 
\st
$B \subseteq \hiddenvertices{U \unionSP W}$,
we can always pick a $U^*$ as in
\reflem{lem:klawe-lemma-three-five}
with
$U^* = B$
or
$\Setsize{U^*} < \setsize{B}$
and with
$
\Setsize{\vertabovelevel{U^*}{j}}
\leq
\abovelevelblockerminsize{j}{B}
$
for all~$j$.
This will sometimes be a good idea, and sometimes not.
Just as in
\reflem{lem:pick-local-good-blocker},
for
$j > \vminlevel{U \unionSP W}$
we can always prove that
\begin{equation}
  \vjthmeasure{j}{U^*}
  \leq
  \vminlevel{U \unionSP W}
  +
  \setsize{U}
  +
  (\setsize{B} + \setsize{W})
  \eqperiod
\end{equation}
The key message of
\reflem{lem:pick-local-good-blocker}
is that replacing $U$ by $U^*$ is a good idea if $U$
is sufficiently large, namely if 
$\setsize{U} \geq \setsize{B} + \setsize{W}$,
in which case we are guaranteed to get
$\vjthmeasure{j}{U^*} \leq \vjthmeasure{\levelmin}{U}$
for
$\levelmin = \vminlevel{U \unionSP W}$.

\subsection{Pyramids Are Spreading Graphs}
\label{sec:klawe-bw-pebbling-pyramids-are-spreading}

The fact that pyramids are spreading graphs, that is, that they
satisfy the inequality~\refeq{eq:spreading-property}, is
a consequence of the following lemma.

\newcommand{\icecreamconelemma}{Ice-Cream Cone Lemma\xspace}

\begin{lemma}[\icecreamconelemma{}]
  \label{lem:ice-cream-cone-lemma}
  If $X$ is a \tightklawe vertex set in a pyramid $\pyramidgraph$
  such that
  $\hidsetgraph(X)$
  is a connected graph with vertex set 
  $V = \hiddenvertices{X}$,
  then there is a unique vertex $x \in V$
  \st
  $X =   \necessaryhidingvert{X}{x}$
  and
  $
  V = \hiddenvertices{\necessaryhidingvert{X}{x}}
  \subseteq \belowverticespyramid{x}
  $.
\end{lemma}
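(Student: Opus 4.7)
The plan is to identify $x$ as a topmost vertex of $V = \hiddenvertices{X}$ and then to verify all stated properties. Concretely, pick $x \in V$ of maximal level; such a vertex exists because $V \supseteq X$ is non-empty and finite. Once the two conclusions $X = \necessaryhidingvert{X}{x}$ and $V \subseteq \belowverticespyramid{x}$ are established for this $x$, uniqueness of $x$ will be automatic: any other $x' \in V$ with the same properties must satisfy $x' \in V \subseteq \belowverticespyramid{x}$ and, symmetrically, $x \in V \subseteq \belowverticespyramid{x'}$, forcing $x = x'$ in the DAG.

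Working from this candidate $x$, the containment $\necessaryhidingvert{X}{x} \subseteq X$ is immediate from the definition, and once the reverse inclusion is established, \reflem{lem:klawe-lemma-three-one} delivers $V = \hiddenvertices{X} = \hiddenvertices{\necessaryhidingvert{X}{x}}$ automatically. For the reverse containment $X \subseteq \necessaryhidingvert{X}{x}$ I would proceed as follows: given $u \in X$, tightness supplies a source path $P_u$ through $u$ with $P_u \intersectionSP X = \setsmall{u}$, and I would modify $P_u$ into a source path ending at $x$ that still meets $X$ only in $u$. Here the connectedness of $\hidsetgraph(X)$ is essential. It guarantees a chain of vertices in $V$ linking $u$ (or a vertex it helps to hide) to $x$ through the edge relation of \refdef{def:covering-set-graph}, which in turn supplies witnesses lying simultaneously below two consecutive vertices of the chain and hidden by the relevant necessary subsets. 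These witnesses can be combined, via the converging-paths construction of \refobs{obs:converging-paths}, with the initial segment of $P_u$ to produce the required source path to $x$.

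To show $V \subseteq \belowverticespyramid{x}$ I would argue by contradiction. Suppose some $v \in V$ is not below $x$; then since $x$ was chosen at the highest level in $V$, the vertices $x$ and $v$ are incomparable in the DAG order. In the two-dimensional lattice picture of \reffig{fig:pyramid-height-6-b}, the down-cones $\belowverticespyramid{x}$ and $\belowverticespyramid{v}$ are triangles whose intersection is itself a strictly lower triangle (or empty). Combined with $x \in V$ and tightness of $X$, the converging-paths argument from the proof of \refth{th:bounds-black-pebbling-of-pyramids} should produce a source path to $v$ that leaves the shadow of $x$ sideways, in particular avoiding $\necessaryhidingvert{X}{x}$ and contradicting $v \in \hiddenvertices{\necessaryhidingvert{X}{x}}$.

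The hard part will be making the splicing and sideways-shifting arguments rigorous inside the pyramid's combinatorial geometry, and in particular converting the rather indirect edge relation of $\hidsetgraph(X)$, which is phrased through the existence of common witnesses in down-sets, into actual path-level surgeries that preserve the delicate ``hits $X$ only at $u$'' property. I expect most of the real work to be in showing that along any chain of edges in a connected component the witnesses can be chosen coherently enough to yield a single global path; the 2D lattice picture of $\pyramidgraph$, where source paths correspond to monotone NE/NW walks and down-cones are nested triangles, should be exploited throughout to make the geometric surgery tractable.
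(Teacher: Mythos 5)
Your overall skeleton (pick the top-level vertex $x\in V$, establish $X=\necessaryhidingvert{X}{x}$, deduce $V\subseteq\belowverticespyramid{x}$ and uniqueness) is a genuinely different organization from the paper's, which first pins down $x$ as the intersection of the extremal NE- and NW-paths and proves $x\in V$ and $V\subseteq\belowverticespyramid{x}$ via a planarity contradiction (no edge of $\hidsetgraph$ can cross a source-to-sink path that misses $V$), and only then shows $X=\necessaryhidingvert{X}{x}$ by a second, separate planarity argument. If your order of steps could be made to work it would be fine, and you could in fact shortcut the last step: once $X=\necessaryhidingvert{X}{x}$ and $x\in V$ are known, $V\subseteq\belowverticespyramid{x}$ is exactly \refobs{obs:necessary-hiding-set-in-subpyramid}, no separate converging-paths argument needed.

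The genuine gap is in your central step, the claim that for every $u\in X$ the tightness-witness path $P_u$ (hitting $X$ only at $u$) can be extended above $u$ to a source path reaching $x$ that still meets $X$ only in $u$. This is not a lemma you can cite; it is, by \refdef{def:necessary-subcover}, literally equivalent to $u\in\necessaryhidingvert{X}{x}$, which is what you are trying to prove, so as written the argument is circular. The connectivity of $\hidsetgraph(X)$ does not obviously break the circle: an edge $(y_1,y_2)$ in $\hidsetgraph$ only records that $\necessaryhidingvert{X}{y_1}\intersectionSP\necessaryhidingvert{X}{y_2}\neq\emptyset$ (\refpr{pr:simplification-edges}), i.e.\ that two necessary hiding subsets share some element of $X$ -- there is no evident mechanism by which a chain of such shared-element facts yields a single monotone source path that threads its way from $u$ up to $x$ while dodging all of $X\setminus\set{u}$. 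The paper sidesteps exactly this difficulty by running the contradiction in the other direction: assuming $\necessaryhidingvert{X}{x}\subsetneqq X$, it uses connectivity to find an edge straddling $\hiddenvertices{\necessaryhidingvert{X}{x}}$ and its complement, and then does a careful planar surgery (the $P$, $P_r$, $P_s$, $P^E_s$, $P_v$ paths in the proof of \reflem{lem:tightness-implies-all-hide-top-vertex}) to show some vertex of $X$ is hidden by the rest of $X$, violating tightness. You would need a comparably concrete mechanism; ``the witnesses can be combined via converging paths'' does not yet supply one, and I do not see how to make it do so without essentially rediscovering the paper's contradiction structure.
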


What the lemma says it that for any \tightklawe vertex set $X$, the
connected components $V_1, \ldots, V_k$ look like ragged ice-cream
cones turned upside down. Moreover, for each ``ice-cream cone'' $V_i$,
all vertices in 
$X \intersectionSP V_i$
are needed to \hideklawe the top vertex.
The two connected components in
\reffig{fig:hidingsetgraph}
are both examples of such ``ice-cream cones.''

Before proving
\reflem{lem:ice-cream-cone-lemma},
we show how this lemma can be used to establish that pyramid graphs
are spreading by a converging-paths argument as in
\refobs{obs:converging-paths}.

\begin{proof}[Proof of \refth{th:pyramids-are-spreading-graphs}]
  Suppose that $X$ is a \tightklawe and \connectedklawe set, \ie such that
  $\hidsetgraph(X)$ is a single connected component with set of
  vertices $V = \hiddenvertices{X}$.
  Let $x \in V$ be the vertex given by
  \reflem{lem:ice-cream-cone-lemma}
  \st
  $X =   \necessaryhidingvert{X}{x}$
  and
  $
  V = \hiddenvertices{\necessaryhidingvert{X}{x}}
  \subseteq \belowverticespyramid{x}
  $,
  and let
  $\umaximallevel = \vlevel{x}$.
  
  For any $j \leq \umaximallevel$ we have
  \begin{equation}
    \label{eq:upper-bound-above-level-blocker-size}
    \abovelevelblockerminsize{j}{\hiddenvertices{X}}
    \leq
    \umaximallevel - j + 1
    \eqperiod
  \end{equation}
  This is so since there are only so many vertices on level $j$ in
  $\belowverticespyramid{x}$ and the set of all these vertices must
  \hideklawe everything in $\hiddenvertices{X}$ above level~$j$
  since
  $\hiddenvertices{X}  \subseteq \belowverticespyramid{x}$.

  By assumption
  $X$
  is \tightklawe and all of $X$ is needed to \hideklawe~$x$,
  \ie
  $X = \necessaryhidingvert{X}{x}$.
  Pick a vertex $v \in X$ on bottom level~$\levelmin = \vminlevel{X}$.
  Since
  $v \in \necessaryhidingvert{X}{x}$
  there is a path
  $\pathfromto{P}{v}{x}$
  such that
  $P \intersectionSP X = \set{v}$.
  Consider the set of converging \sourcepath{}s for $P$ in
  \refobs{obs:converging-paths}.
  All these converging paths
  $P_1, P_2, \ldots, P_{\umaximallevel - \levelmin}$
  must be blocked by distinct vertices in~$X \setminus \set{v}$,
  since
  $P_i \intersectionSP P_j \subseteq P \setminus \set{v}$
  and
  $P \setminus \set{v}$
  does not intersect~$X$. From this the inequality
  \begin{equation}
    \label{eq:lower-bound-X-size-by-converging-paths}
    \setsize{X}
    \geq
    \umaximallevel - \levelmin + 1
  \end{equation}
  follows.
  By combining 
  \refeq{eq:upper-bound-above-level-blocker-size}
  and
  \refeq{eq:lower-bound-X-size-by-converging-paths},
  we get that
  \begin{equation}
    \label{eq:combination-to-get-spreading-inequality}
    \setsize{X} - \abovelevelblockerminsize{j}{\hiddenvertices{X}}
    \geq
    \umaximallevel - \levelmin + 1 - (\umaximallevel - j + 1)
    =
    j - \levelmin
  \end{equation}
  which  is  the required spreading inequality
  \refeq{eq:spreading-property}.
\end{proof}

The rest of this \subsectionKlawe is devoted to proving
the \icecreamconelemma.
We will use that fact that pyramids are planar graphs where we can
talk about left and right. More precisely, the following (immediate)
observation will be central in our proof.

\begin{observation} 
\label{obs:planarity}
Suppose for a planar DAG $G$ that we have a \sourcepath $P$ to a
vertex $w$ and two vertices
$u,v \in \belowverticesNR{w}$ 
on opposite sides of $P$. Then any path
$\pathfromto{Q}{u}{v}$
must intersect~$P$.
\end{observation}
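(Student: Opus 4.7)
The plan is to establish this observation by a Jordan-curve-style argument based on the given planar embedding of $G$. Fix such an embedding. The \sourcepath $\pathstd$, from some source $s$ up to $w$, is then a simple arc in the plane. I would extend $\pathstd$ to a simple unbounded arc $\tilde{\pathstd}$ by attaching, at the source endpoint $s$, a ray going off to infinity through some unbounded face incident to $s$, and similarly attaching a ray from $w$ to infinity through some unbounded face incident to $w$. Since $s$ is a source and we may route the first ray ``downwards'' through the outer region where no other edges of $G$ lie, and since $w$ is one of the endpoints of $\pathstd$, such rays can be chosen so that $\tilde{\pathstd}$ is a simple curve in $\R^2$ whose intersection with the embedding of $G$ is exactly $\pathstd$ itself.

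Next, I would appeal to the Jordan curve theorem for simple unbounded arcs: the complement $\R^2 \setminus \tilde{\pathstd}$ has exactly two connected components. The phrase ``$u$ and $v$ lie on opposite sides of $\pathstd$'' is taken to mean precisely that $u$ and $v$ belong to different components of $\R^2 \setminus \tilde{\pathstd}$. Given any path $\pathfromto{Q}{u}{v}$ in $G$, its embedding is a continuous curve in the plane joining $u$ to $v$, and hence must cross $\tilde{\pathstd}$. The two added rays were chosen to avoid the embedding of $G$ entirely, so every intersection point of $Q$ with $\tilde{\pathstd}$ must in fact lie on $\pathstd$. Because distinct edges of a planar embedding meet only at common endpoints, each such intersection occurs at a vertex of $\pathstd$, so $Q$ intersects $\pathstd$ as claimed.

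The only delicate point, and hence the main obstacle, is giving a rigorous interpretation of ``opposite sides of $\pathstd$'' in a general planar DAG: this requires the extension of $\pathstd$ to a separating arc $\tilde{\pathstd}$ above. Once the planar embedding and these extending rays are fixed, the rest of the argument is a direct application of the Jordan curve theorem, which is why the authors label the statement as immediate. For the pyramid graphs in which this observation is actually used, the natural drawing in the plane (as in \reffig{fig:pyramid-height-6}) makes the notions of left and right of a \sourcepath completely canonical, so in that setting no additional work is needed.
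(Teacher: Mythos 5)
The paper gives no proof of this observation---it labels it ``immediate''---so there is no authorial argument to compare against, and I evaluate yours on its own merits.

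The Jordan-curve strategy is the right one, but there is a genuine gap at the step where you route the ray from $w$ to infinity ``through some unbounded face incident to $w$'' and then assert that $\tilde{P}$ meets the embedding of $G$ only in $P$. That presupposes $w$ lies on the outer face. In the places where the paper actually invokes this observation
(\reflem{lem:hidden-vertices-contained-in-subpyramid} and \reflem{lem:tightness-implies-all-hide-top-vertex}), $w$ is typically an \emph{interior} vertex of the pyramid, and then every arc from $w$ to infinity must cross an edge of $G$; the load-bearing claim is false precisely in the intended applications. The fix requires an observation you omit: since $u,v\in\belowverticesNR{w}$ and $Q$ is a directed path from $u$ to $v$, every $q\in Q$ satisfies $u\to q\to v\to w$, so $q\in\belowvertices{w}$, and assuming $w\notin Q$ (otherwise $Q$ meets $P$ trivially at $w$) we get $Q\subseteq\belowverticesNR{w}$. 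In the standard layered drawing of the pyramid this set lies strictly below the horizontal line through $w$, so a vertical upward ray from $w$ avoids $Q$ even though it crosses other edges of $G$. The claim you actually need is thus the weaker one that $\tilde{P}\setminus P$ avoids $Q$, not that it avoids all of $G$; with that correction, and together with the confinement of $Q$, your argument closes. Your concluding remark that ``no additional work is needed'' for pyramids is therefore not quite right: the confinement of $Q$ below $w$ is exactly the additional work that makes the ray construction harmless.
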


Given a vertex $v$ in a pyramid $\pyramidgraph$, there is a unique
path that passes through $v$ and in every vertex $u$ moves to the
right-hand successor of~$u$. We will refer to this path as the
\introduceterm{\nepathlong{}} 
through~$v$, or just the
\introduceterm{\nepath{}} 
through~$v$ for short, and denote it by~%
$\nepaththrough{v}$.
The path through $v$ always moving to the left is the
\introduceterm{\nwpathlong{}} 
or
\introduceterm{\nwpath{}}
through~$v$, 
and is denoted~%
$\nwpaththrough{v}$.
For instance, for the vertex $v_4$ in our running example pyramid in
\reffig{fig:pyramid-height-6}
we have
$
\nepaththrough{v_4} =
\set{s_4, u_4, v_4, w_4}
$
and
$
\nwpaththrough{v_4} =
\set{s_6, u_5, v_4, w_3, x_2, y_1}
$.
%
To simplify the proofs in what follows, 
we make a couple of observations.

\begin{observation}
  \label{obs:necessary-hiding-set-in-subpyramid}
  Suppose that $X$ is a \tightklawe set of vertices in a
  pyramid~$\pyramidgraph$ and that
  $v \in \hiddenvertices{X}$.
  Then
  $\hiddenvertices{\necessaryhidingvert{X}{v}}
  \subseteq
  \belowverticespyramid{v}$.
\end{observation}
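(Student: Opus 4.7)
The plan is to reduce the observation to a purely geometric fact about pyramids: any subset of $\belowvertices{v}$ can hide only vertices that themselves lie in $\belowvertices{v}$. First I would observe that $\necessaryhidingvert{X}{v} \subseteq \belowvertices{v}$, which is immediate from \refdef{def:necessary-subcover}: every $u \in \necessaryhidingvert{X}{v}$ lies on some source path ending at $v$, and hence $u \in \belowvertices{v}$. (Note that this step does not even use tightness of $X$.) It therefore suffices to prove the stronger statement that for any $S \subseteq \belowvertices{v}$ in a pyramid $\pyramidgraph$, one has $\hiddenvertices{S} \subseteq \belowvertices{v}$.

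To prove this stronger statement I would argue by contrapositive: given $w \notin \belowvertices{v}$, I would exhibit a source path to $w$ that avoids $\belowvertices{v}$ entirely and therefore cannot be blocked by $S$. Working in the coordinates of \refdef{def:pyramid-graphs}, in which the $i$-th vertex at level $\ell$ has predecessors $(i,\ell-1)$ and $(i+1,\ell-1)$, the set $\belowvertices{v}$ with $v = (i_v,\ell_v)$ is the triangular region $\setdescr{(j,\ell)}{\ell \leq \ell_v,\ i_v \leq j \leq i_v + (\ell_v - \ell)}$. The position of $w$ then splits into three cases: $w$ lies strictly to the left of the triangle, strictly to the right, or strictly above its apex. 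In each case I would use one of the two ``straight'' source paths from $w$---either the one always choosing the predecessor $(i,\ell-1)$, whose index stays constantly at $i_w$, or the one always choosing $(i+1,\ell-1)$, whose index at level $\ell$ is $i_w + (\ell_w - \ell)$---picking whichever goes ``away from'' the apex $v$. A one-line calculation in coordinates then verifies that the index of the chosen path at each level $\ell \leq \ell_v$ lies strictly outside the interval $[i_v,\, i_v + \ell_v - \ell]$.

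The argument is really just a coordinate computation, so I do not expect any serious obstacle. The slight subtlety is case~(iii), where $w$ lies above $v$: here one has to choose the left or the right straight path according to whether $i_w < i_v$ or $i_w \geq i_v$, and use that $\ell_w > \ell_v$ provides enough ``slack'' to make the coordinate inequality $i_w + \ell_w > i_v + \ell_v$ hold in the latter subcase. The only other sanity check is that these straight-line source paths actually reach level~$0$ inside $\pyramidgraph$, which is automatic from $1 \leq i_w \leq h+1-\ell_w$: the left-straight path terminates at the source $(i_w,0)$ and the right-straight path at $(i_w+\ell_w,0)$, both valid source indices.
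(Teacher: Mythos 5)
Your proof is correct and follows essentially the same route as the paper: both first note $\necessaryhidingvert{X}{v} \subseteq \belowverticespyramid{v}$ (without needing tightness), then observe that any $w \notin \belowverticespyramid{v}$ has a straight NE or NW source path that avoids $\belowverticespyramid{v}$ entirely. The paper phrases this as $w$ lying to the left of $\nepaththrough{v}$ or to the right of $\nwpaththrough{v}$; your explicit coordinate computation, including the left/right/above case split with the subcase analysis for the ``above'' case, is just a detailed verification of that same geometric fact.
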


\begin{proof}
  Since all vertices in
  $\necessaryhidingvert{X}{v}$
  have a path to $v$ by definition, it holds that
  $\necessaryhidingvert{X}{v}
  \subseteq
  \belowverticespyramid{v}$.
  Any vertex 
  $u \in \pyramidgraph \setminus \belowverticespyramid{v}$
  must lie either to the left of
  $\nepaththrough{v}$
  or to the right of
  $\nwpaththrough{v}$
  (or both).
  In the first case, 
  $\nepaththrough{u}$ 
  is a path via $u$ that does not intersect~$\necessaryhidingvert{X}{v}$, 
  so $u \notin \hiddenvertices{\necessaryhidingvert{X}{v}}$.
  In the second case,
  we can draw the same conclusion by looking at
  $\nwpaththrough{u}$.
  Thus,
  $\bigl( 
  \pyramidgraph \setminus \belowverticespyramid{v}
  \bigr)
  \intersectionSP
  \hiddenvertices{\necessaryhidingvert{X}{v}}
  = \emptyset
  $.
\end{proof}

\begin{observation}
  \label{obs:hidden-vertex-has-path-with-one-element-intersection}
  Suppose that $X$ is a \tightklawe set of vertices in a
  DAG $G$ and that
  $v \in \hiddenvertices{X}$.
  Then there is a \sourcepath $P$ to $v$ \st
  $\setsize{P \intersectionSP X} = 1$.
\end{observation}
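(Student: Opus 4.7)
The plan is to essentially reuse the path-switching argument that appears in the proof of Lemma~\ref{lem:klawe-lemma-three-one}, adapted to the present statement. The idea is that although every source-path to $v$ must intersect $X$ (because $v$ is hidden), tightness of $X$ lets us reroute any such path through a carefully chosen vertex so that only one intersection survives.

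\medskip

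More concretely, I would proceed as follows. First, fix any source-path $P_1$ to $v$. Since $v \in \hiddenvertices{X}$, the set $X$ blocks $P_1$, so $P_1 \intersectionSP X \neq \emptyset$. Let $u$ be the vertex in $P_1 \intersectionSP X$ of highest level, i.e.\ the one closest to $v$ along $P_1$. Since $X$ is \tightklawe{}, by \refdef{def:tight} we have $u \notin \hiddenvertices{X \setminus \set{u}}$, so there exists a source-path $P_2$ ending in $u$ such that $P_2 \intersectionSP (X \setminus \set{u}) = \emptyset$; equivalently, $P_2 \intersectionSP X = \set{u}$ (note $u \in P_2$ since $P_2$ ends at $u$).

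Next, form a new source-path $P$ by concatenating $P_2$ (from its source up to $u$) with the sub-path of $P_1$ from $u$ onwards to $v$. This is a legitimate source-path to $v$ because both pieces meet at $u$ and are directed towards $v$. By the choice of $u$ as the highest-level element of $P_1 \intersectionSP X$, the tail of $P_1$ strictly above $u$ is disjoint from $X$, and by the choice of $P_2$ we have $P_2 \intersectionSP X = \set{u}$. Hence $P \intersectionSP X = \set{u}$, so $\setsize{P \intersectionSP X} = 1$, which is exactly what we need.

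\medskip

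No step here is really an obstacle---the only thing to be careful about is that $u$ exists at all, which is guaranteed precisely because $v$ is hidden by $X$ (so $P_1$ is blocked), and that taking the \emph{highest-level} vertex of $P_1 \intersectionSP X$ is what makes the ``tail above $u$'' argument clean. This observation is essentially a weakening of the construction inside the proof of \reflem{lem:klawe-lemma-three-one} and will slot in as a convenient lemma for the subsequent arguments about \connectedklawe sets and the \icecreamconelemma.
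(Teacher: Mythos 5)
Your proof is correct and is essentially the paper's own argument: pick any source-path to $v$, take the highest-level vertex $u$ of its intersection with $X$, use tightness to get a source-path to $u$ meeting $X$ only in $u$, and splice the two at $u$. The paper uses the name $y$ instead of $u$, but the construction is identical.
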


\begin{proof}
  Let $P_1$ be any \sourcepath to~$v$ and note that
  $P_1$ intersects $X$ since $v \in \hiddenvertices{X}$.
  Let
  $y$ be the last vertex on $P_1$ in $P_1 \intersectionSP X$, \ie the
  vertex on the highest level in this intersection. 
  Since $X$ is \tightklawe, there is a \sourcepath $P_2$ to $y$ that
  does not intersect
  $X \setminus \set{y}$.
  Let $P$ be the path that starts like $P_2$ and then switches to
  $P_1$ in~$y$. 
  Then
  $
  \setsize{P \intersectionSP X} 
  =
  \setsize{\set{y}}
  = 1$.
\end{proof}

Using  
\reftwoobs
{obs:necessary-hiding-set-in-subpyramid}
{obs:hidden-vertex-has-path-with-one-element-intersection},
we can simplify the definition of 
the \hidingsetklawe graph.
Note that
\refobs{obs:necessary-hiding-set-in-subpyramid}
is not true for arbitrary layered DAGs, however, or even for arbitrary
layered planar DAGs, so the simplification below does not work in
general.

\begin{proposition}
  \label{pr:simplification-edges}
  Let
  $\hidsetgraph = \hidsetgraph(\pyramidgraph, X)$ 
  be the \hidingsetklawe graph for a \tightklawe set of vertices $X$
  in a pyramid~$\pyramidgraph$, 
  and suppose that
  $u, v \in \hiddenvertices{X}$.
  Then the following conditions are equivalent:
  \begin{enumerate}
  \item 
    \label{item:edge-formal-def}
    $(u,v)$ is an edge in $\hidsetgraph$, \ie
    $
    \belowverticespyramid{u} 
    \intersectionSP
    \hiddenvertices{\necessaryhidingvert{X}{u}}
    \intersectionSP
    \belowverticespyramid{v}
    \intersectionSP
    \hiddenvertices{\necessaryhidingvert{X}{v}}
    \neq \emptyset
    $.
    
  \item
    \label{item:edge-small-simplification}
    $
    \hiddenvertices{\necessaryhidingvert{X}{u}}
    \intersectionSP
    \hiddenvertices{\necessaryhidingvert{X}{v}}
    \neq \emptyset
    $.
    
  \item
    \label{item:edge-bigger-simplification}
    $
    \necessaryhidingvert{X}{u}
    \intersectionSP
    \necessaryhidingvert{X}{v}
    \neq \emptyset
    $.
  \end{enumerate}
\end{proposition}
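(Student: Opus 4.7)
The plan is to establish the chain of implications (\ref{item:edge-formal-def}) $\Rightarrow$ (\ref{item:edge-small-simplification}) $\Leftrightarrow$ (\ref{item:edge-bigger-simplification}) $\Rightarrow$ (\ref{item:edge-formal-def}), where most of the work goes into the (\ref{item:edge-small-simplification}) $\Rightarrow$ (\ref{item:edge-bigger-simplification}) step.

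First I would dispatch (\ref{item:edge-formal-def}) $\Leftrightarrow$ (\ref{item:edge-small-simplification}). The forward direction is immediate since the intersection in (\ref{item:edge-small-simplification}) contains the intersection in (\ref{item:edge-formal-def}). For the reverse direction, I would invoke Observation~\ref{obs:necessary-hiding-set-in-subpyramid}, which tells us that $\hiddenvertices{\necessaryhidingvert{X}{u}} \subseteq \belowverticespyramid{u}$ (and symmetrically for $v$); hence the extra factors $\belowverticespyramid{u}$ and $\belowverticespyramid{v}$ in (\ref{item:edge-formal-def}) are redundant, and a nonempty intersection in (\ref{item:edge-small-simplification}) automatically yields one in (\ref{item:edge-formal-def}).

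Next I would handle (\ref{item:edge-bigger-simplification}) $\Rightarrow$ (\ref{item:edge-small-simplification}), which is a triviality: any vertex $w$ belongs to $\hiddenvertices{Y}$ for every $Y$ containing~$w$, since every source-path visiting $w$ passes through $w$ itself and is thus blocked. Hence a common element of the two necessary hiding subsets lies in both hidden-vertex sets.

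The interesting step is (\ref{item:edge-small-simplification}) $\Rightarrow$ (\ref{item:edge-bigger-simplification}). Suppose $w \in \hiddenvertices{\necessaryhidingvert{X}{u}} \cap \hiddenvertices{\necessaryhidingvert{X}{v}}$. Since $\necessaryhidingvert{X}{u} \subseteq X$, we also have $w \in \hiddenvertices{X}$, and because $X$ is tight, Observation~\ref{obs:hidden-vertex-has-path-with-one-element-intersection} supplies a source-path $P$ ending at $w$ with $\setsize{P \intersectionSP X} = 1$; let $\setsmall{x} = P \intersectionSP X$. Now $P$ is a source-path visiting~$w$, so it must be blocked by $\necessaryhidingvert{X}{u}$; since $\necessaryhidingvert{X}{u} \subseteq X$, the only candidate is $x$, giving $x \in \necessaryhidingvert{X}{u}$. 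The identical argument with $v$ in place of $u$ yields $x \in \necessaryhidingvert{X}{v}$, so $x$ is a common element and (\ref{item:edge-bigger-simplification}) holds. The main subtlety here is realizing that a single path suffices for both conclusions; once one sees that the tightness of $X$ produces a path meeting $X$ in exactly one vertex, the argument is forced. No planarity is needed for the proposition itself beyond what is already baked into Observation~\ref{obs:necessary-hiding-set-in-subpyramid}.
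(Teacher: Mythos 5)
Your proof is correct and follows essentially the same route as the paper's: (1)~$\Rightarrow$~(2) and (3)~$\Rightarrow$~(2) are immediate, (2)~$\Rightarrow$~(1) uses Observation~\ref{obs:necessary-hiding-set-in-subpyramid}, and (2)~$\Rightarrow$~(3) picks a source-path with singleton intersection with~$X$ via Observation~\ref{obs:hidden-vertex-has-path-with-one-element-intersection} and observes that the single intersection vertex must lie in both $\necessaryhidingvert{X}{u}$ and $\necessaryhidingvert{X}{v}$. The paper states the same four implications and arguments, just not packaged as an explicit cyclic chain.
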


\begin{proof}
  The directions
  (\ref{item:edge-formal-def})
  $\Rightarrow$
  (\ref{item:edge-small-simplification})
  and
  (\ref{item:edge-bigger-simplification})
  $\Rightarrow$
  (\ref{item:edge-small-simplification})
  are immediate.
  The implication 
  (\ref{item:edge-small-simplification})
  $\Rightarrow$
  (\ref{item:edge-formal-def})
  also follows easily, since
  $
  \hiddenvertices{\necessaryhidingvert{X}{u}}
  \subseteq
  \belowvertices[\pyramidgraph]{u}
  $
  and
  $
  \hiddenvertices{\necessaryhidingvert{X}{v}}
  \subseteq
  \belowvertices[\pyramidgraph]{v}
  $
  by
  \refobs{obs:necessary-hiding-set-in-subpyramid}.
  To prove
  (\ref{item:edge-small-simplification})
  $\Rightarrow$
  (\ref{item:edge-bigger-simplification}),
  fix some vertex
  $
  w \in
  \hiddenvertices{\necessaryhidingvert{X}{u}}
  \intersectionSP
  \hiddenvertices{\necessaryhidingvert{X}{v}}
  $
  and let
  $P$ be a \sourcepath to $w$ as in
  \refobs{obs:hidden-vertex-has-path-with-one-element-intersection}
  with
  $P \intersectionSP X  = \set{y}$ for some vertex~$y$.
  Since
  $
  P \intersectionSP \necessaryhidingvert{X}{u}
  \neq \emptyset \neq
  P \intersectionSP \necessaryhidingvert{X}{u}
  $
  by assumption, we have
  $
  y 
  \in
  \necessaryhidingvert{X}{u} \intersectionSP \necessaryhidingvert{X}{v}
  \neq \emptyset
  $.
\end{proof}

As the first part of the proof of
\reflem{lem:ice-cream-cone-lemma},
we show that all vertices \hiddenklawe by a \connectedklawe
set $X$ are contained in a subpyramid, the top vertex of which is also
\hiddenklawe by~$X$. This gives the ice-cream cone shape alluded
to by the name of
the lemma. 

\begin{lemma}
  \label{lem:hidden-vertices-contained-in-subpyramid}
  Let
  $\hidsetgraph = \hidsetgraph(\pyramidgraph, X)$ 
  be the \hidingsetklawe graph of a \connectedklawe vertex set $X$ in
  a pyramid~$\pyramidgraph$.
  Then there is a unique vertex
  $x \in \hiddenvertices{X}$
  \st
  $\hiddenvertices{X} \subseteq \belowverticespyramid{x}$.
\end{lemma}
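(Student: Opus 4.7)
The plan is to take $x$ to be a vertex of maximum level in $\hiddenvertices{X}$, which exists by finiteness, and to show that this single vertex dominates every hidden vertex by containment in $\belowverticespyramid{x}$. Once containment is established, uniqueness of $x$ is immediate: a second vertex $x' \in \hiddenvertices{X}$ with $\hiddenvertices{X} \subseteq \belowverticespyramid{x'}$ would force $x \in \belowverticespyramid{x'}$ and $x' \in \belowverticespyramid{x}$, and hence $x = x'$. The work therefore lies entirely in proving the containment.

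I proceed by contradiction: suppose that $v \in \hiddenvertices{X}$ does not lie in $\belowverticespyramid{x}$. Since $\vlevel{v} \leq \vlevel{x}$ by the choice of $x$, the planar embedding of the pyramid together with \refobs{obs:planarity} forces $v$ to lie strictly to one side of the triangular region $\belowverticespyramid{x}$, and by the left/right symmetry of the pyramid I may assume WLOG that $v$ sits strictly to the left of $\nwpaththrough{x}$. Using \connectednessklawe{} of $\hidsetgraph(X)$, I then fix a shortest path $v = u_0, u_1, \ldots, u_k = x$ in the \hidingsetklawe graph. By the simplified edge criterion in \refpr{pr:simplification-edges}, every consecutive pair $u_i, u_{i+1}$ is witnessed by a common necessary hider $y_i \in \necessaryhidingvert{X}{u_i} \intersectionSP \necessaryhidingvert{X}{u_{i+1}}$, and in particular $y_i \in \belowverticespyramid{u_i} \intersectionSP \belowverticespyramid{u_{i+1}}$ is a common descendant in the pyramid.

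The crucial step is to follow this sequence as it crosses $\nwpaththrough{x}$. Let $j$ be the first index such that $u_j$ is not strictly to the left of $\nwpaththrough{x}$, so that $u_{j-1}$ lies strictly to the left while $u_j$ lies on or to the right of that NW-path, and $y_{j-1}$ lies below both. The idea is then to apply \refobs{obs:hidden-vertex-has-path-with-one-element-intersection} to obtain \sourcepath{}s to $u_{j-1}$ and to $x$ (continuing from $u_j$) which each meet $X$ at exactly one vertex, splice them at or near $y_{j-1}$, and invoke \refobs{obs:planarity} to derive a \sourcepath{} that either violates the tightness of $X$ or exhibits an unblocked route to one of $u_{j-1}, u_j, x$, contradicting the fact that $X$ \hideklawe{}s the whole chain.

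The main obstacle I anticipate is making this planar splicing argument fully rigorous. The position of $u_j$ relative to $\belowverticespyramid{x}$ is not pinned down---it may lie strictly inside the subpyramid, on $\nwpaththrough{x}$, or even strictly to the right---and one must simultaneously guarantee that the constructed source path does not accidentally meet $X$ at an extra vertex that would void the intended contradiction. I expect that the minimality of the chosen path in $\hidsetgraph(X)$ (no chord shortcuts) together with the tightness of $X$ and careful bookkeeping of the crossing of $\nwpaththrough{x}$ via \refobs{obs:planarity} will keep the required case analysis bounded, after which the lemma follows.
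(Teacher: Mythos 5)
You take $x$ to be a hidden vertex of maximum level and then try to establish the containment $\hiddenvertices{X} \subseteq \belowverticespyramid{x}$ by showing that no edge of $\hidsetgraph(X)$ can ``cross'' $\nwpaththrough{x}$. This is exactly where the argument breaks. A non-crossing argument of this kind needs the separator path to be disjoint from $\hiddenvertices{X}$, and hence from $X$; only then can one splice a prefix of the separator with the tail of a \sourcepath{} witnessing some $y \in \necessaryhidingvert{X}{u}$ and obtain a \sourcepath{} to $u$ that misses $X$, contradicting $u \in \hiddenvertices{X}$. But your $x$ is hidden by assumption, so \emph{every} \sourcepath{} to $x$ --- in particular the prefix of $\nwpaththrough{x}$ ending at $x$ --- must meet $X$. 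Worse, the segment of $\nwpaththrough{x}$ strictly below $x$ can itself contain hidden vertices (any $u \in X$ on it is hidden by $X$), so a vertex $u_j$ of your shortest $\hidsetgraph$-path can legitimately lie \emph{on} $\nwpaththrough{x}$ and no contradiction arises. Your anticipated obstacle (``making the planar splicing fully rigorous'') is therefore not a matter of bookkeeping: the free separator through $x$ that the splicing needs simply does not exist once you insist that $x \in \hiddenvertices{X}$.

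The paper's proof avoids this by reversing the logic. It does not start from a vertex known to be hidden: it constructs $x$ as the meeting point of $\nepaththrough{s_i}$ and $\nwpaththrough{s_j}$, where $s_i$ (resp.\ $s_j$) is the leftmost (resp.\ rightmost) source whose NE-path (resp.\ NW-path) meets $\hiddenvertices{X}$, so that $\hiddenvertices{X} \subseteq \belowverticespyramid{x}$ holds by construction. It then proves $x \in \hiddenvertices{X}$ by contradiction: precisely because $x$ is \emph{assumed not hidden}, there is a \sourcepath{} to $x$ disjoint from $\hiddenvertices{X}$, which extends upward (nothing above $x$ is hidden) to a path $P^*$ to the sink with $P^* \cap \hiddenvertices{X} = \emptyset$. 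With $P^*$ genuinely free, the splicing argument is sound and shows $\hidsetgraph(X)$ would be disconnected. If you want to salvage your write-up, adopt this construction of $x$ and prove hiddenness last; the $x$ it produces does turn out to be the unique maximum-level element of $\hiddenvertices{X}$, but that is a corollary of the lemma, not a viable starting point for its proof.
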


\begin{proof}
  It is clear that at most one vertex
  $x \in \hiddenvertices{X}$
  can have the properties stated in the lemma. We show that such
  a vertex exists.
  As a quick preview of the proof, we note that it
  is easy to find a unique vertex $x$ on minimal level \st
  $\hiddenvertices{X} \subseteq \belowverticespyramid{x}$.
  The crucial part of the lemma is that 
  $x$ is \hiddenklawe by~$X$.
  The reason that this holds is that the   
  graph $\hidsetgraph$   is connected. If
  $x \notin \hiddenvertices{X}$, 
  we can find a \sourcepath $P$ to the top vertex $z$ of the pyramid 
  \st $P$ does not intersect $X$ but  
  there are vertices in $\hidsetgraph$
  both to the left and to the right of~$P$.
  But there is no way we can have an edge 
  crossing~$P$ in $\hidsetgraph$,
  so the \hidingsetklawe graph cannot be connected after all.
  Contradiction.

\begin{figure}[tp]
  \centering
  \includegraphics{icecreamfigures.3}%
  \caption{%
    Illustration of proof of
    \reflem{lem:hidden-vertices-contained-in-subpyramid}
    that $\hidsetgraph$ is not connected if
    $x \notin \hiddenvertices{X}$.%
    }
  \label{fig:icecreamtopvertex}
\end{figure}

  The above paragraph really is the whole proof, but let us also provide
  the (somewhat tedious) formal details for completeness.
  To follow the formalization of the argument, the reader might be
  helped by looking at
  \reffig{fig:icecreamtopvertex}.
  Suppose that $\pyramidgraph$ has height $h$ and let
  $s_1, s_2, \ldots, s_{h+1}$ 
  be the sources   
  enumerated from left to right.
  Look at the \nepathlong{}s 
  $\nepaththrough{s_1}, \nepaththrough{s_2}, 
  \ldots$ 
  and let $s_i$ be the first vertex \st
  $\nepaththrough{s_i} \intersectionSP \hiddenvertices{X} 
  \neq
  \emptyset$.
  Similarly, consider
  $\nwpaththrough{s_{h+1}}, 
  \nwpaththrough{s_{h}}, 
  \ldots$
  and let $s_j$ be the first vertex \st
  $\nwpaththrough{s_j} \intersectionSP \hiddenvertices{X} \neq
  \emptyset$.
  It clearly holds that $i \leq j$.

  Let $x$ be the unique vertex where
  $\nepaththrough{s_i}$
  and
  $\nwpaththrough{s_j}$
  intersect.
  By construction, we have
  $\hiddenvertices{X} \subseteq \belowverticespyramid{x}$,
  since no \nepath to the left of 
  $\nepaththrough{s_i} = \nepaththrough{x}$ 
  intersects
  $\hiddenvertices{X}$
  and neither does any \nwpath to the right of
  $\nwpaththrough{s_j} = \nwpaththrough{x}$.
  We need to show that it also holds that
  $x \in \hiddenvertices{X}$.
  
  To derive a contradiction, suppose instead that
  $x \notin \hiddenvertices{X}$.
  By definition, there is a path $P$ from some source $s^*$ to $x$ \st
  $P \intersectionSP \hiddenvertices{X} = \emptyset$.
  $P$ cannot coincide with
  $\nepaththrough{x}$
  or
  $\nwpaththrough{x}$
  since the latter two paths both intersect $\hiddenvertices{X}$
  by construction.
  Since
  $\aboveverticespyramidNR{x} \intersectionSP \hiddenvertices{X}
  = \emptyset$,
  we can extend $P$ to a path
  $\pathfromto{P^*}{s^*}{z}$
  via $x$
  having the property that
  $P^* \intersectionSP \hiddenvertices{X} = \emptyset$
  but there are  vertices in $\hidsetgraph(X)$ 
  both to the left and to the right of~$P^*$,  namely,
  the non-empty sets
  $
  \nepaththrough{x} 
  \intersectionSP 
  \hiddenvertices{X}
  \intersectionSP
  \belowverticespyramid{x}
  $
  and
  $
  \nwpaththrough{x} 
  \intersectionSP 
  \hiddenvertices{X}
  \intersectionSP
  \belowverticespyramid{x}
  $.
  We claim that this implies that $\hidsetgraph$ is not connected. 
  This is a contradiction to the assumptions in the
  statement of the lemma and it follows that
  $x \in \hiddenvertices{X}$ must hold.

  To establish the claim, note that if $\hidsetgraph$ is connected,
  there must exist some edge $(u,v)$  
  between a vertex $u$ to the left of $P^*$ and a vertex $v$ to the
  right of~$P^*$.
  Then
  \refpr{pr:simplification-edges}
  says that
  $
  \hiddenvertices{\necessaryhidingvert{X}{u}}
  \intersectionSP
  \hiddenvertices{\necessaryhidingvert{X}{v}}
  \neq \emptyset
  $.
  Pick any vertex
  $
  w \in
  \hiddenvertices{\necessaryhidingvert{X}{u}}
  \intersectionSP
  \hiddenvertices{\necessaryhidingvert{X}{v}}
  $
  and assume \wolog that $w$ is on the right-hand side of~$P^*$.
  We prove that such a vertex $w$ cannot exist.
  See the example vertices labelled $u$, $v$ and $w$ in
  \reffig{fig:icecreamtopvertex}, 
  which illustrate the fact that
  $w \notin \hiddenvertices{\necessaryhidingvert{X}{u}}$
  if
  $w \in \hiddenvertices{\necessaryhidingvert{X}{v}}$.

  Since $w$ is assumed to be \hiddenklawe by 
  $\hiddenvertices{\necessaryhidingvert{X}{u}}$,
  the \nwpath through~$w$ must intersect
  $\necessaryhidingvert{X}{u}$ somewhere before $w$ or in~$w$.
  Fix any
  $
  y \in
  \nwpaththrough{w}
  \intersectionSP
  \necessaryhidingvert{X}{u}
  \intersectionSP
  \belowverticespyramid{w}
  $
  and note that $y$ must also be located to the right of~$P^*$. By
  \refdef{def:necessary-subcover},
  there is a \sourcepath $P'$ via $y$ to $u$ \st
  $P' \intersectionSP X = \set{y}$. But $P'$ must intersect
  $P^*$ somewhere above~$y$, since $y$ is to the right and $u$ is to
  the left of~$P^*$.
  (Here we use
  \refobs{obs:planarity}.)
  Consider the \sourcepath that starts like $P^*$ and then
  switches to $P'$ at some intersection point in
  $P' \intersectionSP P^* \intersectionSP \aboveverticespyramidNR{y}$.
  This path reaches $u$ but does not intersect~$X$, 
  contradicting the assumption 
  $u \in \hiddenvertices{X}$. 
  It follows that
  $
  \hiddenvertices{\necessaryhidingvert{X}{u}}
  \intersectionSP
  \hiddenvertices{\necessaryhidingvert{X}{v}}
  = \emptyset
  $
  for all $u$ and $v$ on different sides of~$P^*$,
  so there are no edges across $P^*$ in~$\hidsetgraph$.
  This proves the claim.
\end{proof}

The second part needed to prove
\reflem{lem:ice-cream-cone-lemma}
is that all vertices in $X$ are required to
\hideklawe the top vertex 
$x \in \hiddenvertices{X}$ 
found in
\reflem{lem:hidden-vertices-contained-in-subpyramid}.

\begin{lemma}
  \label{lem:tightness-implies-all-hide-top-vertex}
  Let
  $\hidsetgraph = \hidsetgraph(\pyramidgraph, X)$ 
  be the \hidingsetklawe graph of a \connectedklawe vertex set $X$ in
  a pyramid~$\pyramidgraph$ and let  
  $x \in \hiddenvertices{X}$
  be the unique vertex \st
  $\hiddenvertices{X} \subseteq \belowverticespyramid{x}$.
  Then
  $X = \necessaryhidingvert{X}{x}$.
\end{lemma}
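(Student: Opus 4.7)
The goal is to show that every $u \in X$ lies in $\necessaryhidingvert{X}{x}$, i.e., there is a source-path $P$ ending at $x$ with $P \cap X = \{u\}$. As a first reduction, note that $u \in X \subseteq \hiddenvertices{X}$ and $X$ is \tightklawe, so \refobs{obs:hidden-vertex-has-path-with-one-element-intersection} yields a source-path $P_u$ ending at $u$ with $P_u \cap X = \{u\}$. Moreover, by \reflem{lem:hidden-vertices-contained-in-subpyramid} we have $u \in \belowverticespyramid{x}$, so paths from $u$ to $x$ exist in $\pyramidgraph$. It therefore suffices to exhibit a path $Q$ from $u$ to $x$ whose interior $Q \setminus \{u\}$ avoids $X$; concatenating $P_u$ with such a $Q$ produces the desired source-path and proves $u \in \necessaryhidingvert{X}{x}$.

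The plan is to prove the existence of $Q$ by contradiction, in close analogy with the proof of \reflem{lem:hidden-vertices-contained-in-subpyramid}. Suppose no such $Q$ exists, so $X \setminus \{u\}$ blocks every $u$-to-$x$ path. I would first build a source-to-sink path $P^*$ through $u$ with $P^* \cap X = \{u\}$ as follows: begin with $P_u$, and extend upward from $u$ by choosing, at each step, a successor that leads to a vertex not in $X$ and not in the ``cone towards $x$''; because every $u$-to-$x$ path meets $X \setminus \{u\}$, at $u$ itself one can turn away from $x$ (say, switching to $\nepaththrough{\cdot}$ or $\nwpaththrough{\cdot}$ on the side opposite to $x$), and once above level $\vlevel{x}$ the path is outside $\hiddenvertices{X}$ by \reflem{lem:hidden-vertices-contained-in-subpyramid} and can be continued to the sink $z$ by an NE- or NW-path that misses $X$ (any hit would violate $\hiddenvertices{X} \subseteq \belowverticespyramid{x}$). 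By construction, $x$ lies strictly on one side of $P^*$, while the vertex in $X \setminus \{u\}$ blocking the $u$-to-$x$ path we turned away from lies on the other side, yielding two non-empty sets of $\hiddenvertices{X}$-vertices on opposite sides of $P^*$.

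The final step then mirrors verbatim the concluding paragraph of the proof of \reflem{lem:hidden-vertices-contained-in-subpyramid}: no edge of $\hidsetgraph(X)$ can cross $P^*$. Indeed, by \refpr{pr:simplification-edges}, an edge $(y,y')$ between vertices on opposite sides of $P^*$ requires a common $w \in \hiddenvertices{\necessaryhidingvert{X}{y}} \cap \hiddenvertices{\necessaryhidingvert{X}{y'}}$; \wolog $w$ lies on the same side as $y$, so taking a source-path through some $\necessaryhidingvert{X}{y'}$-witness for $w$ (this witness lies on the opposite side from $y'$) and switching to $P^*$ at an intersection point produced by \refobs{obs:planarity} yields a source-path to $y'$ that meets $X$ only at $u \neq y'$, contradicting $y' \in \hiddenvertices{X}$. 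Hence $\hidsetgraph(X)$ is disconnected, contradicting the \connectednessklawe of $X$ and completing the argument.

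The main obstacle is the construction of the separating path $P^*$: one must verify that the local move away from $x$ at $u$ can be extended to a full source-to-sink path meeting $X$ only at $u$, and that the side on which $x$ ends up contains at least one hidden vertex distinct from everything on $P^*$ so that the resulting disconnection of $\hidsetgraph$ is genuine. Both points rely on the fact, already used in \reflem{lem:hidden-vertices-contained-in-subpyramid}, that $\hiddenvertices{X}$ sits entirely in the sub-pyramid $\belowverticespyramid{x}$, combined with a careful choice (NE vs.\ NW from $u$) dictated by the position of $x$ relative to $\nepaththrough{u}$ and $\nwpaththrough{u}$.
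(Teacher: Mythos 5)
Your reduction at the start is sound: by tightness of $X$ and \refobs{obs:hidden-vertex-has-path-with-one-element-intersection} there is a source-path $P_u$ to $u$ with $P_u \cap X = \setsmall{u}$, and it then suffices to produce a path from $u$ to $x$ whose interior avoids $X$. The trouble starts with the contradiction argument. The construction of the separating source-to-sink path $P^*$ through $u$ with $P^* \cap X = \setsmall{u}$ is not justified: extending $P_u$ above $u$ by greedily stepping to a successor ``outside $X$ and away from $x$'' has no guarantee of succeeding at every level, and you never argue that some upward extension from $u$ to $z$ misses $X$. (One can in fact show that connectedness forces $u$ to have an edge $(u,y)$ in $\hidsetgraph(X)$, hence a path from $u$ to some $y \in \hiddenvertices{X}$ avoiding $X$ above $u$, but continuing past $y$ to the sink outside $X$ still needs an argument you do not give.)

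The more fundamental gap is that the disconnection step does not transfer from \reflem{lem:hidden-vertices-contained-in-subpyramid}: there the separating path is chosen disjoint from $\hiddenvertices{X}$, so every vertex of $\hidsetgraph(X)$ lies strictly on one side and the absence of crossing edges genuinely disconnects the graph. Here $P^*$ necessarily passes through $u \in X \subseteq \hiddenvertices{X}$ (and possibly through further vertices of $\hiddenvertices{X}$ above $u$). These are vertices of $\hidsetgraph(X)$ lying \emph{on} $P^*$, not on either side, and they can bridge the two halves: since $X$ is \tightklawe, $\necessaryhidingvert{X}{u} = \setsmall{u}$, so by \refpr{pr:simplification-edges} the pair $(u,y)$ is an edge exactly when $u \in \necessaryhidingvert{X}{y}$, and such $y$'s can sit on both sides of $P^*$. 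Ruling out edges between vertices strictly on opposite sides therefore does not disconnect $\hidsetgraph(X)$. Relatedly, your stated final contradiction is not one: a source-path to $y'$ that ``meets $X$ only at $u \neq y'$'' is blocked by $X$ at $u$, which is perfectly compatible with $y' \in \hiddenvertices{X}$; the contradiction you would need is a source-path avoiding $\necessaryhidingvert{X}{y'}$, and that only follows when $u \notin \necessaryhidingvert{X}{y'}$---which fails precisely in the problematic case where $u$ bridges the two sides. The paper's proof takes a genuinely different route that avoids all of this: supposing $\necessaryhidingvert{X}{x} \subsetneqq X$, it uses connectedness of $\hidsetgraph(X)$ to locate an edge $(u,u')$ straddling $\hiddenvertices{\necessaryhidingvert{X}{x}}$, then picks $r \in \necessaryhidingvert{X}{u} \cap \necessaryhidingvert{X}{x}$ and $s \in \necessaryhidingvert{X}{u} \setminus \necessaryhidingvert{X}{x}$ and manipulates source-paths through $r$ and $s$ to exhibit a vertex $v \in X$ that is \hiddenklawe by $X \setminus \setsmall{v}$, contradicting \tightnessklawe rather than connectedness.
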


\newcommand{\nechidXx}{\necessaryhidingvert{X}{x}}

\begin{proof}
  By definition,
  $\necessaryhidingvert{X}{x} \subseteq X$.
  We want to show that
  $\necessaryhidingvert{X}{x}= X$.
  Again, let us first try to convey some intuition why the lemma is
  true. If
  $ X \setminus \necessaryhidingvert{X}{x} \neq \emptyset$,
  since $X$ is \connectedklawe there must exist some vertex
  \hiddenklawe by all of $X$ but not by just
  $\necessaryhidingvert{X}{x}$
  or
  $X \setminus \necessaryhidingvert{X}{x}$
  (otherwise there can be no edge between the components of
  $\hidsetgraph$ containing
  $\necessaryhidingvert{X}{x}$
  and
  $X \setminus \necessaryhidingvert{X}{x}$,
  respectively).
  But if so, it can be shown that the extra vertices in
  $X  \setminus \necessaryhidingvert{X}{x}$
  help
  $\necessaryhidingvert{X}{x}$ 
  to \hideklawe one of its own vertices. This contradicts the fact
  that $X$ is \tightklawe, so we must have
  $\necessaryhidingvert{X}{x}= X$
  which proves the lemma.

  Let us fill in the formal details in this proof sketch.
  Assume, to derive a contradiction, that
  $\necessaryhidingvert{X}{x} \neq X$. 
  Since $X$ is \tightklawe, it holds that
  $
  (X \setminus \nechidXx)
  \intersectionSP
  \hiddenvertices{\nechidXx}
  = \emptyset
  $,
  so
  $\hidsetgraph$
  contains vertices outside of~$\hiddenvertices{\nechidXx}$.
  Since $\hidsetgraph$ is connected,
  there must exist some edge 
  $\bigl(u,u'\bigr)$ 
  between a pair of vertices
  $u \in \hiddenvertices{X} \setminus \hiddenvertices{\nechidXx}$
  and
  $u' \in \hiddenvertices{\nechidXx}$.
  \Reflem{lem:klawe-lemma-three-one}
  says that
  $\necessaryhidingvert{X}{u'} \subseteq \nechidXx$
  and
  \refpr{pr:simplification-edges}
  then tells us that
  $\necessaryhidingvert{X}{u} \intersectionSP \nechidXx \neq \emptyset$.
  Also, 
  $\necessaryhidingvert{X}{u} \setminus \nechidXx \neq \emptyset$
  since
  $u \notin \nechidXx$.
  For the rest of this proof, fix some arbitrary vertices
  $r \in \necessaryhidingvert{X}{u} \intersectionSP \nechidXx$
  and
  $s \in 
  \necessaryhidingvert{X}{u} \setminus \nechidXx$.
  We refer to
  \reffig{fig:icecreamnontight}
  for an illustration of the proof from here onwards.

\begin{figure}[tp]
  \centering
  \includegraphics{icecreamfigures.4}%
  \caption{Illustration of proof of
    \reflem{lem:tightness-implies-all-hide-top-vertex}
    that all of $X$ is needed to \hideklawe $x$.}
  \label{fig:icecreamnontight}
\end{figure}

  By
  \refdef{def:necessary-subcover},
  there are \sourcepath{}s
  $P_r$ via $r$ to $u$
  and
  $P_s$ via $s$ to $u$
  that intersect $X$ only in $r$ and $s$, respectively.
  Also, there is a \sourcepath $P$ to $x$ \st
  $P \intersectionSP X = \set{r}$
  since
  $r \in \nechidXx$.
  Suppose \wolog that $s$ is to the right of~$P$.
  The paths $P_s$ and $P$ cannot intersect between $s$ and $u$.
  To see this, observe that if $P_s$ crosses $P$ after $s$ but before
  $r$, then by starting with $P$ and switching to $P_s$ at the
  intersection point we get a \sourcepath to $u$ that is not blocked
  by~$X$. And if the crossing is after $r$, we can start with $P_s$
  and then switch to $P$ when the paths intersect, which implies
  that  $s \in \nechidXx$ contrary to assumption.
  Thus $u$ is located to the right of $P$ as well.

  Extend $P_s$ by going north-west from $u$ until hitting $P$, which
  must happen somewhere in between $r$ and~$x$, and then following $P$
  to~$x$. Denote this extended path by $P^{E}_s$ and let $w$ be the
  vertex starting from which $P^{E}_s$ and $P$ coincide.
  The path
  $P^{E}_s$ must intersect $X$ in some more vertex after $s$
  since
  $s \notin \nechidXx$.
  Pick any 
  $v \in P^{E}_s \intersectionSP (X \setminus \set{s})$.
  By construction, $v$ must be located strictly between $u$ and~$w$.
  We claim that 
  $X \setminus \set{v}$ \hideklawe{}s $v$. This contradicts the
  \tightnessklawe of $X$ and the lemma follows.

  To prove the claim, consider any \sourcepath $P_v$ to $v$ and assume
  that $P_v \intersectionSP (X \setminus \set{v}) = \emptyset$. 
  Then, in particular, $r \notin P_v$.
  Suppose that $P_v$ passes to the left of~$r$. By planarity,
  $P_v$ must intersect $P$ somewhere above~$r$. But if so, we can
  construct a \sourcepath $P'$ to $x$ that starts like $P_v$ and
  switches to $P$ at this intersection point. We get
  $P' \intersectionSP X = \emptyset$, which contradicts
  $x \in \nechidXx$.
  If instead $P_v$  passes $r$ on the right, then $P_v$ must cross
  $P_r$ in order to get to~$v$. This implies that there is a
  \sourcepath $P''$ to $u$ \st
  $P'' \intersectionSP X = \emptyset$,
  namely the path obtained by starting to go along $P_v$ and then
  changing to $P_r$ when the two paths intersect above~$r$.
  Thus we get a contradiction in this case as well.
  Hence, $X \setminus \set{v}$ blocks any \sourcepath to $v$ 
  as claimed.
%
\end{proof}

The 
\icecreamconelemma~\ref{lem:ice-cream-cone-lemma} 
now follows.
%
%
Thereby, the proof of the lower bound on
the black-white pebbling price of pyramid graphs in
\refthP{th:lower-bound-bwpebbling-assuming-KP}
is complete.

\newcommand{\sectionBlobBound}{section\xspace}
\newcommand{\subsectionBlobBound}{subsection\xspace}

%
%

\newcommand{\layeredgraphstd}{G}

\newcommand{\BHunionBBunionWH}%
{
  \Setsize{\blackshidden 
    \unionSP \blacksjustblocked
    \unionSP \whitesbelowhidden}
}

\newcommand{\BHunionWH}%
{\Setsize{\blackshidden \unionSP \whitesbelowhidden}}

\newcommand{\BHiunionWHi}%
{\Setsize{\blackshiddenith \unionSP \whitesbelowhiddenith}}

%
%

\section{A Tight Bound for \MULTIPEBBLINGTEXT the Pyramid}
\label{sec:tight-lower-bound-for-multi-pebbling-pyramid}

Inspired by Klawe's ideas in
\refsec{sec:pebble-games-pyramids}, 
we want to do something similar for the \blobpebblegame
in \refdefP{def:multi-pebble-game}.
In this section, we study \pebblingdag{}s
(\refdef{def:blob-pebbling-DAG})
that are also layered.
We show that for all  such DAGs $G_h$ of height~$h$ that are
spreading in the sense of 
\refdef{def:spreading-graph},
it holds that
$\blobpebblingprice{G_h} = \Tightsmall{h}$.
In particular, this bound holds for pyramids~$\pyramidgraphh$
since they are spreading by
\refth{th:pyramids-are-spreading-graphs}.

The constant factor that we get in our lower bound is moderately
small and explicit.  In fact, we believe that it should hold that
$\blobpebblingprice{G_h} \geq h/2 + \Ordosmall{1}$
for layered spreading graphs $G_h$ of height~$h$, 
just as in the standard black-white pebble game.
As we have not made any real attempt to get optimal constants, the
factor in our lower bound can be improved with a minor effort, but
additional ideas seems to be needed to push 
the constant all the way up to~$\frac{1}{2}$.

\subsection{Definitions and Notation for the \MULTIPEBBLINGTEXT Price
Lower Bound}
\label{sec:klawe-M-pebbling-definitions}

Recall that a vertex set $U$ \hideklawe{}s a black pebble on~$b$ if it
blocks all \sourcepath{}s visiting~$v$. 
For a \multipebble $B$, which is a chain by 
\refdef{def:C-multi-pebble-configuration},
it appears natural to extend this definition by requiring that $U$
should block all paths going through all of~$B$.
We recall the terminology and notation from
\refdef{def:chains-and-paths}
that a black \multipebble $B$ and a path $P$ \introduceterm{agree} 
with each other, or that $P$ is a path \introduceterm{via} $B$,
if $B \subseteq P$, and that $\pathsviachain{B}$ 
denotes the set of all \sourcepath{}s agreeing with $B$. 

\begin{definition}[Blocked black \multipebble{}]
  \label{def:blocking-set}
A vertex set $U$ \introduceterm{blocks} a \multipebble $B$ if
$U$ blocks all
$P \in \pathsviachain{B}$.
\end{definition}

A terminological aside:
Recalling the discussion in the beginning of
\refsec{sec:klawe-bw-pebbling-bound},
it seems natural to say that $U$
\introduceterm{blocks} a black \multipebble $B$
rather than \hideklawe{}s it,
since  standing at the sources we might
``see'' the beginning of $B$,
but if we try to
walk any path via $B$ we will fail before reaching the top of~$B$ 
since $U$ blocks the path.
This distinction between \hidingklawe and blocking turns out to be a
very important one in our lower bound proof for \multipebblingtext price.
Of course, if 
$B$ is an \atomicmultipebbleadj black pebble, \ie $\setsize{B} = 1$, the
\hidingklawe{} and blocking relations coincide.

%

Let us next define what it means to block \anmpctext.

\begin{definition}[Unblocked paths]
  \label{def:unblocked-paths}
  For
  $\mpscnot{B}{W}$
  an \mpscfulltext, the set of 
  \introduceterm{unblocked paths}
  for
  $\mpscnot{B}{W}$
  is
  \begin{equation*}
    \unblocked{\mpscnot{B}{W}}
    =
    \setdescr
        {P \in \pathsviachain{B}}
        {\text{$W$ does not block $P$}}
  \end{equation*}
  and
  we say that $U$ blocks $\mpscnot{B}{W}$ if $U$ blocks
  all paths in
  $\unblocked{\mpscnot{B}{W}}$.
  We say that $U$ blocks
  the \mpctext
  $\mpconf$ if $U$ blocks all
  $\mpscnot{B}{W} \in \mpconf$.
%
%
  If so, we say that 
  $U$ is a 
  \introduceterm{blocker} of $\mpscnot{B}{W}$ or~$\mpconf$,
  respectively,
  or a
  \introduceterm{blocking set} for
  $\mpscnot{B}{W}$ or~$\mpconf$.
\end{definition}

Comparing to
\refsec{sec:klawe-bw-pebbling-bound},
note that when blocking a path
$P \in \pathsviachain{B}$,
$U$ can only use the white pebbles $W$ that are associated with $B$
in
$\mpscnotstd$.
Although there might be white pebbles from other \mpsctext{}s
$\mpscnotprime \neq \mpscnotstd$ 
that
would be really helpful, $U$ cannot enlist the help of the 
white pebbles in~$W'$ when blocking~$B$.
The reason for defining the blocking relation in this way is that
these white pebbles can suddenly disappear due to pebbling moves
performed on such 
\mpsctext{}s~$\mpscnotprime$.

Reusing the definition of measure in
\refdefP{def:measure}, we generalize the concept of 
\introduceterm{potential} 
to \mpctext{}s as follows.

\begin{definition}[\Multipebblingtext potential]
  \label{def:potential-M-pebbling}
  The
  \introduceterm{potential} 
  of an \anmpctext $\mpconf$
  is
  \begin{equation*}
  \vpotential{\mpconf} 
  =
  \minofset{\meastopot{U}}
  {\text{$U$ blocks $\mpconf$}}
  \eqperiod
  \end{equation*}                                
  If $U$ is such that $U$ blocks $\mpconf$ and
  $U$ has minimal measure $\vmeasure{U}$ among all blocking sets for
  $\mpconf$,  we say that $U$ is a \introduceterm{\minmeasure{}}
  blocking set   for $\mpconf$.
\end{definition}

To compare \multipebblingtext potential with the black-white pebbling
potential in 
\refdef{def:potential},
consider the following examples with vertex labels as in
Figures~\ref{fig:pyramid-height-6}
and \mbox{\ref{fig:convergingpaths}--\ref{fig:hidingsetgraph}}.
\begin{example}
  \label{ex:first-example-mpebbling-potential}
  For the \mpctext
  $\mpconf =
  \Set{\mpscnot{z}{y_1}, \mpscnot{z}{y_2}}
  $,
  the \minmeasure blocker 
  is 
  $U = \set{z}$.
  In comparison, the standard black-white pebble configuration
  $\pconf
  = (B,W)
  = (\set{z}, \set{y_1,y_2})
  $
  has
  $U = \emptyset$ as \minmeasure \hidingsetklawe{}.
\end{example}

\begin{example}
  \label{ex:second-example-mpebbling-potential}
  For the \mpctext
  $\mpconf =
  \Set{
    \mpscnot{z}{\emptyset},
    \mpscnot{y_1}{x_1,x_2}
  }
  $,
  the \minmeasure blocker  
  is again
  $U = \set{z}$.
  In comparison, for the standard black-white pebble configuration
  $\pconf
  = (B,W)
  = (\set{z,y_1}, \set{x_1,x_2})
  $
  we have the  \minmeasure \hidingsetklawe{}
  $U = \set{x_3}$.
\end{example}

\begin{remark}
  Perhaps it is also worth pointing out that
  \refdef{def:potential-M-pebbling}
  is indeed a strict generalization of
  \refdef{def:potential}.
%
%
  Given a black-white pebble configuration
  $\pconf = (B,W)$
  we can construct an equivalent \mpctext
  $\mpconf(\pconf)$
  \wrt potential
  by setting
  \begin{equation}
    \label{eq:bwpebbling-potential-to-mpebbling-potential}
    \mpconf(\pconf)
    =
    \Setdescr
    {\mpscnotcompact{b}{W \intersectionSP \belowvertices{b}}}
    {b \in B}
  \end{equation}
  but as the examples above show going in the other direction is not
  possible. 
\end{remark}

Since we have accumulated a number of different minimality criteria
for blocking sets, let us pause to clarify the terminology:
\begin{itemize}
\item 
  The vertex set $U$ is a \introduceterm{\subsetminimal{}}, or just
  \introduceterm{minimal}, blocking set for the \mpctext $\mpconf$ if
  no strict subset $U' \subsetneqq U$ is a blocking set for $\mpconf$.

\item
 $U$ is a \introduceterm{\minmeasure{}} blocking set for $\mpconf$ if
 it has minimal measure among all blocking sets for $\mpconf$ (and
 thus yields the potential of $\mpconf$).

\item
  $U$ is a \introduceterm{\minsize{}} blocking set for $\mpconf$ if it
  has minimal   size among all blocking sets for $\mpconf$.
  
\end{itemize}
Note that we can assume \wolog that \minmeasure 
and \minsize blockers are both
\subsetminimal, since throwing away superfluous vertices can only
decrease the measure and size, respectively.
However, \minmeasure blockers need not have minimal size and vice
versa.
For a simple example of this, consider
(with vertex labels as  in
Figures~\ref{fig:pyramid-height-6}
\mbox{and \ref{fig:convergingpaths}--\ref{fig:hidingsetgraph}})
the \mpctext
$\mpconf = \Set{\mpscnot{z}{w_3, w_4}}$
and the two blocking sets
$U_1 = \set{z}$
and
$U_2 = \set{w_1, w_2}$.

\subsection{A Lower Bound Assuming a \GENKLAWEPROPACRONYM{}}
\label{sec:klawe-M-pebbling-lower-bound}

For the \blobpebblegame, a useful generalization of
\refpropertyP{property:klawe-property} 
turns out to be the following.

\begin{property}[\Genklaweprop{}]
  \label{property:generalized-klawe-property}
  We say that \anmpctext $\mpconf$
  on a layered \pebblingdag $G$ has the
  \introduceterm{\genklaweprop{} with parameter
    $\gkpconstant$}
  if there is a vertex set $U$
  \st
  \begin{enumerate}
  \item 
    \label{item:GKP-cover}
    $U$ blocks $\mpconf$,

  \item 
    \label{item:GKP-measure}
    $\vpotential{\mpconf} =
    {\meastopot{U}}$,
    \ie $U$ is a \minmeasure blocker of $\mpconf$,
    
  \item 
    \label{item:GKP-size}
    $\setsize{U} \leq \gkpconstant \cdot \mpcost{\mpconf}$.
  \end{enumerate}
  For brevity, in what follows we will just refer to the
  \introduceterm{\genklawepropacronym{}}.

  We say that the graph $G$ has the 
  \genklawepropacronym{} with parameter $\gkpconstant$
  if all \mpctext{}s $\mpconf$
  on $G$ have the \genklawepropacronym{} with parameter $\gkpconstant$.
\end{property}

When the parameter
$\gkpconstant$
is clear from context, we will just write that
$\mpconf$ or  $G$ has the \genklawepropacronym.

For all layered  \pebblingdag{}s $G_h$ of height~$h$
that have the \genklawepropacronym{} and are spreading,
it holds that
$\blobpebblingprice{G_h} = \Tightsmall{h}$.
The proof of this fact is very much in the spirit of the proofs of 
\reflem{lem:potential-property-inductive}
and
\refth{th:lower-bound-bwpebbling-assuming-KP},
although the details are slightly more complicated. 

\begin{theorem}[Analogue of
  \refth{th:lower-bound-bwpebbling-assuming-KP}]
  \label{th:lower-bound-mpebbling-assuming-GKP}
  Suppose that $G_h$ is a layered \pebblingdag of height~$h$
  possessing the 
  \genklawepropacronym{}~\ref{property:generalized-klawe-property}
  with some fixed parameter $\gkpconstant$.  Then for any  
  \pebunconditional \multipebblingtext
  $
  \multipebbling
  =
  \Set{\mpconf_0 = \emptyset, \mpconf_1, \ldots, \mpconf_{\stoptime}}
  $
  of $G_h$   it holds that
  \begin{equation}
    \label{eq:potential-property-inductive-M}
    \vpotential{\mpconf_t} 
    \leq
    (2 \gkpconstant + 1) 
    \cdot
    \maxofexpr[s \leq t]{\mpcost{\mpconf_s}}
    \eqperiod
  \end{equation}
In particular, for any family of layered \pebblingdag{}s $G_h$ that
are also spreading in the sense of
\refdef{def:spreading-graph}, we have
$\blobpebblingprice{G_h} = \Tightsmall{h}$.
\end{theorem}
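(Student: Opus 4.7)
The plan is to mirror the two-step structure of the black-white pebbling lower bound of \refth{th:lower-bound-bwpebbling-assuming-KP}. First I would establish the recursive inequality
\begin{equation*}
  \vpotential{\mpconf_t} \leq \Maxofexpr{\vpotential{\mpconf_{t-1}},\ (2\gkpconstant+1) \cdot \mpcost{\mpconf_t}}
\end{equation*}
by induction on $t$, from which \refeq{eq:potential-property-inductive-M} follows immediately. Then, using spreading, I would show that the \pebcomplete terminal configuration $\mpconf_{\stoptime} = \{\unconditionalblackmpscnot{z}\}$ has potential at least $h+2$, which combined with the recursive bound yields $\mpcost{\multipebbling} \geq (h+2)/(2\gkpconstant+1) = \Bigomega{h}$, matching the $\Bigoh{h}$ upper bound inherited from black pebbling via \reflem{lem:upper-bound-pebbling-price-layered-DAG}.

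The inductive step is a case analysis over the four move types in \refdef{def:multi-pebble-game}. Three of the four cases---erasure, merger, and inflation---are ``free'' in the sense that any blocker of $\mpconf_{t-1}$ is already a blocker of $\mpconf_t$, so $\vpotential{\mpconf_t} \leq \vpotential{\mpconf_{t-1}}$. For erasure this is trivial. For a merger of $\mpscnotstd[1]$ and $\mpscnotstd[2]$ on $\mergervertex$ producing $\mpscnot{B}{W}$, I would take an arbitrary unblocked \sourcepath $P$ via $B$ and show---splitting on whether $\mergervertex \in P$---that $P$ is also an unblocked path for $\mpscnotstd[1]$ or $\mpscnotstd[2]$. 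The essential observation is that any $w \in (W_1 \unionSP W_2) \setminus \setsmall{\mergervertex}$ lying on $P$ automatically satisfies $w \in \unionpathsviavertex{B} \setminus B = \lpp{B}$ and hence $w \in W$, contradicting $W \intersectionSP P = \emptyset$. An analogous argument handles inflation, using $W \supseteq W' \intersectionSP \lpp{B}$ together with $B \intersectionSP W' = \emptyset$. The introduction of a non-source $v$ is also trivially free, since $\prednode{v}$ already blocks every \sourcepath visiting $v$, so $\mpscnot{v}{\prednode{v}}$ has no unblocked paths at all.

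The only nontrivial case is introduction of a source $v$, and here is exactly where I would invoke \refproperty{property:generalized-klawe-property}. Choose a \minmeasure blocker $U_{t-1}$ of $\mpconf_{t-1}$ furnished by the \genklawepropacronym, so $\vmeasure{U_{t-1}} = \vpotential{\mpconf_{t-1}}$ and $\setsize{U_{t-1}} \leq \gkpconstant \cdot \mpcost{\mpconf_{t-1}}$. Set $U_t = U_{t-1} \unionSP \setsmall{v}$; this clearly blocks $\mpscnot{v}{\emptyset}$. For levels $j > 0$ we have $\vjthmeasure{j}{U_t} = \vjthmeasure{j}{U_{t-1}} \leq \vpotential{\mpconf_{t-1}}$ because $v$ lies on level $0$, while at level $0$,
\begin{equation*}
  \vjthmeasure{0}{U_t} = 2\Setsize{U_t} \leq 2\gkpconstant \cdot \mpcost{\mpconf_{t-1}} + 2 \leq (2\gkpconstant+1) \cdot \mpcost{\mpconf_t},
\end{equation*}
using $\mpcost{\mpconf_t} \geq \mpcost{\mpconf_{t-1}} + 1$ when $v$ is a genuinely new chargeable bottom vertex (and the move being trivial otherwise, since $U_{t-1}$ would already cover it). This delivers the recursive inequality.

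For the terminal bound, observe that any blocker $U$ of $\setsmall{\unconditionalblackmpscnot{z}}$ must \hideklawe $\setsmall{z}$, so $z \in \hiddenvertices{U}$. The \tightklawe subset $U^* \subseteq U$ guaranteed by \reflem{lem:covering-set-is-tight} is \connectedklawe (its \hidingsetklawe graph contains the single vertex $z$ as a universal top), and the spreading inequality \refeq{eq:spreading-property} applied at $j=h$ gives $\setsize{U^*} \geq h + 1 - \vminlevel{U^*}$, from which the converging-paths calculation of \reflem{lem:potential-property-final} yields $\vmeasure{U} \geq \vmeasure{U^*} \geq h+2$. The main obstacle I foresee is the merger case, where tracking the interaction between the canceled vertex $\mergervertex$, the sets $B_i$, $W_i$, and the legal pebble positions $\lpp{B}$ requires care; verifying the exact constant $(2\gkpconstant+1)$ may also demand some additional bookkeeping in the source-introduction case when $\mpcost{\mpconf_t}$ happens to be small, but this does not affect the asymptotic conclusion $\blobpebblingprice{G_h} = \Tightsmall{h}$.
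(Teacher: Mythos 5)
Your proposal takes essentially the same route as the paper: a two-step argument consisting of the recursive inequality for potential (proved by a case analysis over move types in which erasure, merger, and inflation leave the old blocker valid, and the \genklawepropacronym{} is invoked exactly for source-vertex introduction) together with the terminal bound $\vpotential{\unconditionalblackmpscnot{z}} \geq h+2$ obtained from the spreading inequality at $j=h$. The only slip is the parenthetical claim that $U_{t-1}$ already blocks $\mpscnot{v}{\emptyset}$ whenever $v$ is already a chargeable bottom vertex of $\mpconf_{t-1}$ --- this fails when $v$ is the bottom of a non-atomic blob, so $\mpcost{\mpconf_t} = \mpcost{\mpconf_{t-1}}$ can occur and the inequality $\mpcost{\mpconf_t} \geq \mpcost{\mpconf_{t-1}}+1$ need not hold; the paper instead uses only $\mpcost{\mpconf_t} \geq \mpcost{\mpconf_{t-1}}$ and $\mpcost{\mpconf_t} \geq 1$, which (as you correctly anticipate) affects only the constant, not the $\Tightsmall{h}$ conclusion.
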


We make two separate observations before presenting the proof.

\begin{observation}
  \label{obs:upper-bound-blob-pebbling-price-layered-DAG}
  For any layered DAG 
  $G_h$ of height~$h$ it holds that
  $\blobpebblingprice{G_h} = \Ordosmall{h}$.
\end{observation}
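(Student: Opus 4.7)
The plan is to exhibit a blob-pebbling strategy for $G_h$ whose cost is bounded by $h + \Ordosmall{1}$, by simulating an optimal black pebbling using only atomic \multipebble{}s. By \reflem{lem:upper-bound-pebbling-price-layered-DAG}, the layered DAG $G_h$ admits a complete black pebbling strategy of cost $h + \Ordosmall{1}$, so it suffices to show that any black pebbling of cost $k$ can be mimicked in the \multipebblegame at cost $k + \Ordosmall{1}$.

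The simulation proceeds move by move. A black pebble placement on a source $s$ is simulated by an introduction move producing the atomic \mpsctext{} $\intrompscnot{s} = \mpscnot{s}{\emptyset}$. For a black pebble placement on a non-source $r$ with $\prednode{r} = \set{p,q}$ (where atomic \multipebble{}s $\mpscnot{p}{\emptyset}$ and $\mpscnot{q}{\emptyset}$ are already present by inductive assumption), first introduce $\intrompscnot{r} = \mpscnot{r}{\set{p,q}}$, then perform a merger with $\mpscnot{p}{\emptyset}$ on $p$ to obtain $\mpscnot{r}{q}$, then merge with $\mpscnot{q}{\emptyset}$ on $q$ to obtain the atomic \multipebble $\mpscnot{r}{\emptyset}$, erasing intermediate \mpsctext{}s after each step. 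A black pebble removal is simulated by erasing the corresponding atomic \mpsctext.

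The cost analysis is straightforward. When the simulated black pebbling has $k$ pebbles, the blob configuration holds $k$ atomic \multipebble{}s contributing exactly $k$ chargeable vertices (and no chargeable white pebbles, since $\emptyset$-supported atomic \multipebble{}s have no chargeable whites). During the three-step placement procedure above, the intermediate \mpsctext{}s $\mpscnot{r}{\set{p,q}}$ and $\mpscnot{r}{q}$ each add at most the single new chargeable vertex $r$ (the vertices $p$ and $q$ are already charged as bottoms of the existing atomic pebbles, so even the chargeable white pebbles on $p, q$ below $r$ do not add to the count). Hence the additive overhead is $\Ordosmall{1}$, yielding $\multipebblingprice{G_h} \leq h + \Ordosmall{1} = \Ordosmall{h}$. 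The only point requiring care is ensuring that the sibling non-reachability property of the layered DAG~$G_h$ permits the merger moves at every step, but this is immediate from \refdef{def:layered-graphs} since $p$ and $q$ are on the same level and hence \unrelatedverticesadj, so $B = \set{r}$ together with the support $\set{p,q}$ or $\set{q}$ meets the \mpscfulltext conditions of \refdef{def:C-multi-pebble-configuration}.
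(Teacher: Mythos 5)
Your proof is correct and takes essentially the same approach as the paper: the paper's own argument is the one-line observation that a \multipebblingtext can mimic an optimal black pebbling (cost $h + \Ordosmall{1}$ by \reflem{lem:upper-bound-pebbling-price-layered-DAG}), and you simply spell out the introduce-then-merge simulation and the cost accounting that the paper leaves implicit.
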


\begin{proof}
  Any layered DAG $G_h$ can be black-pebbled with
  $h + \Ordosmall{1}$ pebbles by
  \refthP{lem:upper-bound-pebbling-price-layered-DAG},
  and it is easy to see that a \multipebblingtext can mimic a black
  pebbling in the same cost.  
\end{proof}

\begin{observation}
  \label{obs:blob-pebbling-potential-of-spreading-graph}
  If $G_h$ is a layered \pebblingdag of height~$h$ that is
  spreading
  in the sense of
  \refdef{def:spreading-graph},
  then
  $\vpotential[G_h]{\unconditionalblackmpscnot{z}} = h + 2$.
\end{observation}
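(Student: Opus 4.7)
The plan is to establish matching upper and lower bounds on $\vpotential[G_h]{\unconditionalblackmpscnot{z}}$. For the upper bound, I will simply exhibit $U = \set{z}$ as a blocker: since every source-path in $\pathsviavertex{\set{z}}$ visits $z$, the singleton $\set{z}$ blocks $\unconditionalblackmpscnot{z}$, and $\vmeasure{\set{z}} = \vjthmeasure{h}{\set{z}} = h + 2 \cdot 1 = h+2$. This shows $\vpotential[G_h]{\unconditionalblackmpscnot{z}} \leq h+2$.

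For the lower bound, let $U$ be any blocker for $\unconditionalblackmpscnot{z}$. Since $W = \emptyset$ in this configuration, $U$ blocks every source-path visiting $z$, which is exactly to say that $U$ \hideklawe{}s $z$. Because the measure is monotone in the underlying set, I may pass to a subset-minimal blocker, so I assume $U$ is minimal, and hence \tightklawe{} (\refdef{def:tight}). Minimality also gives $\necessaryhidingvert{U}{z} = U$.

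The key intermediate step is to argue that $U$ is \connectedklawe{} so that the spreading inequality in \refdef{def:spreading-graph} applies. By \refcor{cor:vertex-and-necessary-cover-in-same-component} the vertex $z$ lies in the same connected component of $\hidsetgraph(U)$ as every vertex of $\necessaryhidingvert{U}{z} = U$; and for any other $v \in \hiddenvertices{U}$, the set $\necessaryhidingvert{U}{v} \subseteq U$ is nonempty and (again by the corollary) pins $v$ to the component of some vertex of $U$, hence to the component of $z$. Thus $\hidsetgraph(U)$ is connected.

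Now I apply the spreading property with $j = h$: since $z \in \hiddenvertices{U}$ lies at level $h$, the minimal blocker of $\vertabovelevel{\hiddenvertices{U}}{h}$ at levels $\geq h$ has size at least~$1$, so
\begin{equation*}
\setsize{U} \;\geq\; \abovelevelblockerminsize{h}{\hiddenvertices{U}} + h - \vminlevel{U} \;\geq\; 1 + h - \levelmin,
\end{equation*}
where $\levelmin = \vminlevel{U}$. Using $\vertabovelevel{U}{\levelmin} = U$, this gives
\begin{equation*}
\vmeasure{U} \;\geq\; \vjthmeasure{\levelmin}{U} \;=\; \levelmin + 2\setsize{U} \;\geq\; \levelmin + 2(h+1-\levelmin) \;=\; 2h+2-\levelmin.
\end{equation*}
If $\levelmin < h$ this already exceeds $h+2$; the remaining case $\levelmin = h$ forces $U = \set{z}$ (as $z$ is the unique sink and hence the only vertex at level $h$ in a layered DAG with unique sink), giving measure exactly $h+2$. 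In either case $\vmeasure{U} \geq h+2$, completing the lower bound.

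The only nonroutine point is the connectedness argument for $\hidsetgraph(U)$, which I expect to be the main obstacle; everything else reduces to a direct converging-paths-style estimate packaged inside the spreading inequality.
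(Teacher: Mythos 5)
Your proof is correct and follows essentially the same approach as the paper's: both exhibit $\set{z}$ for the $h+2$ upper bound, note that for the atomic blob $\unconditionalblackmpscnot{z}$ blocking coincides with \hidingklawe{}, pass to a minimal (hence \tightklawe{}) blocker with $\necessaryhidingvert{U}{z} = U$, invoke \refcor{cor:vertex-and-necessary-cover-in-same-component} for connectedness, and apply the spreading inequality at level $j = h$ to bound $\vjthmeasure{\levelmin}{U}$ from below. The paper argues by contradiction assuming $U \neq \set{z}$ to get a strict inequality, whereas you handle $\levelmin = h$ as a separate case — a cosmetic difference only.
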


\begin{proof}
  The proof is fairly similar to the corresponding case for pyramids in
  \reflem{lem:potential-property-final}.
  Note, though, that in contrast to
  \reflem{lem:potential-property-final},
  here we cannot get the statement from the
  \genklawepropacronym, but instead have to prove it directly.

  Since
  $\mpscblacknot{z}$
  is an \atomicmultipebbleadj \multipebble, the blocking and
  \hidingklawe relations coincide.
  The set
  $U = \set{z}$
  \hideklawe{}s itself and has measure
  $h + 2$. We show that any other blocking set must have strictly
  larger measure.

  Suppose that $z$ is \hiddenklawe{} by some vertex set
  $U' \neq \set{z}$. This $U'$ is minimal \wolog. In particular, we
  can assume that $U'$ is \tightklawe in the sense of
  \refdef{def:tight} and that
  $U' = \necessaryhidingvert{U'}{z}$.
  Then by
  \refcor{cor:vertex-and-necessary-cover-in-same-component}
  it holds that $U'$  is \connectedklawe.
  Letting
  $L = \vminlevelcompact{U'}$
  and setting
  $j = h$
  in the spreading inequality~\refeq{eq:spreading-property},
  we get that
  $
  \Setsize{U'}
  \geq
  1 + h - L
  $
  and hence 
  $
  \vmeasurecompact{U'}
  \geq
  \vjthmeasurecompact{L}{U'}
  \geq
  \mbox{$L +  2( 1 + h - L)$}
  =
  \mbox{$2 h - L + 2$}
  >
  h + 2
  $
  since
  $L < h$.
\end{proof}

\begin{proof}[Proof of
    \refth{th:lower-bound-mpebbling-assuming-GKP}]
  The statement  in the theorem follows from
  \reftwoobs      
  {obs:upper-bound-blob-pebbling-price-layered-DAG}
  {obs:blob-pebbling-potential-of-spreading-graph}
  combined with the inequality~%
  \refeq{eq:potential-property-inductive-M},
  so just as for
  \refth{th:lower-bound-bwpebbling-assuming-KP}
  the crux of the matter is the 
  induction proof 
  needed to get this inequality.
  
  Suppose that
  $U_{t}$ is such that it blocks $\mpconf_{t}$
  and
  $\vpotential{\mpconf_{t}} =
  {\meastopot{U_{t}}}$.
  By the inductive hypothesis, we have that
  $\vpotential{\mpconf_{t}} 
  \leq
  (2 \gkpconstant + 1) \cdot
  \maxofexpr[s \leq t]{\mpcost{\mpconf_s}}$.
  We want to show
  for $\mpconf_{t+1}$ that 
  $\vpotential{\mpconf_{t+1}} 
  \leq
  (2 \gkpconstant + 1) \cdot
  \maxofexpr[s \leq t+1]{\mpcost{\mpconf_s}}$.
  Clearly, this follows if we can prove that
  \begin{equation}
    \label{eq:sufficient-inductive-inequality-potential-M-pebbling}
    \vpotential{\mpconf_{t+1}} 
    \leq
    \maxofexpr{
      \vpotential{\mpconf_{t}},
      (2 \gkpconstant + 1) \cdot
      \mpcost{\mpconf_t}}
    \eqperiod
  \end{equation}
  We also note that if
  $U_t$ blocks $\mpconf_{t+1}$ we are done, since if so
  $
  \vpotential{\mpconf_{t+1}}
  \leq
  \vmeasure{U_t}
  =
  \vpotential{\mpconf_{t}}
  $.

  We make a case analysis depending on the type of move
  in \refdef{def:multi-pebble-game}
  made to get from
  ${\mpconf_{t}}$ to~${\mpconf_{t+1}}$.
  Analogously with the proof of
  \reflem{lem:potential-property-inductive},
  we want to show that we can use $U_t$ to block $\mpconf_{t+1}$
  as long as the move is not an introduction on a source vertex 
  and then use the \genklawepropacronym to take care of such black
  pebble placements on sources. 
  \begin{description}
    \italicitem[Erasure]
    $\mpconf_{t+1} = \mpconf_{t} \setminus \Set{\mpscnotstd}$
    for
    $\mpscnotstd \in \mpconf_{t}$.
    Obviously, 
    $U_t$ blocks $\mpconf_{t+1} \subseteq \mpconf_{t}$.

  \italicitem[Inflation]
    $\mpconf_{t+1} = \mpconf_{t} \unionSP
    \Set{\mpscnotstd}$
    for
    $\mpscnotstd$ inflated from some
    $\mpscnotprime \in \mpconf_{t}$
    \st
    \begin{subequations}
      \begin{align}
        \label{eq:klawe-inflation-one}
        B' &\subseteq B
        \eqcomma
        \\
        \label{eq:klawe-inflation-two}
        W' 
        \intersectionSP 
        \lppstd
        &\subseteq
        W
        \eqcomma \text{ and }
        \\
        \label{eq:klawe-inflation-three}
        B \intersectionSP W' &= \emptyset
        \eqperiod
      \end{align}
    \end{subequations}
    We claim that $U_t$ blocks $\mpscnotstd$ and thus all of
    $\mpconf_{t+1}$.
    Let us first argue intuitively why.
    Suppose that $P$ is any \sourcepath agreeing with~$B$. 
    This path also agrees with~$B'$, and so must be blocked by
    $U_t \unionSP W'$ by assumption.
    If $U_t$
    blocks $B$ we are done.  We can worry, though, that $U_t$ does not
    block~$P$, but that instead $P$ was blocked by some $w \in W'$ that
    disappeared as a result of the inflation move. But if
    $w \in W'$ is on a path via~$B$, it cannot have disappeared, 
    so this can never happen.

    We now write down the formal details.
    With the notation in
    \refdef{def:unblocked-paths},
    fix any path
    $P \in \unblocked{\mpscnot{B}{W}}$.
    We need to show that
    $P \intersectionSP U_t \neq \emptyset$.
    Let us assume \wolog that $P$ ends in
    $\topvertex{B}$,
    for
    $U_t$ blocks $\mpscnotstd$ precisely if it blocks the paths
    $
    P \intersectionSP \belowvertices{\topvertex{B}}
    $
    for all
    $P \in \unblocked{\mpscnot{B}{W}}$.
    We note that 
    by definition, the fact that $P$ agrees with 
    a chain $V$ 
    and ends in $\topvertex{V}$
    implies that
    \begin{equation}
      \label{eq:P-in-V-union-lpp-V}
      P \subseteq V \disjointunionSP \lpp{V}
      \eqperiod
    \end{equation}                                
    Since $P$ agrees with $B$, or in formal notation~%
    $P \in \pathsviachain{B}$,
    and since
    $B' \subseteq B$
    by~%
    \refeq{eq:klawe-inflation-one},
    we have
    $P \in \pathsviachain{B'}$.
    By assumption, $U_t$ blocks $\mpscnotprime$, which in particular
    means that
    $U_t \unionSP W'$
    intersects the path~$P$ agreeing with~$B'$. We get
    \begin{align*}
      \emptyset
      &\neq
      P \intersectionSP \bigl(U_t \unionSP W'\bigr)
      && 
      \bigl[\text{ 
          by definition of blocking
        }\bigr] 
      \\
      &=
      (P \intersectionSP U_t) 
      \unionSP
      \bigr((P \setminus B) \intersectionSP W' \bigl)
      && 
      \bigl[\text{ 
          since $B \intersectionSP W' = \emptyset$
          by \refeq{eq:klawe-inflation-three}
        }\bigr] 
      \\
      &=
      (P \intersectionSP U_t) 
      \unionSP
      \bigl(P \intersectionSP \lpp{B} \intersectionSP W'\bigr)
      && 
      \bigl[\text{ 
          since $P \subseteq B \disjointunionSP \lpp{B}$
          by \refeq{eq:P-in-V-union-lpp-V}
        }\bigr] 
      \\
      &\subseteq
      (P \intersectionSP U_t) 
      \unionSP
      (P \intersectionSP W) 
      && 
      \bigl[\text{ 
          since $\lppstd \intersectionSP W' \subseteq W$
          by \refeq{eq:klawe-inflation-two}
        }\bigr] 
      \\
      &=
      P \intersectionSP U_t
      && 
      \bigl[\text{ 
          $P \intersectionSP W = \emptyset$
          if $P \in \unblocked{\mpscnotstd}$
        }\bigr] 
    \end{align*}
    so
    $P \intersectionSP U_t \neq \emptyset$
    and the desired conclusion that $U_t$ blocks the path~$P$
    follows. 

  \italicitem[Merger]
    $\mpconf_{t+1}  = 
    \mpconf_{t} \unionSP  \Set{\mpscnotstd}$
    for
    $\mpscnotstd$
    derived by merger of
    $\mpscnotstd[1], \mpscnotstd[2] \in \mpconf_{t}$
    \st
    \begin{subequations}
      \begin{align}
        \label{eq:klawe-merger-two}
        B_1 \intersectionSP W_2 &= \emptyset
        \eqcomma
        \\      
        \label{eq:klawe-merger-one}
        B_2 \intersectionSP W_1 &= \set{\mergervertex}
        \eqcomma
        \\
        \label{eq:klawe-merger-three}
        B &= (B_1 \unionSP B_2) \setminus \set{\mergervertex}
        \eqcomma \text{ and}
        \\
        \label{eq:klawe-merger-four}
        W 
        &= 
        \bigl(
        (W_1 \unionSP W_2) \setminus \set{\mergervertex}
        \bigr)
        \intersectionSP
        \lppstd
        \eqperiod
      \end{align}
    \end{subequations}
    Let us again first argue informally that if 
    a set of vertices  
    $U_t$ blocks two
    \mpsctext{}s
    $\mpscnotstd[1]$
    and~%
    $\mpscnotstd[2]$, 
    it must also block their merger. 
    Let $P$ be any path via~$B$, and suppose in addition that $P$
    visits the merger vertex~$\mergervertex$.
    If so, $P$ agrees with $B_2$ and must be blocked by
    $U_t \unionSP W_2$.
    If on the other hand $P$ agrees with $B$ but does \emph{not}
    visit~$\mergervertex$,  
    it is a path via $B_1$ that in addition does not pass through the
    white pebble in $W_1$ eliminated in the merger. This means that
    $U_t \unionSP W_1 \setminus \set{\mergervertex}$
    must block~$P$. Again, we have to argue that the blocking white
    vertices do not disappear when we apply the intersection with
    $\lpp{B}$ in~\refeq{eq:klawe-merger-four}, but this is
    straightforward to verify.

    So let us show formally that
    $U_t$ blocks $\mpscnotstd$, \ie that for any
    $P \in \unblocked{\mpscnotstd}$ it holds that
    $P \intersectionSP U_t \neq \emptyset$.
    As above, \wolog we consider only paths $P$ ending in
    $\topvertex{B} = \topvertex{B_1 \unionSP B_2}$.
    Recall that 
    \begin{equation}
      \label{eq:B-i-intersection-W-i-empty}
      B_i \intersectionSP W_i = \emptyset
    \end{equation}
    holds for all \mpsctext{}s by definition.
    We divide the analysis into two subcases.
    \begin{enumerate}
      \item
        $P \in 
        \pathsviachain{B_1 \unionSP B_2}
        =
        \pathsviachain{B \unionSP \set{\mergervertex}}
        $.
        If so, in particular it holds that
        $P \in \pathsviachain{B_2}$
        and since $U_t$ blocks $\mpscnotstd[2]$ we have
        \begin{align*}
          \emptyset
          &\neq
          P \intersectionSP \bigl(U_t \unionSP W_2\bigr)
          && 
          \bigl[\text{ 
              by definition of blocking
            }\bigr] 
          \\
          &=
          (P \intersectionSP U_t) 
          \unionSP
          \bigr(
          (P \setminus (B_1 \unionSP B_2)) \intersectionSP W_2 
          \bigl)
          && 
          \bigl[\text{ 
              by \refeq{eq:klawe-merger-two} 
              and \refeq{eq:B-i-intersection-W-i-empty}
            }\bigr] 
          \\
          &=
          (P \intersectionSP U_t) 
          \unionSP
          \bigl(P \intersectionSP \lpp{B_1 \unionSP B_2} 
          \intersectionSP W_2\bigr)
          && 
          \bigl[\text{ 
              by \refeq{eq:P-in-V-union-lpp-V}
            }\bigr] 
          \\
          &=
          (P \intersectionSP U_t) 
          \unionSP
          \bigr(P \intersectionSP
          \lpp{B \unionSP \mergervertex} \intersectionSP W_2 \bigl)
          && 
          \bigl[\text{ 
              just rewriting using \refeq{eq:klawe-merger-three} 
            }\bigr] 
          \\
          &\subseteq
          (P \intersectionSP U_t) 
          \unionSP
          \bigr(
          P 
          \intersectionSP
          (W_2 \setminus \set{\mergervertex})
          \intersectionSP
          \lpp{B}
          && 
          \bigl[\text{ 
              $\lpp{B \union \set{\mergervertex}}
              \subseteq
              \lpp{B} \setminus \set{\mergervertex}$
            }\bigr] 
          \\
          &\subseteq
          (P \intersectionSP U_t) 
          \unionSP
          (P \intersectionSP W) 
          && 
          \bigl[\text{ 
              by \refeq{eq:klawe-merger-four}
            }\bigr] 
          \\
          &=
          P \intersectionSP U_t 
          && 
          \bigl[\text{ 
              since
              $P \in \unblocked{\mpscnotstd}$
            }\bigr] 
        \end{align*}
        so  $U_t$ blocks the path~$P$ in this case.

      \item
        $P \in 
        \pathsviachain{B}
        \setminus
        \pathsviachain{B \unionSP \set{\mergervertex}}
        $.
        This means that 
        $B \subseteq P$
        but
        $B \unionSP \set{\mergervertex}\nsubseteq P$,
        so the path $P$ does not pass through~$\mergervertex$.
        Since $P$ agrees with $B_1$ and 
        $U_t$ blocks $\mpscnotstd[1]$ by assumption, we get that
        \begin{align*}
          \emptyset
          &\neq
          P \intersectionSP \bigl(U_t \unionSP W_1\bigr)
          && 
          \bigl[\text{ 
              by definition of blocking
            }\bigr] 
          \\
          &=
          (P \intersectionSP U_t) 
          \unionSP
          \bigr((P \setminus B) \intersectionSP W_1 \bigl)
          && 
          \bigl[\text{ 
              by \refeq{eq:klawe-merger-one} 
              and \refeq{eq:B-i-intersection-W-i-empty}
            }\bigr] 
      \\
      &=
      (P \intersectionSP U_t) 
      \unionSP
      \bigl(P \intersectionSP \lpp{B} 
      \intersectionSP W_1\bigr)
      && 
      \bigl[\text{ 
          $P \subseteq 
          B \disjointunionSP \lpp{B}$
          by \refeq{eq:P-in-V-union-lpp-V}
        }\bigr] 
      \\
      &=
      (P \intersectionSP U_t) 
      \unionSP
      \bigl(
      P 
      \intersectionSP 
      (W_1 \setminus \set{\mergervertex})
      \intersectionSP 
      \lpp{B} 
      \bigr)
      && 
      \bigl[\text{ 
          since $\mergervertex \notin P$ by assumption
        }\bigr] 
      \\
      &\subseteq
      (P \intersectionSP U_t) 
      \unionSP
      (P \intersectionSP W) 
      && 
      \bigl[\text{ 
          by \refeq{eq:klawe-merger-four}
        }\bigr] 
      \\
      &=
      (P \intersectionSP U_t) 
      && 
      \bigl[\text{ 
          $P \in \unblocked{\mpscnotstd}$
        }\bigr] 
    \end{align*}
    and $U_t$ blocks the path~$P$ in this case as well.

    \end{enumerate}

    \italicitem[Introduction]
    $\mpconf_{t+1} = \mpconf_{t} \unionSP 
    \Set{\intrompscnot{v}}$.
    Clearly,
    $U_t$ blocks $\mpconf_{t+1}$
    if $v$ is a non-source vertex, \ie if 
    \mbox{$\prednode{v} \neq \emptyset$},
    since
    $U_t$ blocks $\mpconf_{t}$
    and
    $\intrompscnot{v}$ blocks itself.

    Suppose however that $v$ is a source vertex, so that the \mpsctext
    introduced is~%
    $\unconditionalblackmpscnot{v}$.
    As in the proof of
    \reflem{lem:potential-property-inductive},
    $U_t$ does not necessarily block $\mpconf_{t+1}$ any longer but
    $U_{t+1} = U_{t} \unionSP \set{v}$
    clearly does.
    For       $j > 0$, it holds that
    $\vertabovelevel{U_{t+1}}{j} = \vertabovelevel{U_{t}}{j}$
    and thus
    $\vjthmeasure{j}{U_{t+1}} = \vjthmeasure{j}{U_{t}}$.
    On the bottom level $j=0$, 
    using that
    $\setsize{U_t} \leq \gkpconstant \cdot \mpcost{\mpconf_t}$
    \genklawepropacronym~\ref{property:generalized-klawe-property}
    we have
    \begin{multline}
      \vjthmeasure{0}{U_{t+1}} 
      = 
      2 \cdot \setsize{U_{t+1}} 
      = 
      2 \cdot (\setsize{U_{t}} + 1)
      \leq
\\
      2 \cdot \bigl( \gkpconstant \cdot \mpcost{\mpconf_t} + 1 \bigr)
      \leq
      2 \cdot \bigl( \gkpconstant \cdot \mpcost{\mpconf_{t+1}} + 1 \bigr)
      \leq
\\
      2 \cdot \bigl( \gkpconstant \cdot \mpcost{\mpconf_{t+1}}  
      + \mpcost{\mpconf_{t+1}} \bigr)
      \leq
      2 (\gkpconstant + 1 ) \cdot \mpcost{\mpconf_{t+1}}
    \end{multline}
    and we get that 
    \begin{multline}
      \vpotential{\mpconf_{t+1}} 
      \leq
      \vmeasure{U_{t+1}}
      \leq
      {\textstyle \Maxofexpr[j]{\vjthmeasure{j}{U_{t+1}}}}
\\
      \leq
      \Maxofexpr{
        \vmeasure{U_t}, 
        (2 \gkpconstant + 1) \cdot \mpcost{\mpconf_{t+1}}}
      =
\\
      \Maxofexpr{
        \vpotential{\mpconf_t}, 
        (2 \gkpconstant + 1) \cdot \mpcost{\mpconf_{t+1}}}
    \end{multline}
    which is what is needed for the induction step to go through.
  \end{description}
  We see that regardless of the pebbling move made in the transition
  $\pebcfgtransition{\mpconf_{t}}{\mpconf_{t+1}}$,
  the inequality
  \refeq{eq:sufficient-inductive-inequality-potential-M-pebbling}
  holds. The theorem follows by the induction principle.
\end{proof}

Hence, in order to prove a lower bound on
$\blobpebblingprice{G_h}$
for layered spreading graphs~$G_h$, 
it is sufficient to find some constant $\gkpconstant$ 
\st these DAGs can be shown to possess the
\genklawepropacronym~\ref{property:generalized-klawe-property}
with parameter~$\gkpconstant$. 
%

\subsection{Some Structural Transformations}
\label{sec:klawe-M-pebbling-structural}

As we tried to indicate by presenting the small toy \mpctext{}s in
\reftwoexs
{ex:first-example-mpebbling-potential}
{ex:second-example-mpebbling-potential},
the potential in the \multipebblegame behaves somewhat differently from
the potential in the standard pebble game.
There are (at least) two important differences:
\begin{itemize}
\item 
  Firstly,
  for the white pebbles we have to keep track of exactly which black
  pebbles they can help to block. This can lead to slightly unexpected
  consequences   such as the blocking set $U$ and the set of white
  pebbles overlapping. 

\item 
  Secondly, for black \multipebble{}s there is a much wider 
  choice where to
  block the \blobpebble{}s than for \atomicmultipebbleadj pebbles.
  It seems that to minimize the potential, blocking black
  \multipebble{}s on (reasonably) low levels should still be a good
  idea. However, we cannot 
  a priori
  exclude the possibility that if a lot of black
  \multipebble{}s intersect in some high-level vertex, adding this
  vertex to a blocking set $U$ might be a better idea.
\end{itemize}
In this \subsectionBlobBound we address the first of these issues. The second
issue, which turns out to be much trickier, is dealt with in the next
\subsectionBlobBound. 

One simplifying observation is that we do not have to prove
\refproperty{property:generalized-klawe-property}
for arbitrary \mpctext{}s. 
Below, we show that one can do some technical preprocessing of the
\mpctext{}s so that it suffices to prove 
the \genklawepropacronym{} for the subclass of 
configurations 
resulting from this preprocessing.%
\footnote{%
Note that we did something similar in
\refsec{sec:klawe-bw-pebbling-proving-the-Klawe-property}
after
\reflem{lem:covering-set-is-tight},
when we argued that if
$U$
is a \minmeasure \hidingsetklawe{} for
$\pconf = (B,W)$,
we can assume \wolog that
$U \unionSP W$ is 
\tightklawe{}.
For if not, we just prove the \klaweprop
for some \tightklawe subset
$
U' \unionSP W'
\subseteq
U \unionSP W
$ 
instead.
This is wholly analogous to the reasoning here, but since matters
become more complex we need to be a bit more careful.
}
Throughout this subsection, we assume that the parameter
$\gkpconstant$
is some fixed constant.

We start slowly by taking care of a pretty obvious redundancy.
Let us say that 
the \mpscfulltext
$\mpscnotstd$
is \introduceterm{self-blocking} if
$W$ blocks $B$.
The \mpctext~%
$\mpconf$ is \introduceterm{self-blocker-free}
if there are no self-blocking \mpsctext{}s in~$\mpconf$.
That is,
if 
$\mpscnotstd$ is self-blocking, $W$ needs no extra help blocking $B$.
Perhaps the simplest example of this is
$\mpscnotstd
=
\intrompscnot{v}$
for a non-source vertex~$v$.
The following proposition is immediate.

\begin{proposition}
  \label{pr:removing-self-blockers}
  For $\mpconf$ any \mpctext, let
  $\mpconf'$ be the \mpctext with all self-blockers in
  $\mpconf$ removed.
  Then
  $
  \mpcost{\mpconf'} \leq  \mpcost{\mpconf}
  $,
  $
  \vpotential{\mpconf'} =  \vpotential{\mpconf}
  $
  and any blocking set $U'$ for $\mpconf'$ is also a blocking set for~$\mpconf$.
\end{proposition}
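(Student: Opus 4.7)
The plan is to unpack the definition of self-blocking and then check the three claims essentially directly from the relevant definitions. Recall that $\mpscnotstd$ is self-blocking if $W$ already blocks every \sourcepath agreeing with~$B$, which in the notation of \refdef{def:unblocked-paths} is the statement that $\unblocked{\mpscnotstd} = \emptyset$. This is the one observation to highlight up front, since all three items then follow from it.

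For the cost inequality $\mpcost{\mpconf'} \leq \mpcost{\mpconf}$, I will simply appeal to \refdef{def:C-multi-pebbling-price}: since $\mpconf' \subseteq \mpconf$, the sets of \chargeabletext black and \chargeabletext white vertices of $\mpconf'$ are subsets of the corresponding sets for~$\mpconf$, so their union has no greater cardinality. For the third claim that any blocking set $U'$ for $\mpconf'$ also blocks $\mpconf$, I would argue as follows: for every $\mpscnotstd \in \mpconf'$ the blocking relation is assumed to hold by choice of~$U'$, while for every $\mpscnotstd \in \mpconf \setminus \mpconf'$ the \mpsctext is self-blocking so $\unblocked{\mpscnotstd} = \emptyset$, and hence $U'$ blocks $\mpscnotstd$ vacuously.

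The equality of potentials $\vpotential{\mpconf'} = \vpotential{\mpconf}$ is then a sandwich argument, using the first and third claim. On the one hand, because $\mpconf' \subseteq \mpconf$, any set that blocks $\mpconf$ blocks $\mpconf'$, so taking the minimum measure yields $\vpotential{\mpconf'} \leq \vpotential{\mpconf}$. On the other hand, by the third claim any set that blocks $\mpconf'$ blocks $\mpconf$, so the reverse inequality $\vpotential{\mpconf} \leq \vpotential{\mpconf'}$ also holds.

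There is no real obstacle here; the only thing to be careful about is the vacuous case in the third item, to make sure that the definition of blocking in \refdef{def:unblocked-paths} is being read correctly as quantification over $\unblocked{\mpscnotstd}$ rather than over $\pathsviachain{B}$. Once that is observed, the proof is a few lines of set-theoretic bookkeeping, which matches the ``immediate'' label attached to the statement.
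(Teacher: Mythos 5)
Your proof is correct, and it is essentially the unique way to establish this: the paper in fact gives no proof at all, labelling the proposition as immediate, and your write-up is precisely the unpacking of definitions that justifies that label. The key observation — that self-blocking means $\unblocked{\mpscnotstd} = \emptyset$, so the removed subconfigurations are blocked vacuously by any set — is correctly identified, and the cost inequality and potential sandwich then follow as you say.
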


\begin{corollary}
  \label{cor:sufficient-prove-gkp-for-w-self-blocker-free}
  Suppose that the \genklawepropacronym{} holds for self-blocker-free \mpctext{}s.
  Then the \genklawepropacronym{} holds for all \mpctext{}s.
\end{corollary}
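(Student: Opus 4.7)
The plan is to reduce the general case directly to the self-blocker-free case by invoking Proposition \ref{pr:removing-self-blockers}, which already supplies all three ingredients needed to transport the \genklawepropacronym{} from $\mpconf'$ back to $\mpconf$.

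Concretely, given an arbitrary \mpctext{} $\mpconf$, I would first pass to $\mpconf'$ obtained by discarding every self-blocking \mpsctext{} from $\mpconf$. By construction $\mpconf'$ is self-blocker-free, and Proposition \ref{pr:removing-self-blockers} records the three facts I need: $\mpcost{\mpconf'} \leq \mpcost{\mpconf}$, the potentials coincide ($\vpotential{\mpconf'} = \vpotential{\mpconf}$), and every blocker of $\mpconf'$ is automatically a blocker of $\mpconf$. Next I would apply the assumed \genklawepropacronym{} to $\mpconf'$ to obtain a vertex set $U$ with $U$ blocking $\mpconf'$, $\vmeasure{U} = \vpotential{\mpconf'}$, and $\setsize{U} \leq \gkpconstant \cdot \mpcost{\mpconf'}$.

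The final step is to check that the same $U$ witnesses \refproperty{property:generalized-klawe-property} for $\mpconf$. Blocking transfers by the third clause of Proposition \ref{pr:removing-self-blockers}; the measure condition transfers since $\vmeasure{U} = \vpotential{\mpconf'} = \vpotential{\mpconf}$, so $U$ remains a \minmeasure{} blocker; and the cardinality bound chains as $\setsize{U} \leq \gkpconstant \cdot \mpcost{\mpconf'} \leq \gkpconstant \cdot \mpcost{\mpconf}$. There is no real obstacle here — the whole argument is bookkeeping around the proposition, and every inequality happens to point in the favourable direction. The only thing to watch is that $\gkpconstant$ is a fixed constant independent of the particular \mpctext{}, so that the same parameter works for $\mpconf$ as for $\mpconf'$; this is built into the statement of the \genklawepropacronym.
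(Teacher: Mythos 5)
Your proposal is correct and follows essentially the same route as the paper: pass to the maximal self-blocker-free subconfiguration $\mpconf'$, invoke the assumed \genklawepropacronym{} there, and transport the blocking set back to $\mpconf$ via \refpr{pr:removing-self-blockers}, chaining the cost inequality $\mpcost{\mpconf'} \leq \mpcost{\mpconf}$. The paper's proof is the same bookkeeping argument, phrased slightly differently by noting directly that $\mpconf$ and $\mpconf'$ have identical blocking sets.
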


\begin{proof}
  If $\mpconf$ is \emph{not} self-blocker-free,
  take the maximal $\mpconf' \subseteq \mpconf$ 
  that is and the blocking set $U'$ that the \genklawepropacronym{}
  provides for this 
  $\mpconf'$.
  Then
  $U'$ blocks $\mpconf$
  and since the two configurations
  $\mpconf$
  and
  $\mpconf'$
  have the same blocking sets their potentials are equal, so
  $\vpotential{\mpconf} =
  {\meastopot{U'}}$.
  Finally, we have that
  $\setsize{U} 
  \leq
  \mbox{$\gkpconstant \cdot \mpcost{\mpconf'}$}
  \leq 
  \mbox{$\gkpconstant \cdot \mpcost{\mpconf}$}
  $.
  Thus the \genklawepropacronym{} holds for~$\mpconf$.
\end{proof}

We now move on to a more interesting observation.
Looking at
$\mpconf =
  \Set{
    \mpscnot{z}{y_1},
    \mpscnot{z}{y_2}
  }
$
in
\refex{ex:first-example-mpebbling-potential},
it seems that the white pebbles really do not help at all.
One might ask if we could not just throw them away?
Perhaps somewhat surprisingly, the answer is yes, and we can capture
the  intuitive concept of necessary white pebbles and formalize it as
follows. 

\begin{definition}[White sharpening]
  \label{def:white-sharpening}
  Given
  $\mpconf = \Set{\mpscnot{B_i}{W_i}}_{i \in \intnfirst{m}}$,
  we say that
  $\mpconf'$ is a \introduceterm{white sharpening} of
  $\mpconf$
  if
  $\mpconf' = \Set{\mpscnot{B'_i}{W'_i}}_{i \in \intnfirst{m}}$
  for
  $B'_i = B_i$
  and
  $W'_i \subseteq W_i$.
\end{definition}

That is, a white sharpening removes white pebbles and thus makes the
\mpctext stronger or ``sharper'' in the sense that the cost can only
decrease and the potential can only increase.

\begin{proposition}
  \label{pr:white-sharpening}
  If
  $\mpconf'$ is a white sharpening of
  $\mpconf$
  it holds that
  $
  \mpcost{\mpconf'} \leq  \mpcost{\mpconf}
  $
  and
  $
  \vpotential{\mpconf'} \geq  \vpotential{\mpconf}
  $.
  More precisely, any blocking set
  $U'$ for $\mpconf'$
  is also a blocking set for
  $\mpconf$.
\end{proposition}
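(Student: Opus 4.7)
My plan is to prove the three claims essentially in reverse order, since the ``more precisely'' statement about blocking sets is really the engine driving the potential inequality, while the cost inequality is a separate, essentially trivial observation.

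First I would establish the cost inequality $\mpcost{\mpconf'} \leq \mpcost{\mpconf}$ directly from \refdef{def:multi-pebbling-price}. Write $\mpconf = \setdescrcompact{\mpscnotstd[i]}{i \in \intnfirst{m}}$ and $\mpconf' = \setdescrcompact{\mpscnot{B_i}{W'_i}}{i \in \intnfirst{m}}$ with $W'_i \subseteq W_i$. Since the black \multipebble{}s are unchanged, the chargeable black vertices $\blackschargedfor(\mpconf') = \setdescrcompact{\bottomvertex{B_i}}{i}$ coincide exactly with $\blackschargedfor(\mpconf)$. For the chargeable whites, shrinking $W_i$ to $W'_i$ can only shrink $W'_i \intersectionSP \belowvertices[G]{\bottomvertex{B_i}}$, so $\whiteschargedfor(\mpconf') \subseteq \whiteschargedfor(\mpconf)$, and the cost inequality follows.

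The key step is the blocking claim: I would show that any blocking set $U'$ for $\mpconf'$ also blocks $\mpconf$. Fix a \mpscfulltext $\mpscnotstd \in \mpconf$ and the corresponding sharpened \mpsctext $\mpscnot{B}{W'} \in \mpconf'$ with $W' \subseteq W$. The crucial observation is that
\begin{equation*}
  \unblocked{\mpscnot{B}{W}} \subseteq \unblocked{\mpscnot{B}{W'}}
\end{equation*}
since a path $P \in \pathsviachain{B}$ that is not blocked by the larger set $W$ is \emph{a fortiori} not blocked by $W' \subseteq W$. By assumption $U'$ blocks every path in the larger set $\unblocked{\mpscnot{B}{W'}}$, and therefore blocks every path in the subset $\unblocked{\mpscnot{B}{W}}$ as well. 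By \refdef{def:unblocked-paths}, this exactly says that $U'$ blocks $\mpscnotstd$; ranging over all $\mpscnotstd \in \mpconf$ gives that $U'$ blocks $\mpconf$.

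Finally, the potential inequality is an immediate consequence. Since every blocking set $U'$ for $\mpconf'$ is also a blocking set for $\mpconf$, the set over which the minimum in \refdef{def:potential-M-pebbling} is taken for $\mpconf'$ is a subset of the corresponding set for $\mpconf$. Taking the minimum of $\vmeasure{\cdot}$ over a smaller collection can only give a larger value, so $\vpotential{\mpconf'} \geq \vpotential{\mpconf}$. I do not anticipate any real obstacles in writing this up; the only thing to be slightly careful about is keeping the ``per-\mpsctext'' viewpoint straight, since white pebbles in one \mpsctext can never help block a different \mpsctext in the first place, which is precisely why white sharpening is a purely local operation with the clean monotonicity properties above.
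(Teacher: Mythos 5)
Your proposal is correct and follows the same approach as the paper, which simply states that the cost inequality is immediate from the definition of \multipebblingtext price and that the potential inequality follows from the observation that every blocking set for $\mpconf'$ also blocks $\mpconf$. You have merely filled in the (correct) details that the paper leaves to the reader, in particular the clean containment $\unblocked{\mpscnot{B}{W}} \subseteq \unblocked{\mpscnot{B}{W'}}$.
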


\begin{proof}
  The statement about cost is immediate from
  \refdef{def:multi-pebbling-price}.
  The statement about potential clearly follows from
  \refdef{def:potential-M-pebbling}
  since it holds that any blocking set
  $U'$ for $\mpconf'$
  is also a blocking set for~$\mpconf$.
\end{proof}

In the next definition,  we suppose that there is some fixed but
arbitrary ordering of the vertices in $G$, and 
that the vertices are considered in this order.

\begin{definition}[White elimination]
  \label{def:white-elimination}
  For
  $\mpscnotstd$
  \anmpsctext
  and $U$ any blocking set for
  $\mpscnotstd$,
  write
  $W = \set{w_1, \ldots, w_s}$,
  set $W^0 \assigneq W $
  and iteratively perform the following for
  $i = 1, \ldots, s$:
  If 
  $U \unionSP (W^{i-1} \setminus \set{w_i})$
  blocks $B$,
  set
  $W^i \assigneq W^{i-1} \setminus \set{w_i}$,
  otherwise set
  $W^i \assigneq W^{i-1}$.
  We define
  the
  \introduceterm{white elimination}
  of $\mpscnotstd$ \wrt $U$ to be
  $\welim{\mpscnotstd}{U}
  = \mpscnot{B}{W^s}$
  for $W^s$ the final set resulting from the procedure above.
  
  For $\mpconf$ \anmpctext
  and $U$ a blocking set for $\mpconf$,
  we define
  \begin{equation}
    \welim{\mpconf}{U}
    =
    \Setdescr
        {\welim{\mpscnotstd}{U}}
        {\mpscnotstd \in \mpconf}
        \eqperiod
  \end{equation}
  We say that the elimination is \introduceterm{strict}
  if
  $\mpconf \neq \welim{\mpconf}{U}$.
  If
  $\mpconf = \welim{\mpconf}{U}$
  we say that
  $\mpconf$ is \introduceterm{white-eliminated}, or
  \introduceterm{\weliminate{}d} for short, \wrt $U$.
\end{definition}

Clearly
$\welim{\mpconf}{U}$
is a white sharpening of $\mpconf$.
And if we pick the right $U$, we simplify the problem of proving the
\genklawepropacronym{} 
a bit more.

\begin{lemma}
  \label{lem:potential-preserving-elimination}
  If
  $U$
  is a \minmeasure blocking set for 
  $\mpconf$,
  then
  $\mpconf' = \welim{\mpconf}{U}$
  is a white sharpening of $\mpconf$
  \st
  $\vpotential{\mpconf'} = \vpotential{\mpconf}$
  and 
  $U$ blocks~$\mpconf'$.
\end{lemma}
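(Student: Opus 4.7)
The plan is to unfold the definition of white elimination and use Proposition about white sharpening to sandwich $\vpotential{\mpconf'}$ between $\vpotential{\mpconf}$ and itself. The main obstacle is essentially notational bookkeeping: making sure that the blocking relation is preserved after shrinking the white support of each \mpsctext.

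First I would verify that $U$ blocks $\mpconf'$. Unfolding \refdef{def:unblocked-paths}, the statement ``$U$ blocks $\mpscnot{B}{W}$'' is equivalent to ``$U \unionSP W$ blocks every path in $\pathsviachain{B}$.'' So for each $\mpscnotstd \in \mpconf$, the iterative procedure in \refdef{def:white-elimination} maintains, as a loop invariant after step $i$, that $U \unionSP W^i$ blocks every path in $\pathsviachain{B}$; this is immediate by induction on $i$ because we only discard $w_i$ when $U \unionSP (W^{i-1}\setminus \set{w_i})$ still blocks $B$, and otherwise $W^i = W^{i-1}$. Applying the invariant at the final index $s$ gives that $U$ blocks $\welim{\mpscnotstd}{U} = \mpscnot{B}{W^s}$, and doing this for each $\mpscnotstd \in \mpconf$ shows that $U$ blocks $\mpconf'$.

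Next I would compare potentials in both directions. Since $U$ blocks $\mpconf'$, \refdef{def:potential-M-pebbling} yields
\begin{equation*}
  \vpotential{\mpconf'} \leq \vmeasure{U} = \vpotential{\mpconf}
  \eqcomma
\end{equation*}
where the equality uses the assumption that $U$ is a \minmeasure blocker of $\mpconf$. In the opposite direction, $\mpconf'$ is by construction a white sharpening of $\mpconf$ in the sense of \refdef{def:white-sharpening} (the black chains are untouched and $W^s \subseteq W$), so \refpr{pr:white-sharpening} gives $\vpotential{\mpconf'} \geq \vpotential{\mpconf}$. Combining the two inequalities yields $\vpotential{\mpconf'} = \vpotential{\mpconf}$, and together with the blocking claim established above this is exactly the conclusion of the lemma.
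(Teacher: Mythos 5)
Your proof is correct and follows essentially the same route as the paper's: the inequality $\vpotential{\mpconf'} \geq \vpotential{\mpconf}$ comes from \refpr{pr:white-sharpening}, and the reverse inequality from the fact that $U$ remains a blocker of $\mpconf'$, which the paper asserts with a one-line appeal to the construction and you establish via an explicit loop invariant over the elimination procedure. The invariant argument is just a more careful spelling-out of what the paper compresses into ``the white pebbles are sharpened away with care so that $U$ remains a blocking set.''
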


\begin{proof}
  Since 
  $\mpconf' = \welim{\mpconf}{U}$
  is a white sharpening of $\mpconf$
  (which is easily verified from
  \reftwodefs
  {def:white-sharpening}
  {def:white-elimination}),
  it holds by
  \refpr{pr:white-sharpening}
  that
  $\vpotential{\mpconf'} \geq \vpotential{\mpconf}$.
  Looking at the construction in
  \refdef{def:white-elimination}, we also see that 
  the white pebbles are ``sharpened away'' with care so that
  $U$ remains a blocking set.
  Thus
  $
  \meastopot{U} \geq
  \vpotential{\mpconf'} = \vpotential{\mpconf}
  =  \meastopot{U}
  $, 
  and the lemma follows.
\end{proof}

\begin{corollary}
  \label{cor:sufficient-prove-gkp-for-w-eliminated}
  Suppose that the \genklawepropacronym{} holds for 
  the set of   all
  \mpctext{}s $\mpconf$ having the property that 
  for all \minmeasure blocking sets $U$ for $\mpconf$   it holds that
  $\mpconf = \welim{\mpconf}{U}$.
  Then the \genklawepropacronym{} holds for all \mpctext{}s.
\end{corollary}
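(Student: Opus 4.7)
The plan is to reduce an arbitrary \mpctext{} $\mpconf$ to one satisfying the hypothesis of the corollary via iterated white elimination along \minmeasure blocking sets, and then transfer the resulting blocker back. Concretely, I would set $\mpconf_0 = \mpconf$ and, given $\mpconf_i$, ask whether there exists a \minmeasure blocking set $U_i$ for $\mpconf_i$ with $\mpconf_i \neq \welim{\mpconf_i}{U_i}$, i.e.\ one for which white elimination is strict. If yes, set $\mpconf_{i+1} = \welim{\mpconf_i}{U_i}$ and iterate; if no, stop and put $\mpconf^{*} = \mpconf_i$. By construction, $\mpconf^{*}$ then satisfies the hypothesis of the corollary: for \emph{every} \minmeasure blocking set $U$ of $\mpconf^{*}$ we have $\mpconf^{*} = \welim{\mpconf^{*}}{U}$. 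Each strict elimination step removes at least one pebble from the non-negative integer quantity $\sum_{\mpscnotstd \in \mpconf_i} \setsize{W}$, so the procedure terminates in finitely many steps and $\mpconf^{*}$ is well-defined.

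Next, I would collect the invariants relating $\mpconf^{*}$ back to $\mpconf$. By \reflem{lem:potential-preserving-elimination} applied at each step, $\mpconf_{i+1}$ is a white sharpening of $\mpconf_i$ with $\vpotential{\mpconf_{i+1}} = \vpotential{\mpconf_i}$. Composing sharpenings across the chain (immediate from \refdef{def:white-sharpening}) shows that $\mpconf^{*}$ is itself a white sharpening of $\mpconf$ and that $\vpotential{\mpconf^{*}} = \vpotential{\mpconf}$. \Refpr{pr:white-sharpening} then supplies the two further properties I will need: $\mpcost{\mpconf^{*}} \leq \mpcost{\mpconf}$, and every blocking set for $\mpconf^{*}$ is also a blocking set for~$\mpconf$.

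Finally, I would invoke the assumed \genklawepropacronym{} on $\mpconf^{*}$ to obtain a vertex set $U^{*}$ that blocks $\mpconf^{*}$, has $\vmeasure{U^{*}} = \vpotential{\mpconf^{*}}$, and satisfies $\setsize{U^{*}} \leq \gkpconstant \cdot \mpcost{\mpconf^{*}}$. Combining with the invariants of the previous paragraph, $U^{*}$ also blocks $\mpconf$, has measure $\vmeasure{U^{*}} = \vpotential{\mpconf^{*}} = \vpotential{\mpconf}$ (so it is a \minmeasure blocker for $\mpconf$), and satisfies $\setsize{U^{*}} \leq \gkpconstant \cdot \mpcost{\mpconf^{*}} \leq \gkpconstant \cdot \mpcost{\mpconf}$. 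This is precisely the \genklawepropacronym{} for~$\mpconf$, so we are done. The only real content is the termination argument for the iteration, which is trivial once the monovariant $\sum_{\mpscnotstd} \setsize{W}$ is identified; everything else is bookkeeping that tracks what \reflem{lem:potential-preserving-elimination} and \refpr{pr:white-sharpening} give us for free.
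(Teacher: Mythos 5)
Your proposal is correct and follows essentially the same route as the paper: iterated white elimination along \minmeasure{} blockers, termination via the decreasing white-pebble count (with repetitions), and transfer of the resulting blocker back to the original configuration using \refpr{pr:white-sharpening} and \reflem{lem:potential-preserving-elimination}. The only (harmless) difference is that you make explicit the composability of white sharpenings, which the paper leaves implicit.
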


\begin{proof}
  This is essentially the same reasoning as in the proof of
  \refcor{cor:sufficient-prove-gkp-for-w-self-blocker-free}
  plus induction.
  Let $\mpconf$ be any \mpctext. Suppose that there exists a
  \minmeasure blocker  $U$ for $\mpconf$ \st $\mpconf$ is not
  \weliminate{}d \wrt $U$.
  Let
  $\mpconf^{1} = \welim{\mpconf}{U}$.
  Then
  $\mpcost{\mpconf^{1}} \leq \mpcost{\mpconf}$
  by
  \refpr{pr:white-sharpening}
  and
  $\vpotential{\mpconf^{1}} = \vpotential{\mpconf}$
  by
  \reflem{lem:potential-preserving-elimination}.

  If there is a 
  \minmeasure blocker  $U^{1}$ for $\mpconf^{1}$ \st $\mpconf^{1}$ is not
  \weliminate{}d \wrt $U^{1}$,
  set
  $\mpconf^{2} = \welim{\mpconf^{1}}{U^{1}}$.
  Continuing in this manner, we get a chain
  $\mpconf^{1}, \mpconf^{2}, \mpconf^{3}, \ldots$
  of strict \welimination{}s \st
  $\mpcost{\mpconf^{1}} \geq \mpcost{\mpconf^{2}} 
  \geq \mpcost{\mpconf^{3}} \ldots$
  and
  $\vpotential{\mpconf^{1}} = \vpotential{\mpconf^{2}} =
  \vpotential{\mpconf^{3}} = \ldots$
  This chain must terminate at some  configuration
  $\mpconf{^k}$
  since the total number of white pebbles (counted with repetitions)
  decreases in every round.

  Let $U^{k}$ be the blocker that the \genklawepropacronym provides
  for $\mpconf^{k}$. Then
  $U^{k}$ blocks $\mpconf$,
  $\vpotential{\mpconf} = \vpotential{\mpconf^{k}} = \vmeasure{U^{k}}$,
  and
  $\setsize{U^{k}} \leq
  \gkpconstant \cdot \mpcost{\mpconf^{k}} \leq
  \gkpconstant \cdot \mpcost{\mpconf}$.
  Thus the \genklawepropacronym holds for $\mpconf$.
\end{proof}

We note that in particular, it follows from the construction in 
\refdef{def:white-elimination}
combined with
\refcor{cor:sufficient-prove-gkp-for-w-eliminated}
that we can assume \wolog for any blocking
set $U$ and any \mpctext~$\mpconf$
that $U$ does not intersect the set of white-pebbled vertices
in~$\mpconf$. 

\begin{proposition}
  \label{pr:U-and-W-do-not-intersect}
  If  
  $\mpconf = \welim{\mpconf}{U}$,
  then 
  in particular it holds that
  $U \intersectionSP \mpcwhitesof{\mpconf} = \emptyset$.
\end{proposition}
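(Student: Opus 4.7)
My plan is to argue by contradiction: assume that some vertex $w \in U$ is white-pebbled in some $\mpscnotstd \in \mpconf$, and show that the greedy procedure of Definition~\ref{def:white-elimination} actually strips $w$ away, contradicting the fixed-point hypothesis $\mpconf = \welim{\mpconf}{U}$. Since $\welim{\cdot}{U}$ acts subconfiguration by subconfiguration and preserves the black chain $B$ while only shrinking $W$, it suffices to prove that for every $\mpscnotstd$ the final set $W^s$ produced by the procedure is disjoint from $U$; applying $\mpcwhitesof{\cdot}$ to the fixed-point equation then yields the proposition.

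The key technical step is to establish the invariant that $U \unionSP W^j$ blocks $B$ in the sense of Definition~\ref{def:blocking-set} for every $j = 0, 1, \ldots, s$. The base case $j = 0$ is the only one needing genuine attention: by the standing assumption that $U$ blocks $\mpscnotstd$, every path $P \in \pathsviachain{B}$ either lies in $\unblocked{\mpscnotstd}$ and is cut by $U$, or lies outside $\unblocked{\mpscnotstd}$ and is therefore cut by $W = W^0$; in either case $U \unionSP W^0$ intersects $P$. The inductive step is immediate from the case distinction in the procedure, which only ever declares $W^i = W^{i-1} \setminus \set{w_i}$ when the condition "$U \unionSP (W^{i-1} \setminus \set{w_i})$ blocks $B$" has been explicitly verified.

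With the invariant in hand, the punch line is a one-liner. At the step $i$ at which $w = w_i$ is processed, it still lies in $W^{i-1}$ (it is considered only once), and the elementary set identity $U \unionSP (W^{i-1} \setminus \set{w}) = U \unionSP W^{i-1}$, valid precisely because $w \in U$, combined with the invariant, shows that the blocking condition for removal is satisfied; hence $w \notin W^i \supseteq W^s$. This contradicts the fact that $\welim{\mpscnotstd}{U}$ must equal $\mpscnotstd$: since the operation preserves $B$, the only way to recover the original multi-subconfiguration is $W^s = W$, which would require $w \in W^s$.

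I do not anticipate any substantial obstacle; the proposition is essentially a direct unwinding of the greedy construction. The single point of mild care is the base case of the invariant, where one must not conflate the two distinct notions of blocking that coexist in Definition~\ref{def:white-elimination}: "$U$ blocks $\mpscnotstd$" (which allows $W$ to help cover paths via $B$) versus "a set blocks $B$" (which demands intersecting every path in $\pathsviachain{B}$ on its own). Once this distinction is kept straight, the rest of the argument is purely mechanical.
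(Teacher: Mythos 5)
Your proof is correct and takes essentially the same approach as the paper's one-line justification ("any $w \in \mpcwhitesof{\mpconf} \intersectionSP U$ would have been removed in the $\mpcwhites$-elimination"); you have simply unpacked that sentence into a full argument by making the invariant that $U \unionSP W^j$ blocks $B$ explicit and observing that removal of any $w \in U$ is therefore vacuously licensed at the step where it is considered.
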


\begin{proof}
  Any 
  $w \in \mpcwhitesof{\mpconf} \intersectionSP U$
  would have been removed in the \welimination.  
\end{proof}

\subsection{A Proof of the \GENKLAWEPROP{}}
\label{sec:klawe-M-pebbling-proof-GKP}

We are now ready  to embark on the proof of the
\genklawepropacronym{} for layered spreading DAGs.

\begin{theorem}
  \label{th:pyramids-possess-GKP}
  All layered \pebblingdag{}s that are spreading 
  possess the
  \genklaweprop~\ref{property:generalized-klawe-property}
  with parameter
  \mbox{$\gkpconstant = 13$}.  
\end{theorem}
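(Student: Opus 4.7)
The plan is to prove the \genklawepropacronym by first applying the preprocessing from \refsec{sec:klawe-M-pebbling-structural}: by \refcor{cor:sufficient-prove-gkp-for-w-self-blocker-free} and \refcor{cor:sufficient-prove-gkp-for-w-eliminated}, it suffices to verify the property for self-blocker-free \mpctext{}s $\mpconf$ that are $\mpcwhites$-eliminated with respect to some fixed \minmeasure blocker $U$. By \refpr{pr:U-and-W-do-not-intersect} this immediately gives $U \intersectionSP \mpcwhitesof{\mpconf} = \emptyset$, so every vertex of $U$ is spent exclusively on blocking the black \multipebble{}s of $\mpconf$. What remains is to show $\setsize{U} \leq 13 \cdot \mpcost{\mpconf}$, where $\mpcost{\mpconf} = \setsize{\blackschargedfor(\mpconf) \unionSP \whiteschargedfor(\mpconf)}$.

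The heart of the argument would be a charging scheme modeled on the hidingset graph machinery from \refsec{sec:klawe-bw-pebbling-proving-the-Klawe-property}. For each $\mpscnotstd \in \mpconf$ I would fix a subset-minimal $U_i \subseteq U$ such that $U_i \unionSP W_i$ blocks $B_i$; minimality of $U$ as a blocker ensures $U = \Union_i U_i$. I would then form a graph on $\mpconf$ joining $\mpscnotstd[1]$ and $\mpscnotstd[2]$ whenever $U_1 \intersectionSP U_2 \neq \emptyset$ or whenever their chargeable vertices share a common component of $\hidsetgraph(G, U)$, and analyze each connected component $\mpconf_c$ separately via a \multipebblegame analogue of \reflem{lem:pick-local-good-blocker}: if the local blocker $U \intersectionSP V_c$ exceeds a constant multiple of the chargeable vertices in component $c$, the spreading inequality~\refeq{eq:spreading-property} should produce a replacement blocker of strictly smaller measure, contradicting the \minmeasure choice of $U$.

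The main obstacle lies in establishing the local bound inside a single connected component. Unlike in the standard black-white pebble game, a black \multipebble $B$ offers many possible blocking locations --- any vertex on a \sourcepath agreeing with $B$ --- and nothing forces the blocker to sit near $\bottomvertex{B}$, which is the only vertex of $B$ that the cost function in \refdef{def:multi-pebbling-price} charges for. Converting blocker vertices strictly above $\bottomvertex{B}$ into chargeable credit would require applying the spreading inequality to chains of chargeable bottom vertices in the component, so that high-level blocker vertices translate into many low-level chargeable bottom vertices. A further subtlety is that $U$ may have to block several distinct blobs that share the same bottom vertex, yet such blobs contribute only once to $\mpcost{\mpconf}$; to handle this I would argue that among all \minmeasure blockers one can be selected in which each chargeable bottom vertex is ``responsible'' for only $\Ordosmall{1}$ blocker vertices, by a local replacement argument in the spirit of \reflem{lem:klawe-lemma-three-five}. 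Summing the local spreading estimates with the bookkeeping above should yield the explicit constant $\gkpconstant = 13$, which is certainly not optimised but suffices, together with \refth{th:lower-bound-mpebbling-assuming-GKP}, to conclude $\blobpebblingprice{\pyramidgraphh} = \Tightsmall{h}$.
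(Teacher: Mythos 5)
Your preprocessing step (invoking \refcor{cor:sufficient-prove-gkp-for-w-self-blocker-free}, \refcor{cor:sufficient-prove-gkp-for-w-eliminated} and \refpr{pr:U-and-W-do-not-intersect}) matches the paper exactly, and you correctly identify the central obstacle: a \multipebble $B$ can be blocked anywhere along a \sourcepath agreeing with it, so nothing in the rules of the game forces the blocking vertices to sit below $\bottomvertex{B}$, which is the only vertex the cost function charges for. The gap is in what you propose to do about this. You suggest ``applying the spreading inequality to chains of chargeable bottom vertices'' so that high-level blocker vertices translate into many low-level chargeable bottom vertices, but this does not work: the spreading inequality~\refeq{eq:spreading-property} is a statement about tight, $\hidingklawe$-connected \emph{hiding} sets $X$, and the vertices of $U$ that block a blob from above its bottom vertex simply do not form such a set. \Refex{ex:non-connectedness} in the paper is exactly a counterexample of this type: $U = \set{v_4, y_2}$ is a tight, subset-minimal blocker of a single \mpsctext, but its two vertices lie in different components of any sensible hiding-set graph and the level difference ($3$) already exceeds the size ($2$), so the spreading inequality is violated. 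There is no way to make a Klawe-style connectedness argument apply directly to these ``just blocking'' vertices, and your further sketch of a local replacement bounding the number of blocker vertices per bottom vertex by $\Ordosmall{1}$ does not fill this hole.

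The paper's resolution is to split $U$ into $\uhiding \disjointunionSP \ujustblocking$, where $\uhiding$ is a smallest subset of $U$ that genuinely \emph{hides} the bottom vertices of the hidden \mpsctext{}s, and $\ujustblocking$ contains the troublesome above-the-bottom blockers. The key and non-obvious step is \reflem{lem:U-just-blocking-is-small}, which bounds $\setsize{\ujustblocking}$ by $\setsize{\blacksjustblocked}$ using not spreading but a Hall's-theorem matching argument driven by the assumption that $U$ has minimal cardinality among \minmeasure blockers: if the inequality failed, one could replace vertices of $\ujustblocking$ by bottom vertices lying strictly below them, lowering size without raising measure (via \refobs{obs:moving-vertices-downwards-non-increases-measure}), a contradiction. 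Once $\ujustblocking$ is known to be small, the dominant part $\uhiding$ can be treated as a bona fide hiding set and Klawe's machinery applies, but even there a second weighted bipartite matching is needed: the local replacements $\unewhidingith$ from \reflem{lem:generalized-pick-local-good-blocker} are padded with $\lfloor\setsize{\uhidingith}/3\rfloor$ spare high-level copies precisely so that the just-blocked \mpsctext{}s that become unblocked by the local improvement can be reblocked cheaply, and Hall's theorem is invoked again (on the weight function~\refeq{eq:definition-component-weight}) to match each such \mpsctext to a component with a spare copy above its bottom vertex. Your proposal does not contain this hidden/just-blocked split nor either of the two matching arguments, and without them I do not see how to close the argument.
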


Since pyramids are spreading graphs by
\refth{th:pyramids-are-spreading-graphs},
this is all that we need to get the lower bound on 
\multipebblingtext price on pyramids from
\refth{th:lower-bound-mpebbling-assuming-GKP}.
We note that the parameter $\gkpconstant$ in
\refth{th:pyramids-possess-GKP}
can easily be improved. However, our main concern here is not
optimality of  constants but clarity of exposition.

We prove \refth{th:pyramids-possess-GKP} by applying the preprocessing
in the previous \subsectionBlobBound and then (almost) reducing the
problem to the standard black-white pebble game. However, some twists
are added along the way since our potential measure for
\multipebble{}s behave differently from Klawe's potential measure for
black and white pebbles.  Let us first exemplify two problems that
arise if we try to do naive pattern matching on Klawe's proof for the
standard black-white pebble game.

In the standard black-white pebble game,
if $U$ is a \minmeasure \hidingsetklawe{} for
$\pconf = (B,W)$,
\reflem{lem:covering-set-is-tight}
tells us that we can assume \wolog that
$U \unionSP W$ is \tightklawe{}.
This is \emph{not} true
in the \multipebblegame, not even after the transformations in
\refsec{sec:klawe-M-pebbling-structural}.

\begin{figure}[tp]
  \centering
  \subfigure[\Minmeasure but non-\tightklawe blocking set.]
  {
    \label{fig:problematichidingset-a}
    \begin{minipage}[b]{.48\linewidth}
      \centering
      \includegraphics{hidingsets.9}%
    \end{minipage}
  }
  \hfill
  \subfigure[\Tightklawe but
  non-connected blocker for 
  \multipebble.]
  {
    \label{fig:problematichidingset-b}
    \begin{minipage}[b]{.48\linewidth}
      \centering
      \includegraphics{hidingsets.8}%
    \end{minipage}
  }
  \caption{Two \mpctext{}s with problematic blocking sets.}
  \label{fig:problematichidingset}
\end{figure}

\begin{example}
  \label{ex:non-tightness}
  Consider the configuration
  $
  \mpconf 
  =
  \set{
    \mpscnot{w_1}{u_2,u_3},
    \mpscnot{w_4,x_3}{u_4,u_5},
    \mpscnot{x_2,y_2,z}{\emptyset}
  }
  $
  with blocking set
  $U = \set{x_2, u_1, u_6}$
  in
  \reffig{fig:problematichidingset-a}.
  It can be verified that $U$ is a \minmeasure blocking set
  and that the configuration
  $\mpconf$ is \weliminated \wrt~$U$, but
  the set
  $U \unionSP \mpcwhitesof{\mpconf} =
  \set{
    u_1, u_2, u_3, u_4, u_5, u_6, x_2
  }$
  is not \tightklawe{}
  (because of $x_2$).
\end{example}

This can be handled, but a more serious problem is that even if the
set $U \unionSP W$ blocking the chain $B$ is \tightklawe, there is no
guarantee that the vertices in $U \unionSP W$ end up in the same
connected component of the \hidingsetklawe graph 
$\hidsetgraph(U \unionSP W)$ 
in
\refdef{def:covering-set-graph}.

\begin{example}
  \label{ex:non-connectedness}
  Consider the single-\multipebble configuration 
  $\mpconf = \set{\mpscnot{u_5, z}{\emptyset}}$
  in
  \reffig{fig:problematichidingset-b}.
  It is easy to verify that
  $U = \set{v_4, y_2}$
  is a \subsetminimal blocker of $\mpconf$ and also a \tightklawe
  vertex set.  
  This highlights the fact that blocking sets for \mpctext{}s can have
  rather different properties than \hidingsetklawe{}s for standard
  pebbles.  In particular, a minimal blocking set for a single
  \multipebble can have several ``isolated'' vertices at large
  distances from one another.  Among other problems, this leads to
  difficulties in defining connected components of blocking sets for
  \mpsctext{}s.

  The naive attempt to generalize 
  \refdef{def:covering-set-graph} of
  connected components in a  \hidingsetklawe graph
  to blocking sets would place  the vertices
  $v_4$ and $y_2$ 
  in different connected components
  $\set{v_4}$
  and
  $\set{y_2}$,
  none of which blocks 
  $\mpconf = \set{\mpscnot{u_5, z}{\emptyset}}$.
  This is not what we want
  (compare
  \refcor{cor:vertex-and-necessary-cover-in-same-component}
  for \hidingsetklawe{}s for black-white pebble configurations).
  We remark that there really cannot be any other sensible definition
  that places $v_4$ and $y_2$ in the same connected component either,
  at least not if we want to appeal to the spreading properties in
  \refdef{def:spreading-graph}.
  Since the level difference in  $U$ is $3$ but the size of the set is
  only~$2$,   the spreading inequality~\refeq{eq:spreading-property}
  cannot hold for this set.
\end{example}

To get around this problem, we will instead use connected components
defined in terms  of \hidingklawe the singleton black pebbles given by
the bottom vertices of our \multipebble{}s.
For a start, recalling
\reftwodefs{def:cover}{def:blocking-set},
let us make an easy observation relating the \hidingklawe{} and
blocking relations for a \multipebble.

\begin{observation}
  If a vertex set $V$ \hideklawe{}s some vertex $b \in B$,
  then $V$ blocks $B$.
\end{observation}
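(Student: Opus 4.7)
The plan is to unpack the definitions and observe that the statement follows essentially immediately from the containment $\pathsviachain{B} \subseteq \pathsviavertex{b}$ whenever $b \in B$. Concretely, I would first recall from \refdef{def:chains-and-paths} that a \sourcepath{} $P$ agrees with the chain $B$ precisely when $B \subseteq P$, so any $P \in \pathsviachain{B}$ in particular visits every vertex of $B$, and hence $P \in \pathsviavertex{b}$ for every $b \in B$.

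Next I would apply \refdef{def:cover}: the hypothesis that $V$ \hideklawe{}s $b$ means, by definition, that $V$ blocks every \sourcepath{} in $\pathsviavertex{b}$. Combined with the inclusion just noted, this gives $V \intersectionSP P \neq \emptyset$ for every $P \in \pathsviachain{B}$. By \refdef{def:blocking-set}, this is exactly the statement that $V$ blocks the \multipebble{} $B$, completing the proof.

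There is no real obstacle here; the lemma is a one-line unpacking of the definitions, and its purpose is presumably to serve as a convenient bridge between the \hidingklawe{} terminology inherited from \refsec{sec:pebble-games-pyramids} and the blocking terminology for \multipebble{}s introduced in \refdef{def:blocking-set}. The proof would fit in a single sentence of displayed reasoning, something like:
\begin{equation*}
 P \in \pathsviachain{B} \;\Longrightarrow\; B \subseteq P \;\Longrightarrow\; b \in P \;\Longrightarrow\; P \in \pathsviavertex{b} \;\Longrightarrow\; V \intersectionSP P \neq \emptyset,
\end{equation*}
and then invoke \refdef{def:blocking-set}.
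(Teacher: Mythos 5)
Your proof is correct and takes exactly the same approach as the paper: the paper's one-line argument is that blocking all paths visiting $b$ in particular blocks the subset of paths agreeing with all of $B$, which is precisely the containment $\pathsviachain{B} \subseteq \pathsviavertex{b}$ you identify. Nothing to add.
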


\begin{proof}
  If $V$ blocks all paths visiting $b$, then in particular it blocks
  the subset of paths that not only visits~$b$ but agree with all
  of~$B$. 
\end{proof}

We will focus on the case when the bottom vertex of a \multipebble is
\hiddenklawe.

\begin{definition}[\Hidingklawe \mpctext{}s]
  \label{def:covering-a-blob}
  We say that the vertex set $U$
  \introduceterm{\hideklawe{}s}
  the \mpsctext
  $\mpscnotstd$
  if $U \unionSP W $ \hideklawe{}s  the vertex~%
  $\bottomvertex{B}$,
  and that $U$ \hideklawe{}s the \mpctext $\mpconf$
  if $U$ \hideklawe{}s all
  $\mpscnotstd \in \mpconf$.  
\end{definition}

If $U$ does not \hideklawe $\mpscnot{B}{W}$,
then $U$ blocks $\mpscnot{B}{W}$
only if
$U \intersectionSP \abovevertices[\layeredgraphstd]{\bottomvertex{B}}$
does.

\begin{proposition}
  \label{pr:not-covering-means-blocking-from-above}
  Suppose that a vertex set $U$ in a layered DAG~$\layeredgraphstd$
  blocks but does not \hideklawe
  the \mpsctext~$\mpscnot{B}{W}$
  and that
  $\mpscnot{B}{W}$
  does not block itself.
  Then 
  $U \intersectionSP 
  \belowvertices[\layeredgraphstd]{\bottomvertex{B}}$
  does not block $\mpscnot{B}{W}$,
  but there is a subset
  $
  U' 
  \subseteq
  U \intersectionSP \abovevertices[\layeredgraphstd]{\bottomvertex{B}}
  $
  that blocks~$\mpscnot{B}{W}$.
\end{proposition}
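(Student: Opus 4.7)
The plan is to use the hypothesis that $U$ does not hide $\mpscnotstd$ in order to extract a source path reaching the bottom vertex of $B$ that is clean of both $U$ and $W$, and then splice this path together with existing unblocked paths in order to shift the blocking work from ``below $b$'' to ``above $b$''.

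Set $b = \bottomvertex{B}$. First I would invoke the non-hiding hypothesis via \refdef{def:covering-a-blob}: since $U \cup W$ does not hide $b$, there is some source path $P^*$ ending in $b$ with $P^* \cap (U \cup W) = \emptyset$. I would also fix some reference path $P^{\circ} \in \unblocked{\mpscnotstd}$, which exists because $\mpscnotstd$ does not block itself by hypothesis. The key construction is the following: for any $P' \in \unblocked{\mpscnotstd}$, let $\widehat{P}(P')$ be the source path obtained by going along $P^*$ from the source to $b$ and then along the portion of $P'$ from $b$ up to $\topvertex{B}$. Because $G$ is a layered DAG, the portion of $P'$ above $b$ lies in $\abovevertices{b}$, so the splice is a valid source path that still agrees with $B$; and since both $P^*$ and $P'$ avoid $W$, we conclude $\widehat{P}(P') \in \unblocked{\mpscnotstd}$.

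For the first assertion I would apply the splicing construction to $P^{\circ}$ and verify that $\widehat{P}(P^{\circ}) \cap U \cap \belowvertices{b} = \emptyset$. The portion of $\widehat{P}(P^{\circ})$ weakly below $b$ is contained in $P^*$, which avoids $U$ entirely; the portion strictly above $b$ lies in $\aboveverticesNR{b}$ and is thus disjoint from $\belowvertices{b}$; and the shared vertex $b$ itself lies in $P^*$, so $b \notin U$. This exhibits a path in $\unblocked{\mpscnotstd}$ that is not blocked by $U \cap \belowvertices{b}$, yielding the first claim.

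For the second assertion I would take $U' = U \cap \abovevertices{b}$ and show that $U'$ blocks every $P' \in \unblocked{\mpscnotstd}$. By construction $\widehat{P}(P') \in \unblocked{\mpscnotstd}$, and since $U$ blocks $\mpscnotstd$ by hypothesis there must exist $u \in U \cap \widehat{P}(P')$. But $P^* \cap U = \emptyset$, so $u$ must lie in the above-$b$ portion of $\widehat{P}(P')$, which coincides with the above-$b$ portion of $P'$; thus $u \in P' \cap U \cap \abovevertices{b} = P' \cap U'$, and $U'$ blocks $P'$. The only subtle point, and thus the main place requiring care, is the bookkeeping around the vertex $b$ itself (which sits in both $\belowvertices{b}$ and $\abovevertices{b}$); the observation that $b \notin U$, obtained from $b \in P^*$ and $P^* \cap U = \emptyset$, is what keeps the two halves of the argument consistent.
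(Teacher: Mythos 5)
Your proof is correct, and it follows essentially the same route as the paper: both use the source path to $b$ guaranteed by the non-hiding hypothesis, together with a path witnessing that $W$ alone does not block $B$, and splice at $b$. The only difference is stylistic — for the second assertion the paper argues by contradiction (assume $U\cap\abovevertices{b}$ fails to block, splice, contradict that $U$ blocks), whereas you argue directly by splicing each unblocked $P'$ with $P^*$ and observing that the blocking vertex in $U$ must lie above $b$; the direct version is if anything a little cleaner.
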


\begin{proof}
  Suppose that
  $U \unionSP W$
  blocks 
  $B$
  but does not hide
  $b = \bottomvertex{B}$,
  and that
  $W$
  does not block~%
  $B$.
  Then there is a \sourcepath $P_2$ via $B$
  \st
  $P_2 \intersectionSP W = \emptyset$.
  Also,
  there is a \sourcepath $P_1$ to~$b$ \st
  $
  P_1 \intersectionSP (U \unionSP W) = \emptyset 
  $.
  Let
  $P =
  \bigl(
  P_1 \intersectionSP \belowvertices[\layeredgraphstd]{b}  
  \bigr)
  \unionSP 
  \bigl(
  P_2 \intersectionSP \abovevertices[\layeredgraphstd]{b}
  \bigr)
  $
  be the \sourcepath that starts like $P_1$
  and continues like $P_2$ from~$b$ onwards.
  Clearly,
  \begin{equation}
    P
    \intersectionSP
    \bigl(
    \bigl(
    U \intersectionSP \belowvertices[\layeredgraphstd]{b}
    \bigr)
    \unionSP
    W
    \bigr)
    =
    \bigl(
    P_1 \intersectionSP (U \unionSP W)
    \bigr)
    \unionSP
    \bigl(
    P_2 \intersectionSP W
    \bigr)
    = \emptyset
  \end{equation}
  so
  $  U \intersectionSP \belowvertices[\layeredgraphstd]{b}$
  does not block
  $\mpscnot{B}{W}$.

  Suppose that
  $U \intersectionSP \abovevertices[\layeredgraphstd]{b}$
  does not block
  $\mpscnot{B}{W}$.
  Since
  $U \unionSP W$
  does not \hideklawe~$b$,
  there is some \sourcepath $P_1$ to~$b$
  with
  $P_1 \intersectionSP (U \unionSP W) = \emptyset$.
  Also, since
  $U \unionSP W$
  blocks~$B$
  but
  $
  \bigl( U \intersectionSP \abovevertices[\layeredgraphstd]{b} \bigr) 
  \unionSP W
  $
  does not,
  there is a \sourcepath
  $P_2$
  via~$B$
  \st
  $
  P_2
  \intersectionSP
  (U \unionSP W)
  \neq \emptyset
  $
  but
  $
  P_2
  \intersectionSP
  (U  \unionSP W)
  \intersectionSP 
  \abovevertices[\layeredgraphstd]{b}
  = \emptyset
  $.
  But then let
  $P =
  \bigl(
  P_1 \intersectionSP \belowvertices[\layeredgraphstd]{b}  
  \bigr)
  \unionSP 
  \bigl(
  P_2 \intersectionSP \abovevertices[\layeredgraphstd]{b}
  \bigr)
  $
  be the \sourcepath that starts like $P_1$
  and continues like $P_2$ from~$b$ onwards.
  We get that
  $P$
  agrees with $B$
  and that
  $P \intersectionSP (U \unionSP W) = \emptyset$,
  contradicting the assumption that
  $U$ blocks~$\mpscnot{B}{W}$.
\end{proof}

We want to distinguish between \mpsctext{}s that are
\hiddenklawe and \mpsctext{}s that are just blocked, but not
\hiddenklawe.
To this end, let us introduce the notation
\begin{equation}
  \label{eq:definition-of-hidded-subconfigurations}
  \mpconfhidden (\mpconf, U)  =
  \Setdescr
      {\mpscnotstd \in \mpconf}
      {\text{$U$ \hideklawe{}s $\mpscnotstd$}}
\end{equation}
to denote the \mpsctext{}s in $\mpconf$ \hiddenklawe by $U$ and
\begin{equation}
  \label{eq:definition-of-just-blocked-subconfigurations}
  \mpconfjustblocked (\mpconf, U)  =
  \mpconf  \setminus    \mpconfhidden (\mpconf, U) 
\end{equation}
to denote the \mpsctext{}s that are just blocked. We write
\begin{align}
  \label{eq:black-bottom-vertices-hidden}
  \blackshidden  (\mpconf, U) 
  &=
  \setdescr
  {\bottomvertex{B}}
  {\mpscnot{B}{W} \in \mpconfhidden (\mpconf, U) } \\
  \label{eq:black-bottom-vertices-just-blocked}
  \blacksjustblocked (\mpconf, U) 
  &=
  \setdescr
  {\bottomvertex{B}}
  {\mpscnot{B}{W} \in \mpconfjustblocked (\mpconf, U) }
\end{align}
to denote the black bottom vertices in these two subsets of
\mpsctext{}s and note that we can have
$
\blackshidden (\mpconf, U) 
\intersectionSP
\blacksjustblocked (\mpconf, U) 
\neq \emptyset
$.
The white pebbles in these subsets located below the bottom vertices of
the black \multipebble{}s that they are supporting are denoted
\begin{align}
  \label{eq:whites-below-hidden}
  \whitesbelowhidden (\mpconf, U) 
  &=
  \Setdescr
      {W \intersectionSP \belowvertices[\layeredgraphstd]{b}}
      {\mpscnot{B}{W} \in \mpconfhidden (\mpconf, U) , \, 
        b = \bottomvertex{B}}
      \\
\shortintertext{and}
  \label{eq:whites-below-just-blocked}
  \whitesbelowjustblocked (\mpconf, U) 
  &=
  \Setdescr
      {W \intersectionSP \belowvertices[\layeredgraphstd]{b}}
      {\mpscnot{B}{W} \in \mpconfjustblocked (\mpconf, U) , \, 
        b = \bottomvertex{B}}
\eqperiod
\end{align}
This notation will be used heavily in what follows, so we give a
couple of simple but hopefully illuminating examples before we
continue.

\begin{figure}[tp]
  \centering
  \subfigure[%
    $
    \Set{
      \mpscnot{s_4,y_1,z}{v_2},
      \mpscnot{u_3,w_3}{s_3},
      \mpscnot{w_4,x_3}{v_5}}
    $.]	    
  {
    \label{fig:exampleblockingsets-a}
    \begin{minipage}[b]{.48\linewidth}
      \centering
      \includegraphics{hidingsets.10}%
    \end{minipage}
  }
  \hfill
  \subfigure[
    $
    \Set{ 
      \mpscnot{s_4,\! v_4,\! w_3,\! x_3,\! y_2}{\emptyset}, 
      \mpscnot{w_2,\! y_1}{s_3,\! u_3,\! x_1},  
      \mpscnot{w_4}{v_5}
    }$.]
  {
    \label{fig:exampleblockingsets-b}
    \begin{minipage}[b]{.48\linewidth}
      \centering
      \includegraphics{hidingsets.11}%
    \end{minipage}
  }
  \caption{Examples of \mpctext{}s with \hiddenklawe 
    and just blocked \multipebble{}s.}
  \label{fig:exampleblockingsets}
\end{figure}

\begin{example}
  \label{ex:hidden-and-just-blocked-notation}
  Consider the \mpctext{}s and blocking sets in
  \reffig{fig:exampleblockingsets}.
  For the \mpctext
  $
  \mpconf_1 =
  \Set{
    \mpscnot{s_4, y_1, z}{v_2},
    \mpscnot{u_3, w_3}{s_3},
    \mpscnot{w_4, x_3}{v_5}}
  $
  with blocking set
  $U_1 = \set{v_3, v_4}$
  in
  \reffig{fig:exampleblockingsets-a},
  the vertex set
  $\set{v_4, v_5}$
  \hideklawe{}s 
  $w_4 = \bottomvertex{\blackblobnot{w_4, x_3}}$
  but 
  $\blackblobnot{s_4, y_1, z}$
  is blocked but not \hiddenklawe by $\set{v_2, v_3, v_4}$
  and
  $\blackblobnot{u_3,w_3}$
  is blocked but not \hiddenklawe by $\set{v_3}$.
  Thus, we have
  \begin{align*}    
  \mpconfhidden (\mpconf_1, U_1)  
  &=
  \Set{\mpscnot{w_4,x_3}{v_5}}
  \\
  \mpconfjustblocked (\mpconf_1, U_1)  
  &=
  \Set{
    \mpscnot{s_4,y_1,z}{v_2},
    \mpscnot{u_3,w_3}{s_3}
  }
  \\
  \blackshidden  (\mpconf_1, U_1) 
  &=
  \set{w_4}
  \\
  \blacksjustblocked  (\mpconf_1, U_1) 
  &=
  \set{s_4, u_3}
  \\
  \whitesbelowhidden (\mpconf_1, U_1) 
  &=
  \set{v_5}
  \\
  \whitesbelowjustblocked (\mpconf_1, U_1) 
  &=
  \set{s_3}
  \\
  \intertext{
  in this example.  For the configuration
  $
  \mpconf_2 =
  \Set{ 
    \mpscnot{s_4, v_4, w_3, x_3, y_2}{\emptyset}, 
    \mpscnot{w_2, y_1}{s_3, u_3, x_1},  
    \mpscnot{w_4}{v_5}
  }$
  with blocker
  $U_2 = \set{s_2, u_4, u_5}$
  in
  \reffig{fig:exampleblockingsets-b},
  it is straightforward to verify that}
  \mpconfhidden (\mpconf_2, U_2)  
  &=
  \Set{ 
    \mpscnot{w_2, y_1}{s_3, u_3, x_1},  
    \mpscnot{w_4}{v_5}
  }
  \\
  \mpconfjustblocked (\mpconf_2, U_2)  
  &=
  \Set{ 
    \mpscnot{s_4, v_4, w_3, x_3, y_2}{\emptyset}
  }
  \\
  \blackshidden  (\mpconf_2, U_2) 
  &=
  \set{w_2, w_4}
  \\
  \blacksjustblocked  (\mpconf_2, U_2) 
  &=
  \set{s_4}
  \\
  \whitesbelowhidden (\mpconf_2, U_2) 
  &=
  \set{s_3, u_3, v_5}
  \\
  \whitesbelowjustblocked (\mpconf_2, U_2) 
  &=
  \emptyset
  \end{align*}
  are the corresponding sets.

  Let us also use the opportunity to illustrate
  \refdef{def:white-elimination}.
  The \mpctext
  $\mpconf_1$
  is not \weliminated \wrt $U_1$, since $U_1$ also blocks this
  configuration with the white pebble on $s_3$ removed.
  However, a better idea measure-wise is to change the blocking set
  for $\mpconf_1$
  to 
  $U'_1 = \set{s_4, v_4}$,
  which has measure
  $\vmeasure{U'_1} = 4 < 6 = \vmeasure{U_1}$.
  The vertex set $U_2$ can be verified to be a \minmeasure blocker for
  $\mpconf_2$,
  but when
  $\mpconf_2$ is \weliminated \wrt $U_2$ the white pebble on $x_1$
  disappears. 

  As a final remark in this example, we comment that
  although we have not indicated explicitly in
  \reftwofigs
      {fig:exampleblockingsets-a}
      {fig:exampleblockingsets-b}
  which white pebbles $W$ are associated with which black
  \multipebble~$B$
  (as was done in
  \reffig{fig:problematichidingset-a}), 
  this is uniquely determined by the requirement in
  \refdef{def:C-multi-pebble-configuration}
  that
  $W \subseteq \lpp{B}$.
\end{example}

For the rest of this \sectionBlobBound we will assume \wolog
(in view of
\refpr{pr:removing-self-blockers}
and
\refcor{cor:sufficient-prove-gkp-for-w-eliminated})
that we are dealing with \anmpctext
$\mpconf$
and a 
\minmeasure blocker 
$U$
of 
$\mpconf$
\st
$\mpconf$
is free from self-blocking \mpsctext{}s  and  is
\weliminated \wrt~$U$.
As an aside, we note that it is not hard to show
(using
\refdef{def:white-elimination}
and
\refpr{pr:not-covering-means-blocking-from-above})     
that this implies that
$\whitesbelowjustblocked (\mpconf, U) = \emptyset$.
We will tend to drop the arguments
$\mpconf$ and $U$
for
$
\mpconfhidden,
\mpconfjustblocked,
\blackshidden,
\blacksjustblocked,
\whitesbelowhidden$, and
$\whitesbelowjustblocked$,
since  from now on the \mpctext $\mpconf$ and the blocker $U$ will be 
fixed. 
With this notation,
\refth{th:pyramids-possess-GKP}
clearly follows if we can prove the following lemma.

\begin{lemma}
  \label{lem:pyramids-possess-GKP}
  Let
  $\mpconf$
  be any \mpctext on a 
  layered spreading DAG 
  and
  $U$
  be any blocking set for $\mpconf$
  \st
  \begin{enumerate}
  \item 
    $\vpotential{\mpconf} =
    {\meastopot{U}}$,
    \ie $U$ is a \minmeasure blocker of $\mpconf$,
    
  \item 
    $\mpconf$ is 
    free from self-blocking \mpsctext{}s  
    and is \weliminated \wrt~$U$,
    and

  \item 
    $U$ has minimal size among all blocking sets 
    $U'$ for $\mpconf$
    \st
    $\vpotential{\mpconf} =
    {\meastopot{U'}}$.
  \end{enumerate}
  Then
  $\setsize{U} \leq 
  13 \cdot 
  \BHunionBBunionWH
  $.
\end{lemma}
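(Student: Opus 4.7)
The plan is to decompose the blocker $U$ into two pieces that handle hidden \mpsctext{}s and just-blocked \mpsctext{}s respectively, and then bound each piece in terms of the corresponding chargeable vertices. More concretely, I set $\uhiding \subseteq U$ to be a maximal subset such that $\uhiding \cup \mpcwhitesof{\mpconf}$ hides all of $\blackshidden$, and $\ujustblocking = U \setminus \uhiding$. By \refpr{pr:not-covering-means-blocking-from-above}, each $\mpscnotstd \in \mpconfjustblocked$ must be blocked using vertices strictly above $\bottomvertex{B}$, so $\ujustblocking$ is responsible for the just-blocked part, while $\uhiding$ carries the hiding of the bottom vertices of hidden blobs.

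For $\ujustblocking$, the goal is to show $|\ujustblocking| \leq c_1 \cdot |\blacksjustblocked|$ for a small constant $c_1$. Here I exploit size-minimality of $U$: if some $\mpscnotstd \in \mpconfjustblocked$ required more than a constant number of $\ujustblocking$-vertices to block, I could locally replace that portion of $\ujustblocking$ by adding $\bottomvertex{B}$ itself to $\uhiding$ (so as to \emph{hide} $\bottomvertex{B}$), contradicting either measure- or size-minimality.  A careful case analysis of how blockers stack above a single $\bottomvertex{B}$ in a layered \pebblingdag—combined with the fact that once $\bottomvertex{B}$ is accessible, only a few upper vertices suffice to kill the paths that survive $W$—gives the constant bound per bottom vertex.

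For $\uhiding$, the key point is that $(\blackshidden, \whitesbelowhidden)$ together with $\uhiding$ plays exactly the role of a black-white pebble configuration in the sense of Section~\ref{sec:pebble-games-pyramids}: $\uhiding \cup \whitesbelowhidden$ hides the black bottom vertices $\blackshidden$ in the standard sense of \refdef{def:cover}. I can therefore invoke \reflem{lem:covering-set-is-tight} to pass to a tight subset, apply \reflem{lem:klawe-lemma-three-three} to decompose into the connected components of the hiding-set graph, and then, on each component, use the local bound of \reflem{lem:pick-local-good-blocker} (which requires only that $\layeredgraphstd$ be spreading) to improve $\uhiding$ locally. Minimality of $U$ rules out any component in which the hiding set is ``too large'', and the spreading inequality~\eqref{eq:spreading-property} then yields $|\uhiding| \leq c_2 \cdot |\blackshidden \cup \whitesbelowhidden|$. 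Summing the two bounds produces $|U| \leq (c_1+c_2)\cdot \BHunionBBunionWH$, with slack to absorb overlap into the stated constant $13$.

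The main obstacle is controlling the interaction between the two pieces of the decomposition under the various minimality hypotheses. Specifically, when I invoke \reflem{lem:pick-local-good-blocker} to improve $\uhiding$ locally inside a connected component, I must check that the replacement $\uhiding^{*}$ continues to cooperate with $\ujustblocking$ to block every $\mpscnotstd \in \mpconfjustblocked$, and that the measure of $U^{*} = \ujustblocking \cup \uhiding^{*}$ does not exceed that of $U$ (using \reflem{lem:union-respecting-leq} and the $\measureleq$-inequality from \reflem{lem:pick-local-good-blocker}). The second delicate point is the ``isolated components'' phenomenon in \refex{ex:non-connectedness}: this forces me to build the hiding-set graph on $\blackshidden$ rather than directly on $\uhiding \cup \mpcwhitesof{\mpconf}$, and then to transport the conclusions back to a bound on $|\uhiding|$ itself. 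These bookkeeping steps, rather than any single deep argument, are where the heavy lifting resides.
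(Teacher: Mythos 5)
Your decomposition of $U$ into $\uhiding$ and $\ujustblocking$ and the idea of bounding each part separately matches the paper's strategy at the highest level, and you correctly flag the central obstacle: a local improvement of $\uhiding$ inside a connected component can unblock \mpsctext{}s in $\mpconfjustblocked$, and this must be repaired without increasing measure or size. However, you then dismiss this obstacle as ``bookkeeping'' and your sketch contains no mechanism that actually overcomes it; this is precisely where the real content of the paper's proof lies, and it is missing from your outline. In particular, there is no direct bound $\setsize{\uhiding} \leq c_2\cdot\Setsize{\blackshidden \unionSP \whitesbelowhidden}$ obtainable by applying \reflem{lem:pick-local-good-blocker} plus minimality of~$U$: replacing $\uhidingith$ by a cheaper hider of $\blackshiddenith$ does \emph{not} contradict minimality, because the resulting set may fail to block $\mpconfjustblocked$ at all. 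The paper resolves this by proving a genuinely strengthened version of \reflem{lem:pick-local-good-blocker} (\reflem{lem:generalized-pick-local-good-blocker}) that outputs a \emph{multi-set} $\unewhidingith$ containing $\floorcompact{\setsize{\uhidingith}/3}$ spare copies of a high-level vertex, while still satisfying $\unewhidingith \measureleq \uhidingith$ and $\Setsize{\unewhidingith} < \Setsize{\uhidingith}$; it assigns component weights $\ceilingcompact{\setsize{\uhidingith}/6}$ to components where $\Setsize{\uhidingith} \geq 6\cdot\Setsize{\blackshiddenith\unionSP\whitesbelowhiddenith}$ and runs a \emph{second} Hall-matching argument (\refpr{pr:weight-of-non-zero-components-is-large} plus the construction following it) pairing each unblocked vertex of $\blacksjustblocked$ with a spare copy in some heavier component above it, which is then lowered to that vertex (measure non-increasing by \refobs{obs:moving-vertices-downwards-non-increases-measure}). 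Only after all improvements are made \emph{simultaneously} via \reflem{lem:generalization-union-respecting-leq} does one obtain a blocker $\unewblockingset$ with $\vmeasurecompact{\unewblockingset}\leq\vmeasure{U}$ but $\Setsize{\unewblockingset} < \setsize{U}$, contradicting size-minimality of~$U$. This is an argument by contradiction producing a better global blocker, not a direct bound on~$\setsize{\uhiding}$.

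Two further concrete errors. First, you define $\uhiding$ to be a \emph{maximal} subset such that $\uhiding\unionSP\mpcwhitesof{\mpconf}$ hides $\blackshidden$; since $U$ itself (with the appropriate whites) hides $\blackshidden$, a maximal such subset is all of~$U$, making $\ujustblocking = \emptyset$ and the decomposition vacuous. You need $\uhiding$ to be a \emph{smallest} subset of $U$ hiding $\mpconfhidden$, and ``hiding'' here must use the white pebbles $W$ of each individual \mpsctext, not the union $\mpcwhitesof{\mpconf}$. Second, your claim that each $\mpscnotstd\in\mpconfjustblocked$ needs only a ``constant number'' of $\ujustblocking$-vertices is false: a single \multipebble can have arbitrarily many surviving paths above its bottom vertex, each requiring its own blocker. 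The paper's \reflem{lem:U-just-blocking-is-small} establishes $\setsize{\ujustblocking}\leq\setsize{\blacksjustblocked}$ globally via a Hall's marriage theorem argument exploiting that every blocker in $\ujustblocking$ lies strictly above at least one vertex of $\blacksjustblocked$, not via a per-subconfiguration constant.
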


The proof is by contradiction,
although we will have to work harder than for the corresponding 
\refth{th:pyramids-have-klawe-property}
for black-white pebbling and also use (the proof of) the latter
theorem as a subroutine. 
Thus, for the rest of this section, let us assume on the contrary that
$U$ has all the properties stated in 
\reflem{lem:pyramids-possess-GKP}
but that
$
\setsize{U} > 
13 \cdot 
\BHunionBBunionWH
$.
We will show that this leads to a contradiction.

For the \mpsctext{} in
$\mpconfhidden$
that are \hiddenklawe by~$U$, one could argue that 
matters should be reasonably similar to the case for standard
black-white pebbling, and hopefully we could apply similar reasoning
as in 
\refsec{sec:klawe-bw-pebbling-proving-the-Klawe-property}
to prove something useful about the vertex set
\hidingklawe these \mpsctext{}s.
The \mpsctext{}s in
$\mpconfjustblocked$
that are just blocked but not \hiddenklawe, however, seem harder
to get a handle on
(compare  \refex{ex:non-connectedness}).

Let $\uhiding \subseteq U$ be a smallest vertex set
\hidingklawe
$\mpconfhidden$  
and let
$\ujustblocking = U \setminus \uhiding$.
The set
$\ujustblocking$ 
consists of vertices that are not involved in any
\hidingklawe of \mpsctext{}s
in~$\mpconfhidden$, but only in
blocking \mpsctext{}s in $\mpconfjustblocked$ on levels above their
bottom vertices.   
As a first step towards proving
\reflem{lem:pyramids-possess-GKP},
and thus
\refth{th:pyramids-possess-GKP},
we want to argue that $\ujustblocking$ cannot be very large.

Consider the \multipebble{}s in $\mpconfjustblocked$. By definition
they  are not
\hiddenklawe, but are blocked at some level above
$\vlevel{\bottomvertex{B}}$.
Since the vertices in
$\ujustblocking$
are located on high levels, a naive attempt to improve the blocking
set would be to pick some 
$u \in \ujustblocking$
and replace it by the  vertices in
$\blacksjustblocked$ corresponding to the \mpsctext{}s 
in~$\mpconfjustblocked$
that $u$ is
involved in blocking, 
\ie by the set
$
\bbvstd^{u} =
\Setdescr
{\bottomvertex{B}}
{U \setminus \set{u} \text{ does not block } 
  \mpscnotstd \in \mpconfjustblocked}
$.
Note that 
$\bbvstd^{u}$
is lower down in the graph than~$u$,
so 
$(U \setminus \set{u}) \unionSP \bbvstd^{u}$
is obtained from $U$ by moving vertices downwards
and by construction
$(U \setminus \set{u}) \unionSP \bbvstd^{u}$
blocks~$\mpconf$.
But by assumption, $U$~has minimal potential and cardinality, so this
new blocking set cannot be an improvement measure- or
cardinality-wise. 
The same holds if we extend the construction to
subsets $U' \subseteq \ujustblocking$
and the corresponding bottom vertices
$\bbvstd^{U'} \subseteq \blacksjustblocked$.
By assumption we can never find any subset such that
$(U \setminus \set{U'}) \unionSP \bbvstd^{U'}$
is a better blocker than~$U$. It follows that the
cost of the \multipebble{}s that $\ujustblocking$ helps to block must
be larger than the size of~$\ujustblocking$,
and in particular that
$
\setsize{\ujustblocking}
\leq
\setsize{\blacksjustblocked}
$.
Let us write this down as a lemma and prove it properly.

\begin{lemma}
  \label{lem:U-just-blocking-is-small}
  Let
  $\mpconf$
  be any \mpctext on a layered DAG
  and
  $U$
  be any blocking set for~$\mpconf$
  \st
  $\vpotential{\mpconf} =
  {\meastopot{U}}$,
  $U$ has minimal size among all blocking sets 
  $U'$ for~$\mpconf$
  with
  $\vpotential{\mpconf} =
  {\meastopot{U'}}$,
  and
  $\mpconf$ is 
  free from self-blocking \mpsctext{}s  
  and is 
  \weliminated \wrt~$U$.
  Then if
  $\uhiding \subseteq U$ 
  is any smallest set \hidingklawe
  $\mpconfhidden$
  and
  $\ujustblocking = U \setminus \uhiding$,
  it holds that
  $
  \setsize{\ujustblocking}
  \leq
  \setsize{\blacksjustblocked}
  $.
\end{lemma}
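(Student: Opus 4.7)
The argument proceeds by contradiction: assuming $\setsize{\ujustblocking} > \setsize{\blacksjustblocked}$, I will exhibit a blocker of $\mpconf$ matching the minimum measure of $U$ but strictly smaller in cardinality than $U$, violating the assumed minimality of $\setsize{U}$.

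For any $U' \subseteq \ujustblocking$, define
\[
  N(U') := \Setdescr{\bottomvertex{B}}{\mpscnotstd \in \mpconfjustblocked \text{ and } (U \setminus U') \text{ does not block } \mpscnotstd}
\]
and $\tilde U(U') := (U \setminus U') \unionSP N(U')$. A quick check shows that $\tilde U(U')$ always blocks $\mpconf$: the subset $\uhiding \subseteq U \setminus U'$ still hides $\mpconfhidden$, and for every $\mpscnotstd \in \mpconfjustblocked$ either $U \setminus U'$ already blocks $\mpscnotstd$ or $\bottomvertex{B}$ lies in $N(U') \subseteq \tilde U(U')$ and blocks $\mpscnotstd$ by itself. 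Since $\mpconf$ has no self-blockers, \refpr{pr:not-covering-means-blocking-from-above} applies to each $\mpscnotstd \in \mpconfjustblocked$, and (noting that no bottom vertex $b$ of a \mpsctext in $\mpconfjustblocked$ can belong to $U$, for otherwise $U$ would hide $b$ and the \mpsctext would belong to $\mpconfhidden$) I conclude that every $b \in N(U')$ admits a witness $u \in U' \intersectionSP \aboveverticesNR[G]{b}$ on a strictly higher level than $b$.

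Applied with $U' = \ujustblocking$, the contradiction hypothesis and the inclusion $N(\ujustblocking) \subseteq \blacksjustblocked$ yield $\setsize{\tilde U(\ujustblocking)} = \setsize{\uhiding} + \setsize{N(\ujustblocking)} < \setsize{U}$ (disjointness of $\uhiding$ and $N(\ujustblocking)$ follows from $N(\ujustblocking) \subseteq \blacksjustblocked$ and $\blacksjustblocked \intersectionSP U = \emptyset$). The minimality of $\vmeasure{U}$ automatically gives $\vmeasure{\tilde U(\ujustblocking)} \geq \vmeasure{U}$, so all that is left is to establish $\vmeasure{\tilde U(\ujustblocking)} \leq \vmeasure{U}$; the two inequalities then force equality, making $\tilde U(\ujustblocking)$ another min-measure blocker of strictly smaller size than $U$ and contradicting the minimality assumption on $\setsize{U}$.

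The hard part is exactly this measure inequality, which reduces to the level-by-level count $\setsize{\vertabovelevel{N(\ujustblocking)}{j}} \leq \setsize{\vertabovelevel{\ujustblocking}{j}}$ for every level $j$. The witness property of the previous paragraph gives only the pointwise fact that every $b \in N(\ujustblocking)$ lies strictly below some $u \in \ujustblocking$, which is insufficient because a single $u$ may be the unique witness for many $b$'s. The plan is to upgrade the pointwise statement to an injective matching via Hall's theorem on the bipartite graph with edges $(u, b)$ given by $b \in N(\set{u})$; the required Hall condition $\setsize{N(S)} \geq \setsize{S}$ for all $S \subseteq \ujustblocking$ will again be attacked by the exchange construction, this time applied to a minimal offending $S_0$. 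The main technical challenge lies in establishing $\vmeasure{\tilde U(S_0)} \leq \vmeasure{U}$ for this $S_0$ without circularity, essentially by exploiting the level profile supplied by the minimality of $S_0$ to rule out the awkward scenario in which the exchange raises the measure.
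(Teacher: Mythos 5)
Your overall strategy---assume $\setsize{\ujustblocking} > \setsize{\blacksjustblocked}$ for contradiction and derive a smaller min-measure blocker by an exchange---is the right idea, but the proof is not complete. The measure inequality $\vmeasurecompact{\tilde{U}} \leq \vmeasure{U}$, which is where the substance of the lemma lives, is explicitly left as ``the main technical challenge''; what you write in the final paragraph is a plan, not a proof, and the plan has gaps. You define $N(U')$ by what goes unblocked after removing the whole set $U'$, yet build the bipartite graph with edges $b \in N(\set{u})$ and then assert the Hall condition as $\setsize{N(S)} \geq \setsize{S}$ for the set-removal~$N$. These are different notions of neighbourhood: removing several vertices at once can unblock a subconfiguration that no single removal does, so $\Union_{u \in S} N(\set{u})$ can be strictly contained in $N(S)$, and $\setsize{N(S)} \geq \setsize{S}$ is not Hall's condition for the graph you drew. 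Even once the graph is pinned down, you still owe a proof that minimality of the offending $S_0$ yields a level-up injection from $N(S_0)$ into $S_0$, which is what \refobs{obs:moving-vertices-downwards-non-increases-measure} requires; this is provable via a defect form of Hall's theorem restricted to $S_0 \unionSP N(S_0)$, but it is not done here.

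The paper proves the inequality by working from the $\blacksjustblocked$ side. It takes $\vneighbour{b} = \ujustblocking \intersection \abovevertices{b}$ (so that $\vneighbour{\blacksjustblocked} = \ujustblocking$ by subset-minimality of $U$ together with \refpr{pr:not-covering-means-blocking-from-above}), chooses a \emph{largest} $\bbvstd' \subseteq \blacksjustblocked$ with $\setsize{\vneighbour{\bbvstd'}} < \setsize{\bbvstd'}$, and notes that maximality of $\bbvstd'$ makes Hall's condition for matching $\blacksjustblocked \setminus \bbvstd'$ into $\ujustblocking \setminus \vneighbour{\bbvstd'}$ automatic. The replacement $U^* = \uhiding \unionSP \vneighbour{\bbvstd'} \unionSP (\blacksjustblocked \setminus \bbvstd')$ then has $\vmeasurecompact{U^*} \leq \vmeasure{U}$ by \refobs{obs:moving-vertices-downwards-non-increases-measure}, is strictly smaller, and still blocks $\mpconf$, contradicting the choice of $U$. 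Taking the \emph{maximal} violating set on the $\blacksjustblocked$ side is what makes the Hall and measure steps immediate; the route via a minimal $S_0$ on the $\ujustblocking$ side pushes the work into exactly the spot where your proof stops.
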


Before proving this lemma, we note the immediate corollary that if
the whole blocking set $U$ is
significantly 
larger than~$\mpcost{\mpconf}$,
the lion's share of $U$ by necessity consists not of vertices blocking 
\mpsctext{}s in $\mpconfjustblocked$, but of vertices 
\hidingklawe  \mpsctext{}s in~$\mpconfhidden$.
And recall that we are indeed assuming,
to get a contradiction,
that $U$ is large.

\begin{corollary}
  \label{cor:U-hiding-is-large}
  Assume that
  $\mpconf$ and $U$ 
  are as in
 \reflem{lem:pyramids-possess-GKP}
  but with
  $
  \setsize{U} 
  \!>\!
  13 \cdot 
  \BHunionBBunionWH
  $.
  Let 
  $\uhiding \subseteq U$ 
  be a smallest set \hidingklawe
  $\mpconfhidden$.
  Then
  $
  \setsize{\uhiding}
  >
  12 \cdot
  \BHunionBBunionWH
  $.
\end{corollary}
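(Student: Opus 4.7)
The plan is to derive this as an essentially immediate consequence of Lemma~\ref{lem:U-just-blocking-is-small}, by splitting $U$ into $\uhiding$ and $\ujustblocking$ and using the size bound on $\ujustblocking$ to transfer almost all of the ``bulk'' of $U$ into $\uhiding$.

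First I would verify that the hypotheses of Lemma~\ref{lem:U-just-blocking-is-small} are satisfied in our setting: the configuration $\mpconf$ and blocker $U$ in the corollary satisfy exactly the three assumptions of that lemma (minimum-measure, minimum-size among minimum-measure blockers, and self-blocker-free plus $\mpcwhites$-eliminated w.r.t.\ $U$). Hence the lemma applies and yields $\setsize{\ujustblocking} \leq \setsize{\blacksjustblocked}$.

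Next, abbreviate $M \defeq \BHunionBBunionWH$. Since $\blacksjustblocked$ is one of the three sets in the union defining $M$, we have $\setsize{\blacksjustblocked} \leq M$, and the lemma therefore gives $\setsize{\ujustblocking} \leq M$. Combining this with the standing hypothesis $\setsize{U} > 13M$ and the disjoint decomposition $U = \uhiding \disjointunionSP \ujustblocking$, we compute
\begin{equation*}
\setsize{\uhiding} \;=\; \setsize{U} - \setsize{\ujustblocking} \;>\; 13 M - M \;=\; 12 M \eqcomma
\end{equation*}
which is exactly the claimed lower bound.

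Since this argument is just arithmetic once the lemma has been invoked, there is no real obstacle here---the work has already been done in Lemma~\ref{lem:U-just-blocking-is-small}. The only minor point worth double-checking is that a ``smallest'' $\uhiding$ exists and that the complement $\ujustblocking = U \setminus \uhiding$ is then the object to which the lemma applies; both are immediate from the definitions.
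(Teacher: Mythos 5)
Your proof is correct and matches the paper's intent exactly: the paper labels this result an ``immediate corollary'' of \reflem{lem:U-just-blocking-is-small} and supplies no separate argument, precisely because the reasoning is the arithmetic you give --- verify the hypotheses of the lemma carry over (they are literally the same three as in \reflem{lem:pyramids-possess-GKP}, just reordered), note $\setsize{\ujustblocking} \leq \setsize{\blacksjustblocked} \leq \BHunionBBunionWH$, and subtract. Nothing to add.
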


As was indicated  in the informal discussion preceding
\reflem{lem:U-just-blocking-is-small}, 
the proof of the lemma uses the easy observation that moving vertices
downwards can only decrease the measure.

\begin{observation}
  \label{obs:moving-vertices-downwards-non-increases-measure}
  Suppose that
  $U$, $V_1$ and $V_2$
  are 
  vertex sets in a layered DAG
  \st 
  $U \intersectionSP V_2 = \emptyset$
  and
  there is a one-to-one (but not necessarily onto)
  mapping
  $\funcdescr{f}{V_1}{V_2}$
  with the property that
  $
  \vlevel{v}
  \leq
  \vlevel{f(v)}
  $.
  Then
  $
  \vmeasure{U \unionSP V_1} \leq  \vmeasure{U \unionSP V_2}
  $.
\end{observation}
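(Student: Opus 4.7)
The observation reduces to a level-by-level cardinality comparison, so my plan is to argue pointwise over the index $j$ and then take the maximum. Fix a level $j$ and abbreviate $A = \vertabovelevel{U}{j}$, $B_1 = \vertabovelevel{V_1}{j}$ and $B_2 = \vertabovelevel{V_2}{j}$. The goal is to show $\vjthmeasure{j}{U \unionSP V_1} \leq \vjthmeasure{j}{U \unionSP V_2}$, which in turn will follow from $|A \cup B_1| \leq |A \cup B_2|$.

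For the cardinality inequality I will use the hypothesis on $f$ twice. First, the restriction of $f$ to $B_1$ lands inside $B_2$: any $v \in V_1$ with $\vlevel{v} \geq j$ satisfies $\vlevel{f(v)} \geq \vlevel{v} \geq j$, so $f(v) \in \vertabovelevel{V_2}{j} = B_2$; injectivity is inherited, giving $|B_1| \leq |B_2|$. Second, the hypothesis $U \intersectionSP V_2 = \emptyset$ forces $A \intersectionSP B_2 = \emptyset$, so $|A \cup B_2| = |A| + |B_2|$. Combining with the trivial bound $|A \cup B_1| \leq |A| + |B_1| \leq |A| + |B_2|$ yields $|\vertabovelevel{(U \unionSP V_1)}{j}| \leq |\vertabovelevel{(U \unionSP V_2)}{j}|$, as needed. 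Note that I do \emph{not} assume $U$ and $V_1$ are disjoint; this is the only mildly delicate point, but it is absorbed harmlessly into the sub-additivity bound on the left.

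To finish, I will plug the cardinality inequality back into \refdef{def:measure}. If $\vertabovelevel{(U \unionSP V_1)}{j}$ is empty then $\vjthmeasure{j}{U \unionSP V_1} = 0$ and the inequality is trivial. Otherwise, by the cardinality bound above, $\vertabovelevel{(U \unionSP V_2)}{j}$ is also non-empty, and
\begin{equation*}
\vjthmeasure{j}{U \unionSP V_1}
= j + 2 \Setsize{\vertabovelevel{(U \unionSP V_1)}{j}}
\leq j + 2 \Setsize{\vertabovelevel{(U \unionSP V_2)}{j}}
= \vjthmeasure{j}{U \unionSP V_2}
\leq \vmeasure{U \unionSP V_2}.
\end{equation*}
Since this holds for every $j$, taking the maximum on the left gives $\vmeasure{U \unionSP V_1} \leq \vmeasure{U \unionSP V_2}$. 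The argument is essentially a one-line counting exercise once the existence of $f$ is unpacked level by level, so I do not anticipate any real obstacle.
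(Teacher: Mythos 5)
Your proof is correct and takes essentially the same route as the paper's: both argue level by level, use injectivity of $f$ and the level-nondecreasing property to get $\setsize{\vertabovelevel{V_1}{j}} \leq \setsize{\vertabovelevel{V_2}{j}}$, and use the disjointness $U \intersectionSP V_2 = \emptyset$ to make the union with $U$ additive on the right-hand side. Your explicit handling of the empty-set case in the definition of $\vjthmeasure{j}{\cdot}$ is a small point of rigor the paper glosses over, but the underlying argument is identical.
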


\begin{proof}
  This follows immediately 
  from
  \refdefP{def:measure}
  since the mapping $f$ tells us that
  \begin{multline*}
  \setsize{\vertabovelevel{(U \unionSP V_1)}{j}}
  \,\leq\,      
  \setsize{\vertabovelevel{U}{j}}
  +
  \setsize{\vertabovelevel{V_1}{j}}
  \,\leq\,      
  \setsize{\vertabovelevel{U}{j}}
  +
  \setsize{f(\vertabovelevel{V_1}{j})}
  \\
  \,\leq\,      
  \setsize{\vertabovelevel{U}{j}}
  +
  \setsize{\vertabovelevel{V_2}{j}}
  \,\leq\,      
  \setsize{\vertabovelevel{(U \unionSP V_2)}{j}}     
  \end{multline*}
  for all~$j$.
\end{proof}

\begin{proof}[Proof of   \reflem{lem:U-just-blocking-is-small}]
  Note first that by
  \refpr{pr:not-covering-means-blocking-from-above},
  for every
  $\mpscnotstd \in \mpconfjustblocked$
  with
  $b = \bottomvertex{B}$
  it holds that
  $
  U \intersectionSP \abovevertices[\layeredgraphstd]{b}
  =
  (\uhiding \disjointunionSP \ujustblocking) 
  \intersectionSP \abovevertices[\layeredgraphstd]{b}
  $
  blocks
  $\mpscnotstd$.
  Therefore, 
  all vertices in $\ujustblocking$
  needed to block $\mpscnotstd$
  can be found in
  $
  \ujustblocking
  \intersectionSP \abovevertices[\layeredgraphstd]{b}
  $.
  Rephrasing this slightly,
  the \mpctext $\mpconf$ is blocked by
  $
  \uhiding
  \disjointunionSP
  \bigl(
  \ujustblocking
  \intersectionSP
  \Union_{b \in \blacksjustblocked} \abovevertices[\layeredgraphstd]{b}
  \bigr) 
  $,
  and since $U$ is \subsetminimal we get that
  \begin{equation}
    \label{eq:U-just-blocking-is-all-above}
    \ujustblocking
    =
    \ujustblocking
    \intersectionSP
    {\textstyle
    \Union_{b \in \blacksjustblocked}
    }
    \abovevertices[\layeredgraphstd]{b}
    \eqperiod
  \end{equation}
  Consider the bipartite graph with
  $\blacksjustblocked$
  and
  $\ujustblocking$
  as the left- and right-hand vertices,
  where the neighbours of each
  $b \in \blacksjustblocked$
  are the vertices
  $
  \vneighbour{b} =
  \ujustblocking \intersectionSP \abovevertices[\layeredgraphstd]{b}
  $
  in $\ujustblocking$ above~$b$.  We have that
  $
  \vneighbour{\blacksjustblocked} 
  =
  \ujustblocking
  \intersectionSP
  \Union_{b \in \blacksjustblocked} \abovevertices[\layeredgraphstd]{b}
  = \ujustblocking
  $
  by~%
  \refeq{eq:U-just-blocking-is-all-above}.
  Let
  $
  \bbvstd' \subseteq \blacksjustblocked
  $
  be a largest set \st
  $  
  \Setsize{\vneighbourcompact{\bbvstd'}}
  <
  \Setsize{\bbvstd'}
  $.
  If
  $\bbvstd' = \blacksjustblocked$
  we are done since this is the inequality
  $\setsize{\ujustblocking} < \setsize{\blacksjustblocked}$.
  Suppose therefore that
  $
  \bbvstd' \subsetneqq \blacksjustblocked
  $
  and
  $
  \setsize{\ujustblocking}
  =
  \setsize{\vneighbour{\blacksjustblocked}}
  >
  \setsize{\blacksjustblocked}
  $.

  For all
  $\bbvstd'' \subseteq \blacksjustblocked \setminus \bbvstd'$
  we must have
  $
  \Setsize{
    \vneighbourcompact{\bbvstd''}
    \setminus
    \vneighbourcompact{\bbvstd'}
  }
  \geq
  \Setsize{\bbvstd''}
  $,
  for otherwise
  $\bbvstd''$ could be added to $\bbvstd'$
  to yield an even larger set 
  $\bbvstd^* = \bbvstd' \unionSP \bbvstd''$
  with
  $  
  \Setsize{\vneighbourcompact{\bbvstd^*}}
  <
  \setsize{\bbvstd^*}
  $
  contrary to the assumption that
  $\bbvstd'$
  has maximal size among all sets with this property.
  It follows by Hall's marriage theorem that there must exist a matching of
  $\blacksjustblocked \setminus \bbvstd'$
  into
  $
  \vneighbourcompact{\blacksjustblocked \setminus \bbvstd'}
  \setminus
  \vneighbourcompact{\bbvstd'}
  =
  \ujustblocking
  \setminus
  \vneighbourcompact{\bbvstd'}
  $.
  Thus,
  $
  \Setsize{
    \blacksjustblocked \setminus \bbvstd'
  }
  \leq
  \Setsize{
    \ujustblocking
    \setminus
    \vneighbourcompact{\bbvstd'}
  }
  $
  and in addition it follows from the way our bipartite graph is
  constructed that   every
  $b \in \blacksjustblocked \setminus \bbvstd'$
  is matched to some
  $u \in \ujustblocking \setminus \vneighbourcompact{\bbvstd'}$
  with
  $\vlevel{u} \geq \vlevel{b}$.

  Clearly,
  all \mpsctext{}s in
  \begin{equation}
    \mpconfjustblocked^{1} =
    \Setdescr
        {\mpscnotstd \in \mpconfjustblocked}
        {\bottomvertex{B} \in 
          \blacksjustblocked \setminus \bbvstd'
        }    
  \end{equation}
  are blocked by
  $
  \blacksjustblocked \setminus \bbvstd'
  $
  (even \hiddenklawe by this set, to be precise).
  Also, as was argued in the beginning of the proof,
  every
  $\mpscnotstd \in \mpconfjustblocked$
  with
  $b = \bottomvertex{B}$
  is blocked by
  $
  \uhiding
  \unionSP
  \bigl(
  \ujustblocking \intersectionSP \abovevertices[\layeredgraphstd]{b}
  \bigr)
  =
  \uhiding
  \unionSP
  \vneighbour{b}
  $,
  so all \mpsctext{}s in
  \begin{equation}
    \mpconfjustblocked^{2} =
    \Setdescr
        {\mpscnotstd \in \mpconfjustblocked}
        {\bottomvertex{B} \in \bbvstd'}
  \end{equation}
  are blocked by
  $\uhiding \unionSP \vneighbourcompact{\bbvstd'}$
  where
  $
  \Setsize{\vneighbourcompact{\bbvstd'}}
  <
  \Setsize{\bbvstd'}
  $.
  And we know that
  $\mpconfhidden$
  is blocked (even \hiddenklawe) by
  $\uhiding$.
  It follows that if we let
  \begin{equation}
    \label{eq:new-U-for-lem-U-just-blocking-is-small}
    U^*
    =
    \uhiding 
    \unionSP 
    \vneighbourcompact{\bbvstd'}
    \unionSP
    \bigr(
    \blacksjustblocked \setminus \bbvstd' 
    \bigl)
  \end{equation}
  we get a vertex set $U^*$ that
  blocks
  $
  \mpconfhidden
  \unionSP
  \mpconfjustblocked^{1}
  \unionSP
  \mpconfjustblocked^{2}
  =
  \mpconf
  $,
  has measure
  $
  \vmeasurecompact{U^*}
  \leq
  \vmeasure{U}
  $
  because of
  \refobs{obs:moving-vertices-downwards-non-increases-measure},
  and has size
  \begin{equation}
    \Setsize{U^*}
    \leq
    \setsize{\uhiding} 
    +
    \Setsize{\vneighbourcompact{\bbvstd'}}
    +
    \Setsize{\blacksjustblocked \setminus \bbvstd'} 
    <
    \setsize{\uhiding} 
    +
    \Setsize{\bbvstd'}
    +
    \Setsize{\blacksjustblocked \setminus \bbvstd'}
    =
    \setsize{U}
  \end{equation}
  strictly less than the size of~$U$.
  But this is a contradiction, since
  $U$ was chosen to be of minimal size.
  The lemma follows.
\end{proof}

The idea in the remaining part  of the proof is as follows:
Fix some smallest subset
$\uhiding \subseteq U$ 
that \hideklawe{}s
$\mpconfhidden$,
and let
$\ujustblocking = U \setminus \uhiding$.
\Refcor{cor:U-hiding-is-large}
says that
$\uhiding$
is the totally dominating part of~$U$
and hence that
$\uhiding$
is very large. But
$\uhiding$
\hideklawe{}s
the \mpscfulltext{}s in
$\mpconfhidden$
very much in a similar way as for \hidingsetklawe{}s in the standard
black-white pebble game. And we know from 
\refsec{sec:klawe-bw-pebbling-proving-the-Klawe-property}
that such sets need not be very large.
Therefore we want to use Klawe-like ideas to derive a contradiction by
transforming $\uhiding$ locally into a
(much) better blocking set for~$\mpconfhidden$.
The problem is that this might leave some \mpsctext{}s in
$\mpconfjustblocked$
not being blocked any longer
(note that in general
$\ujustblocking$
will \emph{not} on its own block~%
$\mpconfjustblocked$).
However, since we have chosen our parameter
\mbox{$\gkpconstant = 13$}  
for the
\genklawepropacronym{}~%
\ref{property:generalized-klawe-property}
so generously and since the transformation in
\refsec{sec:klawe-bw-pebbling-proving-the-Klawe-property}
works for the
\nongenklawepropacronymWithParam with parameter~$1$, 
we expect our locally transformed blocking set to be so much cheaper
that we can afford to take care of any \mpsctext{}s in~%
$\mpconfjustblocked$
that are no longer blocked
simply by adding all bottom vertices for all
black \multipebble{}s in these \mpsctext{}s to the blocking set.

We will not be able to pull this off by just making one local
improvement of the \hidingsetklawe as was done in
\refsec{sec:klawe-bw-pebbling-proving-the-Klawe-property},
though.
The reason is that the local improvement to 
$\uhiding$
could potentially be very small, but lead to very many
\mpsctext{}s in 
$\mpconfjustblocked$
becoming unblocked. If so, we cannot afford adding new vertices
blocking these \mpsctext{}s without risking to increase the size
and/or potential of our new blocking set too much.
To make sure that this does not happen, we instead make
multiple local improvements of
$\uhiding$
simultaneously. 
Our next lemma says that we can do this without losing control of how
the measure behaves.

\begin{lemma}[Generalization of
  \reflem{lem:union-respecting-leq}]
  \label{lem:generalization-union-respecting-leq}
  Suppose that
  $
  U_1, \ldots, U_k,
  V_1, \ldots, V_k,
  Y
  $
  are vertex sets in a layered graph 
  \st  for all
  $i, j \in \intnfirst{k}$, $i \neq j$,
  it holds that
    $U_i \measureleq V_i$,
    $V_i \intersectionSP V_j = \emptyset$,
    $U_i \intersectionSP V_j = \emptyset$ and  
    $Y \intersectionSP V_i = \emptyset$.
  Then 
  $
  \vmeasurecompact{Y \unionSP \Union_{i=1}^{k} U_i} \leq
  \vmeasurecompact{Y \unionSP \Union_{i=1}^{k} V_i}
  $.
\end{lemma}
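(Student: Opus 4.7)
My plan is to prove this by induction on $k$, where the base case $k=1$ is exactly the earlier Lemma on union-respecting $\measureleq$ (\reflem{lem:union-respecting-leq}), and the inductive step proceeds by swapping one $U_\ell$ for $V_\ell$ at a time, applying the base case at each swap. The key observation that makes this work is that after we absorb some of the $U_\ell$'s and $V_\ell$'s into the ``ambient'' set, the remaining disjointness hypotheses are preserved, so we can keep applying the base lemma.

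First I would restate the base case: for any $U, V, Y$ with $U \measureleq V$ and $Y \cap V = \emptyset$, \reflem{lem:union-respecting-leq} gives $\vmeasure{Y \cup U} \leq \vmeasure{Y \cup V}$. This is the single tool I will use repeatedly.

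For the inductive step, I assume the claim for $k-1$ and consider $U_1, \ldots, U_k, V_1, \ldots, V_k, Y$ satisfying all four hypotheses. The plan is to perform a single ``swap'' replacing $U_k$ by $V_k$: setting $Y^* = Y \cup \bigcup_{\ell < k} U_\ell$, I would verify the two hypotheses of the base case, namely $U_k \measureleq V_k$ (given) and $Y^* \cap V_k = \emptyset$ (which follows from $Y \cap V_k = \emptyset$ together with $U_\ell \cap V_k = \emptyset$ for $\ell \neq k$). Applying \reflem{lem:union-respecting-leq} then yields
\begin{equation*}
\vmeasure{\textstyle Y \cup \bigcup_{\ell=1}^{k} U_\ell}
=
\vmeasure{Y^* \cup U_k}
\leq
\vmeasure{Y^* \cup V_k}
=
\vmeasure{\textstyle Y \cup V_k \cup \bigcup_{\ell < k} U_\ell}
\eqperiod
\end{equation*}
Now I would apply the induction hypothesis to the tuple $U_1, \ldots, U_{k-1}, V_1, \ldots, V_{k-1}$ with ambient set $Y' = Y \cup V_k$. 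Again the hypotheses carry over: $U_\ell \measureleq V_\ell$ for $\ell < k$ is given, pairwise disjointness of the $V_\ell$'s for $\ell < k$ is inherited, $U_\ell \cap V_m = \emptyset$ for $\ell, m < k$ with $\ell \neq m$ is inherited, and $Y' \cap V_\ell = Y \cap V_\ell \cup V_k \cap V_\ell = \emptyset$ for $\ell < k$. The induction hypothesis then yields $\vmeasure{Y' \cup \bigcup_{\ell < k} U_\ell} \leq \vmeasure{Y' \cup \bigcup_{\ell < k} V_\ell} = \vmeasure{Y \cup \bigcup_{\ell=1}^k V_\ell}$, and chaining the two inequalities gives the result.

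I do not expect any serious obstacle here: the proof is essentially a bookkeeping exercise showing that the disjointness conditions are exactly what is needed to make the one-at-a-time swap legal. The only thing to be careful about is that in the swap step the ambient set $Y^*$ contains the other $U_\ell$'s (not $V_\ell$'s), so the only disjointness requirement triggered by the base case is $Y^* \cap V_k = \emptyset$, and this is precisely guaranteed by the ``cross'' hypothesis $U_\ell \cap V_k = \emptyset$ for $\ell \neq k$. Symmetrically, in the inductive application the ambient set $Y' = Y \cup V_k$ requires $V_k \cap V_\ell = \emptyset$ for $\ell < k$, which is the pairwise disjointness of the $V_\ell$'s.
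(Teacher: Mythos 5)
Your proof is correct and is essentially identical to the paper's own argument: both proceed by induction on $k$, both perform the swap of $U_k$ for $V_k$ by absorbing $\bigcup_{\ell<k} U_\ell$ into the ambient set and invoking \reflem{lem:union-respecting-leq}, and both then apply the induction hypothesis with $Y \cup V_k$ as the new ambient set. The bookkeeping of which disjointness hypotheses are needed at each step also matches the paper.
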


\begin{proof}
  By induction over~$k$.
  The base case $k=1$ is
  \reflemP{lem:union-respecting-leq}.
  
  For the induction step,
  let
  $
  Y' = Y \unionSP \Union_{i=1}^{k-1} U_i
  $.
  Since
  $U_k \measureleq V_k$
  and
  $Y' \intersectionSP V_k = \emptyset$
  by assumption,
  we get from
  \reflem{lem:union-respecting-leq}
  that
  \begin{equation}
  \vmeasurecompact{Y \unionSP 
    \Unionnodisplay_{i=1}^{k} U_i}
  =
  \vmeasurecompact{Y' \unionSP U_k}
  \leq
  \vmeasurecompact{Y' \unionSP V_k}
  =
  \vmeasurecompact{Y \unionSP 
    \Unionnodisplay_{i=1}^{k-1} U_i \unionSP V_k}
  \eqperiod
  \end{equation}
  Letting
  $
  Y'' = Y \unionSP V_k
  $,
  we see that (again by assumption) it holds for all
  $i, j \in \intnfirst{k-1}$, $i \neq j$,
  that
  $U_i \measureleq V_i$,
  $V_i \intersectionSP V_j = \emptyset$,
  $U_i \intersectionSP V_j = \emptyset$ and 
  $Y'' \intersectionSP V_i = \emptyset$.
  Hence, by the induction hypothesis we have
%
%
  \begin{equation}
    \vmeasurecompact{Y 
      \union  
      \Unionnodisplay_{i=1}^{k-1} U_i \unionSP V_k}
    \!=\!  
    \vmeasurecompact{Y'' 
      \union  
      \Unionnodisplay_{k=1}^{i-1} U_i
    }
    \!\leq\!   
    \vmeasurecompact{Y'' 
      \union  
      \Unionnodisplay_{k=1}^{i-1} V_i
    }
    \!=\!  
    \vmeasurecompact{Y 
      \union  
      \Unionnodisplay_{i=1}^{k} V_i}
  \end{equation}
  and the lemma follows.
\end{proof}

We also need an observation  about the white pebbles in
$\mpconfhidden$.

\begin{observation}
  \label{obs:all-white-pebbles-in-mpconfhidden-are-below}
  For any
  $\mpscnot{B}{W} \in \mpconfhidden$
  with
  $b = \bottomvertex{B}$
  it holds that
  $
  W 
  = 
  W \intersectionSP \belowvertices[\layeredgraphstd]{b}
  $.
\end{observation}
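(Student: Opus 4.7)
The plan is to derive a contradiction from the standing assumption that $\mpconf$ is \weliminated with respect to $U$, together with the hypothesis $\mpscnot{B}{W} \in \mpconfhidden$, which gives that $U \cup W$ hides $b = \bottomvertex{B}$. Suppose some $w \in W$ lies outside $\belowvertices[\layeredgraphstd]{b}$. Since $W \subseteq \lpp{B}$ by \refdef{def:C-multi-pebble-configuration}, such a $w$ must sit on a source path via $B$ strictly above $b$, so in particular it is not needed on any source path ending at $b$.

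The key subclaim is that vertices strictly above $b$ never contribute to hiding $b$: for any vertex set $V$ and any $W^* \subseteq \lpp{B}$, if $V \cup W^*$ hides $b$ then $V \cup (W^* \cap \belowvertices[\layeredgraphstd]{b})$ hides $b$ as well. The proof is a routine prefix argument. Given any source path $P$ visiting $b$, its initial segment $P'$ from the source up to and including $b$ is itself a source path visiting $b$ and lies entirely in $\belowvertices[\layeredgraphstd]{b}$. By hypothesis some $v \in (V \cup W^*) \cap P'$ exists, and the containment $v \in P' \subseteq \belowvertices[\layeredgraphstd]{b}$ forces $v \in V \cup (W^* \cap \belowvertices[\layeredgraphstd]{b})$; since this same $v$ lies in $P$, the restricted set blocks $P$.

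I would then apply the subclaim at the iteration in \refdef{def:white-elimination} when $w$ is processed: the then-current white pebble set $W^{i-1}$ contains both $w$ and the final set $W$, so $U \cup W^{i-1}$ still hides $b$, and therefore so does $U \cup (W^{i-1} \cap \belowvertices[\layeredgraphstd]{b})$. Every source path agreeing with $B$ visits $b$, so this restricted set blocks $B$, and by monotonicity $U \cup (W^{i-1} \setminus \set{w}) \supseteq U \cup (W^{i-1} \cap \belowvertices[\layeredgraphstd]{b})$ blocks $B$ as well. But then the \welimination procedure would have removed $w$ at step $i$, contradicting $w \in W$. The only step requiring care is the prefix argument in the subclaim, but this is essentially immediate from the definitions, so I anticipate no real obstacle.
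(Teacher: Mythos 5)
Your proposal is correct and takes essentially the same approach as the paper, which simply asserts (without spelling out the prefix argument) that vertices in $W \intersectionSP \abovevertices[\layeredgraphstd]{b}$ are superfluous for hiding $b$ and would therefore be removed by the \welimination procedure. Your subclaim makes the paper's ``superfluous'' claim precise, and the application to the procedure is sound: since $\mpconf$ is \weliminated{} w.r.t.~$U$, nothing is removed, so $W^{i-1}=W$ throughout and the check at the step processing $w$ would indeed succeed, giving the contradiction.
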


\begin{proof}
  This is so since
  $\mpconf$
  is \weliminated
  \wrt{}~$U$.
  Since
  $U \unionSP W$
  \hideklawe{}s
  $b = \bottomvertex{B}$,
  any vertices in
  $W \intersectionSP \abovevertices[\layeredgraphstd]{b}$
  are superfluous and will be removed by the
  \welimination procedure in
  \refdef{def:white-elimination}.
\end{proof}

Recalling from
\refeq{eq:whites-below-hidden}
that
$
\whitesbelowhidden
=
\Setdescr
  {W \intersectionSP \belowvertices[\layeredgraphstd]{b}}
  {\mpscnot{B}{W} \in \mpconfhidden, \, b = \bottomvertex{B}}
$
this leads to the next, simple but crucial observation.

\begin{observation}
  \label{obs:uhiding-union-whites-below-cover-B}
  The vertex set
  $\uhiding \unionSP \whitesbelowhidden$
  \hideklawe{}s
  the vertices in
  $\blackshidden$
  in the sense of
  \refdef{def:cover}.
\end{observation}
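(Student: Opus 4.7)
The plan is to unfold the definitions and apply the previous observation directly; this statement is really a one-line corollary of the groundwork laid in the preceding setup. First, I would recall that by Definition~\ref{def:covering-a-blob}, saying that $\uhiding$ hides $\mpconfhidden$ means precisely that for every $\mpscnot{B}{W} \in \mpconfhidden$ with $b = \bottomvertex{B}$, the vertex set $\uhiding \unionSP W$ hides $b$ in the sense of Definition~\ref{def:cover}, i.e.\ it blocks every source-path visiting $b$.

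Next, I would invoke Observation~\ref{obs:all-white-pebbles-in-mpconfhidden-are-below}, which asserts that for each $\mpscnot{B}{W} \in \mpconfhidden$, the associated white pebbles all lie below the bottom vertex, so $W = W \intersectionSP \belowvertices[\layeredgraphstd]{b}$. By the definition of $\whitesbelowhidden$ in \refeq{eq:whites-below-hidden} as the union of exactly these sets over all $\mpscnot{B}{W} \in \mpconfhidden$, it follows that $W \subseteq \whitesbelowhidden$, and hence
\begin{equation*}
  \uhiding \unionSP W
  \;\subseteq\;
  \uhiding \unionSP \whitesbelowhidden
\end{equation*}
for every $\mpscnot{B}{W} \in \mpconfhidden$.

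Finally, I would remark that the hiding relation in Definition~\ref{def:cover} is trivially monotone in the hiding set: if a set $V$ blocks all source-paths visiting~$b$, then so does any superset of~$V$. Combining monotonicity with the two previous paragraphs gives that $\uhiding \unionSP \whitesbelowhidden$ hides $b$ for every $b \in \blackshidden$, which is exactly the claim. There is no real obstacle here; the one piece of content is the appeal to Observation~\ref{obs:all-white-pebbles-in-mpconfhidden-are-below}, which in turn rests on the standing assumption that $\mpconf$ is white-eliminated with respect to $U$, so in particular no supporting white pebbles sit above their own bottom vertices.
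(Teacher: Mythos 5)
Your proof is correct and matches the intended reasoning: the paper treats this observation as immediate from the definitions and from Observation~\ref{obs:all-white-pebbles-in-mpconfhidden-are-below}, and your unfolding of Definition~\ref{def:covering-a-blob}, the appeal to that observation to get $W \subseteq \whitesbelowhidden$, and the monotonicity of hiding in the hiding set is exactly that argument.
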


That is, we can consider
$
\bigl(
\blackshidden,
\whitesbelowhidden
\bigr)
$
to be 
(almost)%
\footnote{%
Not quite, since we might have
$\blackshidden
\intersectionSP
\whitesbelowhidden
\neq
\emptyset$.
But at least we know that
$\uhiding \intersectionSP \whitesbelowhidden = \emptyset$
by \welimination
and the roles of $U$ and $W$ in $U \unionSP W$ are fairly
indistinguishable in Klawe's proof anyway, 
so this does not matter.
}
a standard black-white pebble configuration.
This sets the stage for applying the machinery of
\refsec{sec:klawe-bw-pebbling-proving-the-Klawe-property}.

Appealing to
\reflemP{lem:covering-set-is-tight},
let
$X \subseteq
\uhiding \disjointunionSP \whitesbelowhidden$ 
be the unique, minimal \tightklawe set \st
\begin{equation}
  \label{eq:X-has-same-cover}
  \hiddenvertices{X}
  =
  \hiddenvertices{\uhiding \disjointunionSP \whitesbelowhidden}
\end{equation}
and define
\begin{subequations}
\begin{align}
  \label{eq:whitestight}
  \whitestight 
  &= 
  \whitesbelowhidden \intersectionSP X
  \\
  \label{eq:utight}
  \utight 
  &= 
  \uhiding \intersectionSP X
\end{align}
\end{subequations}
to be the vertices in
$\whitesbelowhidden$ and $\uhiding$ that remains in~$X$
after the bottom-up pruning procedure of
\reflem{lem:covering-set-is-tight}.

Let
$\hidsetgraph = \hidsetgraph(\layeredgraphstd, X)$
be the \hidingsetklawe graph of
\refdef{def:covering-set-graph}
for
$
X = \utight \disjointunionSP \whitestight
$.
Suppose that
$V_1, \ldots, V_k$ 
are the connected components of~$\hidsetgraph$,
and define for $i = 1, \ldots, k$
the vertex sets
\begin{subequations}
\begin{align}
  \label{eq:blackshiddenith}
  \blackshiddenith 
  &=
  \blackshidden \intersectionSP V_i
  \\
  \label{eq:whitesbelowhiddenith}
  \whitesbelowhiddenith 
  &=
  \whitesbelowhidden \intersectionSP V_i
  \\
  \label{eq:uhidingith}
  \uhidingith 
  &=
  \uhiding \intersectionSP V_i
  \\
  \intertext{to be the
    black, white and ``\hidingklawe'' vertices within
    component~$V_i$,
    and}
  \label{eq: whitestightith}
  \whitestightith
  &=
  \whitestight \intersectionSP V_i
  \\
  \label{eq:utightith}
  \utightith 
  &=
  \utight \intersectionSP V_i
\end{align}
\end{subequations}
to be the vertices of $\whitesbelowhidden$ and $\uhiding$  in
component~$V_i$ that ``survived'' when moving to the
\tightklawe subset~$X$.
Note that we have the disjoint union equalities
$
\whitesbelowhidden
= 
\DisjointunionInText_{i=1}^{k} \whitesbelowhiddenith
$,
$
\uhiding 
= 
\DisjointunionInText_{i=1}^{k} \uhidingith
$,
et cetera for all of these sets.

Let us also generalize
\refdef{def:measure}
of measure and partial measure to multi-sets of vertices in the
natural way, where we charge separately for each copy of every
vertex. 
This is our way of doing the bookkeeping for the extra vertices
that might be needed later to block $\mpconfjustblocked$ in the final
step of our construction.

This brings us to the key lemma stating how we will locally improve
the blocking sets.

\begin{lemma}[Generalization of
    \reflem{lem:pick-local-good-blocker}]
  \label{lem:generalized-pick-local-good-blocker}
  With the assumptions on 
  the \mpctext 
  $\mpconf$ and 
  the vertex set
  $U$ as in
  \reflem{lem:pyramids-possess-GKP}
  and with notation as above,
  suppose that
  $\uhidingith \unionSP \whitesbelowhiddenith$
  \hideklawe{}s
  $\blackshiddenith$,
  that
  $
  \hidsetgraph
  \bigl(
  \utightith \unionSP \whitestightith
  \bigr)
  $
  is a connected graph, and that
  \begin{equation}
    \label{eq:U-h-ith-geq-six-B-h-ith-plus-W-h-ith}
    \Setsize{\uhidingith}
    \geq
    6 \cdot
    \BHiunionWHi
    \eqperiod
  \end{equation}
  Then we can find a multi-set
  $
  \unewhidingith 
  \subseteq
  \hiddenvertices{\utightith \unionSP \whitestightith}
  $ 
  that  \hideklawe{}s 
  the vertices in 
  $\blackshiddenith$,
  has
  $\floorcompact{\setsize{\uhidingith} / 3}$
  extra copies of some fixed but arbitrary 
  vertex on level
  $\levelmax = \vmaxlevelcompact{\uhidingith}$,
  and satisfies
  $\unewhidingith \measureleq \uhidingith$
  and
  $\Setsize{\unewhidingith} < \Setsize{\uhidingith}$
  (where $\unewhidingith$ is
  measured and counted  as a multi-set with repetitions).
\end{lemma}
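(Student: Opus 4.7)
The plan is to adapt the proof of Lemma~\ref{lem:pick-local-good-blocker} from the standard black-white pebbling setting, with the crucial new ingredient being the $\floorsmall{\setsize{\uhidingith}/3}$ extra copies of a top-level vertex that act as a ``budget'' paid in advance. This budget will later be spent by the caller (Lemma~\ref{lem:pyramids-possess-GKP}) to compensate for \mpsctext{}s in $\mpconfjustblocked$ that may become unblocked after the local improvement and have to be re-blocked by adding their bottom vertices to the global blocker.

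First I would apply Lemma~\ref{lem:klawe-lemma-three-five} to the vertex set $\blackshiddenith$ to obtain a hiding set $\unewblockingset$ for $\blackshiddenith$ satisfying $\setsize{\vertabovelevel{\unewblockingset}{j}} \leq \abovelevelblockerminsize{j}{\blackshiddenith}$ for all $j$ and with either $\unewblockingset = \blackshiddenith$ or $\setsize{\unewblockingset} < \setsize{\blackshiddenith}$. Then I would set $\unewhidingith$ to be the multi-set consisting of $\unewblockingset$ together with $\floorsmall{\setsize{\uhidingith}/3}$ extra copies of any fixed vertex on level $\levelmax$. The hiding property transfers from $\unewblockingset$ unaffected by the extra copies. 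The size bound is immediate: since $\setsize{\uhidingith} \geq 6 \cdot \BHiunionWHi$ forces $\setsize{\blackshiddenith} \leq \setsize{\uhidingith}/6$, we obtain $\setsize{\unewhidingith} \leq \setsize{\uhidingith}/6 + \setsize{\uhidingith}/3 = \setsize{\uhidingith}/2 < \setsize{\uhidingith}$, as required.

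The substantive step is verifying $\unewhidingith \measureleq \uhidingith$. Let $L = \vminlevel{\utightith \unionSP \whitestightith}$ and observe that $L \leq \vminlevel{\uhidingith}$ since the bottom-up tight-set construction in the proof of Lemma~\ref{lem:covering-set-is-tight} preserves all bottom-level vertices. Because $\utightith \unionSP \whitestightith$ is connected by hypothesis and hides $\blackshiddenith$, applying the spreading inequality~\refeq{eq:spreading-property} to this set yields $\abovelevelblockerminsize{j}{\blackshiddenith} \leq \setsize{\utightith \unionSP \whitestightith} + L - j \leq \setsize{\uhidingith} + \setsize{\whitesbelowhiddenith} + L - j$ for every $j$ in the valid range. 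I would then do a three-case analysis: for $j \leq L$, bound $\vjthmeasure{j}{\unewhidingith}$ crudely by $j + 2\setsize{\unewhidingith} \leq j + \setsize{\uhidingith}$ and compare to $\vjthmeasure{j}{\uhidingith} = j + 2\setsize{\uhidingith}$; for $L < j \leq \levelmax$, imitate the key trick of Lemma~\ref{lem:pick-local-good-blocker} by splitting $2\setsize{\vertabovelevel{\unewblockingset}{j}}$ into one factor bounded by $\abovelevelblockerminsize{j}{\blackshiddenith}$ via the spreading inequality and one factor bounded by $\setsize{\unewblockingset} \leq \setsize{\blackshiddenith}$ via the construction in Lemma~\ref{lem:klawe-lemma-three-five}, adding $2\floorsmall{\setsize{\uhidingith}/3}$ for the extra copies, and comparing to $\vjthmeasure{L}{\uhidingith} = L + 2\setsize{\uhidingith}$; for $j > \levelmax$ the extra copies contribute nothing to $\vertabovelevel{\unewhidingith}{j}$ and the estimate reduces to essentially the case already handled by Lemma~\ref{lem:pick-local-good-blocker}.

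The main obstacle will be the second case: the slack is genuinely tight because both the extra copies and the hider $\unewblockingset$ must together fit under the budget $2\setsize{\uhidingith}$ on the right-hand side. The six-fold assumption $\setsize{\uhidingith} \geq 6 \cdot \BHiunionWHi$ is calibrated precisely so that $\setsize{\blackshiddenith}$ and $\setsize{\whitesbelowhiddenith}$ each contribute at most $\setsize{\uhidingith}/6$, which together with $2\setsize{\uhidingith}/3$ from the budget copies and the $\setsize{\uhidingith}$ from the spreading estimate sums to exactly $2\setsize{\uhidingith}$. Weakening the assumption or enlarging the budget above $1/3$ would break this calculation; shrinking the budget would leave the subsequent accounting in Lemma~\ref{lem:pyramids-possess-GKP} unable to re-block $\mpconfjustblocked$, which is ultimately where the parameter $\gkpconstant = 13$ of the \genklawepropacronym{} gets paid for.
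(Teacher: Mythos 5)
Your proposal is correct and follows essentially the same approach as the paper's proof: apply Lemma~\ref{lem:klawe-lemma-three-five} to $\blackshiddenith$ to obtain a small hiding set, pad it with $\floorcompact{\setsize{\uhidingith}/3}$ extra budget copies of a high-level vertex, and then verify $\unewhidingith \measureleq \uhidingith$ by the spreading inequality (applied to the connected tight set $\utightith \unionSP \whitestightith$ whose size is bounded by $\setsize{\uhidingith \unionSP \whitesbelowhiddenith}$) combined with the cardinality bound from Lemma~\ref{lem:klawe-lemma-three-five} and the $1/6$-factor hypothesis. The only cosmetic differences are that the paper uses just two cases $j \leq L$ and $j > L$ (your third case $j > \levelmax$ is subsumed by the second, since the comparison point $l = L$ already works there), and that the paper places the extra copies on the level $\vmaxlevelcompact{\hiddenvertices{\uhidingith \unionSP \whitesbelowhiddenith}} \geq \levelmax$ rather than exactly on $\levelmax$; both placements suffice since the measure argument only needs an upper bound for the partial measure at levels $j > L$ and the downstream matching argument in Lemma~\ref{lem:pyramids-possess-GKP} moves the spares downward via Observation~\ref{obs:moving-vertices-downwards-non-increases-measure} anyway.
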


\begin{proof}
  Let
  $\unewhidingith$
  be the set found in
  \reflemP{lem:klawe-lemma-three-five},
  which certainly is in
  $\hiddenvertices{\utightith \unionSP \whitestightith}$,
  together with the prescribed extra copies of some
  (fixed but arbitrary)
  vertex that we place on level
  $\vmaxlevelcompact{
    \hiddenvertices{\uhidingith \unionSP
      \whitesbelowhiddenith}}
  \geq
  \levelmax
  $
  to be on the safe side.
  By
  \reflem{lem:klawe-lemma-three-five},
  $\unewhidingith$
  \hideklawe{}s $\blackshiddenith$,
  and the size of 
  $\unewhidingith$
  \emph{counted as a multi-set with repetitions}
  is
  \begin{equation}
    \label{eq:gkp-size-of-new-local-blocker}
    \Setsize{\unewhidingith}
    \leq
    \Setsize{\blackshiddenith}
    + 
    \floorcompact{\setsize{\uhidingith} / 3}
    \leq
    \bigl(
    {\textstyle     \frac{1}{6} + \frac{1}{3} }
    \bigr)   
    \cdot
    \Setsize{\uhidingith}
    <
    \Setsize{\uhidingith}
    \eqperiod
  \end{equation}
  It remains to show that 
  $\unewhidingith \measureleq \uhidingith$.

  The proof of this last measure inequality is very much as in
  \reflem{lem:pick-local-good-blocker},
  but with the distinction that the connected graph that we are
  dealing with is defined over
  $\utightith \disjointunionSP \whitestightith$,
  but we count the vertices in
  $\uhidingith \disjointunionSP \whitesbelowhiddenith$.
  Note, however, that by construction these two unions \hideklawe
  exactly the same set of vertices, \ie
  \begin{equation}
    \label{eq:cover-same-set-of-vertices}
    \hiddenvertices{\utightith \disjointunionSP \whitestightith}
    =
  \hiddenvertices{\uhidingith \disjointunionSP \whitesbelowhiddenith}
  \eqperiod
  \end{equation}
  Recall that by
  \refdefP{def:union-respecting-leq},
  what we need to do
  in order to show that $\unewhidingith \measureleq \uhidingith$
  is to find for each $j$ an $l \leq j$ \st
  $
  \vjthmeasurecompact{j}{\unewhidingith}
  \leq
  \vjthmeasurecompact{l}{\uhidingith}
  $.
  As in
  \reflem{lem:pick-local-good-blocker},
  we divide the proof into two cases.
  \begin{enumerate}
  \item
    If
    $j \leq 
    \vminlevelcompact{\utightith \unionSP \whitestightith}
    =
    \vminlevelcompact{\uhidingith \unionSP \whitesbelowhiddenith}
    $,
    we get
    \begin{align*}
      \vjthmeasurecompact{j}{\unewhidingith} 
      &= j + 2 \cdot \Setsize{\vertabovelevel{\unewhidingith}{j}} 
      && 
      \bigl[\text{ by definition of $\vjthmeasure{j}{\cdot}$ }\bigr] 
      \\
      &\leq j + 2 \cdot \Setsize{\unewhidingith} 
      && 
      \bigl[\text{ since $\vertabovelevel{V}{j} \subseteq V$
                 for any $V$ }\bigr] 
      \\
      &\leq j + 2 \cdot 
      \bigl(
      \setsize{\blackshiddenith} +   
      \floorcompact{\setsize{\uhidingith} / 3} 
      \bigr)
      && 
      \bigl[\text{ by  
          \reflem{lem:klawe-lemma-three-five}
          plus extra vertices }\bigr] 
      \\
      &<  
      j + 2 \cdot \Setsize{\uhidingith}
      && 
      \bigl[\text{ by the assumption in
          \refeq{eq:U-h-ith-geq-six-B-h-ith-plus-W-h-ith} }\bigr]
      \\
      &= j + 2 \cdot \Setsize{\vertabovelevel{\uhidingith}{j}}
      && 
      \bigl[\text{ $\vertabovelevel{\uhidingith}{j} = \uhidingith$ 
          since $j \leq \vminlevel{\uhidingith} $ }\bigr] 
      \\
      &= \vjthmeasure{j}{\uhidingith}
      && 
      \bigl[\text{ by definition of $\vjthmeasure{j}{\cdot}$ }\bigr]
    \end{align*}
    and we can choose $l=j$ in
    \refdef{def:union-respecting-leq}.

  \item
    Consider instead
    $j >
    \vminlevelcompact{\utightith \unionSP \whitestightith}
    $
    and let
    $\levelmin =
    \vminlevelcompact{\utightith \unionSP \whitestightith}$.
    Since the black pebbles in
    $\blackshiddenith$
    are \hiddenklawe by 
    $\utightith \unionSP \whitestightith$,
    \ie
    $
    \blackshiddenith
    \subseteq
    \hiddenvertices{\utightith \unionSP \whitestightith}$
    in formal notation,
    recollecting
    \refdef{def:size-above-level-blocker}
    and
    \refobs{obs:about-min-size-covers},
    \refpart{item:about-min-size-covers-two},
    we see that
    \begin{equation}
      \label{eq:gkp-j-above-minlevel-eq-one}
      \abovelevelblockerminsizecompact{j}{\blackshiddenith}
      \leq
      \abovelevelblockerminsizecompact
          {j}
          {\hiddenvertices{\utightith \unionSP \whitestightith}}
    \end{equation}
    for all $j$.
    Also, since
    $\utightith \unionSP \whitestightith$
    is a \connectedklawe
    vertex set in a spreading graph~$\layeredgraphstd$,
    combining 
    \refdef{def:spreading-graph}
    with the fact that
    $
    \utightith \unionSP \whitestightith
    \subseteq
    \uhidingith \unionSP \whitesbelowhiddenith
    $
    we can derive that 
    \begin{equation}
      \label{eq:gkp-j-above-minlevel-eq-two}
      j + 
      \abovelevelblockerminsizecompact
          {j}
          {\hiddenvertices{\utightith \unionSP \whitestightith}}
      \leq
      \levelmin
      +
      \Setsize{\utightith \unionSP \whitestightith}
      \leq
      \levelmin
      +
      \Setsize{\uhidingith \unionSP \whitesbelowhiddenith}
      \eqperiod
    \end{equation}
    Together, 
    \refeq{eq:gkp-j-above-minlevel-eq-one}
    and
    \refeq{eq:gkp-j-above-minlevel-eq-two}
    say that
    \begin{equation}
      \label{eq:gkp-j-above-minlevel-eq-three}
      j + 
      \abovelevelblockerminsizecompact{j}{\blackshiddenith}
      \leq
      \levelmin
      +
      \Setsize{\uhidingith \unionSP \whitesbelowhiddenith}
    \end{equation}
    and using this inequality we can show that
    \begin{align*}
      \vjthmeasure{j}{\unewhidingith} 
      &= 
      j + 2 \cdot \Setsize{\vertabovelevel{\unewhidingith}{j}} 
      && 
      \bigl[\text{ by definition of $\vjthmeasure{j}{\cdot}$ }\bigr] 
      \\
      &\leq j 
      + 
      \abovelevelblockerminsizecompact{j}{\blackshiddenith}
      + 
      \Setsize{\blackshiddenith} 
      +
      2 \cdot \floorcompact{\setsize{\uhidingith} / 3} 
      && 
      \bigl[\text{ by \reflem{lem:klawe-lemma-three-five}
          + 
          extra vertices }\bigr] 
      \\
      &\leq
      \levelmin
      + \Setsize{\uhidingith \unionSP \whitesbelowhiddenith}
      + \Setsize{\blackshiddenith}
      + 2 \cdot \floorcompact{\setsize{\uhidingith} / 3} 
      && 
      \bigl[\text{ using the inequality 
          \refeq{eq:gkp-j-above-minlevel-eq-three} }\bigr]
      \\
      &\leq
      \levelmin
      + 
      {\textstyle \frac{5}{3}}
      \Setsize{\uhidingith}
      +
      \Setsize{\blackshiddenith}
      + 
      \Setsize{\whitesbelowhiddenith}      
      && 
      \bigl[
      \text{ $\setsize{A \unionSP B} \leq \setsize{A} + \setsize{B}$ }
      \bigr]
      \\
      &\leq
      \levelmin
      + 
      {\textstyle \frac{5}{3}}
      \Setsize{\uhidingith}
      +
      2 \cdot
      \Setsize{\blackshiddenith \unionSP \whitesbelowhiddenith}
      &&
      \bigl[
      \text{ $\setsize{A} + \setsize{B} \leq 
        2 \cdot \setsize{A \unionSP B}$ }
      \bigr]
      \\
      &\leq
      \levelmin
      + 
      2 \cdot
      \Setsize{\uhidingith}
      && 
      \bigl[\text{ by the assumption in
          \refeq{eq:U-h-ith-geq-six-B-h-ith-plus-W-h-ith} }\bigr]
      \\
      &= \levelmin 
      + 2 \cdot 
      \setsize{\vertabovelevel{\uhidingith}{\levelmin}}
      &&
      \bigl[\text{ 
        since $\levelmin \leq \vminlevel{\uhidingith} $ }\bigr] 
      \\
      &= \vjthmeasure{\levelmin}{\uhidingith}
      && 
      \bigl[\text{ 
          by definition of $\vjthmeasure{\levelmin}{\cdot}$ 
        }\bigr]
    \end{align*}
    Thus, the partial measure of $\uhidingith$ at the minimum level
    $\levelmin$
    is always at least as large as
    the partial measure of $\unewhidingith$
    at levels $j$ above this minimum level,
    and we can choose $l=\levelmin$ in
    \refdef{def:union-respecting-leq}.
  \end{enumerate}
  Consequently,
  $\unewhidingith \measureleq \uhidingith$
  and the lemma follows.
\end{proof}

Now we want to determine in which connected components 
of
the \hidingsetklawe graph~$\hidsetgraph$ we should apply
\reflem{lem:generalized-pick-local-good-blocker}.
Loosely put, we want to be sure that changing
$\uhidingith$
to 
$\unewhidingith$
is worthwhile, \ie that we gain enough from this transformation to
compensate for the extra hassle of reblocking \multipebble{}s in
$\mpconfjustblocked$ 
that turn unblocked when we change~%
$\uhidingith$. 
With this in mind,   
let us define the
\introduceterm{weight}
of a component $V_i$ in $\hidsetgraph$ as
\begin{equation}
\label{eq:definition-component-weight}
  \weightklawe{V_i}
  =    
    \begin{cases}
      \ceilingcompact{\setsize{\uhidingith} / 6}
      & 
      \text{if 
        $
        \Setsize{\uhidingith}
        \geq
        6 \cdot
        \BHiunionWHi
        $,}
      \\
      0 & \text{otherwise.}
  \end{cases}
\end{equation}
The idea is that a component $V_i$ has large weight if the
\hidingsetklawe $\uhidingith$ in this component is large compared to
the number of bottom black vertices in
$\blackshiddenith$
\hiddenklawe and the white pebbles
$\whitesbelowhiddenith$
helping 
$\uhidingith$
to \hideklawe~$\blackshiddenith$.
If we concentrate on changing the \hidingsetklawe{}s in components
with non-zero weight, we hope to gain more from the
transformation of $\uhidingith$ into $\unewhidingith$ than we lose
from then having to reblocking~$\mpconfjustblocked$.
And since $\uhiding$ is large, the total weight of the non-zero-weight
components is guaranteed to be reasonably large. 

\begin{proposition}
  \label{pr:weight-of-non-zero-components-is-large}
  With notation as above, the total weight of all connected components
  $V_1, \ldots, \! V_k$
  in the \hidingsetklawe graph
  $
  \hidsetgraph =
  \hidsetgraph\bigl(\layeredgraphstd, \utight \unionSP \whitestight\bigr)
  $
  is
  $
  \sum_{i=1}^{k} \weightklawe{V_i}   
  >
  \BHunionBBunionWH
  $.
\end{proposition}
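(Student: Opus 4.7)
The plan is to prove the inequality by a direct counting argument that pits Corollary~\ref{cor:U-hiding-is-large} against the threshold defining the weight function. Write $N = \BHunionBBunionWH$ for brevity; the goal is $\sum_i \weightklawe{V_i} > N$, and the hypothesis to exploit is $\setsize{\uhiding} > 12N$.

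First I would observe that the connected components $V_1,\dots,V_k$ partition $\hiddenvertices{X}$, and that each of the sets $\uhiding$, $\whitesbelowhidden$, and $\blackshidden$ is contained in $\hiddenvertices{X}$. For $\uhiding$ and $\whitesbelowhidden$ this is immediate (each vertex hides itself), while for $\blackshidden$ it uses Observation~\ref{obs:uhiding-union-whites-below-cover-B} together with the fact that $X$ hides precisely the same set of vertices as $\uhiding\unionSP\whitesbelowhidden$ by construction. Consequently $\uhiding = \DisjointunionInText_i \uhidingith$, $\whitesbelowhidden = \DisjointunionInText_i \whitesbelowhiddenith$, and $\blackshidden = \DisjointunionInText_i \blackshiddenith$, so summing over components is lossless.

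Next I would split the index set $\intnfirst{k} = I_+ \disjointunion I_0$ according to whether $\weightklawe{V_i}$ is positive or zero. For $i\in I_0$ the defining condition of the weight gives $\Setsize{\uhidingith} < 6 \cdot \BHiunionWHi$, and since the sets $\blackshiddenith \unionSP \whitesbelowhiddenith$ for different $i$ are pairwise disjoint subsets of $\blackshidden\unionSP\whitesbelowhidden$, summing yields
\begin{equation*}
  \sum_{i \in I_0} \Setsize{\uhidingith}
  \,<\, 6 \sum_{i \in I_0} \BHiunionWHi
  \,\leq\, 6 \Setsize{\blackshidden\unionSP\whitesbelowhidden}
  \,\leq\, 6N.
\end{equation*}
Combining with $\setsize{\uhiding} = \sum_i \Setsize{\uhidingith} > 12N$ from Corollary~\ref{cor:U-hiding-is-large}, the remaining mass satisfies $\sum_{i\in I_+} \Setsize{\uhidingith} > 6N$.

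Finally, on $I_+$ we have $\weightklawe{V_i} = \ceilingcompact{\setsize{\uhidingith}/6} \geq \setsize{\uhidingith}/6$, so
\begin{equation*}
  \sum_{i=1}^{k} \weightklawe{V_i}
  \,\geq\, \sum_{i \in I_+} \frac{\Setsize{\uhidingith}}{6}
  \,>\, \frac{6N}{6}
  \,=\, N,
\end{equation*}
which is the desired strict inequality. The proof is essentially arithmetic once the partition properties are in place; the only subtlety to verify carefully is that $\blackshidden$, $\whitesbelowhidden$ and $\uhiding$ really do lie inside $\bigcup_i V_i = \hiddenvertices{X}$ so that the index-wise decompositions are exhaustive, and I would present that verification at the start so the counting argument can proceed cleanly.
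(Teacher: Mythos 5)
Your proof is correct and follows essentially the same route as the paper's: bound the total $\uhiding$-mass in zero-weight components above by $6N$ using the weight threshold and disjointness of $\blackshiddenith \unionSP \whitesbelowhiddenith$, subtract from $\setsize{\uhiding} > 12N$ to leave $>6N$ in the non-zero-weight components, and convert to weight via $\ceiling{\cdot/6} \geq \cdot/6$. The only addition you make beyond the paper's proof is the explicit verification that $\uhiding$, $\whitesbelowhidden$, and $\blackshidden$ all lie inside $\hiddenvertices{X} = \bigcup_i V_i$ so the componentwise decompositions are exhaustive; the paper relegates this to a remark immediately preceding the proposition, and your inclusion of it is a reasonable bit of bookkeeping rather than a change of method.
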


\begin{proof}
  The total size of the union of all subsets
  $\uhidingith \subseteq \uhiding$
  with sizes
  $
  \Setsize{\uhidingith}
  <
  6 \cdot \BHiunionWHi
  $
  resulting in zero-weight components $V_i$ in $\hidsetgraph$
  is clearly  strictly less than
  \begin{equation}
  6 \cdot \sum_{i=1}^{k} \BHiunionWHi
%
%
  =
  6 \cdot \BHunionWH
  \leq
  6 \cdot \BHunionBBunionWH
  \eqperiod    
  \end{equation}
  Since  according to
  \refcor{cor:U-hiding-is-large}
  we have that
  $
  \Setsize{\uhiding}
  \geq
  12 \cdot \BHunionBBunionWH
  $,
  it follows that the size of the union
  $
  \Union_{\weightklawe{V_i} > 0} 
  \uhidingith
  $
  of all subsets
  $\uhidingith$
  corresponding to non-zero-weight components $V_i$
  must be strictly larger than
  $
  6 \cdot \BHunionBBunionWH
  $.
  But then
  \begin{equation}
    \sum_{\weightklawe{V_i} > 0} \weightklawe{V_i}
    \geq
    \sum_{\weightklawe{V_i} > 0} 
    \ceilingcompact{\setsize{\uhidingith} / 6}
    \geq
    \frac{1}{6}
    \cdot
    \setsizelarge{
      \Union_{\weightklawe{V_i} > 0} 
      \uhidingith
    }
    >
    \BHunionBBunionWH
  \end{equation}
  as claimed in the proposition.
\end{proof}

We have now collected all tools needed to establish the
\genklaweprop for spreading graphs.  Before we wrap up the proof, let
us recapitulate what we have shown so far.

We have divided the blocking set $U$ into a  disjoint union
$\uhiding \disjointunionSP \ujustblocking$
of the vertices
$\uhiding$
not only blocking but actually \emph{\hidingklawe{}}
the \mpsctext{}s in
$\mpconfhidden \subseteq \mpconf$,
and the vertices
$\ujustblocking$
just helping 
$\uhiding$
to block  the remaining \mpsctext{}s in
$\mpconfjustblocked = \mpconf \setminus \mpconfhidden$.
In
\reflem{lem:U-just-blocking-is-small}
and
\refcor{cor:U-hiding-is-large},
we proved that
if $U$ is large (which we are assuming) then
$\ujustblocking$
must be very small compared to
$\uhiding$, so we can basically just ignore~$\ujustblocking$. 
If we want to do
something interesting, it will have to be done with~$\uhiding$.

And indeed,
\reflem{lem:generalized-pick-local-good-blocker}
tells us that we can restructure  
$\uhiding$ 
to get a new vertex set
\hidingklawe
$\mpconfhidden$
and make considerable savings,
but that this can lead to
$\mpconfjustblocked$
no longer being blocked.
By
\refpr{pr:weight-of-non-zero-components-is-large},
there is a large fraction of
$\uhiding$
that resides in the non-zero-weight components of the \hidingsetklawe
graph~$\hidsetgraph$  
(as defined in Equation~%
\refeq{eq:definition-component-weight}).
We would like to show that by judiciously performing the restructuring of
\reflem{lem:generalized-pick-local-good-blocker}
in these components, we can also take care of~%
$\mpconfjustblocked$.

More precisely, 
we claim that 
we can combine the \hidingsetklawe{}s
$\unewhidingith$
from
\reflem{lem:generalized-pick-local-good-blocker}
with some subsets of
$\uhiding \unionSP \ujustblocking$
and
$\blacksjustblocked$
into a new blocking set
$\unewblockingset$
for all of
$
\mpconfhidden \unionSP \mpconfjustblocked
=
\mpconf
$
in such a way that the measure 
$\vmeasurecompact{\unewblockingset}$
does not exceed
$\vmeasure{U} = \vpotential{\mpconf}$
but so that
$
\Setsize{\unewblockingset} 
<
\setsize{U}
$.
But this contradicts the assumptions in
\reflem{lem:pyramids-possess-GKP}.
It follows that the conclusion in
\reflem{lem:pyramids-possess-GKP},
which we assumed to be false in order to derive a contradiction, 
must instead be true.
That is, any set $U$ that is chosen as in
\reflem{lem:pyramids-possess-GKP}
must have size
$\setsize{U} \leq 
13 \cdot 
\BHunionBBunionWH
$.
This in turn implies
\refth{th:pyramids-possess-GKP},
\ie that layered spreading graphs possess the \genklaweprop that we
assumed in order to get a lower bound on \blobpebblingtext price, 
and we are done.

We proceed to establish this final claim.
Our plan is once again to do some bipartite matching with the help of
Hall's theorem.  
Create a weighted bipartite graph with the vertices in
$
\blacksjustblocked
=
\Setdescr
{\bottomvertex{B}}
{\mpscnotstd \in \mpconfjustblocked}
$
on the left-hand side and with the 
non-zero-weight connected components 
among
$V_1, \ldots, V_k$ in~$\hidsetgraph$
in the sense of~%
\refeq{eq:definition-component-weight}
acting as ``supervertices'' on the right-hand side.
Reorder the indices among the connected components
$V_1, \ldots, V_{k}$ 
if needed so that the non-zero-weight components are
$V_1, \ldots, V_{k'}$.
All vertices in the weighted graphs are assigned weights so that each
right-hand side supervertex $V_i$
gets its weight according to~%
\refeq{eq:definition-component-weight},
and each left-hand vertex has weight~$1$.%
\footnote{%
Or, if we like, we can equivalently think of
an unweighted graph, where  each
$V_i$
is a cloud of
$\weightklawe{V_i}$
unique and distinct vertices, 
and where 
$\vneighbour{b}$
in~%
\refeq{eq:neighbouring-supervertices}
always containing either all or none of these vertices.%
}
We define the neighbours of each fixed vertex 
$b \in \blacksjustblocked$
to be
\begin{equation}
  \label{eq:neighbouring-supervertices}
  \vneighbour{b}
  =
  \Setdescr
      {V_i}
      {\weightklawe{V_i} > 0 \text{ and }
        \vmaxlevelcompact{\uhidingith} > \vlevel{b}}
      \eqcomma
\end{equation}
\ie all non-zero-weight components $V_i$ that contain
vertices in the \hidingsetklawe $\uhiding$ that could possibly be
involved in blocking any \mpsctext 
$\mpscnotstd \in \mpconfjustblocked$
having bottom vertex
$\bottomvertex{B} = b$.
This is so since by
\refpr{pr:not-covering-means-blocking-from-above},
any vertex $u \in \uhiding$ helping to block such \anmpsctext
$\mpscnotstd \in \mpconfjustblocked$
must be strictly above~$b$,
so if the highest-level vertices in $\uhidingith$ are on a level
below~$b$, no vertex in $\uhidingith$ can be responsible for
blocking~$\mpscnotstd$.

Let
$\bbvstd' \subseteq \blacksjustblocked$
be a largest set \st
$
\weightklawecompact{\vneighbourcompact{\bbvstd'}}
\leq
\Setsize{\bbvstd'}
$.
We must have
\begin{equation}
  \label{eq:final-push-neighbours-do-not-cover-all-of-H-graph}
  \vneighbourcompact{\bbvstd'}
  \neq
  \Unionnodisplay_{i=1}^{k'} V_i
\end{equation}
since
$
\weightklawecompact{\Union_{i=1}^{k'} V_i}
>
\BHunionBBunionWH
\geq
\Setsize{\blacksjustblocked}
$
by
\refpr{pr:weight-of-non-zero-components-is-large}.
For all
$\bbvstd'' \subseteq \blacksjustblocked \setminus \bbvstd'$
it holds that
\begin{equation}
  \label{eq:final-push-halls-matching-condition-b-bis}
  \weightklawecompact{
    \vneighbourcompact{\bbvstd''}
    \setminus
    \vneighbourcompact{\bbvstd'}
  }
  \geq
  \Setsize{\bbvstd''}
\end{equation}
since otherwise 
$
\bbvstd'
$
would not be of largest size as assumed above.
The inequality~%
\refeq{eq:final-push-halls-matching-condition-b-bis}
plugged into Hall's marriage theorem tells us that there is a matching
of the vertices in 
$
\blacksjustblocked \setminus \bbvstd'
$
to the components in
$
\Union_{i=1}^{k'} V_i
\setminus
\vneighbourcompact{\bbvstd'}
\neq \emptyset
$
with the property that no component~$V_i$ gets matched with more than 
$\weightklawe{V_i}$
vertices from
$
\blacksjustblocked \setminus \bbvstd'
$.

Reorder the components in
the \hidingsetklawe graph~$\hidsetgraph$
so that the matched components in $\hidsetgraph$
are
$V_1, \ldots, V_{\kmatchedindex}$
and the rest of the components are
$V_{\kmatchedindex + 1}, \ldots, V_k$
and so that
$\uhidingith[1], \ldots, \uhidingith[\kmatchedindex]$
and
$\uhidingith[\kmatchedindex + 1], \ldots, \uhidingith[k]$
are the corresponding subsets of
the \hidingsetklawe~$\uhiding$.
Then pick good local blockers
$
\unewhidingith \subseteq V_i
$
as in
\reflem{lem:generalized-pick-local-good-blocker}
for all components
$V_1, \ldots, V_{\kmatchedindex}$.
Now the following holds:
\begin{enumerate}
\item
  By construction and assumption, respectively,
  the vertex set
  $
  \Union_{i=1}^{\kmatchedindex} \unewhidingith
  \unionSP
  \Union_{i=\kmatchedindex + 1}^{k} \uhidingith
  $
  blocks
  (and even \hideklawe{}s)
  $\mpconfhidden$.

\item
  All
  \mpsctext{}s in
  \begin{equation}
    \mpconfjustblocked^{1} =
    \Setdescr
        {\mpscnotstd \in \mpconfjustblocked}
        {\bottomvertex{B} \in \bbvstd'}
  \end{equation}
  are blocked by
  $
  \ujustblocking
  \unionSP
  \vneighbourcompact{\bbvstd'}
  =
  \ujustblocking
  \unionSP
  \Union_{i=\kmatchedindex + 1}^{k} \uhidingith
  $,
  as we have not moved any elements in~$U$ above~${\bbvstd'}$.
  
\item
  With notation as in
  \reflem{lem:generalization-union-respecting-leq},
  let
  $
  Y
  =
  \ujustblocking
  \unionSP
  \Union_{i=\kmatchedindex + 1}^{k} \uhidingith
  $
  and consider
  $\unewhidingith$
  and
  $\uhidingith$
  for $i = 1, \ldots, \kmatchedindex$.
  We have
  $\unewhidingith \measureleq \uhidingith$
  for $i = 1, \ldots, \kmatchedindex$  by
  \reflem{lem:generalized-pick-local-good-blocker}.
  Also, 
  since
  $\uhiding \intersectionSP \ujustblocking = \emptyset$
  and
  $\unewhidingith \subseteq V_i$
  and
  $\uhidingith \subseteq V_i$
  for $V_1, \ldots, V_k$  pairwise disjoint sets of vertices,
  it holds for all
  $i, j \in \intnfirst{\kmatchedindex}$, $i \neq j$,
  that
  $
  \unewhidingith[i] 
  \intersectionSP
  \unewhidingith[j]
  = \emptyset
  $,
  $
  \uhidingith[i]
  \intersectionSP
  \uhidingith[j]
  = \emptyset
  $,
  $
  \unewhidingith[i] 
  \intersectionSP
  \uhidingith[j]
  = \emptyset
  $
  and
  $
  Y
  \intersectionSP
  \uhidingith[j]
  =
  \emptyset
  $.
  Therefore, the conditions in 
  \reflem{lem:generalization-union-respecting-leq}
  are satisfied and we conclude that
  \begin{equation}
    \begin{split}
      \vmeasurecompact{
        \ujustblocking
        \unionSP
        \Unionnodisplay_{i=1}^{\kmatchedindex} \unewhidingith
        \unionSP
        \Unionnodisplay_{i=\kmatchedindex + 1}^{k} \uhidingith
      }
      &=
      \vmeasurecompact{Y \unionSP 
        \Unionnodisplay_{i=1}^{\kmatchedindex} \unewhidingith} 
      \\
      &\leq
      \vmeasurecompact{Y \unionSP 
        \Unionnodisplay_{i=1}^{\kmatchedindex} \uhidingith}
      \\
      &=
      \vmeasurecompact{
        \ujustblocking
        \unionSP
        \Unionnodisplay_{i=1}^{\kmatchedindex} \uhidingith
        \unionSP
        \Unionnodisplay_{i=\kmatchedindex + 1}^{k} \uhidingith
      }
      \\
      &=
      \vmeasure{U}
      \eqcomma
    \end{split}
  \end{equation}
  where we note that
  $
  \ujustblocking
  \unionSP
  \Unionnodisplay_{i=1}^{\kmatchedindex} \unewhidingith
  \unionSP
  \Unionnodisplay_{i=\kmatchedindex + 1}^{k} \uhidingith
  $
  is
  \emph{measured as a multi-set with repetitions}.
  Also,   we have the strict inequality 
  \begin{equation}
    \Setsize{
      \ujustblocking
      \unionSP
      \Unionnodisplay_{i=1}^{\kmatchedindex} \unewhidingith
      \unionSP
      \Unionnodisplay_{i=\kmatchedindex + 1}^{k} \uhidingith
      }
    <
    \setsize{U}
    \eqcomma
  \end{equation}
  where again the multi-set is \emph{counted with repetitions}.

\item
  It remains to take care of the potentially unblocked \mpsctext{}s in
  \begin{equation}
    \mpconfjustblocked^{2} =
    \Setdescr
        {\mpscnotstd \in \mpconfjustblocked}
        {\bottomvertex{B} \in 
          \blacksjustblocked \setminus \bbvstd'
        } 
        \eqperiod
  \end{equation}
  But we derived above that there is a matching of 
  $
  \blacksjustblocked \setminus \bbvstd'
  $
  to 
  $V_1, \ldots, V_{\kmatchedindex}$
  \st
  no~$V_i$ is chosen by more than
  \begin{equation}
    \label{eq:N-matched-leq-N-spare-vertices}
    \weightklawe{V_i}
    =
    \ceilingcompact{\setsize{\uhidingith} / 6}
    \leq
    \floorcompact{\setsize{\uhidingith} / 3}    
  \end{equation}
  vertices from
  $
  \blacksjustblocked \setminus \bbvstd'
  $
  (where we used that
  $\Setsize{\uhidingith} \geq 6$
  if
  $\weightklawe{V_i} > 0$
  to get the last inequality).
  This means that there is a spare blocker vertex in
  $\unewhidingith$
  for each
  $
  b \in
  \blacksjustblocked \setminus \bbvstd'
  $
  that is matched to~$V_i$.
  Also, by the definition of neighbours in our weighted bipartite
  graph, 
  each $b$ is matched to a component with
  $\vmaxlevelcompact{\uhidingith} > \vlevel{b}$.
  By
  \refobs{obs:moving-vertices-downwards-non-increases-measure},
  lowering these spare vertices from
  $\vmaxlevelcompact{\uhidingith}$ to $\vlevel{b}$
  can only decrease the measure.
\end{enumerate}
Finally, throw away any remaining multiple copies in our new
blocking set, and denote the resulting set by~$\unewblockingset$.
We have that
$\unewblockingset$
blocks~$\mpconf$
and that
$
\vmeasurecompact{\unewblockingset}
\leq
\vmeasure{U}
$
but
$
\Setsize{\unewblockingset} 
<
\setsize{U}
$.
This is a contradiction
since $U$ was chosen to be of minimal size, and thus
\reflem{lem:pyramids-possess-GKP}
must hold.
But then
\refth{th:pyramids-possess-GKP}
follows immediately as well, as was noted above.

\subsection{Recapitulation of the Proof of
  \Refth{th:main-theorem}
  and Optimality of Result}

Let us conclude this \sectionBlobBound by recalling why the tight
bound on clause space for refuting pebbling contradictions in
\refth{th:main-theorem}
now follows and by showing that the current construction cannot be
pushed to give a better result.

\begin{theorem}[rephrasing of \refth{th:main-theorem}]
  \label{th:main-theorem-rephrased}
  Suppose that
  $G_h$ is a layered \pebblingdag of height~$h$ that is spreading.
  Then the clause space of refuting  
  the pebbling contradiction 
  $\pebcontr[G_h]{\pebdeg}$
  of  degree   
  $\pebdeg > 1$  
  by resolution is 
  $
  \clspaceref{\pebcontr[G_h]{\pebdeg}}
  =
  \Tightsmall{h}
  $.
\end{theorem}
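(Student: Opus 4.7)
The plan is to assemble the machinery developed throughout the paper into a short proof consisting of two matching bounds. Because every substantial ingredient has already been established, the theorem should follow essentially by a bookkeeping argument, with no further combinatorial work required.

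For the upper bound $\clspaceref{\pebcontr[G_h]{\pebdeg}} = \Ordosmall{h}$, I would invoke Proposition~\ref{pr:clause-space-upper-bounded-by-pebbling-price}, which says that the clause space of refuting a pebbling contradiction is bounded above by the black pebbling price of the underlying DAG plus a constant independent of $\pebdeg$. Since $G_h$ is a layered DAG of height $h$ with unique sink and indegree~$2$, Lemma~\ref{lem:upper-bound-pebbling-price-layered-DAG} gives $\pebblingprice{G_h} \leq h + \Ordosmall{1}$, and the upper bound follows. Note that this direction does not use the spreading hypothesis at all.

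For the lower bound $\clspaceref{\pebcontr[G_h]{\pebdeg}} = \Lowerboundsmall{h}$, I would take an arbitrary resolution refutation $\proofstd$ of $\pebcontr[G_h]{\pebdeg}$ and run it through the three-step chain that has been prepared. First, Lemma~\ref{lem:C-clause-space-the-same-without-target-axioms} lets me pass to a derivation of $\targetclause[i]$ from $\pebcontrNT[G_h]{\pebdeg}$ in the same clause space. Second, Theorem~\ref{th:C-translation-of-resolution-to-pebbling} associates to this derivation a complete multi-pebbling $\multipebbling_{\proofstd}$ of $G_h$ whose cost is bounded by $\Maxofexpr[\clsc \in \proofstd]{\mpcost{\mpinducedconf{\clsc}}} + \Ordosmall{1}$; and Theorem~\ref{th:linear-cost-pebbles} bounds the cost of each induced multi-pebble configuration by the size of the inducing clause set, crucially using the hypothesis $\pebdeg \geq 2$. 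Composing these yields $\mpcost{\multipebbling_{\proofstd}} \leq \clspaceofsmall{\proofstd} + \Ordosmall{1}$. Third, since $G_h$ is a layered spreading pebbling DAG, Theorem~\ref{th:pyramids-possess-GKP} tells us it satisfies the generalized Klawe property, and Theorem~\ref{th:lower-bound-mpebbling-assuming-GKP} then gives $\blobpebblingprice{G_h} = \Lowerboundsmall{h}$; in particular $\mpcost{\multipebbling_{\proofstd}} = \Lowerboundsmall{h}$. Combining these inequalities yields $\clspaceofsmall{\proofstd} = \Lowerboundsmall{h}$, and since $\proofstd$ was arbitrary, this is the desired lower bound.

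Since all the hard work has already been done, I do not expect any real obstacle at this final assembly stage; the main subtlety to be careful about is simply that the condition $\pebdeg > 1$ is used in exactly one place, namely invoking Theorem~\ref{th:linear-cost-pebbles} to charge for white pebbles in the induced configurations, whereas the spreading and layered hypotheses on $G_h$ enter only through Theorems~\ref{th:lower-bound-mpebbling-assuming-GKP} and~\ref{th:pyramids-possess-GKP} on the multi-pebbling side. The genuinely difficult parts of the argument, which are the multi-pebble game construction of Sections~\ref{sec:definition-multi-pebble-game}--\ref{sec:multi-pebble-configuration-cost} and the generalization of Klawe's potential argument in Section~\ref{sec:tight-lower-bound-for-multi-pebbling-pyramid}, have already been carried out, so the proof of Theorem~\ref{th:main-theorem-rephrased} itself reduces to citing the appropriate lemmas in the right order.
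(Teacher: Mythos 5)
Your proof is correct and follows exactly the same assembly as the paper: upper bound from Proposition~\ref{pr:clause-space-upper-bounded-by-pebbling-price} and Lemma~\ref{lem:upper-bound-pebbling-price-layered-DAG}, and lower bound by chaining Lemma~\ref{lem:C-clause-space-the-same-without-target-axioms}, Theorem~\ref{th:C-translation-of-resolution-to-pebbling}, Theorem~\ref{th:linear-cost-pebbles}, and Theorems~\ref{th:pyramids-possess-GKP} and~\ref{th:lower-bound-mpebbling-assuming-GKP}. You also correctly isolate where $\pebdeg > 1$ and the spreading/layered hypotheses enter, so there is nothing to add.
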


\begin{proof}
  The   $\Ordosmall{h}$ upper bound on clause space
  follows from the bound
  $\pebblingprice{G_h} \leq h + \Ordosmall{1}$
  on the black pebbling price  in
  \reflemP{lem:upper-bound-pebbling-price-layered-DAG}
  combined with the bound
  $\clspaceref{\pebcontr[G]{\pebdeg}}
  \leq
  \mbox{$\pebblingprice{G} + \Ordosmall{1}$}$
  from
  \refprP{pr:clause-space-upper-bounded-by-pebbling-price}.

  For the lower bound,
  we instead consider the pebbling   formula
  $\pebcontrNT[G_h]{\pebdeg}$  
  without target axioms
  $\olnot{\varx(z)}_{1}, \ldots, \olnot{\varx(z)}_{\pebdeg}$ 
  and use that by
  \reflemP{lem:C-clause-space-the-same-without-target-axioms}
  it holds that
  $
  \Clspaceref{\pebcontr[G_h]{\pebdeg}}
  =
  \Clspacederiv
  {\pebcontrNT[G_h]{\pebdeg}}
    {\targetclausexvar[i]}
  $.
  Fix any resolution derivation
  $\derivof
  {\proofstd}
  {\pebcontrNT[G_h]{\pebdeg}}
  {\targetclausexvar[i]}
  $
  and let
  $\multipebbling_{\proofstd}$  
  be the \pebcomplete{} \multipebblingtext
  of the graph $G$ associated to $\proofstd$ in
  \refthP{th:C-translation-of-resolution-to-pebbling}
  \st
  $
  \mpcost{\multipebbling_{\proofstd}}
  \leq
  \Maxofexpr[\clsc \in \proofstd]{\mpcost{\mpinducedconf{\clsc}}}
  + \Ordosmall{1}
  $.
  On the one hand, 
  \refthP{th:linear-cost-pebbles}
  says that
  $
  {\mpcost{\mpinducedconf{\clsc}}}
  \leq
  \setsize{\clsc}
  $
  provided that
  $\pebdeg > 1$,
  so in particular it must hold that
  $
  \mpcost{\multipebbling_{\proofstd}}
  \leq
  \clspaceofsmall{\proofstd} + \Ordosmall{1}
  $.
  On the other hand,
  $
  \mpcost{\multipebbling_{\proofstd}}
  \geq
  \blobpebblingprice{G_h}
  $
  by definition,
  and by
  \reftwoths    
  {th:lower-bound-mpebbling-assuming-GKP}
  {th:pyramids-possess-GKP}
  it holds that
  $\blobpebblingprice{G_h} = \Lowerboundsmall{h}$.
  Thus
  $\clspaceofsmall{\proofstd} = \Lowerboundsmall{h}$,
  and the theorem follows.
\end{proof}

Plugging in pyramid graphs $\pyramidgraphh$ in
\refth{th:main-theorem-rephrased},
we get \kcnfform{}s $\fstd_n$
of size~$\Tightsmall{n}$
with refutation clause space~$\Tightsmall{\sqrt{n}}$.
This is the best we can get from pebbling formulas over spreading
graphs. 

\begin{theorem}
  \label{th:sqrt-optimality-for-spreading-graphs}
  Let
  $G$
  be any layered spreading graph and suppose that
  $\pebcontr[G]{\pebdeg}$
  has formula size and number of clauses
  $\Tightsmall{n}$.
  Then
  $
  \clspacerefcompact{\pebcontr[G]{\pebdeg}}
  = \Ordosmall{\sqrt{n}}
  $.
\end{theorem}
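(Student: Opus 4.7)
The plan is to combine the upper bound on clause space coming from a pebbling strategy with a purely combinatorial lower bound on the number of vertices of a layered spreading graph in terms of its height. More precisely, we will show that any layered spreading graph of height $h$ has $\Lowerboundcompact{h^2}$ vertices, which together with the fact that $\clspaceref{\pebcontr[G]{\pebdeg}}$ is at most $h + \Ordosmall{1}$ directly yields the claimed bound $\Ordosmall{\sqrt{n}}$.

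First I would assemble the upper bound on clause space. By \refpr{pr:clause-space-upper-bounded-by-pebbling-price} we have $\clspaceref{\pebcontr[G]{\pebdeg}} \leq \pebblingprice{G} + \Ordosmall{1}$, and by \reflem{lem:upper-bound-pebbling-price-layered-DAG} any layered DAG of height $h$ satisfies $\pebblingprice{G} \leq h + \Ordosmall{1}$. This yields $\clspaceref{\pebcontr[G]{\pebdeg}} \leq h + \Ordosmall{1}$ with no dependence on $\pebdeg$.

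The main combinatorial step is to show $\setsizesmall{\vertices{G}} = \Lowerboundcompact{h^2}$ using the spreading hypothesis. The key input is \refobs{obs:blob-pebbling-potential-of-spreading-graph}, which says that in any layered spreading graph $G$ of height $h$ the potential of a lone black pebble on the sink satisfies $\vpotential{\unconditionalblackmpscnot{z}} = h + 2$. Now observe that for each $j = 0, 1, \ldots, h$, the entire level $\vertonlevel{G}{j}$ is a blocker of $\unconditionalblackmpscnot{z}$, because in a layered graph every source-to-sink path passes through level~$j$ exactly once. By \refdef{def:measure} the measure of this vertex set is exactly $j + 2\setsizesmall{\vertonlevel{G}{j}}$, and since this must be an upper bound for the potential we deduce $\setsizesmall{\vertonlevel{G}{j}} \geq (h - j + 2)/2$. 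Summing these lower bounds over $j = 0, 1, \ldots, h$ gives $\setsizesmall{\vertices{G}} \geq \sum_{j=0}^{h} (h - j + 2)/2 = \Lowerboundcompact{h^2}$.

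Putting things together: since the formula size $n$ is $\Tightsmall{\pebdeg^2 \cdot \setsizesmall{\vertices{G}}}$, we have in particular $\setsizesmall{\vertices{G}} = \Ordosmall{n}$, and hence $h = \Ordosmall{\sqrt{\setsizesmall{\vertices{G}}}} = \Ordosmall{\sqrt{n}}$. Substituting back into the upper bound gives $\clspaceref{\pebcontr[G]{\pebdeg}} \leq h + \Ordosmall{1} = \Ordosmall{\sqrt{n}}$, as desired. The only nontrivial step is the vertex-count lower bound, and this falls out directly from the potential calculation already carried out in \refobs{obs:blob-pebbling-potential-of-spreading-graph}, so I do not anticipate any real obstacle.
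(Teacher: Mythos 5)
Your proof is correct and takes essentially the same approach as the paper: both combine the $\Ordosmall{h}$ clause space upper bound with a $\Lowerboundcompact{h^2}$ lower bound on $\setsizesmall{\vertices{G}}$ obtained by bounding the size of each level using the spreading inequality and summing. The only difference is that you derive the per-level bound $\setsizesmall{\vertonlevel{G}{j}} \geq (h-j+2)/2$ indirectly via the potential $\vpotential{\unconditionalblackmpscnot{z}} = h + 2$ from \refobs{obs:blob-pebbling-potential-of-spreading-graph}, whereas the paper applies the spreading inequality \refeq{eq:spreading-property} directly to a minimal \tightklawe subset of each level \hidingklawe{} the sink, giving the slightly sharper $\setsize{V'_L} \geq 1 + h - L$; this is asymptotically inconsequential, and the observation's proof rests on the same inequality anyway.
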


\begin{proof}
  Suppose that $G$ has height~$h$. Then
  $
  \clspacerefcompact{\pebcontr[G]{\pebdeg}}
  = \Ordosmall{h}
  $
  as was noted above.
  The size of
  $\pebcontr[G]{\pebdeg}$, as well as the number of clauses, is linear
  in the number of vertices
  $\setsize{\vertices{G}}$.
  We claim that the fact that $G$ is spreading implies that
  $\setsize{\vertices{G}} = \Lowerboundcompact{h^2}$, 
  from which the theorem follows.

  To prove the claim, let
  $V_L$ denote the vertices of $G$ on level~$L$.
  Then
  $
  \setsize{\vertices{G}}
  =
  \sum_{L=0}^{h} \setsize{V_L}
  $.
  Obviously, for any~$L$ the set $V_L$ \hideklawe{}s the sink~$z$
  of~$G$.
  Fix for every~$L$ some arbitrary minimal subset
  $V'_L \subseteq V_L$ 
  \hidingklawe~$z$. 
  Then $V'_L$ is
  \tightklawe,
  the graph
  $\hidsetgraph(V'_L)$
  is \connectedklawe by
  \refcor{cor:vertex-and-necessary-cover-in-same-component}, 
  and setting
  $j = h$
  in the spreading inequality
  \refeq{eq:spreading-property}
  we get that
  $
  \Setsize{V'_L}
  \geq
  1 + h - L
  $.
  Hence
  $
  \setsize{\vertices{G}}
  \geq
  \sum_{L=0}^{h}
  \setsize{V'_L}
  =
  \Lowerboundcompact{h^2}
  $.
\end{proof}

The proof  of
\refth{th:sqrt-optimality-for-spreading-graphs}
can also be extended to cover the original definition in~%
\cite{K80TightBoundPebblesPyramid}
of spreading graphs that are not necessarily layered,
but we omit the details.

\section{Conclusion and Open Problems}
\label{sec:open-questions}

We have proven an asymptotically tight bound on the refutation clause
space in resolution of pebbling contradictions over pyramid graphs.
This yields the currently best known separation of length and clause
space in resolution. 
\ifthenelse{\boolean{maybeSTOC}}
{%
  Also, in contrast to previous polynomial lower bounds on clause space,
  our result does not not follow from corresponding lower bounds on
  width for the  same formulas.%
}
{%
  Also, in contrast to previous polynomial lower bounds on clause space,
  our result does not not follow from lower bounds on width for the
  corresponding formulas.%
}
Instead, a corollary of our result is an exponential improvement of
the separation of width and space
in~\cite{Nordstrom06NarrowProofsMayBeSpaciousSTOCtoappearinSICOMP}.
This is a first step towards answering the question of the
relationship between length and space posed in,
\eg,
\mbox{\cite{Ben-Sasson02SizeSpaceTradeoffs,
    ET03CombinatorialCharacterization,
    Toran04Space}.}

\ifthenelse{\boolean{maybeSTOC}}
{More technically,}
{More technically speaking,}
we have established that for all graphs $G$
in the class of ``layered spreading DAGs'' 
\ifthenelse{\boolean{maybeSTOC}}
{(including binary trees and pyramids)} 
{(including complete binary trees and pyramid graphs)} 
the height $h$ of $G$,
which coincides with the black-white pebbling price,
is an asymptotical lower bound for the refutation clause space
$\Clspaceref{\pebcontr[G]{\pebdeg}}$
of pebbling contradictions
$\pebcontr[G]{\pebdeg}$
provided that
$\pebdeg \geq 2$.
Plugging in 
\ifthenelse{\boolean{maybeSTOC}}
{pyramids} 
{pyramid graphs} 
we get an
$\bigomega{\sqrt{n}}$
bound on space, 
which is the best one can get for any spreading graph.

An obvious question is whether this lower bound on clause space in
terms of black-white  pebbling price is true for arbitrary DAGs.
In particular, does it hold for the
family of DAGs $\set{G_n}_{n=1}^{\infty}$ 
in~\cite{GT78VariationsPebbleGame}
of size $\bigoh{n}$
that have maximal black-white pebbling price
$\bwpebblingprice{G_n} = \bigomega{n / \log n}$
in terms of size?
If it could be proven for pebbling contradictions over such graphs
that pebbling price bounds clause space from below, this would immediately
imply that there are \kcnfform{}s refutable in small length that can be
maximally complex \wrt clause space. 

%
%

\begin{openproblem}
  \label{conj:separation-of-space-from-length}
  \ifthenelse{\boolean{maybeSTOC}}
  {Is there a family of}
  {Is there a family of unsatisfiable}
  \kcnfform{}s
  $\setsmall{F_n}_{n=1}^{\infty}$
  of size $\bigoh{n}$  \st
  \mbox{$\lengthrefsmall{F_n} = \bigoh{n}$}
  and
  $\widthrefsmall{F_n} = \bigoh{1}$
  but
  $\clspaceref{F_n} = \bigomega{n / \log n}$?
\end{openproblem}


\ifthenelse
{\boolean{maybeSTOC}}
{}
{%
We are currently working on this problem, but note that these DAGs 
in~\cite{GT78VariationsPebbleGame}
seem to have much more challenging structural properties that makes it
hard to lift the lower bound argument from standard black-white
pebblings to \blobpebblingtext{}s.%
}

%
%
%

A second question, 
\ifthenelse{\boolean{maybeSTOC}}
{%
  more related to 
  \refth{th:easy-length-clause-space-trade-off}
  and our other trade-off results,%
}
{%
  more related to 
  \refth{th:easy-length-clause-space-trade-off}
  and the other trade-off results presented in
  \refsec{sec:simplified-way-of-proving-trade-offs},%
}
is as follows.
\ifthenelse{\boolean{maybeSTOC}}
{%
  We know from~
  \refeq{eq:intro-Ben-Sasson-Wigderson-bound}%
}
{%
  We know from~\cite{BW01ShortProofs}   
  (see \refth{th:widthboundgeneral})%
}
that short resolution refutations imply the existence of narrow
refutations,  and in view of this 
an appealing proof search heuristic 
is to search exhaustively for refutations in minimal width.
One serious drawback of this approach is that
there is no guarantee that the short and narrow refutations are the same
one.
On the contrary, the narrow refutation $\proofstd'$ resulting
from the proof in \cite{BW01ShortProofs}
is potentially exponentially longer than the short proof~$\proofstd$ 
that we start with.
However, we have no examples of formulas where the refutation in
minimum width is actually known to be substantially longer than the
minimum-length refutation. Therefore, 
it would be valuable to know 
whether this increase in length is necessary.
That is, is there a formula family which exhibits a length-width
trade-off in the sense that there are short refutations and narrow
refutations, but all narrow refutations have a length blow-up
(polynomial or superpolynomial)?
Or is the exponential blow-up in~\cite{BW01ShortProofs} just an
artifact of the proof?

\begin{openproblem}
  \label{openproblem:length-vs-width-BW}
    If $\fstd$ is a \kcnfform over $\nvar$ variables
    refutable in length~$\lengthstd$, is it true that there is always
    a refutation $\proofstd$ of~$\fstd$ in 
    width
    {$\widthofarg{\proofstd} = 
      \Bigoh{\sqrt{\nvar \log \lengthstd}}$}
    with length no more than, say,
    {$\lengthofarg{\proofstd} = \bigoh{\lengthstd}$}
    or at most
    {$\polyboundsmall{\lengthstd}$}?
  \end{openproblem}

A similar trade-off question can be posed for clause space.
\ifthenelse{\boolean{maybeSTOC}}
{%
  Given  a refutation in small space, we can prove using
  \refeq{eq:intro-Atserias-Dalmau-bound}%
}
{%
  Given  a refutation in small space, we can prove using
  \cite{AD02CombinatoricalCharacterization}
  (see \refth{th:small-clause-space-implies-small-width})%
}
that there must exist a refutation in short length. 
But again, the short refutation resulting from the proof is not the
same as that with which we started.
For concreteness, let us fix the space to be constant.  If a
polynomial-size \kcnfform has a refutation in 
\ifthenelse{\boolean{maybeSTOC}}
{constant space,}
{constant clause space,}
we know that it must be refutable in polynomial length.  But can we
get a refutation in both short length and small space
simultaneously?

\begin{openproblem}
  \label{openproblem:space-vs-length}
  Suppose that
  $\setsmall{\fstd_n}_{n=1}^{\infty}$
  is a family of polynomial-size \kcnfform{}s 
  with refutation clause space
  $\mbox{$\clspaceref{\fstd_n}$} = \bigoh{1}$. 
  Does this imply that there are refutations
%
%
  $\refof{\proofstd_n}{\fstd_n}$
  simultaneously in length
  {$\lengthofarg{\proofstd_n} = \polyboundsmall{n}$}
  and clause space
  {$\clspaceofsmall{\proofstd_n} = \bigoh{1}$}?
\end{openproblem}
  
Or can it be that restricting the clause space, we sometimes have to end up
with really long refutations?
We would like  to know what holds in this case,
and how it relates to the trade-off results for variable space in~%
\cite{HP07ExponentialTimeSpaceSpeedupFOCS}.

Finally, we note that all bounds on clause space proven so far is in
the regime where the clause space $\clspaceofsmall{\proofstd}$
is less than the number of clauses $\nclausesof{\fstd}$ in~$\fstd$.
This is quite natural, since the size of the formula 
can be shown to be an upper bound on
the minimal clause space needed~\cite{ET01SpaceBounds}.

Such lower bounds on space might not seem too relevant to clause
learning algorithms, since the size of the cache 
in practical applications usually will be very much larger than the  
size of the formula. 
For this reason, it seems to be a highly interesting problem to
determine what can be said if we allow extra clause space.
Assume that we have a \cnfform~$\fstd$ of size roughly~$n$ 
refutable in length
$\lengthrefsmall{\fstd} = \lengthstd$ for $\lengthstd$ suitably large
(say, 
$\lengthstd = \polyboundsmall{n}$
or
$\lengthstd = n^{\log n}$
or so).
Suppose that we allow clause space more than the minimum
$n + \bigoh{1}$,
but less than the trivial upper bound
$\lengthstd / \log \lengthstd$.
Can we then find a 
\ifthenelse{\boolean{maybeSTOC}}
{refutation} 
{resolution refutation} 
using at most that much space
and achieving at most a polynomial increase in length compared to the
minimum?

\begin{openproblem}[\cite{Ben-Sasson07personalcommunication}]
  \label{openproblem:extra-clause-space-vs-length}
  Let $\fstd$ be any \cnfform with
  $\nclausesof{\fstd} = n$ clauses
  (or
  $\setsizesmall{\vars{\fstd}} = n$
  variables).
  Suppose that
  $\lengthrefsmall{\fstd} = \lengthstd$.
  Does this imply that there is  a resolution refutation
  $\refof{\proofstd}{\fstd}$
  in clause space
  $\clspaceofsmall{\proofstd} = \bigoh{n}$
  and length
  $\lengthofarg{\proofstd}=\polyboundsmall{\lengthstd}$?
\end{openproblem}

If so, this could be interpreted as saying that a smart enough clause
learning algorithm can potentially find any short resolution
refutation in reasonable space (and for formulas that cannot be refuted
in short length we cannot hope to find refutations efficiently
anyway).

We conclude with a couple of comments on 
clause space versus clause learning.

Firstly, we note that it is unclear whether one should expect any fast
progress on 
Open Problem~\ref{openproblem:extra-clause-space-vs-length}, at least if
if our experience from the case where
$\clspaceofsmall{\proofstd} \leq \nclausesof{\fstd}$
is anything to go by.
Proving lower bounds on space in this ``low-end regime'' for formulas
easy \wrt length has been (and still is)
very challenging. However, it certainly cannot be excluded that
problems in the range 
$\clspaceofsmall{\proofstd} > \nclausesof{\fstd}$
might be approached with different and more successful techniques.

Secondly, 
we would like to raise the question of whether, in spite of what was
just said before
Open Problem~\ref{openproblem:extra-clause-space-vs-length}, 
lower bounds on 
\ifthenelse{\boolean{maybeSTOC}}
{space}
{clause space} 
can nevertheless give indications as to
which formulas might be hard for clause learning algorithms and why.
Suppose that we know for some \cnfform~$\fstd$ that
$\clspaceref{\fstd}$ 
is large. What this tells us is that any algorithm, even a
non-deterministic one making optimal choices 
concerning 
\ifthenelse{\boolean{maybeSTOC}} 
{which clauses to save or throw away,} 
{which clauses to save or throw away at any given point in time,} 
will have to keep a fairly large number of ``active'' clauses in
memory in order to carry out the refutation.    
Since this is so, a real-life deterministic proof search algorithm,
which has no sure-fire way of knowing 
which clauses are the right ones to concentrate on 
at any given moment, might have to keep working on a
lot of extra clauses in order to be sure that the fairly large
critical set of clauses needed to find a refutation will be among the
``active'' clauses.
%
%

Intriguingly enough, pebbling contradictions over pyramids might in
fact be an example of this.
We know that these formulas are very easy \wrt length and width,
having constant-width refutations that are essentially as short as the
formulas themselves. 
But in~%
\cite{SBK03UsingProblemStructure},
it was shown that state-of-the-art clause learning algorithms can
have 
serious problems with even moderately large 
\ifthenelse{\boolean{maybeSTOC}} 
{pebbling contradictions. (Their
``grid pebbling formulas'' 
are exactly our pebbling
contradictions of degree $\pebdeg=2$ over pyramids.)}
{pebbling contradictions.%
\footnote{%
The ``grid pebbling formulas'' 
in~\cite{SBK03UsingProblemStructure}
are exactly our pebbling
contradictions of degree $\pebdeg=2$ over pyramid graphs.%
}}
Although we are certainly not arguing that this is the whole
story---it was also shown 
in~\cite{SBK03UsingProblemStructure}
that the branching order is a critical factor, and that
given some extra structural information the algorithm can achieve an 
exponential speed-up---we wonder whether the high lower bound on
\ifthenelse{\boolean{maybeSTOC}} 
{space}
{clause space} 
can nevertheless be part of the explanation. 
It should be pointed out that pebbling contradictions are the only
formulas we know of that are really easy \wrt length and width but
hard for clause space. And if there is empirical data showing that for
these very formulas clause learning algorithms can have great
difficulties finding refutations, it might be worth investigating
whether this is just a coincidence or a sign of some deeper
connection.

\ifthenelse{\boolean{maybeSTOC}}
{\section{Acknowledgements}}
{\section*{Acknowledgements}}

We are grateful to  Per Austrin and Mikael Goldmann for generous
feedback during various stages of this work, 
and to Gunnar Kreitz for quickly spotting some bugs in a preliminary
version of the \multipebblegame.
Also, we would like to thank
Paul Beame, 
Maria Klawe,
Philipp Hertel, and \mbox{Toniann Pitassi} for
valuable correspondence concerning their work,
Nathan Segerlind for comments and pointers regarding clause learning,
and Eli Ben-Sasson for stimulating discussions about proof complexity
in general and the problems in \refsec{sec:open-questions} in particular.

%
%


\bibliographystyle{nada-en}   

\bibliography{refArticles,refBooks,refOther}

%
%

\proofcomment{Last ``proof comment''. (To know how many there are.)}
\editcomment{Last ``edit comment''. (To know how many there are.)}

\end{document}